\documentclass[11pt,letterpaper]{thesis2}
\pdfoutput=1
\newcommand{\al}{\alpha} 
\newcommand{\aj}{\alpha_j} 
\newcommand{\aN}{\alpha_N} 


\newcommand{\pd}[2]{\frac{\partial #1}{\partial #2}} 
\newcommand{\pdd}[2]{\frac{\partial^2 #1}{\partial #2^2}} 
\newcommand{\pddm}[3]{\frac{\partial^2 #1}{\partial #2 \partial #3}} 
\newcommand{\md}[2]{\frac{D^{#1} #2}{Dt}} 
\newcommand{\mds}[1]{\md{s}{#1}} 
\newcommand{\cd}{\cdot} 
\newcommand{\grad}{\boldsymbol{\nabla}} 
\newcommand{\diver}{\grad \cdot} 
\newcommand{\sumj}{\sum_{j=1}^N} 
\newcommand{\sumjj}{\sum_{j=1}^{N-1}} 
\newcommand{\suma}{\sum_{\alpha}} 
\newcommand{\sumba}{\sum_{\beta \neq \alpha}} 

\newcommand{\pp}[2]{p^{#1_{(#2)}}}
\newcommand{\ppi}[2]{\pi^{#1_{(#2)}}}
\newcommand{\ppbar}[2]{\overline{p}^{#1_{(#2)}}}

\newcommand{\psia}{\psi^{\alpha}} 
\newcommand{\hpsia}{\hat{\psi}^{\alpha}}

\newcommand{\psiaj}{\psi^{\aj}} 


\newcommand{\porosity}{\varepsilon} 
\newcommand{\Sdot}{\dot{S}}
\newcommand{\eps}[1]{\varepsilon^{#1}} 
\newcommand{\epsa}{\varepsilon^{\al}} 
\newcommand{\epsdot}[1]{\dot{\varepsilon}^{#1}} 

\newcommand{\rhoa}{\rho^{\alpha}} 
\newcommand{\rhos}{\rho^s} 
\newcommand{\rhol}{\rho^{l}} 
\newcommand{\rhoaj}{\rho^{\alpha_j}} 

\newcommand{\rholj}{\rho^{l_j}} 
\newcommand{\rhosj}{\rho^{s_j}} 
\newcommand{\muaj}{\mu^{\aj}}

\newcommand{\lamaj}{\lambda^{\aj}}

\newcommand{\lamhataN}{\soten{\boldsymbol{\lambda}}^{\aN}}

\newcommand{\rh}{\varphi}
\newcommand{\bv}[1]{\boldsymbol{v}^{#1}} 
\newcommand{\bq}{\boldsymbol{q}} 

\newcommand{\Cbars}{\overline{\underline{\underline{\boldsymbol{C}}}}^s} 
\newcommand{\Cbarsdot}{\dot{\overline{\underline{\underline{\boldsymbol{C}}}}}^{s}}
\newcommand{\Fbars}{\overline{\underline{\underline{\boldsymbol{F}}}}^{s}}
\newcommand{\Jsdot}{\dot{J}^s} 

\newcommand{\stress}[1]{\underline{\underline{\boldsymbol{t}}}^{#1}} 
\newcommand{\bd}[1]{\underline{\underline{\boldsymbol{d}}}^{#1}} 
\newcommand{\Thatba}{\hat{\boldsymbol{T}}_{\beta}^{\alpha}}
\newcommand{\That}[2]{\hat{\boldsymbol{T}}_{#1}^{#2}}
\newcommand{\Thatbaj}{\hat{\boldsymbol{T}}_{\beta}^{\alpha_j}}

\newcommand{\ehat}[2]{\hat{e}_{#1}^{#2}}
\newcommand{\ehatba}{\hat{e}_{\beta}^{\alpha}}
\newcommand{\ehatbaj}{\hat{e}_{\beta}^{\alpha_j}}

\newcommand{\rhataj}{\hat{r}^{\alpha_j}}

\newcommand{\ihat}{\hat{\boldsymbol{i}}}
\newcommand{\ihataj}{\hat{\boldsymbol{i}}^{\alpha_j}}

\newcommand{\foten}[1]{\boldsymbol{#1}} 
\newcommand{\soten}[1]{\underline{\underline{\boldsymbol{#1}}}} 
\newcommand{\toten}[1]{\underline{\underline{\underline{\boldsymbol{#1}}}}} 
\newcommand{\Foten}[1]{\underline{\underline{\underline{\underline{\boldsymbol{#1}}}}}} 

\usepackage{graphicx}
%
\usepackage{graphicx}
\usepackage{array}
\usepackage{amsmath}
\usepackage{amsfonts}
\usepackage{mathrsfs}
\usepackage[breaklinks,colorlinks=false]{hyperref}
\usepackage{tikz,pgfplots}
\usepackage{cleveref}
\allowdisplaybreaks
\usepackage{multirow}
\usepackage{rotating}
\usepackage{subfigure,caption}
\usepackage{floatrow}
\usepackage{float}
\floatsetup[table]{capposition=top}

%
%
\newtheorem{theorem}{Theorem}[chapter]

\newtheorem{prop}[theorem]{Proposition}


\newenvironment{proof}{{\bf Proof:} }{\hfill\rule{2.1mm}{2.1mm}}


\newcounter{example}

\def\OnlyTechRpt{0}
\def\OnlyThesis{1}
\newcommand{\TechRpt}[1]{\ifnum\OnlyTechRpt=1 #1 \fi}
\newcommand{\Thesis}[1]{\ifnum\OnlyThesis=1 #1 \fi}

\begin{document}

\pagenumbering{roman}
\pagestyle{empty}

\title{Heat and Moisture Transport in Unsaturated Porous Media: \\ A Coupled Model in Terms of
Chemical Potential}
\author{Eric R.}{Sullivan}
\otherdegrees{B.S., Iowa State University, 1998 \\
M.S., University of Colorado at Colorado Springs, 2007}
\degree{Doctor of Philosophy}{Ph.D., Applied Mathematics}
\degreeyear{2013} \dept{Applied Mathematics}{}
\titlepage

\pagestyle{fancy}

\advisor{Associate Professor}{Lynn Schreyer-Bennethum}
\readerone{Julien Langou}
\readertwo{Jan Mandel}
\readerthree{Richard Naff}
\readerfour{Kathleen Smits}
\approvalpage

\abstractpage{
Transport phenomena in porous media are commonplace in our daily lives. Examples and
applications include heat and moisture transport in soils, baking and drying of food
stuffs, curing of cement, and evaporation of fuels in wild fires.  Of particular interest
to this study are heat and moisture transport in unsaturated soils. Historically,
mathematical models for these processes are derived by coupling classical Darcy's,
Fourier's, and Fick's laws with volume averaged conservation of mass and energy 
and empirically based source and sink terms. Recent experimental and mathematical research
has proposed modifications and suggested limitations in these classical equations. The
primary goal of this thesis is to derive a thermodynamically consistent system of
equations for heat and moisture transport in terms of the chemical potential that
addresses some of these limitations. The physical processes of interest are primarily
diffusive in nature and, for that reason, we focus on using the macroscale chemical
potential to build and simplify the models.  The resulting coupled system of nonlinear
partial differential equations is solved numerically and validated against the classical
equations and against experimental data. It will be shown that under a mixture theoretic
framework, the classical Richards' equation for saturation is supplemented with gradients
in temperature, relative humidity, and the time rate of change of saturation.
Furthermore, it will be shown that restating the water vapor diffusion equation in terms
of chemical potential eliminates the necessity for an empirically based fitting parameter.
}

\clearpage \vspace*{.25in}
\centerheading{DEDICATION}
\begin{block}
\begin{center} To Johnanna \end{center}
\end{block}
\newpage

\clearpage \vspace*{.25in}
\centerheading{ACKNOWLEDGMENT}
\begin{block}
There are several people who have either directly or indirectly made this work possible.
First and foremost, I would like to extend many thanks to Lynn Schreyer-Bennethum. Without
her patience, support, amazing talents as a teacher, mathematical and physical insights,
and constant questions I would likely not have been able to get involved so heavily in
mathematical modeling. I would like to thank Kate Smits for agreeing to join my committee
so late and for sharing her expertise with experimentation.  I would also like to thank
Jan Mandel, Julien Langou, and Rich Naff for their unending support and patience as my
committee members.

There are several people in the UCD community that I would like to thank.  To Keith
Wojciechowski, who was one of the primary reasons that I started working with Lynn.  To
Cannanut Chamsri, Tom Carson, and Mark Mueller, my research group comrades, for putting up
with my ramblings and never-ending slue of equations during our weekly meetings.  To Mark
especially, for his thought provoking questions and constant push to tie the math and
physics together.  To Jeff Larson and Henc Bouwmeester, who have become life-long friends.
I appreciated the comic relief, the deep mathematical conversations, the computer help,
and the fun that we had. Of course, I cannot forget to thank the many other people who
helped me to persevere through graduate school: Jenny Diemunsch, Cathy Erbes, Piezhun Zhu,
Brad Lowry, Tim Morris, Samantha Graffeo, and many other. 

Finally, I would like to thank my family and friends.  Without my {\it support network}
this would have been much tougher.  To my parents, Drew and Donna, who have taught me what
real hard work really is.  To my siblings, Doug, Emily, and Julie, for helping me get to
this point in my life.  To Emily, especially, who went through grad school along with me.
To my friends Fred Hollingworth, Jodie Collins, and Erik Swanson who reminded me that it
was equally important to head up into the mountains and go climbing as it was to work.  To
my {\it teacher friends} Scott Strain, Ginger Anderson, and Krista Bruckner who have
helped to make me a better teacher and better person. Lastly, to Johnanna.  I couldn't
have done this without you.
\end{block}
\newpage

\tableofcontents
\listoffigures
\listoftables

\setcounter{page}{0}
\pagenumbering{arabic}

\chapter{Introduction}\label{ch:intro}
Water flow, water vapor diffusion, and heat transport within variably saturated soils (above the
groundwater and below the soil surface) are important physical processes in evaporation studies, contaminant
transport, and agriculture.  The mathematical models governing these physical processes
are typically combinations of classical empirical law (e.g. Darcy's law) and volume
averaged conservation laws. The resulting equations are valid in many situations, but
recent experimental and mathematical research has suggested modifications and corrections
to these models. 
The primary goal of this thesis is to build a thermodynamically consistent mathematical
model for heat and moisture transport that takes these recent advancements into
consideration.  To realize this goal Hybrid Mixture Theory (HMT) and the macroscale
chemical potential are used as the primary modeling tools.

\section{Previous Work}
In 1856, Henri Darcy published his research on the use of sand filters to clean the water
sources for the fountains in Dijon, France. He found that the flux of water across sand
filters was directly proportional to the gradient of the pressure head.  This simple
observation has become known as Darcy's law and is one of the main modeling tools in
hydrology and soil science \cite{Darcy1856}.  Darcy's law is an example of a
historical rule (or {\it law}) that has perpetuated to the present day. Darcy's law was
originally derived for saturated porous media, but near the turn of the century it was
extended for use in unsaturated soils.  In 1931, L.A. Richards coupled Darcy's law with
liquid mass balance to derive what is now known as Richards' equation.
This equation relies on the assumption that Darcy's law is valid for unsaturated media,
but it also relies on an empirically-derived relationship between pressure and saturation. 

The pressure-saturation relationship is known to be hysteretic in nature
(depends on the direction of wetting), and only recently have researchers been able to
move toward functional relationships that capture this effect \cite{Joekar-Niasar2007}.
Correction terms in Richards' equation have been proposed via HMT that suggest that the rate at
which the capillary pressure is changing may play a role in the overall dynamics of the
saturation \cite{Hassanizadeh2001,Hassanizadeh2002}.  This proposed, third-order, term in
Richards' equation has only recently been studied mathematically and experimentally.  One
possible physical interpretation of this term is that is accounts for how fast the
liquid-gas interfaces are rearranging at the pore scale.

In the 1950s, Philip and deVries published their works on vapor and heat transport in
porous media \cite{deVries1958,deVries1957}.  Their approach accounts for water flow in
both the liquid and fluid phases in response to water content and temperature gradients in
the soil. In order to account for the observation that Fickian diffusion inadequately
describes diffusion in
porous media, Philip and deVries implemented an {\it enhancement} factor, $\eta$, to
adjust the diffusion coefficient.  This factor is fitted to the measured diffusion data
for a particular medium.  In \cite{Cass1984}, Cass et al.\,found that $\eta$ increases
with saturation (with $\eta \approx 1$ for dry soils). Philip and deVries proposed that
thermal gradients and the condensation and evaporation  through ``liquid islands'' are the
pore-scale mechanisms that cause the observed enhancement.  Counter intuitively, the
governing equation holds at the macroscale while these mechanisms are inherently
pore-scale. The Philip and de Vreis model has not been validated in a laboratory setting.

More recent works question the validity of the Philip and deVries model. Shokri et
al.\,\cite{Shokri2009a} suggests that the coupling between the water flow and Fickian
diffusion is the key to estimating the vapor flux. They suggest that under this
consideration there is no need for the enhancement factor.  Webb \cite{Webb1998}, and more
recently Shahareeni et al.\,\cite{Shahraeeni2010}, showed that enhancement can exist in
the absence of thermal gradients. It was initially thought that enhancement couldn't
occurin the absence of thermal gradients and was therefore ignored. Based on the
observtion that enhancement can occur without the need for thermal gradients, it is clear
that the Philip and de Vries model needs modification. Cass \cite{Cass1984} and Campbell
\cite{Campbell1985} give a functional form of the enhancement factor that is commonly used
(e.g.  \cite{Sakai2009,Smits2011}), but relies on an empirical fitting parameter.  In the
present work we take the view of Shokri et al. that there is no need for the enhancement
fact, and wederive a diffusion equation simply based on liquid and vapor flow. The novelty
of the present approach is the use of the chemical potential as the driving force for both
types of flow.

For energy transport, the 1958 deVries model \cite{deVries1958} is still commonly used
(e.g. \cite{Sakai2009,Smits2011}).
Similar to the enhanced diffusion model, deVries built this model so as to account for the
flux of the fluid phases. This is sensible as the fluid phases will certainly transport
heat. More recently, Bennethum et al.\,\cite{Bennethum1999} and Kleinfelter
\cite{Kleinfelter2007} used Hybrid Mixture Theory to derive heat transport equations in
porous media (Bennethum et al.\,studies saturated porous media and Kleinfelter studied
multiscale unsaturated media). They
verified many of the findings by deVries but also proposed several new terms associated
with the physical processes of heat transport.  In this work we extend the Bennethum et
al. and Kleinfelter approachs to unsaturated media.

\section{Hybrid Mixture Theory and Thesis Goals}
To build the models in this work we make extensive use of Hybrid Mixture Theory (HMT).
HMT, statistical upscaling, and homogenization have all been used
as techniques to re-derive, confirm, and extend Darcy's, Fick's, and
Fourier's laws in porous media. For a technical summary of some of these
methods see \cite{Cushman2002}.  HMT is a term for the process of using
volume averaged pore-scale conservation laws along with the second law of thermodynamics
to give thermodynamically consistent constitutive equations in porous media.  The
technique as applied to porous media was developed by several parties, the most notable
being Hassanizadeh and Gray \cite{Gray1977,Gray1979} and Cushman et
al.\,\cite{Bennethum1996,Bennethum1996a,Cushman1985}, but the general principles were
developed by Coleman and Noll \cite{Coleman1963}.  

In the present work we use HMT to derive new extensions to these laws in the case of
unsaturated porous media.  These extensions are then used to derive a model for
total moisture transport in unsaturated soils. While this sort of modeling has been done
in the past, rarely have the three principle physical process (movement of saturation
fronts, vapor diffusion, and thermal conduction) been considered from first principles and
put on the same theoretical footing (HMT in this case).  No known work attempts
to couple these three different effects together with one physical measurement: the
chemical potential. This is one of the unique features of this work.

In the most general sense, the chemical potential is a measure of the tendency of a
substance (thinking particularly of a fluid or species) to diffuse. This diffusion could
be of a species within a mixture (e.g. water vapor diffusing through air) or it could be a
phase diffusing into another (e.g. water into a Darcy-type sand filter).  The fact that
most physical processes in porous media are of this type gives an inspiration for the
potential usefulness of the chemical potential as a modeling tool.  That is, from a broad
point of view, it should be possible to restate the physical processes of moving
saturation fronts and vapor diffusion more naturally by the chemical potential. This
approach has not been thoroughly explored in the past since the chemical potential is not directly
measureable and the theoretical footings of upscaling the chemical potential are
relatively new.  In the
saturated case, extensions to Darcy's law have also been developed via HMT, and the
results indicate that the macroscale chemical potential is a viable modeling tool for
diffusive velocity in saturated porous media
\cite{Murad2000,Schreyer-Bennethum2012,weinstein}. In the present work we give chemical
potential forms of Darcy's and Fick's law as well as presenting simplifications to
Fourier's law based on the chemical potential.  In the case of a pure liquid phase we will
show that the chemical potential form of Darcy's law is no different than the more
traditional pressure formulation.  In the gas phase, on the other hand, we will show that
the pairings of chemical potential forms of Fick's and Darcy's laws gives a new form of
the diffusion coefficient that does not need the {\it enhancement factor} indicated in the
work by Phillip and DeVries \cite{deVries1957}. The chemical potential will finally be
used to derive a novel form of Fourier's law for heat conduction in multiphase media.

Once we have derived new forms of the classical constitutive equations we pair these
equations with volume averaged conservation laws to give a coupled system of partial
differential equations governing heat and moisture transport.  The second law of
thermodynamics is used to suggest additional closure conditions for each of the equations.
Together, the system consists of a nonlinear pseudo-parabolic equation for saturation, a
nonlinear parabolic equation for vapor diffusion, and a nonlinear parabolic-hyperbolic
equation for heat transport. 

In summary, this work serves several purposes: (1) it is a step toward better
understanding the role of the chemical potential in multiphase porous media, (2) it makes
strides toward understanding the phenomenon of {\it enhanced vapor diffusion} in porous
media, and (3) finally we propose a novel coupled system of equations for heat and
moisture transport.

\section{Thesis Outline}
In Chapter \ref{ch:diffusion_comp} we take a step back from porous media and discuss
pore-scale diffusion models.  This is done in an attempt to elucidate the assumptions,
derivations, and models used in various disciplines as there tends to be confusion about
where the miriad of assumptions are valid.  In this chapter we give mathematical and
physical reasons for the many commonly used assumptions as well as proposing an
alternative advection-diffusion model as compared to the popular Bird, Stewart, and
Lightfoot model \cite{Bird2007}.

In Chapter \ref{ch:HMT} we present the necessary background information in order to
understand volume averaging and the exploitation of the entropy inequality.  Much of this
chapter is paraphrased from previous works, such as
\cite{Bennethum1996,Bennethum1996a,Gray1977,Gray1979,weinstein,Wojciechowski2011}. In the
beginning of Chapter \ref{ch:Exploit} we use the tools from Chapter \ref{ch:HMT} to build
and exploit a version of the entropy inequality specific for multiphase media where each
phase consists of multiple species. The remainder of Chapter \ref{ch:Exploit} is dedicated
to the exploitation of the entropy inequality for a novel choice of independent variables
describing these media.  Appendix \ref{app:AbstractEntropy} serves as a companion to this
discussion as it gives the abstract formulation and logic of the entropy inequality.
Throughout Chapter \ref{ch:Exploit}, the goal is to derive new forms of Darcy's, Fick's,
and Fourier's laws and to propose extensions to these laws in terms of the macroscale
chemical potential.  As part of these derivations we arrive at new expressions for the
pressure and wetting potentials in unsaturated media. All of these derivations are done in
a general sense with as few assumptions as possible.  This leaves open the possibilities
of future research.

In Chapter \ref{ch:Transport} we couple the results found from the exploitation of the
entropy inequality (Chapter \ref{ch:Exploit}) with the volume averaged conservation laws
derived in Chapter \ref{ch:HMT}.  In Section \ref{sec:classical_models}, a more in-depth historical perspective of
the classical equations used for heat and moisture transport is given to orient the reader
to the recent research. Fluid transport equations are presented in Section
\ref{sec:mass_balance_equations} along with a discussion of the relationship between mass
transfer and chemical potential.   Considerable effort is put toward deriving a heat transport
equation with the final equation presented in Section \ref{sec:TotalEnergyBalanceEqn}. In
Section \ref{sec:simplifying_assumptions} several simplifying assumptions are presented in
order to close the system of equations. In particular, Sections \ref{sec:liquid_simplifications},
\ref{sec:vapor_diffusion}, and \ref{sec:total_energy_simplifications} give
simplifications, assumptions, and dimensional analysis
for the liquid, gas, and heat equations respectively. In Section
\ref{sec:constitutive_equations} we present the remaining constitutive equations necessary
to close the system of equations.  Since so many assumptions and simplifications are made
throughout the chapter a summary of all of the results is presented in Section
\ref{sec:simplifications_summary}.

In Chapter \ref{ch:Existence_Uniqueness} we examine the proper regularity and
assumptions needed for existence and uniqueness of solutions.  These results are
preliminary and do not constitute a complete existence and uniqueness study for these
equations. 

In Chapter \ref{ch:Transport_Solution} we perform numerical analysis on the equations
derived in Chapter \ref{ch:Transport}. In Sections \ref{sec:numerical_saturation},
\ref{sec:numerical_vapor_diffusion}, and \ref{sec:numerical_coupled_sat_vap} we examine
numerical solutions and parameter sensitivity for the saturation equation, vapor diffusion
equation, and the coupled saturation-vapor diffusion equations respectively.  In Section
\ref{sec:FullyCoupledSolutions} we compare numerical solutions to the fully coupled heat
and moisture transport model to the experimental data collected in \cite{Smits2011}.

In Chapters \ref{ch:Transport} - \ref{ch:Transport_Solution} we work toward building and
analyzing the saturation, vapor diffusion, and heat equations.  The flow of thought for
these chapters is to apply each set of new assumptions or simplifications to each of the
three equations before moving to the next set of assumptions.  That is, if a set of
assumptions are proposed then the subsequent sections will apply those assumptions to the
saturation, vapor diffusion, and heat equations in turn.  Only then will the next set of
assumptions be discussed. This is done so that each set of assumptions are only stated
once and since many of the assumptions create interleaving effects between the equations.

Finally, as an aid to the reader there are several appendices. Appendix
\ref{app:pore-scale_nomenclature} contains a nomenclature index for the pore-scale
diffusion processes considered in Chapter
\ref{ch:diffusion_comp}. Appendix \ref{app:nomenclature} contains a nomenclature index for
the macroscale results in the remaining chapters.  There is some overlap between the
nomenclature for these distinct parts, and effort has been made to not create any
excessive notational confusions (even though this work is necessarily notation heavy).
Appendix \ref{app:UpscaledDefinitions} gives a list (in alphabetical order) of the
upscaled definitions of variables defined in chapters \ref{ch:HMT} and \ref{ch:Exploit}.
As mentioned previously, Appendix \ref{app:AbstractEntropy} gives an abstract view of the
entropy inequality in an effort to make the exploitation process more clear to the
interested reader.  Appendix \ref{app:EntropyResults} gives a summary of the results
extracted from the entropy inequality in Chapter \ref{ch:Exploit}.  This is done for ease
of reference mostly on the author's part, but it is also done to provide an index of these
results for use in future research. Finally, Appendix \ref{app:dimensional_quantities}
gives several tables of dimensional quantities used throughout.  

It is suggested that the detail-oriented reader have Appendix
\ref{app:pore-scale_nomenclature} at hand when reading Chapter \ref{ch:diffusion_comp}
and Appendix \ref{app:nomenclature} at hand when reading chapters \ref{ch:HMT} -
\ref{ch:Transport}. There are some minor abuses of notation, but effort was made to bring
them to the reader's attention whenever possible and to use notation that didn't confuse
the immediate discussion.

\newpage
\chapter{Fick's Law and Microscale Advection Diffusion Models}\label{ch:diffusion_comp}
This chapter consists of a short technical note related to pore-scale diffusion problems.
Vapor diffusion in macroscale porous media is an important phenomenon with many
applications (e.g.\,evaporation from soils, moisture transport through filters, and CO$_2$
sequestration).  In order to better understand macroscale diffusion it behooves the
researcher to first understand pore-scale mechanics and models.  This chapter attemplts to
elucidate the models and assumptions used for diffusion at the pore-scale so that when we
turn our attention to macroscale diffusion we are firmly grounded. A secondary goal of
this chapter is to give a thorough discussion of the diffusion coefficient used in Fick's
law.  This is necessary since this coefficient is typically wrongly assumed constant for
all choices of dependent variables (mass concentration, molar concentration, chemical
potential, etc.). 

To make matters simpler, we focus our pore-scale discussion on the, so called, Stefan
diffusion tube problem.  This is a well-studied problem that models the diffusion of a
species through an ideal binary gas mixture above a liquid-gas interface
\cite{Benkhalifa1995,Bird2007,Camassel2005,Dormieux2006,Kerkhof1997,Whitaker1991a,Whitaker1991}.
This is an idealization of the juxtaposition of phases in a capillary tube geometry, and a
capillary tube geometry is an idealization of geometry of pore-scale porous media.  To
derive a mathematical model for the time evolution of the evaporating (or condensing)
species, one typically couples Fick's first law with the mass balance equation. 

In Section \ref{sec:FicksForms} we discuss the various forms of Fick's law and briefly
discuss the relationships between the diffusion coefficients.  In Section
\ref{sec:FicksTransientDerivation} we derive the transient diffusion equations associated
with Fick's law and compare with the associated equation of Bird, Stewart, and Lightfoot
\cite{Bird2007} (henceforth referred to as BSL).

\section{Comparison of Fick's Laws}\label{sec:FicksForms}
For a system consisting of an ideal mixture of water vapor,
$g_v$, and inert air, $g_a$, Fick's law can be written in terms of molar concentration,
mass concentration, or the chemical potential. This is potentially confusing since there
are inherently different diffusion coefficients for the different forms of Fick's law.
The purpose of this subsection is to clarify the relationships between these coefficients.
In porous media it is common to use mass flux for Fick's law, but in chemistry (and
related fields) it is more common to use molar flux. As such, we will make most of our
comparisons between mass and molar flux. 

According to BSL \cite{Bird2007},
the mass and molar forms of Fick's law are given by equations \eqref{eqn:Ficks_mass_flux}
and \eqref{eqn:Ficks_molar_flux}
(modified from BSL Table 17.8-2).  
In Table \ref{tab:FicksForms}, $\bv{g_j,g} = \bv{g_j} - \bv{g}$ is the diffusive velocity
relative to a mass weighted velocity, $\bv{g_j,g}_c = \bv{g_j} - \bv{g}_c$ is the
diffusive velocity relative to a mole weighted velocity, $\rho^{g_j}$ is the mass density
of species $j$, $C^{g_j} = \rho^{g_j} / \rho^g$ is the mass concentration of species $j$
in the mixture, $c^{g_j} = mol(g_j)/vol(g)$ is the molar density of species $j$, and
$x^{g_j} = c^{g_j}/c^g$ is the molar concentration of species $j$ in the mixture.
\def\vs{-1.1}
\begin{table}
    \centering
    \begin{tabular}{|c|c|m{7cm}|}
        \hline
        Flux Type & Flux Expression & \hspace{2.5cm} Fick's Law \\ \hline \hline 
        mass flux [$M L^{-2} T$] & $\foten{J}^{g_j}_\rho = \rho^{g_j} \bv{g_j,g}$ &
        \vspace{\vs cm} \begin{flalign} \foten{J}^{g_j}_\rho = - \rho^g D_\rho \grad
            C^{g_j} \label{eqn:Ficks_mass_flux} \end{flalign} \vspace{\vs cm} \\ \hline
        molar flux [$(mol) L^{-2} T$] & $\foten{J}^{g_j}_c = c^{g_j} \bv{g_j,g}_c$ &
        \vspace{\vs cm} \begin{flalign} \foten{J}^{g_j}_c = - c^g D_c \grad x^{g_j}
            \label{eqn:Ficks_molar_flux}
        \end{flalign} \vspace{\vs cm} \\ \hline 
        mass flux [$M L^{-2} T$] & $\foten{J}^{g_j}_{\mu,\rho} = \rho^{g_j} \bv{g_j,g}$ &
        \vspace{\vs cm} \begin{flalign} \foten{J}^{g_j}_{\mu,\rho} = - D_{\mu,\rho} \left( \frac{\rho^{g_j}}{R^{g_j} T}
            \right) \grad \mu^{g_j}_\rho \label{eqn:Ficks_mass_flux_chempot} \end{flalign} \vspace{\vs cm} \\ \hline
        mole flux [$(mol) L^{-2} T$] & $\foten{J}^{g_j}_{\mu,c} = c^{g_j} \bv{g_j,g}_c$ &
        \vspace{\vs cm} \begin{flalign} \foten{J}^{g_j}_{\mu,c} = - D_{\mu,c} \left( \frac{c^{g_j}}{R T}
            \right) \grad \mu^{g_j}_c \label{eqn:Ficks_molar_flux_chempot} \end{flalign} \vspace{\vs cm} \\ \hline
    \end{tabular}
    \caption{Mass and molar flux forms of Fick's law}
    \label{tab:FicksForms}
\end{table}

The chemical potential forms of Fick's law can be given in terms of two different types of
chemical potential: mass weighted (equation \eqref{eqn:Ficks_mass_flux_chempot}) or mole
weighted (equation \eqref{eqn:Ficks_molar_flux_chempot}).  In
physical chemistry and thermodynamics \cite{Callen1985,McQuarrie1997} the chemical
potential is known as the {\it tendency for a species to diffuse}, and for this reason it
is a natural candidate for the statement of Fick's law (an exact
thermodynamic definition will be presented in subsequent chapters).    In equations
\eqref{eqn:Ficks_mass_flux_chempot} and \eqref{eqn:Ficks_molar_flux_chempot}, $\mu^{g_j}_c$ is the mole weighted
chemical potential [$J (mol)^{-1}$] and $\mu^{g_j}_\rho$ is the mass weighted chemical
potential [$J M^{-1}$].

The reader should first note that the two fluxes are measured with respect to different
velocities.  The mass weighted velocity is $\rho^g \bv{g} = \sum_{j=v,a} \rho^{g_j} \bv{g_j}$ and the
mole weighted velocity is $c^g \bv{g}_c = \sum_{j=v,a} c^{g_j} \bv{g_j}$ where $\bv{g_j}$ is the
velocity of the species relative to a fixed coordinate system.  This means that that there
cannot be a direct comparison between the two different types of flux without considering
them relative to the same frame of reference. Using these definitions of $\bv{g}$ and
$\bv{g}_c$ we see that the difference between the two bulk velocities, $\bv{g} - \bv{g}_c$
is
\begin{flalign}
    \bv{g} - \bv{g}_c &= \sumj \left( \frac{\rho^{g_j} - x^{g_j}
\rho^g}{\rho^g} \right) \bv{g_j}. \label{eqn:diff_bulk_velocities}
\end{flalign}
In \eqref{eqn:diff_bulk_velocities}, the summation over $j$ indicates that this is an
accumulation over the $N$ species in the gas mixture.  In future work we will be
interested in the diffusion of water vapor ($j=v$) and will consider the gas mixture as
binary: $j \in \{ v,a \}$, where $j=a$ represents the mixture of all species that are not
water vapor. Therefore we can write \eqref{eqn:diff_bulk_velocities} as
\begin{flalign}
    \notag \bv{g} - \bv{g}_c &= \sum_{j=v,a} \left( \frac{\rho^{g_j} - x^{g_j}
\rho^g}{\rho^g} \right) \bv{g_j} \\
    &= \left( x^{g_a} C^{g_v} - x^{g_v} C^{g_a} \right) \bv{g_v} + \left( x^{g_v} C^{g_a}
    - x^{g_a} C^{g_v} \right) \bv{g_a}.
\end{flalign}
Converting to a mass weighted velocity we see that $\bv{g_j,g}_c = \bv{g_j,g}
+(\bv{g}-\bv{g}_c)$, and therefore the molar flux is $\foten{J}^{g_j}_c = c^{g_j}
\bv{g_j,g}_c = c^{g_j}\bv{g_j,g} + c^{g_j}(\bv{g}-\bv{g}_c)$.  Assuming that $c^{g_j}
x^{g_k} C^{g_l} \ll 1$ we see that the difference between the frame of reference is
potentially quite small.

Next note that the diffusion coefficients are (initially) assumed to be different for each
choice of independent variable as indicated by the subscripts.  To compare $D_\rho$ and $D_c$ we note that
$\foten{J}^{g_j}_\rho = m^{g_j}\foten{J}^{g_j}_c$
and
\begin{flalign}
    \grad C^{g_j} &= \left( \frac{m^{g_j} m^{g_k}}{\left( x^{g_j} m^{g_j} + x^{g_k}
    m^{g_k} \right)^2}
    \right) \grad x^{g_j} \label{eqn:mass_molar_diffusion_coeff_comparison_v1}
\end{flalign}
to conclude that 
\begin{flalign}
    \left( \frac{C^{g_k}}{x^{g_k}} \right) D_\rho = D_c,
    \label{eqn:mass_molar_diffusion_coeff_comparison}
\end{flalign}
where $m^{g_j}$ is the molar mass of species $j$ and the minuscule, $k$, represents the
{\it other} species. If the molar density form of the diffusion coefficient, $D_c$, is
assumed to be constant (at constant temperature) we conclude that the mass density version of
the diffusion coefficient is not constant (and visa versa). The fraction,
$C^{g_k}/x^{g_k}$ can be interpreted as the ratio of the molar mass of species $k$ to the
molar mass of the mixture.  If $j=v$ is the water vapor in an air-water mixture then $k=a$
is the inert air and the scaling factor between the diffusion coefficients is the ratio of
molar mass of the air to the molar mass of the mixture.  For sufficiently dilute systems
(where the amount of water vapor is small) the ratio is approximately 1 and the diffusion
coefficients can be considered as approximately equal.  

For ideal air-water mixtures, the densities are related through $\rho^g = \rho^{g_v} +
\rho^{g_a}$ and the water vapor density is related to the relative humidity through
$\rho^{g_v} = \rho_{sat}^{g_v} \rh$.  Here we are taking $\rho_{sat}^{g_v}$ as the
saturated vapor density and $\rh$ as the relative humidity.  At standard temperature and
pressure we note that  $\rho_{sat}^{g_v} \approx 0.02 kg/m^3$ and $\rho^g \approx 1
kg/m^3$.  This indicates that at standard temperature and pressure we can likely assume
that the mixture is always {\it sufficiently dilute}. Therefore, in the systems under
consideration we can assume that the diffusion coefficients are approximately equal.

For the diffusion coefficients associated with the chemical potential forms of Fick's law
we first observe that if we multiply and divide the right-hand side of the molar form by
the molar mass of species $j$ then
\begin{flalign}
    \foten{J}^{g_j}_{\mu,c} = - D_{\mu,c} \left( \frac{c^{g_j}}{R T} \right) \left(
    \frac{m^{g_j}}{m^{g_j}} \right) \grad \mu^{g_j}_c = - \left( \frac{D_{\mu,c}}{m^{g_j}}
    \right) \left( \frac{\rho^{g_j}}{R^{g_j} T} \right) \grad \mu^{g_j}.
    \label{eqn:flux_mu_c}
\end{flalign}
Here, $R^{g_j} = R/m^{g_j}$ is the specific gas constant, and we have used $\mu^{g_j} =
\mu^{g_j}_c / m^{g_j}$. 
Again noting that $\foten{J}^{g_j}_{\mu,\rho} = m^{g_j} \foten{J}^{g_j}_{\mu,c}$ we
conclude that $D_{\mu,c} = D_{\mu,\rho}$.

It remains to compare $D_\rho$ to $D_{\mu,\rho}$ and $D_c$ to $D_{\mu,c}$. We focus here
on the mass fluxes without loss of generality. If the mass fluxes are equal, then in
particular
\[ \rho^g D_\rho \grad C^{g_v} = D_\mu \left( \frac{\rho^{g_v}}{R^{g_v} T} \right) \grad
    \mu^{g_v}. \]
Rearranging, it can be seen that 
\[ D_\rho \grad C^{g_v} = C^{g_v} D_\mu \grad \left( \frac{\mu^{g_v}}{R^{g_v} T}
    \right) \]
(assuming constant temperature).  From physical chemistry \cite{McQuarrie1997}, 
recall that the chemical potential is related to a reference chemical potential
($\mu^{g_v}_*$) and the ratio of partial pressure, $p^{g_v}$, to bulk pressure, $p^g$, via
\begin{flalign}
    \mu^{g_v} = \mu^{g_v}_* + R^{g_v} T \ln\left( \frac{p^{g_v}}{p^g} \right).
    \label{eqn:chempot_pore_defn}
\end{flalign}
Therefore, 
\begin{flalign}
    \grad \left( \frac{\mu^{g_v}}{R^{g_v} T} \right) = \grad \left( \ln \left(
    \frac{p^{g_v}}{p^g} \right) \right) = \left( \frac{p^g}{p^{g_v}} \right) \grad \left(
    \frac{p^{g_v}}{p^g} \right).
\end{flalign}
Using Dalton's law for ideal gases, $p^g = p^{g_v} + p^{g_a}$, and using the specific gas
constants we note that the partial pressure of species $j$ can be written as $p^{g_j} =
R^{g_j} T \rho^{g_j}$.  Therefore,
\begin{flalign}
    p^g = R^{g_v} T \rho^{g_v} + R^{g_a} T \rho^{g_a},
\end{flalign}
and, after simplifying,
\begin{flalign}
    \frac{p^{g_v}}{p^g} = \frac{\rho^{g_v}}{\rho^{g_v} + \left( \frac{R^{g_a}}{R^{g_v}}
    \right) \rho^{g_a}}.
    \label{eqn:pgv_over_pg}
\end{flalign}
From the values found in Appendix \ref{app:dimensional_quantities} we see that $R^{g_a} /
R^{g_v} \approx 0.6$ and therefore equation \eqref{eqn:pgv_over_pg} is similar, but not
equal to, the mass concentration, $C^{g_v} = \rho^{g_v} / (\rho^{g_v} + \rho^{g_a})$.
Defining $\mathcal{C}^{g_v} = p^{g_v} / p^g$ we see that 
\begin{flalign}
    \frac{D_\rho}{D_\mu} = \left( \frac{C^{g_v}}{\mathcal{C}^{g_v}} \right) \frac{\grad
        \mathcal{C}^{g_v}}{\grad C^{g_v}},
        \label{eqn:Drho_Dmu}
\end{flalign}
where division is understood component wise (that is, equation \eqref{eqn:Drho_Dmu}
represents three equations when the gradient is understood in three spatial dimensions).
The right-hand side of equation \eqref{eqn:Drho_Dmu} is not constant at 1 for all
densities, but the variation in the right-hand side depends mostly on the variation in
$\rho^{g_a}$ in the gas mixture.  Fortunately, the water vapor density is much smaller
than the air-species density, and hence $\rho^{g_a}$ is approximately constant.  In one
spatial dimension, the right-hand side of \eqref{eqn:Drho_Dmu} can therefore be
approximated by
\[ d(x) := \frac{\left( \frac{x}{x+1} \right) }{\left( \frac{x}{x+0.6} \right) }
    \frac{\frac{d}{dx} \left( \frac{x}{x+0.6} \right)}{\frac{d}{dx} \left( \frac{x}{x+1}
\right)} \]
(where $x = \rho^{g_v}$ and $\rho^{g_a} \approx 1$).  It is easy to show that $0.98 < d(x)
< 1$ for $0 \le x \le \rho_{sat}^{g_v}$.  Furthermore, for {\it sufficiently dilute}
mixtures, $d(x) \approx 1$, and we therefore conclude that $D_\rho \approx D_\mu$.

The conclusion from this subsection is that while the diffusion coefficients for the molar and
mass flux forms of Fick's law are not the same, for dilute mixtures they can be
approximated as equal.

\section{Transient Diffusion Models}\label{sec:FicksTransientDerivation}
In porous media there is a phenomenon known as {\it enhanced vapor diffusion}
\cite{deVries1957}.  This phenomenon states that vapor diffusion in porous media occurs
{\it faster} than as predicted by Fickian diffusion models. This is merely a statement
about the observed imbalance between Fickian diffusion and experimental measure.  Since
the ultimate goal of this work is to develop macroscale advection diffusion models, we
seek to understand the pore-scale diffusion models so that in subsequent chapters we can
tackle the enhanced diffusion problem.  

To build a transient model for molecular diffusion we couple Fick's law with the
appropriate form of the mass balance equation.  In the previous subsection we showed that
the various forms of Fick's law are approximately equal (for sufficiently dilute
mixtures), so the results stated here will only be in terms of the mass flux form of
Fick's law (equation \eqref{eqn:Ficks_mass_flux}).
 
The mass balance equation for species $j$ in the gas phase can be written as
\begin{flalign}
    \pd{\rho^{g_j}}{t} + \diver \left( \rho^{g_j} \bv{g_j} \right) =
    \hat{r}^{g_j}
    \label{eqn:mass_balance}
\end{flalign}
where $\bv{g_j}$ is the velocity of species $j$ within the gas mixture relative to a fixed
frame of reference, and $\hat{r}^{g_j}$ is a mass exchange term accounting for chemical
reactions between species \cite{weinstein,Whitaker1991}.  In the present work we assume that
no chemical reactions occur, and therefore $\hat{r}^{g_j} = 0$. The combination of the
mass balance equation with the mass flux form of Fick's law (for $j=v$) gives
a transport equation for the mass of water vapor via advective, $\rho^{g_v} \bv{g}$, and
diffusive, $\foten{J}^{g_v}$, fluxes: 
\begin{flalign}
    \pd{\rho^{g_v}}{t} + \diver \left( \foten{J}^{g_v} + \rho^{g_v} \bv{g} \right) = 0.
\end{flalign}
Substituting the mass flux form of Fick's law\footnote{The
    subscripts on the flux and the diffusion coefficient have been dropped since all of
    the versions presented in Table \ref{tab:FicksForms} are approximately equal} (from equation
\eqref{eqn:Ficks_mass_flux}) we get 
\begin{flalign}
    \pd{\rho^{g_v}}{t} + \diver \left(
    \rho^{g_v} \bv{g} \right) =D \diver \left( \rho^g \grad C^{g_v} \right).
    \label{eqn:pore_mass_balance1}
\end{flalign}
Notice here that the diffusion coefficient has been factored out of the divergence
operator.  This is only valid in constant temperature environments. If the gas-phase
density were constant in space then we would arrive at the traditional advection
diffusion equation (by dividing equation \eqref{eqn:pore_mass_balance1} by $\rho^g$ of by
rewriting the diffusion term as $D \diver \grad \rho^{g_v}$) and would need an expression
for the bulk velocity in terms of density (or concentration) to close the equation.
Unfortunately, if the density of the water vapor is allowed to vary then the density of
the gas varies.  Again, for {\it sufficiently dilute} mixtures the variation in gas-phase
density is very small and the nonlinear diffusion on the right-hand side can be
approximated by the linear diffusion term $D \diver \grad \rho^{g_v}$. It should be noted
here that this later case is what is typically thought of as ``Fick's law'' and is what
leads to the traditional linear diffusion equation (when the advection term is neglected)
\cite{Crank1979}.

A different form of equation \eqref{eqn:pore_mass_balance1}, suggested in BSL
\cite{Bird2007}, is derived by considering the mass weighted bulk velocity.  In a binary
system, 
\begin{flalign*}
    \rho^{g_v} \bv{g_v} = \rho^{g_v} \bv{g_v,g} + \rho^{g_v} \bv{g} 
    = \rho^{g_v} \bv{g_v,g} + \rho^{g_v} \left( C^{g_v} \bv{g_v} + C^{g_a}
    \bv{g_a}\right).
\end{flalign*}
Solving for $\rho^{g_v} \bv{g_v}$
\begin{flalign}
    \rho^{g_v} \bv{g_v} &= \left( \frac{\rho^{g_v}}{1-C^{g_v}}
    \right) \bv{g_v,g} + \rho^{g_v} \bv{g_a}.
    \label{eqn:bird_mass_flux}
\end{flalign}
Using Fick's law for $\rho^{g_v} \bv{g_v,g}$, and eliminating $\rho^{g_v} \bv{g_v}$ in
(\ref{eqn:mass_balance}) with (\ref{eqn:bird_mass_flux}) gives
\begin{flalign}
    \pd{\rho^{g_v}}{t} + \diver \left( \rho^{g_v} \bv{g_a} \right) = \diver
    \left( \frac{D \rho^g}{1-C^{g_v}} \grad C^{g_v} \right).
    \label{eqn:Bird_concentration}
\end{flalign}
Whitaker \cite{Whitaker1991} suggested that ``one can develop convincing
arguments in favor of \dots'' neglecting the air-species flux term.  Certainly at steady
state we can assume (as is done in BSL) that $\bv{g_a}$ is approximately zero at the
interface since ``there is no net motion of [water vapor] away from the interface''
\cite{Bird2007}, but in the transient case this would constitute a change of frame of
reference. This new frame of reference would be such that the inert air molecules are
viewed as stationary with the water vapor diffusing through them. 

In either equation \eqref{eqn:pore_mass_balance1} or \eqref{eqn:Bird_concentration} one
must find appropriate conditions or equations to either neglect or rewrite the advective
term, $\rho^{g_v} \bv{g}$ or $\rho^{g_v} \bv{g_a}$ respectively. Typically this term is
neglected in a pure diffusion problem. As these are two different simplifications of the
same equation one must have different reasons for neglecting the advective term. The
easiest fix for this issue is to couple with either the bulk gas mass balance equation 
or the air-species mass balance equation 
and to use the mass-weighted velocity: $\rho^g \bv{g} = \sumj \rho^{g_j} \bv{g_j}$. The
point being that one cannot simply neglect the advection term in the transient case of
either equation without proper consideration of the implications: a fixed bulk velocity or
a changing frame of reference respectively.

A final comment can be make regarding equations  \eqref{eqn:pore_mass_balance1} and
\eqref{eqn:Bird_concentration}.  The bulk density term, $\rho^g$, on the right-hand side
of these equations is often factored out of the divergence operator.  This is an error
committed by several researchers \cite{Bird2007,Camassel2005,Kerkhof1997}.  The reasoning
for assuming that the density is constant (and hence returning to a linear diffusion model
in the absence of advection) is that in an ideal gas, $p^{g} = \rho^{g} R^{g} T$.  Under
constant temperature conditions, and if the pressure is assumed constant, then the density
is assumed be constant.  There are two possible mistakes here. (1): If the species
densities are allowed to vary then the bulk density must vary. (2): The value of $R^g$
will vary with the changing composition of the mixture (since the molar mass of the
mixture changes). The effect of this is that, while the pressure may remain constant, the
component parts are not necessarily constant and therefore cannot be factored from the
divergence operator.

\section{Conclusion}
In this chapter we have compared various forms of Fick's law for molecular
diffusion.  We have shown that, while the diffusion coefficients are indeed different,
under certain common circumstances the diffusion coefficients can be considered as
approximately equal.  It is common to take the diffusion coefficient as constant (or 
only a function of temperature), and in many cases it is safe to assume the
same diffusion coefficient may be used in the common forms of Fick's law.  

In the transient
case there are two natural formulations for (the mass flux form of) Fick's second law.  In
either case, the natural governing equation is a nonlinear advection diffusion equation
that must be closed with the use of another mass balance equation. When considering the
advection term, it is the author's opinion that equation \eqref{eqn:pore_mass_balance1} is
the more natural choice.  The reason for this is that the bulk velocity, $\bv{g}$, is
likely more
naturally measured as compared to that of the species velocity. This chapter concludes our
discussion on pore-scale modeling.  We now turn our attention to building macroscale
models, but in doing so we keep in mind the diffusion models at the pore scale and use
cues from this scale to help make proper assumptions about the larger scale.

\newpage
\chapter{Hybrid Mixture Theory}\label{ch:HMT}
In this chapter we use a combination of classical mixture theory and rational
thermodynamics (henceforth called Hybrid Mixture Theory (HMT)) to study novel extensions
to Darcy's law, Fick's law, and Fourier's law in variably saturated porous media. This
approach was pioneered by Hassanizadeh and Gray in the 70's and 80's
\cite{Gray1977,Gray1979,Hassanizadeh1986,Hassanizadeh1986a} and later extended by
Bennethum, Cushman, Gray, Hassanizadeh, and many others
\cite{Cushman1985,Cushman2002,Gray1998,weinstein} to model multi-phase, multi-component,
and multi-scale media.  HMT involves volume averaging, or upscaling, pore-scale balance
laws to obtain macroscale analogues.  The second law of thermodynamics is then used to
derive constitutive restrictions on these macroscale balance laws.  Constitutive relations
are particular to the medium being studied, and hence depend on a judicious choice of
independent variables for the energy of each phase in the medium.  There are many
excellent resources for the curious reader to gain a more thorough understanding of HMT
(eg \cite{Cushman2002,weinstein}).  For that reason we will not derive every identity
along the way.  Instead partial derivations of the identities necessary to understand the
present application of HMT are presented.

To begin this overview we consider the upscaling of pore-scale balance laws (conservation
laws) via a mixture theoretic approach. The subsequent sections in this and the next
chapter introduce the entropy inequality and it's exploitation to derive constitutive
laws.  A judicious choice of independent variables for the energy of each phase in the medium is chosen and is used
to derive novel versions of Darcy's, Fick's, and Fourier's laws.  These constitutive
equations will be used in subsequent chapters to develop models for moisture transport in
variably saturated porous media.


\section{The Averaging Procedure}
When considering a porous medium one cannot avoid discussing the various scales involved.
This particular work deals with two principal scales: the microscale and the macroscale.
At the microscale the phases are separate and distinguishable. Typical microscale porous
media will have pores that measure on the order of microns to millimeters (depending on
the type of solid).  At the macroscale the phases are indistinguishable and the typical
measurements range from millimeters to meters.  The macroscale is where most physical
measurements are made, and as such, we seek to derive governing equations that hold at
this scale. The microscale structure may vary dramatically for different media depending
on the type of solid phase and the microscale behavior of the fluid phases. As such, the
microscale geometry can have a dramatic influence on flow and phase interaction.

For any given
phase at the pore scale the mass, linear momentum, angular momentum, and energy balance
laws must hold.  The problem is that it is difficult to obtain geometric information
everywhere at this scale For this reason we seek to average (or {\it upscale}) the microscale balance laws to the macroscale.

There are many methods for mathematically averaging balance laws.  Here we choose the
simplest method of weighted integration.  Before introducing the technical details of the
weighted integration we must first introduce the concept of a Representative Elementary
Volume and local geometry in a porous medium. This elementary volume will become our basic
unit of volume throughout this research. The following discussions closely follow and
paraphrase those of Bear \cite{Bear1988}, Bennethum \cite{Bennethum1996a,Bennethum1996b},
Hassanizadeh and Gray \cite{Gray1977,Gray1979}, Weinstein \cite{weinstein}, and
Wojciechowski \cite{Wojciechowski2011}.


\subsection{The REV and Averaging}
In this work we consider unsaturated porous media. Characteristic to these media is the
juxtaposition of liquid, solid, and gas phases within the pore matrix.  We make the
assumption that a representative elementary volume (REV), in the sense of Bear
\cite{Bear1988}, exists at every point in space.  To properly define the REV we first
define the porosity. 

Consider a sequence of small volumes within a porous medium, $(\delta
V)_k$, each with centroid $\foten{x} \in \mathbb{R}^3$.  For each $k$, let
$(\delta V_{void})_k$ be the volume of the void space within $(\delta
V)_k$.  The porosity for the $k^{th}$ volume is given as the ratio
\begin{flalign}
    \phi_k &= \frac{(\delta V_{void})_k}{(\delta V)_k}.
    \label{eqn:porosity_defn}
\end{flalign}
Generate the sequence, $\{ \phi_k \}$, by gradually shrinking $(\delta
V)_k$ about $\foten{x}$ such that $(\delta V)_1 > (\delta V)_2 > (\delta
V)_3 > \cdots$. As $k$ increases, the porosity will certainly fluctuate due
to heterogeneities in the medium.  As $(\delta V)_k$ shrinks there will be
a certain value, $k=k^*$, such that for $k>k^*$ the fluctuations in
porosity become small and are only due to fluctuations in the arrangement
of the solid matrix.  If $(\delta V)_k$ is reduced well beyond $(\delta V)_{k^*}$ the
sequence of volumes will eventually converge to $\foten{x}$.  The point,
$\foten{x}$ only lies within one phase, so the limit of the sequence of
porosities will either be 0 or 1 (completely in the void space or
completely in the solid).  This indicates that there will be some other
intermediate volume, $(\delta V)_{k^{**}} < (\delta V)_{k^*}$, where the
sequence of porosities begins to fluctuate again as $k$ gets larger.  We
define the REV, $\delta V$, as any particular volume $(\delta V)_{k^{**}} <
\delta V < (\delta V)_{k^*}$. Without loss of generality we can simply
choose $\delta V \equiv (\delta V)_{k^{**}}$.  
\Cref{fig:REV_figure}
illustrates two typical sequences of porosities, $\phi_k$, as the volume is
decreased (right to left).

\linespread{1.0}
\begin{figure}[ht]
    \begin{center}
        \begin{tikzpicture}
            \draw[thick,->] (-0.1,0) -- (6,0);
            \draw[thick,->] (0,-0.1) -- (0,3);
            \draw (6,0) node[anchor=west]{Volume};
            \draw (-0.1,1.5) node[anchor=east]{Porosity};
            \draw (0,0) node[anchor=north east]{$0$};
            \draw (0,2.5) node[anchor=east]{$1$};
            \draw (-0.1,2.5) -- (0.1,2.5);
            \draw[dashed] (2.5,-0.1)  -- (2.5,2.75);
            \draw[->] (2,-0.5) node[anchor=north]{\small{$k=k^{**}$}} --
            (2.45,-0.05);
            \draw[dashed] (5.75,-0.1) -- (5.75,2.75);
            \draw[->] (5.25,-0.5) node[anchor=north]{\small{$k=k^{*}$}} --
            (5.7,-0.05);
            \draw[thick,color=red,smooth,domain=0:6] plot
            (\x,{exp(-1.5*\x)*cos(550*\x-0.7)+1.5});
            \draw[thick,color=blue,smooth,domain=0:6] plot
            (\x,{-1*exp(-1.3*\x)*cos(430*\x)+1.5});
            \draw[thick,->] (3.5,2) node[anchor=south]{REV} -- (2.5,1.5);
            \draw (0.5,0.7) node[anchor=west]{\tiny{Domain of}};
            \draw (0.5,0.5) node[anchor=west]{\tiny{heterogeneity}};
            \draw (3,0.7) node[anchor=west]{\tiny{Domain of}};
            \draw (3,0.5) node[anchor=west]{\tiny{homogeneity}};
        \end{tikzpicture}
    \end{center}
    \caption{Illustration of the definition of the REV via a sequence of
    porosities corresponding to a sequence of shrinking volumes. (Image similar to Figures
    1.3.1 and 1.3.2 in Bear \cite{Bear1988})}
    \label{fig:REV_figure}
\end{figure}
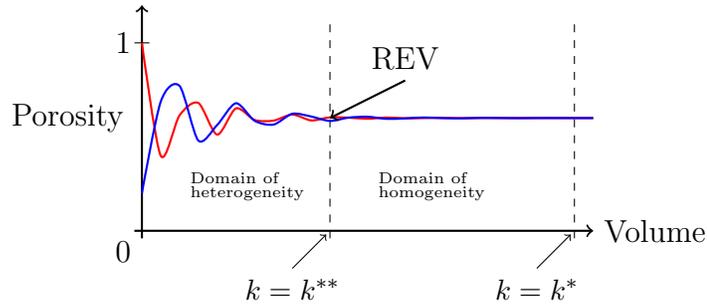
\renewcommand{\baselinestretch}{\normalspace}

\linespread{1.0}
\begin{figure}[h!]
    \begin{center}
        \includegraphics[trim=2.5cm 0cm 0cm 0cm,width=0.95\columnwidth]{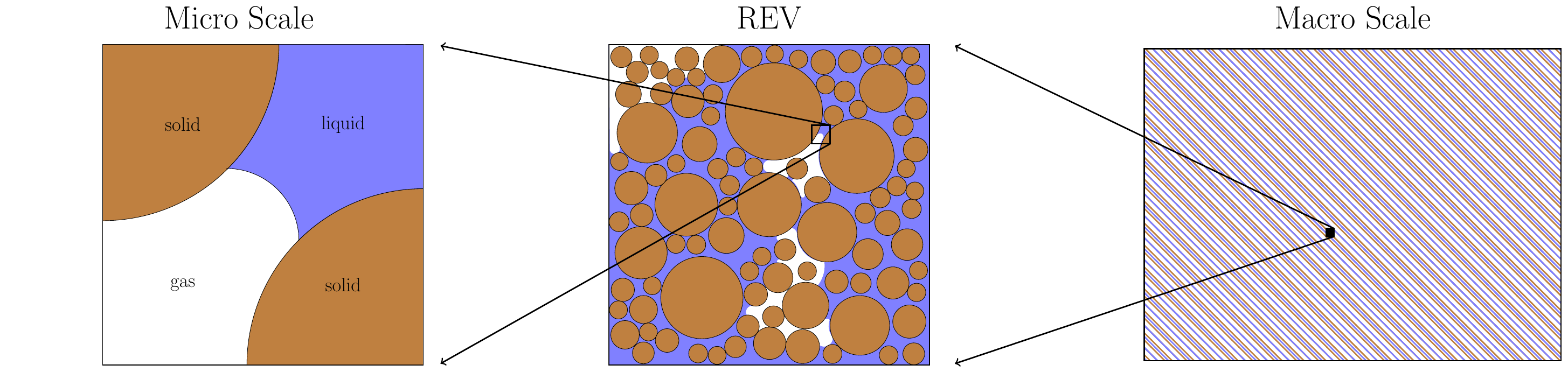}
    \end{center}
    \caption{Cartoon of the microscale, REV, and macroscale in a granular soil. The
    right-hand plot depicts the mixture of all phases.}
    \label{fig:MicroMacro}
\end{figure}
\renewcommand{\baselinestretch}{\normalspace}

Consider now a coordinate system superimposed on the porous medium.  Let $\foten{x}$ be
the centroid of the REV, and let $\foten{r}$ be some other vector inside the REV.  Define
the vector, $\boldsymbol{\xi}$, as a vector originating from the centroid of the REV such
that 
\begin{flalign}
    \foten{r} = \foten{x} + \boldsymbol{\xi}.
    \label{eqn:local_spatial_coord}
\end{flalign}
We can now view $\boldsymbol{\xi}$ as a local coordinate in the REV as in Figure
\ref{fig:REV_local_coordinates}.

\linespread{1.0}
\begin{figure}[ht]
    \begin{center}
        \begin{tikzpicture}
            \draw[->] (-0.1,0) -- (3.5,0);
            \draw[->] (0,-0.1) -- (0,3.5);
            \draw[->] (0,0) -- (-2,-1);
            \draw (2,1.6) circle(1cm);
            \draw[->] (0,0) -- (2,1.6);
            \draw[->] (2,1.6) -- (2.7,1.2);
            \draw[->] (0,0) -- (2.7,1.2);
            \draw (1.8,2) node{$\delta V$};
            \draw (0.9,1) node{$\foten{x}$};
            \draw (1.2,0.4) node{$\foten{r}$};
            \draw (2.3,1.6) node{$\boldsymbol{\xi}$};
        \end{tikzpicture}
    \end{center}
    \caption{Local coordinates in and REV.}
    \label{fig:REV_local_coordinates}
\end{figure}
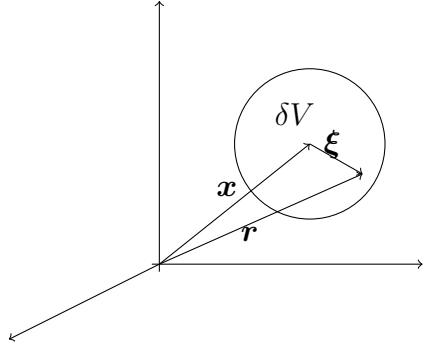
\renewcommand{\baselinestretch}{\normalspace}

Define the {\it phase indicator function} as 
\begin{flalign}
    \gamma_\al(\foten{r},t) = \left\{ \begin{array}{ll} 1, & \foten{r} \in
        \al\text{-phase} \\ 0, & \foten{r} \not \in \al\text{-phase}
    \end{array} \right.,
    \label{eqn:phase_indicator}
\end{flalign}
where $\foten{r}$ is a position vector as indicated in Figure
\ref{fig:REV_local_coordinates}. The averaging technique involved multiplying a micro
scale quantity (such as density) by $\gamma_\al$ and integrate over the REV.  This effectively smears out
the phases.  A consequence of this is that the averaged value may not accurately represent
the actual values being measured at the pore scale.  A further mathematical complication
arises since the integrations may not make sense in the traditional (Riemannian) sense.
Therefore, we must understand all of the following mathematics in the distributional sense
(integrals are understood to be Lebesgue, and derivatives are understood to be generalized
derivatives).  For more specifics on these mathematical tools see standard graduate texts
on functional analysis (eg.  \cite{Oden2009}).

To find the volume of the $\al$ phase in the REV we simply integrate
$\gamma_\al$ over the REV.  Define this volume as $|\delta V_\al|$:
\begin{flalign}
    |\delta V_\al| &= \int_{\delta V} \gamma_\al \left( \foten{r},t
    \right) dv = \int_{\delta V} \gamma_\al \left( \foten{x} + \boldsymbol{\xi}, t
    \right) dv(\boldsymbol{\xi}).
    \label{eqn:deltava}
\end{flalign}
The $\al$-phase volume fraction, $\eps{\al}$, is defined as \footnote{Note: the notation
``$\epsa$'' for the volume fraction is not necessarily standard. Some authors use
``$\phi^\al$'', ``$\theta^\al$'', or ``$n^\al$''. Furthermore, the
superscript notation is sometimes replaced by subscripts. The present notation is chosen
to be consistent with the primary references for Hybrid Mixture Theory mentioned in the
introduction to this chapter.}
\begin{flalign}
    \epsa\left( \foten{x},t \right) &= \frac{|\delta V_\al|}{|\delta V|}.
    \label{eqn:defn_vol_frac}
\end{flalign}
Since $0 \le |\delta V_\al| \le |\delta V|$ it is clear that $0 \le \epsa
\le 1$.  Furthermore, since the REV is made up of all of the phases, 
\begin{flalign}
    \suma \epsa = 1.
    \label{eqn:vol_frac_1}
\end{flalign}
The volume fraction is the first example of a macroscale variable. That is, it is a
variable that describes a pore-scale property but is upscaled to the larger, more
measurable, scale.

It is useful to note that there are two main types of averaging that will be used: mass
averaging and volume averaging \cite{Gray1979,Gray1998,weinstein}.  Let $\psi^j$ be the
$j^{th}$ constituent of some quantity of interest. To volume average $\psi^j$ we define
\begin{flalign}
    \left< \psi^j \right>^{\al} &= \frac{1}{|\delta V_\al|} \int_{\delta V}
    \psi^j(\foten{r},t) \gamma_\al(\foten{r},t) dv(\boldsymbol{\xi}),
    \label{eqn:general_vol_average}
\end{flalign}
and to mass average $\psi^j$ we define
\begin{flalign}
    \overline{\psi^j}^{\al} &= \frac{1}{\left< \rho^j \right>^{\al} |\delta
    V_\al|} \int_{\delta V} \rho^j(\foten{r},t) \psi^j(\foten{r},t)
    \gamma_\al(\foten{r},t) dv(\boldsymbol{\xi}).
    \label{eqn:general_mass_average}
\end{flalign}
Implicitly in (\ref{eqn:general_mass_average}) we see that
density is volume averaged. That is,
\begin{flalign}
    \left< \rho^j \right>^{\al} &= \frac{1}{|\delta V_\al|} \int_{\delta V}
    \rho^j(\foten{r},t) \gamma_\al(\foten{r},t) dv(\boldsymbol{\xi}).
    \label{eqn:density_volume_averaged}
\end{flalign}
(Note: some authors use the mass averaged notation on density even though
it is technically volume averaged.  Given the definition of a mass averaged
quantity there is usually little confusion.)

The basic rules of thumb for deciding whether to volume or mass average
were originally proposed by Hassanizadeh and Grey in 1979 \cite{Gray1979}.
They propose four criteria, listed below, for making this decision. In these criteria it
is emphasized that the microscale quantities correspond to small scale pre-averaged
quantities, while macroscale quantities are defined via the averaging process.
\begin{enumerate}
    \item ``When an averaging operation involves integration, the integrand
        multiplied by the infinitesimal element of integration must be an
        additive quantity.  For example, the internal energy density
        function, $E$, is not additive, but the total internal energy, $\rho E
        dv$ is additive and an average defined in terms of this quantity
        will be physically meaningful.''
    \item ``The macroscopic quantities should exactly account for the total
        corresponding microscopic quantity. For example, total macroscopic
        momentum fluxes through a given boundary must be equal to the total
        microscopic momentum fluxes through that boundary.''
    \item ``The primitive concept of a physical quantity, as first
        introduced into the classical continuum mechanics must be preserved by
        proper definition of the macroscopic quantity.  For instance, heat
        is a mode of transfer of energy through a boundary different from
        work.  The definition of macroscopic heat flux must also be a mode
        of energy transfer different from macroscopic work.''
    \item ``The averaged value of a microscopic quantity must be the same
        function that is most widely observed and measured in a field
        situation or in laboratory practice.  For example, velocities
        measured in the field are usually mass averaged quantities;
        therefore, the macroscopic velocity should be a mass averaged
        quantity.''   This ensures applicability of the resulting
        equations.
\end{enumerate}

In the upscaling procedure to follow we wish to apply a weighted
integration to a pore-scale balance law (a partial differential equation).
This will involve terms such as 
\[ \int_{\delta V} \pd{\rho^j}{t} \gamma_\al dv, \]
and to ensure that we properly define the macroscale variables as either
volume or mass averaged quantities, we need a theorem that allows for the
interchange of integration and differentiation. This theorem is due to Whitaker and
Slattery \cite{Slatery1967,Whitaker1967,Whitaker1969} and a generalization of this theorem is due to
Cushman \cite{Cushman1985}.

\begin{theorem}[Averaging Theorem]
    If $\foten{w}_{\al \beta}$ is the microscopic velocity of the $\al
    \beta$ interface and $\foten{n}^{\al}$ is the outward unit normal
    vector of $\delta V_\al$ indicating that the integrand should be
    evaluated in the limit as the $\al \beta$ interface is approached from
    the $\al$ side, then
    \begin{subequations}
        \begin{flalign}
            \notag & \frac{1}{|\delta V|} \int_{\delta V} \pd{f}{t}
            \gamma_\al dv(\boldsymbol{\xi}) \\
            & \qquad = \pd{ }{t} \left[ \frac{1}{|\delta V|} \int_{\delta
            V} f \gamma_\al dv(\boldsymbol{\xi}) \right] - \sumba
            \frac{1}{|\delta V|} \int_{A_{\al \beta}} f \foten{w}_{\al
            \beta} \cd \foten{n}^\al da 
            \label{eqn:averaging_theorem_time_deriv} \\
            \notag & \frac{1}{|\delta V|} \int_{\delta V} \grad f
            \gamma_\al dv(\boldsymbol{\xi}) \\
            & \qquad = \grad \left[ \frac{1}{|\delta V|} \int_{\delta
            V} f \gamma_\al dv(\boldsymbol{\xi}) \right] + \sumba
            \frac{1}{|\delta V|} \int_{A_{\al \beta}} f \foten{n}^\al da, 
            \label{eqn:averaging_theorem_grad} 
        \end{flalign}
        \label{eqn:averaging_theorem_equations}
    \end{subequations}
    \label{thm:averaging_theorem}
where $f$ is the quantity to be averaged.  
\end{theorem}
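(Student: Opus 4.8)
The plan is to derive the averaging theorem from the classical Leibniz and Gauss/divergence rules applied to a domain with moving boundaries, namely the interior $\delta V_\al$ of the $\al$-phase within the REV. The key observation is that multiplying by the phase indicator $\gamma_\al$ and integrating over the fixed REV $\delta V$ is the same as integrating over the (time-varying) region $\delta V_\al$, and the boundary of that region is composed of the interfaces $A_{\al\beta}$ with the neighboring phases $\beta\neq\al$ (the portion of $\partial(\delta V_\al)$ lying on the boundary of the REV itself is handled either by assuming the REV boundary is chosen not to clip the phase or, more carefully, by noting that boundary term is absorbed into the definition — in the HMT literature one takes the REV large enough that this contribution is negligible or vanishes). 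The quantity $\foten{w}_{\al\beta}$ is precisely the velocity of the moving interface, and $\foten{n}^\al$ its outward normal relative to $\delta V_\al$.

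\emph{First} I would establish the spatial-gradient identity \eqref{eqn:averaging_theorem_grad}. Since $\gamma_\al$ is (in the distributional sense) the characteristic function of $\delta V_\al$, its generalized gradient is a surface delta distribution supported on $\partial(\delta V_\al)=\bigcup_{\beta\neq\al}A_{\al\beta}$, with $\grad\gamma_\al = -\foten{n}^\al\,\delta_{A_{\al\beta}}$ pointing into the $\al$-phase. Then I would write $\grad(f\gamma_\al) = (\grad f)\gamma_\al + f\grad\gamma_\al$, integrate over $\delta V$, and use the divergence theorem on the left side (the integral of a pure gradient over the REV) — or more cleanly, just move $\grad$ outside the fixed-domain integral of $f\gamma_\al$ and collect the surface term from $f\grad\gamma_\al$. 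Dividing by $|\delta V|$ and rearranging signs gives \eqref{eqn:averaging_theorem_grad}.

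\emph{Next}, for the time-derivative identity \eqref{eqn:averaging_theorem_time_deriv}, I would apply the general transport (Reynolds/Leibniz) theorem to $\frac{d}{dt}\int_{\delta V_\al(t)} f\,dv$, which equals $\int_{\delta V_\al}\pd{f}{t}\,dv + \sum_{\beta\neq\al}\int_{A_{\al\beta}} f\,(\foten{w}_{\al\beta}\cd\foten{n}^\al)\,da$, the boundary term arising because $\partial(\delta V_\al)$ sweeps with velocity $\foten{w}_{\al\beta}$. Replacing $\int_{\delta V_\al}$ by $\int_{\delta V}(\cdot)\gamma_\al\,dv$ throughout, dividing by the constant $|\delta V|$, and solving for the averaged $\pd{f}{t}$ term yields \eqref{eqn:averaging_theorem_time_deriv}. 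Equivalently, one can differentiate $\gamma_\al$ in time directly: $\pd{\gamma_\al}{t} = \foten{w}_{\al\beta}\cd\foten{n}^\al\,\delta_{A_{\al\beta}}$ (the interface moving outward adds volume), and apply the product rule as in the spatial case — this is the slicker route and parallels the gradient argument exactly.

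\emph{The hard part} is not the calculus but making the distributional bookkeeping rigorous: justifying that $\gamma_\al$ has the claimed generalized derivatives (a surface measure concentrated on $A_{\al\beta}$ with the stated normal/velocity weights), and being honest about the contribution of the part of $\partial(\delta V_\al)$ that coincides with $\partial(\delta V)$. In the spirit of this chapter I would invoke the distributional framework already flagged in the text (integrals Lebesgue, derivatives generalized) and cite Cushman \cite{Cushman1985} and Whitaker/Slattery \cite{Slatery1967,Whitaker1967,Whitaker1969} for the precise hypotheses under which the interface integrals are well-defined and the outer-boundary term drops, rather than reproving those regularity facts here.
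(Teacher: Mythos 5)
Your outline is correct, but be aware that the thesis itself offers no proof of Theorem \ref{thm:averaging_theorem}: it is quoted from Whitaker and Slattery \cite{Slatery1967,Whitaker1967,Whitaker1969} and Cushman \cite{Cushman1985}, so there is no in-paper argument to compare against. Your route --- treating $\gamma_\al$ as a characteristic function whose generalized derivatives are surface distributions supported on the phase interfaces, $\grad \gamma_\al = -\foten{n}^\al\,\delta_{A_{\al\beta}}$ and $\pd{\gamma_\al}{t} = \foten{w}_{\al\beta}\cd\foten{n}^\al\,\delta_{A_{\al\beta}}$, followed by the product rule and differentiation under the fixed-domain integral --- is the standard proof in that literature, it produces both identities with the correct signs, and it is consistent with the paper's stated convention that derivatives are understood in the generalized sense.

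One point should be repaired. Your hedge that the portion of $\partial(\delta V_\al)$ lying on the REV boundary is ``negligible'' or can be avoided by choosing the REV is not the right justification, and in your cleaner product-rule route the issue never arises: $\gamma_\al$ is defined on all of space, its singular support is only the set of phase interfaces, so no outer-boundary surface term ever appears. In the Reynolds-transport route for the time identity the REV is fixed in space, so the outer portion of $\partial(\delta V_\al)$ has zero normal velocity and contributes nothing --- only the interfaces sweep, which is why the surface sum runs over $A_{\al\beta}$ alone. If instead one applies the Gauss theorem directly to $\int_{\delta V_\al} \grad f\, dv$, the outer-boundary integral is emphatically not negligible: it is exactly the piece that Whitaker's original geometric argument converts into the macroscale gradient of the average, so dismissing it would break the identity. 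Finally, state explicitly that the derivative outside the bracket is with respect to the REV centroid $\foten{x}$ while the one inside is with respect to $\foten{r} = \foten{x} + \boldsymbol{\xi}$; the elementary identity $\grad_{\foten{x}}\left[ g(\foten{x}+\boldsymbol{\xi},t) \right] = \left( \grad_{\foten{r}} g \right)(\foten{x}+\boldsymbol{\xi},t)$ is what licenses differentiating under the integral and is the conceptual content of the theorem (microscale derivatives traded for macroscale derivatives of averages plus interfacial terms).
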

Keep in mind that $f$ could be a scalar or a vector quantity.  In the latter case, the
symbol $f$ is replaced with $\boldsymbol{f}$ and appropriate tensor contractions are
inserted. 

The averaging procedure is now carried out in the following steps:
\begin{enumerate}
    \item State the pore-scale balance law for a particular species (or
        phase).
    \item Multiply the equation by $\gamma_\al$.
    \item Average each term over the REV.
    \item Apply Theorem \ref{thm:averaging_theorem} to arrive at terms
        representing macroscale quantities.
    \item Define physically meaningful macroscopic quantities.
\end{enumerate}

We now turn our attention to averaging pore-scale balance laws in the sense
listed above. In the following discussion, 
$\bv{j}$ is the microscopic
velocity of constituent $j$, $\foten{w}_{\al \beta_j}$ is the velocity of
the $j^{th}$ constituent in the $\al \beta$ interface, and
$\foten{n}^{\al}$ is the outward unit normal vector of $\delta V_\al$. A full nomenclature
index can be found in Appendix \ref{app:nomenclature}.


\section{Macroscale Balance Laws}
As it is the simplest balance law, let us first consider the mass balance equation for a
single constituent:
\begin{flalign}
    \md{ }{\rho^j} + \rho^j \diver \bv{j} &= \rho^j \hat{r}^j.
    \label{eqn:pore_scale_mass_balance}
\end{flalign}
Here, $\rho^j$ is the density of the constituent, $\bv{j}$ is the velocity of the
constituent, and any source of mass from chemical reactions between the constituents is
given as $\hat{r}^j$.  Recall that the material (Lagrangian) derivative is 
\begin{flalign}
    \md{j}{(\cd)} &= \pd{(\cd)}{t} + \bv{j} \cd \grad (\cd)
\end{flalign}
This derivative contains the usual Eulerian derivative along with an advective term.
Written in terms of the Eulerian time derivative, (\ref{eqn:pore_scale_mass_balance}) is
\begin{flalign}
    \pd{\rho^j}{t} + \diver \left( \rho^j \bv{j} \right) &= \rho^j
    \hat{r}^j.
    \label{eqn:pore_scale_mass_balance_eulerian}
\end{flalign}
While this is specifically the mass balance equation, it takes the prototypical form of
all balance laws: a time derivative plus a flux is equal to any source.

The constituent momentum balance can be written in a similar manner:
\begin{flalign}
    \pd{\left( \rho^j \bv{j} \right)}{t} + \diver \left( \rho^j \bv{j} \otimes
    \bv{j} - \stress{j}\right) &= \rho^j \left( \foten{g} +
    \ihat^j + \hat{r}^j \bv{j} \right).
    \label{eqn:pore_scale_mom_balance_eulerian}
\end{flalign}
Here, $\stress{j}$ is the Cauchy stress tensor on species $j$, and the sources on the
right-hand side are gravity, momentum transfer from other constituents, and momentum
gained from chemical reactions respectively.  These equations describe the change in mass
and momentum over time and space within a specific constituent.  They are sufficient field
equations for modeling systems composed of a single phase gas, liquid, or solid, but
equations (\ref{eqn:pore_scale_mass_balance_eulerian}) and
(\ref{eqn:pore_scale_mom_balance_eulerian}) are insufficient for modeling multiphase and
multiconstituent systems as a mixture because the interactions between the phases and
constituents are not present.

In this work we consider a porous medium consisting of a solid phase and two fluid phases
with multiple constituents within each phase. The phases will be denoted as $\al = l, g,$
and $s$ for liquid, gas, and solid respectively.  The constituents will be enumerated
$j=1:N$ (using MATLAB-style notation to indicate $j=1,2,\dots,N$).  The following
derivations follow similar derivations given by Gray \cite{Gray1979}, Weinstein
\cite{weinstein} and Wojciechowski \cite{Wojciechowski2011}. 


\subsection{Macroscale Mass Balance}
To obtain macroscale equations in multiphase and multi-constituent media we multiply a
constituent balance equations by the phase indicator function, $\gamma_\al$, integrate
over $\delta V$, and divide by $|\delta V|$.  Applying the averaging theorem
(\ref{thm:averaging_theorem}) to the appropriate terms in equation
(\ref{eqn:pore_scale_mass_balance_eulerian}) we have
\begin{subequations}
    \begin{flalign}
        \notag \frac{1}{|\delta V|} \int_{\delta V} \pd{\rho^j}{t} \gamma_\al dv &= \pd{
        }{t} \left[ \frac{|\delta V_\al|}{|\delta V||\delta V_\al|} \int_{\delta V} \rho^j
            \gamma_\al dv \right] \\
        \notag & \quad - \sumba \frac{1}{|\delta V|} \int_{\delta A_{\al
            \beta}} \rho^j \foten{w}_{\al \beta_j} \cd \foten{n}^{\al} da \\
        &= \pd{ }{t} \left( \eps{\al} \overline{\rho^j}^{\al} \right) - \sumba
        \frac{1}{|\delta V|} \int_{\delta A_{\al \beta}} \rho^j \foten{w}_{\al \beta_j}
        \cd \foten{n}^{\al} da, \\
        \notag \frac{1}{|\delta V|} \int_{\delta V} \diver \left( \rho^j \bv{j} \right)
        \gamma_\al dv &= \diver \left[ \frac{|\delta V_\al|}{|\delta V||\delta V_\al|} \int_{\delta V} \rho^j \bv{j}
            \gamma_\al dv \right] \\
        \notag & \quad + \sumba \frac{1}{|\delta V|} \int_{\delta A_{\al
            \beta}} \rho^j \bv{j} \cd \foten{n}^\al da \\
        &= \diver \left( \epsa \overline{\rho^j}^{\al} \overline{\bv{j}}^{\al} \right) +
        \sumba \frac{1}{|\delta V|} \int_{\delta A_{\al \beta}} \rho^j \bv{j} \cd
        \foten{n}^\al da, \text{   and} \\
        \frac{|\delta V_\al|}{|\delta V||\delta V_\al|} \int_{\delta V} \rho^j \hat{r}^j dv &= \epsa
        \overline{\rho^j}^{\al} \overline{\hat{r}^j}^{\al}.
    \end{flalign}
    \label{eqn:mass_balance_averaged_terms}
\end{subequations}
Substituting equations (\ref{eqn:mass_balance_averaged_terms})(a)-(c) into
equation (\ref{eqn:pore_scale_mass_balance_eulerian}), recognizing the
volume fraction terms, and recognizing the averaged mass and velocity gives
the upscaled mass balance equation:
\begin{flalign}
    \notag & \pd{\left( \epsa \overline{\rho^j}^{\al} \right)}{t} + \diver
    \left( \epsa \overline{\rho^j}^{\al} \overline{\bv{j}}^{\al} \right) \\
    & \qquad = \sumba \frac{1}{|\delta V|} \int_{\delta A_{\al \beta}}
    \rho^j \left( \foten{w}_{\al \beta_j} - \bv{j} \right) \cd
    \foten{n}^{\al} + \epsa \overline{\rho^j}^{\al}
    \overline{\hat{r}^j}^{\al}.
    \label{eqn:upscaled_cons_mass_v1}
\end{flalign}
Rewriting equation \eqref{eqn:upscaled_cons_mass_v1} in terms of the material time
derivative and defining
\begin{flalign}
    \rhoaj &\equiv \overline{\rho^j}^{\al}, \\
    \bv{\aj} &\equiv \overline{\bv{j}}^{\al}, \\
    \ehatbaj &\equiv \frac{\epsa}{|\delta V_\al|} \int_{\delta A_{\al
    \beta}} \rho^j \left( \foten{w}_{\al \beta_j} - \bv{j} \right) \cd
    \foten{n}^{\al} da, \text{  and } \\
    \hat{r}^{\aj} &\equiv \epsa \rhoaj \overline{\hat{r}^j}^{\al}
\end{flalign}
respectively to be the averaged mass over $\delta V_\al$, the mass averaged
velocity, the net rate of mass gained by constituent $j$ in phase $\al$
from phase $\beta$, and the rate of mass gain due to interaction with other
species within phase $\al$, we get
\begin{flalign}
    \md{\aj}{\left( \epsa \rho^{\aj} \right)} + \epsa \rho^{\aj}
    \diver \bv{\aj} &= \sumba \ehatbaj + \hat{r}^{\aj}.
    \label{eqn:upscaled_cons_mass_v2}
\end{flalign}

Next we define the corresponding bulk phase variables so that the macroscale equations are
consistent with experimentally measured terms as much as possible.  Define:
\begin{flalign}
    \rhoa &\equiv \sumj \rhoaj, \text{  and }
    \label{eqn:bulk_density} \\
    C^{\aj} &\equiv \frac{\rhoaj}{\rhoa}
    \label{eqn:mass_concentration}
\end{flalign}
respectively to be the mass density of the $\al$ phase, and the mass
concentration of the $j^{th}$ constituent in the $\al$ phase.  If
equation (\ref{eqn:upscaled_cons_mass_v2}) is rewritten as
\[ \pd{\left( \epsa \rhoa C^{\aj} \right)}{t} + \diver \left( \epsa \rhoa
C^{\aj} \bv{\aj}
\right) = \sumba \ehatbaj + \hat{r}^{\aj} \]
and then summed over $j=1:N$ we obtain a mass balance equation for the
$\al$ phase:
\begin{flalign}
    \pd{\left( \epsa \rhoa \right)}{t} + \diver \left( \epsa \rhoa \bv{\al}
    \right) = \sumba \ehatba.
    \label{eqn:upscaled_cons_mass_summed}
\end{flalign}
Here we have used the fact that $\ehatba = \sumj \ehatbaj$; the rate of
mass transfer to the $\al$ phase from the $\beta$ phase is the sum of the
rates of mass transfer to each individual constituent in the $\al$ phase
from the $\beta$ phase.

Now use the definition of the material time derivative to write the mass
balance equation for the $\al$ phase as
\begin{flalign}
    \md{\al}{\left( \epsa \rhoa \right)} + \epsa \rhoa \diver \bv{\al} =
    \sumba \ehatba,
    \label{eqn:upscaled_mass_balance_phase}
\end{flalign}
where the following restrictions have been applied:
\begin{flalign}
    \sumj \hat{r}^{\aj} &= 0, \quad \forall \al, \text{ and} 
    \label{eqn:restriction_rhat} \\
    \suma \sumba \ehatbaj &= 0, \quad j=1:N.
    \label{eqn:restriction_ehatbaj}
\end{flalign}
Restriction (\ref{eqn:restriction_rhat}) states that the rate of net gain of mass within
species $\al$ from chemical reactions alone must be zero.  \Cref{eqn:restriction_ehatbaj}
states that the rate of mass gained by phase $\al$ from phase $\beta$ is equal to the rate
of mass gained by phase $\beta$ from phase $\al$.


\subsection{Macroscale Momentum Balance}
We now turn our attention to the momentum balance equation,
(\ref{eqn:pore_scale_mom_balance_eulerian}).  We can apply the same principles to upscale
this equation (for full details see \cite{weinstein}).  The macroscopic linear momentum
balance equation for constituent $j$ in the $\al$ phase is
\begin{flalign}
    \epsa \rhoaj \md{\aj}{\bv{\aj}} - \diver \left( \epsa \stress{\aj}
    \right) - \epsa \rhoaj \foten{g}^{\aj} = \ihat^{\aj} + \sumba \Thatbaj,
    \label{eqn:upscaled_mom_balance_species}
\end{flalign}
and the macroscopic linear momentum balance equation for the $\al$ phase is
\begin{flalign}
    \epsa \rhoa \md{\al}{\bv{\al}} - \diver \left( \epsa \stress{\al}
    \right) - \epsa \rhoa \foten{g} = \sumba \Thatba,
    \label{eqn:upscaled_mom_balance_phase}
\end{flalign}
where $\stress{\aj}$ and $\stress{\al}$ are the Cauchy stress tensors for the species and
the phase and $\Thatbaj$ and $\Thatba$ are momentum transfer terms. Most specifically, for
the momentum transfer terms, the former represents momentum transfered to constituent $j$
in the $\al$ phase through mechanical interactions from phase $\beta$, and the latter
represents the momentum transfered to phase $\al$ through mechanical interactions from
phase $\beta$.  Also notable is the $\ihat^{\aj}$ term.  This term represents the rate of
momentum gain due to mechanical interactions with other species within the same phase. 

In the processes of deriving these equations the following restrictions
were enforced:
\begin{flalign}
    \sumj \left( \ihat^{\aj} + \hat{r}^{\aj} \bv{\aj,\al} \right) &= 0
    \qquad \forall \al, \text{ and} 
    \label{eqn:restriction_ihat} \\
    \suma \sumba \left( \Thatbaj + \bv{\aj} \ehatbaj \right) &= 0 \qquad
    j=1:N.
    \label{eqn:restriction_Thatbaj}
\end{flalign}
Restriction (\ref{eqn:restriction_ihat}) states that linear momentum can
only be lost due to interactions with other phases (not within the
species), and restriction (\ref{eqn:restriction_Thatbaj}) states that the
interface can hold no linear momentum.  The comma in the superscript of
(\ref{eqn:restriction_ihat}) indicates a relative term: $\bv{\aj,\al} =
\bv{\aj} - \bv{\al}$.  For a complete list of notation see Appendix
\ref{app:nomenclature}.

Lastly, to tie the momentum transfer and stress tensor for the $\al$ phase
to those of the species we note two identities that were used in the
derivation:
\begin{flalign}
    \stress{\al} &= \sumj \left( \stress{\aj} - \rhoaj \bv{\aj,\al} \otimes
    \bv{\aj,\al} \right) 
    \label{eqn:stress_phase_species} \\
    \Thatba &= \sumj \left( \Thatbaj + \ehatbaj \bv{\aj,\al} \right).
    \label{eqn:mom_trans_phase_species}
\end{flalign}
These identities will be used later and so are presented here for
conciseness.


\subsection{Macroscale Energy Balance}
The derivations for the macroscale angular momentum and energy balance laws
are more algebraically complicated.  The angular momentum equation will not
be used in this work since we assume that we're dealing with granular-type
media where the angular momentum balance results in the solid phase Cauchy
stress tensor being symmetric \cite{Murad2000,Gray1979,weinstein}.  The energy balance equation, on the other
hand, will allow us to derive a novel form of the heat equation in porous
media.  For this reason we state the full equation here.  

Applying the same routine as in the mass and linear momentum equations we
arrive (after significant simplification) at a balance law for the energy
in species $j$:
\begin{flalign}
    \eps{\al} \rhoaj \md{\aj}{(e^{\aj})} - \diver \left( \eps{\al}
    \bq^{\aj} \right) - \eps{\al} \stress{\aj} : \grad \bv{\aj} - \eps{\al}
    \rho^{\aj} h^{\aj} = \hat{Q}^{\aj} + \hat{Q}_{\beta}^{\aj}
    \label{eqn:energy_balance_species}
\end{flalign}
(see \cite{Bennethum1994,weinstein} for details on the derivation).
Here, $h^{\aj}$ is the external supply of energy, $e^{\aj}$ is the energy
density, $\bq^{\aj}$ is the partial heat flux vector for the $j^{th}$
component of the $\al$ phase, $\hat{Q}^{\aj}$ is the rate of energy gain
due to interaction with other species within the $\al$ phase, and
$\hat{Q}_{\beta}^{\aj}$ is the rate of energy transfer from the $\beta$
phase to the $\alpha$ phase not due to mass or momentum transfer. 

Again, following the derivation of \cite{weinstein}, the bulk phase energy
equation is 
\begin{flalign}
    \epsa \rhoa \md{\al}{e^\al} - \diver \left( \epsa \bq^{\al} \right) -
    \epsa \stress{\al} : \grad \bv{\al} - \epsa \rhoa h^a = \sumba
    \hat{Q}^\al_\beta
    \label{eqn:energy_phase}
\end{flalign}
where

To arrive at this form of the energy equation we enforced the following
restrictions:
\begin{subequations}
    \begin{flalign}
        \sumj \left[ \hat{Q}^{\aj} + \hat{\foten{i}}^{\aj} \bv{\aj,\al} +
        \hat{r}^{\aj} \left( e^{\aj} + \frac{1}{2} \left( \bv{\aj,\al}
        \right)^2 \right) \right] &= 0 \quad \forall \al, \text{ and }
        \label{eqn:energy_restriction1} \\
        \suma \sumba \left[ \hat{Q}^{\aj}_\beta + \Thatbaj \cd \bv{\aj} +
        \ehatbaj \left( e^{\aj} + \frac{1}{2} \left( \bv{\aj,\al} \right)^2
        \right) \right] &= 0 \quad j=1:N.
        \label{eqn:energy_restriction2}
    \end{flalign}
\end{subequations}
Restriction \eqref{eqn:energy_restriction1} states that energy gained or lost due to
species interactions within the $\al$ phase must be gained or lost due to interactions
with other phases.
Restriction \eqref{eqn:energy_restriction2} states that the rate of energy gained or lost
by one component in one phase must go to another component or phase.  That is, this second
restriction states that the interface retains no energy.

A system of equations governed by mass, momentum, and energy balance requires each of the
upscaled equations listed.  A count of the variables indicates that there are far
more variables than equations.  It is at this point where we need a method for deriving
constitutive equations for these remaining variables. The method chosen for this work uses
another macroscale balance law based on the second law of thermodynamics.


\section{The Entropy Inequality}
The development of constitutive laws is central to the modeling process.  As we mentioned
previously, this has historically been a process of fitting mathematical models to
empirical evidence.  The construct of Hybrid Mixture Theory (HMT) couples the averaging
theorems discussed in the previous section and the second law of thermodynamics to provide
us with restriction on the form of the constitutive relations; hence narrowing down the
experiments required to those that are thermodynamically admissible.  It is then up to the
experimentalists to verify and refine these models.  Both theoretical and experimental
directions of study have their merits, but putting the constitutive equations on a firm
theoretical footing is ultimately preferred whether it is before or after the experiments
are run. In this section we give a brief derivation of the upscaled entropy inequality,
and we then use this inequality, along with a judicious choice of variables, to derive
constitutive equations for unsaturated porous media.


\subsection{A Brief Derivation of the Entropy Inequality}
The second law of thermodynamics states that entropy will never decrease as
a system evolves toward equilibrium \cite{Atkins2010,Callen1985}. The
microscale entropy {\it balance} equation that describes this
phenomenon is
\begin{flalign}
    \rho^j \md{j}{\eta^j} + \diver \boldsymbol{\phi}^j - \rho^j
    b^j = \hat{\eta}^j + \hat{r}^j \bv{j} + \hat{\Lambda},
    \label{eqn:micro_entropy}
\end{flalign}
where $\eta^j$ is the entropy density of constituent $j$,
$\boldsymbol{\phi}^j$ is the entropy flux, $b^j$ is the external supply of
entropy, $\hat{\eta}^j$ is entropy gained from other constituents, and
$\hat{\Lambda}$ is the entropy production. Since the second law of
thermodynamics must hold we know that $\hat{\Lambda} \ge 0$ for all time. 

Applying Theorem \ref{thm:averaging_theorem} to equation
(\ref{eqn:micro_entropy}) and defining appropriate macroscale definitions
of the variables gives the upscaled entropy balance equation:
\begin{flalign}
    \epsa \rhoaj \md{\aj}{\eta^{\aj}} - \diver \left( \epsa
    \boldsymbol{\phi}^{\aj} \right) - \epsa \rhoaj b^{\aj} &= \sumba
    \hat{\Phi}^{\aj}_{\beta} + \hat{\eta}^{\aj} + \hat{\Lambda}^{\aj},
    \label{eqn:entropy_upscaled_v1}
\end{flalign}
where the terms on the right-hand side of equation \eqref{eqn:entropy_upscaled_v1}
represent transfer of entropy through mechanical interaction, entropy gained due to
interactions with other species, and the rate of entropy generation respectively.

Next, assume that the material we are modeling is {\it simple} in the
sense of Coleman and Noll \cite{Coleman1963}. This means that we assume
that the entropy flux and external supply are due to heat fluxes and
sources respectively.  To remove the dependence on external heat sources we add
($1/T$ times) the upscaled conservation of energy equation
(\ref{eqn:energy_balance_species}),
\begin{flalign*}
    \epsa \rhoaj \md{\aj}{e^{\aj}} - \diver \left( \epsa \bq^{\aj}
    \right) - \epsa \stress{\aj} : \grad \bv{\aj} - \epsa \rhoaj h^{\aj}
    = \hat{Q}^{\aj} + \sumba \hat{Q}^{\aj}_{\beta}.
\end{flalign*}

At this point we perform a Legendre transformation in order to convert
convert internal energy, $e^{\aj}$, to Helmholtz potential, $\psi^{\aj}$
(see any thermodynamics text, eg \cite{Callen1985}):
\begin{flalign}
    \psiaj = e^{\aj} - \eta^{\aj} T.
    \label{eqn:Legendre}
\end{flalign}
This is done because internal energy has entropy as a natural independent
variable, and entropy is difficult to measure experimentally.  It should be noted that the
Helmholtz potential is only one choice of thermodynamic potential we could have made.
This is done primarily for historical reasons, but the Gibbs potential and possibly the
Grand Canonical potential could have also been viable choices. The appeal of the Helmholtz
potential is that it naturally has independent variables of temperature and volume (or,
in intensive variables, density). 

To arrive at a simplified entropy inequality for the total production of entropy (across
all constituents and phases) we now solve for $\hat{\Lambda}^{\aj}$ and then sum over
$\al=l,g,s$ and $j=1:N$.  This step requires significant algebra so the details of the
derivation are omitted for brevity sake. After much simplification, the entropy inequality
becomes
\begin{flalign}
    \notag 0 \le \hat{\Lambda} &= \sum_{\alpha} \left\{ -\frac{\eps{\al}
    \rhoa}{T} \left( \md{\al}{\psia} + \eta^{\al} \mds{T} \right)  \right.
    \\
    \notag & \hspace{1.3cm} + \frac{\eps{\al}}{T} \left( \sumj \stress{\aj}
    \right) : \bd{\al} \\
    \notag & \hspace{1.3cm} + \frac{\eps{\al} \grad{T}}{T^2} \cd \left\{
    \bq^{\al} - \sumj \left( \stress{\aj} \cd \bv{\aj,\al} - \rhoaj
    \bv{\aj,\al} \left( \psiaj + \frac{1}{2} \bv{\aj,\al} \cd
    \bv{\aj,\al} \right) \right) \right\} \\
    \notag & \hspace{1.3cm} - \frac{1}{T} \sumj \left\{ \left( \sumba
    \Thatbaj \right) + \hat{{\bf i}}^{\aj} + \grad \left( \eps{\al}
    \rho^{\aj} \psiaj \right) \right\} \cd \bv{\aj,\al} \\
    \notag & \hspace{1.3cm} + \frac{\eps{\al}}{T} \sumj \left( \stress{\aj}
    - \rho^{\aj} \psi^{\aj} \soten{I} \right) : \grad \bv{\aj,\al} \\
    \notag & \hspace{1.3cm} - \frac{1}{T} \sumba \left\{ \Thatba +
    \eps{\al} \rho^{\al} \eta^{\al} \grad T \right\} \cd \bv{\al,s} \\
    \notag & \hspace{1.3cm} - \frac{1}{2T} \sumj \left\{ \left( \sumba
    \ehatbaj \right) + \rhataj \right\} \bv{\aj,\al} \cd \bv{\aj,\al}
    \\
    & \hspace{1.3cm} \left. - \frac{1}{T} \sumba \left\{ \ehatba \left(
    \psia + \frac{1}{2} \bv{\al,s} \cd \bv{\al,s} \right) \right\}
    \right\}, 
    \label{eqn:entropy}
\end{flalign}
where $\hat{\Lambda}$ is the rate of entropy generation.

Several new terms have appeared in (\ref{eqn:entropy}).  First, $\bd{\al} =
(\grad \bv{\al})_{sym}$ is the rate of deformation tensor (also known as
the strain rate). As before, terms with a comma in the superscript are
relative terms: $\bv{a,b} = \bv{a} - \bv{b}$.

Several identities were needed to derive (\ref{eqn:entropy}). A complete list of these
identities has been included in Appendix \ref{app:EntropyIdentities}.  The next step is to
expand the Helmholtz potential in terms of constitutive independent variables that
describe our system. This allows freedom to make choices about which variables control
behavior of the system.  The choice of these variables is generally non-trivial so in the
next section we discuss motivations for the choice of variables.


\newpage
\chapter{New Independent Variables and Exploitation of the Entropy
Inequality}\label{ch:Exploit}
Now that we have an expression for the entropy inequality we must choose a set of
independent variables that describes our system of interest.  We seek to describe a
multiphase system where the solid phase may undergo finite deformation, where the relative
saturations of the two fluid phases vary in time and space, and where phase changes
between the fluids possibly occurs throughout the porous medium.  Hassanizadeh and Gray
have modeled similar media in the past \cite{Gray1998,Hassanizadeh1986,Hassanizadeh1986a}.
These models include effects from common interfaces, common lines (where three phases
meet), and common points (where four phases meet).  These models are very thorough and
follow the same HMT approach.  The down sides to their models, in the author's opinion,
are three fold: (1) the complexity of the resulting equations is such that in order to use
these equations a host of simplifying assumptions must be made, (2) the thermodynamics of
the common points and lines make sense physically but are likely negligible relative to
other effects, and (3) constitutive equations must be derived for transfer rates between
interfaces, common lines, common points, and phases.  This final drawback indicates that a
detailed knowledge of the pore-scale physics must be somehow upscaled. Approaches have
been taken recently to do just this, but the proposed theories have not yet gained
widespread acceptance. Examples of such work include those of Gray et al.
\cite{Gray2007,Gray2007a}

In the present approach we choose not to directly model interfaces and instead strive to
eventually write our governing equations in terms of the macroscale chemical potential.
The chemical potential is known from physical chemistry and thermodynamics as a {\it
generalized driving force} that is a function of pressure and temperature.  It is well
known that mass transfer from liquid to gas states is driven by gradients of chemical
potential \cite{McQuarrie1997}, so if we can write constitutive equations (such as
Darcy's, Fick's, and Fourier's laws) in terms of this potential we can possibly couple the
relevant effects into much simpler governing equations; for example, equations that track
changes in chemical potential instead of pressure or concentration.  The immediate
drawback to the present modeling efforts is that the recent work by Hassanizadeh et al.\
seems to indicate that saturation and capillary pressure are linked to the amount of
interfacial area between phases within the medium \cite{Bottero2011,Joekar-Niasar2007}.
In the present work we will not directly model the fluid-fluid and fluid-solid interfaces.
We proceed with the present modeling effort despite the results proposed by Hassanizadeh
et al.  We will discuss this drawback as we run up against it in future sections and
chapters.

\section{A Choice of Independent Variables}
In this section we present a choice of independent variables for the Helmholtz free energy
(potential) so as to expand the entropy inequality and to derive the relevant
forms of Darcy's law, Fick's law, and Fourier's law.  These variables are known as {\it
constitutive} independent variables as they represent a postulation of the variables that
control the energy in the system.  ``Deriving physically meaningful results depends on our
ability to relate thermodynamically defined variables to physically interpretable
quantities'' \cite{weinstein}. To that end, we use our a priori knowledge of
thermodynamics to choose some of the variables.  For the remainder of this work we
restrict our attention to a three-phase system consisting of an elastic solid, a viscous
liquid phase, and a gas phase.  To begin the modeling process we assume that each of these
phases consists of $N$ constituents (also called species or components), and all
interfacial effects are neglected. Examples of the constituents include dissolved minerals
in the liquid, species evaporated into the gas, or precipitated minerals associated with the
solid phase.

The motivation for choosing some of the variables is relatively trivial.  For example, to
allow for a heat conducting medium, temperature, $T$, and the gradient of temperature,
$\grad T$, are included in the list of independent variables. The pore space is expected
to be variably saturated with the two fluid phases so the volume fractions, $\eps{l}$ and
$\eps{g}$, must be included in the set of variables.  The fact that $\suma \epsa = 1$
precludes us from using all three volume fractions since they are not independent of each
other. In future chapters we will further restrict this assumption since for a
rigid solid phase the sum of the fluid phase volume fractions is equal to the fixed
porosity
\begin{flalign}
    \eps{l} + \eps{g} = \porosity. \label{eqn:porosity_defn_v2}
\end{flalign}
The reason for not making this assumption
initially is that it allows us to develop models for deformable media as well as for media
with a rigid solid phase (hence, a more general model may be derived from these assumptions
later if necessary).

Recall from thermodynamics that the change in extensive Helmholtz potential, $A$, with respect
to volume is minus the pressure: $\partial A / \partial V = -p$.  In terms of intensive
variables this means that $\rho^2 \partial \psi / \partial \rho = -p$.  To remain
consistent with the extensive definition of the Helmholtz potential, the densities must
then be included in the set of independent variables.  Given the fact that there are $N$
constituents in each phase, this could be done in two different ways: (1) we could include
the mass concentrations, $C^{\aj}$, for $j=1:N-1$ along with the phase density, or (2) we
could include all of the constituent densities, $\rhoaj$ for $j=1:N$.  Bennethum, Murad,
and Cushman \cite{Murad2000}, and also Weinstein \cite{weinstein} took the first of these
options when using HMT to derive constitutive relations involving chemical potentials.
The trouble with this approach is that the mass concentration of the $N^{th}$ constituent
is dependent on the mass concentrations of the previous $N-1$ constituents (since the
concentrations sum to 1). These results indicate that the behavior of the constituents
depends on how they are labeled instead of simply being independent. Various techniques were successfully developed in
\cite{Murad2000} to deal with this complication.  To avoid these complications 
we choose
the second option and include the species densities, $\rho^{\aj}$ for $j=1:N$.  Since each constituent is free to move within each phase, the spatial
gradients of the species densities, $\grad \rho^{l_j}$ and $\grad \rho^{g_j}$, are also
included.

Darcy's law and Fick's law are classical empirical expressions for creeping flow and
constitutive diffusion. Darcy's law is a statement about the relative velocity of a fluid
phase in a porous medium, and Fick's law is a statement about the relative diffusive
velocity of a species within a phase.  Since we seek novel forms of these two laws we
include $\bv{\al,s}$ and $\bv{\aj,\al}$ for $\al=l,g,s$ in the list of independent
variables.  It should be noted that neither of these variables is {\it objective} in the
sense that they are not frame invariant.  This poses a problem since any governing
equation should not depend on an observer's frame of reference.  In \cite{Eringen2003},
Eringen proposed a modification to Darcy's law that creates a frame invariant relative
velocity.  The new terms needed for this new relative velocity are second order and are
assumed to be negligible in Darcy flow.  A similar argument can be used for Fick's law.

The reasoning given in the previous few paragraphs leads us to the set of independent
variables for $\psi^{\al}$ to include: 
\[ T, \grad T, \eps{l}, \eps{g}, \rho^{l_j}, \rho^{g_j}, \grad \rho^{l_j}, \, \grad
    \rho^{g_j}, \bv{l,s}, \bv{g,s}, \bv{l_j,l}, \bv{g_j,g}, \text{ and } \bv{s_j,s} \]
where $j=1:N$.  It is apparent, now, that solid-phase terms corresponding to the density
and gradient of density are missing. The principle of equipresence, from constitutive
theory in continuum mechanics, states that ``all constitutive variables are a function of
the same set of independent variables'' \cite{Schreyer-Bennethum2012}.  To give symmetry
between the phases we include $\rho^s$ and $\grad \rho^s$. The Stokes assumption for the
Cauchy stress tensor in a viscous fluid states that stress is the sum of the fluid
pressure and the strain rate.  For this reason we include the strain rate (also known as
the rate of deformation tensor) for the fluid phases: $\bd{l}$ and $\bd{g}$.  The theory
of equipresence also states that if we include strain rate in the fluid phases then we
must include a comparable term in the solid phase.  

A natural choice of variables for the solid phase are the solid phase volume fraction,
density, and the (averaged) strain.  Weinstein \cite{weinstein} pointed out that these
three variables are not independent, as explained below, and used a modified set of independent variables for
the solid phase.  The same modified set will be used here, so the
following simply states Weinstein's results with brief derivations.

Let $J^s$ be the Jacobian of the solid phase given by $J^s=$det$\left(
(\soten{F}^s)^T \cd \soten{F}^s \right)$, where $\soten{F}^s$ is the
deformation gradient
\begin{flalign}
    \soten{F}^s &= \pd{x^s_k}{X^s_K},
\end{flalign}
$\foten{x}^s$ is the Eulerian coordinate, and $\foten{X}^s$ is the Lagrangian coordinate.
Using standard identities from Continuum Mechanics, the Jacobian can be rewritten as
\begin{flalign}
    J^s=\text{det}\left( 2 \soten{E}^s + \soten{I} \right). \label{eqn:jacobian}
\end{flalign}
Furthermore, through the conservation of mass, the Jacobian is also a scaling factor for
volumetric changes, $J^s = (\eps{s}_0 \rho^s_0)/(\eps{s} \rho^s)$.  This clearly shows the
dependence of the three variables. To mitigate this issue, Weinstein \cite{weinstein}
adopted ideas from solid mechanics and 
considered a ``multiplicative decomposition'' of the deformation gradient, $\soten{F}^s$,
and the Green's deformation tensor, $\soten{C}^s$, as
\begin{flalign}
    \soten{C}^s &= \left( J^s \right)^{2/3} \Cbars,\label{eqn:CauchyGreen} \\
    \soten{F}^s &= \left( J^s \right)^{1/3} \Fbars, \label{eqn:DeformationGradient}
\end{flalign}
where $(J^s)^{1/3} \soten{I}$ and $(J^s)^{2/3} \soten{I}$ represent volumetric
deformation, and ``$\Fbars$ and $\Cbars$ are the modified deformation gradient and the
modified right Cauchy-Green tensor, respectively.'' With this modification to the solid
strain, the solid phase variables we consider here are $J^s, \Cbars, C^{s_k}$ and $\grad
C^{s_k}$ where $k=1:N-1$. We note here that in order to get physically meaningful results
for phase change, we include the same components, so that $C^{s_j}, C^{l_j},$ and
$C^{g_j}$ all refer to the same component. Pairing the mass concentrations and the Jacobian gives a description
of the density of the solid phase, and the modified Cauchy-Green tensor is used in place
of strain.

The principle of equipresence states that all of the constitutive variables must be a
function of the same set of the postulated independent variables.  In particular, we
postulate that the Helmholtz potential for each phase is a function of the following set
of variables:
\begin{flalign}
    \left\{ T, \grad T, \eps{l}, \eps{g}, \rho^{l_j}, \rho^{g_j},
    \grad \rho^{l_j}, \grad \rho^{g_j}, \bv{\al,s}, \bd{l}, \bd{g}, \bv{\aj,\al}, J^s,
    \Cbars, C^{s_k}, \grad C^{s_k} \right\},
    \label{eqn:variables}
\end{flalign}
where $\al=l,g,s$; $j=1:N$ and $k=1:N-1$. We postulate that a three phase
porous medium with an elastic solid phase and $N$ constituents per phase
can be modeled by set \eqref{eqn:variables}.

%
\subsection{The Expanded Entropy Inequality}\label{sec:entropy_expanded}
Consider now that the first line of the entropy inequality, (\ref{eqn:entropy}), contains
a material time derivative of the Helmholtz potential for the $\al$ phase.  Using the
identity
\begin{flalign}
    \md{\al}{(\cd)} = \md{s}{(\cd)} + \bv{\al,s} \cd \grad (\cd),
\end{flalign}
and applying the chain rule, the entropy inequality can be expanded to include each of our
constitutive independent variables.  The central idea to the exploitation of the second
law of thermodynamics is that no term in the entropy inequality can take values such that
entropy generation is negative.  A close examination of the expanded entropy inequality
reveals that there are many terms that show up linearly.  In these linear terms we notice
some that are neither independent nor constitutive.  Examples of such coefficients are
$\grad T, \grad C^{s_j}, \grad \rho^{l_j},
\grad \rho^{g_j}, \bd{l}, \bd{g}, \bv{l,s}, \bv{g,s}, \bv{s_j,s}, \bv{l_j,l}$, 
$\bv{g_j,g}$, $\dot{T}$, $\dot{\rho}^{\aj}$, and $\grad \bv{\aj,\al}$ (where the dot
notation (e.g. $\dot{T}$)
indicates a material time derivative).
Loosely speaking, we have no control over these variables and they could take values that
violate the second law. For example, take as a thought experiment a process where all of
these variables except $\dot{T}$ are zero. From Bennethum \cite{bennethum},
\begin{quote}
    ``Since none of the other terms in the entropy inequality are a function of $\dot{T}$,
    by varying the value of $\dot{T}$ we can make the left-hand side of the entropy
    inequality as large positive or as large negative as we want - hence violating the
    entropy inequality. Since the entropy inequality must hold for all processes
    (including those for which $T$ is any value), the entropy inequality can be violated
    unless the coefficient of $\dot{T}$ is zero.''
\end{quote}
In order not to violate the inequality in (\ref{eqn:entropysimplified}), the
coefficients of all of these factors must be zero.  
This implies that terms such as $\suma \left( \epsa \rhoa
\pd{\psia}{\grad T} \right)$ are zero and will therefore be left out of the expansion of
(\ref{eqn:entropy}) for brevity. The time rates of change of volume fractions are not this
type of variable since they are constitutive; that is, we assume a rule for the time rates
of change of volume fractions that depends on the specific medium of interest.

With this simplification in mind,
(\ref{eqn:entropy}) becomes
\begin{flalign}
    \notag T \Lambda = &\suma \eps{\al} \rho^{\al} \left(
    \pd{\psi^{\al}}{T} + \eta^{\al} \right) \dot{T} - \suma \left(
    \eps{\al} \rho^{\al} \pd{\psi^{\al}}{\eps{l}} \right) \epsdot{l} -
    \suma \left( \eps{\al} \rho^{\al} \pd{\psi^{\al}}{\eps{g}} \right)
    \epsdot{g} \\
    %
    %
    \notag &- \suma \sumjj \left( \eps{\al} \rho^{\al}
    \pd{\psi^{\al}}{C^{s_j}} \right) \dot{C}^{s_j} \\
    \notag &- \suma \sumj \left(
    \eps{\al} \rhoa \pd{\psi^{\al}}{\rholj} \right) \dot{\rho}^{l_j} -
    \suma \sumj \left( \eps{\al} \rhoa \pd{\psi^{\al}}{\rho^{g_j}} \right)
    \dot{\rho}^{g_j} \\
    \notag &- \suma \left( \eps{\al} \rhoa \pd{\psi^{\al}}{J^s} \right)
    \dot{J}^{s} - \suma \left( \eps{\al} \rhoa \pd{\psi^{\al}}{\left(
    \Cbars \right)} \right) : \Cbarsdot \\
%
%
    \notag &- \left[ \eps{l} \rhol \left\{ \left( \pd{\psi^{l}}{T} + \eta^{l} \right)
        \grad T + \pd{\psi^{l}}{\eps{l}} \grad \eps{l} + \pd{\psi^{l}}{\eps{g}} \grad
        \eps{g} \right.  \right. \\
    \notag & \qquad \qquad + \sumjj \pd{\psi^{l}}{C^{s_j}} \grad C^{s_j} + \sumj
    \pd{\psi^{l}}{\rho^{l_j}} \grad \rho^{l_j}+ \sumj
    \pd{\psi^{l}}{\rho^{g_j}} \grad \rho^{g_j} \\
    \notag &\left. \left. \qquad \qquad + \pd{\psi^{l}}{J^s} \grad J^s +
    \pd{\psi^{l}}{\Cbars} : \grad \left( \Cbars \right) 
     \right\} + \That{s}{l} +
    \That{g}{l} \right] \cd \bv{l,s} \\
    %
    %
    %
    \notag &- \left[ \eps{g} \rho^{g} \left\{ \left( \pd{\psi^{g}}{T} + \eta^{g} \right)
        \grad T + \pd{\psi^{g}}{\eps{l}} \grad \eps{l} + \pd{\psi^{g}}{\eps{g}} \grad
        \eps{g} \right.  \right. \\
    \notag & \qquad \qquad + \sumjj \pd{\psi^{g}}{C^{s_j}} \grad C^{s_j} + \sumj
    \pd{\psi^{g}}{\rho^{l_j}} \grad \rho^{l_j}+ \sumj
    \pd{\psi^{g}}{\rho^{g_j}} \grad \rho^{g_j} \\
    \notag & \left. \left. \qquad \qquad + \pd{\psi^{g}}{J^s} \grad J^s +
    \pd{\psi^{g}}{\Cbars} : \grad \left( \Cbars \right) 
     \right\} + \That{s}{g} + \That{l}{g} \right] \cd
    \bv{g,s} \\
    \notag &+ \suma \left[ \eps{\al} \left( \sumj \stress{\al_j} \right) :
    \bd{\al} \right] \\
    \notag &+ \suma \left[ \frac{\eps{\al} \grad T}{T} \cd \left\{
    \bq^{\al} - \sumj \left( \stress{\aj} \cd \bv{\aj,\al} - \rho^{\aj}
    \bv{\aj,\al} \left( \psi^{\aj} + \frac{1}{2} \left( \bv{\aj,\al}
    \right)^2 \right) \right) \right\} \right] \\
    \notag &- \suma \sumj \left\{ \left( \sumba \Thatbaj \right) + \hat{
    {\bf i}}^{\aj} + \grad \left( \eps{\al} \rho^{\aj} \psi^{\aj} \right)
    \right\} \cd \bv{\aj,\al} \\
    \notag &+ \suma \left\{ \eps{\al} \sumj \left( \stress{\aj} -
    \rho^{\aj} \psi^{\aj} \soten{I} \right) : \grad \bv{\aj,\al} \right\}
    \\
    \notag &- \frac{1}{2} \suma \sumj \left\{ \left( \sumba \ehatbaj
    \right) + \rhataj \right\} \left( \bv{\aj,\al} \right)^2 \\
    &- \suma \sumba \left\{ \ehatba \left( \psi^{\al} +
    \frac{1}{2} \left( \bv{\al,s} \right)^2 \right) \right\} \geq 0
    \label{eqn:entropy2}
\end{flalign}

The next step is to enforce two additional relationships using Lagrange multipliers.  In
doing so, the Lagrange multipliers become unknowns of the system. We will see in
subsequent sections that the Lagrange multipliers are associated with partial pressures
and chemical potentials of species in the fluid phases.  The first relationship
considered is the dependence of the diffusive velocities:
\begin{flalign}
    \sumj \left( C^{\aj} \bv{\aj,\al} \right) = \foten{0}.
\end{flalign}
One can see this since
\begin{flalign}
    \sumj C^{\aj} \bv{\aj,\al} = \sumj C^{\aj} \bv{\aj} - \sumj C^{\aj}
    \bv{\al} = \bv{\al} - \bv{\al} = \foten{0}.
\end{flalign} 
The implication is that if we know the concentrations and diffusive velocities of the
first $N-1$ constituents, then we would know the concentration and diffusive velocity of
the $N^{th}$ constituent.  Multiplying by the density, taking the gradient, and using the
product rule gives the following relationship:
\begin{flalign}
    \grad \left( \sumj \rhoaj \bv{\aj,\al} \right) = \sumj \left( \rhoaj
    \grad \bv{\aj,\al} + \bv{\aj,\al} \grad \rhoaj \right) = \foten{0}.
    \label{eqn:Nth_term_dep_grad}
\end{flalign}
Following Bennethum, Murad, and Cushman \cite{Murad2000}, we enforce this
relationship with a Lagrange multiplier so as to account for the $N^{th}$
term dependence.

The second relationship to be enforced with Lagrange multipliers is the
mass balance equation for each of the constituents
(\ref{eqn:upscaled_cons_mass_v2}):
\[ \md{\aj}{\left( \epsa \rhoaj \right)} + \epsa \rhoaj \diver \bv{\aj} =
\sumba \ehatbaj + \hat{r}^{\aj}. \]

Let $\Lambda_{old}$ denote $\Lambda$ from equation (\ref{eqn:entropy2}),
and let $\lamaj$ and $\lamhataN$ be the Lagrange multipliers for the mass
balance and $N^{th}$ term dependencies, \eqref{eqn:Nth_term_dep_grad}, respectively. The entropy inequality
is rewritten as follows:
\begin{flalign}
    \notag \Lambda_{new} = \Lambda_{old} &+ \suma \sumj
    \frac{\lamaj}{T}\left[ \md{\aj}{\left( \epsa \rhoaj \right)} + \epsa
    \rhoaj \diver \bv{\aj} - \left( \sumba \ehatbaj + \hat{r}^{\aj} \right)
    \right] \\
    &+ \suma \sumj \frac{\epsa}{T} \lamhataN : \grad \left(
    \rhoaj \bv{\aj,\al} \right).
    \label{eqn:entropy_modified}
\end{flalign}
After a significant amount of algebraic simplification (with no additional physical
assumptions), this yields the following
form of the entropy inequality:
\begin{flalign}
    \notag T \Lambda = &\suma \left\{ \eps{\al} \rho^{\al} \left(
    \pd{\psi^{\al}}{T} + \eta^{\al} \right) \right\} \dot{T} \\
    \notag &- \sum_{\beta = l,g} \left\{ \left[ \suma \left( \eps{\al}
    \rho^{\al} \pd{\psi^{\al}}{\eps{\beta}} \right) - \sumj \left(
    \lambda^{\beta_j} \rho^{\beta_j} \right) \right] \epsdot{\beta}
    \right\} \\
    %
%
    \notag &- \sumjj \left[ \left( \suma \eps{\al} \rho^{\al}
    \pd{\psi^{\al}}{C^{s_j}} \right) - \lambda^{s_j} \eps{s} \rhos \right]
    \dot{C}^{s_j} \\
    \notag &- \sum_{\beta=l, g} \left\{ \sumj \left[ \left( \suma
    \eps{\al} \rho^{\al} \pd{\psi^{\al}}{\rho^{\beta_j}} \right) -
    \lambda^{\beta_j} \eps{\beta} \right] \dot{\rho}^{\beta_j} \right\} \\
%
    \notag &- \left[ \suma \left( \eps{\al} \rhoa \pd{\psi^{\al}}{J^s}
    \right) - \frac{1}{3} \frac{\eps{s}}{J^s} \left( \sumj tr\left(
    \stress{s_j} \right) \right) \right] \dot{J}^{s} \\
    \notag &- \left[ \suma \left( \eps{\al} \rhoa \pd{\psi^{\al}}{\left(
    \Cbars \right)} \right) - \frac{\eps{s}}{2} \left( \left( \Fbars
    \right)^{-1} \cd \left( \sumj \stress{s_j} \right) \cd \left( \Fbars
    \right)^{-T} \right) \right. \\
    \notag & \qquad \left. - \frac{\eps{s}}{2} \left( \left( \Fbars
    \right)^{-1} \cd \left( \Fbars \right)^{-T} \right) \left( \sumj
    \lambda^{s_j} \rhosj \right) \right] : \Cbarsdot \\
%
%
    \notag &- \mathop{\sum_{\beta = l,g}}_{\gamma \neq \beta} \left\{
    \left[ \eps{\beta} \rho^{\beta} \left\{ \left( \pd{\psi^{\beta}}{T} +
    \eta^{\beta} \right) \grad T + \sumjj \pd{\psi^{\beta}}{C^{s_j}} \grad
    C^{s_j}\right. \right. \right. \\
    \notag & \qquad \qquad + \left( \pd{\psi^{\beta}}{\eps{\beta}} -
    \frac{1}{\eps{\beta} \rho^{\beta}} \sumj \lambda^{\beta_j}
    \rho^{\beta_j} \right) \grad \eps{\beta} +
    \pd{\psi^{\beta}}{\eps{\gamma}} \grad \eps{\gamma} \\
    %
    %
    \notag & \qquad \qquad  +  \sumj \left( \pd{\psi^{\beta}}{\rho^{\beta_j}} -
    \frac{\lambda^{\beta_j}}{\rho^{\beta}} \right) \grad \rho^{\beta_j} +
    \sumj \pd{\psi^{\beta}}{\rho^{\gamma_j}} \grad \rho^{\gamma_j} \\
    \notag & \qquad \qquad \left. \left. \left.+ \pd{\psi^{\beta}}{J^s}
    \grad J^s + \pd{\psi^{\beta}}{\Cbars} : \grad \left( \Cbars \right)
    \right\} + \That{s}{\beta} + \That{\gamma}{\beta} \right] \cd
    \bv{\beta,s} \right\} \\
%
%
    \notag &+ \sum_{\beta=l,g} \left\{ \eps{\beta} \sumj \left(
    \stress{\beta_j} + \lambda^{\beta_j} \rho^{\beta_j} \soten{I} \right) :
    \bd{\beta} \right\} \\
%
    \notag &+ \suma \left[ \frac{\eps{\al} \grad T}{T} \cd \left\{
    \bq^{\al} - \sumj \left( \stress{\aj} \cd \bv{\aj,\al}
    \phantom{\frac{1}{2}} \right. \right.
    \right. \\
    \notag & \hspace{2in} - \left. \left. \left. \rho^{\aj} \bv{\aj,\al}
    \left( \psi^{\aj} + \frac{1}{2} \bv{\aj,\al} \cd \bv{\aj,\al}
    \right) \right) \right\} \right] \\
    %
    %
    \notag &- \suma \sumj \left\{ \left( \sumba \Thatbaj \right) +
    \ihat^{\aj} + \grad \left( \epsa \rhoaj \psiaj \right) \right. \\
    \notag & \qquad \qquad \qquad \left. - \lamaj \grad \left( \epsa \rhoaj
    \right) - \epsa \lamhataN \cd \grad \rhoaj \right\} \cd \bv{\aj,\al} \\
%
    %
    \notag &+ \suma \sumj \left\{ \epsa \stress{\aj} + \epsa \rhoaj \left(
    \lamaj - \psiaj \right) \soten{I} + \epsa \rhoaj \lamhataN\right\} :
    \grad \bv{\aj,\al} \\
%
    %
    \notag &- \suma \sumj \left\{ \rhataj \left( \lambda^{\aj} +
    \frac{1}{2} \left( \bv{\aj,\al} \right)^2 \right) \right\} \\
    \notag &- \sumj \ehat{g}{l_j}\left\{  \left( \lambda^{l_j} +
    \psi^{l} \right) - \left( \lambda^{g_j} + \psi^{g} \right) \right. \\
    \notag & \qquad \quad \left. +
    \frac{1}{2} \left( \left( \bv{l,s} \right)^2 - \left( \bv{g,s}
    \right)^2 \right) + \frac{1}{2} \left( \left( \bv{l_j,l} \right)^2 -
    \left( \bv{g_j,g} \right)^2 \right)\right\} \\
    \notag &- \sumj \ehat{l}{s_j}\left\{  \left( \lambda^{s_j} + \psi^{s}
    \right) - \left( \lambda^{l_j} + \psi^{l} \right) - \frac{1}{2}
    \left( \bv{l,s} \right)^2  + \frac{1}{2} \left( \left( \bv{s_j,s}
    \right)^2 - \left( \bv{l_j,l} \right)^2 \right)\right\} \\
    &- \sumj \ehat{s}{g_j}\left\{  \left( \lambda^{g_j} + \psi^{g} \right)
    - \left( \lambda^{s_j} + \psi^{s} \right) + \frac{1}{2} \left( \bv{g,s}
    \right)^2 + \frac{1}{2} \left( \left( \bv{g_j,g} \right)^2 - \left(
    \bv{s_j,s} \right)^2 \right)\right\} \ge 0
    \label{eqn:entropysimplified}
\end{flalign}

The exploitation of equation \eqref{eqn:entropysimplified} will be the source of all of the constitutive
relations for the remainder of this work. The next section outlines the details of this
exploitation to form constitutive relations specific to multiphase media governed by our
choice of constitutive independent variables, \eqref{eqn:variables}.


\section{Exploiting the Entropy Inequality}
In this section we exploit the entropy inequality, (\ref{eqn:entropysimplified}), in the
sense of Colman and Noll \cite{Coleman1963}. The basic principle here is that, according
to the second law of thermodynamics, entropy is always non-decreasing as time evolves.
This fact is used to extract constitutive relationships from the entropy inequality.  Not
every result from this exploitation is relevant to the current study, so we only 
present the more notable and useful results in the next subsections. Furthermore, we
exploit equation \eqref{eqn:entropysimplified} with an eye toward deformable, multiphase,
media.  The assumption of deformable media will be removed in the future, but
this leaves open the possibility of returning to these results for future work. For an
abstract summary of how the exploitation of the entropy inequality works, along with
subtle but important assumptions, see Appendix \ref{app:AbstractEntropy}.


\subsection{Results That Hold For All Time}
As mentioned in Section \ref{sec:entropy_expanded}, several of the terms that appear
linearly in the entropy inequality have factors that are neither independent nor
constitutive.  We now use this fact to derive relationships that must hold for all time in
order to not violate the second law of thermodynamics.
To illustrate this point consider the coefficient of $\dot{T}$. If
this coefficient is set to zero we recover with the thermodynamic constraint
that temperature and entropy are conjugate variables,
\begin{flalign}
    \pd{\psia}{T} = -\eta^{\al}.
    \label{eqn:Helmholtz_entropy_conjugate}
\end{flalign}
This is a classical result known from thermodynamics.

\subsubsection{Fluid Lagrange Multipliers}
For the gas and liquid phases, the definitions of the Lagrange multipliers stem from the
coefficient of $\dot{\rho}^{\aj}$ and $\grad \bv{\aj,\al}$.  Setting the coefficient of
$\dot{\rho}^{\aj}$ to zero gives the definition of the Lagrange multiplier for the mass
balance equations:
\begin{flalign}
    \lambda^{\beta_j} = \suma \frac{\epsa \rhoa}{\eps{\beta}}
    \pd{\psia}{\rho^{\beta_j}}.
    \label{eqn:lagrange_betaj}
\end{flalign}
Setting the coefficient of $\grad \bv{\aj,\al}$ to zero, summing over
$j=1:N$, and solving for $\lamhataN$ yields an expression for the other
Lagrange multiplier:
\begin{flalign}
    \lamhataN &= -\frac{1}{\rhoa} \sumj \left[ \stress{\aj} + \left( \rhoaj
    \lamaj \right) \soten{I} \right] + \psia \soten{I}.
    \label{eqn:lagrange_lamhataN}
\end{flalign}

\subsubsection{Solid Phase Identities}
Several identities for the solid phase can be derived from the terms associated with the
time derivatives of the solid phase Jacobian,
$\Jsdot$, and the modified Cauchy-Green, $\Cbarsdot$ terms. 
From the $\Jsdot$ term we see that 
\begin{flalign}
    \frac{1}{3} \sumj tr \left( \stress{s_j} \right) &= \frac{J^s}{\eps{s}} \suma \left(
    \eps{\al} \rho^{\al} \pd{\psi^{\al}}{J^s} \right).
    \label{eqn:solid_species_stress}
\end{flalign}
Next, consider the identity
\begin{flalign}
    \stress{\al} = \sumj \left( \stress{\aj} + \rho^{\aj} \bv{\aj,\al} \otimes
    \bv{\aj,\al} \right)
    \label{eqn:solid_stress_id}
    \end{flalign}
resulting from upscaling the momentum balance equation.  Taking the trace of
\eqref{eqn:solid_stress_id},  neglecting the diffusive terms, and substituting this into
\eqref{eqn:solid_species_stress} gives a definition for the solid phase pressure: 
\begin{flalign}
    p^s := -\frac{1}{3} tr \left( \stress{s} \right) &= -\frac{J^s}{\eps{s}} \suma \left(
    \eps{\al} \rho^{\al} \pd{\psi^{\al}}{J^s} \right).
    \label{eqn:solid_pressure}
\end{flalign}
This is a generalization of the solid phase pressure found by Weinstein for saturated
porous media in \cite{weinstein}.  

The coefficient of the $\Cbarsdot$ term gives a relationship for the stress in the solid
phase.  This will give a generalization of the solid phase stress
\cite{Bennethum1997,Bennethum2007} and closely follows the derivations of
Bennethum \cite{Bennethum2007} and Weinstein \cite{weinstein}. Setting the coefficient of
the $\Cbarsdot$ term to zero, left multiplying by the modified deformation gradient,
$\Fbars$, and right multiplying by the transpose of the deformation gradient gives a
relationship that defines the Lagrange multiplier for the solid phase, $\lambda^{s_j}$:
%
\begin{flalign}
    \sumj \stress{s_j} + \sumj \lambda^{s_j} \rho^{s_j} \soten{I} &= \frac{2}{\eps{s}}
    \Fbars \cd \left[ \suma \left( \eps{\al} \rho^{\al} \pd{\psi^{\al}}{\Cbars} \right)
        \right] \cd \left( \Fbars \right)^T.\label{eqn:solid_temp}
\end{flalign}
Using identity \eqref{eqn:solid_stress_id} in the stress term of \eqref{eqn:solid_temp},
neglecting the diffusive velocities, taking one-third the trace of the result,
and using equation \eqref{eqn:solid_pressure} for the solid-phase pressure yields a relationship for the solid phase
Lagrange multiplier:
%
\begin{flalign}
    \sumj \lambda^{s_j} \rho^{s_j} &= p^s + \frac{2}{3\eps{s}} \suma \left( \eps{\al}
    \rho^{\al} \pd{\psi^{\al}}{\Cbars} \right) : \Cbars.  \label{eqn:solid_temp2}
\end{flalign}
Substituting \eqref{eqn:solid_temp2} back into \eqref{eqn:solid_temp} gives the following
relation for the solid phase stress:
\begin{flalign}
    \stress{s} &= -p^s \soten{I} + \frac{2}{\eps{s}} \Fbars \cd \left[ \suma \left(
        \eps{\al} \rho^{\al} \pd{\psi^{\al}}{\Cbars} \right) \right] \cd \left( \Fbars
        \right)^T - \frac{2}{3\eps{s}} \suma \left( \eps{\al} \rho^{\al}
        \pd{\psi^{\al}}{\Cbars} \right) : \Cbars \soten{I}.
    \label{eqn:solid_stress_decomposition}
\end{flalign}
This can be rewritten as
\begin{flalign}
    \stress{s} = -p^s \soten{I} + \stress{s}_{e} + \frac{\eps{l}}{\eps{s}}
    \stress{l}_{h} + \frac{\eps{g}}{\eps{s}} \stress{g}_{h} 
    \label{eqn:solid_stress}
\end{flalign}
where
\begin{subequations}
    \begin{flalign}
        \stress{s}_{e} &= 2\left( \rho^{s} \Fbars \cd \pd{\psi^{s}}{\Cbars} \cd \left(
        \Fbars \right)^T -\frac{1}{3} \rho^{s} \pd{\psi^{s}}{\Cbars} : \Cbars \soten{I}
        \right), \\
        \stress{l}_{h} &= 2\left( \rho^{l} \Fbars \cd \pd{\psi^{l}}{\Cbars} \cd \left(
        \Fbars \right)^T -\frac{1}{3} \rho^{l} \pd{\psi^{l}}{\Cbars} : \Cbars \soten{I}
        \right), \\
        \stress{g}_{h} &= 2\left( \rho^{g} \Fbars \cd \pd{\psi^{g}}{\Cbars} \cd \left(
        \Fbars \right)^T -\frac{1}{3} \rho^{g} \pd{\psi^{g}}{\Cbars} : \Cbars \soten{I}
        \right).
    \end{flalign}
\end{subequations}
The stresses above are termed the effective stress, hydrating stress for the liquid phase, and
hydrating stress for the gas phase respectively. Equation
\eqref{eqn:solid_stress_decomposition} states that the stress in the solid phase can be
decomposed into the solid pressure and stresses felt due to the presence of the fluid
phases.  It is here that the modifications of the deformation gradient and Cauchy-Green
tensors become clear. If we take the trace of the stress tensor then
we see that $(1/3)tr(\stress{s}) = -p^s$; which is how the solid phase pressure is
measured.  Therefore, this thermodynamic definition of $p^s$ is consistent with experimental measure.  Furthermore, the effective
stress and hydrating stresses are terms associated with the interaction between the solid
and the fluids. For saturated porous media, Bennethum \cite{Bennethum1997} states the following:
\begin{quote}
    ``The effective stress tensor is the stress of the solid phase due to the strain of the
    porous matrix, and the hydrating stress tensor is the stress the liquid phase supports due
    to the strain of the solid matrix (which would be negligible if the liquid and solid phase
    were not interactive, but which becomes significant for swelling porous materials).''
\end{quote}

One final note on the solid phase stress is that the total stress in the porous medium is
related to the pressures in all three phases.  This can be seen by taking the weighted sum
of the stresses in each phase:
\begin{flalign}
    \stress{ } = \suma \epsa \stress{\al} &= -\eps{s} p^s \soten{I} + \eps{s} \stress{s}_e
    + \eps{l} \left(\stress{l}_h + \stress{l} \right) + \eps{g} \left( \stress{g}_h +
    \stress{g} \right).
\end{flalign}
Taking one-third the trace of the total stress, and recalling that the effective and hydrating
stresses are trace free, gives
\begin{flalign}
    \left( \frac{1}{3} \right) tr \left( \stress{ } \right) &= - \eps{s} p^s +
    \frac{\eps{l}}{3} tr \left( \stress{l} \right) + \frac{\eps{g}}{3} tr \left(
    \stress{g} \right).
    \label{eqn:total_stress}
\end{flalign}
In order to fully understand the stresses in the fluid phases we must continue our
examination of the results coming from the entropy inequality. Equation
\eqref{eqn:total_stress} is similar to the Terzaghi stress principle; suggesting that the
fluid phases help to support the pore space in the medium.


\subsection{Equilibrium Results}
There are several more relationships that we can extract from the entropy inequality.  In
particular, we now seek relationships between the Lagrange multipliers and the fluid-phase
pressures. We also seek relationships for the momentum and energy exchange terms.  At
equilibrium the production of entropy is minimized.  Since this is a minimum, the gradient
of $\hat{\Lambda}$ with respect to the set of independent variables \eqref{eqn:variables}
is zero.  This indicates that the coefficients of the independent variables that appear
linearly in the entropy inequality are zero at equilibrium.  In the case of a three phase
porous medium of this nature, we define equilibrium to be when a subset of the independent
variables are zero. In particular, equilibrium is defined when no heat conduction occurs,
$\grad T = \foten{0}$, the strain rates in the fluid phases are zero, $\bd{\beta} =
\soten{0}$, and all relative velocities are zero, $\bv{\aj,\al} = \foten{0}$ and
$\bv{\al,s} = \foten{0}$.  This {\it definition} of equilibrium is particular to this type
of media and is chosen as it gives physically relevant and meaningful results. Another way
to look at this is to say that equilibrium is exactly the state when all of these
variables are zero.

\subsubsection{Fluid Stress Tensor}
The first notable equilibrium result comes from the coefficient of the rate of deformation
tensor, $\bd{\beta}$. Setting the coefficient of $\bd{\beta}$ to zero, eliminating the sum
of the constituent stress tensors using the identity
\begin{flalign}
    \stress{\beta} = \sumj \left[ \stress{\beta_j} + \rho^{\beta_j} \bv{\beta_j,\beta}
        \otimes \bv{\beta_j,\beta} \right],
\end{flalign}
and noting that at equilibrium the diffusive velocity is zero, yields
\begin{flalign}
    \sumj \lambda^{\beta_j} \rho^{\beta_j} &= -\frac{1}{3} tr \left( \stress{\beta}
    \right) = p^{\beta}.
    \label{eqn:liquid_vapor_lagrange2}
\end{flalign}
\Cref{eqn:liquid_vapor_lagrange2} links the Lagrange multipliers to the
equilibrium pressure of the fluid phases.  This is the classical definition
of pressure in a fluid: minus one-third the trace of the stress tensor.
Using equation (\ref{eqn:lagrange_betaj}) the pressure in the fluid phases
can now be written as
\begin{flalign}
    p^{\beta} = -\frac{1}{3} tr \left( \stress{\beta} \right) &= \sumj
    \suma \left( \frac{\eps{\al} \rho^{\al} \rho^{\beta_j}}{\eps{\beta}}
    \pd{\psi^{\al}}{\rho^{\beta_j}} \right).
    \label{eqn:generalpressure}
\end{flalign}

With the definition of pressure in equation \eqref{eqn:generalpressure} we note that the
coefficient of $\epsdot{\beta}$ can now be rewritten as
\begin{flalign}
    p^\beta - \suma \left( \eps{\al} \rhoa \pd{\psia}{\eps{\beta}} \right).
    \label{eqn:coeff_epsdot_beta}
\end{flalign}
The second term is the change in energy with respect to volumetric changes, and is
therefore interpreted as the relative affinity for one phase to another.  That is, this
term is related to the wetability of the $\al$ and solid phases by the $\beta$ fluid
phase. The time rate of change of volume fluid phase volume fraction is an equation of
state (that is not yet known), but rewriting the coefficient as in
\eqref{eqn:coeff_epsdot_beta} hints at the fact that the equation of state is related to
the pressure and the wettability of the phases.  Furthermore, pressure, wettability, and
surface tension are related to capillary pressure; hence indicating that the equation of
state for the time rate of change of volume fraction is related to capillary pressure.  It
is here that we note the drawback to the present modeling effort.  Recall that in the
present expansion of the entropy inequality we do not include interfacial effects.  If we
were to include these effects then a surface tension term would appear here (as shown in
Hassanizadeh and Gray \cite{Gray1998}) and these terms together would more readily be
associated with capillary pressure.  More discussion will be dedicated to the exact
equation of state for the time rate of change of volume fraction after a discussion on
cross coupling pressures in Section \ref{sec:pressure} and capillary pressure in Chapters
\ref{ch:Transport} and \ref{ch:Transport_Solution}.

\subsubsection{Momentum Transfer Between Phases}
The next notable equilibrium result we can extract from \eqref{eqn:entropysimplified}
comes from the coefficient of the fluid phase relative velocities, $\bv{\beta,s}$.
Setting this coefficient to zero, recalling that $\grad T = \foten{0}$ at equilibrium, using the definition of the fluid
phase pressure, (\ref{eqn:generalpressure}), the definition of the fluid phase Lagrange
multipliers, (\ref{eqn:lagrange_betaj}), and solving for the momentum transfer terms gives
\begin{flalign}
    \notag -\left( \That{s}{\beta} + \That{\gamma}{\beta} \right) &= \left( \eps{\beta}
    \rho^{\beta} \pd{\psi^{\beta}}{\eps{\beta}} - p^{\beta} \right) \grad \eps{\beta} +
    \eps{\beta} \rho^{\beta} \pd{\psi^{\beta}}{\eps{\gamma}} \grad \eps{\gamma} + \eps{\beta} \rho^{\beta} \sumjj \pd{\psi^{\beta}}{C^{s_j}} \grad
    C^{s_j} \\
    %
    %
    \notag & \quad - \sumj \left[ \left( \eps{\gamma} \rho^{\gamma}
        \pd{\psi^{\gamma}}{\rho^{\beta_j}} + \eps{s} \rho^s \pd{\psi^s}{\rho^{\beta_j}}
        \right) \grad \rho^{\beta_j} \right]+ \eps{\beta} \rho^{\beta} \sumj
        \pd{\psi^{\beta}}{\rho^{\gamma_j}} \grad \rho^{\gamma_j} \\ 
    & \quad + \eps{\beta} \rho^{\beta} \pd{\psi^{\beta}}{J^s} \grad J^s + \eps{\beta}
    \rho^{\beta} \pd{\psi^{\beta}}{\Cbars} : \grad \left( \Cbars \right), 
    \label{eqn:phase_mom_transfer_equilib}
\end{flalign}
where $\gamma$ is the {\it other} fluid phase not equal to $\beta$. This
particular result will be coupled with the conservation of momentum to
yield novel forms of Darcy's law in Section \ref{sec:Darcy_pressure}.

\subsubsection{Momentum Transfer Between Species}
Another notable equilibrium results comes from the coefficient of the diffusive velocity,
$\bv{\aj,\al}$. Equation (\ref{eqn:liquid_vapor_lagrange2}) indicates that at equilibrium
the definition of the Lagrange multiplier, equation (\ref{eqn:lagrange_lamhataN}),
simplifies to
\begin{flalign}
    \lamhataN = \psia \soten{I}.
    \label{eqn:lamhataN_equilib}
\end{flalign}
This implies that, at equilibrium, the stress tensor for constituent $j$ (from the
coefficient of $\grad \bv{\aj,\al}$)
can be written as
\begin{flalign}
    \stress{\aj} &= -\epsa \rhoaj \left( \lamaj - \psiaj \right) \soten{I}
    - \epsa \rhoaj \psia \soten{I}.
\end{flalign}

Consider the diffusive velocity term in the entropy inequality:
\begin{flalign*}
    & - \suma \sumj \left\{ \left( \sumba \Thatbaj \right) + \ihat^{\aj} + \grad \left( \epsa
    \rhoaj \psiaj \right) \right. \\
    & \quad \quad \left. \phantom{\left( \sumba \right)}
    - \lambda^{\aj} \grad \left( \epsa \rhoaj \right) - \epsa \lamhataN \cd \grad \rhoaj
\right\} \cd \bv{\aj,\al}.
\end{flalign*}
Add $-\suma \sumj (\rhoaj
\lamhataN \cd \grad \epsa) \cd \bv{\aj,\al} = 0$ and simplify to get
\begin{flalign*}
    & - \suma \sumj \left\{ \left( \sumba \Thatbaj \right) + \ihat^{\aj} + \grad \left(
\epsa \rhoaj \psiaj \right) \right. \\
& \quad \quad \left. \phantom{\left( \sumba \right)} - \lamaj \grad \left( \epsa \rhoaj \right) - \lamhataN : \grad
\left( \epsa \rhoaj \right) \right\} \cd \bv{\aj,\al}.
\end{flalign*}
At equilibrium the diffusive velocity is assumed to be zero.  By the logic used herein for
the exploiting the entropy inequality, and given the fact that $\lamhataN = \psia
\soten{I}$ at equilibrium, we observe that for each $j$,
\begin{flalign}
    \sumba \Thatbaj + \ihat^{\aj} &= -\grad \left( \epsa \rhoaj \psiaj
    \right) + \lamaj \grad \left( \epsa \rhoaj \right) + \psia
    \grad \left( \epsa \rhoaj \right).
    \label{eqn:species_mom_transfer_equilib}
\end{flalign}
This is an expression for the momentum transfer for species $j$ in the $\al$ phase.
This result will be coupled with the constituent conservation of momentum
equation to derive a form of Fick's law in Section \ref{sec:Ficks}.

\subsubsection{Partial Heat Flux}
To conclude the equilibrium results we examine the $\grad T$ term in the entropy
inequality.  We have assumed that
$\grad T = \foten{0}$ and $\bv{\aj,\al} = \foten{0}$ at equilibrium, so by the logic used
above we see that the coefficient of $\grad T$ must be zero and  
\begin{flalign}
    \suma \epsa \bq^\al = \foten{0}
    \label{eqn:eq_partial_heat_flux}
\end{flalign}
at equilibrium.  This is the partial
heat flux of the entire porous media and will be used in Section
\ref{sec:heat_flux} to derive a generalized Fourier's law.

\subsection{Near Equilibrium Results}
The next step in exploiting the entropy inequality is to derive {\it near equilibrium}
results.  These results arise by linearizing the equilibrium results about the equilibrium
state.  The linearization process is simply the first-order terms of the Taylor series,
but one must keep in mind that each of the derivatives is a function of all of the
constitutive independent variables that are not zero at equilibrium. For example, if $f =
f_{eq}$ at equilibrium, then near equilibrium, $f \approx f_{near}= f_{eq} + (\partial f /
\partial (\grad T))\cd \grad T + \cdots + (\partial f / \partial \bv{l,s}) \cd \bv{l,s}$.
This full expansion may yield terms that are not readily physically interpretable. For
this reason, considerable efforts must be made to relate the linearization constants to
measurable parameters. For a thorough explanation of the linearization process with the
entropy inequality see Appendix \ref{app:AbstractEntropy}. 

For the momentum transfer in the fluid phases, the linearization of equation
\eqref{eqn:phase_mom_transfer_equilib} can be simply written as
\begin{flalign}
    \left( \sum_{\al \neq \beta} \That{\al}{\beta} \right)_{near} =
    \left( \sum_{\al \neq \beta} \That{\al}{\beta} \right)_{eq} -
    \left( \eps{\beta}\right)^2 \soten{R}^{\beta} \cd \bv{\beta,s}.
    \label{eqn:mom_near_eq}
\end{flalign}
The linearization constant, $\soten{R}^\beta$, is related to
the resistivity of a porous medium; the inverse of the hydraulic conductivity. It should
be noted that we have only expanded about one of the possible variables: $\bv{\beta,s}$.
Strictly speaking this is incorrect and we should expand about all other variables which
are zero at equilibrium.  A more thorough expansion is
\begin{flalign}
    \notag \left( \sum_{\al \neq \beta} \That{\al}{\beta} \right)_{n.eq} &= \left(
    \sum_{\al \neq \beta} \That{\al}{\beta} \right)_{eq} - \left( \eps{\beta}\right)^2
    \soten{R}^{\beta} \cd \bv{\beta,s} \\
    &\quad + \soten{H}^\beta \cd
    \grad T + \soten{J}^\beta \cd \bv{\beta_j,\beta} + \toten{L}^{\beta} : \bd{\beta} +
    \cdots,
    \label{eqn:mom_near_eq_full}
\end{flalign}
where $\soten{H}^\beta$ and $\soten{J}^\beta$ are second-order tensors and
$\toten{L}^\beta$ is a third-order tensor.  The ellipses at the end of this equation
indicates that there are higher order terms that are not being written explicitly. The
left-hand side of \eqref{eqn:mom_near_eq_full} is the rate of momentum transfer due to mechanical
means.  It is reasonable to think that this transfer term might be a function of fluid
velocity, but the effects due to thermal gradients, diffusive velocity, and velocity
gradients are likely small in comparison.  To be completely correct we would have to
include these terms in the modeling problems to follow.  The trouble is that each of the
coefficients needs to be associated with a physical parameter. We will see that
$\soten{R}^\beta$ is physically associated with a material parameter of the porous medium,
but it is presently unclear what the physical interpretations are for the other
coefficients. Neglecting these terms simply leaves the door open for future modeling
research.

Proceeding in a similar manner, the linearized constituent momentum transfer from equation
\eqref{eqn:species_mom_transfer_equilib} is
\begin{flalign}
    \left( \sumba \Thatbaj + \ihat^{\aj} \right)_{near} = \left( \sumba
    \Thatbaj + \ihat^{\aj} \right)_{eq} - \epsa \rhoaj \soten{R}^{\aj} \cd
    \bv{\aj,\al}.
    \label{eqn:mom_species_near_eq}
\end{flalign}
The linearization constant is related to the inverse of the diffusion tensor.
The linearized partial heat flux from equation \ref{eqn:eq_partial_heat_flux} is
\begin{flalign}
    \left( \suma \epsa \bq^\al \right) &= \soten{K} \cd \grad T
    \label{eqn:heat_near_eq}
\end{flalign}
(recalling that the partial heat flux is zero at equilibrium), and the linearization
constant is related to the thermal conductivity.

In each of these linearization results, the factors of volume fraction and density are chosen so
that the linearization constants better match experimentally measured
coefficients. The signs are chosen so that the entropy inequality is not violated.

Several of the relationships resulting from the entropy inequality rely on proper
definitions of the partial derivatives of the energy with respect to particular
independent variables.  The pressure is one such quantity, but there are several others
that appear in the preceding results.  For this reason, we now turn our attention to the
exact definitions of pressure and chemical potential under our choice of independent
variables. This will help to simplify and to attach physical meaning to the terms
appearing in each of the linearized results.  In saturated swelling porous material,
Bennethum and Weinstein \cite{Bennethum2004} showed that there are three pressures acting
on the system.  These results are extended in the next section to
media with multiple fluid phases.


\section{Pressures in Multiphase Porous Media}\label{sec:pressure}
We will see in this subsection that the three pressures defined in \cite{Bennethum2004}
can be extended to broader definitions in multiphase media. These definitions will
help to simplify and attach physical meaning to the terms appearing in each of the
linearized results discussed in the previous subsection.  We will also define several new
{\it pressures} acting as coupling terms between the phases in multiphase media. It will
be shown that we can return to the three pressure relationship of Bennethum and Weinstein
if we simplify these results to a single fluid phase.

Recall from the entropy inequality that the equilibrium pressure in multiphase media can
be written as an accumulation of cross effects as follows: 
\begin{flalign}
    p^{\beta} = \suma \sumj \left( \frac{\eps{\al} \rho^{\al}
    \rho^{\beta_j}}{\eps{\beta}} \pd{\psia}{\rho^{\beta_j}} \right).
\label{eqn:multiphase_pressure}
\end{flalign}
The partial derivative is taken while holding $\eps{\al}, \rho^{\al_k}, \eps{\beta},$
and $\rho^{\beta_m}$ fixed where $k=1:N$ and $m=1:N, \, m\neq j$. Define a cross-coupling
pressure as
\begin{flalign}
    \pp{\al}{\beta} := \sumj \left( \frac{\eps{\al} \rho^{\al}
    \rho^{\beta_j}}{\eps{\beta}} \left. \pd{\psia}{\rho^{\beta_j}}
    \right|_{\eps{\al}, \rho^{\al_k}, \eps{\beta}, \rho^{\beta_m}} \right)
    \label{eqn:pressure_al_beta}
\end{flalign}
so that the $\beta$-phase pressure can be simply written as the sum of these
cross-coupling pressures
\begin{flalign}
    p^{\beta} = \suma \pp{\al}{\beta} \quad \text{for} \quad \beta \in \{
    l, g \} \quad \text{and} \quad \al \in \{ l,g,s\}. 
    \label{eqn:pressure_sum_al_beta}
\end{flalign}

Now we derive an identity that is analogous to the three pressure relationship derived by
Bennethum and Weinstein \cite{Bennethum2004}.  To that end, consider the Helmoltz
potential as a function of two sets of independent variables where there is a one-to-one
relationship between the two sets.
%
\begin{flalign}
    \psi^{\al} = \psi^{\al}\left( \eps{\al}, \eps{\al} \rho^{\aj},
    \eps{\beta}, \eps{\beta} \rho^{\beta_j} \right) \quad \text{and} \quad
    \hat{\psi}^{\al} = \hat{\psi}^{\al}\left( \eps{\al}, \rho^{\aj},
    \eps{\beta}, \rho^{\beta_j} \right),
\end{flalign}
where $\al \in \{ l,g\}$ and $\beta \neq \al,s$. The Helmholtz potential is actually a
function of several other variables, but these are suppressed here to make the notation
more readable.  Since $\psia$ and $\hat{\psi}^{\al}$ are
functions of an equivalent set of variables, the total differentials must be equal to each other.
Setting $d\hat{\psi}^{\al} = d\psia$ gives
\begin{flalign}
    \notag d\hpsia &= \left. \pd{\psia}{\eps{\al}} \right|_{\eps{\al}
    \rho^{\al_k}, \eps{\beta}, \eps{\beta} \rho^{\beta_k}} d\eps{\al} +
    \sumj \left.  \pd{\psia}{(\eps{\al} \rho^{\aj})} \right|_{\eps{\al},
    \eps{\al} \rho^{\al_m}, \eps{\beta}, \eps{\beta} \rho^{\beta_k}}
    d(\eps{\al} \rho^{\aj}) \\
    & \quad + \left. \pd{\psia}{\eps{\beta}} \right|_{\eps{\al}, \eps{\al}
    \rho^{\al_k}, \eps{\beta} \rho^{\beta_k}} d\eps{\beta} + \sumj \left.
    \pd{\psia}{(\eps{\beta} \rho^{\beta_j})} \right|_{\eps{\al}, \eps{\al}
    \rho^{\al_k}, \eps{\beta}, \eps{\beta} \rho^{\beta_m}} d( \eps{\beta}
    \rho^{\beta_j})
\end{flalign}
where in each case we are taking $m=1:N$ such that $m \neq j$, and $k=1:N$.
Now take the partial derivative with respect to $\eps{\beta}$ while holding
$\eps{\al}, \rho^{\al_k},$ and $\rho^{\beta_k}$ fixed.  In this case, the
$d\eps{\al}$ and $d(\eps{\al} \rho^{\aj})$ terms will be zero.  This leaves
us with: 
\begin{flalign}
    \notag \left. \pd{\hpsia}{\eps{\beta}} \right|_{\eps{\al}, \rho^{\al_k},
    \rho^{\beta_k}} &= \left. \pd{\psia}{\eps{\beta}} \right|_{\eps{\al}
    \rho^{\al_k}, \eps{\beta}, \eps{\beta} \rho^{\beta_k}} \left.
    \pd{\eps{\beta}}{\eps{\beta}} \right|_{\eps{\al},\rho^{\al_k},
    \rho^{\beta_k}} \\
    \notag &\quad + \sumj \left.  \pd{\psia}{(\eps{\beta} \rho^{\beta_j})}
    \right|_{\eps{\al}, \eps{\al} \rho^{\al_k}, \eps{\beta}, \eps{\beta}
    \rho^{\beta_m}} \left. \pd{(\eps{\beta} \rho^{\beta_j}) }{\eps{\beta}}
    \right|_{\eps{\al}, \rho^{\al_k}, \rho^{\beta_k}} \\
    &= \left. \pd{\psia}{\eps{\beta}} \right|_{\eps{\al} \rho^{\al_k},
    \eps{\beta}, \eps{\beta} \rho^{\beta_k}} + \sumj \left.
    \frac{\rho^{\beta_j}}{\eps{\beta}}  \pd{\psia}{\rho^{\beta_j}}
    \right|_{\eps{\al}, \eps{\al} \rho^{\al_k}, \eps{\beta}, \eps{\beta}
    \rho^{\beta_m}}.
\end{flalign}
Now multiply by $-\eps{\al} \rho^{\al}$ to get
\begin{flalign}
    -\left. \eps{\al} \rho^{\al} \pd{\hpsia}{\eps{\beta}}
    \right|_{\eps{\al}, \rho^{\al_k}, \rho^{\beta_k}} &= -\left. \eps{\al}
    \rho^{\al} \pd{\psia}{\eps{\beta}} \right|_{\eps{\al} \rho^{\al_k},
    \eps{\beta}, \eps{\beta} \rho^{\beta_k}} - \sumj \left. \frac{\eps{\al}
    \rho^{\al} \rho^{\beta_j}}{\eps{\beta}} \pd{\psia}{\rho^{\beta_j}}
    \right|_{\eps{\al}, \eps{\al} \rho^{\al_k}, \eps{\beta}, \eps{\beta}
    \rho^{\beta_m}}.
\end{flalign}
Notice that the third term is $\pp{\al}{\beta}$ from equation
(\ref{eqn:pressure_al_beta}).  Define the following new terms:
\begin{flalign}
    \ppbar{\al}{\beta} &:= -\eps{\al} \rho^{\al} \left.
    \pd{\psia}{\eps{\beta}} \right|_{\eps{\al}, \eps{\al} \rho^{\al_k},
    \eps{\beta} \rho^{\beta_k}}
    \label{eqn:pbar_al_beta} \\
    \ppi{\al}{\beta} &:= \eps{\al} \rho^{\al} \left.
    \pd{\psia}{\eps{\beta}} \right|_{\eps{\al}, \rho^{\al_k},
    \rho^{\beta_k}}
    \label{eqn:pi_al_beta}
\end{flalign}
to get the relationship
\begin{flalign}
    \pp{\al}{\beta} = \ppbar{\al}{\beta} + \ppi{\al}{\beta}.
    \label{eqn:threepressuresab}
\end{flalign}
Note that the new definitions only hold if $\eps{\beta} \neq 0$.  This can
be seen if one returns back to the Lagrange multiplier equation (at the
beginning of this section) for the pressure.  Furthermore, this
relationship holds if we had taken the derivative with respect to
$\eps{\al}$ instead of $\eps{\beta}$.  

For completeness sake we define $\overline{p}^{\beta}$ and $\pi^{\beta}$ so
that our definitions are consistent with \cite{Bennethum2004}:
\begin{flalign}
    \overline{p}^{\beta} &:= \suma \ppbar{\al}{\beta}
    \label{eqn:pbar_total} \\
    \pi^{\beta} &:= \suma \ppi{\al}{\beta}.
    \label{eqn:pi_total}
\end{flalign}
With these definitions we recover the three pressure relationship derived
by Bennethum and Weinstein
\begin{flalign}
    p^{\beta} = \overline{p}^{\beta} + \pi^{\beta}. 
    \label{eqn:three_pressures}
\end{flalign}

The physical meaning of $\overline{p}^\beta$ is the change in energy with respect to
changes in volume while holding mass fixed.  In terms of extensive variables this is the
same definition as pressure encountered in classical thermodynamics for a single phase.  For this reason we call
$\overline{p}^\beta$ the {\it thermodynamic} pressure.  The physical meaning of
$\pi^{\beta}$ is the change in energy with respect to changes in saturation while holding
the densities fixed.  This {\it pressure} (or swelling potential as it is called in
\cite{Bennethum2004}) relates the deviation between the {\it classical pressure},
$p^{\beta}$, and the thermodynamic pressure.  It can be seen as a preferential wetting
function that measures the affinity for one phase over another. With these physical
considerations in mind we now return to the coefficient of the $\epsdot{\beta}$ terms in
the entropy inequality.  With the present definitions, the coefficient is
\[ p^\beta - \pi^\beta. \]
Using \eqref{eqn:three_pressures} this is clearly $\overline{p}^\beta$. Since
the time rate of change of volume fraction is taken as a constitutive variable, the
linearization result for this term can now be stated as
\begin{flalign}
    \overline{p}^\beta \Big|_{n.eq.} =  \overline{p}^\beta \Big|_{eq.} + \tau
    \epsdot{\beta}.
    \label{eqn:epsdot_near_eq}
\end{flalign}
The coefficient $\tau$ arose from linearization and is formally defined as
\[ \tau = \pd{\overline{p}^\beta}{\epsdot{\beta}} \Big|_{eq.} \]
Equation \eqref{eqn:epsdot_near_eq} does little to make clear the exact meaning of this
equation.  The exact meaning will become clear in Chapter \ref{ch:Transport} under the
assumption that the fluid-phase volume fractions are not independent.

The definitions of the three {\it pressures} allow us to attach more physical meaning (and
more convenient notation) to the results found when building constitutive equations in the
next sections.  Before building these equations we define the upscaled chemical potential
for a multiphase system, and after this point we will have all of the tools necessary to
derive the new constitutive equations.


\section{Chemical Potential in Multiphase Porous Media}
Chemical potential is defined thermodynamically as the change in energy with respect to
changes in the number of molecules in the system \cite{Atkins2010,Callen1985}. This
classical definition has the following characteristics \cite{Murad2000}: (1) it is a
scalar and measures the energy required to insert a particle into the system, (2) its
gradient is the driving force for diffusive flow (Fick's law), and (3) it is constant for
a single constituent in two phases at equilibrium.  In \cite{Schreyer-Bennethum2012}, Bennethum
proposed a definition for chemical potential in saturated porous media that satisfies all
three of these criteria:
\begin{flalign}
    \mu^{\aj} &= \psia + \rhoa \left. \pd{\psia}{\rhoaj} \right|_{\epsa,
        \rho^{\al_m}} = \left. \pd{\left( \rhoa \psia \right)}{\rhoaj} \right|_{\epsa,
            \rho^{\al_m}} = \left. \pd{\left( \epsa \rhoa \psia \right)}{\left( \epsa
            \rhoaj\right)} \right|_{\epsa, \rho^{\al_m}},
    \label{eqn:chem_pot_bennethum}
\end{flalign}
for $m=1:N$ and $m\neq j.$
In saturated media, if the changes in energy in the solid phase due to changes in liquid
density are assumed to be zero, then the numerator of the right-hand side of 
(\ref{eqn:chem_pot_bennethum}) can be seen as the total energy in a saturated system.
Under this assumption, the chemical potential can be rewritten as
\begin{flalign}
    \mu^{\aj} &= \left. \pd{\psi_T}{\left( \eps{\al} \rhoaj \right)} \right|_{\epsa,
        \rho^{\al_m}}
\end{flalign}
This indicates that in a saturated porous medium, the chemical potential of the $j^{th}$
constituent in the $\al-$phase is the change in total energy with respect to changes in
mass of constituent $j$. We now extend this definition to multiphase unsaturated systems.

Extending this idea to multiphase and multiconstituent media, we define chemical potential
to be the change in total energy with respect to changes in mass in the
constituent.  In multiphase media we cannot make the assumption that the energy in one
phase is not effected by changes in other phases.  With this in mind, we recall that the
total energy  can be given by $\psi_T = \suma \epsa \rhoa \psia$.  Therefore, the present
definition of chemical potential is
\begin{flalign}
    \mu^{\beta_j} &= \left. \pd{\psi_T}{\left( \eps{\beta} \rho^{\beta_j}
    \right)} \right|_{\epsa, \eps{\beta}, \rho^{\al_k}, \rho^{\beta_m}} =
    \left. \suma \pd{\left( \epsa \rhoa \psia \right)}{\left( \eps{\beta}
    \rho^{\beta_j} \right)} \right|_{\epsa, \eps{\beta}, \rho^{\al_k},
    \rho^{\beta_m}},
    \label{eqn:chem_pot_defn}
\end{flalign}
where again, $k=1:N$ and $m=1:N, \, m\neq j$. Notice that if $\partial
\psia / \partial \rho^{\beta_j} = 0$ for $\al \neq \beta$ then this
definition collapses to equation (\ref{eqn:chem_pot_bennethum}).
Furthermore, recalling the definition of the Lagrange multiplier, $\lamaj$,
from (\ref{eqn:lagrange_betaj}), equation (\ref{eqn:chem_pot_defn}) can be
rewritten as
\begin{flalign}
    \mu^{\beta_j} = \psi^{\beta} + \suma \left( \frac{\epsa
    \rhoa}{\eps{\beta}} \left. \pd{\psia}{\rho^{\beta_j}} \right|_{\epsa,
    \eps{\beta}, \rho^{\al_k}, \rho^{\beta_m}} \right) = \psi^{\beta} +
    \lambda^{\beta_j}.
    \label{eqn:chem_pot_defn2}
\end{flalign}
\Cref{eqn:chem_pot_defn2} only holds for $\beta = l$ and $\beta =g$.  A definition
of the solid phase chemical potential is beyond the scope of this work.

As a result of this definition of chemical potential we observe an immediate effect on the
rate of mass transfer terms in the entropy inequality.  Using equation
(\ref{eqn:chem_pot_defn2}), the last three terms in the entropy inequality,
(\ref{eqn:entropysimplified}), can be rewritten as
\begin{flalign}
    \notag &- \sumj \ehat{g}{l_j}\left\{  \left( \lambda^{l_j} + \psi^{l} \right) - \left(
    \lambda^{g_j} + \psi^{g} \right) \right. \\
    \notag & \qquad \quad \left. + \frac{1}{2} \left( \left( \bv{l,s} \right)^2 - \left(
    \bv{g,s} \right)^2 \right) + \frac{1}{2} \left( \left( \bv{l_j,l} \right)^2 - \left(
    \bv{g_j,g} \right)^2 \right)\right\} \\
    \notag &- \sumj \ehat{l}{s_j}\left\{  \left( \lambda^{s_j} + \psi^{s} \right) - \left(
    \lambda^{l_j} + \psi^{l} \right) - \frac{1}{2} \left( \bv{l,s} \right)^2  +
    \frac{1}{2} \left( \left( \bv{s_j,s} \right)^2 - \left( \bv{l_j,l} \right)^2
\right)\right\} \\
    \notag &- \sumj \ehat{s}{g_j}\left\{  \left( \lambda^{g_j} + \psi^{g} \right) - \left(
    \lambda^{s_j} + \psi^{s} \right) + \frac{1}{2} \left( \bv{g,s} \right)^2 + \frac{1}{2}
    \left( \left( \bv{g_j,g} \right)^2 - \left( \bv{s_j,s} \right)^2 \right)\right\} \\
    \notag = &- \sumj \ehat{g}{l_j}\left\{  \mu^{l_j} - \mu^{g_j} + \frac{1}{2} \left(
    \left( \bv{l,s} \right)^2 - \left( \bv{g,s} \right)^2 \right) + \frac{1}{2} \left(
    \left( \bv{l_j,l} \right)^2 - \left( \bv{g_j,g} \right)^2 \right)\right\} \\
    \notag &- \sumj \ehat{l}{s_j}\left\{  \left( \lambda^{s_j} + \psi^{s} \right) -
    \mu^{l_j} - \frac{1}{2} \left( \bv{l,s} \right)^2  + \frac{1}{2} \left( \left(
    \bv{s_j,s} \right)^2 - \left( \bv{l_j,l} \right)^2 \right)\right\} \\
    &- \sumj \ehat{s}{g_j}\left\{  \mu^{g_j} - \left( \lambda^{s_j} + \psi^{s} \right) +
    \frac{1}{2} \left( \bv{g,s} \right)^2 + \frac{1}{2} \left( \left( \bv{g_j,g} \right)^2
    - \left( \bv{s_j,s} \right)^2 \right)\right\}.
\end{flalign}
The square of the relative velocities are likely zero
since these models are designed with creeping flow in mind. With these simplifications,
the mass transfer terms from the entropy inequality are rewritten as
\begin{flalign}
    - \sumj \ehat{g}{l_j}\left\{  \mu^{l_j} - \mu^{g_j}\right\} - \sumj \ehat{l}{s_j}\left\{  \left( \lambda^{s_j} + \psi^{s} \right) -
    \mu^{l_j}\right\} - \sumj \ehat{s}{g_j}\left\{  \mu^{g_j} - \left( \lambda^{s_j} + \psi^{s} \right)\right\}.
\end{flalign}

At equilibrium we assume that the mass transfer between phases is zero. Take note that
this is an assumption about how the constitutive variable behaves at equilibrium and not
an assumption about the equilibrium state itself. This fine point is made since in several
works this assumption is made as part of the definition of equilibrium (for example,
\cite{Bennethum2004}).  In the author's opinion this is a subtle mistake.  The assumption
that $\ehat{g}{l_j} = 0$ at equilibrium implies a
final equilibrium relationship; the mass transfer between the fluid phases is proportional
to the difference in chemical potentials
\begin{flalign}
    \notag \ehat{g}{l_j}|_{n.eq} &= \ehat{g}{l_j}|_{eq} + \left[ \left( \rho^{l_j} - \rho^{g_j}
        \right) M \right] \left( \mu^{l_j} - \mu^{g_j} \right) \\
    &= \left[ \left( \rho^{l_j} - \rho^{g_j} \right) M \right] \left(
    \mu^{l_j} - \mu^{g_j} \right).
    \label{eqn:mass_transfer_linearized}
\end{flalign}
This helps to verify our choice of upscaled chemical potential by satisfying the third
criteria set forth at the beginning of this subsection.  Furthermore, this suggests a
natural coupling between the liquid and gas phase mass balance equations. The mass
transfer coefficient is chosen to have a factor of the difference in densities so as to
better match experimental measures \cite{Smits2011}.  Given that the units of the rate of
mass transfer are [$M L^{-3} t^{-1}$] we see that the units of the linearization constant
are 
\[ \left[ M \right] = \frac{t}{L^2} = \frac{1}{\left( L^2/t \right)}. \]

A further verification that we have properly defined the multiphase chemical potential
correctly can be seen through the Gibbs-Duhem relationship from
thermodynamics \cite{Callen1985}. Simply stated, the Gibbs-Duhem relationship states that
the Gibbs potential of the $\al$ phase is the weighted sum of the chemical potentials:
\begin{flalign}
    \Gamma^{\al} = \sumj C^{\aj} \mu^{\aj}.
    \label{eqn:Gibbs_Duhem}
\end{flalign}
Equation \eqref{eqn:Gibbs_Duhem} specifies the relationship between the Gibbs potential,
$\Gamma^\al$, and the chemical potential.  Substituting (\ref{eqn:chem_pot_defn2}) into the
right-hand side of (\ref{eqn:Gibbs_Duhem}), carrying out the summation, and applying the
definition of pressure, (\ref{eqn:liquid_vapor_lagrange2}), gives the equation
\begin{flalign}
    \Gamma^{\al} = \psia + \frac{p^\al}{\rhoa},
\end{flalign}
which is the standard thermodynamic relationship between the Helmholtz potential and
the Gibbs potential.  This clearly demonstrates that the definition of multiphase chemical
potential used here is consistent with the classical thermodynamic definition.

At this point we turn our attention toward using the relationships derived from the
entropy inequality to develop novel expressions for Darcy's, Fick's, and Fourier's laws of
flow, diffusion, and heat conduction. For a concise summary of all of the results derived
in this chapter, see Appendix \ref{app:EntropyResults}.


\section{Derivations Constitutive Equations}
In this section we derive general forms of Darcy's, Fick's, and Fourier's laws based on the HMT
results in the previous sections. These equations will be coupled with mass and energy
balance equations to form a macroscale model for heat and moisture transport for
unsaturated media. The results derived in this section extend the classical forms of each
of these laws. These extensions suggest terms that, in the author's knowledge, are
previously unreported. Also, we propose new forms of these laws
in terms of the macroscale chemical potential. This suggests that the chemical potential
is a generalized driving force for flow, diffusion, and heat transport.


\subsection{Darcy's Law}\label{sec:Darcy_pressure}
In 1856, Henri Darcy proposed his empirical law governing flow through saturated porous
media \cite{Darcy1856}.  This was derived through experimentation on sand filters used to
purify the water in the fountains of Dijon, France. In its simplest form, Darcy's law
states that the averaged fluid flux is proportional to the gradient of hydraulic head (or fluid
pressure)
\begin{flalign}
    \eps{l} \bv{l,s} = - k \grad h.
    \label{eqn:classical_Darcy}
\end{flalign}
Under the construct of Hybrid Mixture Theory, Darcy's law is obtained by coupling the
momentum balance equation for a fluid phase, (\ref{eqn:upscaled_mom_balance_phase}), with
the linearized constitutive equation for the momentum transfer from other phases.  This
has been illustrated by several authors (some examples include
\cite{Bennethum1996a,Bennethum1996b,Gray1979,weinstein}), and depending on the set of
independent variables postulated for the Helmholtz potential, the momentum transfer term
can suggest different forms of Darcy's law. 

In the present case, we recall from equation \eqref{eqn:mom_near_eq} that the linearized
momentum transfer terms can be written as
\begin{flalign}
    \notag \left( \That{s}{\beta} + \That{\gamma}{\beta} \right) &= -\eps{\beta}
    \soten{R}^{\beta}\cd \left( \eps{\beta} \bv{\beta,s} \right) - \left(
    \ppi{\beta}{\beta} - p^{\beta} \right) \grad \eps{\beta} - \ppi{\beta}{\gamma} \grad
    \eps{\gamma} \\
    \notag & \quad - \eps{\beta} \rho^{\beta} \sumjj \pd{\psi^{\beta}}{C^{s_j}} \grad
    C^{s_j}+ \sumj \left[ \left( \eps{\gamma} \rho^{\gamma}
        \pd{\psi^{\gamma}}{\rho^{\beta_j}} + \eps{s} \rho^s \pd{\psi^s}{\rho^{\beta_j}}
        \right) \grad \rho^{\beta_j} \right] \\  
    & \quad - \eps{\beta} \rho^{\beta} \sumj \pd{\psi^{\beta}}{\rho^{\gamma_j}} \grad
    \rho^{\gamma_j} - \eps{\beta} \rho^{\beta} \pd{\psi^{\beta}}{J^s} \grad J^s -
    \eps{\beta} \rho^{\beta} \pd{\psi^{\beta}}{\Cbars} : \grad \left( \Cbars \right),
    \label{eqn:darcy_linearized_mom_transfer}
\end{flalign}
where we recall that $\soten{R}^{\al}$ is related to the resistivity of the medium and
arose from the linearization process.

Linearization of the stress-pressure relationship for the fluid phases gives an expression
for the stress near equilibrium: 
\begin{flalign}
    \stress{\beta} = - p^{\beta} \soten{I} + \Foten{\nu}^{\beta} : \bd{\beta}
    \label{eqn:linearized_stress}.
\end{flalign}
The fourth-order tensor multiplying the rate of deformation tensor can be simplified, in
most cases, to correspond to the viscosity of the medium (see any text on continuum
mechanics).  Ignoring the acceleration terms in the momentum balance equation
\eqref{eqn:upscaled_mom_balance_phase}, and substituting equation
(\ref{eqn:darcy_linearized_mom_transfer}) for momentum transfer and
(\ref{eqn:linearized_stress}) for the stress tensor gives the following generalization of
Darcy's law:
\begin{flalign}
    \notag \eps{\beta} \soten{R}^{\beta} \cd \left( \eps{\beta}
    \bv{\beta,s} \right) &= -\eps{\beta} \grad p^{\beta} -
    \ppi{\beta}{\beta} \grad \eps{\beta} - \ppi{\beta}{\gamma}\grad
    \eps{\gamma} + \eps{\beta} \rho^{\beta} \foten{g} \\
    \notag & \quad + \sumj \left[ \left( \eps{\gamma} \rho^{\gamma}
    \pd{\psi^{\gamma}}{\rho^{\beta_j}} + \eps{s} \rho^s
    \pd{\psi^s}{\rho^{\beta_j}} \right) \grad \rho^{\beta_j} \right] -
    \eps{\beta} \rho^{\beta} \sumj \pd{\psi^{\beta}}{\rho^{\gamma_j}} \grad
    \rho^{\gamma_j} \\ 
    & \quad - \eps{\beta} \rho^{\beta} \pd{\psi^{\beta}}{J^s} \grad J^s -
    \eps{\beta} \rho^{\beta} \pd{\psi^{\beta}}{\Cbars} : \grad \left(
    \Cbars \right) + \diver \left( \Foten{\nu}^{\beta} : \bd{\beta}
    \right).
    \label{eqn:Darcy}
\end{flalign}
To arrive at this form of Darcy's law we have also assume that $\grad C^{s_j} \approx
\foten{0}$ since it assumed that concentration gradients in the solid phase do not affect
flow. The first term indicates that flow is primarily due to pressure
gradients, as expected.  The eighth and ninth terms were previously reported by Weinstein
in \cite{weinstein}. Note that the {\it extra} factor of $\eps{\beta}$ on the left-hand
side of the equation can be moved to the right.  If all but the first term on the
right-hand side are then ignored we arrive at the classical Darcy's Law
\begin{flalign}
    \soten{R}^{\beta} \cd \left( \eps{\beta} \bv{\beta,s} \right) &= -
    \grad p^{\beta},
    \label{eqn:classical_Darcy_v2}
\end{flalign}
where $\bq^{\beta} = \eps{\beta} \bv{\beta,s}$ is known as the {\it Darcy
Flux}.

The linearization constant, $\soten{R}^\beta$, is related to the resistivity
of the porous medium, the inverse of which is assumed to exist, and we
define $\soten{K}^\beta = \left( \soten{R}^\beta \right)^{-1}$.  The tensor
$\soten{K}^\beta$ is related to the hydraulic conductivity. To determine the exact meaning
of the linearization constant we consider the units of the simplest terms:
\[ \bq^\beta \approx - \soten{K}^\beta \cd \grad p^\beta. \]
The units of the Darcy flux are length per time [$L/t$], and the units of the pressure are
mass per length per time squared [$M/(L \cd
t^2)$].  This indicates that the linearization constant has units [$(L^3 \cd t)/M$], which
can be rewritten as [$(L^2) / (M/(L \cd t))$].  The numerator of this fraction has
units of permeability, $\kappa$, and the denominator has units of dynamic viscosity,
$\mu_\beta$. This suggests that
\begin{flalign}
    \soten{K}^\beta &= \frac{\soten{\kappa}}{\mu_\beta},
    \label{eqn:conductivity_permeability}
\end{flalign}
and this relationship is confirmed in equations (11.4) and (11.5) of Pinder et al.\
\cite{Pinder2006}.  The hydraulic conductivity of a porous medium is defined as
\begin{flalign}
    \soten{k}_c &= \frac{\rho^\beta g \soten{\kappa}}{\mu_\beta} = \frac{g
        \soten{\kappa}}{\nu_\beta},
        \label{eqn:hydraulic_cond_nu}
\end{flalign}
where $\nu_\beta$ is the kinematic viscosity of the fluid. This indicates that
$\soten{K}^\beta$ can also be defined as 
\begin{flalign}
    \soten{K}^\beta = \frac{\soten{k}_c}{\rho^\beta g}.
\end{flalign}
It is clear from these relationships that $\soten{K}^\beta$ is a function of both the type
of fluid and the geometry of the porous medium. This coefficient ``describes, in some
sense, the ability of the porous medium to transmit fluid'' \cite{Pinder2006}.  In
saturated porous media, the permeability is typically assumed only to be a function of
geometry.  Under Hybrid Mixture Theory we must note that the permeability is a function of
any variable which is not necessary zero at equilibrium.  Typically it is assumed that the
permeability of an unsaturated medium is a function of the volume fractions
\cite{Bear1988,Pinder2006}. The tensorial notation may be dropped in isotropic media, but
for anisotropic media it is assumed that the permeability may depend on the direction of
flow.  We will expand upon this idea in later chapters when building a macroscale mass
balance model.

\subsection{Darcy's Law In Terms of Chemical Potential}\label{sec:Darcy_chem_pot}
Equation \eqref{eqn:Darcy} couples all of the physical processes that we wished to model
at the outset; multiphase flow with constituents in each phase and a deformable solid. In order to build reasonable models
based on this constitutive equation, functional forms for the
wetting potentials, $\ppi{\beta}{\beta}$ and $\ppi{\beta}{\gamma}$, and the changes in
energy with respect to density are needed.  The solid-phase terms are likely negligible for
non-deformable media, but dealing with the remaining terms represent a significant
modeling task.  The goal of this subsection is to greatly simplify this model while
maintaining the physical interpretation.  This is done by switching
thermodynamic potentials.

Recall from thermodynamics that the Gibbs potential, $\Gamma^\al$, can be written in terms of the
Helmholtz potential, $\psia$, as
\begin{flalign}
    \Gamma^{\al} = \psia + \frac{p^{\al}}{\rhoa}.
    \label{eqn:Gibbs_Helmholtz}
\end{flalign}
Taking the gradient of the Gibbs potential, expanding the resulting gradient of Helmholtz
potential in terms of the constitutive independent variables, and multiplying by
$-\eps{\beta} \rho^{\beta}$ yields the equation:
\begin{flalign}
    \notag & - \ppi{\beta}{\beta} \grad \eps{\beta} - \ppi{\beta}{\gamma}
    \grad \eps{\gamma} - \eps{\beta} \grad p^{\beta} \\
    \notag & \quad = \sumj \eps{\beta}\rho^{\beta}
    \pd{\psi^{\beta}}{\rho^{\beta_j}} \grad \rho^{\beta_j} + \sumj
    \eps{\beta} \rho^{\beta} \pd{\psi^{\beta}}{\rho^{\gamma_j}} \grad
    \rho^{\gamma_j} \\
    & \qquad - \eps{\beta} \rho^{\beta} \grad \Gamma^{\beta} -
    \frac{\eps{\beta} p^{\beta}}{\rho^{\beta}} \grad \rho^{\beta} -
    \eps{\beta} \rho^{\beta} \eta^{\beta} \grad T.
    \label{eqn:grad_gibbs}
\end{flalign}
Matching the common terms between (\ref{eqn:Darcy}) and (\ref{eqn:grad_gibbs}),
and recognizing the resulting chemical potential terms yields
\begin{flalign}
    \notag & \eps{\beta} \soten{R} \cd \left( \eps{\beta} \bv{\beta,s} \right) \\
    \notag & \quad = -\eps{\beta} \rho^{\beta} \grad \Gamma^{\beta} -
    \frac{\eps{\beta} p^{\beta}}{\rho^{\beta}} \grad \rho^{\beta} -
    \eps{\beta} \rho^{\beta} \eta^{\beta} \grad T + \eps{\beta}
    \rho^{\beta} \foten{g} \\
    & \qquad + \sumj \left[ \eps{\beta} \left( \mu^{\beta_j}
    - \psi^{\beta} \right) \grad \rho^{\beta_j} \right] + \diver \left(
    \Foten{\nu}^{\beta} : \bd{\beta} \right).
    \label{eqn:Darcy_Gibbs2}
\end{flalign}
Observe that the summation in \cref{eqn:Darcy_Gibbs2} can be simplified to
\begin{flalign*}
    \sumj \left[ \eps{\beta} \left( \mu^{\beta_j} - \psi^{\beta} \right)
    \grad \rho^{\beta_j} \right] &= \eps{\beta} \left( \Gamma^{\beta} -
    \psi^{\beta} \right) \grad \rho^{\beta} + \eps{\beta} \rho^{\beta}
    \sumj \left( \mu^{\beta_j} \grad C^{\beta_j} \right)
\end{flalign*}
by expanding the gradient of $\rho^{\beta_j}$ and using the Gibbs-Duhem
equation (\ref{eqn:Gibbs_Duhem}). Substituting this back into
(\ref{eqn:Darcy_Gibbs2}) and simplifying yields the chemical potential form
of Darcy's law:
\begin{flalign}
    \notag & \eps{\beta} \soten{R} \cd \left( \eps{\beta} \bv{\beta,s} \right) \\
    & \quad = -\eps{\beta} \rho^{\beta} \sumj \left( C^{\beta_j} \grad
    \mu^{\beta_j} \right) - \eps{\beta} \rho^{\beta} \eta^{\beta} \grad T +
    \eps{\beta} \rho^{\beta} \foten{g} + \diver \left( \Foten{\nu}^{\beta}
    : \bd{\beta} \right) 
\end{flalign}
Cancelling the factor of $\eps{\beta}$ from the left-hand side, rewriting the coefficient
of the $\grad \mu^{\beta_j}$ term, and multiplying by the inverse of $\soten{R}^\beta$
gives
\begin{flalign}
    \eps{\beta} \bv{\beta,s} = - \soten{K}^\beta \cd \left[ \sumj \left(
        \rho^{\beta_j} \grad \mu^{\beta_j} \right) + \rho^{\beta} \eta^{\beta} \grad T -
        \rho^{\beta} \foten{g} - \frac{1}{\eps{\beta}} \diver \left( \Foten{\nu}^{\beta} :
        \bd{\beta} \right) \right].
    \label{eqn:Darcy_ChemPot_with_viscosity}
\end{flalign}

Equation \eqref{eqn:Darcy_ChemPot_with_viscosity} states an amazing fact: the flow of phase $\beta$ is
due only to gradients in chemical potential, temperature, gravity, and viscous forces.
The viscous forces are often neglected in creeping flow.  This gives
\begin{flalign}
    \eps{\beta} \bv{\beta,s} = - \soten{K}^\beta \cd \left[ \sumj \left(
        \rho^{\beta_j} \grad \mu^{\beta_j} \right) + \rho^{\beta} \eta^{\beta} \grad T -
        \rho^{\beta} \foten{g} \right].
    \label{eqn:Darcy_ChemPot}
\end{flalign}
It should be emphasized that no additional assumptions were made to arrive at this
equation.  That is, we still assume multiphase flow with a possibly swelling solid phase.
All of the actions, interrelations, and cross coupling effects are tied up within the
chemical potential term.  This further indicates that the chemical potential is a {\it
generalized force} that, in effect, incorporates several driving forces.  

A final simplification is to consider a pure fluid phase where only one constituent is
present.  In this case, Darcy's law is rewritten as
\begin{flalign}
    \eps{\beta} \bv{\beta,s} &= - \soten{K}^\beta \cd \left[ \rho^\beta \grad \Gamma^\beta
        + \rho^\beta \eta^\beta \grad T - \rho^\beta \foten{g}\right]
    \label{eqn:darcy_gibbs_form}
\end{flalign}
where $\Gamma^\beta$ is the macroscale Gibbs potential. The entropy coefficient of the gradient
of temperature poses a significant modeling issue as the entropy is not readily
measurable.  The fact stated by equation \eqref{eqn:darcy_gibbs_form} is that the Darcy
flux of a pure species is truly controlled by gradients in temperature and Gibbs
potential.  This is a generalization of the classical pressure formulation that captures a
wider range of physical effects.

\subsection{Fick's Law}\label{sec:Ficks}
We now turn our attention to diffusion and Fick's law.  In 1855, Adolf Fick published the
first mathematical treatment of diffusion \cite{Fick1855}. The empirically based equation
simply states that the diffusive flux of a species through a mixture is proportional to
the gradient in concentration of the species.  This has since been generalized through
thermodynamics and physical chemistry \cite{Callen1985,McQuarrie1997} to state that the
diffusive flux is proportional to the gradient in chemical potential of the species. In
this subsection we apply the Hybrid Mixture Theory construct to derive a version of Fick's
law for multiphase porous media. It should be noted here that the classical chemical potential from
the thermodynamic definitions of Fick's law for diffusion in a liquid not in a porous
medium is the not the {\it same} chemical potential
as that defined for the porous media.  In mixture theory we view the porous medium as a
mixture of phases (and species), but the classical thermodynamic definition considers one phase with
a mixture of species.  With this difference in mind, it is not immediately clear that the
multiphase version of Fick's law will be the same.

To derive the present version of Fick's law we first consider a linearization of the
coefficient of the $\grad \bv{\aj,\al}$ term in the entropy inequality.  The gradient of
the diffusive velocity, $\grad \bv{\aj,\al}$, is taken to be zero at equilibrium so this
coefficient is zero (since entropy generation is minimized at equilibrium). Therefore,
\[ \epsa \stress{\aj} + \epsa \rhoaj \left( \lamaj - \psiaj \right) \soten{I} + \epsa
    \rhoaj \lamhataN  = \soten{0} \quad \text{for all } j. \]
Using equation \eqref{eqn:lamhataN_equilib} for the definition of the Lagrange multiplier,
$\lamhataN$ at equilibrium and
linearizing the coefficient of $\grad \bv{\aj,\al}$ about $\grad \bv{\aj,\al}$ gives
\begin{flalign}
    \epsa \stress{\aj} &= \epsa \Foten{S}^{\aj} : \grad \bv{\aj,\al} +
    \left( -\epsa \rhoaj \lamaj + \epsa \rhoaj \psiaj - \epsa \rhoaj \psia
    \right) \soten{I},
    \label{eqn:species_stress_near_eq}
\end{flalign}
where $\Foten{S}^{\aj}$ is a fourth-order tensor that arises from linearization.  Now consider the species conservation
of momentum equation (\ref{eqn:upscaled_mom_balance_species}).  We ignore the inertial
terms since diffusion is assumed to be slow (this is discussed in some detail in Chapter
\ref{ch:diffusion_comp}).  Now eliminate the momentum transfer terms using the linearized
momentum transfer derived from the entropy inequality, (\ref{eqn:mom_species_near_eq}),
use \eqref{eqn:species_stress_near_eq} for the stress tensor, and using the fact that
$\mu^{\aj} = \lamaj + \psia$ gives a generalized form of Fick's law:
\begin{flalign}
    \notag & \left( \epsa \right)^2 \rhoaj \soten{R}^{\aj} \cd \bv{\aj,\al}
    = \\
    & \quad -\epsa \rhoaj \grad \muaj + \diver
    \left( \epsa \Foten{S}^{\aj} : \grad \bv{\aj,\al} \right) + \epsa
    \rhoaj \foten{g}.
    \label{eqn:Fick_new}
\end{flalign}
The term containing the gradient of diffusive velocity is likely negligible as it is
second order. If not, we would have to relate the fourth-order tensor,
$\Foten{S}^{\aj}$, with some physical
process (similar to viscosity for fluid flow).  If we neglect this term then Fick's law
can be written as
\begin{flalign}
    \epsa \rhoaj \soten{R}^{\aj} \cd \bv{\aj,\al} = - \rhoaj \grad
    \muaj + \rhoaj \foten{g}.
    \label{eqn:Ficks_Law}
\end{flalign}
Despite the novel choice of variables for this work, this form of Fick's law is identical
to that found by Bennethum and Murad \cite{Murad2000} and Weinstein \cite{weinstein}.  

The linearization coefficient in Fick's law has a similar meaning to that of the
resistivity tensor in Darcy's law.  In this case, though, we wish to associate the inverse
of this tensor with the diffusivity tensor from classical Fick's law. Assuming that the
inverse exists we have 
\begin{flalign}
    \epsa \rhoaj \bv{\aj,\al} &= - \rhoaj \soten{D}^{\aj} \cd \left[ \grad \mu^{\aj} -
        \foten{g} \right].
    \label{eqn:Ficks_v1}
\end{flalign}
The units of the left-hand side are [$M L^{-2} t^{-1}$], and the left-hand side term is
commonly known as {\it flux}.  Therefore, the units of $\soten{D}^{\aj}$ is simply time
[$t$].  Typically the diffusivity constant in a gas is measured as [$L^2 t^{-1}$], so we
correlate $\soten{D}^{g_j}$ to the diffusion coefficient for that phase, $\soten{D}^g$,
via the relationship
\begin{flalign}
    \soten{D}^{g_j} = \left( \frac{1}{R^{g_j} T} \right) \soten{D}^g,
    \label{eqn:diffusion_coeff_conversion}
\end{flalign}
where $R^{g_j}$ is the specific gas constant for constituent $j$. The units of $R^{g_j} T$
are [$L^2 t^{-2}$] and the units of $\soten{D}^g$ are [$L^2 t^{-1}$], hence making the
units of $\soten{D}^{g_j}$ [$t$]. This is consistent with the forms of Fick's law from
thermodynamics and physical chemistry \cite{Callen1985,McQuarrie1997}.  Hence, the gas
phase form of Fick's law is
\begin{flalign}
    \eps{g} \rho^{g_j} \bv{g_j,g} = - \left( \frac{\rho^{g_j}}{R^{g_j} T} \right)
    \soten{D}^g \cd \left[ \grad \mu^{g_j} - g \right]. 
    \label{eqn:Ficks_gas}
\end{flalign}

To close this subsection we finally recall from our discussion of pore-scale diffusion
(see Chapter \ref{ch:diffusion_comp}) that the diffusive velocities are related via
\begin{flalign}
    \sumj \rhoaj \bv{\aj,\al} = \foten{0}.
\end{flalign}
Multiplying by the volume fraction and recognizing the left-hand side of Fick's law
indicates that 
\begin{flalign}
    \sumj \left\{ \left( \frac{\rhoaj}{R^{g_j} T} \right) \soten{D}^\al \cd \left[ \grad
        \mu^{\aj} - \foten{g} \right] \right\} = \foten{0}
    \label{eqn:diffusion_ficks_sum}
\end{flalign}
near equilibrium.
Equation \eqref{eqn:diffusion_ficks_sum} simply states that the gradients in chemical
potential are not independent of each other.  This fact will be used in future chapters as
part of a moisture transport model.

\subsection{Fourier's Law}\label{sec:heat_flux}
The final result in this chapter is an extension to Fourier's Law for heat
conduction.  Notice that in the chemical potential form of Darcy's Law,
(\ref{eqn:Darcy_ChemPot}), there is a term that involves the gradient of temperature. That
is, the Darcy flux is partially driven by a gradient in temperature. This means that Darcy
flow is naturally driven by gradients in temperature as well as gradients in chemical
potential.  To properly handle this coupling we can either assume that the gradient of
temperature is zero (constant temperature) or consider the energy balance equation and track
temperature as well as chemical potential. To move toward a closed system of equations, we derive a version of
Fourier's Law from the entropy inequality so that we have an expression of heat flux in
the energy balance equation. At the outset we first recall that in the
entropy inequality we've assumed only one temperature for the entire porous medium.  This
implies that we've assumed that the separate phases are in thermal equilibrium.  For this
reason, we will develop an analogue to Fourier's Law that holds for the entire (bulk) medium.

Following Bennethum and Cushman's work on heat transport in porous media
\cite{Bennethum1999} we observe that if we sum the energy equation
(\ref{eqn:energy_phase}) over $\al$ we obtain the bulk energy balance equation
\begin{flalign}
    \rho \md{ }{e} - \stress{}: \grad \bv{} - \diver \bq - \rho h = 0,
    \label{eqn:summed_energy}
\end{flalign}
where
\begin{subequations}
    \begin{flalign}
        \rho &= \suma \epsa \rhoa \\
        \rho \bv{} &= \suma \epsa \rhoa \bv{\al} \\
        \foten{u}^{\al} &= \bv{\al} - \bv{} \\
        \stress{} &= \suma \left[ \epsa \stress{\al} - \epsa \rhoa
        \foten{u}^{\al} \otimes \foten{u}^{\al} \right] \\
        \rho e &= \suma \left[ \epsa \rhoa e^\al + \epsa \rhoa
        \foten{u}^{\al} \cd \foten{u}^{\al} \right] \\
        \bq &= \suma \left[ \epsa \bq^\al + \stress{\al} \cd
        \foten{u}^{\al} - \rhoa \foten{u}^{\al} \left( e^\al +
        \frac{1}{2} \foten{u}^{\al} \cd \foten{u}^{\al} \right) \right]
        \label{eqn:total_heat_flux} \\
        \rho h &= \suma \epsa \rhoa h^a.
    \end{flalign}
\end{subequations}
Given identities (a) - (g), the derivation of \eqref{eqn:summed_energy} follows after some
significant algebra. Define the medium velocity, $\bv{ }$, as the weighted velocity of the
medium, and the relative velocity, $\foten{u}^{\al}= \bv{\al} - \bv{}$, is the $\al-$phase
velocity relative to the medium. Note that $\bv{\al,s} = \bv{\al} - \bv{s} - \bv{ } + \bv{
} = \foten{u}^{\al} - \foten{u}^s = \foten{u}^{\al,s}$.  In the case where the velocity of
the solid phase relative to the medium is zero ($\foten{u}^s = \foten{0}$) we immediately
see that $\foten{u}^{\al} = \bv{\al,s}$.  This assumption along with equation (\ref{eqn:total_heat_flux}) indicates
that there is naturally a coupling between the relative velocities, $\bv{\al,s}$, and the
total heat flux, $\bq{}$. 

Using the near equilibrium result, $\suma \epsa \bq^\al = \soten{K} \cd \grad T$
(equation (\ref{eqn:heat_near_eq})), we can write the total heat flux as
\begin{flalign}
    \bq{} &= \soten{K} \cd \grad T + \suma \left[ \stress{\al} \cd
    \bv{\al,s} - \rhoa \bv{\al,s} \left( e^\al + \frac{1}{2} \bv{\al,s} \cd
    \bv{\al,s} \right) \right].
    \label{eqn:Fourier_Darcy_energy}
\end{flalign}
If we were to (wrongly) neglect all of the terms in the summation we would arrive at
Fourier's Law for heat conduction.  The trouble here is that the terms in the summation
are not negligible, and therefore the total heat flux in a porous medium must be a
function of the gradient of temperature, the relative velocities, the stress in the fluid
phases, and the internal energy.

Since $\bv{s,s} = \bv{s} - \bv{s} = \foten{0}$ the right-hand side of equation
(\ref{eqn:Fourier_Darcy_energy}) is only a function of the fluid velocities relative to
the solid phase.  Neglecting viscous terms we recall that the fluid-phase stress tensors
can be rewritten as $\stress{\al} = - p^{\al} \soten{I}$.  Neglecting the second-order
term, $\bv{\al,s} \cd \bv{\al,s}$, the total heat flux is now written as
\begin{flalign}
    \bq{} &= \soten{K} \cd \grad T - \sum_{\al = l,g} \left[ \left(
    p^{\al} + \rhoa e^\al \right) \bv{\al,s} \right].
\end{flalign}

At this point we replace the internal energy term with Gibbs energy in
hopes of deriving an extended Fourier's Law in terms of the chemical
potential.  Recall from thermodynamics that the Gibbs potential and
internal energy are related through
\begin{flalign}
    e^\al = \Gamma^\al + T \eta^\al - \frac{p^\al}{\rhoa}.
\end{flalign}
Therefore, the total heat flux can be written in terms of the Gibbs
potential as
\begin{flalign}
    \bq{} &= \soten{K} \cd \grad T - \sum_{\al=l,g} \left[ \rho^{\al}
    \left( \Gamma^\al + T \eta^\al \right) \bv{\al,s} \right].
    \label{eqn:Fourier_Gibbs}
\end{flalign}
Using the Gibbs-Duhem relationship, (\ref{eqn:Gibbs_Duhem}), this can be
rewritten in terms of the chemical potential as 
\begin{flalign}
    \bq{} &= \soten{K} \cd \grad T - \sum_{\al=l,g} \left[ \left( \sumj
    \left( \rho^{\aj} \mu^{\aj} \right) + \rhoa T \eta^\al \right)
    \bv{\al,s} \right].
    \label{eqn:Fourier_chem_pot}
\end{flalign}
The trouble with both (\ref{eqn:Fourier_Gibbs}) and (\ref{eqn:Fourier_chem_pot}) is that
they both rely on measurements of entropy. One way to work around this issue is to assume
that the entropy is only a function of temperature, and then to recall that the specific
heat is defined as
\[ c_p^\al = T \pd{\eta^\al}{T} = T \frac{d \eta^\al}{d T}. \]
Solving this separable ordinary differential equation (under the assumption that the
variation of specific heat with temperature negligible) gives
\begin{flalign}
    \eta^\al(T) = c_p \ln \left( \frac{T}{T_0} \right) + \eta^\al_0
    \label{eqn:entropy_temperature}
\end{flalign}
where $T_0$ is a reference temperature, and $\eta_0^\al$ is a reference entropy. While
this is only an approximation it does allow us to move forward without direct measurements
of entropy.

The {\it extended} Fourier's Law \eqref{eqn:Fourier_Gibbs} presented here frames the equations presented in
\cite{Bennethum1999} in terms of the Gibbs potential.  This will allow for easier coupling
with the chemical potential forms of Fick's and Darcy's Laws presented in the previous
subsections. The caveat is that the equation for total energy balance,
\eqref{eqn:summed_energy}, is not particularly useful since we do not have constitutive
relations for the total stress and total energy.  For that reason, we will not use
equation \eqref{eqn:Fourier_Gibbs} or \eqref{eqn:Fourier_chem_pot}
for Fourier's law in the energy equation.  Instead we will use the linearized partial
heat flux and the constitutive relations for the phase stresses and relative velocities to
derive a generalized heat equation.

\section{Conclusion}
In this chapter we have shown that a novel and judicious choice of independent variables
for the Helmholtz Free Energy can be used to derive forms of Darcy's, Fick's, and
Fourier's Laws for multiphase porous media.  These equations are similar to those found in
\cite{Bennethum1996,Bennethum1999,Murad2000,weinstein}. Each equation can be written with
an eye toward the macroscale chemical potential, and in each case the chemical potential
form is more {\it mathematically appealing} in the sense that there are fewer terms and
many of the physical processes are manifested in the chemical potentials.  This
illustrates the usefulness of the chemical potential as a modeling tool. Furthermore,
since the chemical potential appears naturally in each of these equations we have set the
stage for a more natural method of coupling the fluid flow, diffusion, and heat transport.
In Chapters \ref{ch:Transport} and \ref{ch:Transport_Solution} we will couple these
equations with the upscaled mass, momentum, and energy balance equations to yield a system
of equations that will govern total moisture transport and heat flux in unsaturated porous
media.

\newpage
\chapter{Coupled Heat and Moisture Transport Model}\label{ch:Transport}
To form governing equations for heat and  moisture transport in porous media we pair the
constitutive equations derived in Chapter \ref{ch:Exploit} with upscaled mass, momentum,
and energy balance equations derived in Chapter \ref{ch:HMT}. There are several existing
models for each physical process of interest (fluid flow, diffusion, and heat transport)
and recent research indicates a need to understand the fully coupled system of equations
as it relates to moisture transport, evaporation, heat transport, and other physical
phenomena. In this chapter we derive a model for coupled heat and moisture transport using
Hybrid Mixture Theory and knowledge of pore-scale effects. To begin this modeling task we
first investigate the classical models used within the past century in Section
\ref{sec:classical_models}. In Sections \ref{sec:Assumptions} and
\ref{sec:heat_and_mass_derivation} we pair our constitutive equations from
Chapter \ref{ch:Exploit} with upscaled balance laws from Chapter \ref{ch:HMT}, perform a
dimensional analysis, and discuss forms of the linearization coefficients arising from
HMT. This is done in an effort to generate a closed system of governing equations. Several
simplifying assumptions are made to close the system in Section
\ref{sec:simplifying_assumptions}. The solution(s) to the closed system will be discussed
in Chapter \ref{ch:Transport_Solution}.

\section{Introduction and Historical Work}\label{sec:classical_models}
To give the reader a better understanding of the work from the past century, we present
three classical models here with some discussion on their advantages and disadvantages. First we discuss
Richards' equation for unsaturated fluid flow in Section \ref{sec:richards}, second we
discuss Phillip and De Vries enhanced diffusion model in Section
\ref{sec:phillip_devries}, and lastly we discuss De Vries' heat transport model in Section
\ref{sec:devries_heat}.

\subsection{Richards' Equation for Fluid Flow}\label{sec:richards}
The classical equation for fluid flow in unsaturated media is known as Richards' equation
(also called the saturation equation).  This equation was first derived in 1931 by L.A.
Richards at Cornell University \cite{Richards1931}.  It takes a postulated form of the
mass balance equation (similar to equation \eqref{eqn:upscaled_mass_balance_phase}) and
replaces the flux term with Darcy's law. The gradient of pressure is rewritten in terms of
pressure head ($h = p/(\rho g)$), and then a constitutive relation is assumed for the
pressure head as a function of saturation (or volume fraction). Another constitutive
relation relating the relative permeability of the medium to saturation is assumed. There
are several versions of the constitutive relations, but one of the more {\it popular} in
recent research are those of van Genuchten \cite{VanGenuchten1980,Pinder2006}. Another
more recently investigated relationship is the Fayer-Simmons model
\cite{Fayer1995,Sakai2009,Smits2011}, which is an extension of the van Genuchten model to
cover the case of very low saturations. 

The result of the assumption and substitutions in the mass balance equation is a nonlinear
diffusion equation where the primary unknown is the percent saturation of the medium
\begin{flalign}
    \pd{S}{t} = \diver \left[ D(S) \grad S - K(S) \foten{z} \right], 
    \label{eqn:richards}
\end{flalign}
where $K(S)$ is the hydraulic conductivity function and $D(S)$ is the product of $K(S)$
and the derivative of capillary pressure with saturation.
Recall that saturation is defined as
\begin{flalign}
    S = \frac{\eps{l}}{1-\eps{s}} = \frac{\eps{l}}{\eps{g} + \eps{l}} =
    \frac{\eps{l}}{\porosity} = \frac{\text{volume of liquid}}{\text{volume of pore
    space}} 
    \label{eqn:defn_saturation_v1}
\end{flalign}
and is understood as the volume of liquid per volume of pore space.

This model has been effectively used for several decades, but there are a few
disadvantages of note.  First of all, this equation does not allow for phase change
between the liquid and gas.  The original model was proposed for systems with immiscible
fluids, where phase changes likely don't occur, but it is also used for unsaturated soils
where phase change is possible and air is always availabe.. A second
disadvantage is that humidity and temperature gradients are not considered.  A third disadvantage is that
the pressure head - saturation curve is hysteretic (depends on the history of
flow).  The constitutive laws for pressure head don't account for this hysteretic behavior
directly.  Instead, it is often assumed that fitting parameters change with changing
direction of flow.  This leads to the final disadvantage: the use of the van Genuchten capillary
pressure - saturation relation.  This is a widely used relationship, but relies heavily on
two fitting parameters. The measurement of these fitting parameters is difficult, and they
are typically found by fitting numerical solutions of Richards' equation to experimental
data.

Several extensions and modifications to Richards' equation have been made recently, the
most notable of which is that of Hassanizadeh et al.
\cite{Joekar-Niasar2010,Hassanizadeh2002}. In these papers, they propose a dynamic relationship between
capillary pressure and saturation based on Hybrid Mixture Theory with interfaces.  They
also propose that the hysteretic effect observed in the capillary pressure - saturation
curves is due to the (postulated) fact that the capillary pressure, saturation, and
interfacial area density, $\eps{lg}$, form a unique surface. This partially explains hysteretic effects by
seeing them as a projection of this surface onto the capillary pressures - saturation
plane in the $p_c-S-\eps{lg}$ space. This model is gaining in popularity, but is far from
widespread acceptance. Some of the relevant publications are
\cite{Joekar-Niasar2007,Joekar-Niasar2010,Joekar-Niasar2012,Hassanizadeh2001,Hassanizadeh2002,Nordbotten2008}.

In the present chapter we present a modification to the Richards' equation that
incorporates the dynamic capillary pressure relationship of Hassanizadeh et al.
The major differences between the present derivations and their work are: (1) 
modeling in terms of chemical potential, (2) allowing for phase transition, and (3)
allowing for humidity and temperature gradients. Our present modeling effort will account for all
of these effects, and hence, constitutes a generalization of the existing model.

\subsection{Phillip and De Vries' Diffusion Model}\label{sec:phillip_devries}
In 1957, Phillip and de Vries published their comprehensive work on diffusion of
water vapor in porous media \cite{deVries1957}. In their model they postulate
an {\it enhanced} Fick's law, 
\begin{flalign}
    \bq^{g_v} = - \rho^g D \eta \grad C^{g_v},
    \label{eqn:deVries_original}
\end{flalign}
where $\bq^{g_v}$ is the water vapor flux and $\eta$ is an {\it enhancement} factor that
is a function of the toruosity, volume fraction of air, and a ``mass-flow factor''. The
mass-flow factor is then postulated as a function of pore-scale gradients in saturation
and temperature.  This model has successfully been applied to several diffusion and
evaporation problems (e.g.  \cite{Smits2011}), but the trouble is that the exact form of
the enhancement is based on empirical evidence. Furthermore, this model
has come under recent scrutiny due to the fact that the proposed factors affecting $\eta$
are pore-scale effects and are therefore difficult to accurately measure
\cite{Celia2009,Shahraeeni2012a,Shahraeeni2012b,Shahraeeni2010,Shokri2009a,Silverman1999,Smits2011,Webb1998}.
Many of these works use x-ray tomography to attempt to measure these pore-scale effects
directly.

In the work by Cass et al.\,\cite{Cass1984}, an empirical form of the enhancement factor
was proposed. In this work, a fitting parameter is used in the enhancement factor to
arrive at good agreement with experimental data.  This model has been used in more
recent works (e.g. \cite{Sakai2009,Smits2011}) in conjunction with a mass balance equation
for the water vapor in the gas phase. The resulting model is a nonlinear diffusion
equation for concentration of water vapor that deviates from the more classical de Vries
model. Aside from the empirical fitting parameter, the mass transfer between phases also
relies on a fitting parameter and an empirically-derived functional form. 

In the present chapter we build a model for diffusion based on using the chemical
potential as a primary unknown and the Hybrid Mixture Theory construct.
The enhanced diffusion is not incorporated into these models, and the mass transfer is
modeled by the difference in chemical potentials; a more physically natural formulation. A comparison will
be made to the model of Cass et al.

\subsection{De Vries' Heat Transport Model}\label{sec:devries_heat}
In 1958, de Vries published a second paper coupling heat and moisture transport in porous
media \cite{deVries1958}.  In this research, he proposed an extended heat transport model
for porous media that is still used today.  Neither his diffusion nor his heat
transport model were thermodynamically derived. Instead, he began each derivation with a
postulation of the forms of diffusive and heat flux.  For the heat transport equation he
included terms similar to the classical Fourier's law, but also proposed that heat
transport was due to advective transport in the fluid phases.  This model is still
popularly used today to couple heat and mass transport in unsaturated
media \cite{Bear1988,Smits2010a,Smits2011}. That being said, the
effects included in this equations are based solely on de Vries' supposition of the
factors affecting heat flow.  

In 1999 Bennethum and Cushman published (to the author's knowledge) the first work using
Hybrid Mixture Theory to derive an extended de Vries model for heat transport in swelling
saturated porous media \cite{Bennethum1999}.  In the present chapter we take a similar
approach using HMT to derive a thermodynamically consistent model for heat transport in
non-swelling unsaturated media.  This is done with an eye toward using gradients in
temperature as the thermal diffusion process and the chemical potential to describe the
secondary processes such as advection. 

\section{Assumptions}\label{sec:Assumptions}
In this section we state the baseline assumptions that will be used throughout the
remainder of this work.  These assumptions are meant to make minimal limitations on the
applicability of the resulting models, but at the same time they are meant to keep the
mathematics tractable. Possible relaxations to these assumptions (and the source of
possible avenues of future research) will be stated as they are encountered.

The simple set of baseline assumptions are as follows:
\newcounter{AssNum}\stepcounter{AssNum}
\begin{description}
    \item[Assumption \#\theAssNum:\stepcounter{AssNum}] The solid phase is rigid,
        incompressible, and inert.
    \item[Assumption \#\theAssNum:\stepcounter{AssNum}] The liquid and gas phases are each
        made up of $N$ constituents.
    \item[Assumption \#\theAssNum:\stepcounter{AssNum}] No chemical reactions take place
        in any of the phases.
\end{description}

The first assumption is the most restrictive.  Mathematically it corresponds to setting
the Lagrangian derivatives of both density and volume fraction for the solid phase to
zero.  Assuming that the solid is inert simply means that no mass will precipitate onto,
or dissolve away from, the solid phase. With these assumptions, the solid phase mass
balance equation (from equation \eqref{eqn:upscaled_mass_balance_phase}) becomes
\begin{flalign}
    \diver \bv{s} = 0.
    \label{eqn:solid_mass_balance}
\end{flalign}
If a deformable solid is considered where the solid-phase volume fraction can change, then this assumption would need to be relaxed. One
particular relaxation of this assumption is to allow for incompressibility and inertness of the solid
phase but relax the rigidity assumption.  Under this relaxation, the solid phase mass
balance equation becomes
\begin{flalign}
    \md{s}{\eps{s}} - \eps{s} \diver \bv{s} = 0.
\end{flalign}

A consequence of fixing the solid phase volume is that $\eps{l} + \eps{g} = 1-\eps{s} :=
\porosity$, where $\porosity$ is known as the porosity of the porous medium. A further
consequence is that the liquid and gas phase volume fractions are no longer independent of
each other.  Note that we could have made this assumption up front and exploited the
entropy inequality with this assumption (this is done in
\cite{Hassanizadeh1986,Hassanizadeh1986a} for a
different set of independent variables), but proceeding in this order allows us to return
to the present entropy inequality results and consider a deformable solid in the future. Since the
fluid-phase volume fractions are no longer independent we can replace them by saturation
as defined by
\begin{flalign}
    S = \frac{\eps{l}}{\eps{l} + \eps{g}} = \frac{\eps{l}}{\porosity}.
    \label{eqn:defn_saturation}
\end{flalign}
This implies that the volume fractions are related via $\eps{l} = \porosity S$ and $\eps{g} = \porosity (1-S)$.

Assumption \#2 is a byproduct of the principle of equipresence and will be relaxed
later for simplicity. In the most general sense, this assumption states that every species that
exists in one fluid phase also exists in the other. In reality this is likely not
true. For example, if a constituent is present in the liquid phase it is possible that evaporated
particles of the constituent are not be present in the gas phase.  Another example would
be if we were to extend this model to an oil-water system.  The two fluids in this case
are immiscible and it is unlikely that every species in the water phase is present in the
oil phase (and visa versa). We take this into account by setting the appropriate
concentrations to zero after the constitutive equations have been derived. 

Assumption \#3 indicates that the rate of mass exchange due to chemical reactions,
$\hat{r}^{\aj}$, is zero for all phases. The consequence of this is that the rate of mass generation
of a constituent in a phase only occurs between two phases. This is true for
some porous media, but chemical reactions can occur in some specific cases such as
remediation 
problems.  Under this assumption these cases are henceforth eliminated from
the discussion.

Other simplifying assumptions exist for many media, but the three presented herein
constitute a set that leads to several mathematical simplifications with as
few physical restrictions as possible.

\section{Derivation of Heat and Moisture Transport
Model}\label{sec:heat_and_mass_derivation}
In the remainder of this chapter we focus on using the results from Chapters \ref{ch:HMT}
and \ref{ch:Exploit}, along with the assumptions from Section \ref{sec:Assumptions}, to
derive a closed system of equations for heat and mass transport in unsaturated porous
media. This will be done with an eye toward using the chemical potential as the driving
force for these processes.  We will show that under certain additional simplifying
assumptions that a closed system can be derived.

\subsection{Mass Balance Equations}\label{sec:mass_balance_equations}
We first build generalized mass balance equations in terms of the chemical potential under
assumptions \#1 - \#3.  Recall from Chapter \ref{ch:HMT} that the mass balance equation for
the $j^{th}$ constituent in the $\al-$phase is (from equation
\eqref{eqn:upscaled_cons_mass_v2})
\begin{flalign}
    \md{\aj}{\left( \epsa \rhoaj \right)} + \epsa \rhoaj \diver \bv{\aj} = \sumba \ehatbaj
    + \hat{r}^{\aj}.
\end{flalign}
The last term can be dropped under assumption \#3 in Section \ref{sec:Assumptions}.
Because of the form of the constitutive equation, and to adhere to the principle of frame
invariance, it is convenient to rewrite this equation relative to the solid phase.  To do
so we recall the identities
\begin{subequations}
    \begin{flalign}
        \md{\aj}{(\cd)} &= \md{\al}{(\cd)} + \bv{\aj,\al} \cd \grad (\cd) \\
        \md{\al}{(\cd)} &= \md{s}{(\cd)} + \bv{\al,s} \cd \grad (\cd)
    \end{flalign}
\end{subequations}
and expand the Lagrangian time derivatives accordingly to get
\begin{flalign}
    \md{s}{\left( \epsa \rhoaj \right)} + \bv{\aj,\al} \cd \grad \left( \epsa \rhoaj
    \right) + \bv{\al,s} \cd \grad \left( \epsa \rhoaj \right) + \epsa \rhoaj \diver
    \bv{\aj} = \sumba \ehatbaj.
\end{flalign}
Taking the definition of the Lagrangian time derivative,
\[ \md{s}{(\cd)} = \pd{(\cd)}{t} + \bv{s} \cd \grad (\cd), \]
adding and subtracting $\epsa \rhoaj \diver \bv{\al}$, and subtracting $\epsa \rhoaj \diver
\bv{s} = \foten{0}$ gives
\begin{flalign}
    \pd{\left( \epsa \rhoaj \right)}{t} + \diver \left( \epsa \rhoaj \bv{\aj,\al} \right) +
        \diver \left( \epsa \rhoaj \bv{\al,s} \right) = \sumba \ehatbaj.
    \label{eqn:general_mass_balance}
\end{flalign}
Notice the use of Assumption \#1 in the last step, and observe that if Assumption \#1 is
relaxed then the mass balance equation would involve a time derivative of the solid-phase
volume fraction (at least).

Equation \eqref{eqn:general_mass_balance} is the general mass balance equation for both of
the fluid phases.  Notice that we are not replacing the volume fractions with saturation
here since we don't know if $\al$ is the liquid or gas phase.  Substituting Fick's law for
the diffusive flux and Darcy's law for the Darcy flux gives the chemical potential form of
the full mass balance equation for species $j$ in phase $\al$:
\begin{flalign}
    \notag \pd{\left( \epsa \rhoaj \right)}{t} &- \diver \left\{ \rhoaj
        \soten{D}^{\aj} \cd \left[ \grad \mu^{\aj} - \foten{g} \right] \right\} \\
    \notag & - \diver \left\{ \rhoaj \soten{K}^\al \cd
    \left[ \sum_{k=1}^N \left( \rho^{\al_k} \grad \mu^{\al_k} \right) + \rhoa \eta^\al
        \grad T - \rho^\al \foten{g} \right] \right\} \\
    &= \sumba \ehatbaj.
    \label{eqn:general_mass_balance_fluid_species}
\end{flalign}
It should be noted here that the Eulerian and Lagrangian time derivatives are equal under
the assumption that the solid-phase velocity is zero (Assumption \#1). Also note that if we sum over all
constituents then we arrive at the mass balance equation for the phase (where we have used
$\sumj \rhoaj \bv{\aj,\al} = \foten{0}$)
\begin{flalign}
    \pd{\left( \epsa \rhoa \right)}{t} - \diver \left\{ \rhoa
    \soten{K}^\al \cd \left[ \sum_{k=1}^N \left( \rho^{\al_k} \grad \mu^{\al_k}
        \right) + \rhoa \eta^\al \grad T - \rho^\al \foten{g} \right] \right\}  = \sumba
        \ehatba.
    \label{eqn:general_mass_balance_fluid_phase}
\end{flalign}
The chemical potential form of the mass balance equation is only one form.  We could have
used the pressure formulation for Darcy's law and arrived at a pressure - chemical
potential form of the mass balance equation.

The rate of mass transfer term on the right-hand side of the mass balance equation can be
rewritten in terms of a linearized result from the entropy inequality.  Recall from
equation \eqref{eqn:mass_transfer_linearized} that the mass transfer term can be written
as
\begin{flalign}
    \ehatbaj &= \left[ \left( \rhoaj - \rho^{\beta_j} \right) M \right] \left( \mu^{\aj} -
    \mu^{\beta_j} \right), 
\end{flalign}
where the coefficient $(\rho^{\aj} - \rho^{\beta_j})$ is chosen to be consistent with
equation (9) of \cite{Smits2011}.  Also recall that since the interface is assumed to
contain no mass we must have that the rate of mass gained from the $\beta$ phase to the
$j^{th}$ species in the $\al$ phase must be equal to the rate of mass lost from the $\al$
phase to the $j^{th}$ species of the $\beta$ phase:
\[ \ehatbaj = -\ehat{\al}{\beta_j}. \]
If the chemical potential
of the liquid phase is larger than the chemical potential of the water vapor then mass
will transfer from liquid to gas and $\ehat{g}{l} <0$.  Similarly, if the chemical
potential of the liquid phase is smaller than that of the water vapor then mass will
transfer from gas to liquid and $\ehat{g}{l} >0$. 
Recall from the discussion adjacent to equation \eqref{eqn:mass_transfer_linearized} that
the units of $M$ are the reciprocal of flux.

There are clearly more unknowns than equations in the $2N$ fluid equations since we must
account for the densities, temperature, volume fractions, and entropies as well as the
chemical potentials.  Certain sets of simplifying assumptions can be used to reduce the
number of unknowns (e.g. incompressibility of a fluid phase). These will be discussed
in Section \ref{sec:simplifying_assumptions}. Instead of making these assumptions up front
we now turn our attention to deriving a generalized energy balance equation to account for
the temperature.  This will give one more equation but will add no more unknowns to the
system of equations.

\subsection{Energy Balance Equation}
As another step toward developing a closed system of equation equations for heat and
moisture transport we next examine the energy balance equation.  This will give an
equation in terms of temperature, chemical potentials, saturation (volume fractions),
entropy, and densities;  increasing the equation count but not
increasing the variable count. Since we assumed at the outset that all of the phases are in thermal
equilibrium we will only have one equation for energy balance.  This will be derived by
considering the sum of each of the phase energy balance equations. Counter-intuitively, we
will not use the form of Fourier's Law (equation \eqref{eqn:total_heat_flux} or
\eqref{eqn:Fourier_chem_pot}) derived for the total heat flux since the energy equation
derived in that section is more cumbersome to work with than the individual phase energy
equations. Instead we will use the partial heat flux for each phase as derived from
linearization about equilibrium \eqref{eqn:heat_near_eq}.

From equation \eqref{eqn:energy_phase}, the volume averaged energy balance equation is
\begin{flalign}
    \epsa \rhoa \md{\al}{e^{\al}} - \epsa \stress{\al} : \bd{\al} - \diver \left( \epsa
    \bq^{\al} \right) + \epsa \rhoa h^{\al} = \sumba \hat{Q}_{\beta}^{\al}.
\end{flalign}
Using the identity $\md{\al}{(\cd)} = \mds{(\cd)} +
\bv{\al,s} \cd \grad (\cd)$ and using {\it dot} notation for material time derivatives
allows us to rewrite the energy equation as
\begin{flalign}
    \epsa \rhoa \dot{e}^{\al} + \epsa \rhoa \bv{\al,s} \cd \grad e^{\al} - \epsa
    \stress{\al} : \bd{\al} - \diver \left( \epsa \bq^{\al} \right) + \epsa \rhoa h^{\al}
    = \sumba \hat{Q}_{\beta}^{\al}. 
    \label{eqn:energy_temporary1}
\end{flalign}
The trouble with \eqref{eqn:energy_temporary1} is that the first and second terms contain
the interal energy density, $e^\al$.  To tie this equation back to the HMT framework we've
used throughout (and to give the equation a more natural set of dependent variables) we
perform a Legendre transformation to change the energy term into the Helmholtz potential
via the thermodynamic identity $e^{\al} = \psia + T \eta^{\al}$. The energy equation is
now written as
\begin{flalign}
    \notag \sumba \hat{Q}_{\beta}^{\al} =& \epsa \rhoa \mds{\psia} + \epsa \rhoa
    \bv{\al,s} \cd \grad \psia + \epsa \rhoa T \dot{\eta}^{\al} + \epsa \rhoa T \bv{\al,s}
    \cd \grad \eta^{\al} \\
    &+ \epsa \rhoa \eta^{\al} \dot{T} + \epsa \rhoa \eta^{\al} \bv{\al,s} \cd \grad T -
    \epsa \stress{\al} : \bd{\al} - \diver \left( \epsa \bq^{\al} \right) + \epsa \rhoa
    h^{\al}.
    \label{eqn:energy_temporary2}
\end{flalign}

Next we seek to remove the Helmholtz potential and entropy terms from the energy equation.
To do this we recall that the Helmholtz potential is a function of all of the variables
listed in \eqref{eqn:variables}. Under the assumptions listed in Section
\ref{sec:Assumptions} we drop the solid phase terms from this list.  
Furthermore, we know that under these
conditions the volume fractions are not independent so we could replace both $\eps{l}$
and $\eps{g}$ by saturation, $S$. This is not done (yet) as the entropy inequality was
exploited while assuming that they are independent.  The switch can be made at any point
later. Therefore, under the present assumptions,
\[ \psia = \psia(\eps{l}, \eps{g}, \rho^{l_j}, \rho^{g_j}, T)
    \quad \text{for} j=1:N. \]
Entropy, $\eta^\al$, is assumed to be a function of the same set of variables (since $\eta^\al =
-\partial \psia / \partial T$). Using the chain rule to expand all of the derivatives of
$\psia$ and $\eta^\al$ in equation \eqref{eqn:energy_temporary2} we arrive at an expanded
form of the energy equation:
\begin{flalign}
    \notag \sumba \hat{Q}_{\beta}^{\al} =& \epsa \rhoa \left( \pd{\psia}{T} + \eta^{\al} + T
    \pd{\eta^{\al}}{T} \right) \dot{T} \\
    \notag &  + \epsa \rhoa \left( \left[ \pd{\psia}{\eps{l}} + T \pd{\eta^{\al}}{\eps{l}}
    \right] \epsdot{l} + \left[ \pd{\psia}{\eps{g}} + T \pd{\eta^{\al}}{\eps{g}}
    \right] \epsdot{g} \right. \\
    %
    %
    \notag &  \qquad \qquad \left. + \sumj \left[ \pd{\psia}{\rho^{l_j}} + T
    \pd{\eta^{\al}}{\rho^{l_j}} \right] \dot{\rho}^{l_j} + \sumj \left[
    \pd{\psia}{\rho^{g_j}} + T \pd{\eta^{\al}}{\rho^{g_j}} \right]
    \dot{\rho}^{g_j} \right) \\
    \notag &  + \epsa \rhoa \left( \left[ \pd{\psia}{T} + \eta^{\al} + T \pd{\eta^{\al}}{T}
    \right] \grad T \right. \\
    \notag &  \qquad \qquad + \left[ \pd{\psia}{\eps{l}} + T \pd{\eta^{\al}}{\eps{l}}
    \right] \grad \eps{l} + \left[ \pd{\psia}{\eps{g}} + T
    \pd{\eta^{\al}}{\eps{g}} \right] \grad \eps{g} \\
    %
    %
    \notag &  \qquad \qquad \left. + \sumj \left[ \pd{\psia}{\rho^{l_j}} + T
        \pd{\eta^{\al}}{\rho^{l_j}} \right] \grad \rho^{l_j} + \sumj \left[
            \pd{\psia}{\rho^{g_j}} + T \pd{\eta^{\al}}{\rho^{g_j}} \right] \grad
            \rho^{g_j} \right) \cd \bv{\al,s} \\
    &   - \epsa \stress{\al} : \bd{\al} - \diver \left( \epsa \bq^{\al}
    \right) + \epsa \rhoa h^{\al}.
    \label{eqn:energy_temporary3}
\end{flalign}

From the entropy inequality we know that the temperature and entropy are conjugate variables.
For this reason we can cancel these terms from the $\dot{T}$ and $\grad T$ coefficients.

Equation \eqref{eqn:energy_temporary3} is an expression of energy balance for phase $\al$,
but since we are working under the assumption that the phases are in thermal equilibrium
we now sum over all of the phases to form one energy balance equation for the entire
porous medium. The sum is:
\begin{flalign}
    \suma \left\{ \sumba \hat{Q}_{\beta}^{\al} \right\} &= \suma \left\{ \epsa \rhoa T
        \pd{\eta^{\al}}{T} \right\} \dot{T} \\
    \notag & + \suma \left\{ \epsa \rhoa \left[ \pd{\psia}{\eps{l}} + T
    \pd{\eta^{\al}}{\eps{l}} \right] \right\} \epsdot{l} + \suma \left\{ \epsa \rhoa
    \left[ \pd{\psia}{\eps{g}} + T \pd{\eta^{\al}}{\eps{g}} \right] \right\}
    \epsdot{g} \\
    %
    %
    \notag &  + \suma \left\{ \epsa \rhoa \sumj \left[ \pd{\psia}{\rho^{l_j}} + T
    \pd{\eta^{\al}}{\rho^{l_j}} \right] \dot{\rho}^{l_j} \right\} \\
    \notag & + \suma \left\{ \epsa \rhoa \sumj \left[ \pd{\psia}{\rho^{g_j}} + T
        \pd{\eta^{\al}}{\rho^{g_j}} \right] \dot{\rho}^{g_j} \right\} \\
    \notag &  + \sum_{\beta = l,g} \left\{ \eps{\beta} \rho^{\beta} \left( \left[ T
        \pd{\eta^{\beta}}{T} \right] \grad T \right. \right. \\
    %
    \notag & \qquad \qquad \qquad + \left[ \pd{\psi^{\beta}}{\eps{l}} + T
        \pd{\eta^{\beta}}{\eps{l}} \right] \grad \eps{l} + \left[
            \pd{\psi^{\beta}}{\eps{g}} + T \pd{\eta^{\beta}}{\eps{g}} \right] \grad
            \eps{g} \\
    %
    \notag & \qquad\qquad \qquad + \sumj \left[ \pd{\psi^{\beta}}{\rho^{l_j}} + T
        \pd{\eta^{\beta}}{\rho^{l_j}} \right] \grad \rho^{l_j} \\
    \notag &  \qquad\qquad \qquad \left. \left. + \sumj \left[
        \pd{\psi^{\beta}}{\rho^{g_j}} + T \pd{\eta^{\beta}}{\rho^{g_j}} \right] \grad
        \rho^{g_j} \right) \cd \bv{\beta,s} \right\} \\
    %
    %
    &   - \suma \left\{ \epsa \stress{\al} : \bd{\al} \right\} - \diver \left( \suma
    \left\{ \epsa \bq^{\al} \right\} \right) + \suma \left\{ \epsa \rhoa h^{\al} \right\} 
\end{flalign}

The $\dot{T}$ term can be rewritten as $\suma \left\{ \epsa \rhoa T \pd{\eta^{\al}}{T}
\right\} \dot{T} = \rho c_p \dot{T}$, and in doing so we implicitly define the volumetric 
heat capacity of the entire medium:
\[ \rho c_p = \suma \left\{ \epsa \rhoa T \pd{\eta^\al}{T} \right\}. \]
Next we recall from
equation \eqref{eqn:heat_near_eq} that the partial heat flux can be written as $\suma
\left\{ \epsa \bq^{\al} \right\} = \soten{K} \cd \grad T$ (more will be said about the
functional form of $\soten{K}$ in future sections).  The heat source term can be
rewritten as $\suma \left\{ \epsa \rhoa h^{\al} \right\} = \rho h$, where $h$ is any
internal source or sink of heat on the entire medium (i.e. heat sources that are not
boundary conditions).  

Notice that several of the gradient terms are the same as those in the linearized
constitutive equation for the momentum transfer, \eqref{eqn:phase_mom_transfer_equilib} and
\eqref{eqn:mom_near_eq}. Replacing these terms with the remainder of the momentum balance
terms and simplifying gives
\begin{flalign}
    \notag \suma \left\{ \sumba \hat{Q}_{\beta}^{\al} \right\} &= \rho c_p \dot{T} - \diver
    \left( \soten{K} \cd \grad T \right) + \rho h - \suma \left\{ \epsa \stress{\al} :
    \bd{\al} \right\} \\
    \notag & \quad + \suma \left\{ \epsa \rhoa \left[ \pd{\psia}{\eps{l}} + T
    \pd{\eta^{\al}}{\eps{l}} \right] \right\} \epsdot{l} + \suma \left\{ \epsa \rhoa
    \left[ \pd{\psia}{\eps{g}} + T \pd{\eta^{\al}}{\eps{g}} \right] \right\}
    \epsdot{g} \\
    \notag & \quad + \suma \left\{ \epsa \rhoa \sumj \left[ \pd{\psia}{\rho^{l_j}} + T
        \pd{\eta^{\al}}{\rho^{l_j}} \right] \dot{\rho}^{l_j} \right\} \\
    \notag & \quad + \suma \left\{ \epsa
    \rhoa \sumj \left[ \pd{\psia}{\rho^{g_j}} + T \pd{\eta^{\al}}{\rho^{g_j}} \right]
    \dot{\rho}^{g_j} \right\} \\
    \notag & \quad +\sum_{\beta = l,g} \left\{ \left[ -\sum_{\gamma \ne \beta} \left(
        \That{\gamma}{\beta} \right) + p^{\beta} \grad \eps{\beta} \right. \right. \\
    \notag & \qquad \qquad \qquad + \sumj \left[ \left( \suma \left( \eps{\al} \rho^\al
        \pd{\psi^\al}{\rho^{\beta_j}} \right) + \eps{\beta} \rho^\beta T
        \pd{\eta^\beta}{\rho^{\beta_j}} \right) \grad \rho^{\beta_j} \right] \\ 
    \notag & \quad \qquad \qquad + \eps{\beta} \rho^{\beta} \left\{ \left[ T \pd{\eta^{\beta}}{T}
        \right] \grad T + \left[ T \pd{\eta^{\beta}}{\eps{l}} \right] \grad \eps{l} +
        \left[ T \pd{\eta^{\beta}}{\eps{g}} \right] \grad \eps{g} \right. \\
    %
    %
    %
    & \quad \qquad \qquad \qquad \qquad \left. \left. \left. + \sumj \left( \left[ T
        \pd{\eta^{\beta}}{\rho^{\gamma_j}} \right] \grad \rho^{\gamma_j} \right) \right\} \right] \cd
        \bv{\beta,s} \right\}. 
    \label{eqn:energy_temporary4}
\end{flalign}

Equation \eqref{eqn:energy_temporary4} expresses the energy balance for the bulk porous
medium.  Several of the terms can be simplified at this point.  Toward this goal, we will
\begin{enumerate}
    \item derive a relation for the energy transfer terms: $\suma \sumba
        \hat{Q}_{\beta}^{\al}$
    \item rewrite the stress term, $\suma \epsa \stress{\al} : \bd{\al}$, using
        constitutive relationships for $\stress{\al}$
    \item rewrite the momentum transfer terms, $\That{\gamma}{\beta}$, using the
        linearized momentum transfer from the entropy inequality,
        \eqref{eqn:mom_near_eq}
    \item rewrite the advective terms, $\bv{\beta,s}$, using Darcy's law, and
    \item relate the changes in entropy, $\pd{\eta^\al}{(\cd)}$, to material coefficients.
\end{enumerate}
The first two of these are discussed in the following two subsections. The third and
fourth come as a consequence of the first two, and the fifth will be discussed under
proper simplifications in future sections.

\subsubsection{Energy Transfer in the Total Energy Equation}
Consider the energy transfer and stress terms: $\hat{Q}_{\beta}^\al, \hat{Q}^{\aj}$,and $\stress{\al}$. From equations
\eqref{eqn:energy_restriction1} and \eqref{eqn:energy_restriction2} we recall that the
restrictions on the interface are
\begin{subequations}
    \begin{flalign}
        \sumj \left[ \hat{Q}^{\aj} + \hat{\foten{i}}^{\aj} \cd \bv{\aj,\al} + \hat{r}^{\aj}
            \left( e^{\aj} + \frac{1}{2} \bv{\aj,\al} \cd \bv{\aj,\al} \right) \right] &= 0
            \quad \forall \al, \\
            \suma \sumba \left[ \hat{Q}_{\beta}^{\aj} + \Thatbaj \cd \bv{\aj} + \ehatbaj
                \left( e^{\aj} + \frac{1}{2} \bv{\aj} \cd \bv{\aj} \right) \right] &= 0 \quad
                j=1:N.
            \end{flalign}
\end{subequations}
We also note the identity  
\begin{flalign}
    \hat{Q}_{\beta}^{\al} = \sumj \left[ \hat{Q}_{\beta}^{\aj} + \Thatbaj \cd \bv{\aj,\al}
    + \ehatbaj \left( e^{\aj,\al} + \frac{1}{2} \bv{\aj,\al} \cd \bv{\aj,\al}
    \right)\right]
\end{flalign}
(see Appendix A.2 of \cite{weinstein}).
With these three identities, the sum of the energy transfer terms can be written as
\begin{flalign}
    \notag \suma \sumba \hat{Q}_{\beta}^{\al} &= \suma \sumba \sumj \left[ \hat{Q}_{\beta}^{\aj}
        + \Thatbaj \cd \bv{\aj,\al} + \ehatbaj \left( e^{\aj,\al} + \frac{1}{2}
        \bv{\aj,\al} \cd \bv{\aj,\al} \right)\right] \\
    %
    %
    \notag &= \sumj \left\{ \suma \sumba \left[ \hat{Q}_{\beta}^{\aj} \right] \right. \\
    \notag & \qquad \qquad + \left. \suma \sumba \left[ \Thatbaj \cd \bv{\aj,\al} + \ehatbaj
        \left( e^{\aj,\al} + \frac{1}{2} \bv{\aj,\al} \cd \bv{\aj,\al} \right) \right]
    \right\} \\
        \notag &= \sumj \left\{ -\suma \sumba \left[ \Thatbaj \cd \bv{\aj} + \ehatbaj \left(
            e^{\aj} + \frac{1}{2} \bv{\aj}\cd \bv{\aj}
        \right)\right]  \right. \\
        \notag & \qquad \quad \left. + \suma \sumba \left[ \Thatbaj \cd \bv{\aj,\al} + \ehatbaj
            \left( e^{\aj,\al} + \frac{1}{2} \bv{\aj,\al} \cd \bv{\aj,\al} \right) \right]
        \right\} \\
    \notag &= -\sumj \left\{ \suma \sumba \left[ \Thatbaj \cd \bv{\al} + \ehatbaj e^{\al}
        \phantom{\frac{1}{2}} \right. \right. \\
    & \qquad \qquad \qquad \qquad \quad \left. \left. - \frac{1}{2} \ehatbaj \left(
    \bv{\aj,\al} \cd \bv{\aj,\al} - \bv{\aj}\cd \bv{\aj} \right) \right] \right\}. 
    \label{eqn:energy_transfer_temporary1}
\end{flalign}
%

Next we examine the momentum transfer term appearing in equation
\eqref{eqn:energy_transfer_temporary1}. Recall from equation
\eqref{eqn:mom_trans_phase_species} that 
\begin{flalign}
    \Thatba = \sumj \left[ \Thatbaj + \ehatbaj \bv{\aj,\al} \right].
\end{flalign}
Rearranging this identity and multiplying by the $\al-$phase velocity we see that  
\begin{flalign}
    \sumj \Thatbaj \cd \bv{\al} = \Thatba \cd \bv{\al} - \sumj \left[ \ehatbaj
        \bv{\aj,\al} \cd \bv{\al} \right].
    \label{eqn:momentum_transfer_temporary1}
\end{flalign}
Substituting \eqref{eqn:momentum_transfer_temporary1} into
\eqref{eqn:energy_transfer_temporary1}, simplifying, and neglecting the second-order terms
in velocity we see that 
\begin{flalign}
    \suma \sumba \left\{ \hat{Q}_{\beta}^\al \right\} &= -\sum_{\beta \ne l} \left\{
        \That{\beta}{l} \cd \bv{l,s} \right\} - \sum_{\beta \ne g} \left\{ \That{\beta}{g}
        \cd \bv{g,s} \right\} - \ehat{g}{l}\left( e^l - e^g \right).
        \label{eqn:energy_sum_energy_transfer}
\end{flalign}
Notice from this simplified version that we have eliminated the energy transfer in favor
of the mass and momentum transfer terms after summing over $\al$ (and neglecting
second-order effects).

\subsubsection{Stress in the Total Energy Equation}
We next derive the proper form of the stress term in equation
\eqref{eqn:energy_temporary4}.  The $\al-$phase stress near equilibrium is given by $\stress{\al} = -p^{\al} \soten{I} +
\Foten{\nu}^{\al} : \bd{\al}$ from the linearization of the fluid phase stress tensors
about equilibrium.  For the solid phase stress tensor, on the other hand, we will not use
constitutive relations for $\stress{s}$ but keep in mind that it is the sum of 
effective and
hydrating stresses (see equation \eqref{eqn:solid_stress}).  Therefore,
\begin{flalign}
    \notag \suma \left\{ \epsa \stress{\al} : \bd{\al} \right\} &= \sum_{\al = l,g}
    \left\{ \epsa \left( -p^{\al} \soten{I} + \Foten{\nu}^{\al} : \bd{\al} \right) :
    \bd{\al} \right\} + \eps{s} \stress{s} : \bd{s} \\
    &= -\sum_{\al = l,g} \left\{ \epsa p^{\al} \soten{I} : \bd{\al} \right\} + \sum_{\al =
    l,g} \left\{ \Foten{\nu}^{\al} : \bd{\al} : \bd{\al} \right\} + \eps{s} \stress{s} :
    \bd{s}.
\end{flalign}
The second term is likely negligible as the viscous terms typically play little role in
creeping flow. This means that $\suma \left\{ \epsa \stress{\al} : \bd{\al} \right\}$ can
be approximated by
\begin{flalign}
    \suma \left\{ \epsa \stress{\al} : \bd{\al} \right\} = - \sum_{\al=l,g} \left\{ \epsa
    p^{\al} \soten{I} : \bd{\al} \right\} + \eps{s} \stress{s} : \bd{s}.
\end{flalign}
Using indicial notation we note that for the fluid phases, $\soten{I} : \bd{\al} =
\soten{I} : (\grad \bv{\al})_{sym} = \delta_{ij} v^{\al}_{j,i} = v^{\al}_{i,i} = \diver
\bv{\al},$ and therefore the stress tensor terms can be simplified to
\begin{flalign}
    \suma \left\{ \epsa \stress{\al} : \bd{\al} \right\} = - \sum_{\al=l,g} \left\{ \epsa
    p^{\al} \diver \bv{\al} \right\} + \eps{s} \stress{s} : \bd{s}.
\end{flalign}

The solid phase rate-of-deformation tensor is related to the strain rate of the solid
phase.  Assuming that the strain rate is zero (for a rigid and incompressible solid), we
can neglect this term.  This implies that the stress tensor term in
\eqref{eqn:energy_temporary4} can be approximated by 
\[ \suma \left\{ \epsa \stress{\al} : \bd{\al} \right\} = - \sum_{\al = l,g} \left\{ \epsa
    p^\al \diver \bv{\al} \right\}. \]
Using Assumption \# 1 from Section \ref{sec:Assumptions} for the divergence of the
solid-phase velocity ($\diver \bv{s} = 0$), we finally conclude that the stress term in
\eqref{eqn:energy_temporary4} can be simplified to  
\begin{flalign}
    \suma \left\{ \epsa \stress{\al} : \bd{\al} \right\} = - \eps{l} p^{l} \diver \bv{l,s}
    - \eps{g} p^g \diver \bv{g,s}.
    \label{eqn:energy_sum_stress}
\end{flalign}
Not surprisingly, this states that the stress is related to the fluid pressures.

\subsubsection{Total Energy Balance Equation}\label{sec:TotalEnergyBalanceEqn}
In this subsection we use equations \eqref{eqn:energy_sum_energy_transfer} and
\eqref{eqn:energy_sum_stress} to simplify the energy balance equation,
\eqref{eqn:energy_temporary4}. Substituting these into \eqref{eqn:energy_temporary4} and canceling the momentum
transfer terms gives
\begin{flalign}
    \notag 0=& \rho c_p \dot{T} - \diver \left( \soten{K} \cd \grad T \right) + \rho h + \eps{l}
    p^l \diver \bv{l,s} + \eps{g} p^g \diver \bv{g,s} + \ehat{g}{l}\left( e^l - e^g \right) \\
    \notag &  + \suma \left\{ \epsa \rhoa \left[ \pd{\psia}{\eps{l}} + T
    \pd{\eta^{\al}}{\eps{l}} \right] \right\} \epsdot{l} + \suma \left\{ \epsa \rhoa
    \left[ \pd{\psia}{\eps{g}} + T \pd{\eta^{\al}}{\eps{g}} \right] \right\}
    \epsdot{g} \\
    \notag & + \suma \left\{ \epsa \rhoa \sumj \left[ \pd{\psia}{\rho^{l_j}} + T
        \pd{\eta^{\al}}{\rho^{l_j}} \right] \dot{\rho}^{l_j} \right\} \\
    \notag & + \suma \left\{ \epsa
    \rhoa \sumj \left[ \pd{\psia}{\rho^{g_j}} + T \pd{\eta^{\al}}{\rho^{g_j}} \right]
    \dot{\rho}^{g_j} \right\} \\
    \notag & + \sum_{\beta = l,g} \left\{ \left[ p^{\beta} \grad \eps{\beta} + \sumj
        \left[ \left( \suma \left( \eps{\al} \rho^\al \pd{\psi^\al}{\rho^{\beta_j}} \right) +
            \eps{\beta} \rho^\beta T \pd{\eta^\beta}{\rho^{\beta_j}} \right) \grad
            \rho^{\beta_j} \right] \right. \right. \\ 
    \notag & \qquad \qquad + \eps{\beta} \rho^{\beta} \left\{ \left[ T
        \pd{\eta^{\beta}}{T} \right] \grad T + \left[ T \pd{\eta^{\beta}}{\eps{l}} \right]
        \grad \eps{l} + \left[ T \pd{\eta^{\beta}}{\eps{g}} \right] \grad \eps{g} \right.
        \\
    %
    %
    %
    & \qquad \qquad \qquad \qquad \left. \left. \left. + \sumj \left( \left[ T
        \pd{\eta^{\beta}}{\rho^{\gamma_j}} \right] \grad \rho^{\gamma_j} \right) \right\}
        \right] \cd \bv{\beta,s} \right\}. 
\end{flalign}

Next we discuss the $\eps{\al} p^{\al} \diver \bv{\al,s}$ and $p^\al \grad \left(
\eps{\al} \right) \cd \bv{\al,s}$ terms.  Using the product rule it is clear that the sum
of these two terms gives $p^{\al} \diver \left( \eps{\al} \bv{\al,s} \right)$.
A choice is made here to remove these terms in lieu of mass transfer terms.  To do so, we recall from the mass balance
equation that
\[ \mds{\left( \epsa \rhoa \right)} + \diver \left( \epsa \rhoa \bv{\al,s} \right) =
    \sumba \ehatba, \]
and solve for $\diver \left( \epsa \bv{\al,s} \right)$: 
\begin{flalign*}
    \rhoa \diver \left( \epsa \bv{\al,s} \right) &= -\rhoa \epsdot{\al} - \eps{\al}
    \dot{\rho}^{\al} - \epsa \bv{\al,s} \cd \grad \rho^{\al} + \ehatba. 
\end{flalign*}
We have dropped the summation on the mass transfer term since we are assuming that the
solid phase is inert and that there are only two fluid phases.  Multiplying by $(p^\al /
\rhoa)$ gives an expression for $p^\al \diver \left( \epsa \bv{\al,s} \right)$:
\begin{flalign}
    p^\al \diver \left( \epsa \bv{\al,s} \right) &= -p^\al \epsdot{\al} - \left(
    \frac{\epsa p^\al}{\rhoa} \right) \dot{\rho}^\al - \left( \frac{\epsa p^\al}{\rhoa}
    \right) \bv{\al,s} \cd \grad \rhoa + \left( \frac{p^\al}{\rhoa} \right)
    \ehatba.
\end{flalign}

Substituting this into the energy equation gives
\begin{flalign}
    \notag 0=& \rho c_p \dot{T} - \diver \left( \soten{K} \cd \grad T \right) + \rho h +
    \left( \left( \frac{p^l}{\rho^l} + e^l \right) - \left(
    \frac{p^g}{\rho^g} + e^g \right) \right) \ehat{g}{l} \\
    \notag & + \left\{ -p^l +  \suma \left\{ \epsa \rhoa \left[ \pd{\psia}{\eps{l}} + T
        \pd{\eta^{\al}}{\eps{l}} \right] \right\} \right\} \epsdot{l} \\
    \notag & + \left\{ -p^g +  \suma \left\{ \epsa \rhoa \left[ \pd{\psia}{\eps{g}} + T
        \pd{\eta^{\al}}{\eps{g}} \right] \right\} \right\} \epsdot{g} \\
    \notag & + \sumj \left( \left[ - \left( \frac{\eps{l} p^l}{\rho^l} \right) +
        \suma \left\{ \epsa \rhoa \left[ \pd{\psia}{\rho^{l_j}} + T \pd{\eta^{\al}}{\rho^{l_j}}
        \right] \right\} \right] \dot{\rho}^{l_j} \right) \\
    \notag & + \sumj \left( \left[ - \left( \frac{\eps{g} p^g}{\rho^g} \right) +
        \suma \left\{ \epsa \rhoa \left[ \pd{\psia}{\rho^{g_j}} + T \pd{\eta^{\al}}{\rho^{g_j}}
        \right] \right\} \right] \dot{\rho}^{g_j} \right) \\
    %
    %
    \notag & + \sum_{\beta = l,g} \left\{ \left[ \sumj \left( \left[ - \left(
        \frac{\eps{\beta} p^\beta}{\rho^\beta} \right) + \suma \left\{ \eps{\al}
        \rho^\al \left[ \pd{\psi^\al}{\rho^{\beta_j}} + T
            \pd{\eta^\beta}{\rho^{\beta_j}} \right] \right\} \right] \grad \rho^{\beta_j}
            \right) \right. \right. \\ 
    \notag & \qquad \qquad + \eps{\beta} \rho^{\beta} \left\{ \left[ T
        \pd{\eta^{\beta}}{T} \right] \grad T + \left[ T \pd{\eta^{\beta}}{\eps{l}} \right]
        \grad \eps{l} + \left[ T \pd{\eta^{\beta}}{\eps{g}} \right] \grad \eps{g} \right.
        \\
    %
    %
    %
    & \qquad \qquad \qquad \qquad \left. \left. \left. + \sumj \left( \left[ T
        \pd{\eta^{\beta}}{\rho^{\gamma_j}} \right] \grad \rho^{\gamma_j} \right) \right\}
        \right] \cd \bv{\beta,s} \right\}. 
    \label{eqn:energy_temporary5}
\end{flalign}

There are several more simplifications that can be made. To help with these
simplifications recall the following definitions for enthalpy, pressure, wetting
potential, chemical potential, and entropy respectively:
\begin{subequations}
    \begin{flalign}
        H^\al &= \frac{p^\al}{\rhoa} + e^\al \\
        p^\beta &= \suma \sumj \left( \frac{\epsa \rhoa \rho^{\beta_j}}{\eps{\beta}}
        \pd{\psia}{\rho^{\beta_j}} \right) \\
        \pi^\beta &= \suma \left( \epsa \rhoa \pd{\psia}{\eps{\beta}} \right) \\
        \mu^{\beta_j} &= \psi^\beta + \suma \left( \frac{\epsa \rhoa}{\eps{\beta}}
        \pd{\psia}{\rho^{\beta_j}} \right) \\
        \eta^\al &= -\pd{\psia}{T}.
    \end{flalign}
\end{subequations}
With these identities in mind we make the following four simplifications:
\begin{enumerate}
    \item coefficient of the mass transfer term:
        \[ \left( \left( \frac{p^l}{\rho^l} + e^l \right) - \left(
            \frac{p^g}{\rho^g} + e^g \right) \right) \ehat{g}{l} = \left(H^l - H^g \right)
            \ehat{g}{l} := L \ehat{g}{l} \]
            Recalling that $\ehat{g}{l}$ is the rate of mass transfer between the fluid
            phases, $L$ is understood as the latent heat of evaporation since this
            represents the heat lost or gained due to phase exchanged between the fluids.
            This is consistent with the chemist's definition of latent heat as the change
            in enthalpy.

    \item coefficient of the time rates of change of volume fractions:
        \begin{flalign*}
            & -p^\beta + \suma \left\{ \epsa \rhoa \left[ \pd{\psia}{\eps{\beta}} + T
            \pd{\eta^\al}{\eps{\beta}} \right] \right\} \\
            & \quad = -p^\beta +
            \pi^\beta + T \pd{\pi^\beta}{T} \\
            & \quad = -\overline{p}^\beta - T \pd{\pi^\beta}{T},
        \end{flalign*}
        where we recall that $\overline{p}^\beta$ is thermodynamic pressure as defined in
        Chapter \ref{ch:Exploit}
        \[ p^\beta = \overline{p}^\beta + \pi^\beta. \]
        At this point we can exchange the time rates of change of volume fractions for
        time rates of change of saturation.  That is, recall $\epsdot{l} = \porosity
        \Sdot$ and $\epsdot{g} = -\porosity \Sdot$.  The sum of the two associated terms
        is
        \begin{flalign*}
            & \left( -\overline{p}^l - T \pd{\pi^l}{T} \right) \porosity \Sdot - \left(
            -\overline{p}^g - T \pd{\pi^g}{T} \right) \porosity \Sdot \\
            & \quad = \left[ \left(
                \overline{p}^g - \overline{p}^l \right) + T \left( \pd{\pi^g}{T} -
                \pd{\pi^l}{T} \right) \right] \porosity \Sdot.
        \end{flalign*}

        From the near equilibrium results from the entropy inequality we now recall (from
        equation \eqref{eqn:epsdot_near_eq}) that
        \begin{flalign}
            \overline{p}^\beta \Big|_{n.eq.} = \overline{p}^\beta \Big|_{eq.} - \tau
            \epsdot{\beta}
        \end{flalign}
        Therefore, the $\Sdot$ term becomes
        \begin{flalign}
            \left[ \left( \overline{p}^g \Big|_{eq.} - \overline{p}^l \Big|_{eq.} \right)
                + 2\tau \porosity \Sdot + T \pd{ }{T} \left( \pi^g - \pi^l \right) \right]
            \porosity \Sdot.
            \label{eqn:Sdot_energy_term}
        \end{flalign}
        The first set of parenthesis in \eqref{eqn:Sdot_energy_term} (approximately)
        represents the capillary pressure as measured at equilibrium,
        \[ \left( \overline{p}^g \Big|_{eq.} - \overline{p}^l \Big|_{eq.} \right) =
            p_c. \]
        This will be discussed in more detail in Section
        \ref{sec:cap_pressure_and_dyn_cap_pressure}. 
        The middle term in \eqref{eqn:Sdot_energy_term} is an effect of the dynamic
        pressure-saturation relationship (equation \eqref{eqn:epsdot_near_eq}). 
        The temperature derivative can be interpreted as the effect of temperature on the relative
        wetting potential.  That is, how much does temperature affect the relative
        affinity for one phase over the other. It is likely that a constitutive equation
        is needed for this relationship.

    \item coefficient of time rates of change of densities: \\ We wish to rewrite these
        coefficients in terms of enthalpy and chemical potential since it provides a
        mathematically simpler expression.
        \begin{flalign*}
            & \sumj \left( \left[ - \left( \frac{\eps{\beta} p^\beta}{\rho^\beta} \right)
                + \suma \left\{ \epsa \rhoa \left[ \pd{\psia}{\rho^{\beta_j}} + T
                    \pd{\eta^{\al}}{\rho^{\beta_j}} \right] \right\} \right]
                    \dot{\rho}^{\beta_j} \right) \\
            & \quad = \sumj \left( \left[ -\left( \frac{\eps{\beta} p^\beta}{\rho^\beta}
                \right) + \eps{\beta} \left( \mu^{\beta_j} - \psi^\beta \right) -
                \eps{\beta} T \pd{ }{T} \left( \mu^{\beta_j} - \psi^\beta \right) \right]
                \dot{\rho}^{\beta_j} \right) \\
            & \quad = -\eps{\beta} \left[ \frac{p^\beta}{\rho^\beta} + \psi^\beta + T
                \eta^\beta \right] \dot{\rho}^\beta + \eps{\beta} \sumj \left[ \left(
                    \mu^{\beta_j} - T \pd{\mu^{\beta_j}}{T} \right) \dot{\rho}^{\beta_j}
                    \right] \\
            & \quad = -\eps{\beta} H^{\beta} \dot{\rho}^\beta + \eps{\beta} \sumj \left[
                \left( \mu^{\beta_j} - T \pd{\mu^{\beta_j}}{T} \right)
                \dot{\rho}^{\beta_j} \right]
        \end{flalign*}
        where we recall that $H^\beta$ is the enthalpy of phase $\beta$. 

    \item coefficient of relative velocity: \\ For this coefficient we again use the
        definitions of pressure, chemical potential, and entropy.  We also rely on the
        Gibbs-Duhem relationship \eqref{eqn:Gibbs_Duhem}.
        \begin{flalign*}
            & \sumj \left( \left[ - \left(
                \frac{\eps{\beta} p^\beta}{\rho^\beta} \right) + \suma \left\{ \eps{\al}
                \rho^\al \left[ \pd{\psi^\al}{\rho^{\beta_j}} + T
                    \pd{\eta^\beta}{\rho^{\beta_j}} \right] \right\} \right] \grad
                    \rho^{\beta_j} \right) \\ 
            & \qquad + \eps{\beta} \rho^{\beta} \left\{ \left[ T
                \pd{\eta^{\beta}}{T} \right] \grad T + \left[ T \pd{\eta^{\beta}}{\eps{l}}
                    \right] \grad \eps{l} + \left[ T \pd{\eta^{\beta}}{\eps{g}} \right]
                    \grad \eps{g} \right.  \\
    %
    %
    %
            & \qquad \qquad \qquad \left. + \sumj \left( \left[ T
                \pd{\eta^{\beta}}{\rho^{\gamma_j}} \right] \grad \rho^{\gamma_j} \right)
            \right\} \\
            &= - \left( \frac{\eps{\beta} p^\beta}{\rho^\beta} \right) \grad \rho^\beta +
            \sumj \left[ \eps{\beta} \left( \mu^{\beta_j} - \psi^\beta \right) \grad
                \rho^{\beta_j} \right] + \eps{\beta} \rho^{\beta} c_p^\beta \grad T \\
            & \quad + \eps{\beta} \rho^{\beta} T \pd{ }{T} \left\{
                \pd{\psi^\beta}{\eps{l}} \grad \eps{l} + \pd{\psi^\beta}{\eps{g}} \grad
                \eps{g}  \phantom{\sumj} \right. \\
            & \qquad \qquad \qquad \quad \left. + \sumj \left(
            \pd{\psi^\beta}{\rho^{\beta_j}} \grad \rho^{\beta_j} \right) + \sumj \left(
            \pd{\psi^\beta}{\rho^{\gamma_j}} \grad \rho^{\gamma_j} \right) \right\} \\
            &= - \eps{\beta} \Gamma^\beta  \grad
            \rho^\beta + \sumj \left[ \eps{\beta} \mu^{\beta_j} \grad \rho^{\beta_j}
                \right] + \eps{\beta} \rho^{\beta} c_p^\beta \grad T \\
            & \quad + T \pd{ }{T} \left\{ - \left( \That{s}{\beta} + \That{\gamma}{\beta}
            \right) + p^\beta \grad \eps{\beta} + \suma \sumj \left[ \epsa \rhoa
                \pd{\psia}{\rho^{\beta_j}} \grad \rho^{\beta_j} \right] \right\} \\
            &= - \eps{\beta} \Gamma^\beta  \grad
            \rho^\beta + \sumj \left[ \eps{\beta} \mu^{\beta_j} \grad \rho^{\beta_j}
                \right] + \eps{\beta} \rho^{\beta} c_p^\beta \grad T \\
            & \quad + T \pd{ }{T} \left\{ \left( \eps{\beta} \right)^2 \soten{R}^\beta
            \cd \bv{\beta,s} + p^\beta \grad \eps{\beta} + \sumj \left[ \eps{\beta}
                \left( \mu^{\beta_j} - \psi^\beta \right) \grad \rho^{\beta_j} \right]
            \right\} \\
            &= - \eps{\beta} \Gamma^\beta  \grad
            \rho^\beta + \sumj \left[ \eps{\beta} \mu^{\beta_j} \grad \rho^{\beta_j}
                \right] + \eps{\beta} \rho^{\beta} c_p^\beta \grad T \\
            & \quad + T \pd{ }{T} \left\{ \left( \eps{\beta} \right)^2 \soten{R}^\beta \cd
            \bv{\beta,s} + p^\beta \grad \eps{\beta} - \eps{\beta} \psi^\beta \grad
            \rho^\beta + \sumj \left[ \eps{\beta} \mu^{\beta_j} \grad \rho^{\beta_j}
            \right] \right\} \\
            &= - \eps{\beta} \Gamma^\beta  \grad
            \rho^\beta + \sumj \left[ \eps{\beta} \mu^{\beta_j} \grad \rho^{\beta_j}
                \right] + \eps{\beta} \rho^{\beta} c_p^\beta \grad T \\
            & \quad + T \pd{ }{T} \left\{ \left( \eps{\beta} \right)^2 \soten{R}^\beta \cd
            \bv{\beta,s} - \eps{\beta} \Gamma^\beta \grad \rho^\beta + \left(
            \frac{p^\beta}{\rho^\beta} \right) \grad \left( \eps{\beta} \rho^\beta \right)
            + \sumj \left[ \eps{\beta} \mu^{\beta_j} \grad \rho^{\beta_j} \right] \right\}
        \end{flalign*}
        Since this coefficient is contracted with the relative velocity, $\bv{\beta,s}$,
        we can likely neglect the relative velocity term in the temperature derivative as
        it will result in second-order effects. This simplifies the coefficient of the
        relative velocity to
        \begin{flalign}
            \notag & - \eps{\beta} \Gamma^\beta  \grad
            \rho^\beta + \sumj \left[ \eps{\beta} \mu^{\beta_j} \grad \rho^{\beta_j}
                \right] + \eps{\beta} \rho^{\beta} c_p^\beta \grad T \\
            & \quad + T \pd{ }{T} \left\{ - \eps{\beta} \Gamma^\beta \grad \rho^\beta +
            \sumj \left[ \eps{\beta} \mu^{\beta_j} \grad \rho^{\beta_j} \right] + \left(
            \frac{p^\beta}{\rho^\beta} \right) \grad \left( \eps{\beta} \rho^\beta
        \right)\right\}.
        \end{flalign}
\end{enumerate}

After these four simplifications and rearrangements, equation \eqref{eqn:energy_temporary5} is now rewritten as
\begin{flalign}
    \notag 0 &= \rho c_p \dot{T} - \diver \left( \soten{K} \cd \grad T \right) + \rho h + L
    \ehat{g}{l} \\
    \notag & \quad + \left[ \left( \overline{p}^g\Big|_{eq.} - \overline{p}^l\Big|_{eq.}
        \right) + 2\tau \porosity \Sdot + T \pd{ }{T} \left( \pi^g - \pi^l \right)
        \right] \porosity \Sdot \\
    \notag & \quad - \porosity S H^l \dot{\rho}^l + \porosity S \sumj \left[ \left(
        \mu^{l_j} - T \pd{\mu^{l_j}}{T} \right) \dot{\rho}^{l_j} \right] \\
    \notag & \quad - \porosity (1-S) H^g \dot{\rho}^g + \porosity (1-S) \sumj \left[ \left(
        \mu^{g_j} - T \pd{\mu^{g_j}}{T} \right) \dot{\rho}^{g_j} \right] \\
    \notag & \quad + \sum_{\beta = l,g} \left\{ \left[ - \eps{\beta} \Gamma^\beta  \grad
            \rho^\beta + \sumj \left[ \eps{\beta} \mu^{\beta_j} \grad \rho^{\beta_j}
                \right] + \eps{\beta} \rho^{\beta} c_p^\beta \grad T \right. \right. \\
    & \qquad \qquad \quad \left. \left.  + T \pd{ }{T} \left\{ - \eps{\beta} \Gamma^\beta \grad
    \rho^\beta + \sumj \left[ \eps{\beta} \mu^{\beta_j} \grad \rho^{\beta_j} \right] +
    \left( \frac{p^\beta}{\rho^\beta} \right) \grad \left( \eps{\beta} \rho^\beta
\right)\right\} \right] \cd \bv{\beta,s} \right\}.
\label{eqn:energy_simplified}
\end{flalign}

Equation \eqref{eqn:energy_simplified} depends on temperature, wetting potentials,
enthalpies, chemical potentials, Gibbs potentials, saturation, densities, pressures, and
relative velocities. Since the Gibbs potentials are functions of densities and chemical
potentials this does not add more unknowns to the system of equations.  The pressures and
relative velocities can be paired with forms of Darcy's law, and constitutive equations
are needed for the enthalpies and wetting potentials. We now turn our attention to the
coupling of the fluid-phase mass balance equations and the present energy equation.

\section{Simplifying Assumptions -- A Closed System}\label{sec:simplifying_assumptions}
A host of simplifying assumptions can be made on the system consisting of equations
\eqref{eqn:general_mass_balance_fluid_species} (for $\al = l,g$) and
\eqref{eqn:energy_simplified}.  These are made to reduce the number of unknowns and
equations to a count that is more easily handled by numerical solvers.  This is also done
to avoid having to model any secondary (possibly second-order) physical processes
(examples of which include very slow processes such as those on the order of
$(\bv{l,s})^2$ or $(\bv{\aj,\al})^2$).  These assumptions are in addition to Assumptions
\#1 - \#3 made in Section \ref{sec:Assumptions}.
\begin{description}
    \item[Assumption \#\theAssNum:\stepcounter{AssNum}] Assume that the liquid phase is
        composed of a pure fluid with no additional species.
        Strictly speaking this is not realistic since the water in field measurements
        contains contaminants, dissolved solids, charged ions (such as sodium), and other
        impurities.  The consequence of this assumption is that the diffusive terms within
        the liquid mass balance equation are zero
        \[ \bv{l_j,l} = \bv{l,l} = \foten{0}. \]

    \item[Assumption \#\theAssNum:\stepcounter{AssNum}] The liquid phase is assumed to be
        incompressible. This assumption is valid under moderate pressures and allows us to
        remove the liquid phase material time derivative of density from the liquid mass balance
        equation
        \[ \md{l}{\rho^l} = 0. \]
        In isothermal conditions the density of the liquid phase can be assumed
        constant in space and time.  In the presence of thermal gradients, on the other
        hand, we presume that the density of the liquid phase is a function only of
        temperature given by the empirical model
        \begin{flalign}
            \rho^l(T) = 10^3 \left( 1 - 7.37 \times 10^{-6} \left( T - 277.15 \right)^2 +
            3.79 \times 10^{-8} \left( T - 277.15 \right)^3\right)
            \label{eqn:liquid_density_temperature}
        \end{flalign}
        measured in $kg/m^3$ (and where [$T$]=$K$). See Figure
        \ref{fig:DensityTemperaturePlotLiquid}.

    \item[Assumption \#\theAssNum:\stepcounter{AssNum}] The gas phase is assumed to be an
        ideal binary mixture of water vapor and inert {\it air}.  There are most certainly
        more than two species in most practical gas mixtures, but here we are 
        concerned with with the diffusion, evaporation, and condensation of water vapor
        within the gas mixture.  The other species are assumed to be non-reactive and are
        therefore all grouped together into the {\it air} species.  We choose the mixture
        to be ideal so that we can take advantage of the ideal gas law.  This is valid
        since (a) the gas pressures under most experimental considerations are close to
        atmospheric, (b) under Richards' assumption \cite{Pinder2006,Richards1931}, the
        bulk gas pressure doesn't vary much under most experimental considerations, and
        (c) the temperatures under consideration aren't {\it far} from standard room
        temperature.  The use of an ideal gas mixture will break down under higher
        pressures, higher temperatures, and possibly under high variations in temperature.  

    \item[Assumption \#\theAssNum:\stepcounter{AssNum}] The gas-phase chemical potentials
        and densities are only functions of the relative humidity and temperature
        \begin{flalign}
            \mu^{g_j} = \mu^{g_j}(\rh,T) \quad \rho^{g_j} = \rho^{g_j}(\rh,T).
            \label{eqn:mu_rho_functional_forms}
        \end{flalign}
        We make this
        assumption based on the fact that at the pore scale we can easily convert between
        the chemical potential, the density, and the relative humidity. Furthermore, this
        allows for us to tie the gas-phase mass balance equation to experimentally
        measurable quantities such as the relative humidity.  
        
        Just as at the pore scale, we define the macroscale relative humidity, $\rh$, via
        the saturated vapor density, $\rho_{sat}$, and the density of the water vapor in
        the mixture:
        \begin{flalign}
            \rho^{g_v} = \rho_{sat} \rh,
            \label{eqn:density_rel_hum_macro}
        \end{flalign}
        where $\rho_{sat} = \rho_{sat}(T)$ can be expressed through the empirical equation
        \begin{flalign}
            \rho_{sat} = \frac{\text{exp}\left( 31.37 - 6014.79/T - 7.92 \times 10^{-3} T
        \right)}{T} \times 10^{-3}.
            \label{eqn:rhosat_temp_macro}
        \end{flalign}
        (see Figure \ref{fig:DensityTemperaturePlotGas}).

        The chemical potential of the water vapor is defined through the ideal gas law as
        \begin{flalign}
            \mu^{g_v} = \mu^{g_v}_* + R^{g_v} T \ln \left( \lambda \rh \right)
            \label{eqn:water_vapor_chem_pot_macro}
        \end{flalign}
        where $\lambda = p_{sat} / p_*$ is a function of temperature from
        \eqref{eqn:rhosat_temp_macro} and $p_*$ is atmospheric pressure.

        The reason we are calling this an ``assumption'' is that, strictly speaking, these
        relationships hold for the pore-scale chemical potentials and pressures.  We are
        dealing with averaged (upscaled) quantities so we make the assumption that these
        quantities follow the same functional forms.  It is known that the upscaled
        pressure, density, and chemical potential are not the same as the pore-scale
        pressure, so in effect we are defining the upscaled relative humidity through
        these relationships.

\end{description}


\linespread{1.0}
\begin{figure}[H]
    \centering
    \subfigure[Liquid density vs. temperature]{
        \includegraphics[width=0.45\textwidth]{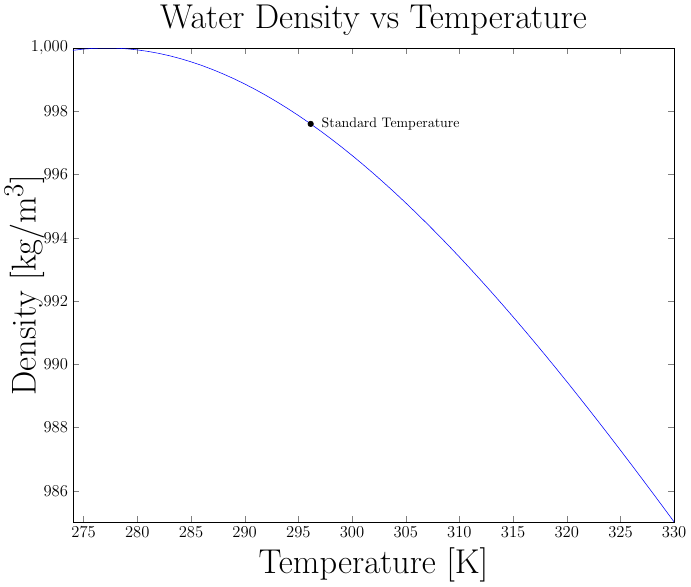}
        \label{fig:DensityTemperaturePlotLiquid}
    }
    \subfigure[Saturated vapor density vs. temperature]{
        \includegraphics[width=0.45\textwidth]{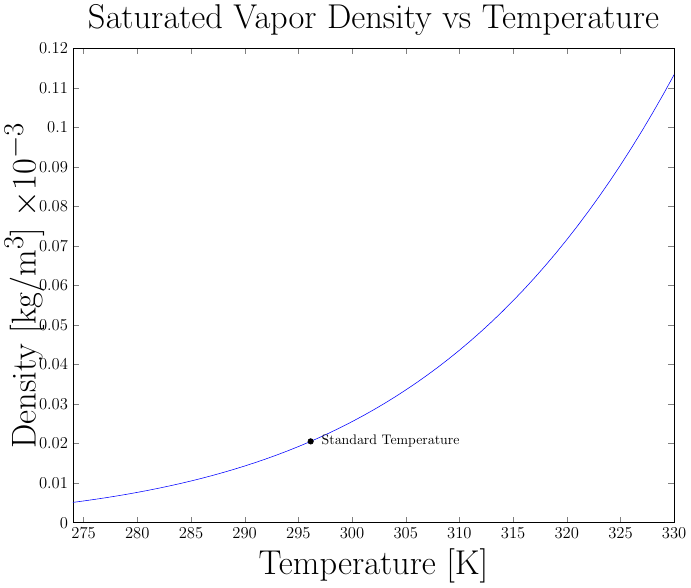}
        \label{fig:DensityTemperaturePlotGas}
    }
    \caption{Densities as functions of temperature}
        \label{fig:DensityTemperaturePlot}
\end{figure}
\renewcommand{\baselinestretch}{\normalspace}

Under assumptions 4 and 5 on the liquid phase we reflect now on the choice of the form of
Darcy's law for the liquid phase.  In the absence of species it may not be reasonable to
use the chemical potential form and instead revert to the pressure form. Recall from
equation \eqref{eqn:darcy_gibbs_form} that the Darcy flux for a fluid with one species is
driven by gradients in Gibbs potential and temperature.  Recall also that the coefficient
of the temperature gradient is the macroscale entropy. To side step the necessity of
modeling the liquid phase entropy and Gibbs potential directly we use the
pressure form of the Darcy flux: equation \eqref{eqn:Darcy}.  Given one liquid species, a
rigid solid phase, two gas species (see assumption \#6), and the assumption that the gas
densities are functions of temperature and relative humidity (see assumption \#7), the
Darcy flux for the liquid phase can be written as
\begin{flalign}
    \notag \eps{l} \soten{R}^l \cd \left( \eps{l} \bv{l,s} \right) &= - \eps{l} \grad p^l -
    \ppi{l}{l} \grad \eps{l} - \ppi{l}{g} \grad \eps{g} + \eps{l} \rho^l \foten{g} \\
    \notag & \quad + \left( \eps{g} \rho^g \pd{\psi^g}{\rho^l} + \eps{s} \rho^s
    \pd{\psi^s}{\rho^l} \right) \grad \rho^l - \eps{l} \rho^l \sum_{j=v,a} \left(
    \pd{\psi^l}{\rho^{g_j}} \grad \rho^{g_j}
    \right) \\
    \notag &= -\eps{l} \grad p^l - \porosity \left( \ppi{l}{l} - \ppi{l}{g} \right) \grad S +
    \eps{l} \rho^l \foten{g} \\
    \notag & \quad + \left( \eps{g} \rho^g \pd{\psi^g}{\rho^l} + \eps{s} \rho^s
    \pd{\psi^s}{\rho^l} \right) \pd{\rho^l}{T} \grad T \\
    \notag & \quad - \eps{l} \rho^l \sum_{j=v,a} \left( \pd{\psi^l}{\rho^{g_j}} \left[
        \pd{\rho^{g_j}}{T} \grad T + \pd{\rho^{g_j}}{\rh} \grad \rh \right] \right) \\
    &= -\eps{l} \grad p^l - \porosity \left( \ppi{l}{l} - \ppi{l}{g} \right) \grad S +
    \eps{l} \rho^l \foten{g} - \eps{l} C_T^l \grad T - \eps{l} C_\rh^l \grad \rh.
    \label{eqn:Darcy_pressure_one_liquid}
\end{flalign}
The functions $C_T^l$ and $C_\rh^l$ are implicitly defined by equations
\eqref{eqn:Darcy_pressure_one_liquid} and may be functions of any variable(s) from the set
of independent variables for the Helmholtz Potential.  
The coefficient of the saturation gradient can be rewritten as
\begin{flalign}
    \notag \porosity \left( \ppi{l}{l} - \ppi{l}{g} \right) &= \porosity \eps{l} \rho^l \left(
    \pd{\psi^l}{\eps{l}} - \pd{\psi^l}{\eps{g}} \right) \\
    \notag &= \porosity \eps{l} \rho^l \left( \pd{\psi^l}{\eps{l}} - \pd{\psi^l}{\eps{g}} \right)
    \\
    &= 2 \eps{l} \rho^l \pd{\psi^l}{S} := \eps{l} C_S^l. \label{eqn:CSl_defn}
\end{flalign}
This coefficient function measures the changes in liquid energy due to changes in
saturation while holding density fixed.  The notation chosen for these
coefficients is meant to be descriptive; the subscript indicates the associated gradient
and the superscript indicates the phase.

Dividing both sides of \eqref{eqn:Darcy_pressure_one_liquid} by $\eps{l}$ gives the
simplified pressure, saturation, temperature, and relative humidity formulation of the
liquid Darcy flux
\begin{flalign}
    \soten{R}^l \cd \left( \eps{l} \bv{l,s} \right) &= -\grad p^l  + \rho^l \foten{g} - C_S^l \grad S - C_T^l
    \grad T - C_\rh^l \grad \rh.
    \label{eqn:Darcy_p_S_T_rh}
\end{flalign}
The first two terms on the right-hand side are the classical Darcy terms, and the
functions $C_S^l, C_T^l,$ and $C_\rh^l$ are, as of yet, unknown.  All of these new functions
measure cross coupling effects due to the presence of other phases. Thought experiments
used to make sense of these new terms will be presented in Section
\ref{sec:cap_pressure_and_dyn_cap_pressure} after a deeper discussion of capillary
pressure. 


Under assumptions \#1 - \#7, the heat and mass transport system can now be written as:
\begin{subequations}
    \begin{flalign}
        \notag & \porosity \pd{S}{t} - \diver \left\{ \soten{K}^l \cd \left[ \grad p^l +
            C_S^l \grad S + C_T^l \grad T  + C_\rh^l \grad \rh - \rho^l \foten{g} \right]
        \right\} \\
        &\quad = \left( \frac{M \left( \rho^l - \rho^{g_v} \right) }{\rho^l} \right) \left(
        \mu^l - \mu^{g_v} \right)
        \label{eqn:liquid_mass_balance_pressure} \\
        \notag & \porosity (1-S) \pd{\rho^{g_v}}{t} - \porosity \rho^{g_v}  \pd{S}{t} \\
        \notag & \quad - \diver \left\{ \rho^{g_v}
        \soten{D}^{g_v} \cd \left[ \grad \mu^{g_v} - \foten{g} \right] \right\} \\
        \notag & \quad - \diver \left\{ \rho^{g_v}
        \soten{K}^g \cd \left[ \rho^{g_v} \grad \mu^{g_v} + \rho^{g_a} \grad \mu^{g_a}
            + \rho^g \eta^g \grad T - \rho^g \foten{g} \right] \right\} \\
        &\quad = -M \left( \rho^l - \rho^{g_v} \right) \left( \mu^l -
        \mu^{g_v}  \right)
        \label{eqn:gas_mass_balance_binary_ideal} \\
        \notag 0 &= \rho c_p \dot{T} - \diver \left( \soten{K} \cd \grad T \right) + \rho h + L
        \ehat{g}{l} \\
        \notag & \quad + \left[ \left( \overline{p}^g\Big|_{eq.} - \overline{p}^l\Big|_{eq.}
            \right) + 2\tau \porosity \Sdot + T \pd{ }{T} \left( \pi^g - \pi^l \right)
            \right] \porosity \Sdot \\
        \notag & \quad - \porosity (1-S) H^g \dot{\rho}^g + \porosity (1-S) \sum_{j=v,a}
        \left[ \left( \mu^{g_j} - T \pd{\mu^{g_j}}{T} \right) \dot{\rho}^{g_j} \right] \\
        \notag & \quad + \left[ \left( \rho^l c_p^l + e^l \frac{d \rho^l}{dT} \right)
            \grad T + \frac{T}{\eps{l}} \pd{p^l}{T} \grad \eps{l} \right] \cd \left( \eps{l} \bv{l,s}
            \right) \\
            \notag & \quad + \left[ - \Gamma^g  \grad \rho^g + \sum_{j=v,a}
            \left[ \mu^{g_j} \grad \rho^{g_j} \right] +
            \rho^{g} c_p^g \grad T \right. \\
        & \qquad \quad \left. + T \pd{ }{T} \left\{ - \Gamma^g \grad
        \rho^g + \sum_{j=v,a} \left[ \mu^{g_j} \grad \rho^{g_j} \right] +
        \left( \frac{p^g}{\eps{g} \rho^g} \right) \grad \left( \eps{g} \rho^g
    \right)\right\} \right] \cd \left( \eps{g} \bv{g,s} \right).
        \label{eqn:thermal_post_assumptions}
    \end{flalign}
    \label{eqn:system_post_assumptions}
\end{subequations}
This system of equations originated from mass, momentum, and energy conservation and was
supplemented with constitutive forms of the rates of mass, momentum, and energy transfer.
We used the incompressibility of the liquid phase to arrive at the fourth line
of the energy equation. In the gas phase, the change in pressure with temperature is
given via the ideal gas law:
\begin{flalign}
    p^g &= \left( \frac{\rho^g R}{M^g} \right) T, \label{eqn:ideal_gas_law} \\
 \pd{p^g}{T} &= \left( \frac{\rho^g R}{M^g} \right) + \left( \frac{T R}{M^g} \right)
    \pd{\rho^g}{T} 
\end{flalign}
where $M^g$ is the molar mass of the gas mixture and $R$ is the universal gas constant. In
the liquid phase, the change in pressure with temperature is the ratio of isobaric and
isothermal compressibilities of liquid water
\[ \pd{p^l}{T} = -\left( \frac{1}{V^l} \pd{V^l}{T} \right) \Big/ \left( -\frac{1}{V^l}
    \pd{V^l}{p^l} \right) = \frac{\al^l}{\beta^l}.  \] 

Recall that $\rho^l = \rho^l(T)$, $\rho^{g_v} = \rho^{g_v}(\rh,T)$,
$\mu^{g_j} = \mu^{g_j}(\rh,T)$, $\eps{l} = \porosity S$, $\eps{g} = \porosity (1-S)$,
and $\eta^\al = \eta^\al(T)$. Furthermore, $\eps{\al}
\bv{\al,s}$ is the Darcy flux associated with the $\al-$phase (see equation
\eqref{eqn:Darcy_ChemPot}) and the latent heat, $L$, is an empirically based function of
temperature. Therefore, assuming that the enthalpy, internal energy, and the linearization
coefficients are known functions of these same variables, equations
\eqref{eqn:liquid_mass_balance_pressure} - \eqref{eqn:thermal_post_assumptions} can be
seen as a closed system of equations in saturation ($S$), relative humidity ($\rh$), and
temperature ($T$). It remains to find relationships for the linearization coefficients,
the cross coupling Darcy terms, the gas-phase entropy,
the enthalpy, and the chemical potentials. In the next subsections we discuss
dimensional analysis, functional forms of the coefficients, and further simplifications
for each equation one at a time.

\subsection{Saturation Equation}\label{sec:liquid_simplifications}
In the liquid phase, the linearization constant, $\soten{K}^l$, is a function of the ease in
which fluid flows through the medium.  This is known as the hydraulic conductivity of the
medium.  The hydraulic conductivity is also known to be a function of the permeability of
the medium.  In saturated (rigid) media this is considered constant (or at least a
tensor), but in unsaturated media they are typically taken as functions of saturation. In
the present case, a careful inspection of the units indicate that 
\begin{flalign}
    \soten{K}^l = \frac{\soten{\kappa}}{\mu_l} = \frac{\soten{k}_c}{\rho^l g},
\end{flalign}
where $\soten{\kappa}$ is the permeability tensor of the medium, $\soten{k}_c$ is the
hydraulic conductivity tensor, and $\mu_l$ is the dynamic viscosity \cite{Bear1988,Pinder2006}.
Notationally ``$\mu^\al$'' (with a superscript) will denote chemical potential,
and ``$\mu_\al$'' (with a subscript) will denote dynamic viscosity.

The permeability, $\soten{\kappa}$, is typically separated into a saturated permeability,
$\soten{\kappa}_s$, and a relative permeability, $\kappa_{r\al}$. The relative
permeability is assumed to be a function of
saturation and depends on whether $\al$ is the wetting or non-wetting phase
\cite{Pinder2006}.  There are several functional forms of $\kappa_{r\al}$, but one of the
more commonly used is that of van Genuchten \cite{VanGenuchten1980},
\begin{subequations}
    \begin{flalign}
        \kappa_{rl} = \kappa_{rw} &= \left( S_{e} \right)^{1/2} \left\{ 1 - \left[ 1 -
            \left( S_{e} \right)^{1/m} \right]^m \right\}^2 
            \label{eqn:vanGenuchten_krw} \\
        \kappa_{rg} = \kappa_{rnw} &= \left[ 1 - \left( S_{e} \right) \right]^{1/3}
        \left[ 1 - \left( S_{e} \right)^{1/m} \right]^{2m},
        \label{eqn:vanGenuchten_krnw}
    \end{flalign}
\end{subequations}
where $m$ is a fitting parameter, and $S_{e}$ is the effective saturation defined by
\begin{flalign}
    S_{e} = \frac{S - S_{min}}{S_{max} - S_{min}} \quad S_{e} \in [0,1].
    \label{eqn:eff_saturation}
\end{flalign}
Typical values of $m$ are less than $1$ where $m=2/3$ is commonly used as a starting point
for fitting numerical models to experimental data.  Typical relative permeability curves
are shown in Figure \ref{fig:relative_permeabilities_vanGenuchten}.  The reader is to keep
in mind that there are several such models in the literature \cite{Bear1988,Pinder2006}.
The van Genuchten model simply constitutes a widely used relative permeability model. Note
that there is not a symmetry in $k_{rnw}$ and $k_{rw}$ in the sense that $k_{rnw}(S_{e})
\ne k_{rw}(1-S_{e})$ as would naively be assumed.  This is a manifestation of the fact
that unsaturated media {\it behave} differently during imbibition and drainage. The value
of $\soten{\kappa}_s$ is chosen based on the type of medium. If the medium is isotropic
then the tensorial notation can be dropped and values from Table
\ref{tab:typical_permeability} can be used.

\linespread{1.0}
\begin{figure}[ht!]
    \begin{center}
        \includegraphics[width=0.75\columnwidth]{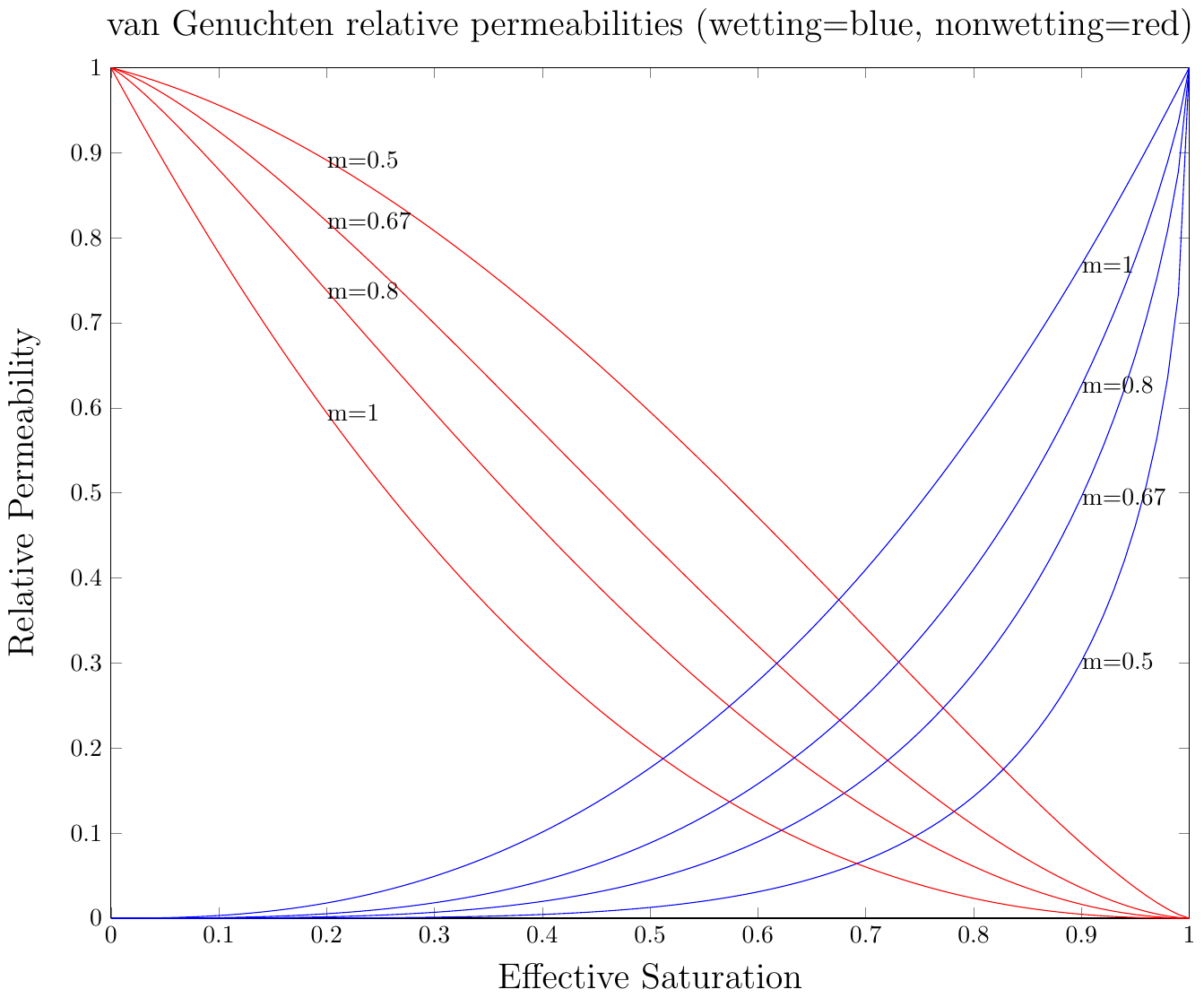}
    \end{center}
    \caption{van Genuchten relative permeability curves. The red curve shows the
        non-wetting phase, $\kappa_{rnw}(S_{e})$, and the blue curves show the wetting
        phase, $\kappa_{rw}(S_{e})$, each for $m=0.5, 0.67, 0.8,$ and $1$.}
    \label{fig:relative_permeabilities_vanGenuchten}
\end{figure}
\renewcommand{\baselinestretch}{\normalspace}

\subsubsection{Capillary Pressure and Dynamic Capillary Pressure}\label{sec:cap_pressure_and_dyn_cap_pressure}
The capillary pressure, $p_c$, is typically defined as the difference between the
non-wetting (gas) and wetting (liquid) phase pressures  when measured in a tube at
equilibrium
\begin{flalign}
    p_c = p_{non-wetting} - p_{wetting}. \label{eqn:cap_pressure_wet_nonwet} 
\end{flalign}
At the microsale, the difference is related to the surface tension of the fluid, the
contact angle, and the effective radius through the Young-Laplace equation (see Figure
\ref{fig:cap_tube})
\begin{flalign}
    p_c = \frac{2 \gamma \cos \theta}{r}.
    \label{eqn:Young_Laplace}
\end{flalign}
\linespread{1.0}
\begin{figure}[ht*]
    \begin{center}
        \begin{tikzpicture}
            \draw[fill=blue!40] (0,0) arc (220:320:1.3cm) -- (2,-2) --
            (0,-2) -- cycle;
            \draw[thick] (0,2) -- (0,-2);
            \draw[thick] (2,2) -- (2,-2);
            \draw[dashed] (-0.5,1) -- (0.5,-1);
            \draw (0,0) -- (1,1);
            \draw (0,0) -- (1,0);
            \draw (0.5,0.5) node[anchor=south]{$1/\kappa$};
            \draw (0.5,0) node[anchor=north]{$r$};
            \draw[dotted] (1,1) -- (1,0);
            \draw (-0.10,-0.25) node[anchor=north west]{$\theta$};
            \draw (0.1,0.25) node[anchor=south east]{$\theta$};
            \draw (0.1,0.1) node[anchor=west]{$\theta$};
        \end{tikzpicture}
    \end{center}
    \caption{Contact angle and effective radius in a capillary tube geometry. $\theta$ is
    the contact angle, $r$ is the effective radius, and $\kappa$ is the radius of
curvature of the interface.}
    \label{fig:cap_tube}
\end{figure}
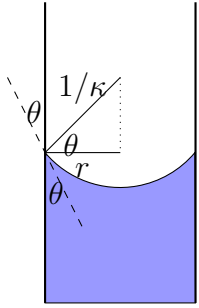
\renewcommand{\baselinestretch}{\normalspace}
The question is which {\it pressure} (thermodynamic, classical, or wetting (see Section
\ref{sec:pressure})) represents the non-wetting and wetting pressures in equation
\eqref{eqn:cap_pressure_wet_nonwet}. The capillary pressure is measured with a force
transducer in the same manner that the {\it classical} pressure is measured.  For this
reason we define the capillary pressure as
\begin{flalign}
    p_c = p^g - p^l.
    \label{eqn:capillary_pressure_defn}
\end{flalign}

Now that we understand which {\it pressure} is associated with the capillary pressure we
turn to the entropy inequality to derive a constitutive equation equation for the time
rate of change of saturation.  In Richards' equation it is standard practice (as mentioned
in Section \ref{sec:richards}) to take the capillary pressure as a function of saturation.
These relations are reasonable for an equilibrium relationships. In the present modeling
effort we look toward the entropy inequality to determine an appropriate form of $p_c$
away from equilibrium. In the entropy inequality (equation \eqref{eqn:entropysimplified}) there are two terms associated with the
time rate of change of saturation:
\[ - \overline{p}^l \epsdot{l} \quad \text{and} \quad - \overline{p}^g \epsdot{g}. \]
Since $\epsdot{g} = - \epsdot{l}$ these terms can be combined to give $
(\overline{p}^g - \overline{p}^l) \epsdot{l}$. The time rate of change of volume
fraction is a constitutive variable so the associated linearized equation is
\begin{flalign}
    \left( \overline{p}^g - \overline{p}^l \right) \Big|_{n.eq} &= \left( \overline{p}^g -
    \overline{p}^l \right) \Big|_{eq} - \tau \epsdot{l},
    \label{eqn:epsdotl_near_eq}
\end{flalign}
where the equilibrium state is not necessarily zero and the minus sign is chosen to be
consistent with the entropy inequality.  From the three pressures relationship,
\eqref{eqn:three_pressures}, the classical pressure is given as
\[ p^\al = \overline{p}^\al + \pi^\al \]
where $\overline{p}^\al$ is the {\it thermodynamic} pressure and $\pi^\al$ is a wetting
potential.  The difference in thermodynamic pressures is therefore rewritten as 
\begin{flalign*}
    \overline{p}_c := \overline{p}^g - \overline{p}^l &= \left( p^g - \pi^g \right) - \left( p^l - \pi^l
    \right) = p_c - \left( \pi^g - \pi^l \right) = p_c - \pi_c
\end{flalign*}
and equation \eqref{eqn:epsdotl_near_eq} becomes
\begin{flalign}
    \left( p_c - \pi_c \right) \Big|_{n.eq} &=  \left( p_c - \pi_c \right) \Big|_{eq} - \tau \epsdot{l}.
    \label{eqn:epsdotl_near_eq_with_wetting}
\end{flalign}
Rewriting we get
\begin{flalign}
    p_c \Big|_{n.eq} = p_c \Big|_{eq} + \left( \pi_c\Big|_{n.eq} - \pi_c\Big|_{eq}
    \right) - \tau \epsdot{l}
    \label{eqn:pc_neareq_epsdot}
\end{flalign}
We assume that the effect of the solid phase on the capillary pressure is completely
captured by the preferential wetting, $\pi_c$. Without the solid phase, the normal
pressures of the liquid and gas phases are zero (this is the case with a flat interface).
With this assumption the thermodynamic pressures are equal across the phases at
equilibrium. Therefore, $\overline{p}_c|_{eq} = 0$.  This implies that $p_c|_{eq} =
\overline{p}_c|_{eq} + \pi_c|_{eq} = \pi_c|_{eq}$.  Therefore the capillary pressure at
equilibrium is interpreted as the difference in wetting potential and we arrive at an
expression that is similar to that found in \cite{Hassanizadeh2002}. To avoid possible
confusion we will continue to use the symbols $p_c|_{eq}$ in place of $\pi_c|_{eq}$ even
though they are understood to be the same.

We finally arrive at an expression relating the classical liquid-phase
pressure that appears in Darcy's law, $p^l|_{n.eq}$, and the capillary pressure,
$p_c|_{eq}$:
\begin{flalign}
    -p^l \Big|_{n.eq} = p_c\Big|_{eq} + \left( \pi_c\Big|_{n.eq} - \pi_c\Big|_{eq}
    \right) - p^g\Big|_{n.eq} - \tau \epsdot{l}.
    \label{eqn:pressure_wetting_epsdot}
\end{flalign}
If the deviation in the wetting potential from equilibrium is assumed to be small relative to
the pressure and the dynamic effects we can approximate the liquid pressure as
\begin{flalign}
    -p^l \Big|_{n.eq} \approx p_c\Big|_{eq} - p^g\Big|_{n.eq} - \tau \epsdot{l},
\end{flalign}
where it is possible that $p^g \approx 0$ as well (in fact, this is a common assumption).
To see why the deviation in wetting potential might be small, consider that in equation
\eqref{eqn:pc_neareq_epsdot} if $p_c|_{n.eq} \approx p_c|_{eq}$ then the saturation
dynamics is driven by the deviation in wetting potential.  The deviation in wetting
potential measures how much the shape of the curved liquid-gas interface is away from
equilibrium.  In slow flows it is unlikely that this deviation is significant.

As mentioned in Section \ref{sec:richards}, the (equilibrium) capillary pressure can be
related to the effective saturation through the van Genuchten $p_c - S$ relationship.
This relationship depends on several fitting parameters and is given as
\begin{flalign}
    p_c(S_e) = \left( \frac{1}{\al} \right) \left( S_e^{-1/m} - 1 \right)^{1-m},
    \label{eqn:van_Genuchten_capillary_pressure}
\end{flalign}
where $\al$ has units of reciprocal pressure and $m$ is the same fitting parameter as in
the relative permeabilities \eqref{eqn:vanGenuchten_krw} \cite{Bear1988,Pinder2006}. See
Figure \ref{fig:vanGenuchten_pcS_curves} for several examples of capillary pressure -
saturation curves for various sets of parameters.  Generally speaking, $m$ increases
(toward 1) as the soil becomes more densely packed.
\linespread{1.0}
\begin{figure}[ht*]
    \begin{center}
        \includegraphics[width=0.9\columnwidth]{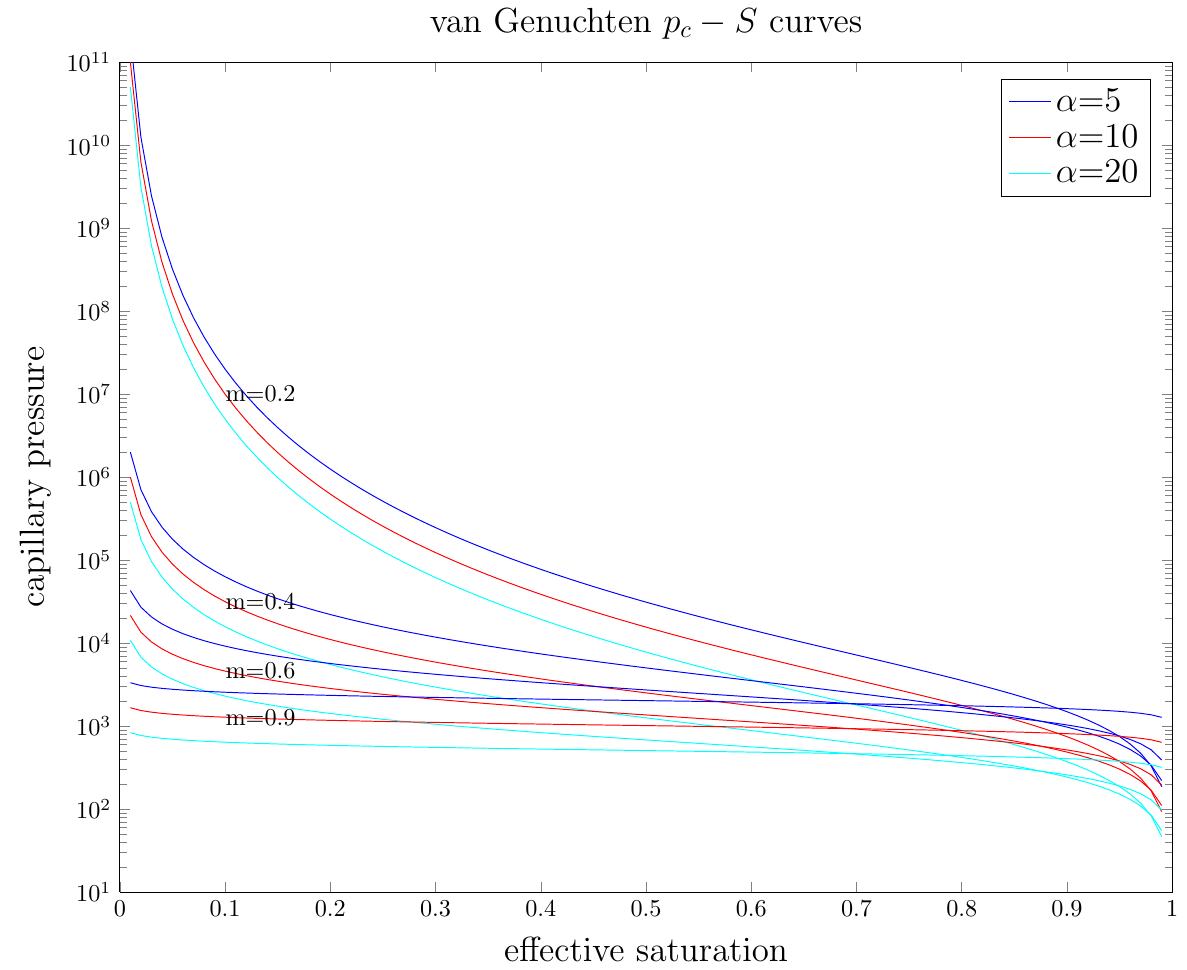}
    \end{center}
    \caption{Examples of van Genuchten capillary pressure - saturation curves for various
    parameters.}
    \label{fig:vanGenuchten_pcS_curves}
\end{figure}
\renewcommand{\baselinestretch}{\normalspace}

Substituting the capillary pressure and van Genuchten relationships into Darcy's law,
\eqref{eqn:Darcy_p_S_T_rh}, the fluid flux becomes
\begin{flalign}
    \notag \soten{R} \cd \left( \eps{l} \bv{l,s} \right) &= \left( \frac{d p_c}{dS} - C_S^l \right)
    \grad S - \tau \porosity \grad \dot{S} + \rho^l \foten{g} \\
    & \quad - \grad p^g - C_\rh^l \grad \rh - C_T^l \grad T + \grad 
    \underbrace{\left( \pi_c |_{n.eq} - \pi_c |_{eq} \right)}_{\approx 0}.
    \label{eqn:Darcy_S_Sdot_T_rh}
\end{flalign}
To understand the newly terms proposed here, we make the following three comments:
\begin{enumerate}
    \item First consider the gas pressure and relative humidity terms. In the absence of
        gravity, if the saturation, temperature, and the change in capillary wetting
        potential are held fixed then \eqref{eqn:Darcy_S_Sdot_T_rh} states that flow is
        driven by gradients in relative humidity and gas-phase pressure. The gas-phase
        pressure and the relative humidity are proportional to each other where the
        constant of proportionality is a function of temperature and the species
        densities. With this in mind, these two terms together can be rewritten as a gradient
        in gas pressure.  While a gradient in gas pressure can certainly cause flow, it is
        commonly assumed that $p^g$ is approximately constant (known as Richards'
        assumption \cite{Pinder2006}) and therefore these terms are typically neglected.
        If these terms are not neglected then they are best written as a single gradient
        of relative humidity for easy coupling with the gas-phase diffusion equation
        \[ C_\rh^l \grad \rh \gets \grad p^g + C_\rh^l \grad \rh. \]
    \item Next consider the gradient of temperature term. In the absence of gravity, if
        saturation and relative humidity are held fixed then \eqref{eqn:Darcy_S_Sdot_T_rh}
        states that flow is driven by a gradient in temperature. Saito et
        al.\,\cite{Saito2006} indicated that the thermally induced flow was negligible as
        compared to isothermal flow (also discussed in \cite{Smits2011,Webb1998}). This
        indicates that the $\grad T$ term in \eqref{eqn:Darcy_S_Sdot_T_rh} is likely quite
        small. 
    \item Finally we discuss the role of $C_S^l = 2 \rho^l \pd{\psi^l}{S}$. This function
        (or constant) relates the changes in energy with respect to saturation.  The term
        is already associated with the gradient in saturation as seen in equation
        \eqref{eqn:Darcy_S_Sdot_T_rh}. From the $\grad S$ term in this equation
        we can see $C_S^l$ as an {\it enhancement} of the capillary
        pressure - saturation relationship that directly models the affinity for the
        liquid phase to the other phases. It is entirely likely that this term is so
        closely linked with the capillary pressure that in experimental settings it is
        impossible to discern this effect from others.

\end{enumerate}

The saturation equation can finally be written as
\begin{flalign}
    \notag & \porosity \pd{S}{t} - \diver \left[ \soten{K}(S) \cd \left(  \left\{ -\frac{d
    p_c}{dS} + C_S^l \right\} \grad S_e + \tau \porosity \grad \dot{S}_e + C_T^l \grad T +
    C_\rh^l \grad \rh - \rho^l \foten{g} \right) \right] \\
    &= \frac{M}{\rho^l} \left( \rho^l - \rho^{g_v} \right)\left(
    \mu^l - \mu^{g_v} \right),
\end{flalign}
where we have assumed that $\pi_c|_{n.eq} - \pi_c|_{eq} \approx 0$ and, abusing notation
slightly, the $C_\rh^l$ term has be redefined to incorporate changes in the gas pressure.

To account for the residual (minimum) saturation, the saturation is scaled to the {\it
effective saturation} according to $S_e = (S - S_{min}) / (S_{max} - S_{min})$.  Defining
$\porosity_S$ as the product of porosity and the difference in maximal and minimal
saturation, $\porosity_S := \porosity (S_{max} - S_{min})$, and letting $S$ notationally
stand for $S_e$ allows us to write the saturation equation as
\begin{flalign}
    \notag & \pd{S}{t} - \diver \left[ \porosity_S^{-1} \soten{K}(S) \cd \left( \left\{ -
        \frac{dp_c}{dS} + C_S^l\right\} \grad S + \tau \porosity_S \grad \dot{S} +  C_T^l
    \grad T + C_\rh^l \grad \rh - \rho^l \foten{g} \right) \right] \\
    &\quad =
    \frac{M^l}{\porosity_S \rho^l} \left( \rho^l - \rho^{g_v} \right) \left( \mu^l -
    \mu^{g_v} \right)
    \label{eqn:saturation_with_T_rh}
\end{flalign}
At a quick
glance, the sign of the $\grad S$ term looks suspicious as it seems to indicate a
backward heat equation. Observe that $p_c'(S) < 0$ for all values of $S$.  Taking only
the first line with $C_S^l = 0$ returns Richards' equations exactly. The $\grad \dot{S}$
term (henceforth referred to as the {\it dynamic saturation term}) was originally proposed
by Hassanizadeh et al. in several publications (examples include
\cite{Hassanizadeh2002,Joekar-Niasar2010}) and is gaining more widespread acceptance in
the porous media community. Taking all of the terms on the first line (again with $C_S^l =
0$) along with the dynamic saturation term gives a closed pseudo-parabolic equation in
saturation.  The $C_S^l, C_T^l, C_\rh^l$ terms along with the form of the right-hand side
are all novel to this work. The temperature and relative humidity coupling terms can
certainly be taken to be zero in certain physical instances, but generally the relative
weight and functional forms of these terms is, as of yet, unknown.

We now turn out attention to the gas phase diffusion equation. Analysis
and numerical solutions to the saturation equation will be considered in Chapter
\ref{ch:Transport_Solution}.

\subsection{Gas Phase Diffusion Equation}\label{sec:vapor_diffusion}
In this subsection we make certain simplifications to the gas-phase diffusion equation so as
to tie the chemical potential formulation to the more classical enhanced diffusion model.
As a first step toward this simplification we consider the fact
that the gas phase chemical potentials are related to each other through equation
\eqref{eqn:diffusion_ficks_sum}; the expression for the relative motion of diffusing
species in a binary system:
\[ \sumj \left\{ \left( \frac{\rhoaj}{R^{g_j} T} \right) \soten{D}^\al \cd \left[ \grad
    \mu^{\aj} - \foten{g} \right] \right\} = \foten{0}. \]
With this, the gradient of chemical
potential of the inert air in
\eqref{eqn:gas_mass_balance_binary_ideal} can be rewritten as a function of the water
vapor chemical potential
\[ \rho^{g_a} \grad \mu^{g_a} = - \left( \frac{R^{g_a} \rho^{g_v}}{R^{g_v}} \right) \left(
    \grad \mu^{g_v} - \foten{g} \right) + \rho^{g_a} \foten{g}. \]
This means that the gas-phase mass balance equation can be rewritten as
\begin{flalign}
    \notag & \pd{ }{t} \left( \porosity  \rho^{g_v}_{sat} \rh (1-S) \right) \\
    \notag & \quad -\diver \left\{ \rho^{g_v} \left[ \soten{D}^{g_v} + \rho^{g_v} \left( 1
        - \frac{R^{g_a}}{R^{g_v}} \right) \soten{K}^g \right] \cd \left[ \grad \mu^{g_v} -
            \foten{g} \right] \right\} \\
    & \quad - \diver \left\{ \rho^g \rho^{g_v} \eta^g \soten{K}^g \cd \grad T\right\} =
    -M \left( \rho^l - \rho^{g_v} \right) \left( \mu^l - \mu^{g_v} \right).
        \label{eqn:gas_mass_balance_binary_ideal_simplified} 
\end{flalign}

Typically, one would choose a functional form of $\soten{D}^{g_v}$ to match the
enhancement model discussed in Section \ref{sec:phillip_devries} and the functional form
of $\soten{K}^g$ from the van Genuchten model discussed in Section
\ref{sec:liquid_simplifications}.  In the present case we argue to use different
functional forms of $\soten{D}^{g_v}$ and $\soten{K}^g$. This is done by considering the
conversions between the pore-scale density and chemical potential to the relative
humidity. For simplicity the tensorial notation is dropped and we assume that the
diffusion and conductivity tensors are all scalar multiples of the identity matrix.

We begin with some logical considerations for the gas-phase diffusion coefficient.  If the
gas-phase volume fraction were to drop to zero then there would be no gas in the pore
space (or their would be no pore space) and the diffusion coefficient should drop to zero.
Similarly, if the gas-phase volume fraction were to increase to 1 (100\% gas with no solid
or liquid), then the diffusion coefficient should return to the Fickian diffusion
coefficient $D^g$.  
With these two limiting cases in mind we first propose that $D^{g_v} =
C \eps{g} D^g$ where $C$ is a scaling parameter.  

As seen in Chapter \ref{ch:diffusion_comp}, the diffusion coefficient is modified for
Fick's law based on the dependent variable of interest.  In equations
\eqref{eqn:Ficks_mass_flux} and \eqref{eqn:Ficks_mass_flux_chempot} we see a scalar factor
of $1 / (R^{g_v} T)$ between the mass and chemical potential forms of Fick's law. Making
the same modification here along with the factor of $\eps{g}$ suggested above we get
\begin{flalign}
    \rho^{g_v} D^{g_v} \grad \mu^{g_v} \to \rho^{g_v} \left( \frac{\eps{g}}{R^{g_v} T}
    \right) D^g \grad \mu^{g_v} = \left( \frac{\porosity \rho_{sat}\rh(1-S)}{R^{g_v} T} \right)
    D^g \grad \mu^{g_v}
\end{flalign}
where $D^g$ is the same pore-scale diffusion coefficient as found in Chapter
\ref{ch:diffusion_comp}. One simple way to look at this conversion is that it scales out
the units and magnitude of the chemical potential when converting to relative humidity.
That is, $D^{g_v}\grad \mu^{g_v}$ and $D^g / (R^{g_v} T) \grad \rh$ have the same units
and magnitude. 
A further justification of this is found by recalling the pore-scale definition of the
chemical potential:
\begin{flalign}
    \notag \mu^{g_v} &= \mu^{g_v}_* + R^{g_v} T \ln \left( \frac{p^{g_v}}{p^g} \right) \\
    &= \mu^{g_v}_* + R^{g_v} T \ln \left( \lambda \rh \right),
    \label{eqn:chem_pot_pressure_rel_hum}
\end{flalign}
where $\lambda = p^{g_v}_{sat} / p^g$ and $p^{g_v}_{sat}$ is the partial pressure of the
water vapor under saturated conditions. Taking the gradient of
\eqref{eqn:chem_pot_pressure_rel_hum} and neglecting the temperature variation gives
\[ \grad \mu^{g_v} \approx \frac{R^{g_v} T}{\rh} \grad \rh. \]
Hence we see the exact conversion used in Fick's law.

Next we turn our attention to the hydraulic conductivity term that arose from Darcy's law:
$\rho^{g_v} K^g \grad \mu^{g_v}$. Similar to that of Fick's law, we need to scale the
conductivity to account for the fact that we're using the chemical potential as the
dependent variable. Unlike the Fickian diffusion coefficient, this term already has the
proper units since the units of $\rho^{g_v} \grad \mu^{g_v}$ are the same as the gradient
of pressure.  Therefore we seek a scaling that is unitless but scales the magnitude of the
chemical potential down to that of pressure. That is, we need a constant, $c$, such that 
$c \rho^{g_v} K^g \grad \mu^{g_v}$ and $K^g \grad p^{g}$ have approximately the same
magnitude.  

Taking the gradient of both sides of the first line of equation
\eqref{eqn:chem_pot_pressure_rel_hum} we arrive at
\begin{flalign*}
    \grad \mu^{g_v} &= \left( \frac{R^{g_v} T p^g}{p^{g_v}} \right) \grad \left(
    \frac{p^{g_v}}{p^g} \right) \\
    &= \left( \frac{R^{g_v} T p^g}{p^{g_v}} \right) \left( \left( \frac{1}{p^g} \right)
    \grad p^{g_v} - \left( \frac{p^{g_v}}{(p^g)^2} \right) \grad p^g \right) \\
    &= \left( \frac{R^{g_v} T}{p^{g_v}} \right) \grad p^{g_v} - \left(
    \frac{R^{g_v} T}{p^g} \right) \grad p^g.
\end{flalign*}
The coefficient of the gradient of gas-phase pressure can be rewritten as
\[ \frac{R^{g_v} T}{p^g} = \frac{\rho_{sat} R^{g_v} T}{\rho_{sat} p^g} =
    \frac{p_{sat}^{g_v}}{\rho_{sat} p^g} = 
    \frac{\lambda}{\rho_{sat}}. \]
Since the chemical potential form already has a factor of $\rho^{g_v} = \rho^{g_v}_{sat} \rh$ we
scale $K^g$ by $\lambda$ to account for the difference in magnitude
between the chemical potential and the pressure. Hence, the Darcy term in equation
\eqref{eqn:gas_mass_balance_binary_ideal_simplified} is rewritten as
\[ \rho^{g_v} \left( 1-\frac{R^{g_a}}{R^{g_v}} \right) K^g \grad \mu^{g_v} \to \rho^{g_v}
    \left( 1-\frac{R^{g_a}}{R^{g_v}} \right) \left( \lambda K^g \right) \grad \mu^{g_v}. \]
Keep in mind that this is a scaling of the hydraulic conductivity; just as the factor of
$1/(R^{g_v}T)$ is a scaling of the diffusion coefficient in Fick's law.  

One point of interest for this choice of scaling factor is that it is invisible when we
consider a {\it pure} gas phase.  That is, $\lambda = 1$ when no species are considered
since the saturated partial pressure will simply be the bulk pressure.  This indicates that
we have not actually changed Darcy's law. Instead we have simply made a conversion to
account for the use of a different dependent variable.

Next we focus on writing the gas-phase diffusion equation
\eqref{eqn:gas_mass_balance_binary_ideal_simplified} in terms of relative humidity,
saturation, and temperature. To do this we replace the chemical potential with relative
humidity and temperature via equation \eqref{eqn:chem_pot_pressure_rel_hum}. 
Taking the gradient of the chemical potential in equation
\eqref{eqn:chem_pot_pressure_rel_hum} we get
\[ \grad \mu^{g_v} = \frac{R^{g_v} T}{\rh} \grad \rh + \left( \frac{R^{g_v} T}{\lambda}
    \frac{d \lambda}{dT} + R^{g_v} \ln(\lambda \rh) \right) \grad T.\]
With the Fickian and Darcy terms written in terms of the relative humidity, along with the
fact that the saturated vapor density is a function of temperature, the vapor diffusion
equations can be written as
\begin{flalign}
    \notag & \pd{ }{t} \left( \porosity \rho_{sat} \rh (1-S) \right) \\
    \notag & \quad - \diver \left\{ \rho_{sat}\rh \left[ \frac{\porosity
    (1-S)}{R^{g_v} T} D^g + \rho_{sat}\rh \left( 1-\frac{R^{g_a}}{R^{g_v}} \right) \left(
    \lambda K^g \right) \right] \right. \\
    \notag & \quad \qquad \qquad \qquad \left. \cd \left[ \frac{R^{g_v} T}{\rh} \grad \rh
        + \left( \frac{R^{g_v} T}{\lambda} \frac{d \lambda}{dT} + R^{g_v} \ln(\lambda \rh)
    \right) \grad T \right] \right\} \\
    \notag & \quad - \diver \left\{ \rho^g \rho_{sat}\rh \eta^g \left( \lambda K^g \right)
\grad T \right\} = - M \left(\rho^l - \rho_{sat}\rh \right)\left( \mu^l - \mu^{g_v}
\right).
\end{flalign}
Combining like terms, dividing by the porosity, replacing the hydraulic conductivity by
the saturated and relative permeabilities,  and simplifying gives
\begin{flalign}
    \notag & \pd{ }{t} \left( \rho_{sat} \rh (1-S) \right) \\
    \notag & \quad -\diver \left\{ \rho_{sat}
        \mathcal{D}(\rh,S,T) \left[ \grad \rh - \frac{\foten{g} \rh}{R^{g_v} T} \right]
    \right\} - \diver \left\{ \rho_{sat} N^g(\rh,S,T) \grad T\right\} \\
    & \qquad =
    -\frac{ M^l \left( \rho^l - \rho^{g_v} \right)}{\porosity} \left( \mu^l - \mu^{g_v} \right),
    \label{eqn:full_diffusion_equation}
\end{flalign}
where the functions $\mathcal{D}$ and $N^g$ are 
\begin{subequations}
    \begin{flalign}
        \mathcal{D}(\rh,S,T) &:= (1-S) D^g + \rho_{sat} \rh R^{g_v} T \left( 1
        - \frac{R^{g_a}}{R^{g_v}} \right) \left( \frac{\lambda \kappa_s}{\porosity \mu_g} \right)
        \kappa_{rg}(S) \text{ and } 
        \label{eqn:modified_diffusion_coefficient} \\
        N^g(\rh,S,T) &:= \rh \left[ \mathcal{D}(\rh,S,T) \left(
            \frac{1}{\lambda} \frac{d\lambda}{dT} + \frac{R^{g_v} \ln(\lambda \rh)}{T}
            \right) + \rho^g \rho_{sat} \eta^g \left( \frac{\lambda \kappa_S}{\porosity
            \mu_g} \right) \kappa_{rg}(S) \right]
    \end{flalign}
\end{subequations}

The enhancement model suggested by de Vries, and subsequently used by several authors
\cite{Cass1984,Saito2006,Smits2011,Sakai2009,Webb1998}, is a multiplicative
combination of the pure Fickian diffusion coefficient, $D^g$, the tortuosity,
$\tau=\tau(\eps{g})$, and an enhancement factor, $\eta$:
\begin{flalign}
    D = \tau \eta D^g.
    \label{eqn:multiplicative_enhancement}
\end{flalign}
In these works, the functional
form of the enhancement
factor is taken to be of the form suggested by Cass et al. \cite{Cass1984} 
\begin{flalign}
    \eta_{(a)} = \left( a + 3 \frac{\eps{l}}{\porosity}
    \right) - (a-1) \text{exp} \left\{ -\left[ \left( 1 + \frac{2.6}{\sqrt{f_c}} \right)
        \frac{\eps{l}}{\porosity} \right]^3 \right\}.
    \label{eqn:enhancement_cass}
\end{flalign}
Here, $f_c$ is the mass fraction of clay in the soil. In the absence of clay the
enhancement factor is taken as 
\begin{flalign}
    \eta_{(a)} = a + 3 \frac{\eps{l}}{\eps{s}}
    \label{eqn:enhacement_cass_no_clay}
\end{flalign}
(for an example where $f_c \ne 0$ see Saito et al.\,\cite{Saito2006}).
The tortuosity is taken to be a function of the volumetric gas content, 
\begin{flalign}
    \tau = (2/3)\eps{g}. \label{eqn:tortuosity_cass}
\end{flalign}

Using equations \eqref{eqn:enhacement_cass_no_clay} and \eqref{eqn:tortuosity_cass} in
the multiplicative expansion of the diffusion coefficient,
\eqref{eqn:multiplicative_enhancement} gives a diffusion coefficient of
\begin{flalign}
    D = \left(a + 3 \frac{\eps{l}}{\porosity} \right)\left( \frac{2}{3} \eps{g} \right)
    D^g.
    \label{eqn:multiplicative_enhancement_combined}
\end{flalign}
The tortuosity and the porosity communicate to the diffusion coefficient the type of
geometry under consideration.  The present model (equation
\eqref{eqn:full_diffusion_equation}) communicates this information via the porosity, the
relative permeability, and the saturated permeability.  The diffusion model using equation
\eqref{eqn:multiplicative_enhancement_combined} relies on a fitting parameter, while the
present model avoids this trouble. In the author's opinion, this highlights the main
advantage to using the chemical potential as a modeling tool.

Comparing the enhancement model of Cass et al. (using the material parameters from the
experiment by Smits et al.\,\cite{Smits2011}) to the present model, we see, in Figure
\ref{fig:diffusion_coeff_comparison}, that the relative humidity level curves of the
present model underestimate the enhanced model for many values of the fitting parameter,
$a$.  That being said, these curves do suggest an enhancement over regular Fickan
diffusion and, depending on the parameters of interst, give {\it similar} levels of
enhancement as the model used in \cite{Smits2011}.  We simply state here that the present
model offers a modified view of the enhancement model. There are several parameters that
play roles in this model, but the advantage to the present approach is that all of the
parameters are readily measured for a given medium (at least in laboratory experiments).
There is no {\it fitting} parameter, so the type of material should dictate the
level of enhancement.
\linespread{1.0}
\begin{figure}[ht*]
    \begin{center}
        \includegraphics[width=0.75\columnwidth]{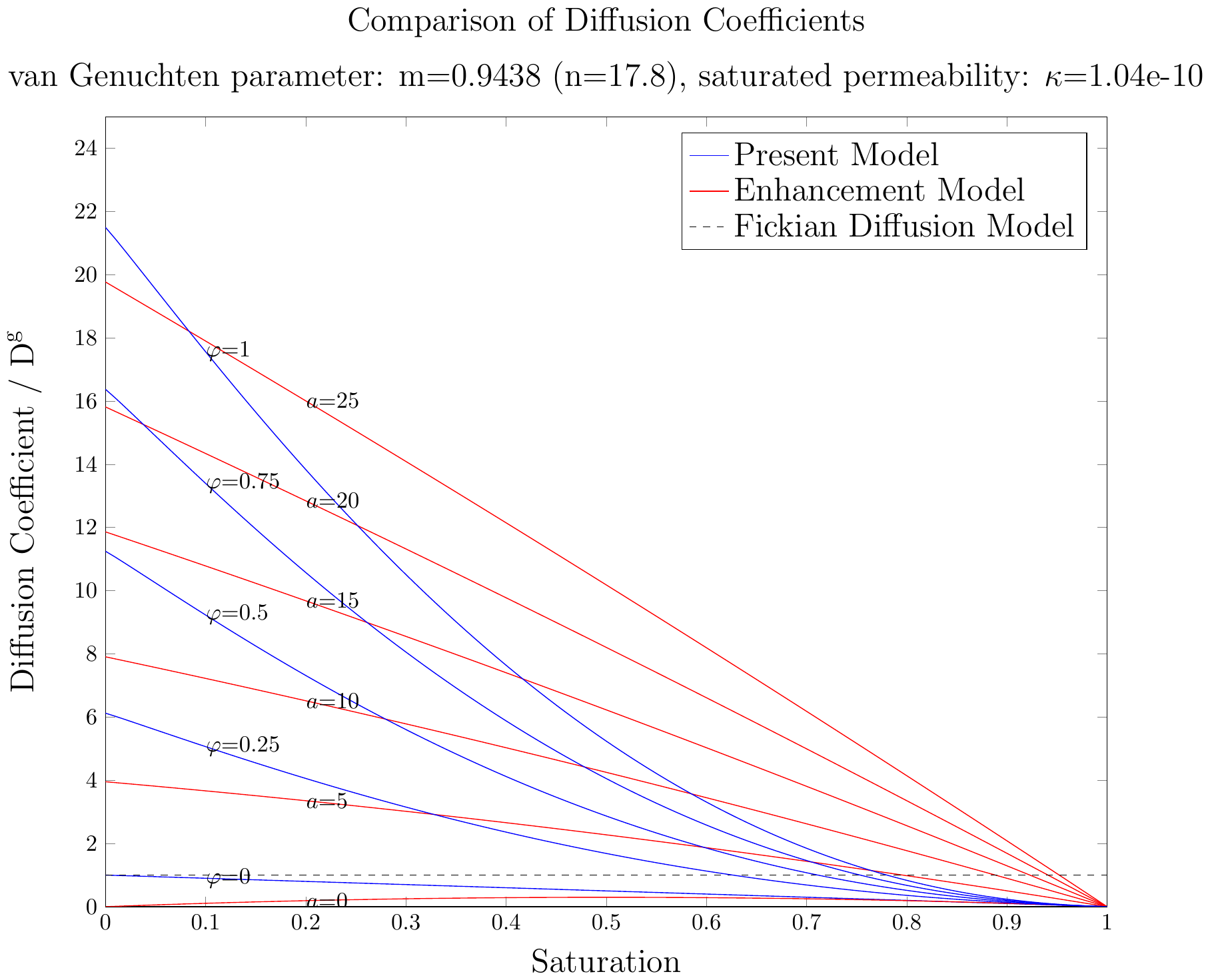} 
    \end{center}
    \caption{Comparison of different diffusion models at constant temperature ($T =
        295.15K$).  The value for the saturated permeability was chosen to match that of
        \cite{Smits2011} ($\kappa_S = 1.04 \times 10^{-10} m^2$), where they found a
    fitting parameter $a = 18.2$. The ``Present Model'' refers to equation
    \eqref{eqn:full_diffusion_equation} (with $\grad T = \foten{0}$ and no mass transfer) and the
    ``Enhancement Model'' refers to equation \eqref{eqn:deVries_original} along with
    \eqref{eqn:multiplicative_enhancement}, \eqref{eqn:enhacement_cass_no_clay}, and
    \eqref{eqn:tortuosity_cass} for the diffusion coefficient, enhancement factor, and
tortuosity respectively.}
    \label{fig:diffusion_coeff_comparison}
\end{figure}
\renewcommand{\baselinestretch}{\normalspace}

Another way to look at the present model is to consider that in most classical situations
the gas-phase pressure is considered constant. The effect of this assumption is that the
Darcy terms in the gas-phase mass balance equation are neglected.  This assumption is
valid in many cases, but in the present case the Darcy term is broken into component
parts (air and water vapor) via the chemical potentials.  The chemical potential
formulation draws influence from the Darcy-type movement, along with the Fickian
diffusion, of the individual constituents to define the general diffusion coefficient.

It is emphasized here that the traditional (de Vries-type) view of diffusion in porous
media is not taken here.  Shokri \cite{Shokri2009a} suggested that the mechanism of
enhanced diffusion is driven by the coupling of Darcy and Fickian diffusion.  The novelty
here is that the advection and diffusion are modeled in
terms of the same dependent variable; the chemical potential. This suggests that the
enhanced diffusion problem can be modeled by coupling Darcy-type flow along with Fickian
diffusion in the gas phase. The relationship between the enhancement model and the present
model will be discussed when we consider numerical solutions in Chapter
\ref{ch:Transport_Solution}.

\subsection{Total Energy Equation}\label{sec:total_energy_simplifications}
Continuing with the equation-by-equation derivation of the total heat and moisture
transport model, we now turn out attention to the total energy equation.  This picks up
from equation \eqref{eqn:energy_simplified} and we apply the simplifying assumptions
presented in the beginning of Section \ref{sec:simplifying_assumptions}.

If we assume that the vapor and air densities are functions of relative humidity and
temperature only, the total energy equation \eqref{eqn:energy_simplified} can be written as
\begin{flalign}
    \notag 0 &= \rho c_p \dot{T} - \diver \left( \soten{K} \cd \grad T \right) + \rho h + L
    \ehat{g}{l} \\
    \notag & \quad + \porosity \left[ p_c + 2\tau \porosity \Sdot + T \pd{ }{T} \left( \pi^g - \pi^l \right)
        \right] \Sdot \\
    \notag & \quad + \porosity (1-S) \left[ \sum_{j=v,a} \left[ \left( \mu^{g_j} - T
        \pd{\mu^{g_j}}{T} \right) \pd{\rho^{g_j}}{T} \right] - \left( \Gamma^g + T \eta^g
        \right) \pd{\rho^g}{T} \right] \dot{T} \\
    \notag & \quad + \porosity (1-S) \left[ \sum_{j=v,a} \left[ \left( \mu^{g_j} - T
        \pd{\mu^{g_j}}{T} \right) \pd{\rho^{g_j}}{\rh} \right] - \left( \Gamma^g + T \eta^g
        \right) \pd{\rho^g}{\rh} \right] \dot{\rh} \\
    %
        %
    \notag & \quad + \left[ \left( \rho^l c_p^l + e^l \frac{d \rho^l}{dT} \right) \grad T
        + \frac{T}{S} \pd{p^l}{T} \grad S \right] \cd \left( \eps{l} \bv{l,s} \right) \\
    \notag & \quad + \left[ \left( \rho^g c_p^g - \Gamma^g \pd{\rho^g}{T} +
        \sum_{j=v,a} \left[ \mu^{g_j} \pd{\rho^{g_j}}{T} \right] + T \pd{ }{T} \left(
        \sum_{j=v,a} \left[ \mu^{g_j} \pd{\rho^{g_j}}{T} \right] - \psi^g \pd{\rho^g}{T}
        \right) \right) \grad T \right.  \\
    \notag & \quad \qquad + \left( - \Gamma^g \pd{\rho^g}{\rh} +
    \sum_{j=v,a} \left[ \mu^{g_j} \pd{\rho^{g_j}}{\rh} \right] 
            + T \pd{ }{T} \left( \sum_{j=v,a} \left[ \mu^{g_j}
            \pd{\rho^{g_j}}{\rh}\right] - \psi^g \pd{\rho^g}{\rh} \right) \right) \grad \rh \\
    & \quad \qquad \left. - \frac{T}{(1-S)} \pd{p^g}{T} \grad S \right] \cd \left( \eps{g}
    \bv{g,s} \right).
    \label{eqn:energy_rh_S_T}
\end{flalign}
Recall that $\rho=\rho(\rh,S,T)$, $p_c = p_c(S_e)$, $\mu^{g_j} = \mu^{g_j}(\rh,T)$,
$\rho^{g_j} = \rho^{g_j}(\rh,T)$, $\Gamma^g = \Gamma^g(\rh,T)$, $\eta^g = \eta^g(T)$,
$\rho^l = \rho^l(T)$. Also recall that $\eps{\al} \bv{\al,s}$ represents the Darcy flux
for the $\al$ phase:  
\begin{subequations}
    \begin{flalign}
        \eps{l} \bv{l,s} &= -K^l \left[ \left\{ -p_c'(S_e) + C_S^l \right\} \grad S_e +
            \tau \porosity \grad \dot{S}_e + C_T^l \grad T + C_\rh^l \grad \rh - \rho^l
            \foten{g} \right] \\
        \eps{g} \bv{g,s} &= -K^g \left[ \lambda \rho^{g_v} \left(
            1-\frac{R^{g_a}}{R^{g_v}} \right) \left( \pd{\mu^{g_v}}{T} \grad T
            + \pd{\mu^{g_v}}{\rh} \grad \rh \right) + \rho^g \eta^g \grad T -
            \rho^g \foten{g} \right].
    \end{flalign}
    \label{eqn:Darcy_fluxes}
\end{subequations}

%
It is clear that there are several physical processes and couplings that occur for energy
balance to be achieved.  Equation \eqref{eqn:DeVries_heat} below shows the classical 1958
model of de Vries \cite{deVries1958} (which is similar to that of Bear \cite{Bear1988} and
is also presented in \cite{Bennethum1999} for the saturated case). 
\begin{flalign}
    \notag &\rho c_p \pd{T}{t} - \porosity \left( \rho^l W^l - \rho^g W^g \right) \pd{S}{t} \\
    & \quad = \diver
    \left( \soten{K} \grad T \right) - L \ehat{g}{l} - \left( \sum_{\al=l,g} \left(
    \frac{c_p^\al \rho^\al}{\eps{\al}}
    \right) \left( \eps{\al} \bv{\al,s} \right) \right) \grad T.
    \label{eqn:DeVries_heat}
\end{flalign}
In this form of the energy equation, $W^\al$ is a {\it differential heat of wetting}
\cite{Bennethum1999}, and the other variables are written in the present notation for
convenience. At first observation, the $\dot{T}$, $\dot{S}$, $\soten{K}$, $\ehat{g}{l}$,
and $\grad T$ terms in equation \eqref{eqn:energy_rh_S_T} are similar to terms found in
the de Vries model.  That is, we capture the standard effects of specific heat along with
differential heat of wetting, thermal conductivity, mass transfer, and convective heating.
Implicit in the $\dot{S}$ term in \eqref{eqn:energy_rh_S_T} is that we relate the partial derivative of the difference
in wetting potentials, $T \partial (\pi^g-\pi^l)/\partial T$, as a differential heat of
wetting.  The present model also captures the effects of changing relative humidity,
nonlinear effects such as $\grad S \cd \grad S$ and $\grad \rh \cd \grad \rh$, and cross
effects such as $\grad S \cd \grad \rh$.  It remains to determine which (if any) of these
effects are negligible as compared to the others. To make this determination we perform a
dimensional analysis in the next subsection. Let us first focus on the thermal
conductivity term, $\diver \left( \soten{K} \cd \grad T \right).$

The functional form of the thermal conductivity, $\soten{K}$, can be approximated in
several ways. A first approximation is to take the thermal conductivity as a weighted sum of
the conductivities of the individual phases
\begin{flalign}
    \soten{K} = \suma \epsa \soten{K}_T^\al.
    \label{eqn:thermal_weighted_sum}
\end{flalign}
Comparing to results in \cite{Smits2010a}, we note that this seems to overestimate the
measured thermal conductivity as well as fail to capture the experimentally measured
curvature of the thermal conductivity - saturation relationship. Since $\soten{K}$ is a
linearization constant that arose from the entropy inequality, it can depend on any
variable which is nonzero at equilibrium. In particular, $\soten{K}$ is a function of
saturation. Smits et al.\,\cite{Smits2010a} use a combination of the C\^ot\'e-Konrad and
Johansen models to estimate the thermal conductivity in the scalar case:
\begin{flalign}
    K(S) = K_e(S) \left( K_{sat} - K_{dry} \right) + K_{dry},
    \label{eqn:Johansen_thermal}
\end{flalign}
where $K_{sat}$ is the conductivity of the saturated medium, $K_{dry}$ is the conductivity
of the dry medium, and $K_e(S)$ is a ``normalized thermal conductivity known as the
Kersten number.'' C\^ot\'e and Konrad proposed a functional form of $K_e$ as
\begin{flalign}
    K_e(S) = \frac{\kappa S}{1 + (\kappa - 1)S}.
    \label{eqn:CoteKonrad}
\end{flalign}
The parameter, $\kappa$, is a fitting parameter that is presumed to be different for each
type of soil. In \cite{Smits2010a}, $\kappa$ was estimated for several types of sands and
several types of soil packs. Figure \ref{fig:CoteKonradThermal} shows a thermal conductivity
curve for \eqref{eqn:Johansen_thermal} with tightly packed 30/40 sand that has a porosity
of $0.334$. For comparison, equation \eqref{eqn:thermal_weighted_sum} is shown in red for
the same experiment.
\linespread{1.0}
\begin{figure}[ht!]
    \begin{center}
        \includegraphics[width=0.75\columnwidth]{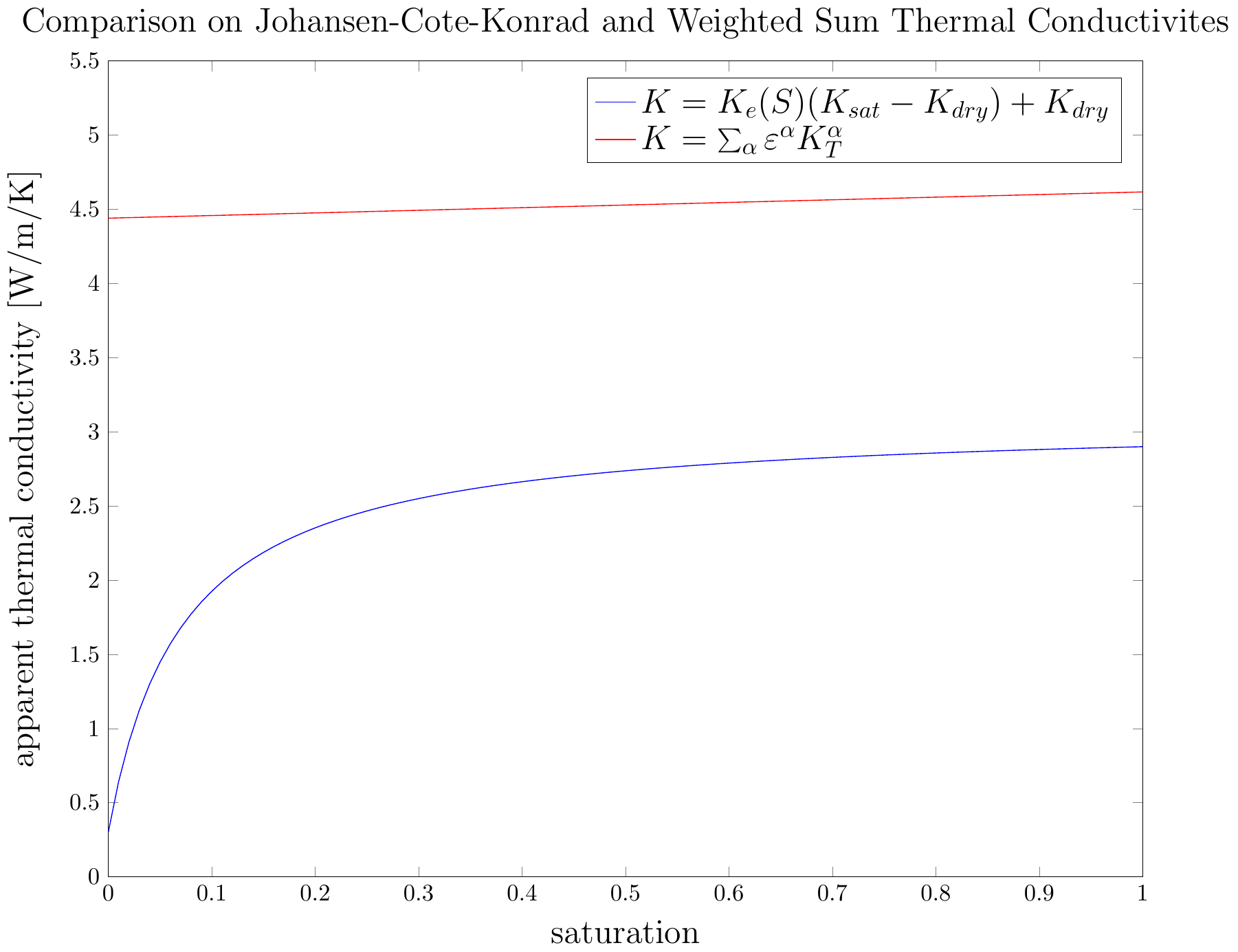}
    \end{center}
    \caption{Johansen thermal conductivity model with C\^ot\'e-Konrad $K_e - S$
    relationship (with $\kappa = 15$) plotted in blue, and the weighted sum
of the thermal conductivities of the individual phases plotted in red.}
    \label{fig:CoteKonradThermal}
\end{figure}
\renewcommand{\baselinestretch}{\normalspace}

We make some comments here giving some possible reasons for the discrepancy between the
weighted sum model (equation \eqref{eqn:thermal_weighted_sum}) and the model that more
closely matches what is experimentally observed (equation \eqref{eqn:Johansen_thermal}).
First, the thermal conductivity of air is neglected as compared to the thermal
conductivity of water or solid.  Also, the thermal conductivity of liquid is much smaller
than that of the solid, $K^l < K^s$.  Furthermore, the geometry of the packed solid plays
a crucial role.  Observe that if we idealize the soil grains as individual spheres then
there are relatively few contact points between the individual grains of the solid phase.
This idealization can be used as a partial explanation for the left-hand tail seen in the
C\^ot\'e-Konrad model depicted in Figure \ref{fig:CoteKonradThermal}. If there are few
contact points between the individual grains then it is much harder for heat to transfer
in the absence of a liquid phase connecting them.  
Thus, equation \eqref{eqn:Johansen_thermal} tells us that all pertinent information is
obtained by knowing what the thermal conductivity of the dry and saturated porous media is
as well as an interpolation function for effective saturation. This captures more of the
microscale geometry than just the volume fractions.  The effects of these two proposed
thermal conductivity functions on the behavior of the heat transport model will be
explored when we consider numerical solutions in Section \ref{sec:numerical_all_coupled}.

\subsubsection{Dimensional Analysis}
To determine which, if any, terms can be neglected from the energy transport equation we
perform a dimensional analysis.
Begin by noting that $\soten{K} / (\rho c_p)$ has units of area per time.  This
suggests a natural choice of time scale for the thermal problem of 
\[ t = \left( \frac{\rho c_p x_c^2}{K} \right) t', \]
where $t'$ is dimensionless time. Dividing by $\rho c_p$ (measured at a
reference state), introducing $x_c$ as a characteristic length (e.g. the height of a
column experiment), and multiplying by $t_c =
(\rho c_p x_c^2)/K$ gives the dimensionless form of the energy equation (the statement of
which is suppressed for the sake of brevity). 

Recall that the volumetric heat capacity, $\rho c_p$, is linearly related to the specific
heats of the individual phases
\[ \rho c_p = \suma \left( \epsa \rhoa c_p^\al \right) = \porosity S \rho^l c_p^l +
    \porosity (1-S) \rho^g c_p^g + (1-\porosity) \rho^s c_p^s. \]
Taking $S = 1$ as a reference state (or equivalently, $S=0$) gives a characteristic value
of $\rho c_p$.  Using values from Appendix \ref{app:dimensional_quantities} we see that
$\rho c_p \sim \mathcal{O}(10^6)$. Hence, several of the quantities in
\eqref{eqn:energy_rh_S_T} can be neglected:
\begin{subequations}
    \begin{flalign}
        &\left( \frac{\porosity}{\rho c_p} \right) \left[ \sum_{j=v,a} \left[ \left(
            \mu^{g_j} - T \pd{\mu^{g_j}}{T} \right) \pd{\rho^{g_j}}{T} \right] - \left(
            \Gamma^g + T \eta^g \right) \pd{\rho^g}{T} \right] \sim \mathcal{O}(10^{-4})
            \\
        &\left( \frac{\porosity}{\rho c_p} \right) \left[ \sum_{j=v,a} \left[ \left(
            \mu^{g_j} - T \pd{\mu^{g_j}}{T} \right) \pd{\rho^{g_j}}{\rh} \right] - \left(
            \Gamma^g + T \eta^g \right) \pd{\rho^g}{\rh} \right] \sim \mathcal{O}(10^{-4})
            \\
        \notag & \left( \frac{t_c}{x_c^2 \rho c_p} \right) \left[- \Gamma^g \pd{\rho^g}{T} +
            \sum_{j=v,a} \left[ \mu^{g_j} \pd{\rho^{g_j}}{T} \right] + T \pd{ }{T} \left(
            \sum_{j=v,a} \left[ \mu^{g_j} \pd{\rho^{g_j}}{T} \right] - \psi^g
            \pd{\rho^g}{T} \right) \right] \\
        & \qquad \sim \mathcal{O}(10^{-2}) \\
        \notag & \left( \frac{t_c}{x_c^2 \rho c_p} \right) \left[- \Gamma^g \pd{\rho^g}{\rh} +
            \sum_{j=v,a} \left[ \mu^{g_j} \pd{\rho^{g_j}}{\rh} \right] + T \pd{ }{T} \left(
            \sum_{j=v,a} \left[ \mu^{g_j} \pd{\rho^{g_j}}{\rh} \right] - \psi^g
            \pd{\rho^g}{\rh} \right) \right] \\
        & \qquad \sim \mathcal{O}(10^{-2}) \\
        & \left( \frac{t_c}{x_c^2 \rho c_p} \right) e^l \pd{\rho^l}{T} \sim
            \mathcal{O}(10^{-1})
    \end{flalign}
    \label{eqn:energy_neglected}
\end{subequations}
In order to make these approximations it is assumed that Gibbs potentials are given by the
Gibbs-Duhem relationship, \eqref{eqn:Gibbs_Duhem}, and that the Helmholtz potential and
internal energy are approximately the same order of magnitude as the Gibbs potential. 

With these considerations we can rewrite the present version of the energy equation as
\begin{flalign}
    \notag 0 &= \rho c_p \pd{T}{t} - \diver \left( \soten{K} \cd \grad T \right) + \rho h + L
    \ehat{g}{l} + \porosity \left[ p_c + 2\tau \porosity \Sdot + T \pd{ }{T} \left(
        \pi^g - \pi^l \right) \right] \pd{S}{t} \\
    %
    %
        %
    \notag & \quad + \left[ \rho^l c_p^l \grad T
        + \frac{T}{S} \pd{p^l}{T} \grad S \right] \cd \left( \eps{l} \bv{l,s} \right) \\
        & \quad + \left[ \rho^g c_p^g \grad T - \frac{T}{(1-S)} \pd{p^g}{T} \grad S
            \right] \cd \left( \eps{g} \bv{g,s} \right).
    \label{eqn:energy_rh_S_T_simplified}
\end{flalign}
Unfortunately this analysis leads us to the conclusion that this new version of the heat
transport equation is only {\it slightly} different than those proposed in past works
\cite{Bennethum1999,deVries1958}.
The major differences are the $\grad S$ terms associated with the Darcy fluxes, the
capillary pressure adjustment to the differential heat of wetting term, and the
Darcy fluxes themselves.  Recalling the forms of the Darcy fluxes from equations
\eqref{eqn:Darcy_fluxes}, the energy equation can be
rewritten in a more compact notation as
\begin{flalign}
    \notag 0 =& \rho c_p \pd{T}{t} - \diver \left( \soten{K} \cd \grad T \right) + \rho h
    + L \ehat{g}{l} + \mathcal{W} \pd{S}{t} \\
    \notag & + \left( \chi_1 \grad S + \chi_2 \grad T + \chi_3 \grad \rh \right) \cd \grad
    T \\
    & + \left( \chi_4 \grad S + \chi_5 \grad \rh \right) \cd \grad \rh + \chi_6 \grad S
    \cd \grad S
    \label{eqn:energy_final}
\end{flalign}
where $\mathcal{W}$ and each $\chi_j$ are implicitly defined via equations
\eqref{eqn:energy_rh_S_T_simplified} and \eqref{eqn:Darcy_fluxes}.  It remains to
determine the functional form(s) of the several constitutive variables in
\eqref{eqn:energy_final}.

\subsection{Constitutive Equations}\label{sec:constitutive_equations}
Hidden within the coefficients of \eqref{eqn:energy_final},
\eqref{eqn:saturation_with_T_rh}, and \eqref{eqn:gas_mass_balance_binary_ideal_simplified}
are a few final relationships necessary for closure.  In particular, we need constitutive
equations for
\begin{subequations}
    \begin{flalign}
        \tau &= \pd{p_c}{\epsdot{l}} \label{eqn:constitutive_tau} \\
        \ehat{l}{g_v} &= M \left( \rho^l - \rho^{g_v} \right) \left( \mu^{l} - \mu^{g_v}
        \right) \label{eqn:constitutive_ehat}\\
        \mathcal{W} &= p_c(S) + 2 \tau \porosity \dot{S} + T \pd{ }{T} \left( \pi^g - \pi^l
        \right) = p_c(S) + 2 \tau \porosity \dot{S} + W \label{eqn:constitutive_W}\\
        C_S^l &= 2 \rho^l \pd{\psi^l}{S} = \porosity \left( \ppi{l}{l} - \ppi{l}{g}
        \right) \label{eqn:constitutive_CSl}\\
        C_T^l &= \rho^l \sumj \left( \pd{\psi^l}{\rho^{g_j}} \pd{\rho^{g_j}}{T} \right) -
        \sum_{\al=g,s} \left( \frac{\eps{\al} \rho^\al}{\eps{l}} \pd{\psia}{\rho^l}
        \pd{\rho^l}{T} \right)\label{eqn:constitutive_CTl}\\
        C_\rh^l &= \rho^l \sum_j \left( \pd{\psi^l}{\rho^{g_j}} \pd{\rho^{g_j}}{\rh}
        \right).\label{eqn:constitutive_Crhl}
    \end{flalign}
\end{subequations}
The simplest possible assumption would be that $\tau, C_S^l, C_T^l, C_\rh^l,$
and $W$ are constants.  This would allow for the easiest sensitivity analysis but is
likely contrary to physical reality. The following paragraphs discuss each of these terms
and propose functional forms in terms of saturation, relative humidity, and temperature.
The sensitivity of the numerical solution to several of these parameters is discusses in
Chapter \ref{ch:Transport_Solution}.

It is generally assumed that $\tau$ in equation \eqref{eqn:constitutive_tau} is constant
\cite{Hassanizadeh2002,Peszynska2008}, but according to the linearization process in HMT,
$\tau$ can be a function of any variable that is not zero at equilibrium.  In particular,
it is possible that $\tau$ is a function of $S$; but {\it which} function? In
\cite{Berentsen2006}, the authors suggest several functional forms (constant, linear,
quadratic, Gaussian, and error) and compare to experimental findings.  Their findings
suggest that ``\ldots an error function or Gaussian relationship for the damping
coefficient $\tau$ provides reasonable agreement between data and simulations.'' Thus we
consider the following forms:
\begin{subequations}
    \begin{flalign}
        \tau &= \tau_{max} \\
        \tau &= \frac{\tau_{max}}{2} \left( 1 - \text{erf}\left( \frac{S - \mu}{\sigma}
        \right) \right) \\
        \tau &= \tau_{max} \text{exp}\left(- \frac{(S - \mu)^2}{2 \sigma^2} \right).
    \end{flalign}
    \label{eqn:proposed_tau_functions}
\end{subequations}
Plots of equations \eqref{eqn:proposed_tau_functions} are shown in Figure
\ref{fig:proposed_tau_functions} with typical mean and standard deviation parameters.  To
the author's knowledge, no other experiments have been conducted to make a better
determination as to the functional form of $\tau$. This being said, since $\tau$ is a
measure of the rate at which the pore-scale saturation profile rearranges in a dynamic
situation, it is reasonable to assume that as $S \to 1$ the effect of this term should be
minimized and as $S \to 0$ the effect should be maximized. Hence, in the author's opinion
an error function is more sensible. It remains, of course, to determine the values of the
maximum, mean and standard deviation parameters which are likely themselves functions of
material properties.
\linespread{1.0}
\begin{figure}[ht*]
    \begin{center}
        \includegraphics[width=0.7\columnwidth]{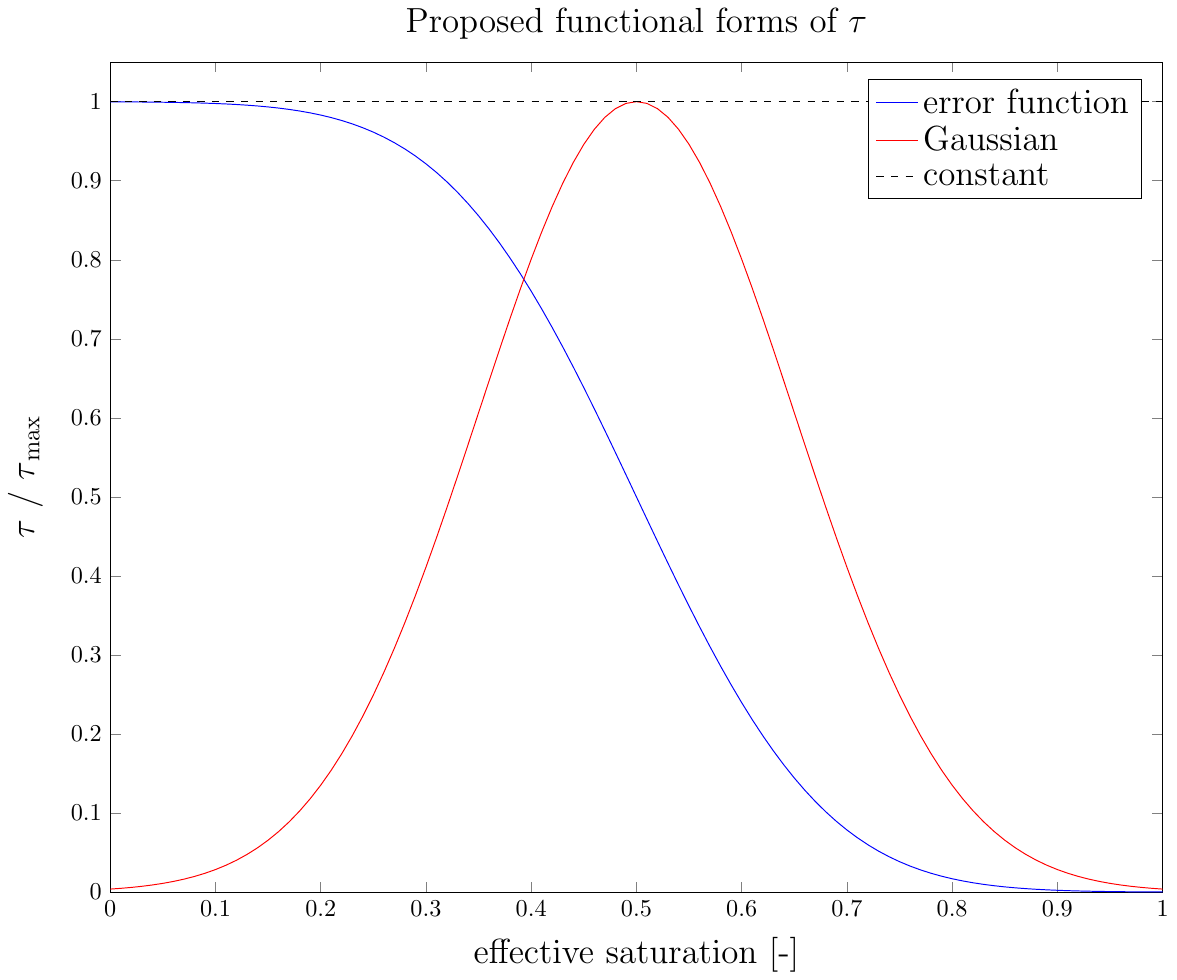}
    \end{center}
    \caption{Three proposed functional forms of $\tau = \tau(S)$}
    \label{fig:proposed_tau_functions}
\end{figure}
\renewcommand{\baselinestretch}{\normalspace}

The evaporation rate term, $\ehat{l}{g_v}$, given in equation
\eqref{eqn:constitutive_ehat} is written as a function of the difference
between the liquid and vapor chemical potentials.  The chemical potential in the water
vapor is a function of temperature and relative humidity \cite{Callen1985},
\[ \mu^{g_v} = \mu^{g_v}_* + R^{g_v} T \ln(\lambda \rh). \]
The liquid-phase chemical potential, on the other hand, does not have such a natural
description.  At equilibrium, $\mu^l = \mu^{g_v}$.  Away from equilibrium we only know
that 
\[ \mu^l = \Gamma^l = \psi^l + \frac{p^l}{\rho^l}, \]
and therefore is a function of every variable that $\psi^l$ depends.  In the most
simplistic form we can assume that the liquid chemical potential is $\mu^l = \mu^l_* +
(p^l - p^l_*)/\rho^l$. This assumption is taken from classical thermodynamics (see
\cite{Callen1985} for example).  Furthermore, $\mu^l_* \approx \mu^{g_v}_*$ if we take the
reference state to be equilibrium.  Therefore, 
\begin{flalign}
    \notag \ehat{l}{g_v} &\approx \frac{M\rh}{\rho^l} \left( \rho^l - \rho^{g_v} \right) \left(
    \frac{p^l - p^l_0}{\rho^l} - R^{g_v} T \ln \left( \lambda \rh \right)
    \right) \\
    &\approx \frac{M\rh}{\rho^l} \left( \rho^l - \rho^{g_v} \right) \left( \frac{-p_c + \tau
        \porosity \dot{S} - p^l_0}{\rho^l} - R^{g_v} T \ln \left( \lambda \rh \right)
        \right),
    \label{eqn:evaporation_rule}
\end{flalign}
where $M$ is a fitting parameter. The factor of relative humidity is included to achieve a
better match with existing empirical models (discussed in the next
paragraph). 

There are several empirical rules for evaporation in porous media.  One such rule, given
by Bixler \cite{Bixler1985} and repeated in Smits et al.\,\cite{Smits2011}, is 
\begin{flalign}
    \ehat{l}{g_v} = b (\eps{l} - \eps{l}_{r}) R^{g_v} T \left( \rho_{sat} - \rho^{g_v}
    \right),
    \label{eqn:evaporation_Bixler}
\end{flalign}
where $b$ is a fitting parameter and $\eps{l}_r$ is the residual volumetric water content.
Equations \eqref{eqn:evaporation_rule} and \eqref{eqn:evaporation_Bixler} are quite
different, but under proper scaling they are {\it close} as seen in Figures
\ref{fig:evap_vs_relhum_vs_sat}.  From
these plots it is also clear that there is a large dicrepancy between these model at very
low saturations. These plots are generated at standard temperature with $\dot{S} = 0$.
The dynamic saturation term will change the shape of these curves, but as the Bixler
model, \eqref{eqn:evaporation_Bixler},
is not dynamic we compare only with the steady state form of \eqref{eqn:evaporation_rule}.
Furthermore, the present model depends on the van Genuchten parameters for capillary
pressure.  In Figures \ref{fig:evap_vs_relhum_vs_sat} the parameters $m = 0.944$ and $\al =
5.7$ are used along with $b \approx 2.1 \times10^{-5}$ to match the values used in
\cite{Smits2011}.
\linespread{1.0}
\begin{figure}[ht*]
    \centering
    \subfigure[Comparison of mass transfer rate vs. effective saturation shown with level
    curves in relative humidity.]{
        \includegraphics[width=0.45\columnwidth]{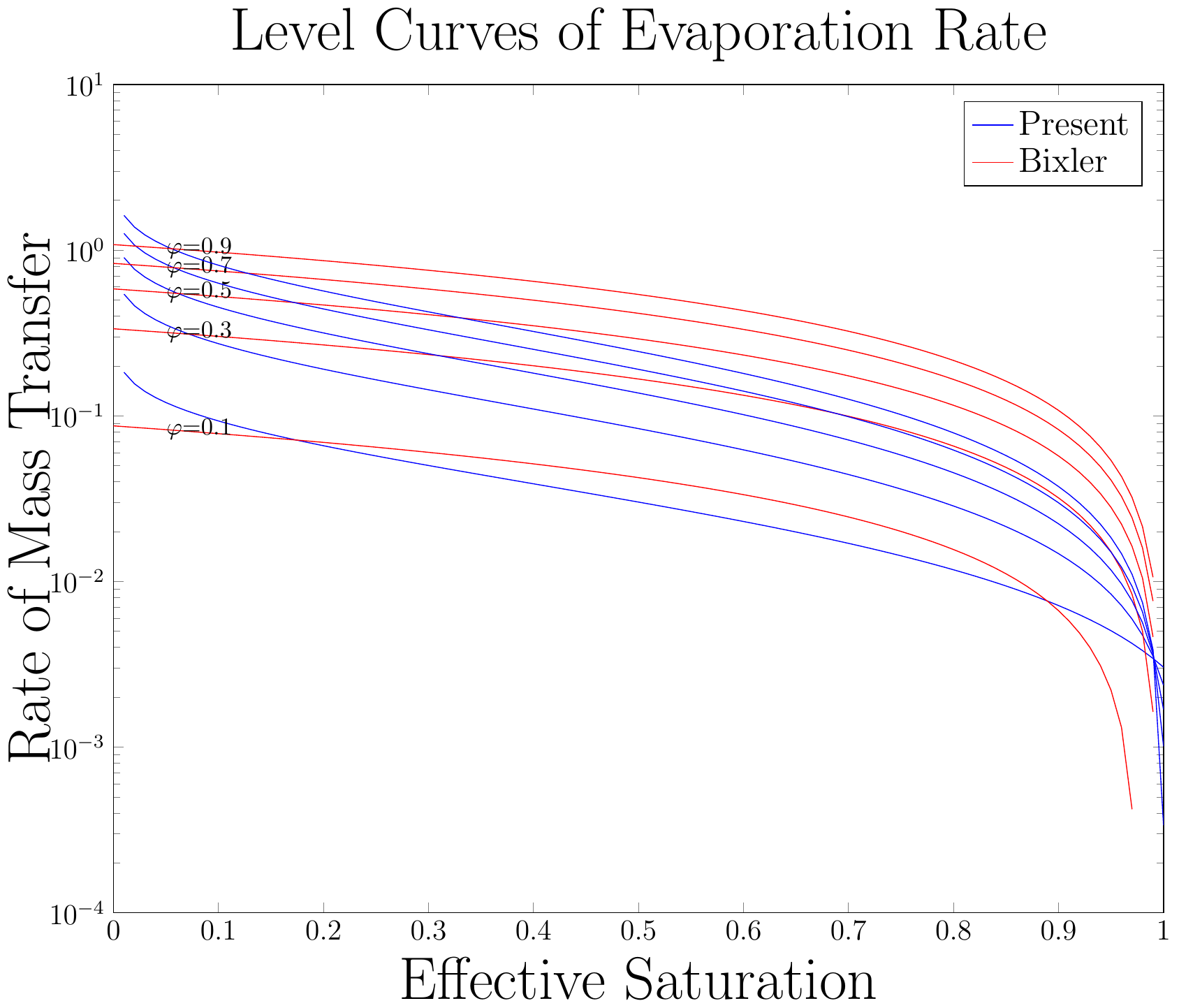}
        \label{fig:evaporation_vs_saturation}
    }
    \subfigure[Comparison of mass transfer rate vs. relative humidity shown with level
    curves in saturation.]{
        \includegraphics[width=0.45\columnwidth]{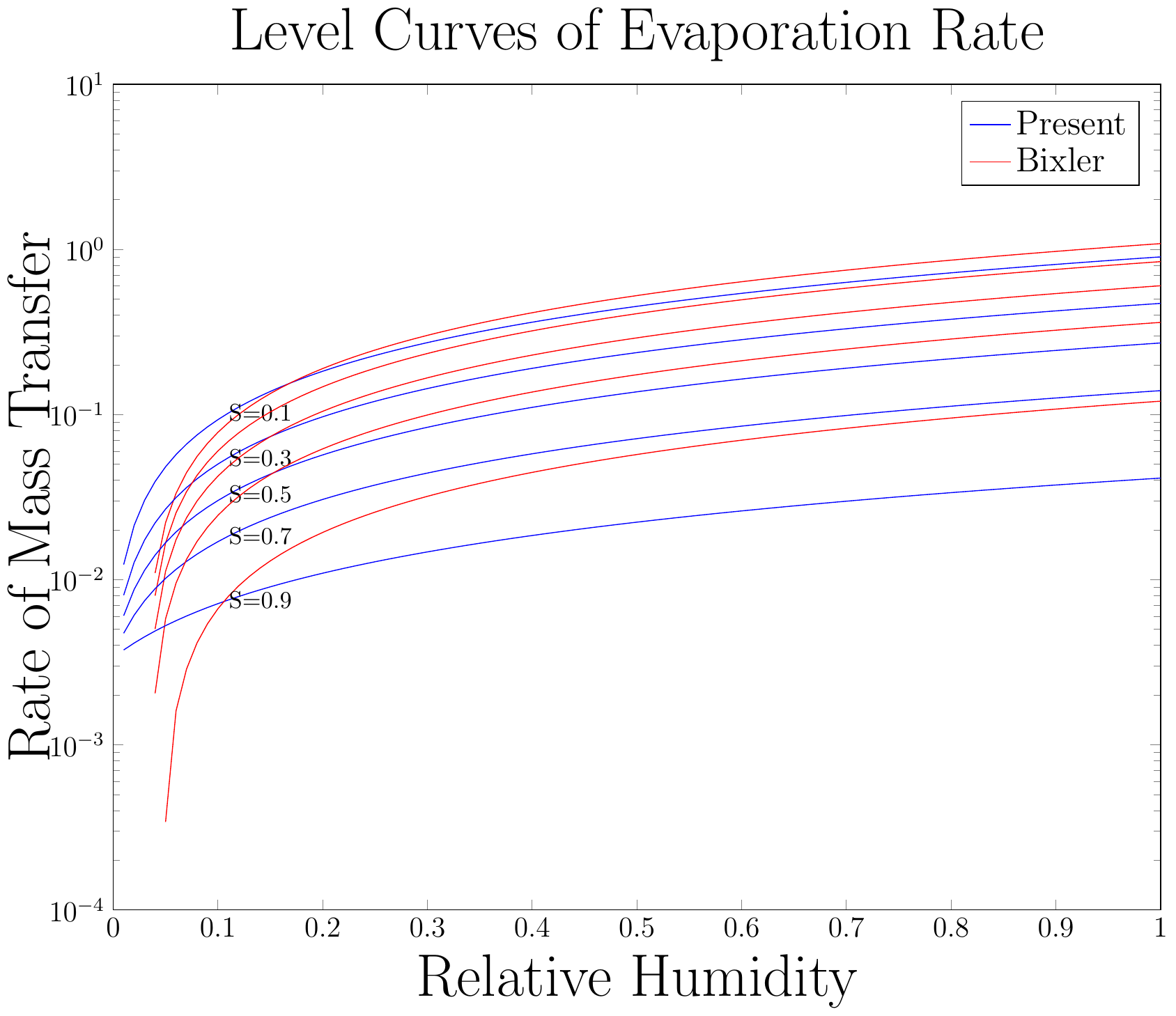}
        \label{fig:evaporation_vs_relativehumidity}
    }
    \caption{Level curves of mass transfer rate functions.}
    \label{fig:evap_vs_relhum_vs_sat}
\end{figure}
\renewcommand{\baselinestretch}{\normalspace}

The differential heat of wetting, $W$, in equation \eqref{eqn:constitutive_W} represents
the heat gained or lost due to changes in saturation and adsorption.  The present
generalization suggests that the differential heat of wetting be supplemented by the
capillary pressure and time rate of change of saturation. According to \cite{Prunty2002a},
the typical value of the differential heat of wetting is on the order to $10^3 J / kg$
depending on the type of soil. This value will be taken as constant throughout, but in
reality value should be a function of saturation.

Finally, the values of $C_S^l, C_T^l,$ and $C_\rh^l$ in equations
\eqref{eqn:constitutive_CSl} - \eqref{eqn:constitutive_Crhl} are new and hence there is no
existing literature for which to make estimates or comparisons.  For this reason we make
the initial assumption that these terms are constant. This allows for relatively simple
sensitivity analysis without introducing any unnecessary mathematical difficulties. As
discussed in Section \ref{sec:cap_pressure_and_dyn_cap_pressure}, the value of $C_T^l$ is
likely quite small since some research has been done to determine the affect of thermal
gradients on Darcy flow \cite{Saito2006}.

\section{Conclusion and Summary}\label{sec:simplifications_summary}
In this chapter we have derived several new equations and terms for heat and moisture
transport in unsaturated porous media.  
For the sake of readability, we summarize the results, assumptions, and equations derived
here within Chapter \ref{ch:Transport}.

The main assumptions are:
\begin{description}
    \item[Assumption \#1] The solid phase is rigid, incompressible, and inert.
    \item[Assumption \#2] The liquid and gas phases are composed of $N$ constituents.
        (this was later relaxed to let $N=2$ in the gas phase and $N=1$ in the liquid phase).
    \item[Assumption \#3] No chemical reactions take place in any of these phase.
    \item[Assumption \#4] Diffusion with the liquid phase is negligible compared to the
        advection of the liquid phase.
    \item[Assumption \#5] The liquid phase is incompressible.
    \item[Assumption \#6] The gas phase is an ideal binary gas mixture of water vapor and
        inert {\it air}.
    \item[Assumption \#7] The gas-phase chemical potentials and densities are functions of
        relative humidity and temperature.
\end{description}
The secondary assumptions used up to this point are (in order of appearance): 
\begin{itemize}
    \item the medium of interest is granular so angular momentum conservation yields a symmetric stress tensor,
    \item the material is {\it simple} in the sense of Coleman and Noll \cite{Coleman1963},
    \item the phase interfaces are assumed to contain no mass, momentum, or energy,
    \item second-order effects in velocity are negligible (e.g. $\bv{\aj,\al} \otimes
        \bv{\aj,\al} \ll \bv{\al}$),
    \item the species in the solid phase do not diffuse, 
    \item inertial terms in the momentum balance equation are negligible,
    \item the capillary pressure - saturation
        relationship is given by the van Genuchten function,
    \item the deviation in wetting potential is approximately zero ($
        (\pi_c|_{n.eq} - \pi_c|_{eq}) \approx 0$), 
    \item the coefficient of the dynamic saturation term, $\tau$, is constant, 
\end{itemize}

Considering assumptions \#1 - \#7 along with all of the secondary assumptions, the final
system of equations proposed to model heat and moisture transport in unsaturated porous
media is:
\begin{subequations}
    \begin{flalign}
        \notag & \pd{S}{t} - \diver \left[ \porosity_S^{-1} K^l \left( \left\{ p_c'
        + C_S^l \right\} \grad S + \tau \porosity_S \grad \dot{S} + C_T^l \grad T +
        C_\rh^l \grad \rh - \rho^l \foten{g} \right) \right] \\
        & \quad = \frac{\ehat{l}{g_v}}{\rho^l} \label{eqn:system_final_saturation} \\
        & \pd{ }{t} \left( \rho_{sat} \rh (1-S) \right) - \diver \left[ \rho_{sat} \left(
            \mathcal{D} \grad \rh + N^g \grad T \right) \right]  = -\ehat{l}{g_v}
            \label{eqn:system_final_humidity}\\
        \notag & 0 = \rho c_p \pd{T}{t} + \mathcal{W} \pd{S}{t} - \diver \left( \soten{K}
        \cd \grad T \right) + \rho h + L \ehat{g}{l} \\
        \notag &\quad + \left( \chi_1 \grad S + \chi_2 \grad T + \chi_3 \grad \rh \right)
        \cd \grad T \\
        & \quad + \left( \chi_4 \grad S + \chi_5 \grad \rh \right) \cd \grad \rh + \chi_6
        \grad S \cd \grad S,
        \label{eqn:system_final_energy} 
    \end{flalign}
    \label{eqn:system_final}
\end{subequations}
where the relevant empirical, constitutive, and derived relations are
\begin{subequations}
    \begin{flalign}
        K^l(S) &= \frac{\kappa_s}{\mu_l} \kappa_{rl} = \frac{\kappa_s}{\mu_l} \sqrt{S}
        \left( 1 - \left[ 1 - S^{1/m} \right]^m \right)^2
        \label{eqn:system_final_conductivity} \\
        K^g(S) &= \frac{\kappa_s}{\mu_g} \kappa_{rg} = \frac{\kappa_s}{\mu_g} \left( 1-S
        \right)^{1/3} \left( 1-S^{1/m} \right)^{2m}
        \label{eqn:system_final_conductivity_gas} \\
        p_c(S) &= \frac{1}{\al} \left( S^{-1/m} - 1 \right)^{1-m}
        \label{eqn:system_final_capillary_pressure} \\
        \tau &= \pd{p_c}{\epsdot{l}} \quad \text{(see equations
            \eqref{eqn:proposed_tau_functions})} \label{eqn:system_final_tau} \\
        \ehat{l}{g_v} &= M\rh \left( \rho^l - \rho^{g_v} \right) \left( \frac{-p_c + \tau
            \porosity \dot{S} - p^l_0}{\rho^l} - R^{g_v} T \ln \left( \lambda \rh \right)
            \right) \label{eqn:system_final_mass_transfer} \\
        \mathcal{D}(\rh,S,T) &:= (1-S) D^g + \rho_{sat} \rh R^{g_v} T \left( 1
        - \frac{R^{g_a}}{R^{g_v}} \right) \left( \frac{\lambda \kappa_s}{\porosity \mu_g} \right)
        \kappa_{rg}(S) \label{eqn:system_final_diffusion} \\
        %
        N^g(\rh,S,T) &= \rh \left[ \mathcal{D}(\rh,S,T) \left(
            \frac{1}{\lambda} \frac{d\lambda}{dT} + \frac{R^{g_v} \ln(\lambda \rh)}{T}
            \right) + \rho^g \rho_{sat} \eta^g \left( \frac{\lambda \kappa_S}{\porosity
            \mu_g} \right) \kappa_{rg}(S) \right] 
            \label{eqn:system_final_gas_entropy} \\
            D^g(T) &= 2.12 \times 10^{-5} \left(\frac{T}{273.15}\right)^2
        \label{eqn:system_final_diffusion_v_temp} \\
        \rho c_p &= \suma \epsa \rhoa c_p^\al \label{eqn:system_final_bulk_specific_heat}
        \\
        \rho h &= \suma \rhoa h^\al \label{eqn:system_final_heat_source} \\
        \mathcal{W} &= p_c + 2 \tau \porosity \dot{S} + W 
        \label{eqn:system_final_diff_heat_wetting} \\
        \soten{K} &= \suma \epsa \soten{K}^\al_T, \quad \text{or} \quad K = \left(
        \frac{\kappa S\left( K_{sat} - K_{dry} \right)}{1+(\kappa-1)S} \right) + K_{dry}
        \label{eqn:system_final_thermal_conductivity} \\
        \rho^l(T) &= 3.79 \times 10^{-5} (T-277.15)^3 - 7.37
        \times 10^{-3} (T-277.15)^2 + 10^3 \label{eqn:system_final_rhol}\\
        \rho_{sat}(T) &= \frac{1}{T} \exp \left(31.37 - 7.92 \times
        10^{-3}T - \frac{6014.79}{T}\right) 10^{-3} \label{eqn:system_final_rho_sat} \\
        \notag \mu_l(T) &= \left( -2.56109 \times10^{-6} (T-273.15)^3 + 0.00057672
    (T-273.15)^2 \right. \\
        & \qquad \left. -0.0469527 (T-273.15)+1.75202 \right) 10^{-3} \label{eqn:system_final_liquid_vis} \\
        \mu_g(T) &=\left( \frac{1.02312 T^3}{10^9}-\frac{3.62788 T^2}{10^6}+0.00665915
    T+0.11767 \right) 10^{-5} \label{eqn:system_final_gas_vis} \\
        L(T) &= 2.501 \times 10^6-2369.2 (T-273.15) \label{eqn:system_final_latent_heat} \\
        \notag \eta^g(T) &= 6.1771 \times 10^{-4} (T-273.15)^4 - 7.3971
        \times 10^{-2} (T-273.15)^3 \\ 
        & \quad +3.1324 (T-273.15)^2-34.4817 (T-273.15)+191.208 \label{eqn:system_final_gas_ent}.
    \end{flalign}
    \label{eqn:system_final_constitutive}
\end{subequations}

Equations \eqref{eqn:system_final} coupled with equations
\eqref{eqn:system_final_constitutive} give several adjustments to the classical models for
saturation (Richards'), vapor diffusion (Phillip and de Vries), and heat transport (de
Vries) presented in Section \ref{sec:classical_models}.  In order for
the present models to be accepted in the hydrology community we must show that the
proposed terms are non-negligible and in some way put some of the empirical relations on a
firmer theoretical footing. The proposed vapor diffusion equation
\eqref{eqn:system_final_humidity} is a prime example of this as there are no empirical
fitting parameters within the diffusion coefficient (hence removing the need for an
empirical {\it enhancement factor}).

In Chapter
\ref{ch:Existence_Uniqueness} we discuss the mathematical questions of existence and
uniqueness of solutions to the individual equations. In Chapter
\ref{ch:Transport_Solution} we discuss numerical simulations of the models. 

\newpage
\chapter{Existence and Uniqueness Results}\label{ch:Existence_Uniqueness}
In this chapter we discuss the necessary regularity and assumptions for existence and
uniqueness of solutions for the three equations.   As the main thrust of this work is not
to prove existence and uniqueness for general classes of systems of partial differential
equations, we approach these problems by stating relevant existing theorems from the
literature and satisfying the hypotheses of these theorems. The saturation and gas
diffusion equations are both of parabolic type and can be treated similarly.  The heat
transport equation is an advection-reaction-diffusion equation that, in principle, should
be parabolic in nature.  The advection terms force a different approach to this equation.  
In Section \ref{sec:existence_sat_hass}, an existence and uniqueness result for the
saturation equation with the third-order term (due to
Mikeli\'c \cite{Mikelic2010}) is
outlined.  The theorems of Alt and Luckhaus \cite{Alt2012,Alt1983} are outlined in Section
\ref{sec:alt_luckhaus} and then used in Sections \ref{sec:existence_sat_rich} and
\ref{sec:existence_diffusion} to prove existence and uniqueness results for Richards'
equation and the vapor diffusion equation respectively. Finally, an existence and
uniqueness result for a special case of the heat transport equation is presented in
Section \ref{sec:existence_uniqueness_heat}.

\section{Saturation Equation with $\tau \ne 0$}\label{sec:existence_sat_hass}
The saturation equation has been well studied since Richards' first introduced it in the
1930's.  Recent modeling efforts, including those of Hassanizadeh et al., have
introduced a new term into the classical Richards' equation and this has caused a
resurgence in the analytical study of the saturation equation.  The 2010 paper by Andro
Mikeli\'c \cite{Mikelic2010} gives the necessary conditions for existence and uniqueness
of a weak solution to the following equation:
\begin{subequations}
    \begin{flalign}
        \pd{S}{t} = \diver \left\{ K(S) \left( -\frac{d p_c}{dS} \grad
        S + \tau \grad \left( \pd{S}{t} \right) + \foten{e}_3 \right)
    \right\} & \text{ in } \Omega_T = \Omega \times (0,T) \label{eqn:hass_eqn_a} \\
    S = S_D &\text{ on } \Gamma_D = \partial_D \Omega \times (0,T) \\
    K(S) \left( -\frac{d p_c}{dS} \grad S + \tau \grad \left(
    \pd{S}{t} \right) + \foten{e}_3 \right) \cd \nu = R &\text{ on } \Gamma_N =
    \partial_N \Omega \times (0,T) \\
        S(x,t=0) = S_i(x) &\text{ on } \Omega
    \end{flalign}
    \label{eqn:hass_eqn}
\end{subequations}
Here,  $\foten{e}_3$ is a unit vector pointing the $z$ direction to account for
gravitational effects, $\nu$ is an outward pointing normal, and the subscripts $D$ and
$N$ represent Dirichlet and Neumann conditions repsectively. Notice that
\eqref{eqn:hass_eqn} is a simplification of the present saturation equation as it contains
no evarporation (source) term and no coupling with relative humidity or temperature.

Mikeli\'c's theorem is stated here for completeness.
\begin{theorem}[Mikeli\'c 2010 \cite{Mikelic2010}, Theorems 3 \& 4]\label{thm:Mikelic}
Consider the following hypotheses:
\begin{description}
    \item[H1:] there are constants $\beta>0, S_K>0$ and a nonnegative function $f \in
        C_0^\infty(\mathbb{R})$ such that $K$ is given by 
        \[ K(z) = \frac{S_k z^\beta}{1+S_K z^\beta f(z)}, \, z \in [0,1] \]
    \item[H2:] there exists $\lambda>0, S_p>0, M_p>0$ and an arbitrary function $g \in
        C_0^\infty(\mathbb{R})$ such that $-p_c'$ is written as
        \[ -p_c'(z) = \frac{S_p z^{-\lambda}}{1+M_p z^\lambda g(z)}, \, z \in [0,1] \]
    \item[H3:] the product of the functions $K$ and $p_c'$ is bounded on $[0,1]$. 
    \item[H4:] the initial Dirichlet data is smooth: $S_D \in C^1([0,T];
        H^1(\Omega))$, and is bounded away from zero $0 < S_{D,min} \le
        S_D(x,t) \le 1$ (or impose that $S_D \ne 0 a.e.$)
    \item[H5:] $R = R_0 \zeta;$ where $R_0 \in C^1(\overline{\Gamma}_N \times [0,T]), \,
        R_0 \ge 0$ and $\zeta \in C_0^\infty(\mathbb{R}), \, \zeta(z) \ge 0 \text{ for } z >
        0,$ and $z\zeta(z) \ge 0 \text{ for } z<0$.
    \item[H6:] Initial moisture content satisfies a ``finite entropy'' condition: $\int_{\Omega}
        \left( S_0(x)\right)^{2-\beta} dx < + \infty$ where $\beta \ge \lambda > 2$
\end{description}
Under these hypotheses there is a weak solution for \eqref{eqn:hass_eqn} where $S
\in H^1(\Omega_T)$ such that $0 \le S(x,t) \, a.e.$ on $\Omega_T$, $\grad \partial_t S \in
L^2(\Omega_T)$ and $S-S_D \in L^2(0,T;V)$ for $V = H^1(0,1)$.
\end{theorem}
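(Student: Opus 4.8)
The plan is to follow the standard route for degenerate pseudo-parabolic equations: regularize, solve the regularized problem by a fixed-point argument tailored to the pseudo-parabolic structure, extract a priori bounds uniform in the regularization parameter, pass to the limit, and finally obtain uniqueness by an energy argument on the difference of two solutions. The two structural difficulties are the degeneracy $K(z)\sim z^{\beta}$ and the singularity $-p_c'(z)\sim z^{-\lambda}$ as $z\to 0^{+}$, and hypotheses H1--H3 together with the finite-entropy condition H6 are precisely what is needed to handle them. First I would introduce, for $\delta>0$, a nondegenerate problem by replacing $K$ with $K_\delta=K(\,\cdot\,)+\delta$ and $-p_c'$ with a bounded truncation $(-p_c')_\delta$, keeping the data as in H4--H5.

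For fixed $\delta$, existence exploits the fact that the third-order term makes the equation pseudo-parabolic. Introducing the Kirchhoff-type primitive $\beta_\delta$ with $\beta_\delta'=-K_\delta(p_c')_\delta$, so that $-K_\delta(p_c')_\delta\,\grad S=\grad\beta_\delta(S)$, and writing $w=\partial_t S$, the equation reads, at each time, as the linear elliptic problem $w-\tau\,\diver\!\left(K_\delta(S)\,\grad w\right)=\diver\!\left(\grad\beta_\delta(S)+K_\delta(S)\,\foten{e}_3\right)$ for $w$ in terms of $S$; since $K_\delta$ is bounded below, this has a unique $H^1$ solution $w=\Lambda[S]$, and $S(t)=S_i+\int_0^t\Lambda[S](\sigma)\,d\sigma$ defines a map whose fixed point solves the regularized problem. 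A Banach fixed point on $C([0,T_0];L^2(\Omega))$ for small $T_0$, followed by continuation, yields a solution $S_\delta$ with $\partial_t S_\delta,\ \grad\partial_t S_\delta\in L^2(\Omega_T)$ and $S_\delta-S_D\in L^2(0,T;V)$.

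The heart of the argument is the collection of a priori estimates independent of $\delta$. Testing with $\partial_t S_\delta$ and integrating by parts controls $\|\partial_t S_\delta\|_{L^2(\Omega_T)}^2$ and $\tau\int_{\Omega_T}K_\delta(S_\delta)\,|\grad\partial_t S_\delta|^2$, the boundary flux being absorbed by H5 and a trace inequality, and the term coming from $\grad\beta_\delta(S_\delta)$ being controlled via the boundedness of $K p_c'$ from H3. The crucial point is that, in contrast with Richards' equation, no comparison principle is available, so the lower bound $S_\delta\ge 0$ a.e. and the uniform $L^2(0,T;H^1)$ bound on $S_\delta$ must be extracted from an entropy inequality: testing with a singular nonlinearity $\Phi_\delta'(S_\delta)$ with $\Phi_\delta(z)$ behaving like $z^{2-\beta}$ near $0$, the condition $\beta\ge\lambda>2$ makes the singular diffusion contribution coercive in $\grad S_\delta$ while H6 bounds $\int_\Omega\Phi_\delta(S_\delta)$ uniformly, which in turn pins $S_\delta$ away from the degenerate value $0$ in an integral sense. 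Getting the exponents in H1--H2 and H6 to cooperate so that this single functional simultaneously yields nonnegativity and the gradient bound is, I expect, the main obstacle.

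Given these uniform bounds, an Aubin--Lions-type compactness argument — using $\partial_t S_\delta$ bounded in $L^2(\Omega_T)$ and $S_\delta$ bounded in $L^2(0,T;H^1(\Omega))$ — produces a subsequence with $S_\delta\to S$ strongly in $L^2(\Omega_T)$, which suffices to pass to the limit in the nonlinear coefficients $K(S_\delta)$ and $p_c'(S_\delta)$; the term linear in $\grad\partial_t S_\delta$ passes by weak convergence, and the limit $S$ satisfies the weak formulation with $S\in H^1(\Omega_T)$, $0\le S$ a.e. on $\Omega_T$, $\grad\partial_t S\in L^2(\Omega_T)$, and $S-S_D\in L^2(0,T;V)$. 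For uniqueness (the second conclusion of \Cref{thm:Mikelic}) I would test the equation satisfied by the difference $S_1-S_2$ of two solutions with $S_1-S_2$ itself: the pseudo-parabolic term $\tau\,\diver\!\left(K(S)\,\grad\partial_t(S_1-S_2)\right)$ supplies control of $S_1-S_2$ in $H^1$ in time, which, combined with Lipschitz bounds on the (truncated) coefficients over the range of the two solutions and a Gr\"onwall inequality, forces $S_1=S_2$.
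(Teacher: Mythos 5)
The paper itself does not prove this statement: it is quoted verbatim from Mikeli\'c \cite{Mikelic2010} (Theorems 3 \& 4), and the text immediately following the theorem says explicitly that the proof is beyond the scope of the thesis. So there is no in-paper proof to compare you against; the only meaningful comparison is with the cited source. Measured against that, your outline — regularize the degeneracy $K(z)\sim z^{\beta}$ and the singularity $-p_c'(z)\sim z^{-\lambda}$, solve the regularized problem by exploiting the pseudo-parabolic structure (an elliptic solve for $w=\partial_t S$ followed by a fixed point), derive $\delta$-uniform bounds by testing with $\partial_t S_\delta$ and with an entropy-type nonlinearity whose primitive behaves like $z^{2-\beta}$ so that H6 and $\beta\ge\lambda>2$ give nonnegativity and the gradient bound, then conclude by Aubin--Lions compactness — is essentially the strategy of Mikeli\'c's paper, and you correctly single out the entropy estimate (rather than any comparison principle) as the crux. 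As an existence argument your sketch is therefore the right one in spirit, with the expected caveat that the exponent bookkeeping in the entropy step is where all the real work lives.

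The genuine flaw is your last paragraph. The theorem claims only existence; there is no ``second conclusion'' about uniqueness, and \cite{Mikelic2010} does not assert uniqueness either. Your proposed Gr\"onwall argument would not close: for two solutions the third-order contribution is $\tau\,\diver\!\left(K(S_1)\grad\partial_t S_1-K(S_2)\grad\partial_t S_2\right)$, which is \emph{not} of the form $\tau\,\diver\!\left(K(S)\grad\partial_t(S_1-S_2)\right)$; the commutator term $\left(K(S_1)-K(S_2)\right)\grad\partial_t S_2$ can only be absorbed if $\grad\partial_t S_2$ enjoys integrability beyond the $L^2(\Omega_T)$ guaranteed by the theorem, and in addition $K$ and $p_c'$ fail to be Lipschitz near $S=0$ because of the $z^{\beta}$ degeneracy and the $z^{-\lambda}$ singularity, so the ``Lipschitz bounds on the (truncated) coefficients over the range of the two solutions'' you invoke are not available without an a priori positive lower bound on the solutions, which nothing in H1--H6 provides. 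Strike the uniqueness claim (or state it only under additional nondegeneracy/regularity hypotheses); the existence portion of your plan is the part that corresponds to the cited result.
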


The proof of this theorem is beyond the scope of this work, but it indicates that under
constant relative humidity and temperature conditions, where no mass
transfer is expected, there exists a weak solutution to the saturation equation. The sixth
hypothesis restricts the shape of the initial condition. Simply put, the initial condition
cannot drop to zero in such a way as to make $\int_\Omega S_0^{2-\beta} dx$ go to
infinity.  This avoids the natural degenerate nature of the problem. The regularity
expected for the solution ($H^1$) is a nice result given that this is actually a
third-order differential equation.

\section{Alt and Luckhaus Existence and Uniqueness Theorems}\label{sec:alt_luckhaus}
We now turn our attention to demonstrating the necessary conditions for existence and
uniqueness of Richards' equation (saturation with $\tau = 0$) and the vapor diffusion
equation in the special cases where the other dependent variables are held fixed (possibly
even constant).  The two equations are treated together in this section since they both fall under the
class of quasi-linear parabolic equations.  As such, they can be analyzed using similar
theory.  For the purposes of demonstrating existence and uniqueness we apply general
theorems by Alt and Luckhaus \cite{Alt2012,Alt1983} to these equations.

The following paragraphs are paraphrased from Alt and Luckhaus \cite{Alt1983} and are
presented here to introduce the reader to the notation used therein and for future reference.

Consider the general initial boundary value problem (IBVP) for a system of quasilinear
elliptic-parabolic differential equations
\begin{subequations}
    \begin{flalign}
        \partial_t b^j(\foten{u}) - \diver a^j(b(\foten{u}),\grad \foten{u}) &= f^j(b(\foten{u})) \quad \text{ in } (0,T)
        \times \Omega, \, j=1:m \label{eqn:alt_eqn} \\
        b(\foten{u}) &= b^0 \quad \text{ on } \{0\} \times \Omega \label{eqn:alt_ic} \\
        \foten{u} &= \foten{u}^D \quad \text{ on } (0,T) \times \Gamma \label{eqn:alt_dirichlet} \\
        a^j(b(\foten{u}),\grad \foten{u}) \cd \nu &= 0 \quad \text{ on } (0,T) \times (\partial \Omega
        \setminus \Gamma), \, j=1:m \label{eqn:alt_neumann}
    \end{flalign}
\label{eqn:alt}
\end{subequations}
In equations \eqref{eqn:alt}, $\foten{u}\in \mathbb{R}^m$, $b:\mathbb{R}^m \to
\mathbb{R}^m$, and $a: \mathbb{R}^m \times \mathbb{R}^{m\times N} \to \mathbb{R}^m$ where
$N$ is the spatial dimension of the problem and $m$ is the number of equations. 

We call $\foten{u}$ in the affine space $\foten{u}^D + L^r(0,T;V)$ a weak solution of
\eqref{eqn:alt} if the following two properties are fulfilled:
\begin{enumerate}
    \item $b(\foten{u}) \in L^{\infty}(0,T;L^1(\Omega))$ and $\partial_t b(\foten{u}) \in
        L^{r^*}(0,T;V^*)$ with initial values $b^0$, that is
        \[ \int_0^T \left< \partial_t b(\foten{u}), \zeta \right> + \int_0^T \int_\Omega \left(
            b(\foten{u}) - b^0 \right) \partial_t \zeta = 0 \]
        for every test function $\zeta \in L^r(0,T;V) \cap W^{1,1}(0,T;L^\infty(\Omega))$
        with $\zeta(T) = 0$.
    \item $a(b(\foten{u}),\grad \foten{u}), \, f(b(\foten{u})) \in L^{r^*}( (0,T) \times
        \Omega)$ and $\foten{u}$ satisfies the differential equation, that is,
        \[ \int_0^T \left< \partial_t b(\foten{u}), \zeta \right> + \int_0^T \int_\Omega
            a(b(\foten{u}),\grad \foten{u}) \cd \grad \zeta = \int_0^T \int_\Omega
            f(b(\foten{u})) \zeta \]
        for every $\zeta \in L^r(0,T;V)$.
\end{enumerate}
Recall from Functional Analysis that $V^*$ is the dual space of the vector space $V$, and
$W^{k,p} = \{ u \in L^p(\Omega) \, : \, D^\al u \in L^p(\Omega) \, \forall |\al| \le k \}$
with the weak derivative $D^\al u$.  The reader should also recall the common simplified
notation $W^{k,2}(\Omega) = H^k(\Omega)$.

Consider the following hypotheses:
\begin{description}
    \item[H1:] $\Omega \subset \mathbb{R}^n$ is open, bounded, and connected with Lipschitz
        boundary, $\Gamma \subset \partial \Omega$ is measurable with $H^{n-1}(\Gamma) >
        0$ and $0 < T < \infty$.
    \item[H2:] $b$ is a monotone vector field and a continuous gradient, that is, there is
        a convex $C^1$ function $\Phi: \mathbb{R}^m \to \mathbb{R}$ with $b = \grad \Phi$.
        We can assume that $b(0) =0$.  
        The convexity of $\Phi$ then implies that we can define
        \[ B(\foten{z}) :=  b(\foten{z}) \cd \foten{z} -
            \Phi(\foten{z}) + \Phi(0). \]
        \item[H3:] $a(b(\foten{z}),\foten{p})$ is continuous in $\foten{z}$ and
            $\foten{p}$ and elliptic in the sense that 
            \[ \left( a(b(\foten{z}),\foten{p}^{(1)}) - a(b(\foten{z}),\foten{p}^{(2)})
                \right) \cd \left( \foten{p}^{(1)} - \foten{p}^{(2)} \right) \ge C \left|
                \foten{p}^{(1)} - \foten{p}^{(2)} \right|^r \]
                with $1 < r < \infty$ and $f(b(\foten{z}))$ continuous in $\foten{z}$.
    \item[H4:] The following growth condition is satisfied:
        \[ \left| a(b(\foten{z}),\foten{p}) \right| + \left| f(b(\foten{z})) \right| \le c
            \left( 1 + B(\foten{z})^{(r-1)/r} + |\foten{p}|^{r-1} \right). \]
    \item[H5:] We assume that $u^D$ is in $L^r(0,T;W^{1,r}(\Omega))$ and in
        $L^{\infty}\left( (0,T) \times \Omega \right)$ and we define
        \[ V := \{ v \in W^{1,r}(\Omega) : v = 0 \text{ on } \Gamma \} \]
    \item[H6:] Assume either that $b^0$ maps into the range of $b$ and therefore there is a
        measurable function $\foten{u}^0$ with $b^0 = b(\foten{u}^0)$ or that 
        \[ \partial_t u_D \in L^1(0,T;L^{\infty}(\Omega)). \]
\end{description}

The existence and uniqueness theorems of Alt and Luckhaus \cite{Alt1983} are stated here
for convenience and reference.

\begin{theorem}[Alt and Luckhaus \cite{Alt1983}, Theorem 1.7]\label{thm:alt}
Suppose the data satisfy H1 - H6, and assume that $\partial_t u^D \in
L^1(0,T;L^\infty(\Omega))$. Then there is a weak solution to \eqref{eqn:alt}.
\end{theorem}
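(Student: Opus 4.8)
The plan is to prove existence by Rothe's method: semidiscretize in time, solve the resulting sequence of quasilinear elliptic problems by monotone/pseudomonotone operator theory, derive a priori bounds uniform in the time step, and pass to the limit. First I would fix $N\in\mathbb{N}$, put $h=T/N$ and $t_k=kh$, and for $k=1,\dots,N$ seek $\foten{u}^k\in \foten{u}^D(t_k)+V$ solving the implicit Euler scheme
\[ \tfrac1h\bigl(b(\foten{u}^k)-b(\foten{u}^{k-1})\bigr)-\diver\,a\bigl(b(\foten{u}^k),\grad\foten{u}^k\bigr)=f\bigl(b(\foten{u}^k)\bigr)\quad\text{in }\Omega, \]
with the no-flux condition on $\partial\Omega\setminus\Gamma$ and $b(\foten{u}^0)=b^0$ (using the second alternative of H6, $\foten{u}^0$ may be taken as $\foten{u}^D(0)$ shifted appropriately). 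Each step is a stationary problem of the form $\tfrac1h b(\foten{u})+A\foten{u}=g$ in $V^*$; by H3 the map $\foten{p}\mapsto a(b(\foten{z}),\foten{p})$ is strictly monotone in the $r$-th-power sense and by H4 it has the correct growth, so $A$ is bounded, coercive and pseudomonotone from $V=\{v\in W^{1,r}(\Omega):v=0\text{ on }\Gamma\}$ to $V^*$, and a solution exists by the Leray–Lions/Browder–Minty theorem; monotonicity of $b$ gives the discrete estimates needed below.

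Second I would establish the energy bounds. Testing the $k$-th equation with $\foten{u}^k-\foten{u}^D(t_k)$ and using the ``$B$-identity'' $(b(\foten{u}^k)-b(\foten{u}^{k-1}))\cdot\foten{u}^k\ge B(\foten{u}^k)-B(\foten{u}^{k-1})$, which is exactly the convexity of $\Phi$ (recall $B(\foten{z})=b(\foten{z})\cdot\foten{z}-\Phi(\foten{z})+\Phi(0)$), then summing over $k$ and absorbing the Dirichlet and source contributions via H4–H6, the hypothesis $\partial_t\foten{u}^D\in L^1(0,T;L^\infty(\Omega))$, and a discrete Gronwall inequality, gives
\[ \sup_{k}\int_\Omega B(\foten{u}^k)\;+\;\sum_{k}h\,\|\foten{u}^k\|_{W^{1,r}(\Omega)}^r\;\le\;C, \]
uniformly in $h$. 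Forming the piecewise-constant interpolant $\overline{u}_h$ and the piecewise-affine interpolant $\widehat{u}_h$, this yields: $\overline{u}_h$ bounded in $L^r(0,T;W^{1,r}(\Omega))$; $b(\overline{u}_h)$ bounded in $L^\infty(0,T;L^1(\Omega))$; $a(b(\overline{u}_h),\grad\overline{u}_h)$ and $f(b(\overline{u}_h))$ bounded in $L^{r^*}((0,T)\times\Omega)$ by H4; and hence $\partial_t b(\widehat{u}_h)$ bounded in $L^{r^*}(0,T;V^*)$. Along a subsequence I get weak limits $\foten{u}$, $\chi=\operatorname{w-lim}a(b(\overline{u}_h),\grad\overline{u}_h)$, and $b(\overline{u}_h)\rightharpoonup v$ with $\partial_t v\in L^{r^*}(0,T;V^*)$.

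The hard part is to upgrade this to strong convergence of $b(\overline{u}_h)$ and to identify $v=b(\foten{u})$ and $\chi=a(b(\foten{u}),\grad\foten{u})$; since $b$ is merely monotone (it may be flat, so the problem is genuinely degenerate), no naive Aubin–Lions argument applies. The Alt–Luckhaus device is the time-translate estimate: one shows
\[ \int_0^{T-\delta}\!\!\int_\Omega\bigl(b(\overline{u}_h(t+\delta))-b(\overline{u}_h(t))\bigr)\cdot\bigl(\overline{u}_h(t+\delta)-\overline{u}_h(t)\bigr)\,dx\,dt\;\longrightarrow\;0 \]
as $\delta\to0$, uniformly in $h$, by pairing the $V^*$-bound on $\partial_t b(\widehat{u}_h)$ against the $W^{1,r}$-bounded increments of $\overline{u}_h$. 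Combined with the elementary fact that on bounded sets $(b(\xi)-b(\eta))\cdot(\xi-\eta)$ controls $|b(\xi)-b(\eta)|$ (a modulus of continuity for $b$ read off from $\Phi$), this forces $b(\overline{u}_h)$ to be relatively compact in $L^1((0,T)\times\Omega)$, so $b(\overline{u}_h)\to b(\foten{u})$ a.e.\ and $v=b(\foten{u})$. Minty's monotonicity trick applied to $a$ — now legitimate because the strong convergence of $b(\overline{u}_h)$ lets one pass to the limit in $\int a(b(\overline{u}_h),\grad\overline{u}_h)\cdot\grad\overline{u}_h$ — identifies $\chi=a(b(\foten{u}),\grad\foten{u})$. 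A final step with test functions $\zeta\in L^r(0,T;V)\cap W^{1,1}(0,T;L^\infty(\Omega))$ vanishing at $t=T$ recovers the initial condition $b(\foten{u})|_{t=0}=b^0$, where H6 (resp.\ $\partial_t\foten{u}^D\in L^1(0,T;L^\infty(\Omega))$) is precisely what makes the initial trace meaningful. Passing to the limit in the weak form of the scheme then delivers the weak solution asserted in the theorem. I expect the uniform time-translate/compactness step for the degenerate term $b(\overline{u}_h)$ to be the main obstacle; everything else is a standard, if lengthy, monotone-operator computation.
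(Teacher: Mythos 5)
The paper does not prove this theorem at all---it is quoted verbatim from Alt and Luckhaus \cite{Alt1983} precisely so that its hypotheses can be verified for Richards' equation and the vapor diffusion equation---so there is no in-paper proof to compare against; your Rothe-method sketch (implicit time discretization, per-step solvability by pseudomonotone operator theory, the $B$-functional energy estimate, the uniform time-translate estimate giving $L^1$ compactness of $b(\overline{u}_h)$, and Minty's trick to identify the flux) is essentially the original Alt--Luckhaus argument, with the translate estimate playing the role of their key compactness lemma. The outline is correct at the level of detail given.
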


\begin{theorem}[Alt and Luckhaus \cite{Alt1983}, Theorem 2.4]\label{thm:alt_unique}
Suppose that the data satisfy H1 - H6 with $r=2$ and 
\[ a(t,x,b(\foten{z}),\foten{p}) = A(t,x)\foten{p} + e(b(\foten{z})) \]
where $A(t,x)$ is a symmetric matrix and measurable in $t$ and $x$ such that for $\al > 0$
\[ A - \al I \quad \text{and} \quad A + \al \partial_t A \]
are positive definite.  Moreover assume that
\[ \left| e(b(\foten{z}_2)) - e(b(\foten{z}_1)) \right|^2 + \left| f(b(\foten{z}_2)) -
    f(b(\foten{z}_1)) \right|^2 \le C \left( b(\foten{z}_2) - b(\foten{z}_1) \right)
    \left( \foten{z}_2 - \foten{z}_1 \right). \]
Then there is at most one weak solution.
\end{theorem}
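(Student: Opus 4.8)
\medskip
\noindent\textbf{Proof sketch.} The plan is to run the classical energy--monotonicity argument of Alt and Luckhaus, using a time-integrated test function to turn the parabolic term into a monotone quadratic form and to exploit the symmetry and positivity of $A$ in the principal part. Suppose $\foten{u}_1,\foten{u}_2\in\foten{u}^D+L^2(0,T;V)$ are two weak solutions with the same data, and set $\foten{v}=\foten{u}_1-\foten{u}_2$, $\foten{w}=b(\foten{u}_1)-b(\foten{u}_2)$. For $s\in(0,T)$ take the test function
\[ \zeta_s(t,x)=\begin{cases}\displaystyle\int_t^s\foten{v}(\sigma,x)\,d\sigma,& 0<t<s,\\ \foten{0},& s\le t<T,\end{cases} \]
which lies in $L^2(0,T;V)$, vanishes at $t=T$, and satisfies $\partial_t\zeta_s=-\foten{v}$ on $(0,s)$. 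To make $\zeta_s$ admissible in the first defining property of a weak solution I would first mollify in time (Steklov averaging), prove the estimate for the regularization, and then pass to the limit.

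Subtracting the two weak formulations and testing with $\zeta_s$, integration by parts in time in the parabolic term (the $b^0$ contributions cancel and $\zeta_s(T)=\foten{0}$) gives
\[ \int_0^T\langle\partial_t\foten{w},\zeta_s\rangle\,dt=-\int_0^s\!\!\int_\Omega\foten{w}\cd\partial_t\zeta_s=\int_0^s\!\!\int_\Omega\big(b(\foten{u}_1)-b(\foten{u}_2)\big)\cd\big(\foten{u}_1-\foten{u}_2\big)\ \ge\ 0 \]
by monotonicity of $b$. For the principal part write $\Psi(t)=\int_0^t\grad\foten{v}(\sigma)\,d\sigma$, so that $\grad\zeta_s=\Psi(s)-\Psi(t)$ on $(0,s)$ and $\grad\foten{v}=\partial_t\Psi$; integrating by parts in time and using the symmetry of $A$ I expect an identity of the schematic form
\[ \int_0^s\!\!\int_\Omega A\,\grad\foten{v}\cd\grad\zeta_s=\tfrac12\int_\Omega\Psi(s)\cd A(s)\,\Psi(s)+\int_0^s\!\!\int_\Omega\Big(\tfrac12\,\Psi\cd\partial_tA\,\Psi-\Psi\cd(\partial_tA)\,\Psi(s)\Big), \]
and here the two hypotheses ``$A-\al I$ positive definite'' and ``$A+\al\,\partial_tA$ positive definite'' are exactly what force the right-hand side to be bounded below by $c\,\|\Psi(s)\|_{L^2(\Omega)}^2-C\int_0^s\|\Psi(\sigma)\|_{L^2(\Omega)}^2\,d\sigma$.

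The lower-order terms $e(b(\foten{u}_1))-e(b(\foten{u}_2))$ and $f(b(\foten{u}_1))-f(b(\foten{u}_2))$ are absorbed using the structural inequality assumed in the theorem: by Young's inequality their contribution is dominated by $\varepsilon\,\|\grad\zeta_s\|_{L^2}^2+C_\varepsilon\int_0^s\!\!\int_\Omega(b(\foten{u}_1)-b(\foten{u}_2))\cd(\foten{u}_1-\foten{u}_2)$, the $\varepsilon$-part going into the $\Psi(s)$ term and the rest into the (nonnegative) monotonicity term. Collecting everything yields
\[ \int_0^s\!\!\int_\Omega\foten{w}\cd\foten{v}+c\,\|\Psi(s)\|_{L^2(\Omega)}^2\ \le\ C\int_0^s\|\Psi(\sigma)\|_{L^2(\Omega)}^2\,d\sigma, \]
and Gronwall's inequality forces both terms to vanish for every $s$. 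Hence $\Psi(s)=\int_0^s\grad\foten{v}\,d\sigma=\foten{0}$ for a.e.\ $s$, so $\grad\foten{v}=\foten{0}$ a.e.; since $\Omega$ is connected and $\foten{v}=\foten{0}$ on $\Gamma$ with $H^{n-1}(\Gamma)>0$, we conclude $\foten{v}\equiv\foten{0}$, i.e.\ $\foten{u}_1=\foten{u}_2$.

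I expect the main obstacle to be the time-integration-by-parts in the principal part and extracting a usable sign from it: the cross term $\int_0^s\Psi\cd(\partial_tA)\,\Psi(s)$ couples the endpoint value $\Psi(s)$ with the entire history of $\Psi$, and it is only the \emph{combination} of the two positivity conditions on $A$ (not either one alone) that lets the Gronwall estimate close. A secondary technical point is the admissibility of the integrated test function $\zeta_s$ in the weak formulation, which I would resolve by the time-regularization mentioned above.
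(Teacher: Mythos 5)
The paper does not prove this statement at all: Theorem \ref{thm:alt_unique} is quoted verbatim from Alt and Luckhaus \cite{Alt1983} ``for convenience and reference,'' so there is no in-paper argument to compare yours against. Judged on its own, your sketch reconstructs the standard Alt--Luckhaus uniqueness argument and has the right skeleton: the integrated test function $\zeta_s(t)=\int_t^s\foten{v}\,d\sigma$ (made admissible by Steklov averaging), monotonicity of $b$ giving the sign of the parabolic term, the time-integration-by-parts identity for $\int_0^s\!\int_\Omega A\,\grad\foten{v}\cd\grad\zeta_s$ exploiting symmetry of $A$ and both positivity hypotheses, the structural inequality on $e$ and $f$ to absorb the lower-order terms, and Gronwall plus the Dirichlet condition on $\Gamma$ to conclude $\foten{u}_1=\foten{u}_2$. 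Your identity for the principal part is correct, and you correctly identify the cross term $\int_0^s\!\int_\Omega\Psi(s)\cd(\partial_tA)\Psi$ as the delicate point.

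One quantitative step is stated backwards and, as written, would not close. When you absorb the $e$- and $f$-differences you write their contribution as $\varepsilon\|\grad\zeta_s\|_{L^2}^2+C_\varepsilon\int_0^s\!\int_\Omega\big(b(\foten{u}_1)-b(\foten{u}_2)\big)\cd(\foten{u}_1-\foten{u}_2)$ and propose to absorb the second piece into the monotonicity term; but that term appears on the left with coefficient $1$, so a large $C_\varepsilon$ cannot be absorbed there. The Young splitting must go the other way: put the small constant on $|e(b(\foten{u}_1))-e(b(\foten{u}_2))|^2$ (and likewise for $f$), so that the structural hypothesis yields $\delta C\int_0^s\!\int_\Omega\foten{w}\cd\foten{v}$ with $\delta C\le\tfrac12$, while the companion term $\tfrac{1}{4\delta}\int_0^s\|\grad\zeta_s\|_{L^2(\Omega)}^2\,dt\le\tfrac{s}{2\delta}\|\Psi(s)\|_{L^2}^2+\tfrac{1}{2\delta}\int_0^s\|\Psi\|_{L^2}^2\,d\sigma$ is handled by taking $s$ in a sufficiently small interval so that the $\|\Psi(s)\|^2$ piece is dominated by the coercive term from $A-\al I$, then applying Gronwall and iterating over finitely many subintervals of $(0,T)$. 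With that correction (and the regularization you already flag), the argument is the same one Alt and Luckhaus give.
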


\subsection{Existence and Uniqueness for Richards' Equation}\label{sec:existence_sat_rich}
The existence and uniqueness of weak solutions to Richards equation is known, and a
general tool for handling this problem is the Alt-Luckhaus theorem stated above.  In this
subsection we set up and state the theorems. 
We will show that the hypotheses of Theorems \ref{thm:alt} and \ref{thm:alt_unique} are
satisfied under restrictions on the Dirichlet boundary conditions and appropriate
boundedness assumptions.  The equation 
\begin{flalign}
    \notag \pd{S}{t} &= \pd{ }{x} \left( \frac{\kappa_{rw}(S)}{\rho^l g} \pd{p^l}{x} -
    \kappa_{rw}(S) \right) \\
    &= \pd{ }{x} \left( \kappa_{rw}(S) \pd{h}{x} - \kappa_{rw}(S) \right)
    \label{eqn:Richards_head}
\end{flalign}
is Richards' equation in dimensionless time and one spatial dimension. In this formulation
we take $h$ as the hydraulic head: $h = p^l / (\rho^l g)$.  Recall from previous
discussions that the pressure (or head) is a function of saturation.  This relationship is
invertible so here we note that saturation can be written as a function of pressure (or
head).  As suggested in \cite{Alt1983,Pop2004},
``saturation may be less regular than pressure, therefore we expect to achieve better
[regularity] results by applying a Kirchhoff transform''. A Kirchhoff transformation gives
a smoothed relationship between head and a new unknown; a generalized pressure head
\[ \mathscr{K}: \mathbb{R} \to \mathbb{R} \quad \text{as} \quad \mathscr{K}(h) =
    \int_{-\infty}^h \kappa_{rw}(S(q)) dq := u. \]
The pressure head is taken to be negative by convention (opposite sign of capillary
pressure).
The variable $u$ now becomes the primary unknown of \eqref{eqn:Richards_head} since the
spatial derivative can be written as
\[ \frac{du}{dx} = \frac{d \mathscr{K}}{dh} \frac{dh}{dx} = \kappa_{rw}
    \frac{dh}{dx}, \]
and if $b(u)$ is defined as $b(u) = S(\mathscr{K}^{-1}(u))-1$ then
\begin{flalign}
    \pd{ }{t} \left( b(u) \right) = \pd{ }{x} \left( \pd{u}{x} -  \kappa_{rw}(b(u)+1)
    \right).
\end{flalign}
Notice that the definition of $b$ depends on the
invertibility of $\mathscr{K}$. Also, since $\mathscr{K}^{-1}(u) = h$, $b$ can be seen as
a function of $h$: $b(h) = S(h)-1$ (see Figure
\ref{fig:KirchhoffRichards_bu_m08}) .  Given the van Genuchten capillary pressure - saturation
relationship,
$ S(h) = \left( \left( \al h \right)^{1/(1-m)} + 1 \right)^{-m},$
and the van Genuchten relative permeability function,
$\kappa_{rw}(S) = \sqrt{S} \left( 1 - \left( 1 - S^{1/m} \right)^{m} \right)^2,$
Figure \ref{fig:KirchhoffRichards_m08} shows several plots of $\mathscr{K}$ for various
parameter values. There is a horizontal asymptote as $h\to -\infty$ and it is evident from
the plot that $\mathscr{K}$ is one-to-one and onto for all values of $h \in (-\infty,0)$,
but as $h$ gets {\it large} in absolute value the inverse becomes
unstable. 
\linespread{1.0}
\begin{figure}[H]
    \begin{center}
        \includegraphics[width=0.75\columnwidth]{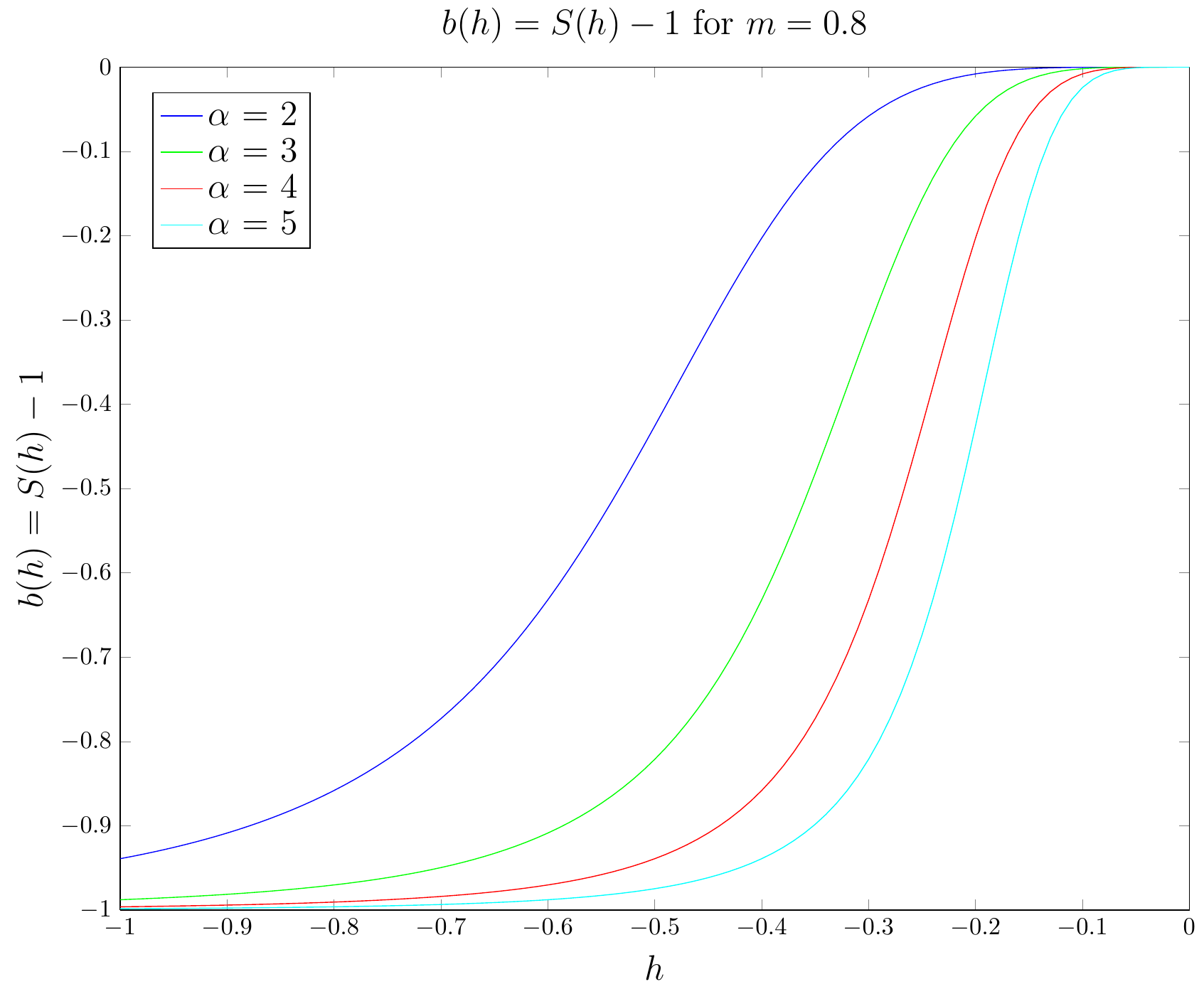}
    \end{center}
    \caption{The function $b(h) = S(h)-1$ for $m = 0.8$ and various values of
$\alpha$.}
    \label{fig:KirchhoffRichards_bu_m08}
\end{figure}
\renewcommand{\baselinestretch}{\normalspace}
\linespread{1.0}
\begin{figure}[H]
    \begin{center}
        \includegraphics[width=0.75\columnwidth]{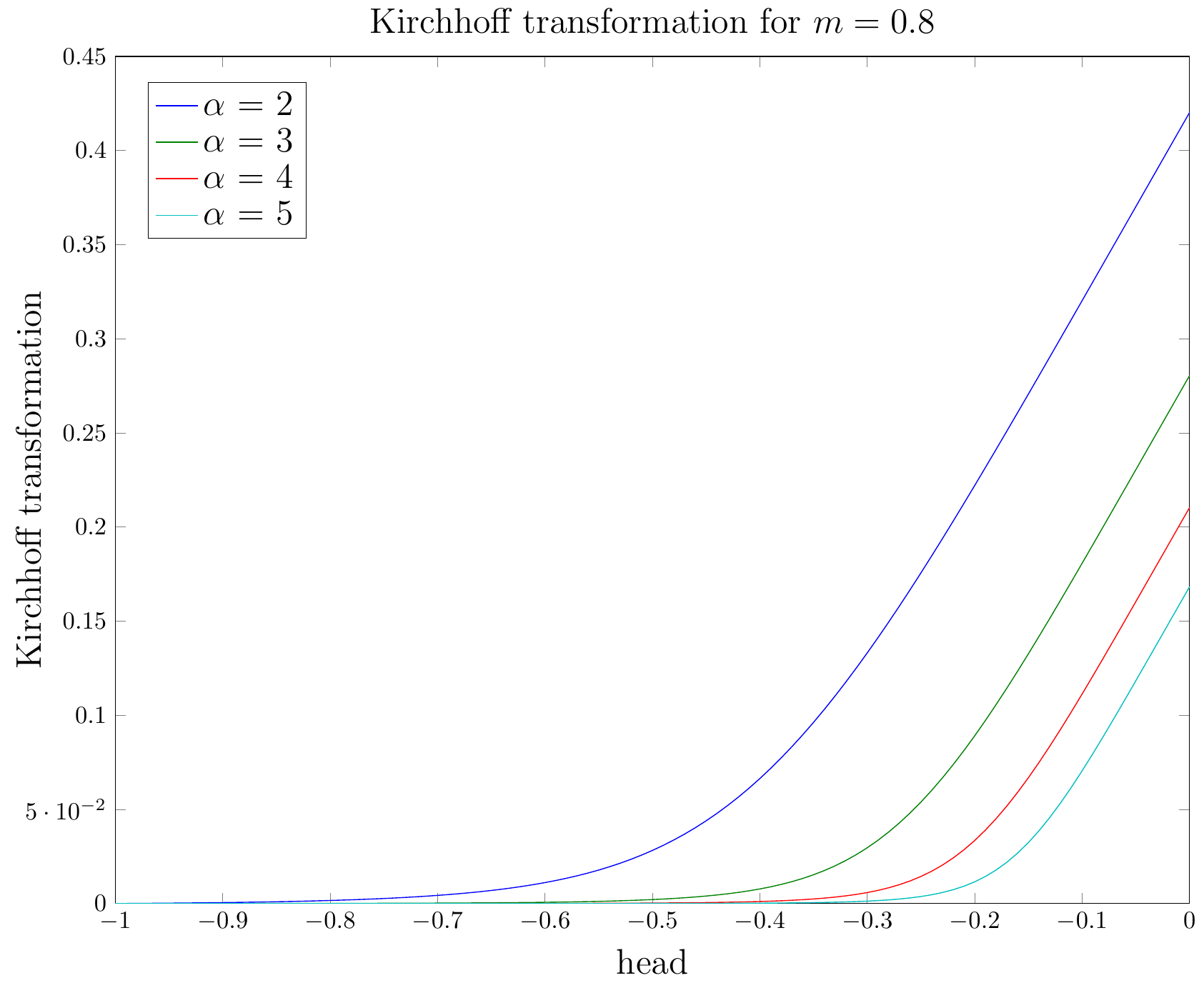}
    \end{center}
    \caption{Kirchhoff transformation $\mathscr{K}$ for $m = 0.8$ and various values of
$\alpha$.}
    \label{fig:KirchhoffRichards_m08}
\end{figure}
\renewcommand{\baselinestretch}{\normalspace}

Matching to equation \eqref{eqn:alt_eqn} we note that $j=1$, define $a(b(u),\grad u)$ as 
\[ a\left(b(u),\pd{u}{x} \right) = \pd{u}{x} + \kappa_{rw}(b(u)+1), \]
and notice that $f = 0$.  Given constant head Dirichlet boundary conditions, we finally
rewrite Richards' equation as
\begin{subequations}
    \begin{flalign}
        \pd{ }{t} \left( b(u) \right) &= \pd{ }{x} \left( \pd{u}{x} - \kappa_{rw}(b(u))
        \right) \quad \text{in} \quad (0,T) \times \Omega \\
        u &= u_D \quad \text{on}\quad  (0,T) \times \{0,1\} \\
        u &= u_0 \quad \text{on}\quad  0 \times \Omega
    \end{flalign}
    \label{eqn:Richards_Kirchhoff_PDE_IBC}
\end{subequations}
With this form of Richards' equation we propose the following existence and uniqueness
result.

\begin{theorem}
    \label{thm:Richards_Kirchhoff_exist_unique}
    Suppose that the following conditions hold for the generalized head, $u$, in
    equation \eqref{eqn:Richards_Kirchhoff_PDE_IBC}.  
    \begin{enumerate}
        \item $\Omega = (0,1)$ and $t \in (0,T) \subset (0,\infty)$
        \item $u_D \in L^{\infty}((0,T)\times\Omega)$
        \item $\partial_t u_D \in L^1(0,T;L^{\infty}(\Omega))$
    \end{enumerate}
    Then there exists a unique weak ($u \in u_D + L^2(0,T;H_0^1(\Omega))$) solution to
    \eqref{eqn:Richards_Kirchhoff_PDE_IBC}.
\end{theorem}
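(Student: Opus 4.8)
The plan is to recast the Kirchhoff-transformed Richards problem \eqref{eqn:Richards_Kirchhoff_PDE_IBC} in the abstract elliptic--parabolic form \eqref{eqn:alt} with $m=1$ and $r=2$, and then to verify the structural hypotheses H1--H6 so that Theorem~\ref{thm:alt} supplies a weak solution and Theorem~\ref{thm:alt_unique} supplies uniqueness. Concretely I would take $b(u) = S(\mathscr{K}^{-1}(u)) - 1$ (after the harmless normalization making $b(0)=0$), $a\big(b(u),\partial_x u\big) = \partial_x u - \kappa_{rw}(b(u)+1)$, and $f\equiv 0$, with $V = H_0^1(\Omega)$; under these identifications \eqref{eqn:Richards_Kirchhoff_PDE_IBC} is exactly \eqref{eqn:alt_eqn}, and the Alt--Luckhaus solution class $u \in u_D + L^2(0,T;H_0^1(\Omega))$ is precisely the regularity claimed.

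First I would dispatch the structural hypotheses, all of which are routine here. H1 is immediate from assumption~(1): $\Omega=(0,1)$ is open, bounded, connected with Lipschitz boundary, $\Gamma=\{0,1\}$ has positive $H^{0}$-measure, and $T<\infty$. For H2, I would use the facts established just above the theorem: $\mathscr{K}$ is strictly increasing (derivative $\kappa_{rw}(S)>0$) and onto its range, so $\mathscr{K}^{-1}$ is continuous and increasing, and composing with the monotone van Genuchten curve $S(h)$ makes $b$ a continuous, nondecreasing scalar map; hence $\Phi(z):=\int_0^z b(s)\,ds$ is a convex $C^1$ potential with $b=\Phi'$ and $B(z)=b(z)z-\Phi(z)+\Phi(0)$ is well defined. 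H3 holds with $r=2$ because the principal part is the constant-coefficient operator $\partial_x u$: for all $p^{(1)},p^{(2)}$, $\big(a(b(z),p^{(1)})-a(b(z),p^{(2)})\big)\big(p^{(1)}-p^{(2)}\big)=|p^{(1)}-p^{(2)}|^2$, and $f\equiv0$ is trivially continuous. H4 follows from the boundedness of $\kappa_{rw}$ on $[0,1]$: $|a(b(z),p)|+|f(b(z))|\le |p|+\|\kappa_{rw}\|_\infty\le c(1+B(z)^{1/2}+|p|)$. Finally, H5 and H6 hold by assumptions~(2) and (3): $u_D\in L^\infty((0,T)\times\Omega)$, the Dirichlet datum is constant in space so assumption~(2) already places it in $L^2(0,T;W^{1,2}(\Omega))$, and $\partial_t u_D\in L^1(0,T;L^\infty(\Omega))$ is exactly (3) (vacuous when $u_D$ is a genuine constant). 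Theorem~\ref{thm:alt} then yields existence.

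For uniqueness I would apply Theorem~\ref{thm:alt_unique} with the decomposition $a(t,x,b(z),p)=A(t,x)p+e(b(z))$, $A\equiv1$, $e(b(z))=-\kappa_{rw}(b(z)+1)$. Then $A$ is symmetric, measurable, and time-independent, so $A-\alpha I=1-\alpha>0$ and $A+\alpha\,\partial_t A=1>0$ for any $\alpha\in(0,1)$. The one nontrivial requirement is the Lipschitz-in-energy estimate
\[
\big|e(b(z_2))-e(b(z_1))\big|^2+\big|f(b(z_2))-f(b(z_1))\big|^2\le C\,\big(b(z_2)-b(z_1)\big)\big(z_2-z_1\big),
\]
which, since $f\equiv0$, reduces to bounding $|\kappa_{rw}(S_2)-\kappa_{rw}(S_1)|^2$ by the monotone ``energy'' $(b(z_2)-b(z_1))(z_2-z_1)$. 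This is the step I expect to be the main obstacle. The natural route is to show that $\kappa_{rw}\circ(b+1)$ is Lipschitz both as a function of $u$ (i.e.\ $\tfrac{d}{du}\kappa_{rw}(S)=\kappa_{rw}'(S)\,S'(h)/\kappa_{rw}(S)$ is bounded) and as a function of $b$ (i.e.\ $\kappa_{rw}$ is Lipschitz in $S$), so that, using monotonicity of $b$ to drop absolute values on the right, $|e(b(z_2))-e(b(z_1))|^2\le L_uL_b\,|z_2-z_1|\,|b(z_2)-b(z_1)|=L_uL_b\,(z_2-z_1)(b(z_2)-b(z_1))$. The delicate point is the degeneracy $S\to0$, where the van Genuchten $\kappa_{rw}(S)\sim S^{1/2}$ has an unbounded $S$-derivative; there I would either exploit the cancellation in $\kappa_{rw}'(S)^2\,S'(h)/\kappa_{rw}(S)$ (a short asymptotic computation with the van Genuchten formulas shows this ratio stays bounded over the admissible exponent range) or restrict to data keeping $S$ bounded away from $0$, on which $\kappa_{rw}$ is genuinely $C^1$. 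With that estimate verified, Theorem~\ref{thm:alt_unique} gives uniqueness and the proof is complete.
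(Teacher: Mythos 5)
Your route is the same one the paper intends: the paper does not verify the Alt--Luckhaus hypotheses itself but simply observes that \eqref{eqn:Richards_Kirchhoff_PDE_IBC} is the model problem of \cite{Alt1983} and defers to \cite{Pop2004}, so spelling out H1--H6 and then invoking Theorems \ref{thm:alt} and \ref{thm:alt_unique}, as you do, is exactly the intended argument. Your existence half is fine: with $b(u)=S(\mathscr{K}^{-1}(u))-1$, a linear principal part, $f\equiv 0$, and hypotheses (2)--(3) covering H5--H6, Theorem \ref{thm:alt} applies.

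The gap is in the uniqueness step, and specifically in where you look for the degeneracy. The condition you must verify, $|e(b(z_2))-e(b(z_1))|^2\le C\,(b(z_2)-b(z_1))(z_2-z_1)$ with $e=\mp\kappa_{rw}(b+1)$, is indeed governed by the ratio you write down, $(\kappa_{rw}'(S))^2\,|S'(h)|/\kappa_{rw}(S)$, but your concern at $S\to 0$ is misplaced: for the van Genuchten--Mualem pair used here, $\kappa_{rw}(S)\approx m^2 S^{1/2+2/m}$ as $S\to 0$ (not $S^{1/2}$; the factor $\{1-(1-S^{1/m})^m\}^2\sim m^2S^{2/m}$ flattens it), so $\kappa_{rw}'\to 0$ and the ratio tends to zero there; restricting $S$ away from $0$ therefore buys nothing. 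The genuine obstruction is at full saturation: $\kappa_{rw}'(S)\sim 2m^{1-m}(1-S)^{m-1}\to\infty$ as $S\to 1$, and with $h\to 0^-$ one has $1-\kappa_{rw}(S(h))\approx 2(\alpha|h|)^{n-1}$, $1-S(h)\approx m(\alpha|h|)^{n}$, and $\mathscr{K}(0)-\mathscr{K}(-\delta)\approx\delta$, so that for $z_1=\mathscr{K}(-\delta)$, $z_2=\mathscr{K}(0)$,
\[
\frac{\bigl|e(b(z_2))-e(b(z_1))\bigr|^2}{\bigl(b(z_2)-b(z_1)\bigr)\bigl(z_2-z_1\bigr)}
\;\sim\;
\frac{4(\alpha\delta)^{2(n-1)}}{m\,(\alpha\delta)^{n}\,\delta}
\;=\;
\frac{4\,\alpha^{\,n-2}}{m}\,\delta^{\,n-3},
\]
which is unbounded as $\delta\to 0$ whenever $n<3$ (i.e.\ $m<2/3$). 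So the Lipschitz-in-energy hypothesis of Theorem \ref{thm:alt_unique} actually fails on part of the admissible van Genuchten range, your claim that a short asymptotic computation shows the ratio ``stays bounded over the admissible exponent range'' is not correct as stated, and the full-saturation regime is not an exotic corner case here (the drainage data of \cite{Smits2011} start at $S=1$). To close the uniqueness half you must either (i) add the hypothesis $n\ge 3$ (comfortably satisfied by the $n=17.8$ sand of \cite{Smits2011}), (ii) keep the solution bounded away from $S=1$ (not $S=0$), or (iii) replace Theorem \ref{thm:alt_unique} by a uniqueness argument that does not need this structure, such as Otto's $L^1$-contraction result for elliptic--parabolic equations, which is effectively what the paper's citation-based treatment leans on.
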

The proof of Theorem \ref{thm:Richards_Kirchhoff_exist_unique} has been discussed in
several articles. In particular, the transformation of Richards equation to the form seen
in equations \eqref{eqn:Richards_Kirchhoff_PDE_IBC} are discussed as a model problem for
the Alt and Luckhaus theorems \cite{Alt1983}.  Furthermore, this proof is presented in
\cite{Pop2004} as part of their numerical formulation of Richards' equation.  The
fundamental reason for presenting this result here is that the value of $\tau$ in the new
saturation equations is not yet 
well known in experimental studies.  Presenting this case simply covers all of the
possible bases.

In the cases where $\grad \rh$, $\grad T$, or mass transfer terms are non-zero, these
terms become source terms that depend on $x$. This means that $f = f(x,b(u)) \ne 0$.  According
to section 1.10 in \cite{Alt1983} ``it makes no difference if $a$ and $f$ depend on $x$.''
This is made more clear in their subsequent work \cite{Alt2012} where the theorem is
explicitly stated to allow for $x$ and $t$ dependence.

For comparison sake we observe the difference between the regularity required for Richards
equation (Theorem \ref{thm:Richards_Kirchhoff_exist_unique}) and for the extended
saturation equation with the third-order term (Theorem \ref{thm:Mikelic}). For the
equation with the third-order term ($\diver \grad \dot{S}$), the weak solution is in $H^1$
while in the second-order equation the weak solution is in $L^2$.  This extra required
regularity is expected.

\subsection{Vapor Diffusion Equation}\label{sec:existence_diffusion}
To prove existence for the gas diffusion problem we proceed using the theorem of Alt and
Luckhaus as with the saturation equation.  Recall that under constant temperature and
fixed saturation conditions
\begin{flalign}
    (1-S(x)) \pd{\rh}{t} - \pd{ }{x} \left( \mathcal{D}(\rh,S(x)) \pd{\rh}{x} \right) = 0
    \label{eqn:diffusion_exist}
\end{flalign}
where 
\begin{flalign}
    \mathcal{D}(\rh,S(x)) = (1-S(x)) D^g + \rho_{sat} \rh R^{g_v} T \left( 1 -
    \frac{R^{g_a}}{R^{g_v}} \right) \lambda K(S(x)).
    \label{eqn:diff_coeff_exist}
\end{flalign}
Allowing $S$ to be a function of $x$ constitutes a departure from the exact form of the
parabolic-elliptic system found in Alt and Luckhaus (see equations \eqref{eqn:alt}) as
this is now a non-autonomous differential equation.  In \cite{Alt2012} this proof was
generalized to allow for $b = b(x,u)$ and for $a = a(x,u,\grad u)$ (see section 11 of
\cite{Alt2012}). The only additional assumptions for the existence theorems are that $b :
\Omega \times \mathbb{R} \to \mathbb{R}$ and $a: \Omega \times \mathbb{R} \times
\mathbb{R}^N \to \mathbb{R}^N$ are measurable in the first argument and continuous in the
others. With this addition to Theorem \ref{thm:alt} we proceed with the existence theorem
for the gas-phase equation.  

Assume that the initial-boundary conditions are
\begin{subequations}
    \begin{flalign}
        \rh(0,t) &= \rh_{D,0} \in (0,1), \quad \forall t \in (0,T) \\
        \rh(1,t) &= \rh_{D,1}(t) \in (0,1), \quad \forall t \in (0,T) \\
        \rh(x,0) &= \rh_0(x), \quad x \in \Omega.
    \end{flalign}
        \label{eqn:gas_ibc}
\end{subequations}
Note here that the Dirichlet boundary condition on the {\it right-hand} side of $\Omega$
is time dependent and the one on the left is independent of time. The problem could also
be restated where the right-hand boundary is of Neuman type. The conditions are chosen to
better match the experimental data that will be considered in Section
\ref{sec:FullyCoupledSolutions}.

\begin{theorem}[Existence of Weak Solution to Diffusion
    Equation]\label{thm:existence_diffusion}
    Suppose that the following conditions hold for equations \eqref{eqn:diffusion_exist} -
    \eqref{eqn:gas_ibc}.
    \begin{enumerate}
        \item $\Omega = (0,1)$ and $t \in (0,T) \subset (0,\infty)$
        \item $\rh \in (0,1-\epsilon] $ for all $x \in \Omega$ and for all $t \in
            (0,T)$ where $0 < \epsilon \ll 1$
        \item $S \in [\epsilon,1-\epsilon]$ and $S(x) \in C^1(\Omega)$ (independent of
            time) where $0 < \epsilon \ll 1$
        \item $\rh_{D,1} \in L^2(0,T;H^1(\Omega))$ and $L^{\infty}( (0,T) \times
            \Omega)$
    \end{enumerate}
    Then there exists a weak solution to \eqref{eqn:diffusion_exist} -
    \eqref{eqn:gas_ibc} in the sense that $\rh \in \rh_D + L^2(0,T;H_0^1(\Omega))$.
\end{theorem}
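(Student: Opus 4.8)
The plan is to verify the hypotheses H1--H6 of Theorem~\ref{thm:alt} (in the generalized form of \cite{Alt2012} that permits $x$-dependence in $b$ and $a$) for the diffusion equation \eqref{eqn:diffusion_exist}. First I would identify the structural ingredients: set $m=1$, $r=2$, $n=1$, $\Omega=(0,1)$, and $\Gamma = \{0,1\}$ (pure Dirichlet, so $\partial\Omega\setminus\Gamma=\emptyset$ and the Neumann condition is vacuous). The unknown is $u = \rh$; the nonlinear storage term is $b(x,u) = (1-S(x))\,u$, which is affine and strictly increasing in $u$ on the relevant range because $S(x)\le 1-\epsilon$ forces $1-S(x)\ge\epsilon>0$; hence $b$ is the gradient of the convex $C^1$ function $\Phi(x,u) = \tfrac12(1-S(x))u^2$, giving H2 with $B(x,u)=b(x,u)u-\Phi(x,u)+\Phi(x,0) = \tfrac12(1-S(x))u^2$. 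The flux is $a(x,u,p) = \mathcal{D}(u,S(x))\,p$ with $\mathcal{D}$ from \eqref{eqn:diff_coeff_exist}, and $f\equiv 0$. Measurability in $x$ and continuity in $(u,p)$ follow from $S\in C^1(\Omega)$ and the explicit smooth (indeed analytic, given $\rho_{sat}(T)$ and $\lambda(T)$ are fixed constants at constant $T$) dependence of $\mathcal{D}$ on its arguments.

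Next I would check the ellipticity hypothesis H3. Since $a$ is linear in $p$, the monotonicity inequality reduces to showing $\bigl(a(x,u,p_1)-a(x,u,p_2)\bigr)(p_1-p_2) = \mathcal{D}(u,S(x))(p_1-p_2)^2 \ge C|p_1-p_2|^2$, i.e.\ it suffices that $\mathcal{D}$ is bounded below by a positive constant uniformly in $(x,u)$. This is the one place a genuine estimate is needed: from \eqref{eqn:diff_coeff_exist}, $\mathcal{D}(u,S(x)) = (1-S(x))D^g + \rho_{sat}\,u\,R^{g_v}T\bigl(1-\tfrac{R^{g_a}}{R^{g_v}}\bigr)\lambda\,K(S(x))$. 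The first term is $\ge \epsilon D^g > 0$ by hypothesis~3 on $S$; the second term is bounded (above and below) because $u\in(0,1-\epsilon]$ and $K(S(x)) = \tfrac{\kappa_s}{\mu_g}\kappa_{rg}(S(x))$ is a bounded continuous function of $S\in[\epsilon,1-\epsilon]$. One must confirm the sign/magnitude so that the sum stays positive; since $R^{g_a}/R^{g_v}\approx 0.6<1$ the coefficient $(1-R^{g_a}/R^{g_v})$ is positive and the second term is nonnegative, so in fact $\mathcal{D}\ge\epsilon D^g$ and the lower bound $C=\epsilon D^g$ works. Continuity of $a$ in $u$ and of $f=0$ are immediate. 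For the growth condition H4, linearity in $p$ and boundedness of $\mathcal{D}$ on the compact range of $(u,S)$ give $|a(x,u,p)|+|f| \le c(1+|p|)$, which is dominated by $c(1+B(x,u)^{1/2}+|p|)$ as required with $r=2$.

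For H1 (domain regularity) I would simply note $\Omega=(0,1)$ is open, bounded, connected with trivially Lipschitz boundary and $H^0(\Gamma) = \#\Gamma = 2 > 0$, and $0<T<\infty$. For H5 I would take $u^D$ to be an extension of the Dirichlet data \eqref{eqn:gas_ibc}: since $\rh_{D,0}$ is a constant and $\rh_{D,1}\in L^2(0,T;H^1(\Omega))\cap L^\infty((0,T)\times\Omega)$ by hypothesis~4, a standard affine interpolation $u^D(x,t) = (1-x)\rh_{D,0} + x\,\rh_{D,1}(t)$ (or a harmonic extension) lies in $L^2(0,T;H^1(\Omega))\cap L^\infty((0,T)\times\Omega)$; then $V = \{v\in H^1(\Omega): v=0 \text{ on } \{0,1\}\} = H_0^1(\Omega)$. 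For H6 I would invoke the second alternative: $\partial_t u^D(x,t) = x\,\partial_t\rh_{D,1}(t)$; if $\rh_{D,1}\in W^{1,1}(0,T;L^\infty)$ this is in $L^1(0,T;L^\infty(\Omega))$ --- here I would either strengthen hypothesis~4 slightly to include this, or observe that $b^0(x) := (1-S(x))\rh_0(x)$ lies in the range of $b(x,\cdot)$ with $u^0 = \rh_0$, which is the first alternative in H6 and needs no regularity on $\partial_t u^D$ beyond H5. With all six hypotheses verified, Theorem~\ref{thm:alt} yields a weak solution $\rh \in u^D + L^2(0,T;H_0^1(\Omega))$ with $b(\rh)\in L^\infty(0,T;L^1(\Omega))$, $\partial_t b(\rh)\in L^2(0,T;V^*)$, attaining the initial data $b^0$ in the weak sense --- which, after absorbing the affine $u^D$, is exactly the assertion $\rh\in\rh_D + L^2(0,T;H_0^1(\Omega))$.

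The main obstacle I anticipate is the strict positivity lower bound on $\mathcal{D}$ and the accompanying need to keep $\rh$ and $S$ bounded away from the degenerate endpoints $0$ and $1$; hypotheses~2 and~3 are precisely the \emph{a priori} confinement that makes the equation uniformly parabolic and sidesteps the genuine degeneracy of the physical problem (when $S\to 1$ the gas volume fraction vanishes and diffusion shuts off, and when $\rh\to 1$ one is at saturation). A secondary subtlety is that this is a non-autonomous problem ($b$ and $a$ depend on $x$ through $S(x)$), so one cannot cite \cite{Alt1983} verbatim but must appeal to the $x$-dependent generalization in \cite{Alt2012}; I would cite Section~11 of \cite{Alt2012} for this and remark that the only extra requirements --- measurability of $b,a$ in $x$ and continuity in the remaining variables --- hold because $S\in C^1(\Omega)$. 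Uniqueness is not claimed in this theorem (it would require the structure of Theorem~\ref{thm:alt_unique}, i.e.\ $a$ affine in $p$ with the storage term handled so that $e(b(\cdot))$ is Lipschitz-controlled; since here $a(x,u,p)=\mathcal{D}(u,S(x))p$ is \emph{not} of the form $A(x,t)p + e(b(u))$ because $\mathcal{D}$ genuinely depends on $u$, a separate argument would be needed), so I would not address it here.
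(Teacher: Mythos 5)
Your proposal is correct and follows essentially the same route as the paper: both verify hypotheses H1--H6 of the Alt--Luckhaus theorem (in the $x$-dependent form of \cite{Alt2012}) with $b(x,u)=(1-S(x))u$, $a(x,u,p)=\mathcal{D}(u,S(x))p$, $f=0$, and the uniform lower bound $\mathcal{D}\ge \epsilon D^g$ supplying the ellipticity in H3. If anything, your explicit affine extension of the Dirichlet data and your handling of the $\partial_t u^D \in L^1(0,T;L^\infty(\Omega))$ requirement (via the first alternative in H6) are slightly more careful than the paper's own verification of those points.
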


Matching equation \eqref{eqn:diffusion_exist} to the form of Alt and Luckhaus (equation
\eqref{eqn:alt_eqn}) we have 
\begin{align}
    b(x,z) &= (1-S(x))z, & a(x,z,p) &= \mathcal{D}(x,z) p, &
    f &= 0, & m &= 1.
\end{align}
In the conditions for Theorem \ref{thm:existence_diffusion} we use the
parameter $\epsilon$ to define two different sets.  This is a small abuse of notation
since $\rh$ and $S$ need not belong to exactly the same set.  We are simply stating that
both of these functions must be bounded away from $0$ and $1$.

\begin{proof}
    We proceed by verifying hypotheses H1 - H6 of Theorem \ref{thm:alt} noting the
    extension proposed in \cite{Alt2012} to non-autonomous functions. 
    \begin{itemize}
        \item[H1:] In 1 spatial dimension it is clear that $\Omega$ is an open, bounded,
            and connected domain with Lipschitz boundary.  $\Gamma = \{0,1\}$, and
            $H^0(\Gamma) > 0$ and $0 < T < \infty$.

        \item[H2:] In this case we note that $b(x,z) =(1-S(x)) z$. Clearly $b(x,0) = 0$.
            Define $\Phi(x,z) = (1-S(x)) z^2/2$ and observe that $\partial \Phi / \partial
            z = (1-S(x)) z = b(x,z)$ and $\partial^2 \Phi / \partial z^2 = 1 - S(x) > 0$
            since $S(x) \in (0,1)$. Since $S \in C^1(\Omega)$ (assumption \#3 in the
            statement of the theorem) it is clear that $b$ is
            measurable in the first component.  Furthermore, $b$ is a continuous gradient
            of a convex function in the second component. Define $B(x,z) = b(x,z) z -
            \Phi(x,z) + \Phi(x,0) = (1-S(x)) z^2/2$.

        \item[H3:] Since $a$ is a linear function of $p$ it is easy to see that 
            \begin{flalign}
                \notag \left( a(x,z,p^{(1)}) - a(x,z,p^{(2)}) \right)\left( p^{(1)} -
                p^{(2)} \right) &= \mathcal{D}(x,z) \left( p^{(1)} - p^{(2)} \right)^2 \\
                &\ge C_\epsilon \left( p^{(1)} - p^{(2)} \right)^2
            \end{flalign}
            where $\mathcal{D}(x,z) \ge C_\epsilon$ for all $x,z$ (this
            $\epsilon$-dependence reflects the choice of the saturation function, $S(x)$).
            Given the functional form of $\mathcal{D}$ it is obvious that $a$ is
            continuous and bounded on $z \in (0,1)$, $p \in \mathbb{R}$, and is
            measurable in $x$. Hence $a$ satisfies the ellipticity condition. Simply
            stated, the ellipticity of the diffusion coefficient means that the operator
            in question is bounded away from zero and is therefore invertible.
            
        \item[H4:] Let $z \in [\epsilon,1-\epsilon]$ and $p \in \mathbb{R}$. From the
            definition of $a$ and $f$,
            \begin{flalign}
                \notag |a(x,z,p)| + |f(z)| = |a(x,z,p)| &= |\mathcal{D}(x,z)p| \\
                \notag &\le c_{\mathcal{D},\epsilon} |p| \\
                &\le c_\epsilon \left( 1 + \sqrt{B(z)} + |p| \right)
            \end{flalign}
            for all $z$, where $c_{\mathcal{D},\epsilon}$ is the upper bound on
            $\mathcal{D}(x,z)$ over $z$. Therefore
            $a$ (and $f$) satisfy the growth condition.

        \item[H5:] The left Dirichlet boundary condition is fixed in time, $\rh(0,t) =
            \rh_{D,0}$.  It is assumed that $\rh_{D,0} \in L^2(0,T;H^1(\Omega))$ and
            $L^{\infty}( (0,T) \times \Omega)$. The right Dirichlet boundary condition is
            allowed to vary in time.  Assumption \#4 in the statement of this theorem
            guarantees that hypothesis H5 is satisfied for this boundary condition.

        \item[H6:] Since $b(x,z) = (1-S(x))z$ it is clear that $b$ is surjective so long
            as $S(x) \ne 1$ and that $b^0 = \rh_0 / (1-S(x))$. That is, there exists a
            function $\rh_0/(1-S(x))$ such that $b^0 = b(x,\rh^0)$.
    \end{itemize}
    Given the final assumption in the statement of this theorem we have, in particular,
    $\rh_D \in L^1(0,T;L^{\infty}(\Omega))$ since $L^1(\Omega) \subset L^2(\Omega) \subset
    L^\infty(\Omega)$ for sets $\Omega$ of finite measure \cite{Folland1999}.  Therefore, from
    Theorem \ref{thm:alt} there exists a weak solution, $\rh$, in the affine space $\rh_D
    + L^2(0,T;V)$ where $V = \{ v \in H^1(\Omega): v = 0 \text{ on } \{0,1\} \}$.
\end{proof}

The uniqueness of the weak solution to \eqref{eqn:diffusion_exist} - \eqref{eqn:gas_ibc},
unfortunately, doesn't fit Theorem \ref{thm:alt_unique} because the diffusion operator
cannot be decomposed in the manner required. This does not mean that the weak solution is
not unique, it simply means that this is not the tool to prove uniqueness.  This small
problem is left for future research.

\subsection{Limits of the Alt and Luckhaus Theorem}
The theorem of Alt and Luckhaus does not apply to the heat transport equation since there
are advection-type terms present in that equation that can not satisfy the assumed form of
Theorem \ref{thm:alt}. The next logical direction is to see if this tool can be used to
prove existence of the coupled saturation-humidity system at constant temperature.  The
forcing term on the right-hand side of each equation is now non-zero.  The equations are
\begin{subequations}
    \begin{flalign}
        \pd{S}{t} - \pd{ }{x} \left( (-D(S) + C_S^l) \pd{S}{x} + C_\rh^l \pd{\rh}{x} -
        K(S) \rho^l g \right) &= \ehat{l}{g_v}(\rh,S) \\
        (1-S) \pd{\rh}{t} - \rh \pd{S}{t} - \pd{ }{x} \left( \mathcal{D}(\rh,S)
        \pd{\rh}{x} \right) &= -\ehat{l}{g_v}(\rh,S).
    \end{flalign}
    \label{eqn:sat_diffusion_system}
\end{subequations}
If we were to define $b(z) : \mathbb{R}^2 \to \mathbb{R}^2$ as
\[ b(z) = \begin{pmatrix} 1 & 0 \\ -z_2 & (1-z_1) \end{pmatrix} \begin{pmatrix} z_1 \\ z_2
    \end{pmatrix} \]
one can show that there does not exist a function $\Phi: \mathbb{R}^2 \to \mathbb{R}$ such
that $b = \grad \Phi$.  For this reason we restate the equations with a consolidated form
of the time derivatives in the second equation
\begin{subequations}
    \begin{flalign}
        \pd{S}{t} - \pd{ }{x} \left( (-D(S)+C_S^l) \pd{S}{x} + C_\rh^l \pd{\rh}{x} - K(S)
        \rho^l g \right) &= \ehat{l}{g_v}(\rh,S) \\
        \pd{u}{t} - \pd{ }{x} \left( \mathcal{D}(\rh,S) \pd{\rh}{x} \right) &=
        -\ehat{l}{g_v}(\rh,S) \\
        u &= (1-S) \rh
    \end{flalign}
\end{subequations}
Solving for the relative humidity in equation (c) and substituting into equations (a) and
(b) gives 
\begin{subequations}
    \begin{flalign}
        \pd{S}{t} - \pd{ }{x} \left( (-D(S)+C_S^l) \pd{S}{x} + C_\rh \pd{ }{x}\left(
        \frac{u}{1-S}
        \right) - K(S) \rho^l g \right) &=
        \ehat{l}{g_v}\left( \frac{u}{1-S},S \right) \\
        \pd{u}{t} - \pd{ }{x} \left( \mathcal{D}\left( \frac{u}{1-S} ,S\right) \pd{}{x}
        \left( \frac{u}{1-S} \right) \right) &=
        -\ehat{l}{g_v}\left( \frac{u}{1-S} , S \right). 
    \end{flalign}
    \label{eqn:humidity_saturation_strong_coupling}
\end{subequations}

It can be seen from this form that the coupling in the time derivatives has been moved to
a stronger coupling with the diffusion terms.  It can be shown that the associated
$a(\cd,\cd)$ function is not elliptic in the sense required in Theorem \ref{thm:alt}.
Therefore we have determined that the Alt and Luckhaus theorems don't apply to the coupled
system in this form. 

Equations
\eqref{eqn:humidity_saturation_strong_coupling} poses the system in a form of strong
coupling known as a {\it triangular system}.  A triangular parabolic system has two
equations; one parabolic
equation with a contribution to diffusion from both dependent variables and the other with
a contribution to diffusion from only one variable \cite{Le2005}. Future research into the
existence and uniqueness results will likely start here as the theory of triangular
systems is fairly well developed and may provide a springboard to results for this
problem.

\section{Heat Transport Equation}\label{sec:existence_uniqueness_heat}
In this section we consider the question of existence and uniqueness for the heat transport
equation.  This is done under the assumptions that the relative humidity and saturation
profiles are fixed in space and time.

If the saturation and the relative humidity are considered fixed and constant then the thermal
transport equation \eqref{eqn:system_final_energy} collapses to
\begin{flalign}
    \rho c_p \pd{T}{t} - \diver \left( K \grad T \right) + \rho h + \chi_2 \grad T \cd
    \grad T = 0,
    \label{eqn:thermal_fixed_S_rh}
\end{flalign}
where $\chi_2$ is given as
\[ \chi_2 = \rho^l c_p^l C_T^l K(S) + \rho^g c_p^g N(S,T). \]
In the absence of heat sources and if $\chi_2$ is neglected we arrive at the standard heat
equation; the existence and uniqueness results of which are well known (see any standard
text on PDEs). It is likely that $C_T^l \approx 0$ since, in Saito \cite{Saito2006}, the
authors indicated that the thermal liquid flux is negligible as compared to isothermal
liquid flux. The entropy term appearing in $N$, on the other hand, is likely
non-negligible and therefore must be considered.  In the case where $S$ and $\rh$ are
fixed but non-constant, the terms in equation \eqref{eqn:system_final_energy} associated
with $\grad S$ and $\grad \rh$  are combined as a source term which depends on $x, t,$ and
$T$. Therefore, we only need to consider thermal equations in the form of
\eqref{eqn:thermal_fixed_S_rh}. If $h=0$ and Dirichlet boundary conditions are considered then this is the exact form of the equation
considered by Rincon et al.\,in \cite{Rincon2005}:
\begin{subequations}
    \begin{flalign}
        &\pd{u}{t} - \diver \left( a(u) \grad u \right) + b(u) |\grad u|^2 = 0 \quad \text{in}
        \quad \Omega \times (0,T) \\
        &u = 0 \quad \text{on} \quad \partial \Omega \times (0,T) \\
        &u(x,0) = u_0(x) \quad \text{in} \quad \Omega.
    \end{flalign}
    \label{eqn:rincon_equation}
\end{subequations}
where we have defined $u$ such that $u \gets T-T_{ref}$ with reference temperature
$T_{ref}$.  Taking $h=0$ means that we must assume that both $S$ and $\rh$
are constant in space and fixed in time. This is not entirely physical, but it is a step
toward a general existence uniqueness theory for the present equations. In this problem,
$a(u)$ is the diffusion coefficient, $a(u) = K(u)$, defined either by the weighted sum of
the thermal conductivities (equation \eqref{eqn:thermal_weighted_sum}) or by the
Johansen thermal conductivity function (equation \eqref{eqn:Johansen_thermal}). The $b$
function is defined as $\chi_2$ as above.

For the Rincon existence and uniqueness theorem we consider the following hypotheses:
\begin{description}
    \item[H1:] $a(u)$ and $b(u)$ belong to $C^1(\mathbb{R})$ and there are positive
        constants $a_0, a_1$ such that $a_0 \le a(u) \le a_1$ and $b(u) u \ge 0$.
    \item[H2:] There is a positive constant $M>0$ such that 
        \[ \text{max}_{s} \left\{ \left| \frac{d a}{du}(s) \right|, \left|
            \frac{db}{du} (s) \right|\right\} \le M. \]
    \item[H3:] $u_0 \in H_0^1(\Omega) \cap H^2(\Omega)$ such that $\| \Delta u_0
        \|_{L^2(\Omega)} < \epsilon$ for some constant $\epsilon > 0$.
\end{description}

\begin{theorem}[Rincon et al.\,\cite{Rincon2005}, Theorem
    2.1]\label{thm:rincon_nonlinear_heat}
    Under hypotheses H1 - H3 there exists a positive constant $\epsilon_0$ such that if $0
    < \epsilon < \epsilon_0$ then the problem \eqref{eqn:rincon_equation} admits a unique
    solution satisfying
    \begin{enumerate}
        \item[i.] $u \in L^2(0,T;H_0^2(\Omega) \cap H^2(\Omega))$ and $\partial_t u \in
            L^2(0,T;H_0^1(\Omega))$
        \item[ii.] $\pd{u}{t} - \diver (a(u) \grad u) + b(u) |\grad u|^2 = 0$ in
            $L^2(\Omega \times (0,T))$
        \item[iii.] $u(0) = u_0$
    \end{enumerate}
\end{theorem}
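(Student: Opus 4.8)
The plan is to establish existence through a Galerkin approximation whose a priori estimates are promoted from local to global in time by the smallness hypothesis H3, and to establish uniqueness through an energy estimate that exploits the uniform Lipschitz bounds H2. The central obstruction throughout is the quadratic gradient term $b(u)|\grad u|^2$, which is invisible to the basic energy estimate and is precisely what forces the smallness of $\|\Delta u_0\|_{L^2(\Omega)}$. First I would fix the eigenfunctions $\{w_k\}$ of the Dirichlet Laplacian on $\Omega$, which form an orthogonal basis of $H_0^1(\Omega)\cap H^2(\Omega)$, seek $u_m(t)=\sum_{k=1}^m g_{km}(t)\,w_k$, and project \eqref{eqn:rincon_equation} onto $\mathrm{span}\{w_1,\dots,w_m\}$. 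The resulting system of ODEs for the coefficients is locally solvable by the smoothness of $a$ and $b$ in H1, and I take $u_{0m}$ to be the projection of $u_0$, so $u_{0m}\to u_0$ in $H_0^1\cap H^2$. Testing the projected equation with $u_m$ and using both the good sign $b(u_m)u_m\ge 0$ from H1 and the lower bound $a_0\le a(u_m)$, I would absorb the nonlinear term and obtain the uniform estimate $u_m\in L^\infty(0,T;L^2)\cap L^2(0,T;H_0^1)$.

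The heart of the argument is the higher-order estimate obtained by testing with $-\Delta u_m$. The diffusion term is controlled below by $a_0\|\Delta u_m\|_{L^2}^2$, while the $a'(u_m)$ contribution is handled by the bound on $\tfrac{da}{du}$ from H2. The troublesome term is $\int_\Omega b(u_m)|\grad u_m|^2\,\Delta u_m$, which I would estimate by H\"older's inequality together with the Gagliardo--Nirenberg inequality $\|\grad u_m\|_{L^4}^2\le C\,\|\grad u_m\|_{L^2}\,\|\Delta u_m\|_{L^2}$ (valid in the relevant low dimension), yielding a bound of order $\|\Delta u_m\|_{L^2}^3$ after using the already-established $L^2$ bound on $\grad u_m$. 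Collecting terms gives a differential inequality of the schematic form
\[ \frac{d}{dt}\|\Delta u_m\|_{L^2}^2 + c\,\|\Delta u_m\|_{L^2}^2 \le C\,\|\Delta u_m\|_{L^2}^3 . \]
Writing $y=\|\Delta u_m\|_{L^2}^2$, this reads $y'\le y\bigl(-c+C\,y^{1/2}\bigr)$, so whenever $y^{1/2}<c/C$ the right-hand side is negative.

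The continuation argument then runs as follows: choosing $\epsilon_0:=c/C$, if $\|\Delta u_{0m}\|_{L^2}<\epsilon_0$ then $y$ is strictly decreasing as long as it stays below $\epsilon_0^2$, hence it can never reach that threshold, and $\|\Delta u_m(t)\|_{L^2}$ remains bounded by $\epsilon_0$ for all $t\in(0,T)$, uniformly in $m$. This is exactly what extends the local ODE solutions to global ones and what makes H3 indispensable, since without it the superlinear right-hand side permits finite-time blow-up of $\|\Delta u_m\|_{L^2}$. With these uniform $L^\infty(0,T;H^2)$ bounds in hand, the equation controls $\partial_t u_m$ in $L^2(0,T;H_0^1)$, and the Aubin--Lions lemma furnishes a subsequence with $u_m\to u$ and $\grad u_m\to\grad u$ strongly in $L^2$. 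This strong gradient convergence is precisely what allows passage to the limit in \emph{both} nonlinear terms $a(u_m)\grad u_m$ and $b(u_m)|\grad u_m|^2$, producing a weak solution with the regularity asserted in (i)--(iii).

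For uniqueness I would take two solutions $u_1,u_2$ sharing the data, set $w=u_1-u_2$, and test the difference of the equations with $w$. Splitting $a(u_1)\grad u_1-a(u_2)\grad u_2$ and $b(u_1)|\grad u_1|^2-b(u_2)|\grad u_2|^2$ and estimating each piece with the uniform Lipschitz bounds on $a,b$ from H2, the uniform $H^2$ bound just obtained, and Gagliardo--Nirenberg interpolation to dominate the gradient differences, one closes a Gronwall inequality $\tfrac{d}{dt}\|w\|_{L^2}^2\le C\,\|w\|_{L^2}^2$ with $w(0)=0$, forcing $w\equiv 0$. I expect the main obstacle to be the $-\Delta u_m$ estimate: making the interpolation and Sobolev constants explicit enough to exhibit a genuine $\epsilon_0$ for which the differential inequality remains subcritical for \emph{all} time, rather than only on a short interval, is the delicate point on which the entire global existence statement rests.
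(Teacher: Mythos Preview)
The paper does not prove this theorem. Theorem~\ref{thm:rincon_nonlinear_heat} is quoted verbatim from Rincon, L\'{\i}maco, and Liu \cite{Rincon2005} and is used in the thesis purely as a black box: the subsequent (unnumbered) theorem in Section~\ref{sec:existence_uniqueness_heat} simply verifies hypotheses H1--H3 for the specific heat transport equation \eqref{eqn:thermal_fixed_S_rh} and then invokes the Rincon result. There is therefore no ``paper's own proof'' against which to compare your proposal.

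That said, your sketch is the standard strategy for equations of this type and is almost certainly close in spirit to what Rincon et al.\ do: Galerkin projection onto Laplacian eigenfunctions, a first energy estimate exploiting the sign condition $b(u)u\ge 0$, a higher-order estimate that produces a superlinear differential inequality whose subcriticality is guaranteed precisely by the smallness in H3, compactness via Aubin--Lions, and uniqueness by Gronwall using the Lipschitz bounds in H2. One imprecision worth flagging: testing the equation with $-\Delta u_m$ yields $\tfrac{1}{2}\tfrac{d}{dt}\|\grad u_m\|_{L^2}^2 + a_0\|\Delta u_m\|_{L^2}^2 \le \cdots$, not a differential inequality for $\|\Delta u_m\|_{L^2}^2$ directly; the claimed regularity in the theorem is only $u\in L^2(0,T;H^2)$, not $L^\infty(0,T;H^2)$, so the a priori bound you actually need is $\|\grad u_m\|_{L^\infty_t L^2_x}$ together with $\|\Delta u_m\|_{L^2_t L^2_x}$, and the smallness hypothesis on $\|\Delta u_0\|_{L^2}$ enters through a slightly more involved bootstrap than the clean ODE you wrote down. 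The overall architecture of your argument is sound, but the differential inequality and the role of $\epsilon_0$ would need to be formulated more carefully to match the regularity actually asserted in (i).
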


\begin{theorem}
    There exists a unique solution to equation \eqref{eqn:thermal_fixed_S_rh} under the
    following conditions:
    \begin{enumerate}
        \item $u(0,t) = u(1,t)$
        \item $h=0$
        \item $u(x,0) = u_0 \in H^1_0(\Omega) \cap H^2(\Omega)$ and there exists $\epsilon
            > 0$ such that $\| \Delta u_0 \|_{L^2(\Omega)} < \epsilon$
    \end{enumerate}
    Here we are using $u = T - T_{ref}$ for a scaled temperature (so that capital $T$ will
    represent a finite time as in Theorem \ref{thm:rincon_nonlinear_heat}).
\end{theorem}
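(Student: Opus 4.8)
The plan is to reduce the stated theorem to a direct application of Rincon et al.'s Theorem~\ref{thm:rincon_nonlinear_heat}. The work, such as it is, consists entirely of checking that equation~\eqref{eqn:thermal_fixed_S_rh} (with $h=0$, homogeneous Dirichlet data, and the substitution $u = T - T_{ref}$) really is an instance of the IBVP~\eqref{eqn:rincon_equation} and that hypotheses H1--H3 of that theorem hold. First I would rewrite \eqref{eqn:thermal_fixed_S_rh} in the Rincon form by dividing through by the (constant, under the fixed-$S$, fixed-$\rh$ assumption) volumetric heat capacity $\rho c_p$, so that $a(u) = K(u)/(\rho c_p)$ and $b(u) = \chi_2(u)/(\rho c_p)$, where $\chi_2 = \rho^l c_p^l C_T^l K(S) + \rho^g c_p^g N(S,T)$ is now regarded as a function of temperature alone. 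I would then observe that condition~1 in the theorem statement, $u(0,t) = u(1,t)$, together with the requirement that both endpoints carry Dirichlet data, gives $u=0$ on $\partial\Omega$ after the usual normalization (shifting by the common boundary value, which can be absorbed into the choice of $T_{ref}$), matching~\eqref{eqn:rincon_equation}(b); condition~2 ($h=0$) removes the source term; and condition~3 is exactly H3.

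Next I would verify H1 and H2. For H1 I need $a \in C^1(\mathbb{R})$ with $0 < a_0 \le a(u) \le a_1$: this follows from the functional forms of the thermal conductivity --- either the weighted sum $\soten{K} = \suma \epsa \soten{K}^\al_T$ from~\eqref{eqn:thermal_weighted_sum} or the Johansen form~\eqref{eqn:Johansen_thermal} --- each of which, under fixed saturation, is a smooth strictly positive function of temperature bounded above and below, so dividing by the positive constant $\rho c_p$ preserves this. The condition $b(u) u \ge 0$ is the only mildly delicate sign requirement: here I would argue that $\chi_2$ is nonnegative in the relevant temperature range (the entropy term in $N$ dominates and is positive, while $C_T^l \approx 0$ as noted from Saito~\cite{Saito2006}), so $b$ has a fixed sign and, after possibly recentering $T_{ref}$ so that the relevant temperature interval lies on one side, $b(u)u \ge 0$ holds; alternatively one restricts attention to temperature ranges where this is automatic. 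For H2 I need the derivatives of $a$ and $b$ to be uniformly bounded --- since the conductivity and $\chi_2$ are $C^1$ functions of $u$ restricted to a bounded physical range of temperatures, their derivatives are bounded there, and one extends them smoothly and boundedly to all of $\mathbb{R}$ outside the physical range (a standard truncation, which does not affect the solution by the maximum principle). Then H3 is assumed outright in the theorem's condition~3.

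With H1--H3 in hand, Theorem~\ref{thm:rincon_nonlinear_heat} delivers a unique solution $u \in L^2(0,T;H_0^1(\Omega)\cap H^2(\Omega))$ with $\partial_t u \in L^2(0,T;H_0^1(\Omega))$ satisfying~\eqref{eqn:thermal_fixed_S_rh} in $L^2(\Omega\times(0,T))$ and $u(0) = u_0$, which is exactly the conclusion of the stated theorem. I expect the only real obstacle to be the sign condition $b(u)u \ge 0$ in H1 and the boundedness of $a$ away from zero: the conductivity bound is genuine and easy, but pinning down the sign of $\chi_2$ requires committing to the (as-yet empirically undetermined) signs of $C_T^l$ and the gas-phase entropy contribution in $N^g$, so I would phrase the verification of H1 carefully, invoking the physical argument that $C_T^l$ is negligible and $\eta^g>0$, and if necessary restrict the temperature range or recenter $T_{ref}$ so that the hypothesis is met --- noting explicitly that this is the price of using an off-the-shelf nonlinear-heat-equation result rather than proving a bespoke energy estimate. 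The remaining steps (matching notation, the truncation of coefficients outside the physical range, absorbing the boundary constant into $T_{ref}$) are routine.
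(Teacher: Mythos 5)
Your proposal takes essentially the same route as the paper: both reduce the fixed-$S$, fixed-$\rh$ heat equation to Rincon et al.'s Theorem \ref{thm:rincon_nonlinear_heat} and verify H1--H3, with $a$ coming from the thermal conductivity (constant once $S$ is fixed), $b$ coming from $\chi_2$, and the initial-data hypothesis assumed outright as condition 3. If anything you are more careful than the paper on the two delicate points --- the sign condition $b(u)u \ge 0$ (which the paper simply asserts from $b>0$, without your recentering/restriction caveat) and the boundedness of $b'$ (where the paper argues $T$ stays away from zero because it is absolute temperature, while you truncate outside the physical range) --- so your verification is sound to at least the same degree as the paper's.
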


\begin{proof}
    We will proceed by verifying the hypotheses of Theorem \ref{thm:rincon_nonlinear_heat}
    \begin{itemize}
        \item[H1:] From the derivation of the heat transport equation, $a(u)$ is a
            weighted sum of thermal conductivities from the individual phases.  The
            particular form of the weighted sum comes from either equation
            \eqref{eqn:thermal_weighted_sum} or \eqref{eqn:Johansen_thermal}, but in this
            scenario, the saturation is presumed to be constant. Therefore, in this case
            $a(u)$ is constant and is trivially $C^1(\Omega)$.  The functional form of
            $b$ depends on the functional form of the entropy and the saturation.  So long
            as the saturation is fixed away from zero then $b$ is in $C^1(\Omega)$.
            Furthermore, $b(u)$ is positive so $b(u)u$ is also positive for all $u$.
        \item[H2:] Since $a$ is a constant, $da / du = 0$ for all $u$.  The functional
            form of $b$, on the other hand, is not constant so this hypothesis simply
            states that $\chi_2$ needs to have a bounded first derivative.  Taking the entropy term from the Darcy flux as
            \[ \eta^g = c_p^g \ln\left( \frac{T}{T_{ref}} \right) + \eta_{ref}, \]
            (from the definition of the specific heat) and defining $\chi_2$ accordingly
            we see that $b$ will have a bounded first derivative so long as $u+T_{ref} =
            T$ remains bounded away from $0$.  This is, of course, always true since $T$
            is the absolute temperature.
        \item[H3:] The third assumption of the theorem satisfies this hypothesis.
    \end{itemize}
    Therefore, there exists a unique solution to \eqref{eqn:thermal_fixed_S_rh} with no
    sources and equal Dirichlet boundary conditions.
\end{proof}

\section{Conclusion}
At this point we turn our attention to the analysis and comparisons of numerical solutions
of the equations (both individually and coupled). The existence and uniqueness theory
presented here is by no means complete.  In particular, we are missing a uniqueness result
for the vapor diffusion equation, the theorem used for the heat transport equation is very
limiting with respect to boundary conditions and sources, and we have not mentioned
results for any of the coupled systems.  Many numerical solvers will iterate coupled
systems across the equations, so an existence and uniqueness theory for each equation is
essential to give hope that the
numerical method converges to {\it the} solution.  These results are left for future work
as the ultimate crux of this thesis is to justify the modeling technique against physical
experimentation and classical models.

\newpage
\chapter{Numerical Analysis and Sensitivity Studies}\label{ch:Transport_Solution}
In this chapter we build and analyze the solution(s) to the heat and moisture transport
model summarized in equations \eqref{eqn:system_final_saturation} -
\eqref{eqn:system_final_energy} with constitutive equations
summarized in equations \eqref{eqn:system_final_conductivity} -
\eqref{eqn:system_final_gas_ent}.  To simplify matters we
henceforth assume a 1-dimensional geometry modeling a column experiment common to soil
science.  Figure \ref{fig:column} gives a cartoon drawing of a typical column experiment
with a definition of the geometric variable $x$.  The grains represent a packed porous
medium.  Flow, diffusion, and heat transport are assumed to travel solely in the $x$
direction (up or down).
\linespread{1.0}
\begin{figure}[H]
    \begin{center}
        \begin{tikzpicture}
            \draw[thick,|->] (2,0) node[anchor=west]{$x=0$} -- (2,7);
            \draw (2,6.5) node[anchor=west]{$x=1$};
            \draw[thick] (1.9,6.5) -- (2.1,6.5);
            \includegraphics[width=0.1\columnwidth]{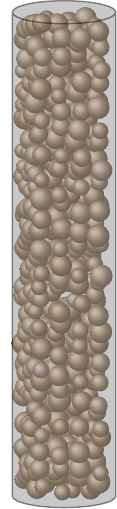}
            \draw[thick,->] (-2,6.5) node[anchor=south]{$\foten{g}$} -- (-2,5);
        \end{tikzpicture}
    \end{center}
    \caption{Cartoon of a 1-dimensional packed column experimental apparatus.}
    \label{fig:column}
\end{figure}
\renewcommand{\baselinestretch}{\normalspace}

In this chapter we are interested in the behavior of equations \eqref{eqn:system_final} in
several situations related to the apparatus depicted in Figure \ref{fig:column}; some physical and some merely hypothetical.
\begin{enumerate}
    \item In a {\bf drainage experiment} the column is saturated with the wetting phase and then
        allowed to drain under the influence of gravity. \\
        Possible simplifying assumptions include: constant relative humidity and temperature.
    \item In an {\bf imbibition experiment} the column starts partially saturated (or dry) and
        the wetting phase is introduced either at $x=1$ or $x=0$.  If the wetting phase is
        introduced at $x=1$ then the primary force driving the liquid flow will be
        gravity, and if it is introduced at $x=0$ then the pressure head from the
        reservoir drives the flow. \\
        Possible simplifying assumptions include: constant relative humidity and temperature.
    \item In {\bf evaporation studies}, a gradient in relative humidity is introduced
        between $x=0$ and $x=1$ and relative humidity is tracked throughout the column. \\
        Possible simplifying assumptions include: constant temperature and/or fixed saturation
        profile.
    \item In {\bf Coupled saturation and evaporation experiments} the saturation and
        relative humidity are tracked throughout the column under boundary conditions that
        drive both.  \\
        Possible simplifying assumptions include: constant (or at least fixed) temperature.
    \item In {\bf fully coupled} systems we consider a heat source (typically located at
        $x=1$) and boundary conditions that drive all three equations.
\end{enumerate}

In Chapter \ref{ch:Existence_Uniqueness} we discussed the questions of existence and
uniqueness of solutions to equations \eqref{eqn:system_final}. We now turn to numerical
analysis.  In Sections \ref{sec:numerical_saturation},
\ref{sec:numerical_vapor_diffusion}, and \ref{sec:numerical_coupled_sat_vap} we discuss
various numerical solutions associated with the situations outlined above. For example, in
Section \ref{sec:numerical_saturation}, we examine numerical solutions associated with
drainage and imbibition experiments (types 1 and 2 above). In Section
\ref{sec:FullyCoupledSolutions} we compare with a 1-dimensional column experiment outlined
in Smits et al. \cite{Smits2011}. No two- or three-dimensional experiments are performed
in this work.

\section{Saturation Equation}\label{sec:numerical_saturation}
In this subsection we consider the saturation equation \eqref{eqn:system_final_saturation}
with fixed and constant relative humidity and temperature and no mass transfer.  That is,
we consider
\begin{flalign}
    \pd{S}{t} = \pd{ }{x} \left( \porosity_S^{-1} K(S) \left( \left[ -p_c'(S) + C_S^l
        \right] \pd{S}{x} + \tau \porosity_S \pddm{S}{x}{t} - \rho^l g \right) \right).
\end{flalign}
These assumptions are natural in an oil-water system or simply unsaturated systems where
the relative humidity is considered fixed experimentally.
We would like to determine qualitative behavior of solutions to this equation under
certain boundary conditions, experimental setups, van Genuchten parameters, and values (or
functional forms) of $\tau$ and $C_S^l$.  As a first step toward this analysis let us
consider dimensionless spatial and temporal scalings. Notice that the spatial dimension
can already be viewed as dimensionless as seen in Figure \ref{fig:column}.  A
characteristic time for this equation is $t_c = x_c / k_c = 1/k_c$ where $k_c = (\rho^l g
\kappa_s)/\mu_l$ is the hydraulic conductivity. Multiplying by $t_c$ and henceforth
understanding $t$ and $x$ as dimensionless we get
\begin{flalign}
    \notag \pd{S}{t} &= \pd{ }{x} \left( t_c \porosity_S^{-1} K(S) \left[ -p_c'(S) + C_S^l
        \right] \pd{S}{x} \right) \\
    & \quad + \pd{ }{x} \left( \tau K(S) \pddm{S}{x}{t} \right) - \pd{ }{x} \left( t_c
    K(S) \porosity_S^{-1} \rho^l g \right).
\end{flalign}

In the case where $\tau = 0$, the qualitative behavior can be analyzed via the P\'eclet
number; the ratio of the advective to diffusive coefficients
\begin{flalign}
    Pe = \frac{\rho^l g}{-p_c'(S) + C_S^l} = \frac{\rho^l g}{\left( \frac{\rho^l g
    (1-m)}{\alpha m} \right) S^{-(1+1/m)} \left( S^{-1/m} -1 \right)^{-m} + C_S^l }.
\end{flalign}
Since the diffusive coefficient depends on the dependent variable it is immediately clear
that the P\'eclet number will change in time and space (in the study of linear PDEs the
P\'eclet number is a fixed ratio that does not depend on the dependent variable). If $Pe <1$ then the problem is
{\it diffusion dominated} whereas if $Pe >1$ then the problem is {\it advection dominated}. In a
diffusion dominated problem we expect a smooth solution that spreads spatially in time, and in an
advection dominated problem we expect more advection (transport) than smoothing.  In
quasilinear advection diffusion equations (see a standard PDE text discussing the method of
characteristics (e.g. \cite{Zach,Evans2010}), if the diffusion term is not significantly
weighted then the advective term may yield shock-type solutions.  For example, if the
material parameters for a particular experiment are located in the top right of Figure
\ref{fig:Peclet_compare} then the diffusive term is weighted very small as compared to the
advection and a shock is more likely to develop.  That being said, a shock-type solution is
non-physical so it is not expected in these experiments. This gives an indication that if
a shock does occur then the parameters must be non-physical or the numerical method is not
accurately capturing the diffusion.

From the definition of the P\'eclet number it is clear that the action of $C_S^l$ is to
increase the damping of the diffusion term.  Given the form of the P\'eclet number, it
stands to reason that damping similar to that of $C_S^l$ can be achieved by choosing
different van Genuchten parameters.  For this reason we presume for the remainder of this
work that the effects of $C_S^l$ are inseparably tied up with the effects of the $p_c-S$
relationship. Hence we can assume that $C_S^l \approx 0$.  Recall that $C_S^l$ is defined
(see equation \eqref{eqn:CSl_defn}) as 
\[ C_S^l = \ppi{l}{l} - \ppi{l}{g} = 2 \pd{\psi^l}{S} \] 
and is interpreted as a wetting potential.

With the assumption that $C_S^l \approx 0$ (or is at least inseparable experimentally from
$p_c'(S)$), the P\'eclet number becomes
\[ Pe = \left( \frac{\al m}{1-m} \right) S^{(1+1/m)} (S^{-1/m} - 1)^m. \]
The van Genuchten parameters, $\al$ and $m$, are independent in this form of the P\'eclet
number.  Furthermore, the van Genuchten capillary pressure - saturation function is only
one of several choices for this relationship. Other common forms are the Brooks-Corey and
Fayer-Simmons models; each of which will have their own associated P\'eclet number. Figure
\ref{fig:Peclet_compare} shows the nature of the P\'eclet number as a function of these
parameters as well as the saturation.
\linespread{1.0}
\begin{figure}[H]
        \centering
        \subfigure[Log of P\'eclet number for $S = 0.1$]{
                \includegraphics[trim = 1cm 0cm 0cm 0cm, clip=true,
                    width=0.4\textwidth]{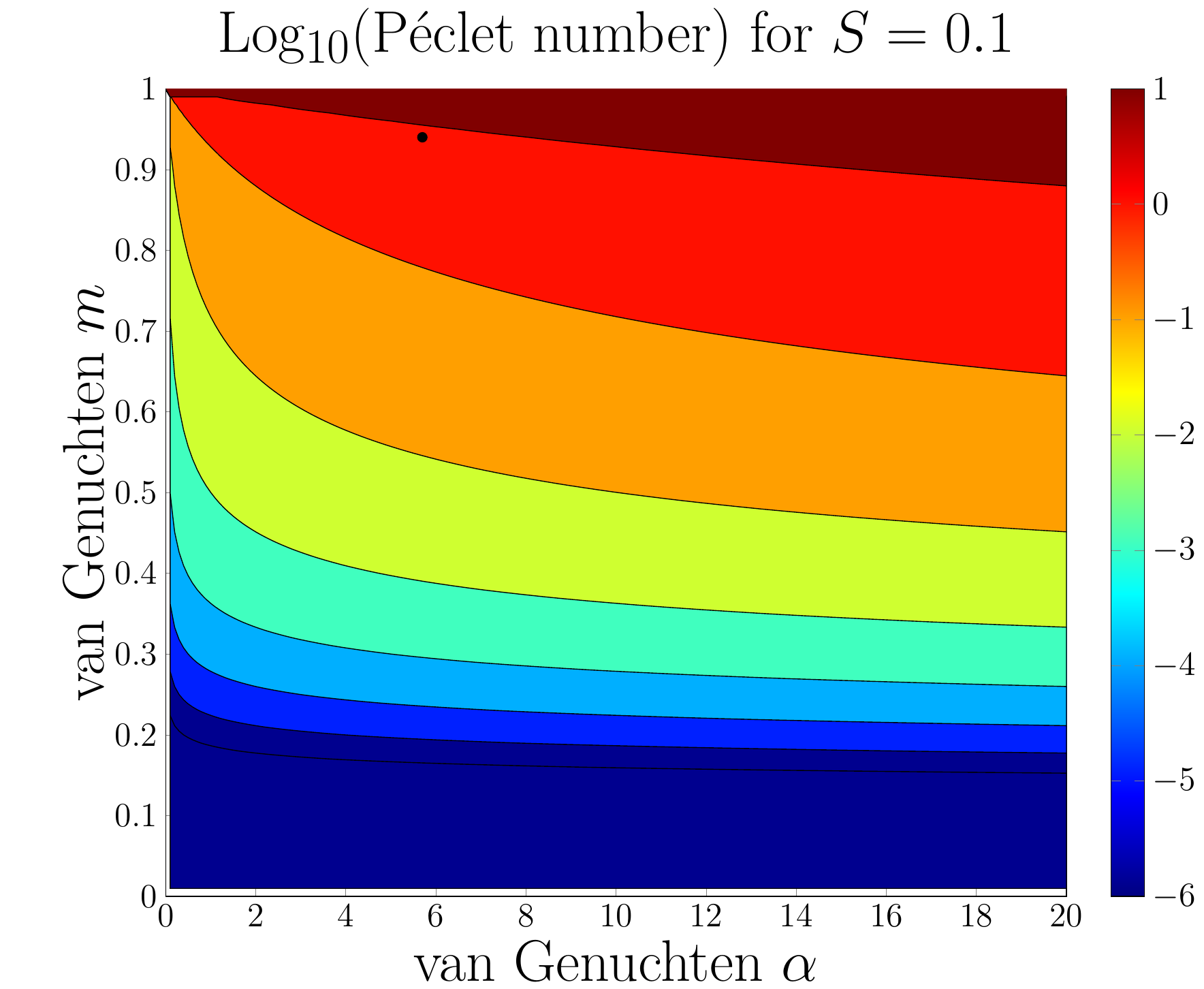}
                \label{fig:Peclet_S01}
            }
        \subfigure[Log of P\'eclet number for $S = 0.4$]{
                \includegraphics[trim = 1cm 0cm 0cm 0cm, clip=true,
                    width=0.4\textwidth]{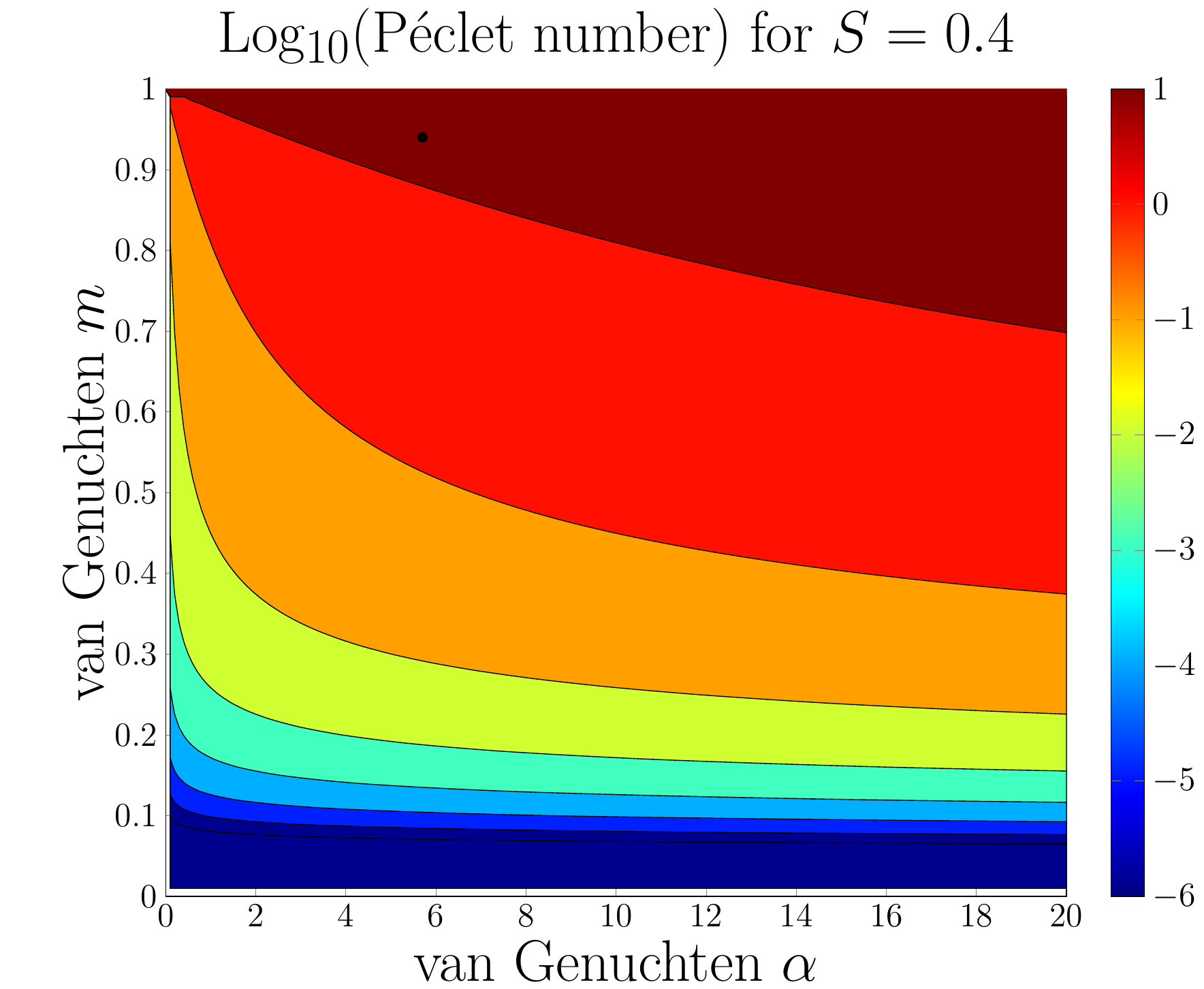}
            \label{fig:Peclet_S04}
        }
        \subfigure[Log of P\'eclet number for $S = 0.7$]{
                \includegraphics[trim = 1cm 0cm 0cm 0cm, clip=true,
                    width=0.4\textwidth]{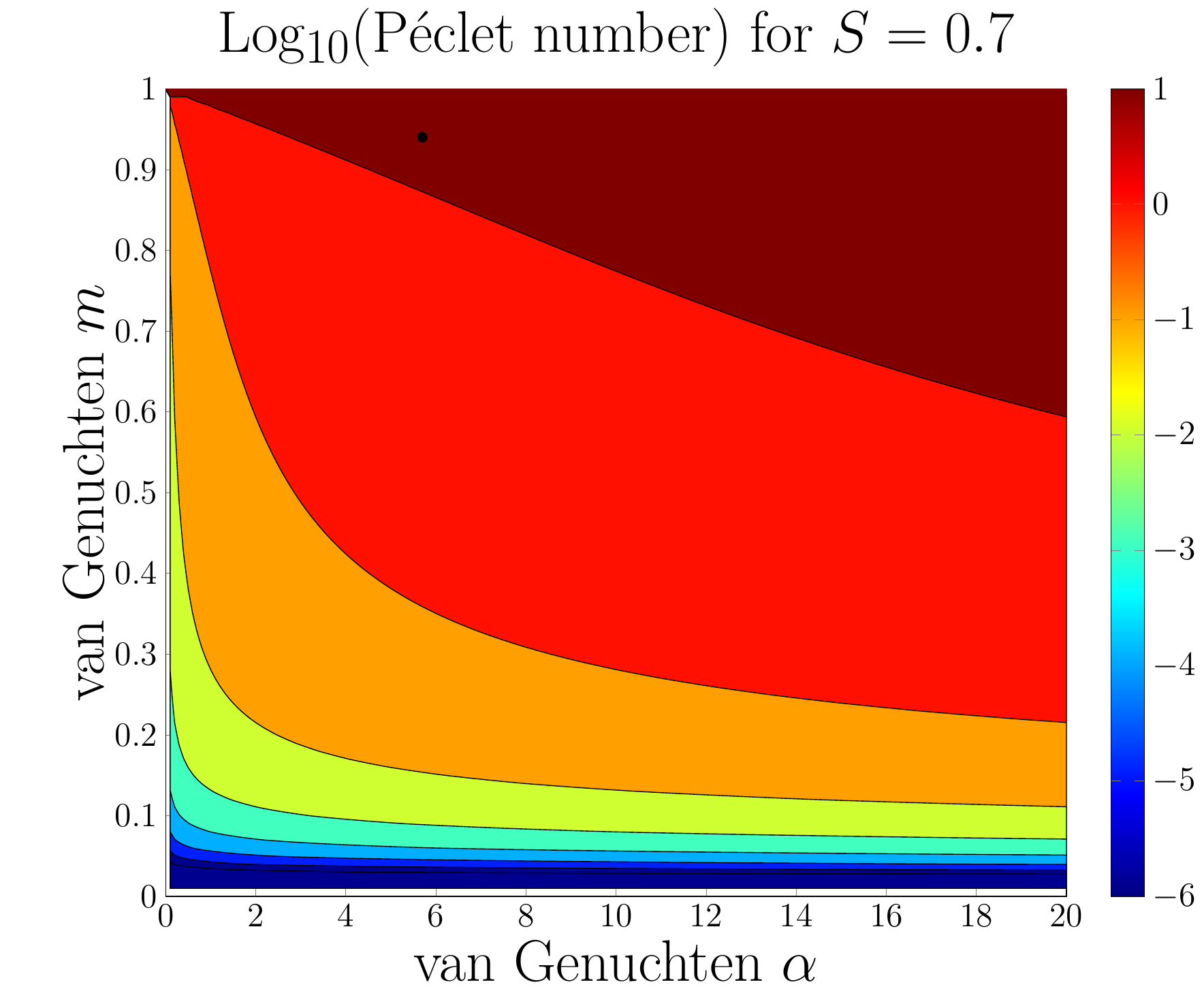}
            \label{fig:Peclet_S07}
        }
        \subfigure[Log of P\'eclet number for $S = 0.99$]{
                \includegraphics[trim = 0.5cm 0cm 0cm 0cm, clip=true,
                    width=0.4\textwidth]{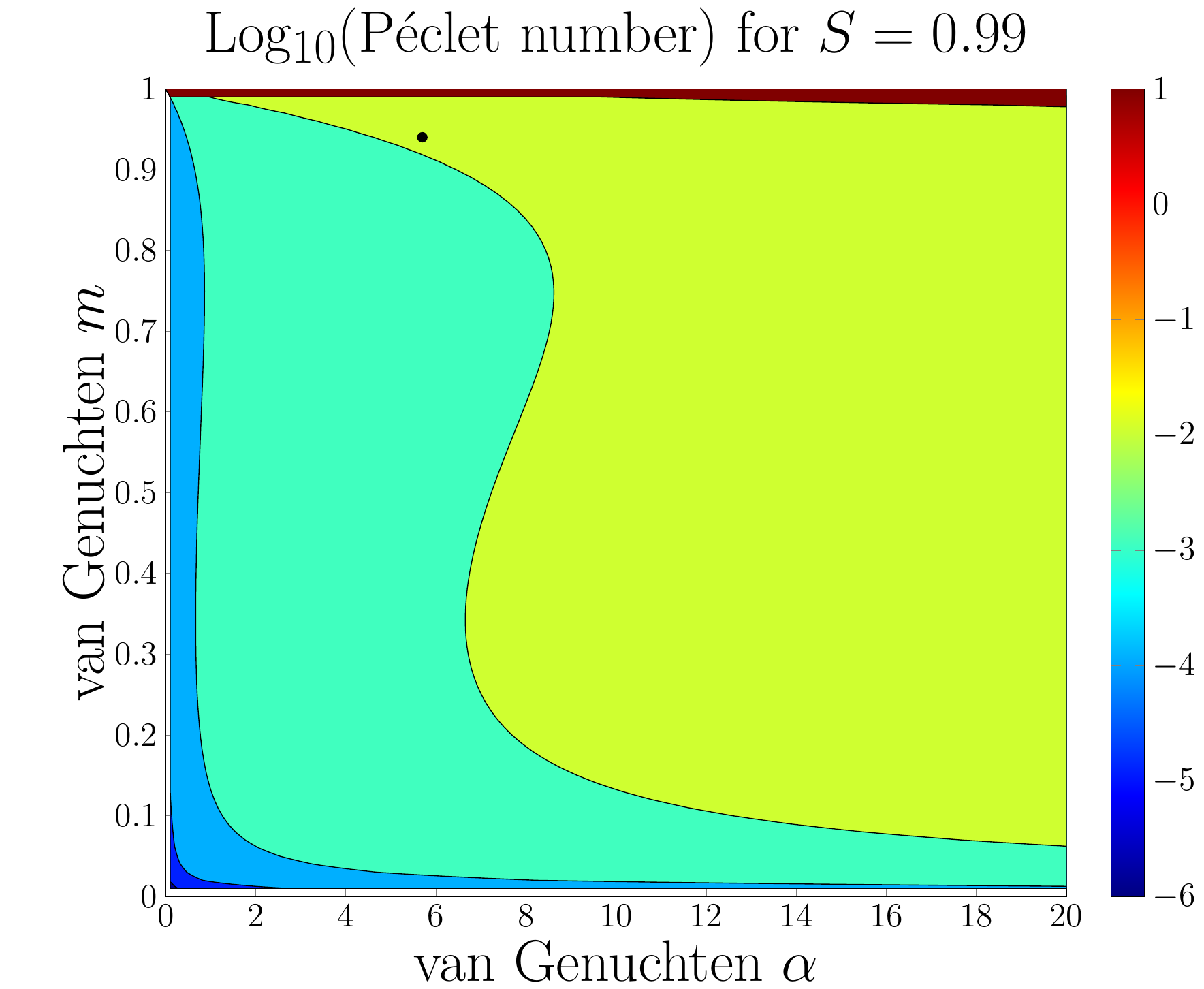}
                \label{fig:Peclet_S099}
        }
        \caption{Log of P\'eclet numbers for various values of
        saturation. The point at $(\al = 5.7, m=0.94)$ indicates the values used in Smits et
        al.\,\cite{Smits2011}. Warmer colors are associated with higher P\'eclet number
    and therefore associated with an advective solution.}\label{fig:Peclet_compare}
\end{figure}
\renewcommand{\baselinestretch}{\normalspace}

In Figure \ref{fig:Peclet_compare} it appears that the solutions to the saturation
equation (with $\tau = 0$) become more diffusion dominated for smaller values of van Genuchten
parameters.  As $S \to 1$ the diffusion term gains more traction and hence dampens the
advection.  Of course, one cannot simply choose a set of van Genuchten parameters.
Instead, the parameters are dictated by the material properties of the soil.  In the study
by Smits et al.\,\cite{Smits2011}, $\al = 5.7$ and $m\approx0.94$ (indicated by the point
in Figure \ref{fig:Peclet_compare}). In this instance, we expect an advection dominated
solution with very little diffusive damping.  This poses a danger numerically as it is
close to the regime where shock-type solutions could arise.

The third-order term can be analyzed in a similar manner. To the author's knowledge there
is no {\it name} for the ratio of the coefficients of the third-order term to the
diffusive term  
\begin{flalign}
    \notag H &:= \frac{\tau \porosity_S}{-t_c p_c'(S)} = \frac{\tau \porosity_S \rho^l g
    \kappa_s}{-p_c'(S) \mu_l} \\
    %
    %
    \notag &= \left( \frac{\tau \porosity_S \kappa_S}{\mu_l} \right) \left( \frac{\al m
    }{1-m} \right) S^{(1+1/m)} (S^{-1/m} - 1)^m \\
    &= \left( \frac{\tau \porosity_S \kappa_S}{\mu_l} \right) Pe := H_0 Pe
    \label{eqn:hassnizadeh_number}
\end{flalign}
Thus the plots in Figure \ref{fig:Peclet_compare} are simply scaled versions of $H$.  The
question that remains is what effect the third-order term has on the solution.  To answer
this questions we examine a few solution plots.  These solutions are found using
\texttt{Mathematica}'s \texttt{NDSolve} function.  This build-in command is a general
differential equation solver handling ordinary and partial differential equations, systems
of equations, vector equations, and stiff systems.  For partial differential equations it
uses a finite difference approach to discretize the spatial variable and a version of
Gear's method for implicit stiff time stepping following a method-of-lines approach
\cite{WolframNDSolve2012}. 

Figure \ref{fig:S_Drain_compare} shows a drainage experiment for various values of $H_0 =
(\tau \porosity_S \kappa_S)/\mu_l$. The initial condition is given in black.  A Dirichlet
boundary condition ($S = S_0$) is given at $x=1$ and a homogeneous Neumann condition
($\partial S / \partial x |_{x=0} = 0$) is imposed
at $x=0$. Gravity points in the negative $x$ direction, so that the liquid present
in the column is expected to drain over time. Figure \ref{fig:S_Drain_t025} shows that at earlier
times a larger value of $H_0$ gives a steeper front with plausibly physical saturation
profiles. Non-physical, non-monotonic, results are observed for
$H_0 = 10^{-2}$ as seen near $x = 0.8$ in Figures \ref{fig:S_Drain_t050} -
\ref{fig:S_Drain_t100}. For values of $H_0$ smaller than $10^{-2}$ we continue to observe
a sharper front as compared to solutions for $\tau = 0$ (shown in blue).
\linespread{1.0}
\begin{figure}[H]
        \centering
        \subfigure[Saturation profiles at $t=t_1$]{
                \includegraphics[width=0.45\textwidth]{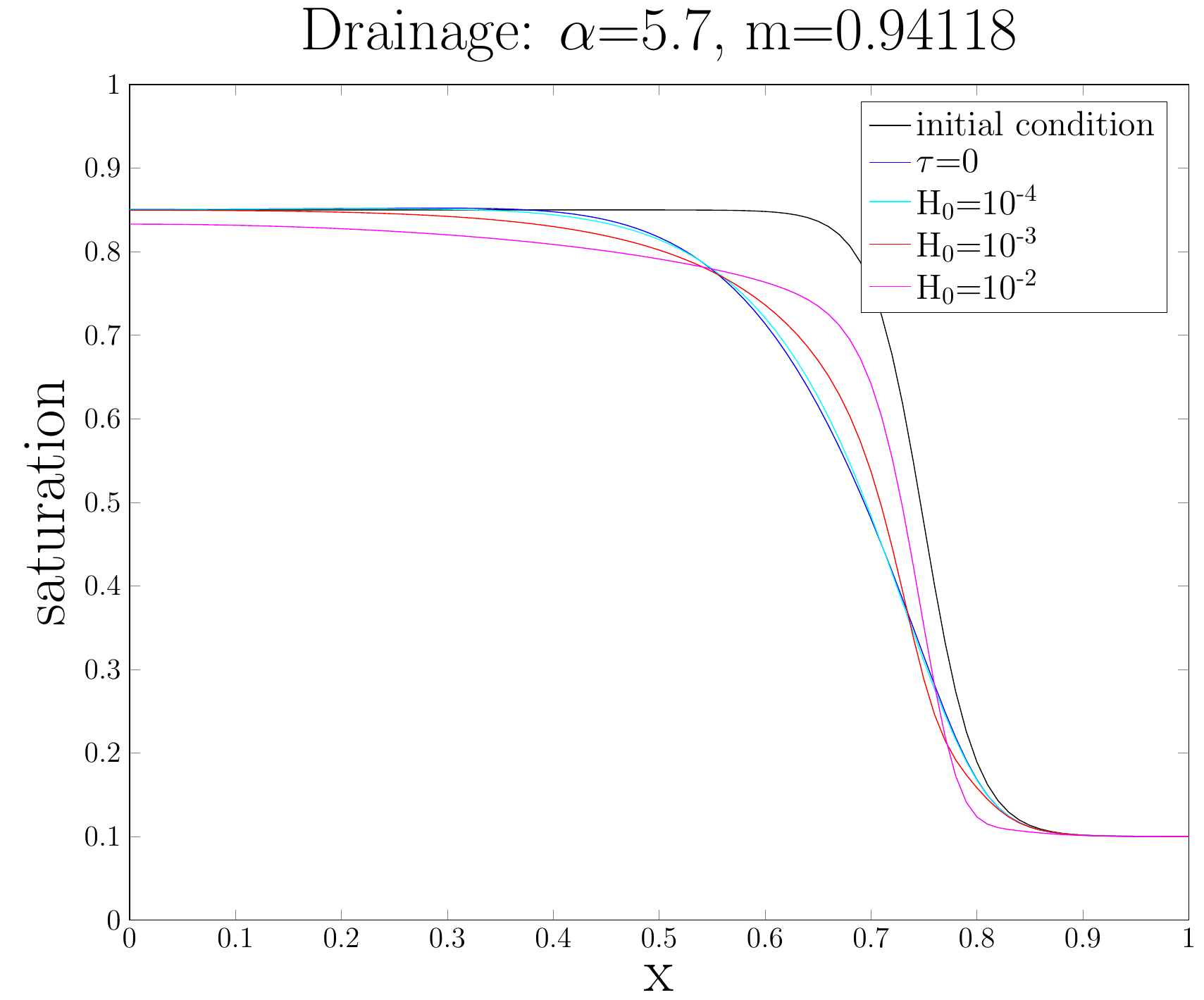}
                    \label{fig:S_Drain_t025}
            }
        \subfigure[Saturation profiles at $t=t_2$]{
                \includegraphics[width=0.45\textwidth]{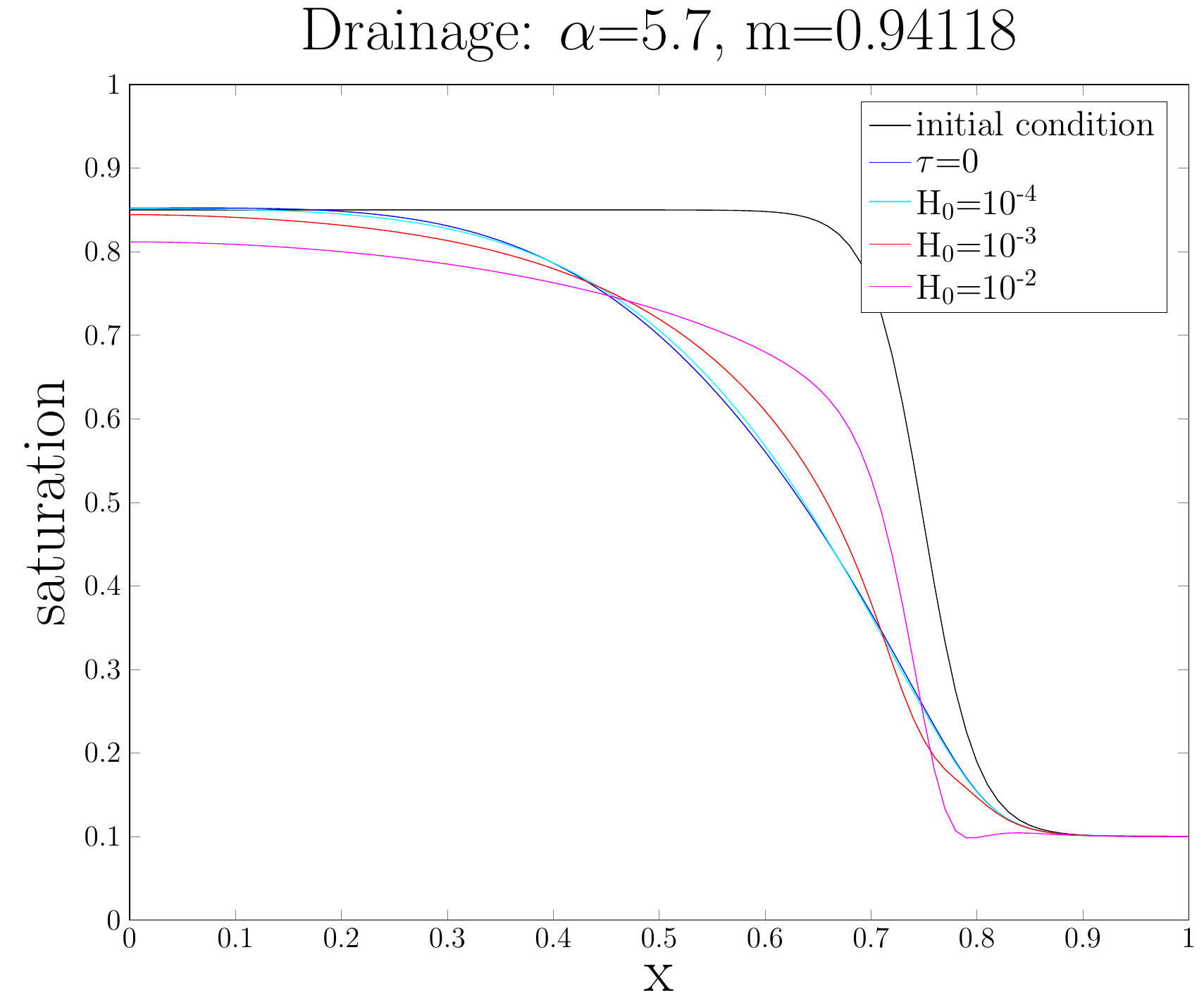}
                    \label{fig:S_Drain_t050}
        }
        \subfigure[Saturation profiles at $t=t_3$]{
                \includegraphics[width=0.45\textwidth]{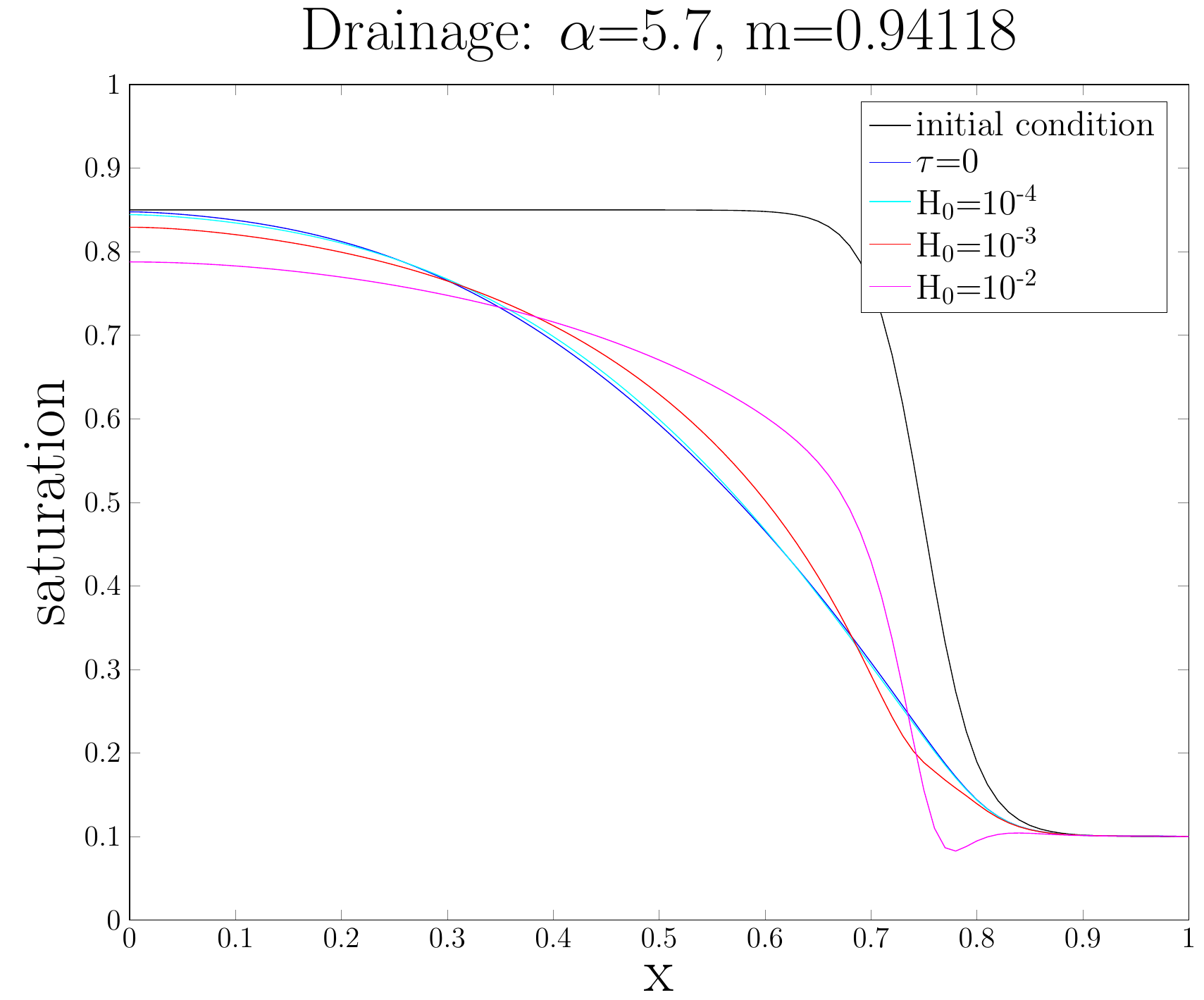}
                    \label{fig:S_Drain_t075}
        }
        \subfigure[Saturation profiles at $t=t_4$]{
                \includegraphics[width=0.45\textwidth]{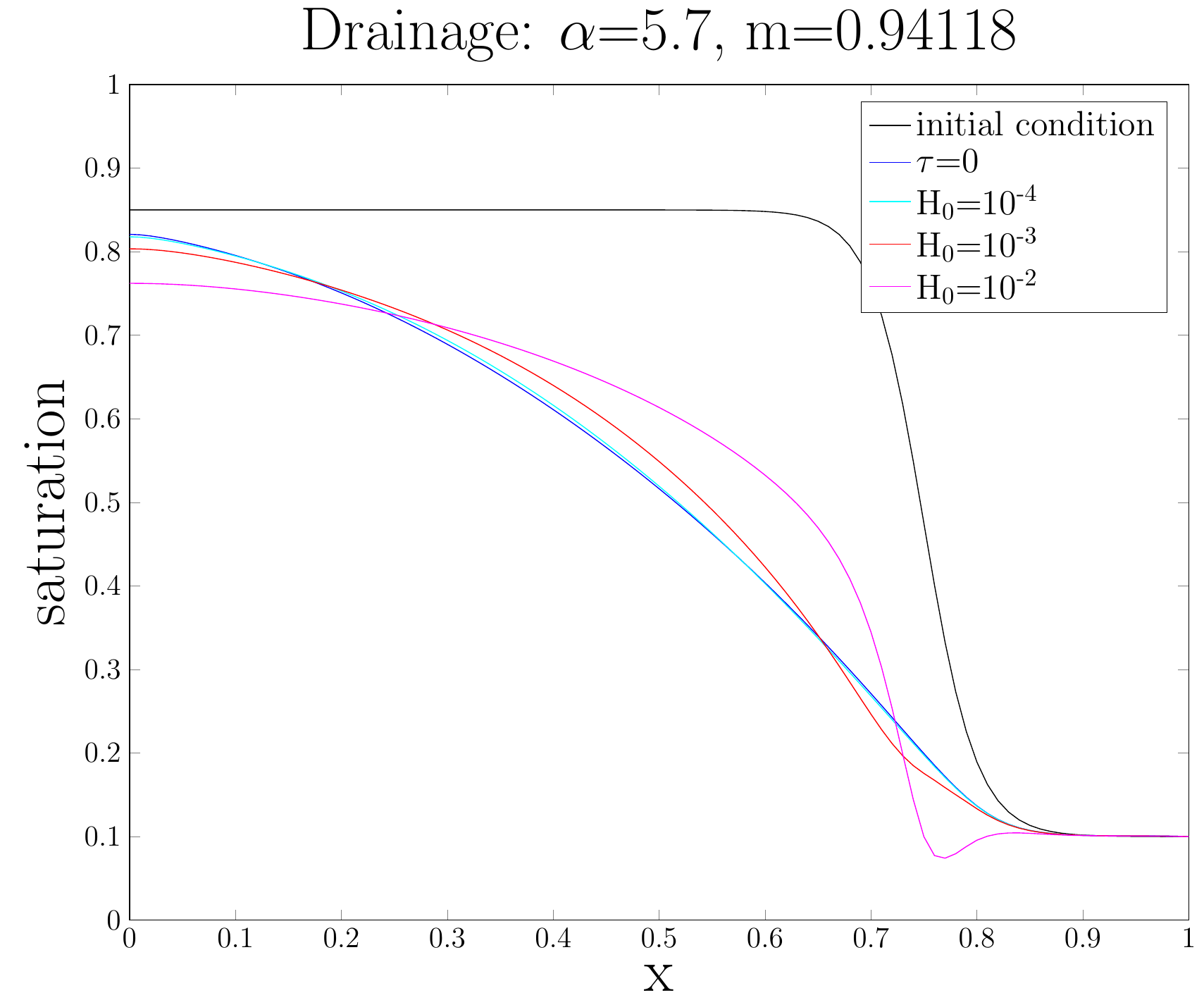}
                    \label{fig:S_Drain_t100}
        }
        \caption{Saturation profiles at various times in a drainage experiment with $\al = 5.7, n =
        17$.}\label{fig:S_Drain_compare}
\end{figure}
\renewcommand{\baselinestretch}{\normalspace}

To be sure that the non-physical results observed for $H_0 = 10^{-2}$ are not due to
numerical noise we complete a numerical convergence test on this particular set of initial
boundary conditions. A typical convergence test of a numerical method would compare
against a known analytic solution, but in this case there is no known analytic solution. For this
reason, we allow \texttt{Mathematica} to solve the problem using the default spatial and
temporal tolerances and then compare solutions with fixed grids consisting of fewer mesh points
to this solution.  \texttt{Mathematica}'s differential equation solver uses a finite
difference approach for spatial discretization. The defaults for this scheme are
fourth-order central differences where spatial points are on a static grid and the number
of grid points is chosen automatically based on the initial condition. For the tests shown
in Figure \ref{fig:S_Drain_compare} there were 103 grid points selected automatically.   To
check this solution, we examine the relative $L^2(0,1)$ error as a function of time,
\[ E_{(N)}(t) = \frac{\| S_{(k)} - S_{(103)} \|_{L^2}}{\|S_{(103)} \|_{L^2}}, \]
where $N$ is the number of spatial points. In Figure \ref{fig:Drain_Conv_Test} we measure
$E_{(N)}(t)$ for $N$ ranging from 20 spatial points to 100 spatial points. Notice that for
any fixed {\it small} time the relative error decreases with increasing grid size; hence
indicating numerical convergence at that fixed time.  For dimensionless time greater than
approximately $0.05$, on the other hand, the error decreases at a slower rate and there is
evidence that the numerical method may not be converging. In all cases the relative error
grows in time until approximately $0.25$. While the {\it bump} that appears in Figure
\ref{fig:S_Drain_compare} is certainly non-physical, Figure \ref{fig:Drain_Conv_Test}
seems to indicate that the numerical method is failing in this case and the results may
not be trust-worthy for this set of parameters and initial boundary conditions.
\linespread{1.0}
\begin{figure}[H]
    \begin{center}
        \includegraphics[width=0.70\columnwidth]{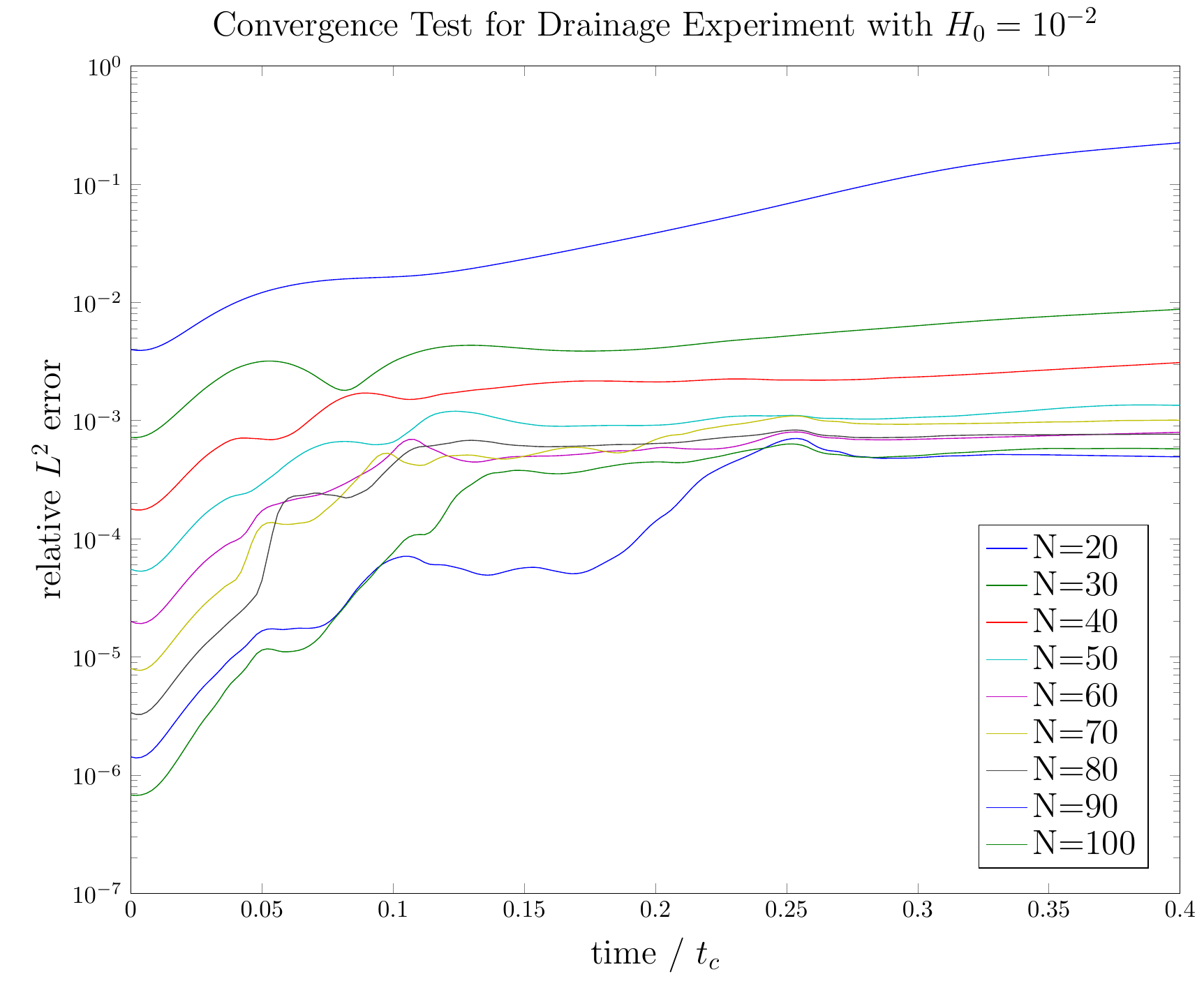}
    \end{center}
    \caption{Convergence test for drainage experiment depicted in Figure
        \ref{fig:S_Drain_compare}. $N$ is the number of spatial grid points. In Figure
        \ref{fig:S_Drain_compare}, $t_1 = 0.025t_c$, $t_2 = 0.050t_c$, $t_3 = 0.075t_c$,
    and $t_4 = 0.010t_c$}
    \label{fig:Drain_Conv_Test}
\end{figure}
\renewcommand{\baselinestretch}{\normalspace}

Figure \ref{fig:S_Imbibe_compare} shows an imbibition experiment for various value of
$H_0$.  As before, the initial condition is shown in black.  For this experiment,
Dirichlet boundary conditions are enforced at both $x=0$ and $x=1$.  Gravity points in the
negative $x$ direction, and the boundary condition at $x=1$ indicates that wetting fluid
is being added over time.  For $H_0 = 10^{-4}$ and $H_0 = 10^{-3}$ we see plausibly
physical results and we see sharper wetting fronts as in the drainage experiment.  For $H_0
\ge 10^{-2}$ we almost immediately see a non-physical non-monotonicity appear at the top
edge of the wetting front.  Similar behavior was observed by Peszynsk\'a and Yi
\cite{Peszynska2008} for their numerical scheme, and they stated 
\begin{quote}
    ``\ldots we cannot speculate whether the apparent nonmonotonicity of profiles \ldots relates
    to a numerical instability, or to a physical phenomenon.''
\end{quote}
It is reasonable to ask whether this is
associated with numerical noise, and Figure \ref{fig:Imbibe_Conv_Test} shows a convergence
test similar to that shown with the drainage experiment.  From Figure
\ref{fig:Imbibe_Conv_Test} it appears as if the numerical method is converging under mesh
refinement for dimensionless time approximately less than $0.1$.  The non-monotonicity
appears in the region where the method should be stable so we
tentatively conclude that this effect is not a numerical artifact. Finally, we observe that for $H_0 = 10^{-1}$ the advection
term has been overwhelmed by the diffusion and the third-order term and the numerical
results are completely non-physical.
\linespread{1.0}
\begin{figure}[H]
        \centering
        \subfigure[Saturation profiles at $t=t_1$]{
                \includegraphics[width=0.45\textwidth]{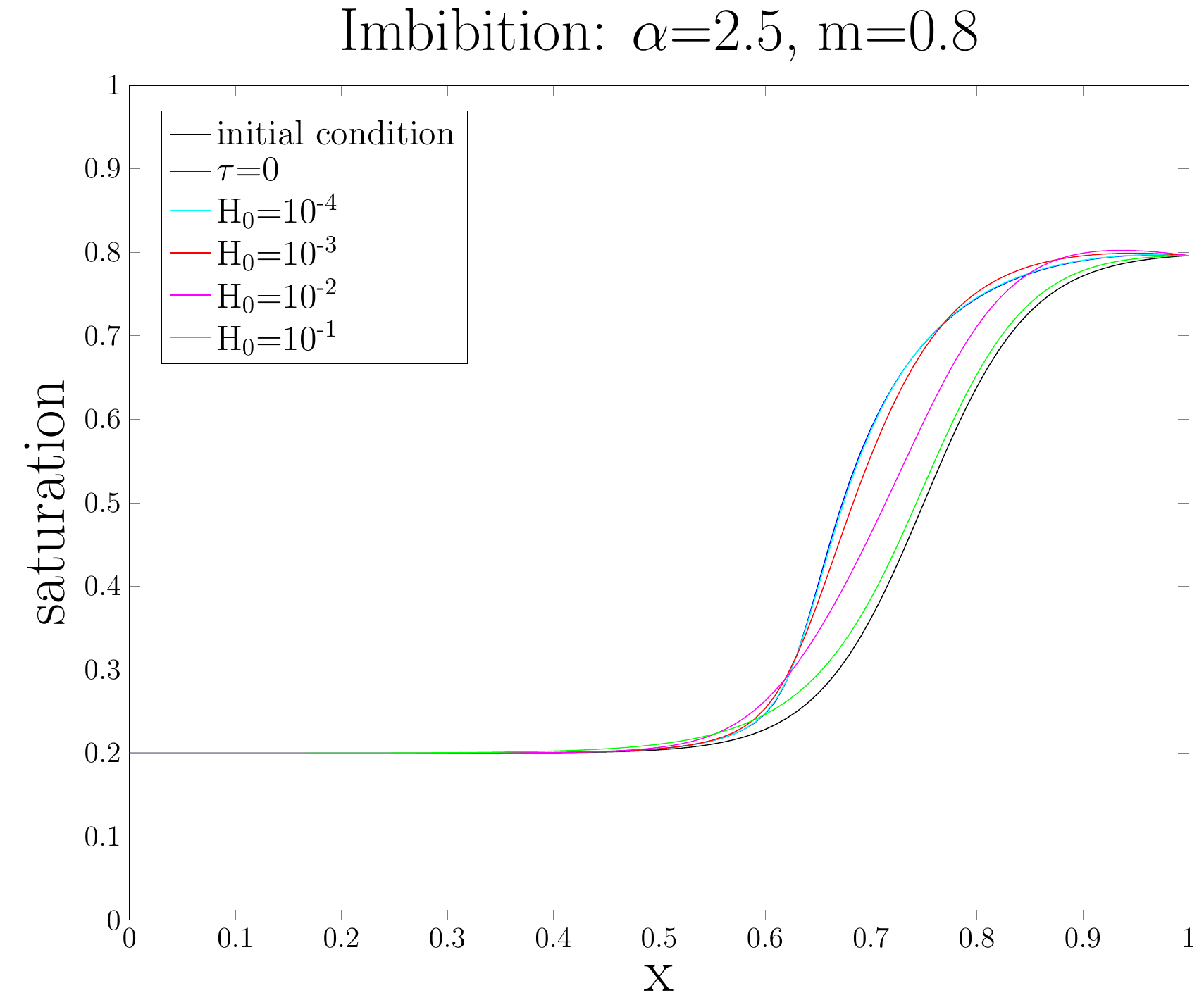}
                    \label{fig:S_Imbibe_t025}
            }
        \subfigure[Saturation profiles at $t=t_2$]{
                \includegraphics[width=0.45\textwidth]{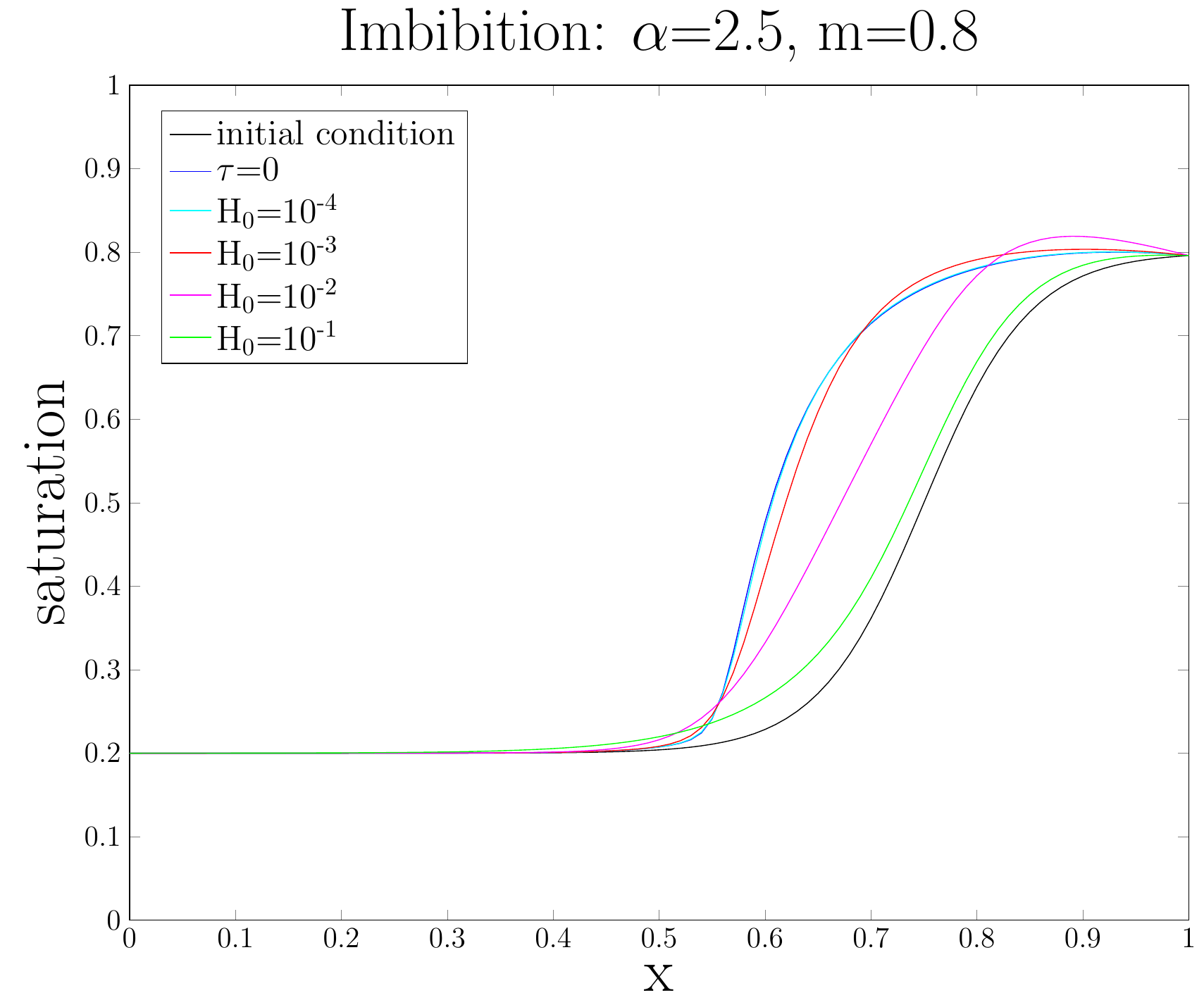}
                    \label{fig:S_Imbibe_t050}
        }
        \subfigure[Saturation profiles at $t=t_3$]{
                \includegraphics[width=0.45\textwidth]{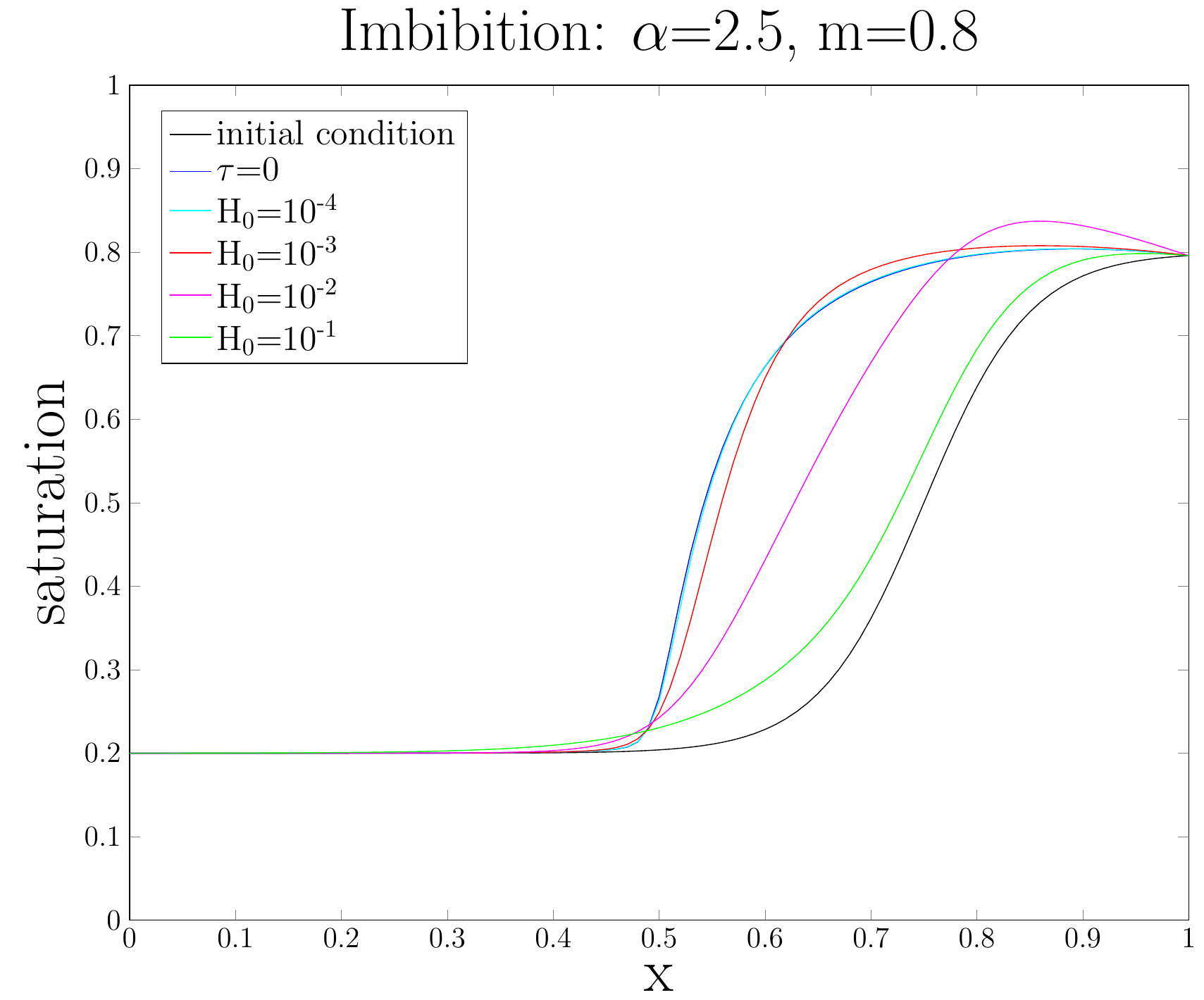}
                    \label{fig:S_Imbibe_t075}
        }
        \subfigure[Saturation profiles at $t=t_4$]{
                \includegraphics[width=0.45\textwidth]{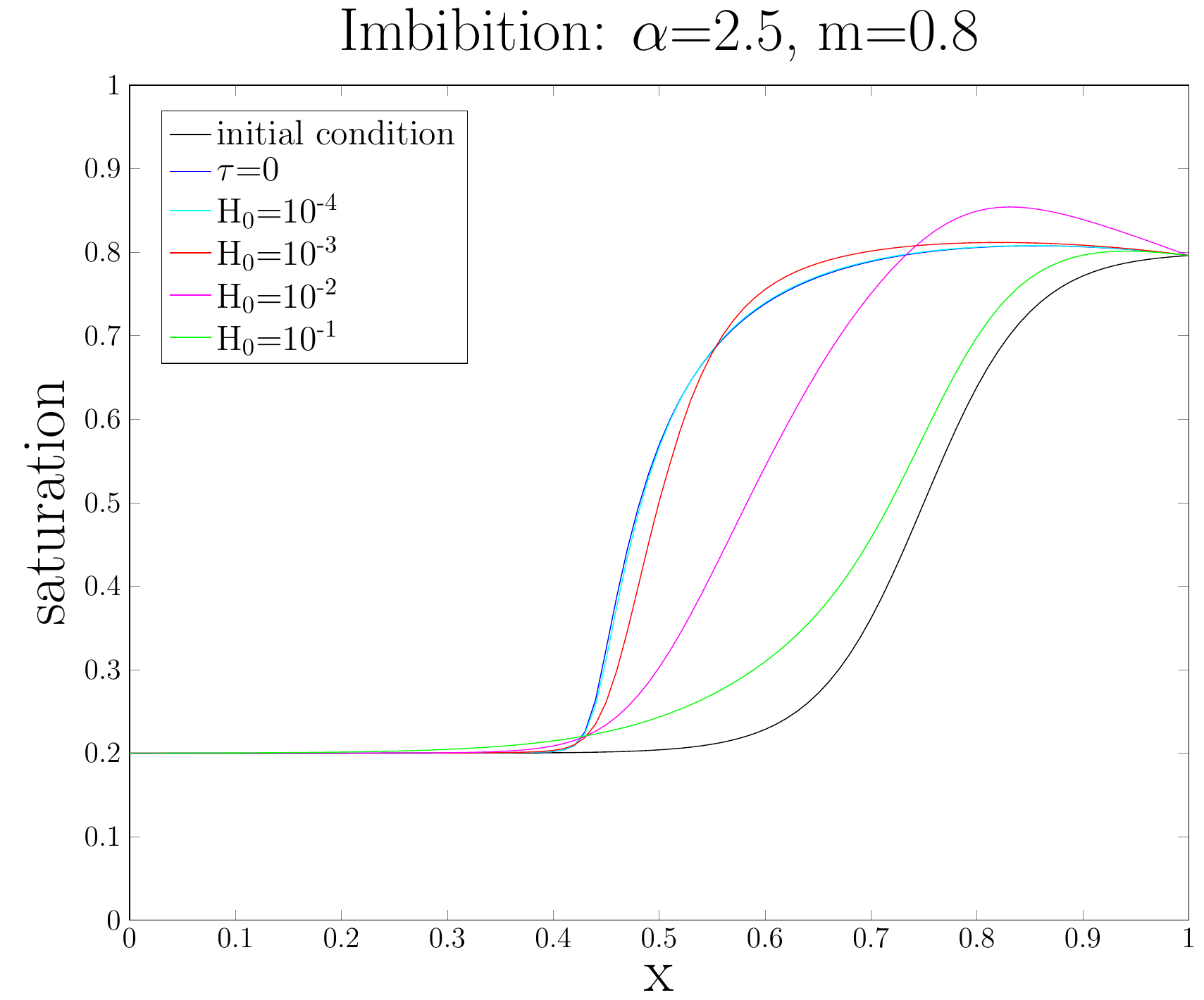}
                    \label{fig:S_Imbibe_t100}
        }
        \caption{Saturation profiles in a imbibition experiment with $\al = 2.5, n =
        5$.}\label{fig:S_Imbibe_compare}
\end{figure}
\renewcommand{\baselinestretch}{\normalspace}

\linespread{1.0}
\begin{figure}[H]
    \begin{center}
        \includegraphics[width=0.75\columnwidth]{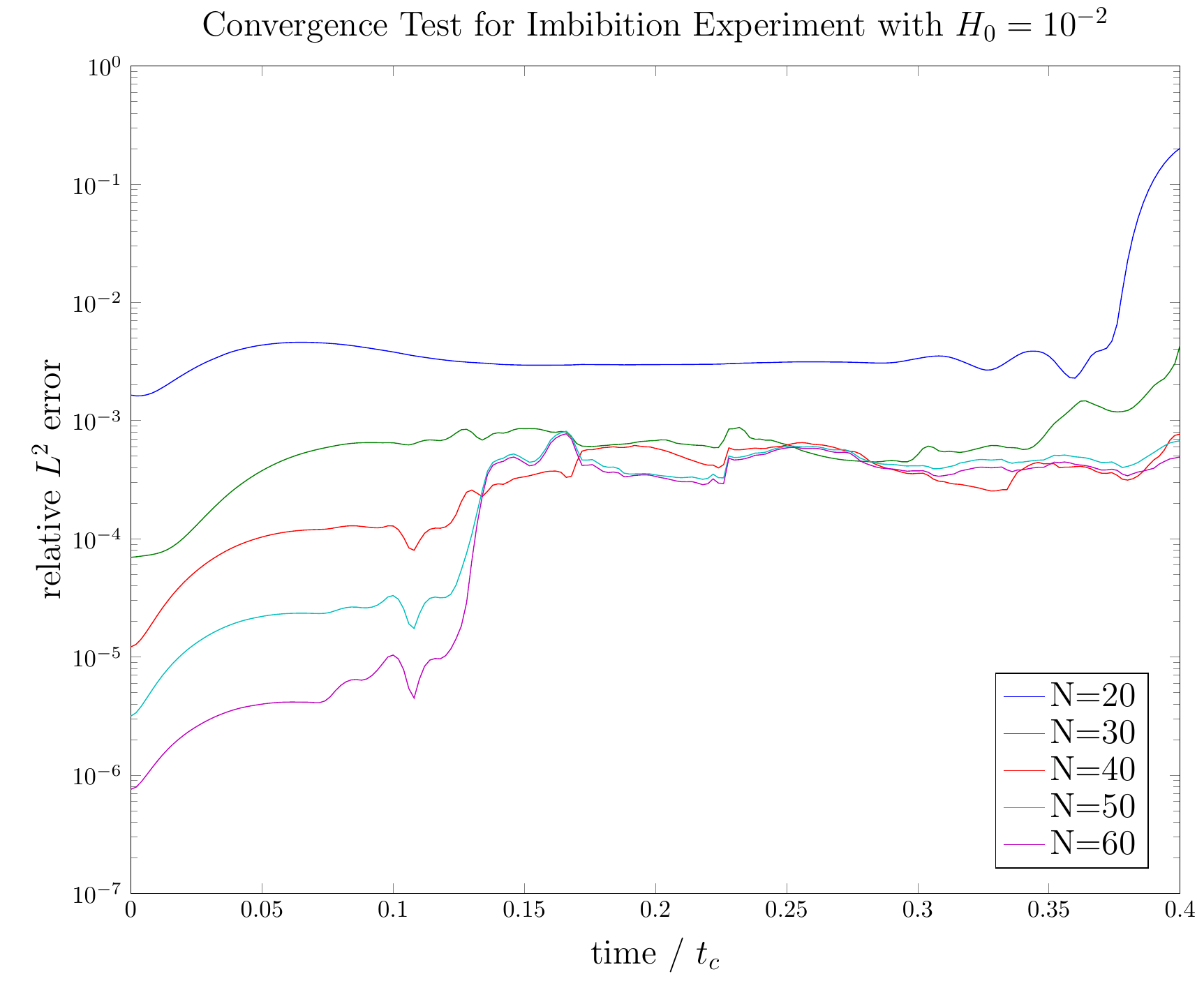}
    \end{center}
    \caption{Convergence test for imbibition experiment depicted in Figure
        \ref{fig:S_Imbibe_compare}.  In Figure
        \ref{fig:S_Imbibe_compare}, $t_1 = 0.025t_c$, $t_2 = 0.050t_c$, $t_3 = 0.075t_c$,
    and $t_4 = 0.010t_c$}
    \label{fig:Imbibe_Conv_Test}
\end{figure}
\renewcommand{\baselinestretch}{\normalspace}

Clearly there are infinitely many choices of initial boundary conditions, and the results
presented herein inherently depend on the conditions chosen.  Similar types of
non-physical behavior can be observed for other families of van Genuchten parameters, but
the associated plots are excluded here for brevity.  An empirical conclusion is that for $H_0 =
(\tau \porosity_S \kappa_S)/\mu_l$ greater than $10^{-3}$ possibly leads to non-physical
behavior in the numerical solution. 

To the author's knowledge, an analysis of parameters of this type has not been completed
in the literature.  We have shown in this subsection that for reasonably small values of
$\tau$ we predict sharper fronts than with the traditional Richards' equation. 

\section{Vapor Diffusion Equation}\label{sec:numerical_vapor_diffusion}
Next let us consider the vapor diffusion equation under assumptions of fixed constant
temperature and a fixed saturation profile.  This particular study is a bit peculiar since
it is unlikely that a saturation profile will remain fixed during an evaporation (or
condensation) study. Of course, we could consider $S \equiv 0$ everywhere and study only
evaporation in dry porous media, but this is also not realistic as enhancement models
depend partly on the presence of a liquid phase. In this section we compare the present model
proposed in Section \ref{sec:vapor_diffusion} to the classical enhancement model and to
Fickian diffusion.
\begin{flalign}
    (1-S) \pd{\rh}{t} &= \pd{ }{x} \left( \mathcal{D}(\rh,S) \pd{\rh}{x} \right) \quad
    \text{(present model)} \\
    (1-S) \pd{\rh}{t} &= \pd{ }{x} \left( \eta_{(a)}(S) \tau(S) D^g \pd{\rh}{x} \right) \quad
    \text{(enhancement model)} \\
    (1-S) \pd{\rh}{t} &= D^g \pdd{\rh}{x} \quad \label{eqn:fickian_diffusion_model}
    \text{(Fickian diffusion model)}.
\end{flalign}
Recall that $\eta_{(a)}(S)$ is the empirical {\it enhancement factor} traditionally used,
$\tau(S)$ is the tortuosity, and $D^g$ is the constant Fickian vapor diffusion
coefficient (see equations \eqref{eqn:enhancement_cass}, \eqref{eqn:tortuosity_cass}, and
obviously \eqref{eqn:fickian_diffusion_model}) .  The reader should note that we are slightly abusing notation given that
$\eta$ previously stood for intensive entropy and $\tau$ is the label for the relaxation term in
the saturation equation.  This abuse of notation is contained to this section and should
not cause confusion.

Qualitatively, the {\it shape} of the diffusion curve in the $x-\rh$ plane for the present
model is rather different than those of the enhancement and Fickian models. Figure
\ref{fig:DiffComp_compare} gives several snapshots of a sample diffusion experiment with
enhancement parameter $a=25$, van Genuchten parameter $m=0.9$, and saturated permeability
$\kappa_S = 1.04 \times 10^{-10} m^2$. Observe further that the steady state solutions are
different for the two models. This is no surprise since the nonlinearities in the
diffusion coefficient have different functional forms.
\linespread{1.0}
\begin{figure}[H]
        \centering
        \subfigure[Relative humidity profiles at $t=t_1$]{
                \includegraphics[width=0.45\textwidth]{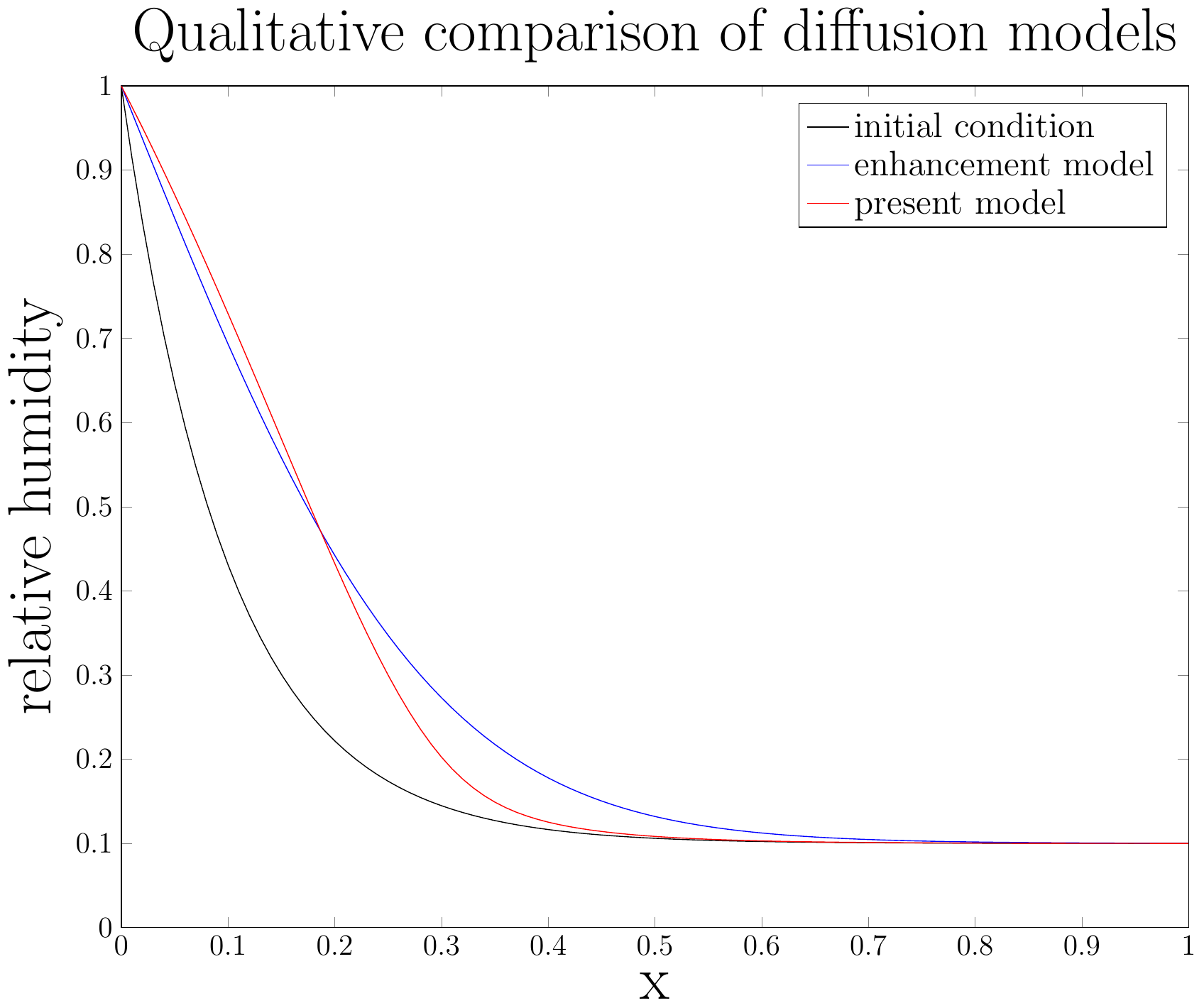}
                    \label{fig:DiffComp_t002}
            }
        \subfigure[Relative humidity profiles at $t=t_2$]{
                \includegraphics[width=0.45\textwidth]{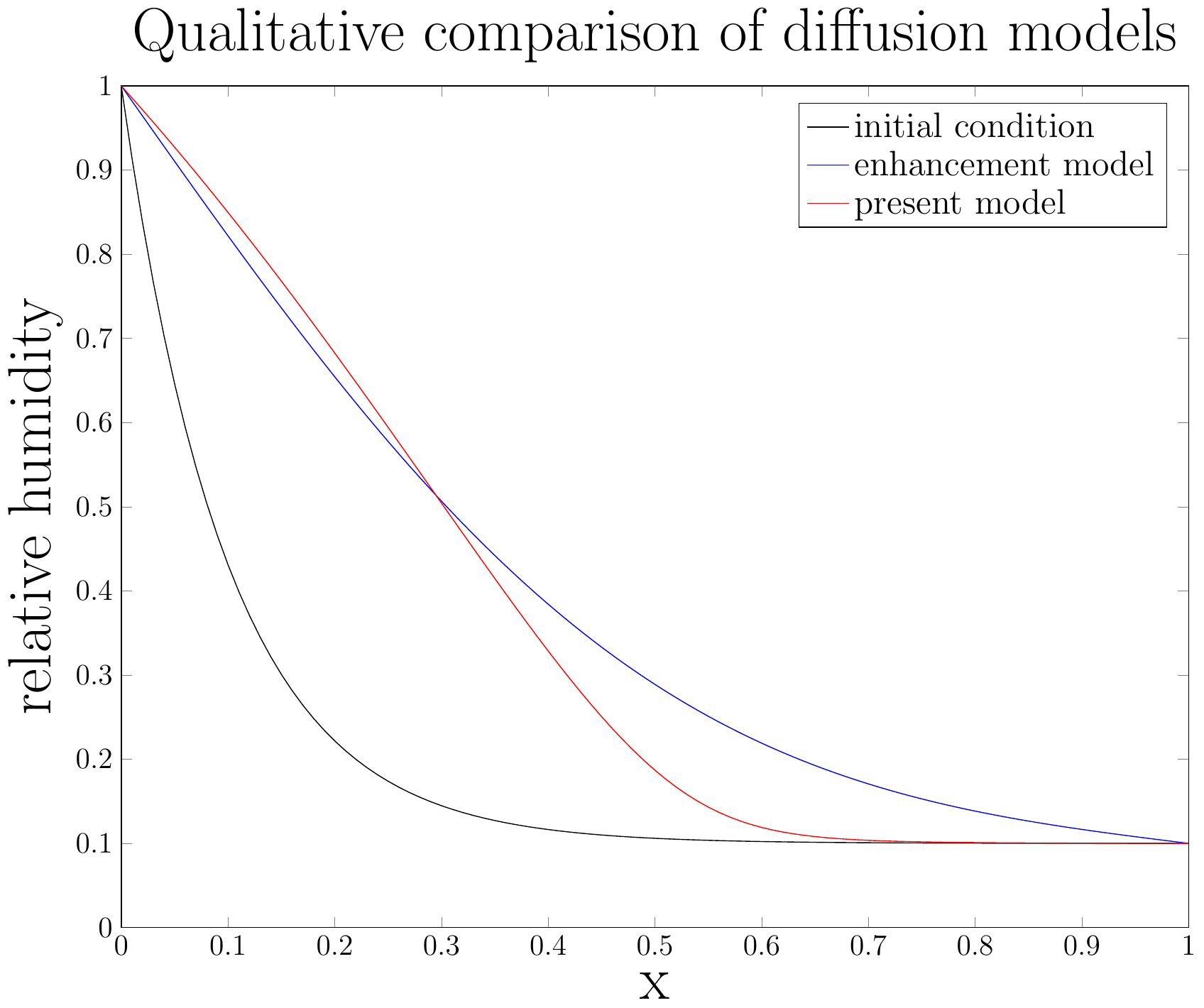}
                    \label{fig:DiffComp_t005}
        }
        \subfigure[Relative humidity profiles at $t=t_3$]{
                \includegraphics[width=0.45\textwidth]{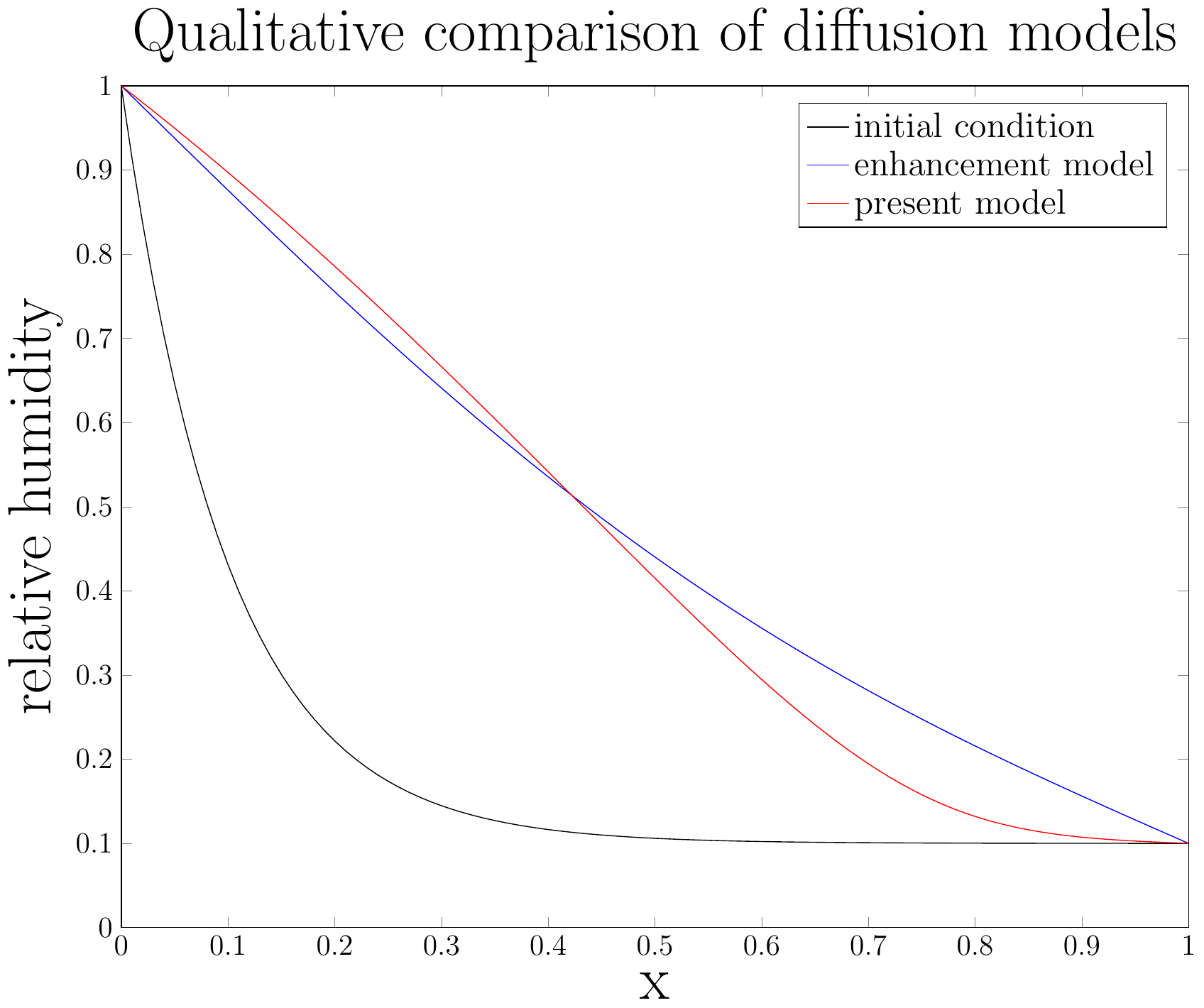}
                    \label{fig:DiffComp_t010}
        }
        \subfigure[Relative humidity profiles at $t=t_4$]{
                \includegraphics[width=0.45\textwidth]{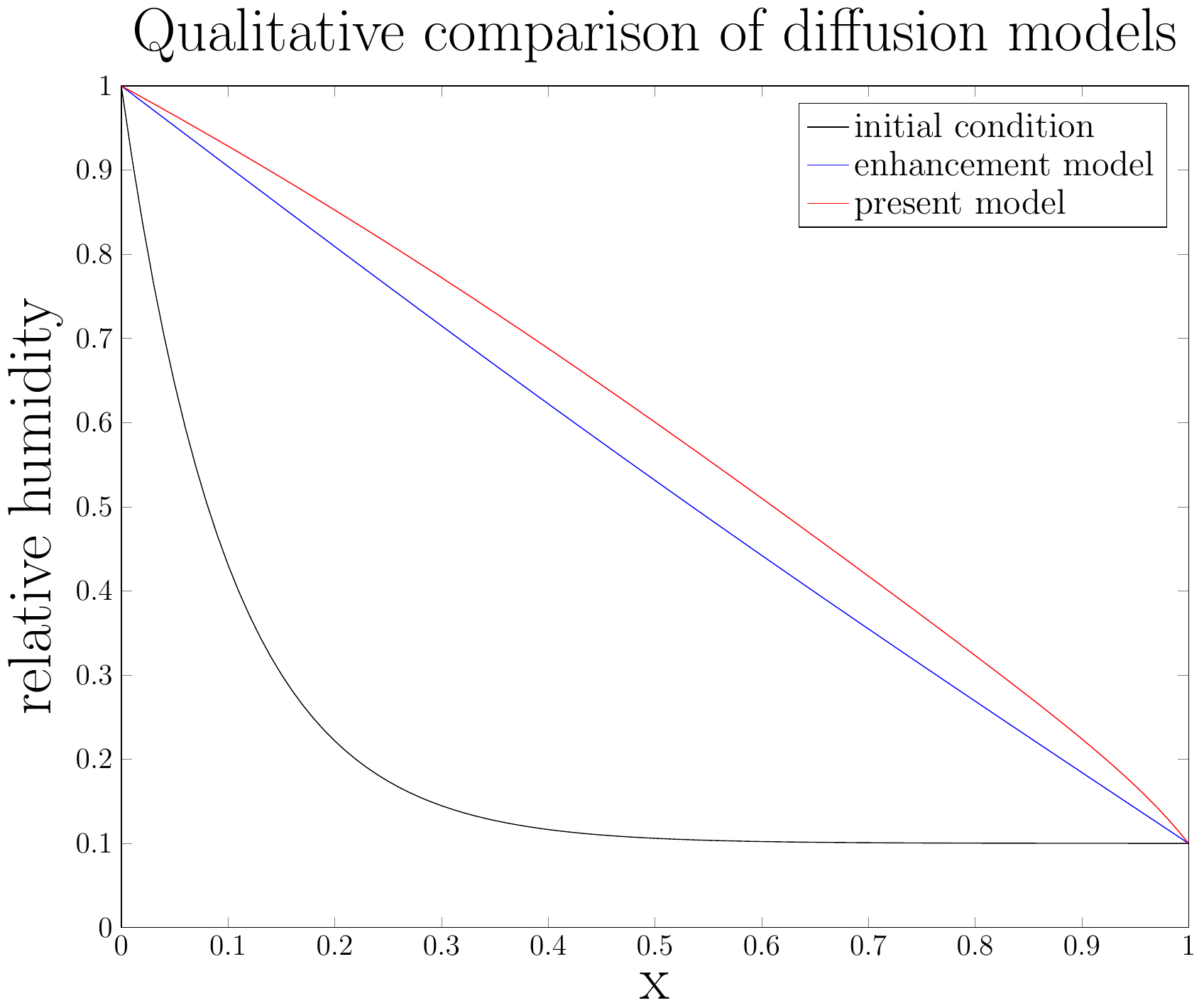}
                    \label{fig:DiffComp_t020}
        }
        \caption{Sample diffusion experiment comparing the enhancement model to the
            present model. Here, $a = 25$, $m=0.9$ ($n=10$), and $\kappa = 10^{-10}$ with
            Dirichlet boundary conditions and an exponential initial profile.
        }\label{fig:DiffComp_compare}
\end{figure}
\renewcommand{\baselinestretch}{\normalspace}

In Figure \ref{fig:diffusion_coeff_comparison} we saw that there is potentially a marked
difference between the diffusion coefficient in the present model and the enhancement
model.  Figures \ref{fig:EnhVSull_compare} and \ref{fig:EnhVSullKappa2_compare} show a
comparison of the diffusion coefficients for several values of the van Genuchten $m$
parameter and two different saturated permeabilities.  The functional dependence of the
diffusion coefficient in the present model on the van Genuchten parameter can be readily
seen between Figures \ref{fig:EnhVSull1_67} and \ref{fig:EnhVSull20} (similarly,
\ref{fig:EnhVSullKappa21_67} and \ref{fig:EnhVSullKappa220}), and the functional
dependence on the saturated permeability can be seen between the two sets of figures. 

\linespread{1.0}
\begin{figure}[H]
        \centering
        \subfigure[Comparison for $m=0.4$ ($n=1.67$)]{
            \includegraphics[width=0.45\textwidth]{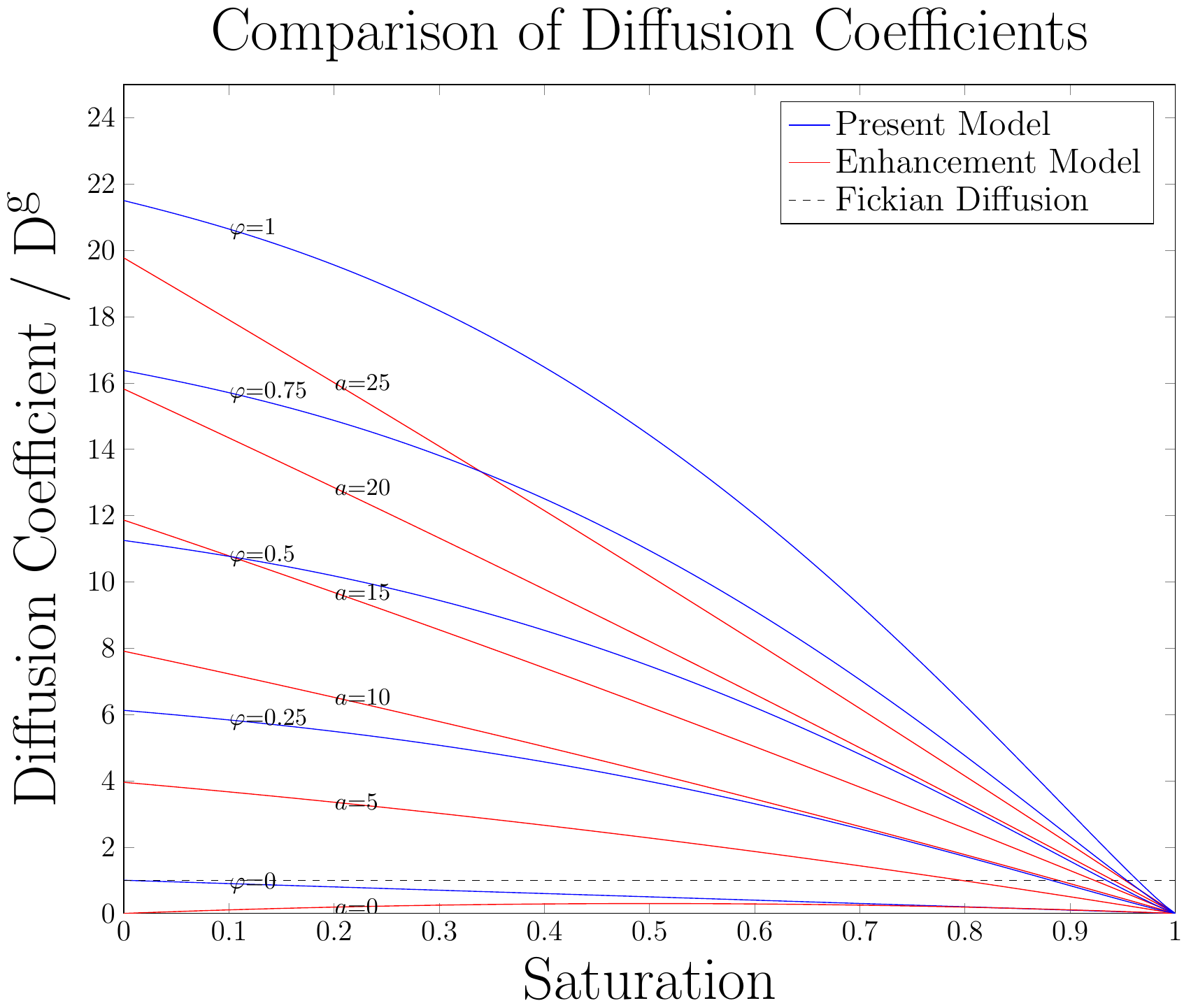}
                    \label{fig:EnhVSull1_67}
            }
        \subfigure[Comparison for $m=0.6$ ($n=2.5$)]{
            \includegraphics[width=0.45\textwidth]{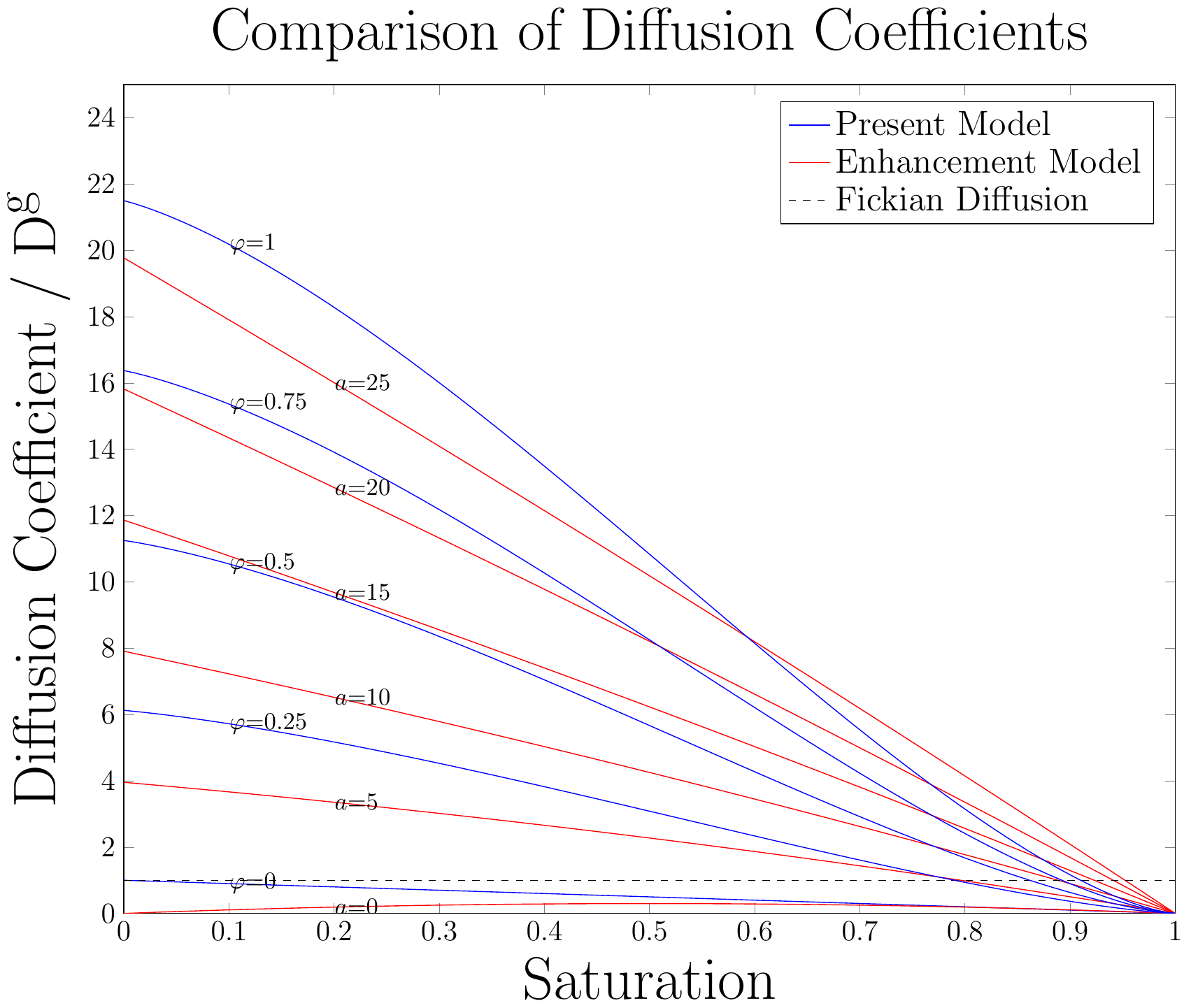}
                    \label{fig:EnhVSull2_5}
        }
        \subfigure[Comparison for $m=0.8$ ($n=5$)]{
            \includegraphics[width=0.45\textwidth]{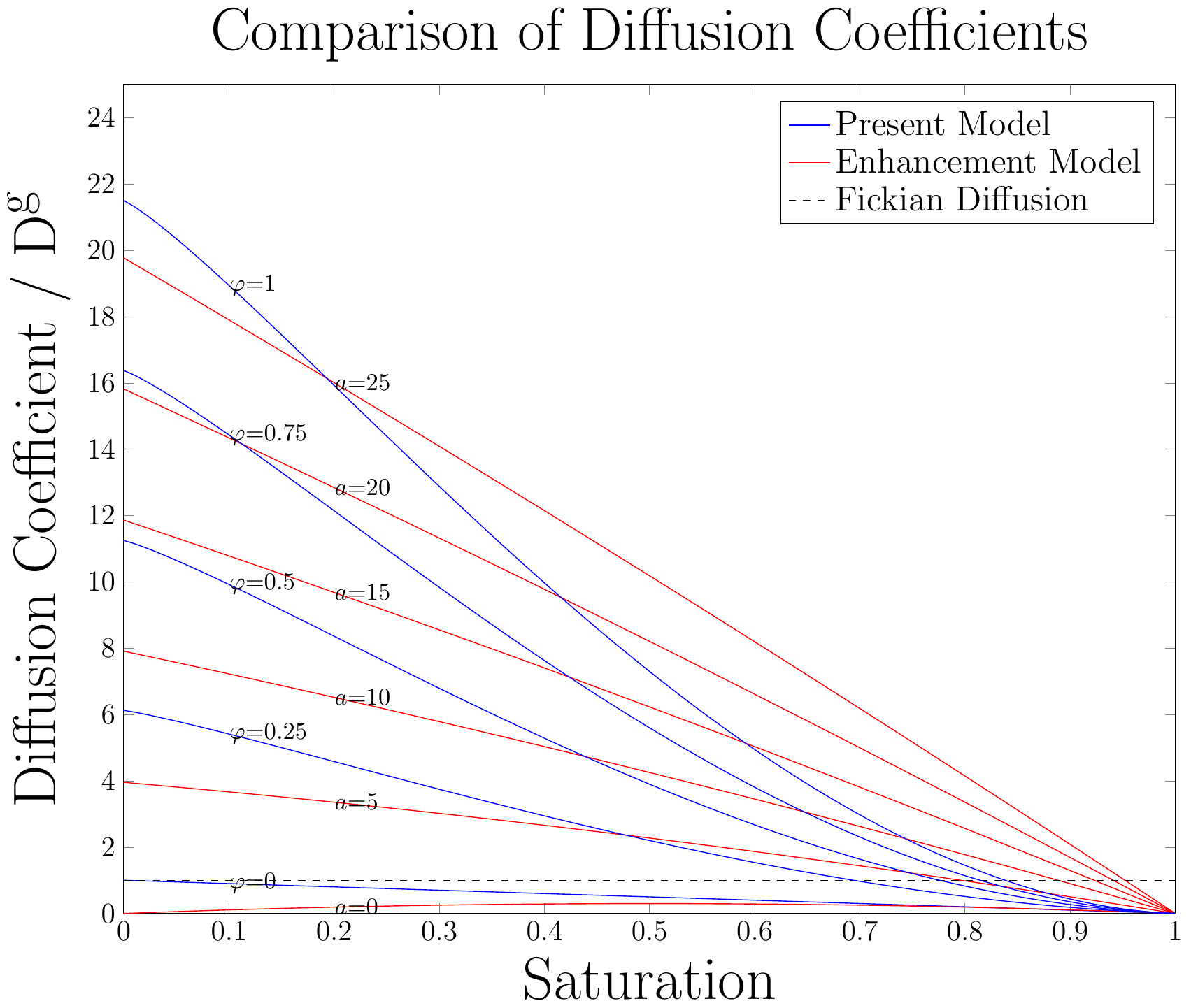}
                    \label{fig:EnhVSull5}
        }
        \subfigure[Comparison for $m=0.95$ ($n=20$)]{
            \includegraphics[width=0.45\textwidth]{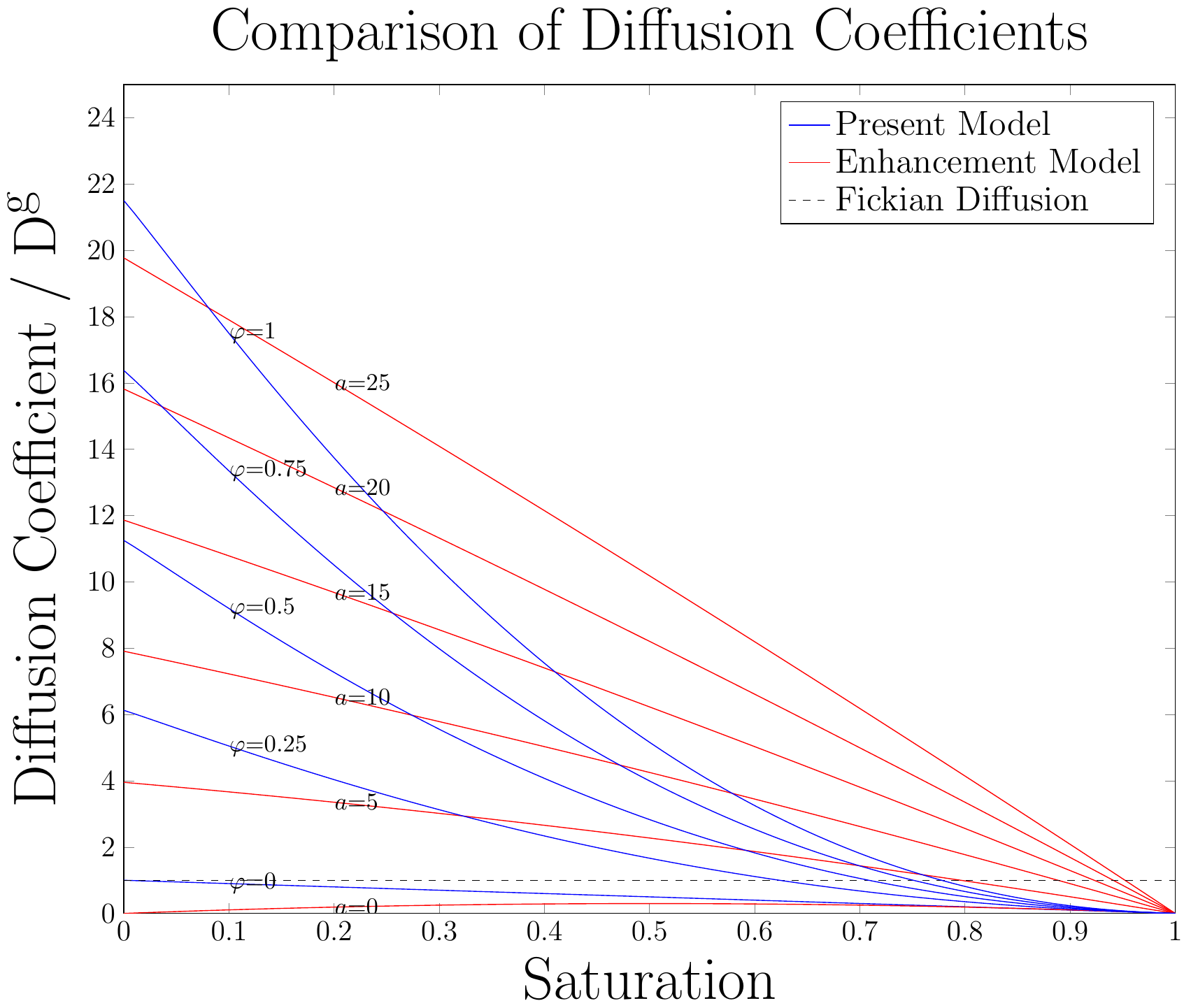}
                    \label{fig:EnhVSull20}
        }
        \caption{Comparison of diffusion coefficients for various van Genuchten
            parameters all taken with $\kappa_s = 1.04 \times 10^{-10}$ and $\porosity =
            0.334$ to match the experiment in \cite{Smits2011}.}\label{fig:EnhVSull_compare}
\end{figure}
\renewcommand{\baselinestretch}{\normalspace}
\linespread{1.0}
\begin{figure}[H]
        \centering
        \subfigure[Comparison for $m=0.4$ ($n=1.67$)]{
            \includegraphics[width=0.45\textwidth]{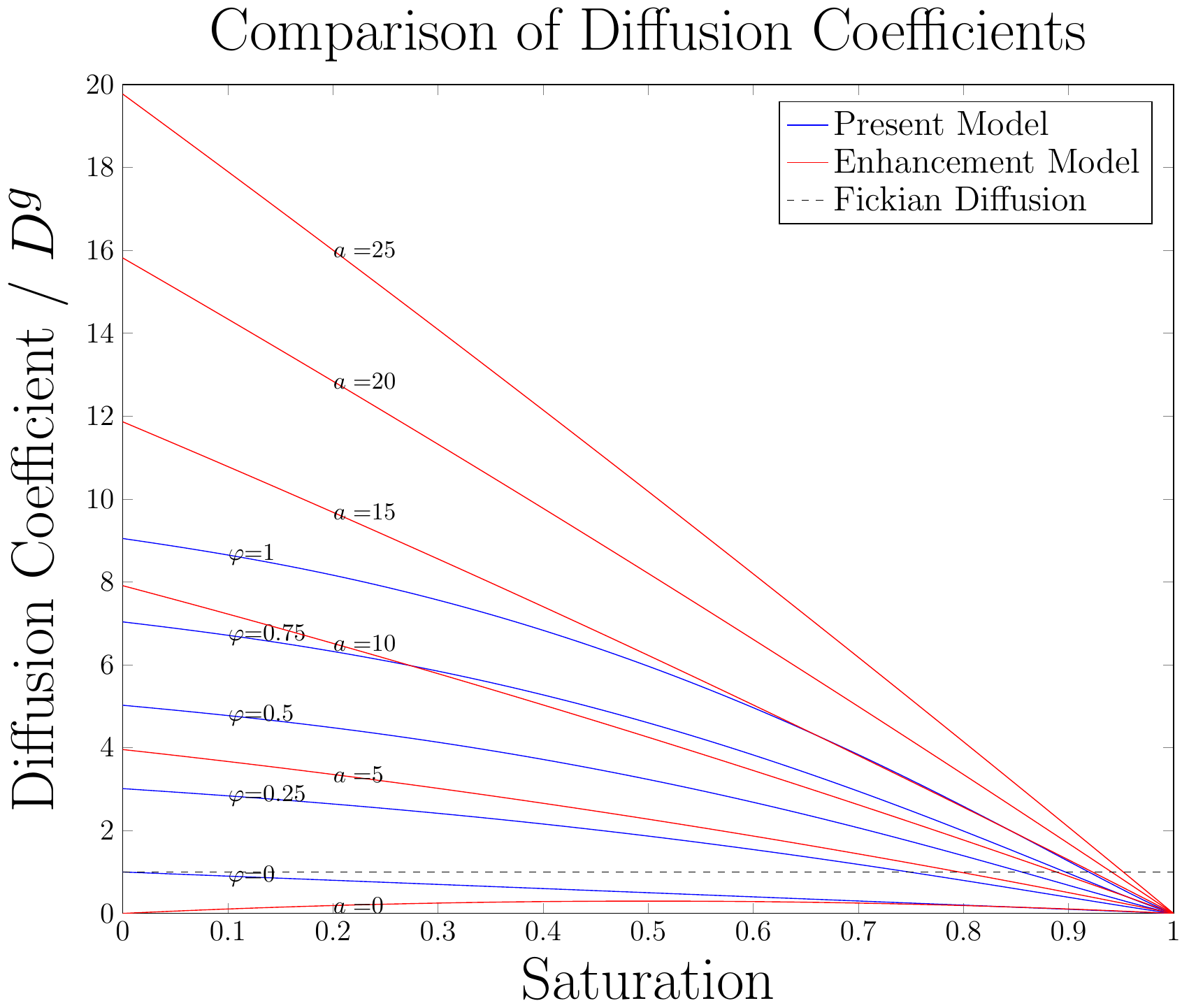}
                    \label{fig:EnhVSullKappa21_67}
            }
        \subfigure[Comparison for $m=0.6$ ($n=2.5$)]{
            \includegraphics[width=0.45\textwidth]{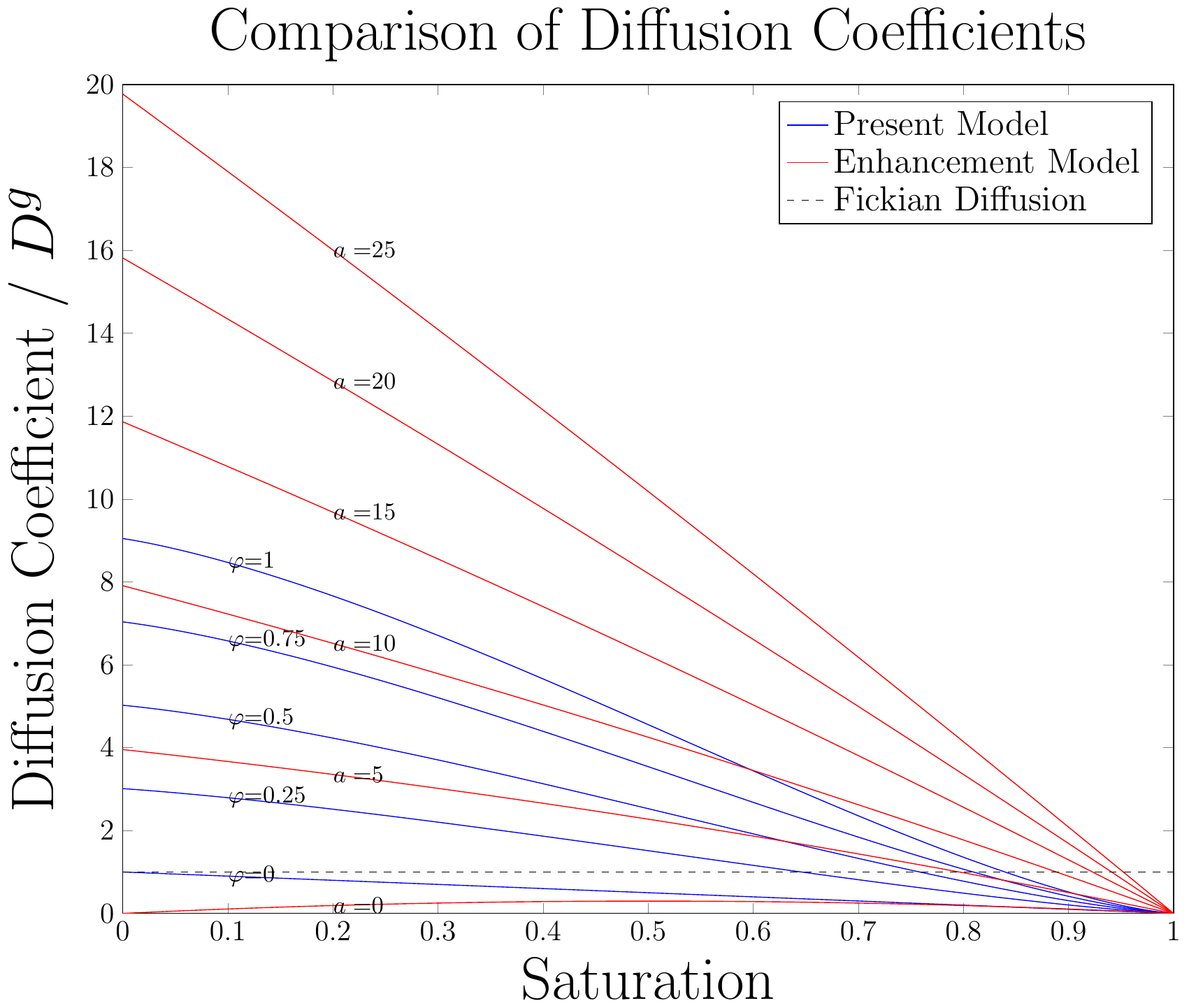}
                    \label{fig:EnhVSullKappa22_5}
        }
        \subfigure[Comparison for $m=0.8$ ($n=5$)]{
            \includegraphics[width=0.45\textwidth]{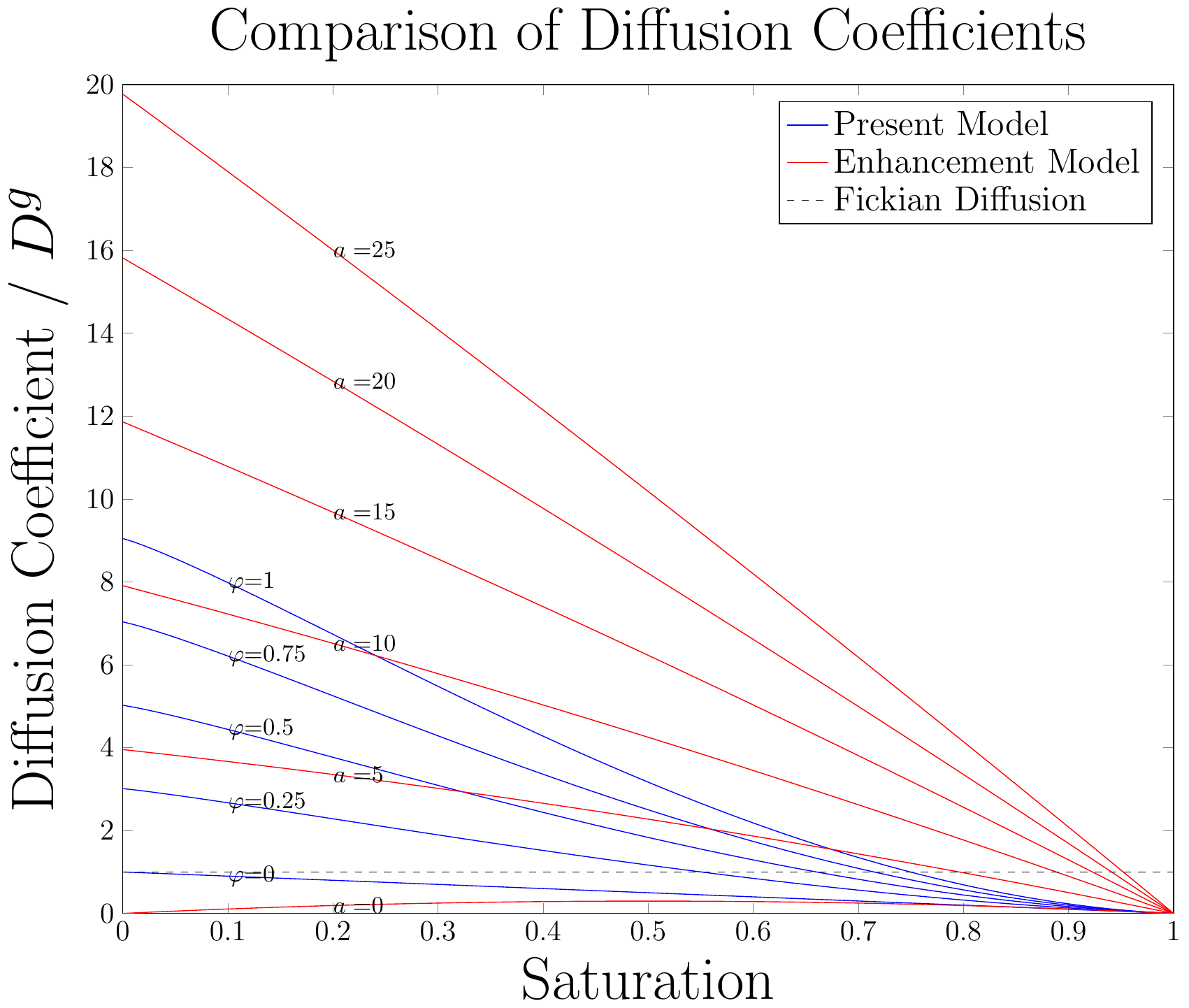}
                    \label{fig:EnhVSullKappa25}
        }
        \subfigure[Comparison for $m=0.95$ ($n=20$)]{
            \includegraphics[width=0.45\textwidth]{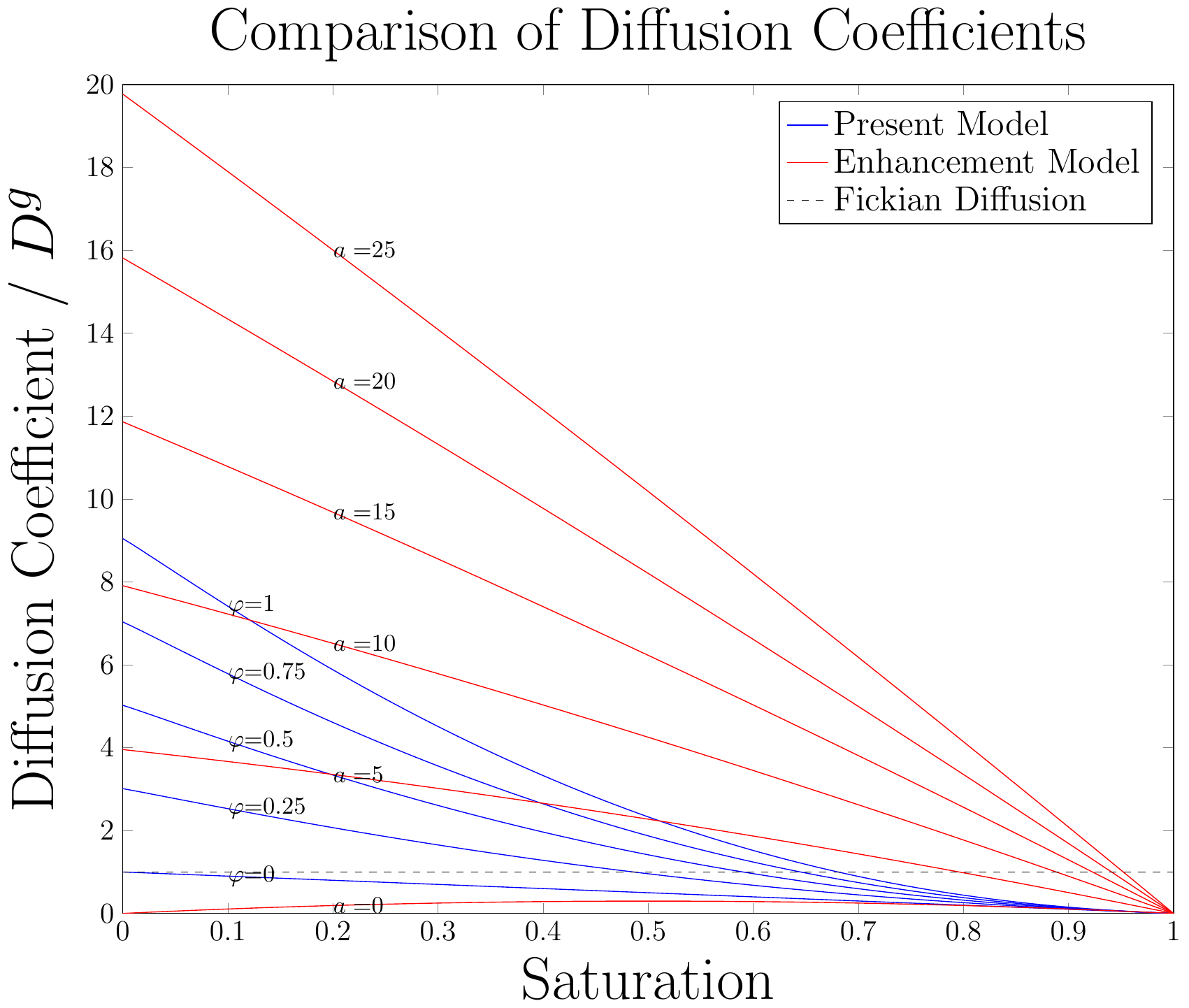}
                    \label{fig:EnhVSullKappa220}
        }
        \caption{Comparison of diffusion coefficients for various van Genuchten
            parameters all taken with $\kappa_s = 4.0822 \times 10^{-11}$ and $\porosity =
            0.385$ to match the experiment in
            \cite{Sakai2009}.}\label{fig:EnhVSullKappa2_compare}
\end{figure}
\renewcommand{\baselinestretch}{\normalspace}

In this thesis we propose that there is a relationship between the fitted $a$ value and
the material properties. 
\begin{prop}\label{prop:m_kappa_a}
    Given the van Genuchten $m$ (or equivalently, $n$) parameter and the saturated
    permeability, $\kappa_S$, of the soil there is an a-priori estimate of the fitting
    parameter $a$. 
\end{prop}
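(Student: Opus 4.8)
The plan is to match the present model's diffusion coefficient, equation \eqref{eqn:modified_diffusion_coefficient}, against the classical enhancement form, equation \eqref{eqn:multiplicative_enhancement_combined}, and solve for the fitting parameter $a$ in terms of the material parameters $m$ (equivalently $n$), $\kappa_S$, and the porosity $\porosity$. Recall that the present diffusion coefficient is
\begin{flalign}
    \notag \mathcal{D}(\rh,S,T) = (1-S) D^g + \rho_{sat} \rh R^{g_v} T \left( 1
    - \frac{R^{g_a}}{R^{g_v}} \right) \left( \frac{\lambda \kappa_s}{\porosity \mu_g}
    \right) \kappa_{rg}(S),
\end{flalign}
while the Cass--de Vries model gives $D = \left( a + 3 \eps{l}/\porosity \right) \left( (2/3) \eps{g} \right) D^g = \left( a + 3S \right) \left( \frac{2}{3} \porosity (1-S) \right) D^g$. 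First I would equate these two expressions and solve for $a$, obtaining
\begin{flalign}
    \notag a(S) = -3S + \frac{3}{2 \porosity (1-S)} \left[ (1-S) + \frac{\rho_{sat} \rh
        R^{g_v} T}{D^g} \left( 1 - \frac{R^{g_a}}{R^{g_v}} \right) \left( \frac{\lambda
        \kappa_S}{\porosity \mu_g} \right) \kappa_{rg}(S) \right].
\end{flalign}
This is, strictly speaking, a function of $S$ (and weakly of $\rh$, $T$), so the next step is to extract from it a single representative value. The natural choice is to evaluate at a reference saturation --- the low-saturation regime, $S \to 0$ (where $\kappa_{rg} \to 1$), is the physically correct place since this is where enhancement is most pronounced and where the Cass model is anchored ($\eta_{(a)} \approx a$ when $\eps{l} \approx 0$). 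At $S = 0$ with $\kappa_{rg}(0) = 1$ this collapses to an a-priori estimate $a \approx \frac{3}{2\porosity} \left[ 1 + \frac{\rho_{sat} R^{g_v} T}{D^g}\left( 1 - \frac{R^{g_a}}{R^{g_v}} \right) \frac{\lambda \kappa_S}{\porosity \mu_g} \right]$, evaluated at a reference temperature and with $\rh$ taken at a representative value.

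The key steps, in order, are: (i) write down both diffusion coefficients in terms of $S$, $\rh$, $T$ using the explicit van Genuchten relative permeability $\kappa_{rg}(S)$ from equation \eqref{eqn:vanGenuchten_krnw} and the constitutive relations for $\rho_{sat}$, $\lambda$, $\mu_g$, and $D^g$ from equations \eqref{eqn:system_final_constitutive}; (ii) equate and solve algebraically for $a$; (iii) argue that the $S$-dependence is mild over the range where the enhancement factor is large, or else fix a reference state $S_0$ (and $T_0$, $\rh_0$) so that the estimate is well-defined; (iv) verify numerically against the fitted values reported in \cite{Smits2011} ($a = 18.2$ with $\kappa_S = 1.04 \times 10^{-10}\,m^2$, $\porosity = 0.334$) and \cite{Sakai2009}, which is exactly the comparison already displayed in Figures \ref{fig:diffusion_coeff_comparison}, \ref{fig:EnhVSull_compare}, and \ref{fig:EnhVSullKappa2_compare}. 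The dimensional analysis in the excerpt already shows the two models give \emph{similar} levels of enhancement, so the estimate should land in the right ballpark.

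The main obstacle is step (iii): the two diffusion coefficients are genuinely different functions of $S$ --- the present model's $S$-dependence comes through $\kappa_{rg}(S)$ and the linear factor $(1-S)$, whereas the Cass model has a polynomial-times-exponential form --- so there is no exact value of $a$ that makes them agree for all $S$. What I would do is make the matching precise at a chosen reference saturation (arguing physically that the low-$S$ limit is the right anchor because that is where the enhancement factor dominates and where Cass's functional form was calibrated), and then show that the resulting $a$, when substituted back, keeps the two coefficients within an acceptable tolerance across the physically relevant saturation window --- this last bound is the quantitative heart of the proposition and would be checked numerically rather than proven in closed form. A secondary subtlety is the residual dependence on $T$ and $\rh$ through $\rho_{sat}(T)$, $\lambda(T)$, $\mu_g(T)$, and the factor $\rh$ itself; I would absorb these by evaluating at standard temperature and a nominal relative humidity, noting that $\lambda$ and $\rho_{sat}$ vary only mildly over the experimental temperature range (an observation already invoked in Assumption \#6).
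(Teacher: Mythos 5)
Your core idea---equating the present model's diffusion coefficient $\mathcal{D}_{(m,\kappa_S)}$ with the Cass--de Vries form $\eta_{(a)}\tau D^g$, solving for $a$, and validating against the fitted values of Smits et al.\ and Sakai et al.---is exactly the skeleton of the paper's argument, so steps (i), (ii), and (iv) are on track. The genuine gap is step (iii), where you commit to anchoring the match at $S\to 0$. The paper explicitly considers that option (``we could proceed by simply choosing a value for $S$'') and rejects it in favor of a cumulative match: it fixes $\rh=1$ (the flux comparison is made at a liquid--gas interface, which also removes your ``nominal relative humidity'' ambiguity) and requires $\int_0^1 \mathcal{D}_{(m,\kappa_S)}(1,S)\,dS=\int_0^1 \eta_{(a)}(S)\tau(S)D^g\,dS$, whose right-hand side is linear in $a$. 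That integration is not cosmetic; it is what makes the prediction land on the measured values ($\approx 18$ vs.\ $18.2$ for Smits, $\approx 9$ vs.\ $8$ for Sakai). At $S=0$ the van Genuchten relative permeability is maximal, $\kappa_{rg}(0)=1$, so your closed-form estimate weights the Darcy contribution $\rho_{sat}R^{g_v}T\left(1-R^{g_a}/R^{g_v}\right)\lambda\kappa_S/(\porosity\mu_g)$ at full strength. With the Smits parameters ($\kappa_S=1.04\times 10^{-10}\,m^2$, $\porosity\approx 0.33$, $m\approx 0.94$, $T\approx 295\,K$) that term is roughly $17\,D^g$, and your formula $a\approx\frac{3}{2\porosity}\left[1+17\right]$ returns $a$ on the order of $80$ rather than $18$; the overshoot is essentially the ratio of $\kappa_{rg}(0)=1$ to the cumulative weight $\int_0^1\kappa_{rg}(S)\,dS\approx 0.3$ (for $m\approx 0.94$), together with the averaging of the $(1-S)$ and $\eta_{(a)}\tau$ factors. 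So the proposal, as specified, fails at its own validation step (iv).

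The physical justification for the low-$S$ anchor is also inverted. Cass et al.\ report $\eta\approx 1$ for dry soils, and the full enhancement formula \eqref{eqn:enhancement_cass} indeed gives $\eta_{(a)}\to 1$ as $\eps{l}\to 0$; enhancement grows with water content, so the dry limit is where the enhancement factor carries the least information about $a$, not the most. A single-point match could perhaps be salvaged at moderate saturations, but the cleaner fix is the one the paper makes: remove the $S$-dependence by integrating both diffusion coefficients over $S\in[0,1]$ at $\rh=1$, solve the resulting linear equation for $a$, and then run your comparison against the fitted values of \cite{Smits2011} and \cite{Sakai2009}.
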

The immediate consequence of Propostion \ref{prop:m_kappa_a} is that if the fitting
parameter can be predicted with the use of experimentation then it is, indeed,
unnecessary.

To test Proposition \ref{prop:m_kappa_a} we use a simple heuristic approach to match
the material coefficients, $m$ and $\kappa_S$, to the calculated fitting parameter, $a$.
This is done on the experiments by Smits et al.\,\cite{Smits2011} and
Sakai et al.\,\cite{Sakai2009}. 
In Smits et al., $\kappa_S = 1.04\times 10^{-10}$ [m$^2$] and $m=0.944$ with a statistically
tuned $a$-value of $18.2$. In Sakai et al., $\kappa_S = 4.082 \times 10^{-11}$ [m$^2$] and
$m\approx0.799$ with $a$-values of $5,8$, and $15$ considered. Sakai et al.\,indicated
the best agreement with $a=8$ while simultaneously considering a modified van Genuchten
(Fayer-Simmons) model for the soil water retention curve. We do not consider the
Fayer-Simmons model here, but as the Fayer-Simmons model is designed to give better
agreement of the capillary pressure - saturation relationship with very
low saturations we don't believe this negates our approach.

The heuristic tests of Proposition \ref{prop:m_kappa_a} is as
follows. The steady-state mass fluxes predicted by the propsed new model and the
tranditional enhanced diffusion model are 
\[ \rho_{sat} \mathcal{D}_{(m,\kappa_S)}(\rh,S) \grad \rh \quad \text{and} \quad \rho_{sat}
    \eta_{(a)}(S) \tau(S) D^g \grad \rh.\]
Assuming that the mass fluxes are equal gives the equation
\[ \mathcal{D}_{(m,\kappa_S)}(\rh,S) \grad \rh = \eta_{(a)}(S)\tau(S) D^g \grad \rh. \]
Making the further assumption that the gradients in relative humidity are the same at
steady state then the diffusion coefficients must be equal.  If this mass flux is taken at
a liquid-gas interface we can assume that $\rh=1$.  Hence, the left-hand side of this
equation is a function of $S$, $m$, and $\kappa_s$ while the right-hand side is a function
of $S$ and $a$  
\[ \mathcal{D}_{(m,\kappa_S)}(1,S) = \eta_{(a)}(S)\tau(S) D^g. \]
At this point we could proceed by simply choosing a value for $S$ and making comparisons
or we could consider the integral over all of $S$ to remove the dependence on the
saturation.  We choose the latter as it gives a cumulative effect of the diffusion
coefficient over the entire range of saturations.

Therefore, for each $m$ and $\kappa_S$ and for 
fixed $\rh=1$, there is a value of $a$ such that
\begin{flalign}
    \int_0^1 \mathcal{D}_{(m,\kappa_S)}(1,S) dS = \int_0^1 \eta_{(a)}(S)
    \tau(S) D^g dS.
    \label{eqn:heuristic_1}
\end{flalign}
The left-hand side is a function of material parameters and the right-hand side is a
function of $a$. The right-hand side of equation \eqref{eqn:heuristic_1} integrates easily
to a linear function of $a$
\[ \int_0^1 \eta_{(a)}(S) \tau(S) D^g dS = \frac{D^g}{3} \left( a + \frac{1}{1-\porosity}
    \right). \]
The left-hand side of equation \eqref{eqn:heuristic_1}, on the other hand, isn't readily
integrable due to the nonlinear nature of the van Genuchten relative permeability
function. For this reason we seek an approximate solution to equation
\eqref{eqn:heuristic_1}.

Figure \ref{fig:heuristic_1} shows the left- and right-hand
sides of equation \eqref{eqn:heuristic_1}.  The intersections indicate the triple
$(m,\kappa_S,a)$ where the equation is true, and hence indicates where the two
models have the same cumulative diffusive effect over $S \in [0,1]$. For example,
in Figure \ref{fig:heuristic_1}, if $m\approx 0.5$ and $\kappa_s \approx 10^{-10}$
then we predict a fitting parameter of $a \approx 30$. 

In Figure \ref{fig:heuristic_1}, the blue and green curves are the right-hand sides of
equation \eqref{eqn:heuristic_1} for different saturated permeabilities. The blue curve is included to show the agreement with
Smits et al. The green curve is included to show the agreement with Sakai et al.  Observe
that the experimental values are {\it close} to the values that make equation
\eqref{eqn:heuristic_1} true (the intersections indicated in the figure).  This is to say
that given $m$ and $\kappa_S$, equation \eqref{eqn:heuristic_1} could have been used as an
a-priori estimate of the value of $a$ in these two experiments.  Table
\ref{tab:heuristic_1} gives a more concise summary of the results found in Figure
\ref{fig:heuristic_1}.
\linespread{1.0}
        \begin{figure}[H]
            \begin{center}
                \includegraphics[width=0.75\columnwidth]{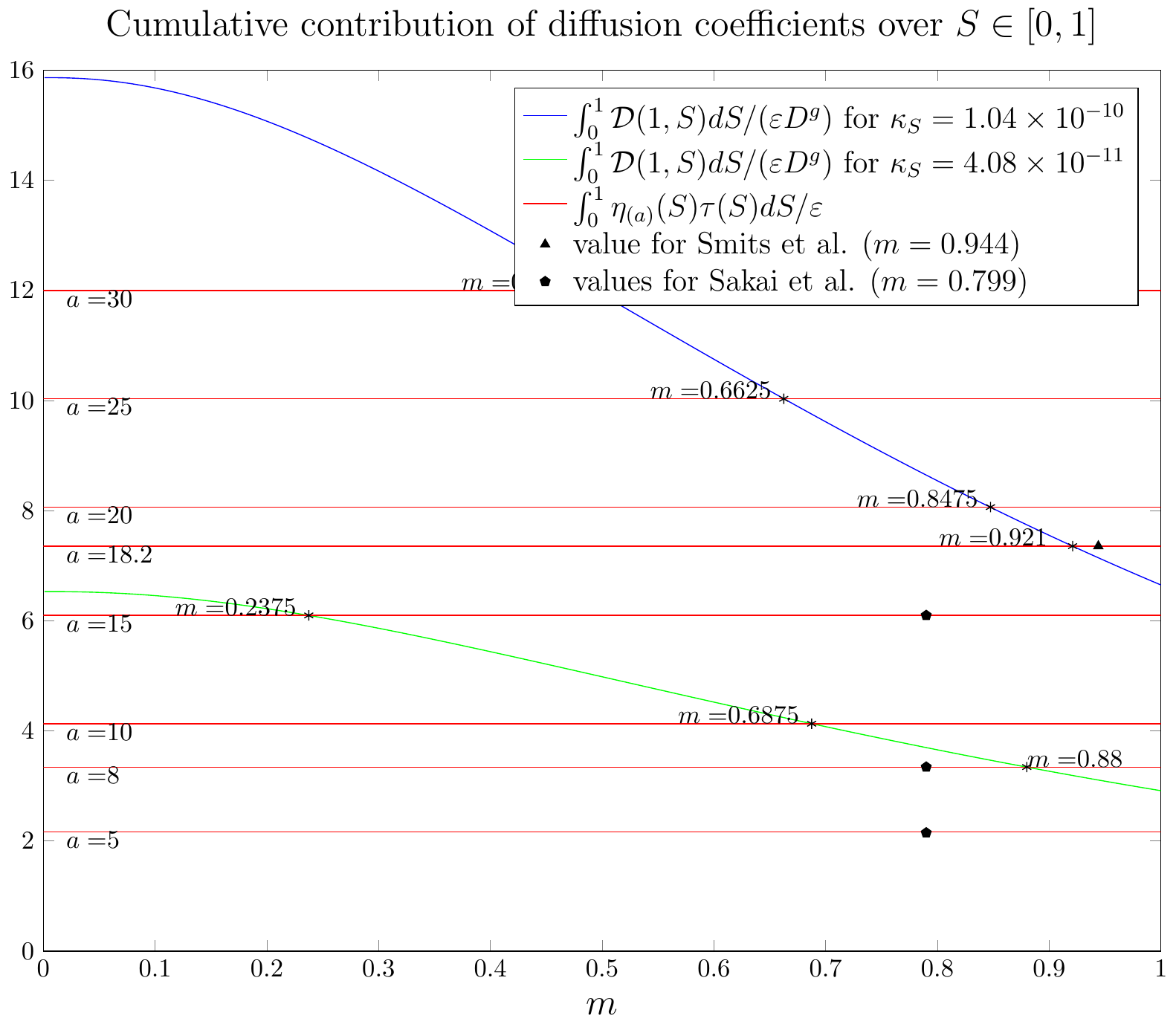}
            \end{center}
            \caption{The blue and green curves show the left-hand side of equation
                \eqref{eqn:heuristic_1} for different saturated permeabilities, and the
            red lines show level curves for right-hand side for various values of $a$. The
        blue and green curves can be used to predict the value of $a$ before
    experimentation.}
            \label{fig:heuristic_1}
        \end{figure}
\renewcommand{\baselinestretch}{\normalspace}
\linespread{1.0}
\begin{table}[ht!]
    \centering
    \begin{tabular}{|c|cc|}
        \hline
                    & $a$ measured  & $a$ predicted from \eqref{eqn:heuristic_1} \\ \hline \hline
        Smits et al. & 18.2      & $\approx 18$ \\ 
        Sakai et al. & 8         & $\approx 9$ \\ \hline
    \end{tabular}
    \caption{Measured and predicted value of the fitting parameter $a$ based on equation
        \eqref{eqn:heuristic_1}.}
    \label{tab:heuristic_1}
\end{table}
\renewcommand{\baselinestretch}{\normalspace}

Comparisons with the experiments of Smits et al.\,and Sakai et al.\,indicate that, while
not perfect, the present model gives a diffusion equation that matches experimental
findings reasonably well without the necessity of an a-posteriori fitting parameter.

\section{Coupled Saturation and Vapor Diffusion}\label{sec:numerical_coupled_sat_vap}
In this section we couple the saturation and vapor diffusion equations under reasonable
boundary conditions.  This is done while holding the temperature fixed.  The purpose of
this short study is to determine the roles of the mass transfer, $\ehat{l}{g_v}$, the $C_\rh^l \partial \rh /
\partial x$ term appearing in the saturation equation, and the time
rate of change of saturation that appears in the vapor diffusion equation.  The equations
are restated here for reference.
\begin{subequations}
    \begin{flalign}
        \pd{S}{t} - \pd{ }{x} \left( \porosity_S^{-1} K(S) \left( -p_c'(S) \pd{S}{x} + \tau
        \porosity_S \pddm{S}{x}{t} + C_\rh^l \pd{\rh}{x} - \rho^l g \right) \right) &=
        \frac{\ehat{l}{g_v}}{\porosity_S \rho^l} \\
        (1-S) \pd{\rh}{t} - \rh \pd{S}{t} - \pd{ }{x} \left( \porosity_S^{-1} \mathcal{D}(\rh,S)
        \pd{\rh}{x} \right) &= \frac{-\ehat{l}{g_v}}{\porosity_S \rho_{sat}} 
    \end{flalign}
\end{subequations}
where
\begin{flalign*}
    \ehat{l}{g_v} = M \rh \left( \rho^l - \rho_{sat} \rh \right) \left(
    \frac{-p_c + \tau \porosity \dot{S} - p_0^l}{\rho^l} - R^{g_v} T \ln(\lambda \rh)
    \right).
\end{flalign*}

This is a system of advection-diffusion-reaction equations with a pseudo-parabolic damping
term in saturation (for $\tau \ne 0$) and no advection term in the relative humidity
equation.  
To judge the relative affect of $C_\rh^l$ compare the rate of movement of the water
through the liquid phase with the rate of movement of water in the gas phase. As such, we consider the ratio of the coefficient of this
term to the saturation diffusion
\begin{flalign} 
    \frac{C_\rh^l}{-p_c'} = \left( \frac{C_\rh^l}{\rho^l g} \right) Pe = \left(
    \frac{C_\rh^l \al m}{\rho^l g (1-m)}
    \right) S^{1+1/m}(S^{-1/m}-1)^m. \label{eqn:Crhl_Pe}
\end{flalign}
In the (unlikely) case that ratio \eqref{eqn:Crhl_Pe} is approximately 1 then the
diffusion in relative humidity has equal effect as the diffusion in saturation in
controlling the transient nature of the saturation.  This does not fit with our physical
experience so we conjecture that the ratio is much smaller.  Figures
\ref{fig:Coupled_SatVap_H03_al4_n3} show time
snapshots of an imbibition experiment with simultaneous vapor diffusion and (temperature
independent) evaporation.  Both saturation and relative humidity are controlled with fixed
Dirichlet boundary conditions and initial profiles consistenti with imbibition into a low
saturation column.
\linespread{1.0}
\begin{figure}[H]
        \centering
        \subfigure[Comparison at $t=t_1$]{
            \includegraphics[width=0.45\textwidth]{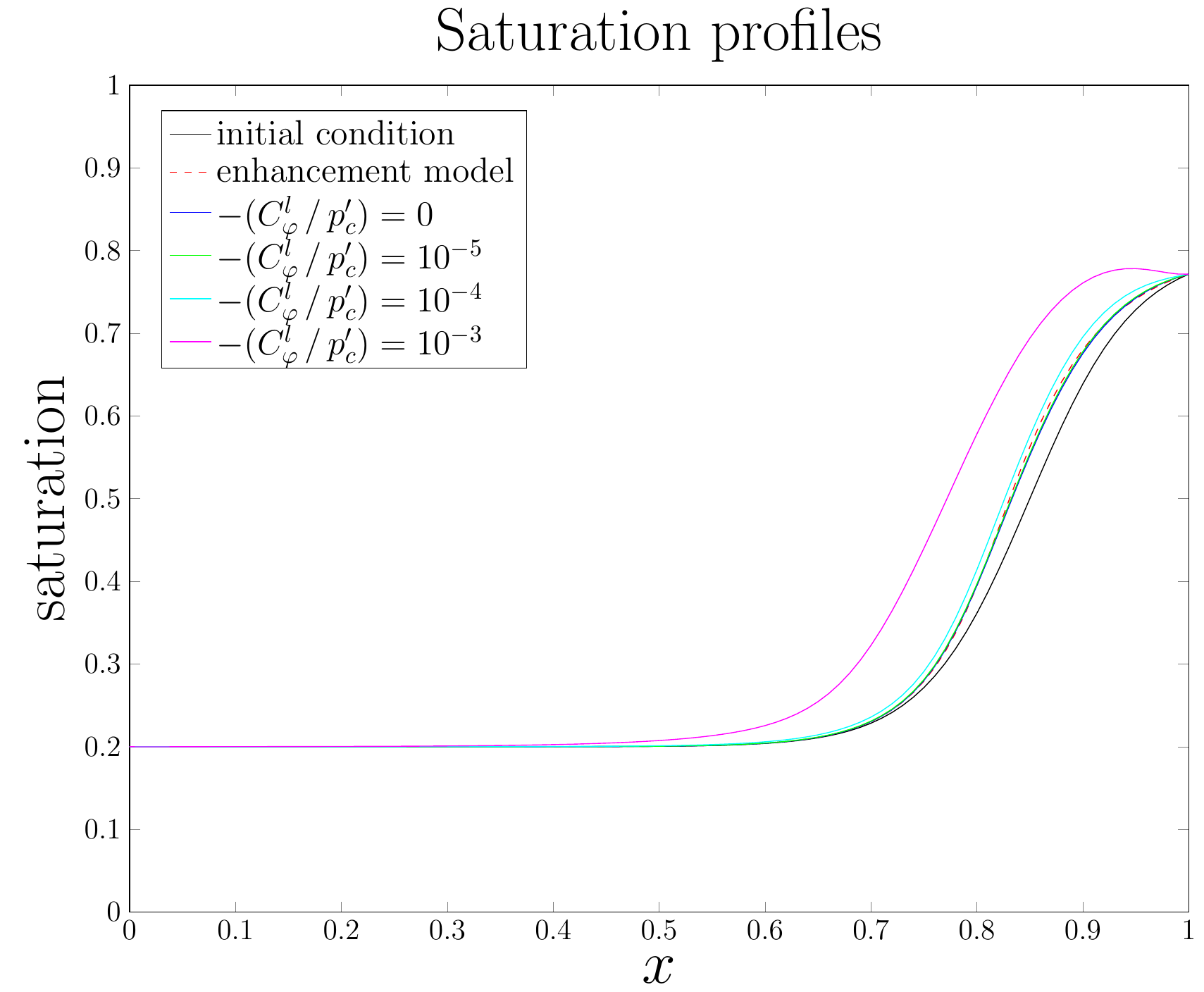}
            \label{fig:Coupled_SatVap_S_t001_H03_al4_n3}
            }
        \subfigure[Comparison at $t=t_2$]{
            \includegraphics[width=0.45\textwidth]{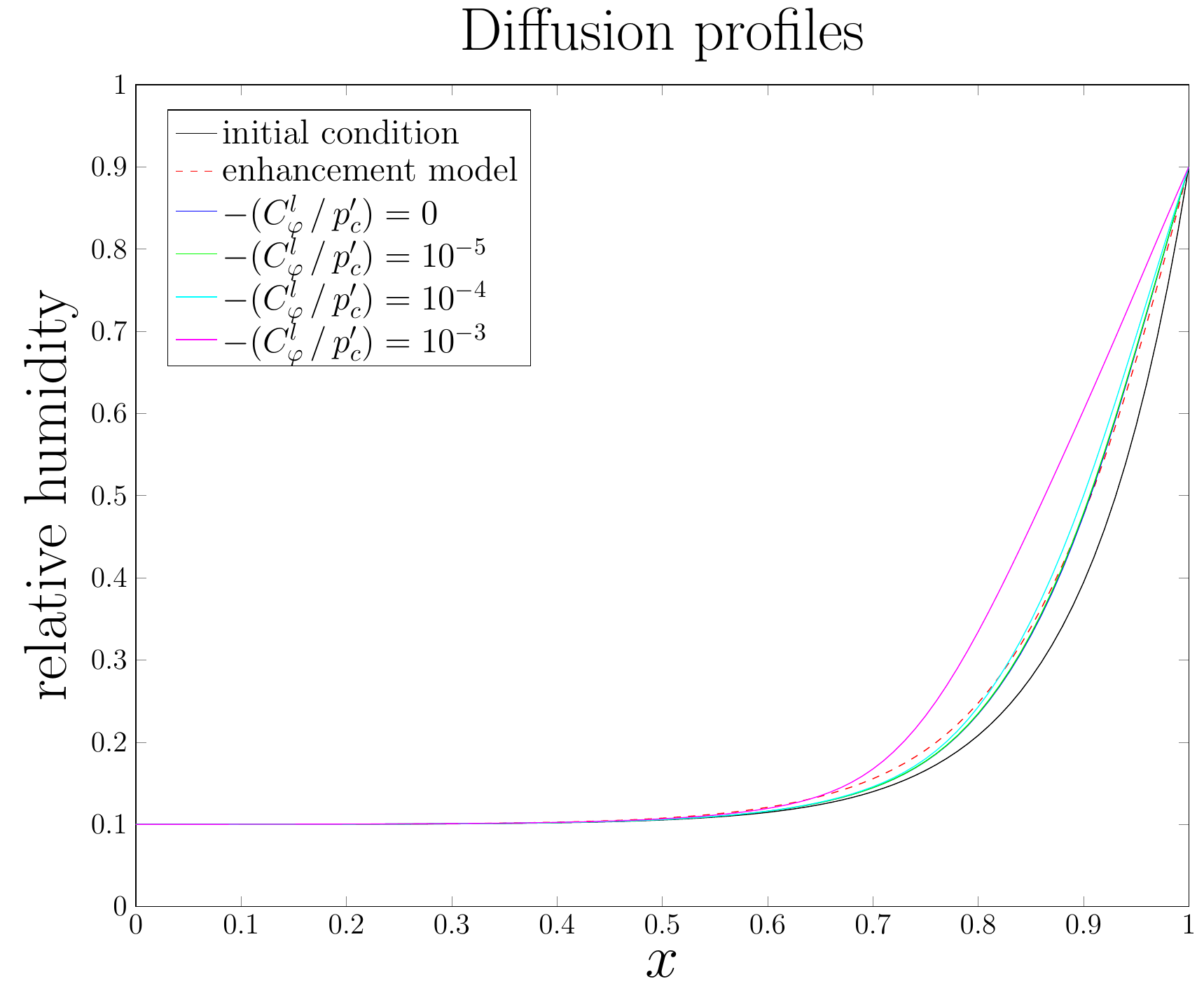}
            \label{fig:Coupled_SatVap_RH_t001_H03_al4_n3}
        }
        \subfigure[Comparison at $t=t_3$]{
            \includegraphics[width=0.45\textwidth]{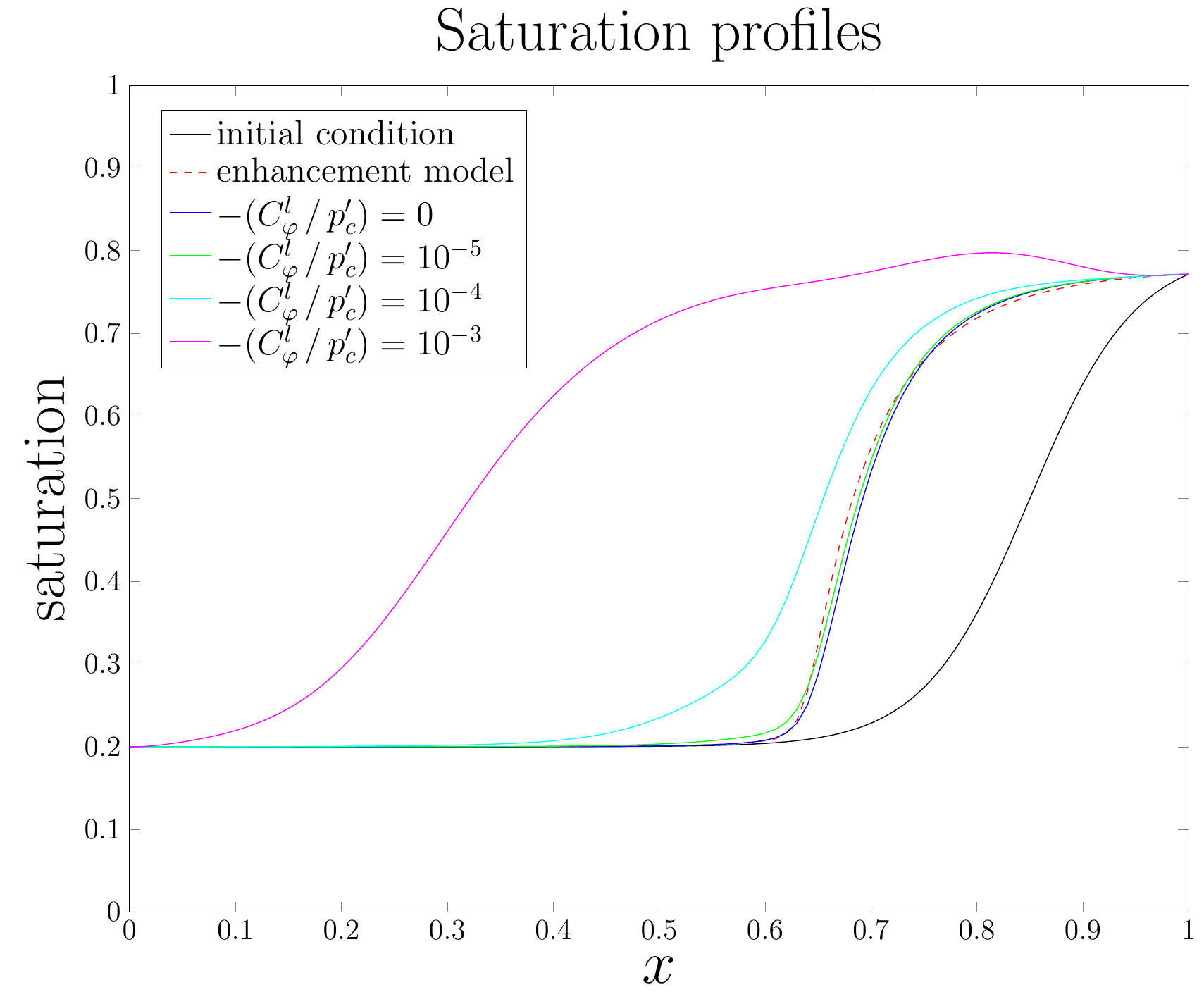}
            \label{fig:Coupled_SatVap_S_t010_H03_al4_n3}
            }
        \subfigure[Comparison at $t=t_4$]{
            \includegraphics[width=0.45\textwidth]{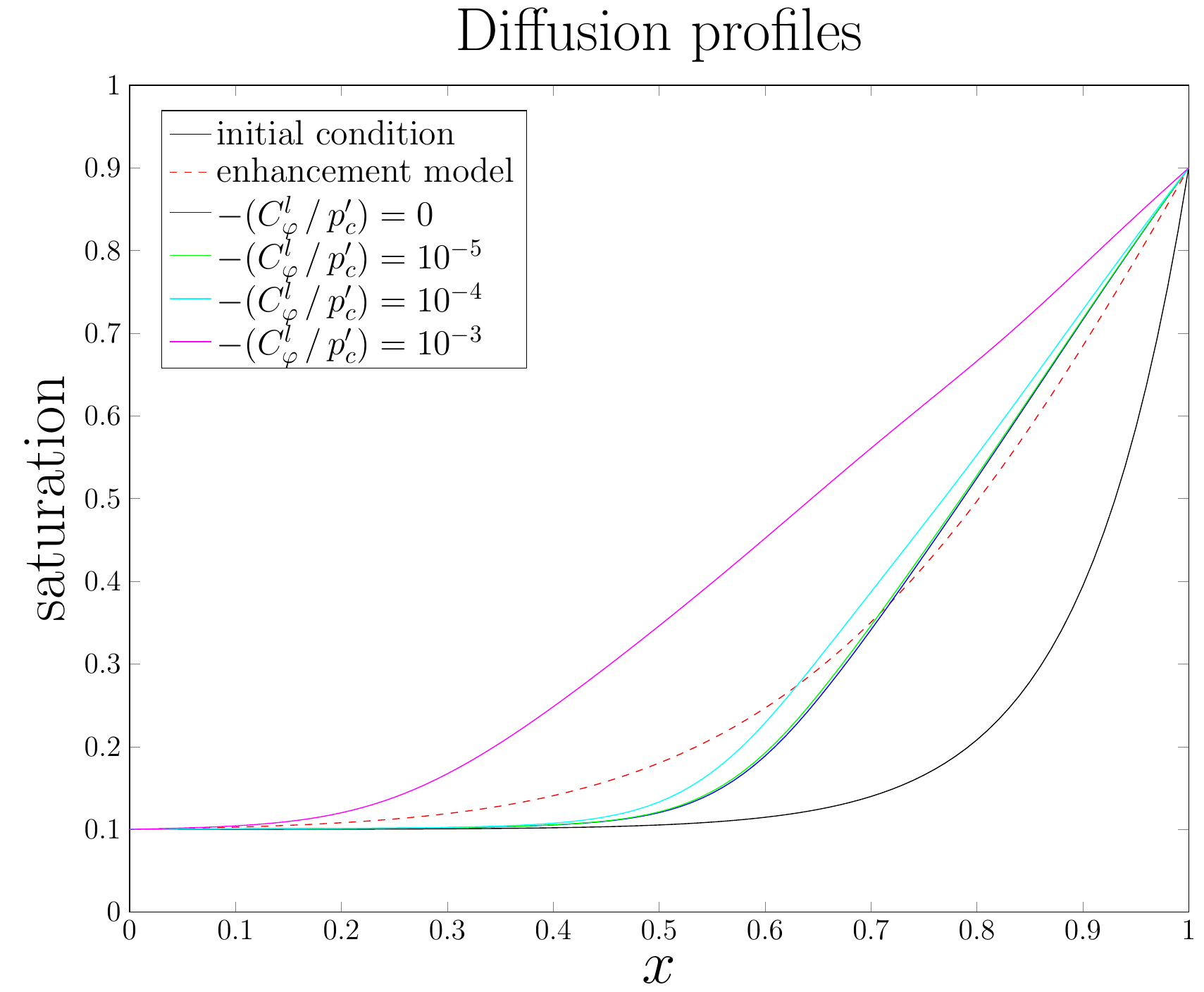}
            \label{fig:Coupled_SatVap_RH_t010_H03_al4_n3}
        }
        \caption{Comparison of coupled saturation-diffusion models for various weights of
            $C_\rh^l$ with parameters: $\kappa_s = 1.04 \times 10^{-10}, \porosity =
            0.334$, $H_0 = 10^{-3}, \alpha = 4$, and $m = 0.667$.}
        \label{fig:Coupled_SatVap_H03_al4_n3}
\end{figure}
\renewcommand{\baselinestretch}{\normalspace}

From Figures \ref{fig:Coupled_SatVap_H03_al4_n3}, if ratio \eqref{eqn:Crhl_Pe} is greater
than or equal to $10^{-3}$ then non-physical results are observed for this particular set
of initial boundary conditions. As there are infintiely many sets of initial boundary
conditions we only present this one particular case as a proof of concept. In general we
observe that this ratio must be kept below $10^{-3}$. With a ratio this small we are
simply saying that the enhancement in saturation seen due to increased levels of relative
humidity have a small affect as compared to gradients in capillary pressure in the case of
fixed temperature.

\section{Coupled Heat and Moisture Transport System}\label{sec:FullyCoupledSolutions}
In this final section we examine the fully coupled system of saturation, vapor diffusion,
and heat transfer. The culminating goal of this section is to compare the numerical
solution of the present model with the experimental results associated with
\cite{Smits2011}.  Dr. Smits was generous enough to share the experimental results for
this comparison. In Section \ref{sec:column_experiment}, the physical apparatus is
discussed as well as material parameters and initial boundary conditions.  In Section
\ref{sec:numerical_all_coupled} the full system is solved numerically and compared to the
experimental data.

\subsection{Experimental Setup, Material Parameters, and IBCs}\label{sec:column_experiment}
The experiment of interest is to track temperature, relative humidity, and saturation in a
column of packed sand. Soil moisture, relative humidity, and temperature sensors were
placed throughout a 111cm column of packed sand. A heat source was turned on and off above
the surface of the soil (to simulate natural temperature cycles). The goal of Smits et
al.\,was to determine whether the equilibrium assumption between phases was valid in
porous media evaporation studies.  For the our purposes we use this data simply as a
validation of the present modeling effort. 

A schematic of the experimental apparatus used in Smits et al.\,\cite{Smits2011} is shown
in Figure \ref{fig:KateSchematic} (recreated from Dr. Smits' notes).  Saturation and
temperature sensors \#1 - \#10, are placed every 10cm from the bottom.  Saturation
and temperature sensor \#11 is 1cm under the surface of the sand. Sensor \#12 is 10cm
above the surface. Sensor \#13 is on the surface (``in good contact''). Temperature
sensors \#14 and \#15 are placed within the insulation surrounding the apparatus (to
measure the lateral heat loss (see the top view in Figure \ref{fig:KateSchematic})).
Relative humidity sensor \#1 is 1cm under the surface, and sensor \#2 is on the surface.
The gray shaded area in Figure \ref{fig:KateSchematic} represents the location of the soil
pack. The initial water level is the surface of the soil pack.  The spatial variable to be
used numerically is $x \in [0,1]$ where $x=0$ represents the cool end of the apparatus and
where $x=1$ represents the surface of the soil 111cm above the cool end.  The material
properties used in this experiment are shown in Table
\ref{tab:SmitsMaterial}.  

\linespread{1.0}
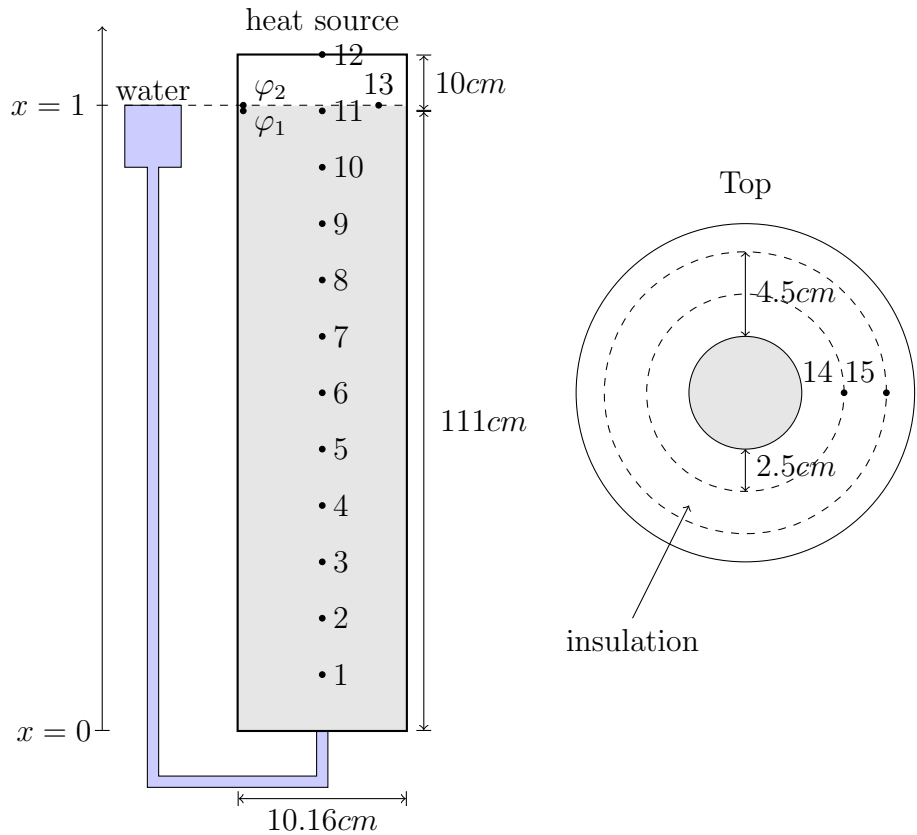
\begin{figure}[ht*]
    \begin{center}
        \begin{tikzpicture}[scale=0.75]
            \draw[color=white,fill=gray!20] (0,0) -- (0,11.1) -- (3,11.1) -- (3,0) -- cycle;
            \draw[fill=blue!20] (1.4,0) -- (1.6,0) -- (1.6,-1) -- (-1.6,-1) -- (-1.6,10)
            -- (-2,10) -- (-2,11.1) -- (-1,11.1) -- (-1,10) -- (-1.4,10) -- (-1.4,-0.8) --
            (1.4,-0.8) -- cycle;
            \draw (-1.5,11) node[anchor=south]{water};
            \draw[thick] (0,0) -- (0,12) -- (3,12) -- (3,0) -- cycle;
            \draw[dashed] (-2,11.1) -- (3,11.1);
            \draw[|<->|] (3.3,11) -- (3.3,12); \draw (3.3,11.5) node[anchor=west]{$10cm$};
            \draw[|<->|] (3.3,0) -- (3.3,11); \draw (3.4,5.5) node[anchor=west]{$111cm$};
            \draw[|<->|] (0,-1.2) -- (3,-1.2); 
            \draw (1.5,-1.2) node[anchor=north]{$10.16cm$};
            \foreach \i in {1,...,12}
            {
                \draw[fill=black] (1.5,\i) circle(0.05cm) node[anchor=west]{\i};
            }
            \draw[fill=black] (0.1,11.0) circle(0.05cm) node[anchor=south west]{$\rh_2$};
            \draw[fill=black] (0.1,11.1) circle(0.05cm) node[anchor=north west]{$\rh_1$};
            \draw[fill=black] (2.5,11.1) circle(0.05cm) node[anchor=south]{13};
            \draw (9,6) circle(3cm);
            \draw[fill=gray!20] (9,6) circle(1.0cm);
            \draw[fill=black] (10.75,6) circle(0.05cm) node[anchor=south east]{$14$};
            \draw[dashed] (9,6) circle(1.75cm);
            \draw[fill=black] (11.5,6) circle(0.05cm) node[anchor=south east]{$15$};
            \draw[dashed] (9,6) circle(2.5cm);
            \draw[<->] (9,5) -- (9,4.25); \draw (9,4.7) node[anchor=west]{$2.5cm$};
            \draw[<->] (9,7) -- (9,8.5); \draw (9,7.8) node[anchor=west]{$4.5cm$};
            \draw[->] (7,2) node[anchor=north]{insulation} -- (8,4);
            \draw (9,9.25) node[anchor=south]{Top};
            \draw (1.5,12.25) node[anchor=south]{heat source};
            \draw[<-|] (-2.4,12.5) -- (-2.4,0) node[anchor=east]{$x=0$};
            \draw (-2.3,11.1) -- (-2.5,11.1) node[anchor=east]{$x=1$};
        \end{tikzpicture}
    \end{center}
    \caption{Schematic of the Smits et al.\,experimental apparatus. Saturation and
        temperature sensors numbered 1 - 11, temperature sensors 12 - 15, and relative
        humidity sensors 1 and 2 \cite{Smits2011}. The geometric $x$ coordinate is shown
        on the left. (Image recreated with permission from \cite{Smits2011})}
    \label{fig:KateSchematic}
\end{figure}
\renewcommand{\baselinestretch}{\normalspace}

\linespread{1.0}
\begin{table}[ht*]
    \centering
    \begin{tabular}{| c | c | c |}
        \hline
        Parameter & Value & Units \\ \hline \hline
        Sand Number & 30/40 & [$-$] \\ 
        Dry Bulk Density & 1.77 & [g  cm$^{-3}$] \\
        Porosity & 0.318 & [$-$] \\
        Residual Water Content & 0.028 & [$-$] \\
        Saturated Hydraulic Conductivity & 0.104 & [cm \, s$^{-1}$] \\
        van Genuchten $\alpha$ & 5.7 & [m$^{-1}$] \\
        van Genuchten $n$ ($m=1-1/n$) & 17.8 (0.9438) & [$-$] \\
        \hline
    \end{tabular}
    \caption{Material parameters for experimental setup \cite{Smits2011}.}
    \label{tab:SmitsMaterial}
\end{table}
\renewcommand{\baselinestretch}{\normalspace}
The experiment was run for 32 days, at which point there was a power outage and the
experiment was stopped. In the midst of the experiment there were two sensors that failed:
saturation sensor \#3 (after the 1847$^{th}$ measurement ($t>12.8$days)), and relative
humidity sensor \#1 (after the $2155^{th}$ measurement ($t>14.9$days)) (see Figure
\ref{fig:BrokenSensors}). The saturation sensors are accurate to within $\pm 2\%$ soil
moisture content after soil calibration (performed by Smits et al.). The relative humidity
sensor accuracy ranges between $\pm 2 \%$ (for mid-range temperatures and humidities) and
$\pm 12 \%$ (for extreme temperatures and humidities). The temperature sensors are accurate to
within $0.5^\circ$C for the temperature ranges of interest (\verb|www.decagon.com|). 
\linespread{1.0}
\begin{figure}[ht]
    \begin{center}
        \includegraphics[width=0.65\columnwidth]{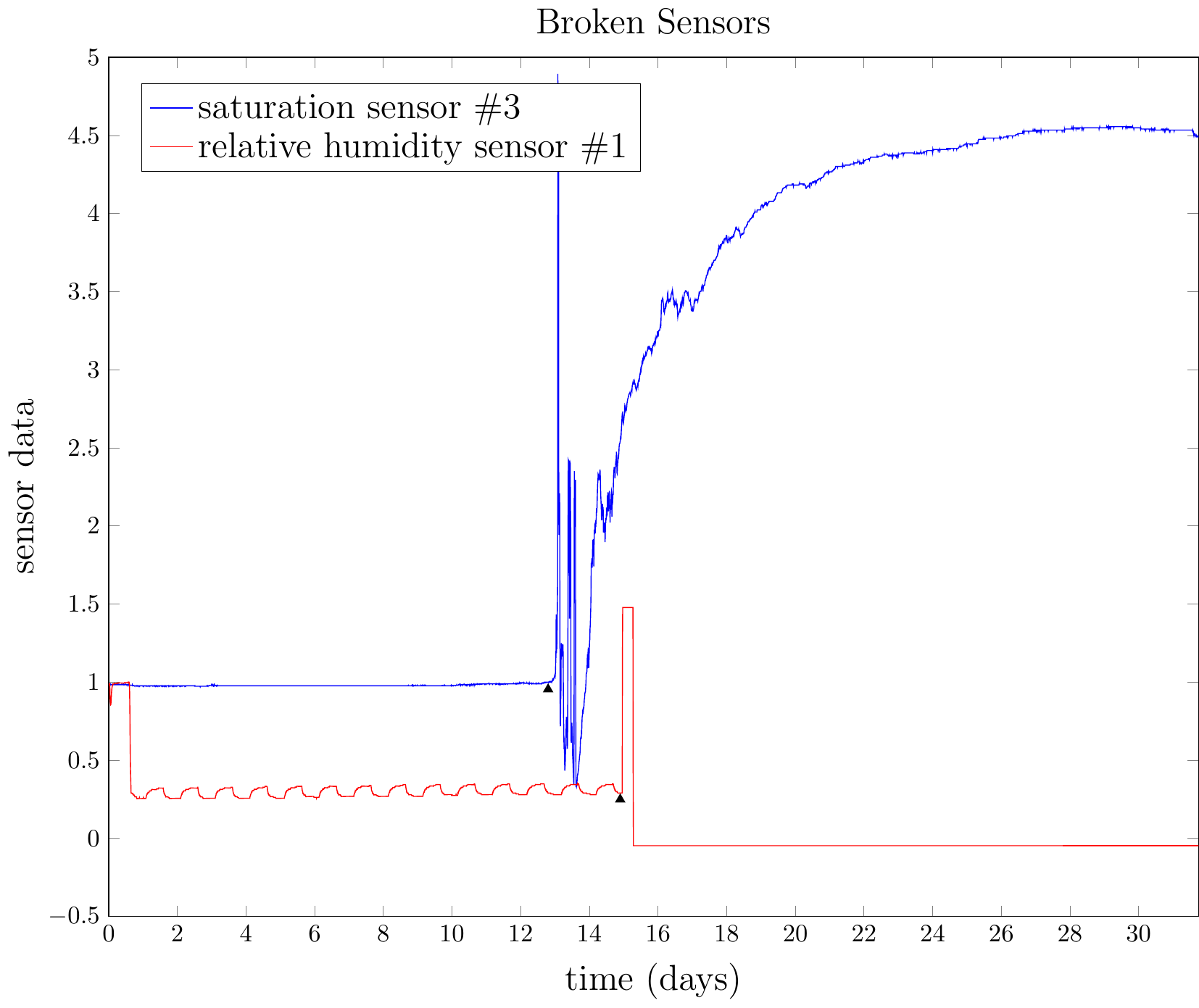}
    \end{center}
    \caption{Broken sensor data. Saturation sensor \#3 shown in blue and relative humidity
    sensor \#1 shown in red. It is evident from these plots that these sensors are not
working properly as they give non-physical readings. (Image recreated with permission from \cite{Smits2011})}
    \label{fig:BrokenSensors}
\end{figure}
\renewcommand{\baselinestretch}{\normalspace}

The initial and boundary conditions for the forthcoming numerical experiments can be taken
from any point within the data set. The logical initial point for the numerical
experiment is the beginning of the physical experiment. This particular point is of
interest to the experimentalist as some of the interesting transient behavior occurs
during this period.  That being said, there is a
significant amount of sensor noise in the initial phases of the experiment (see Figure
\ref{fig:InitialRelHumData}), and if a simple {\it proof of concept} is all that is needed
for the purposes of this work, then a later time is preferred so as to avoid complications
related to this noise.
\linespread{1.0}
\begin{figure}[H]
    \centering
    \subfigure[Initial sensor data for relative humidity]{
        \includegraphics[width=0.47\columnwidth]{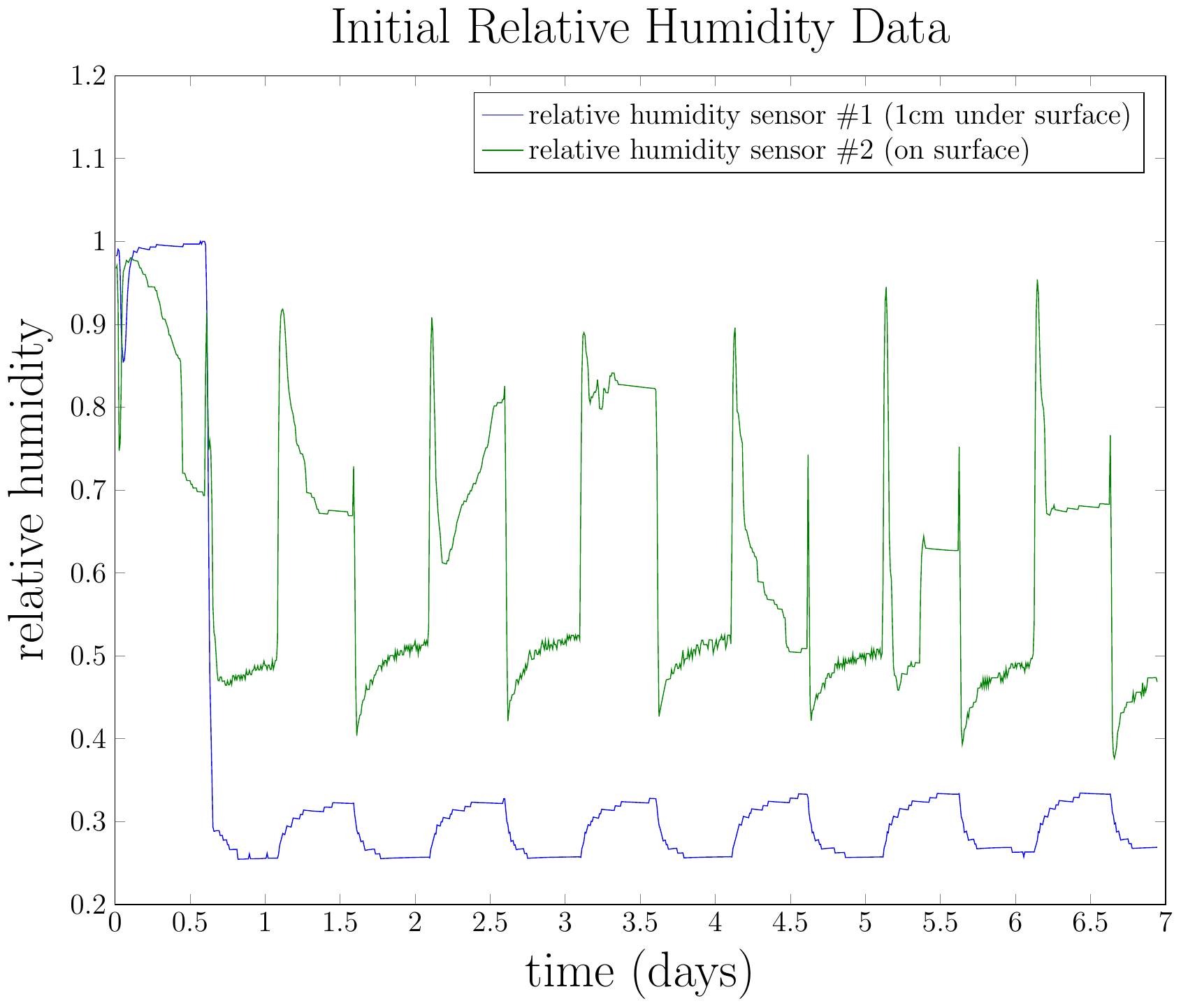}
        \label{fig:InitialRelHumData}
    }
    \subfigure[Initial sensor data for temperature]{
        \includegraphics[width=0.47\columnwidth]{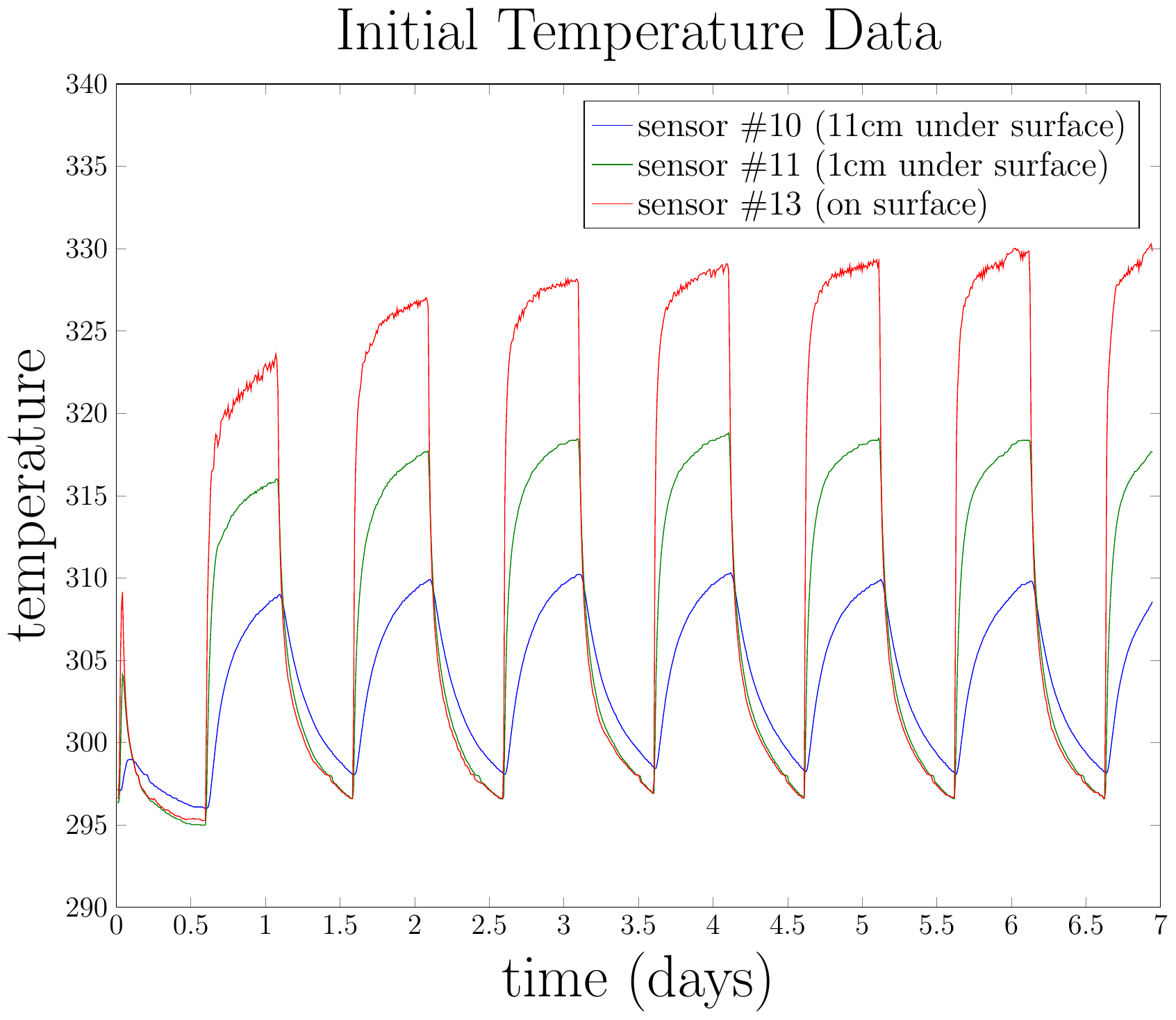}
        \label{fig:InitialTempData}
    }
    \caption{Relative humidity and temperature data showing measurement variations in the
    first few days of the experiment. (Image recreated with permission from \cite{Smits2011})}\label{fig:initial_rh_temp_data}
\end{figure}
\renewcommand{\baselinestretch}{\normalspace}
The section of data where we will initially focus is between time measurements 1800 (12.5
days) and 2150 (14.9 days). This section of data is chosen since, qualitatively, it shows
the least amount of sensor noise in both relative humidity and temperature.  Saturation
sensor \#3 is faulty in this time region, but the adjacent sensors indicate that there is
little to no deviation from full saturation for these times. The relative humidity and
temperature data for this time region are shown in Figure \ref{fig:SensorDataWindow}.
\linespread{1.0}
\begin{figure}[H]
    \centering
    \subfigure[Sensor data for relative humidity]{
        \includegraphics[width=0.47\columnwidth]{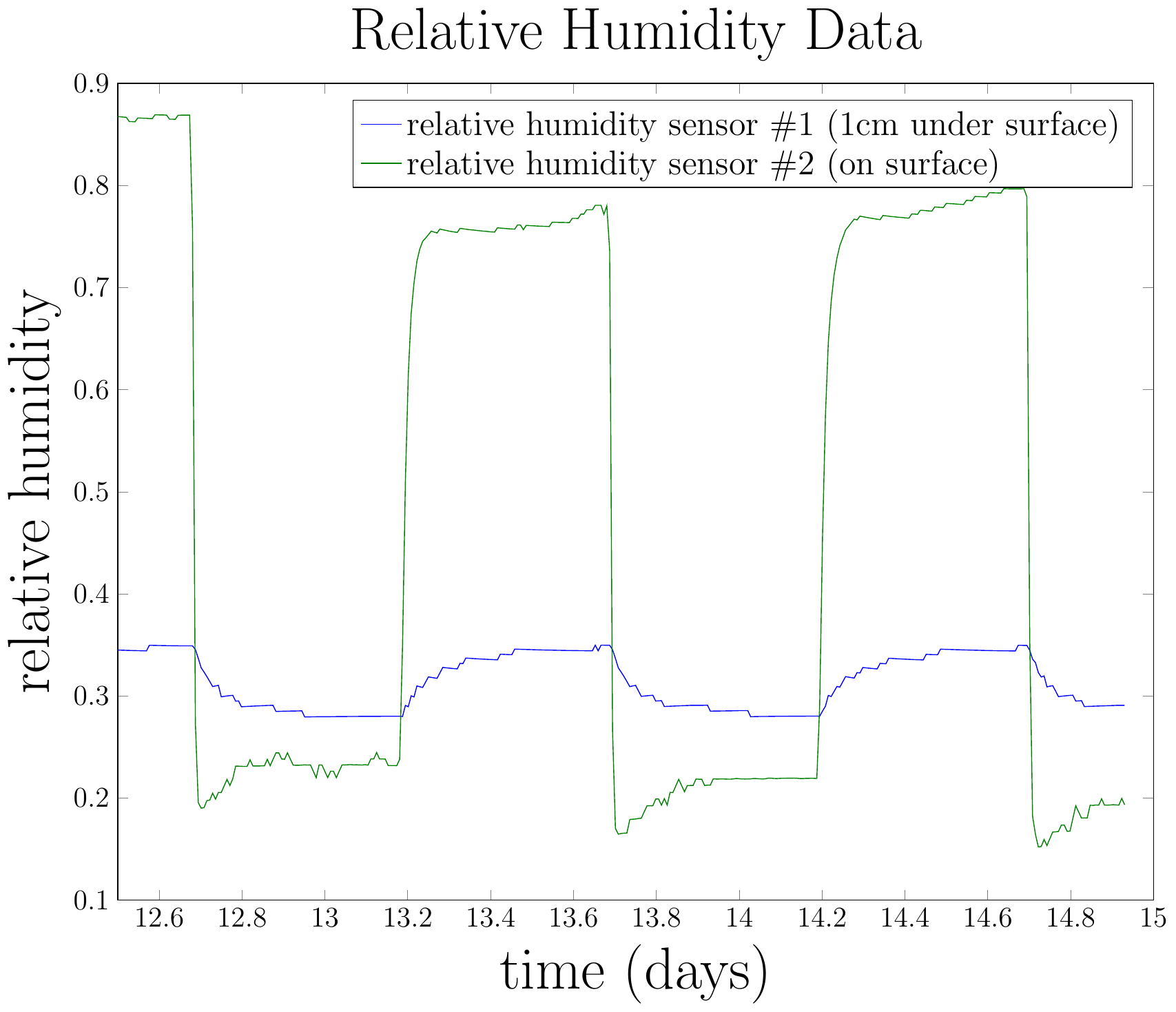}
        \label{fig:SensorDataWindowRH}
    }
    \subfigure[Sensor data for temperature]{
        \includegraphics[width=0.47\columnwidth]{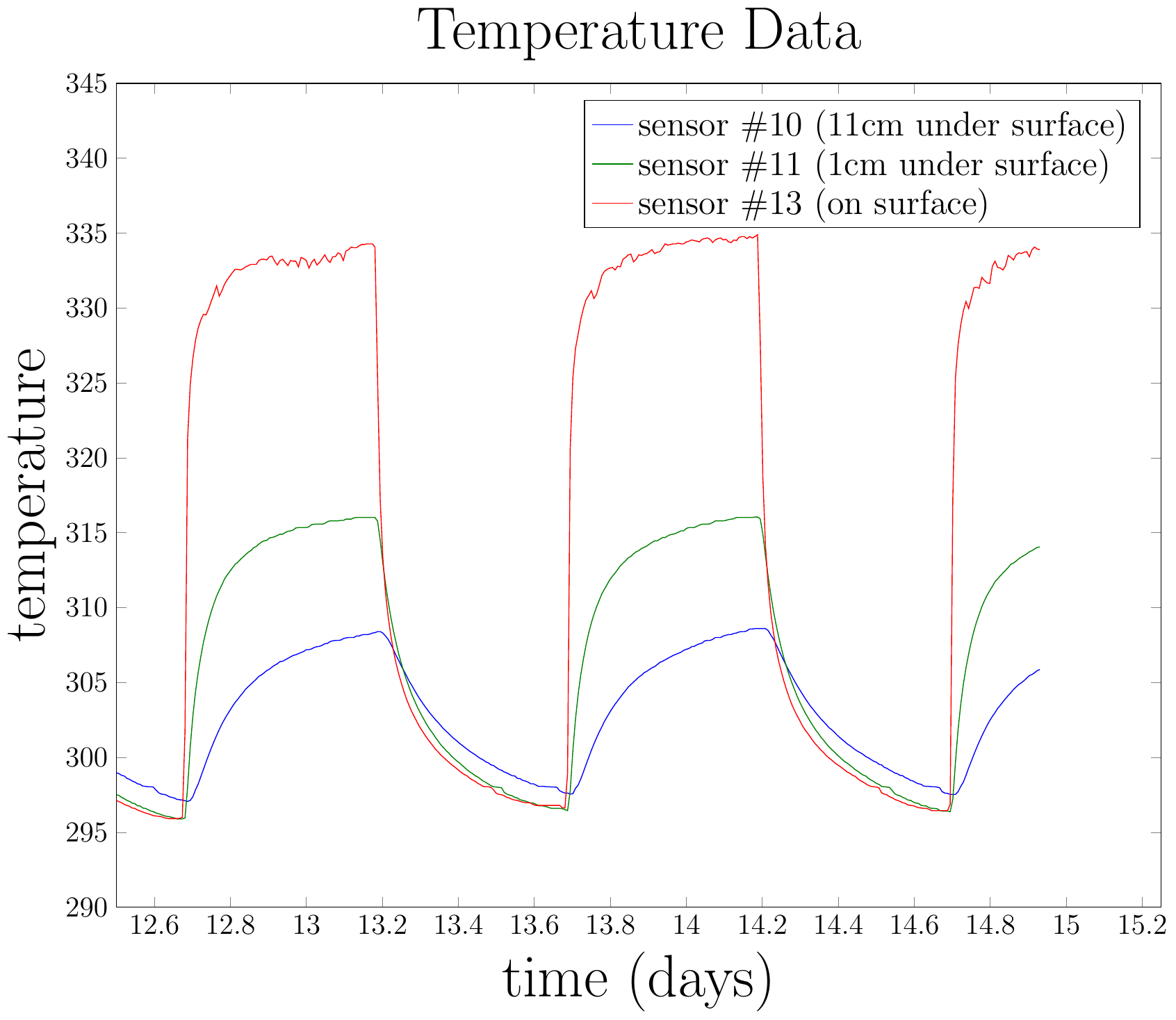}
        \label{fig:SensorDataWindowTemp}
    }
    \caption{Relative humidity and temperature data at a window beginning roughly 12.5
    days into the experiment.  This window is chosen since the sensor noise is
qualitatively minimal in this region. (Image recreated with permission from \cite{Smits2011})}\label{fig:SensorDataWindow}
\end{figure}
\renewcommand{\baselinestretch}{\normalspace}

The peaks and valleys of the temperature and relative humidity data (associated with the
on-off cycle of the heat lamp) have small variations that are likely due to sensor noise.
To avoid modeling this noise directly we can approximate the data with either a simple
sinusoidal function or a square wave approximation (found by applying the \texttt{sign} function
to the sinusoidal approximation).  The data suggests a square wave approximation, but
the jumps in data may cause numerical difficulties as the derivatives at the points of
discontinuity are technically delta functionals. A graphic of these approximations is
shown in Figure \ref{fig:BC_Sinusoidal_Square}.
\linespread{1.0}
\begin{figure}[H]
    \centering
    \subfigure[relative humidity]{
        \includegraphics[width=0.47\columnwidth]{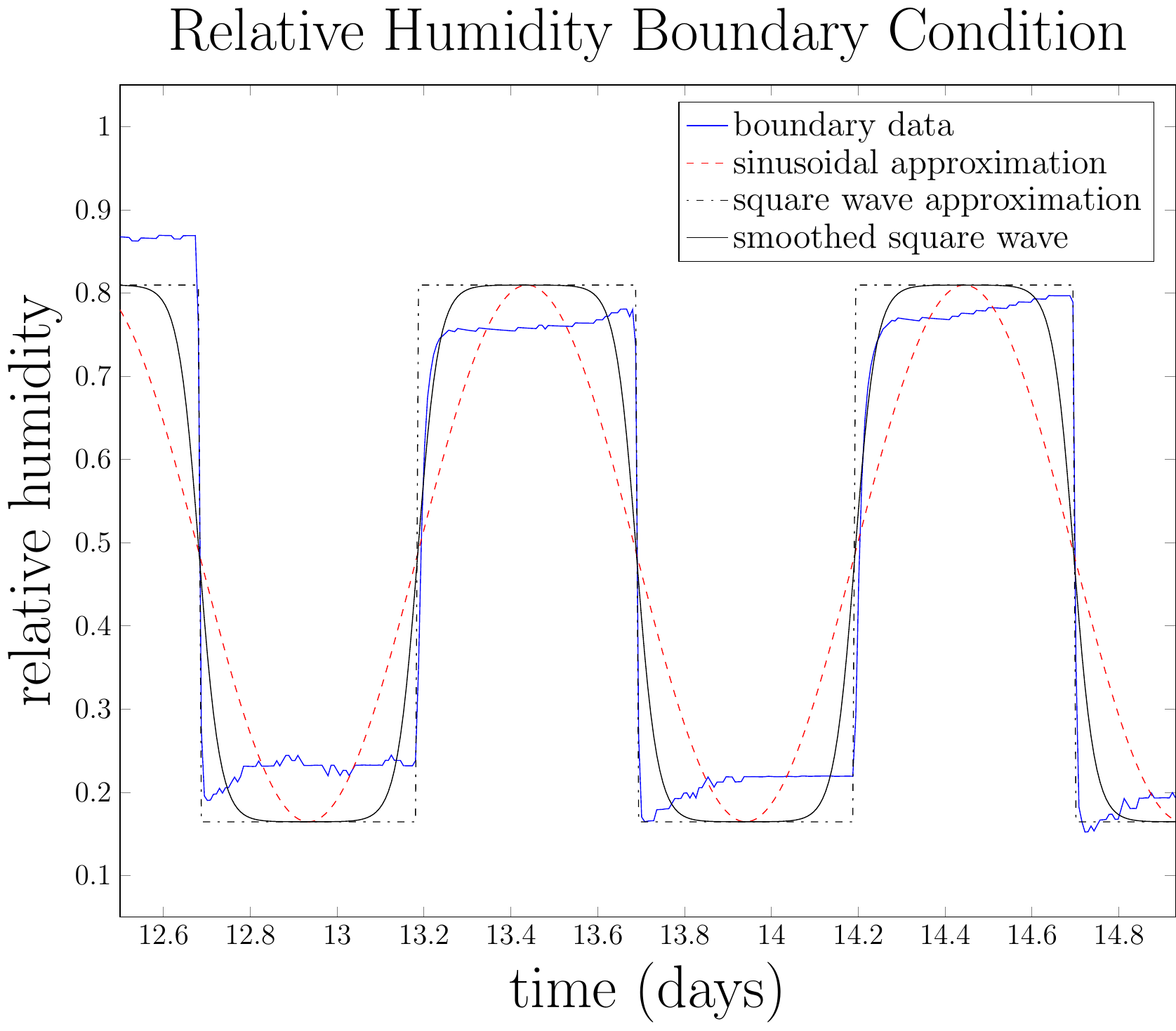}
        \label{fig:RH_BC_Sinusoidal}
    }
    \subfigure[temperature]{
        \includegraphics[width=0.47\columnwidth]{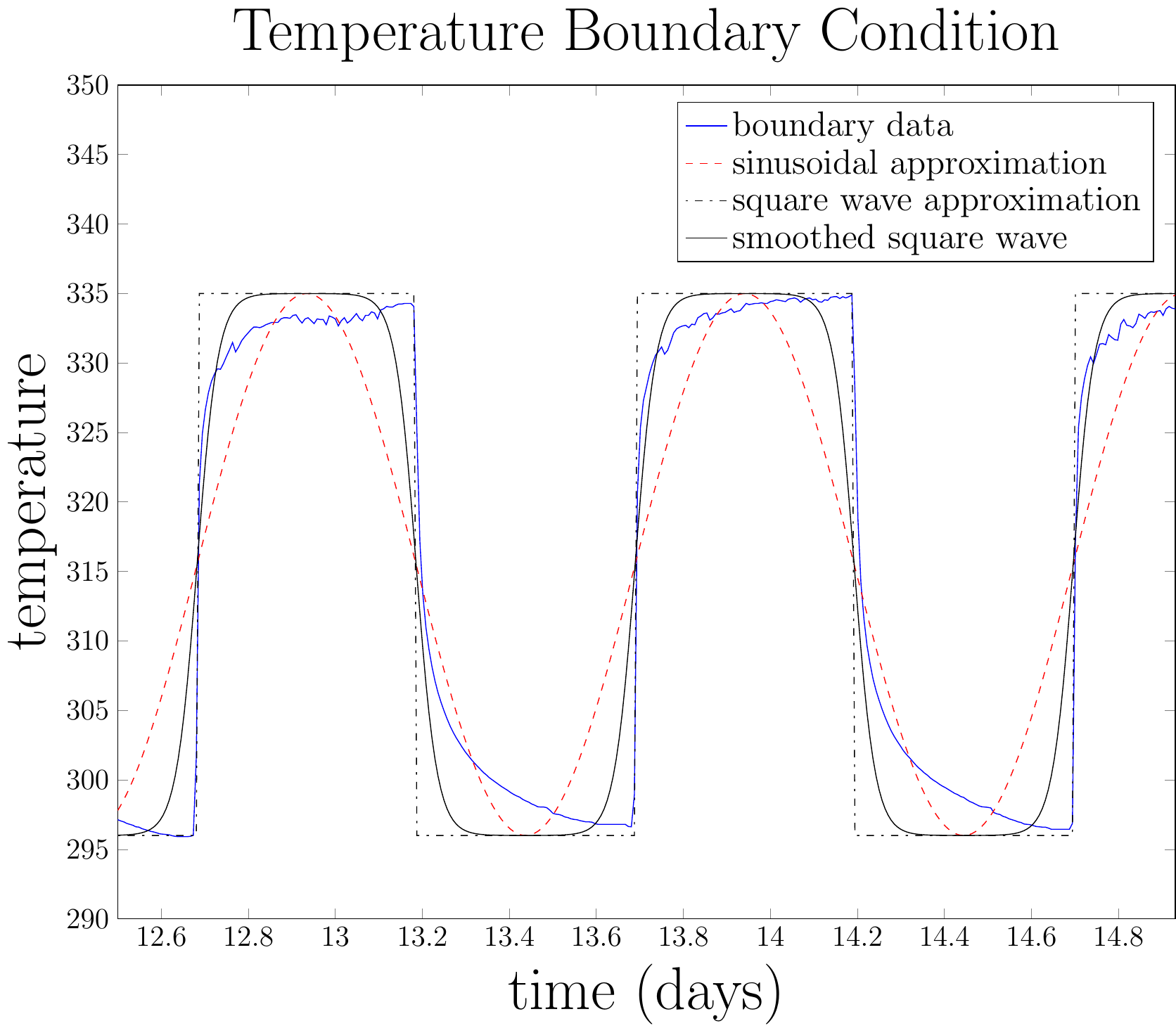}
        \label{fig:T_BC_Sinusoidal}
    }
    \caption{Approximations to relative humidity and temperature boundary
    conditions at the surface of the soil.}\label{fig:BC_Sinusoidal_Square}
\end{figure}
\renewcommand{\baselinestretch}{\normalspace}

Any starting point can be taken within this window of time. We choose the 2000$^{th}$ time step as the initial condition (somewhat
arbitrarily) and fit functions to the coarse spatial data for saturation, relative
humidity, and temperature.  For the relative humidity and saturation profiles we choose
hyperbolic tangent functions since they exhibit the primary features observed in the data
(see Figures \ref{fig:InitialCond_Approx2000_RH} and  \ref{fig:InitialCond_Approx2000_Sat}
respectively). For the temperature initial condition we choose an exponential function
(see Figure \ref{fig:InitialCond_Approx2000_Temp}).
\linespread{1.0}
\begin{figure}[ht!]
    \centering
    \subfigure[relative humidity]{
        \includegraphics[width=0.47\columnwidth]{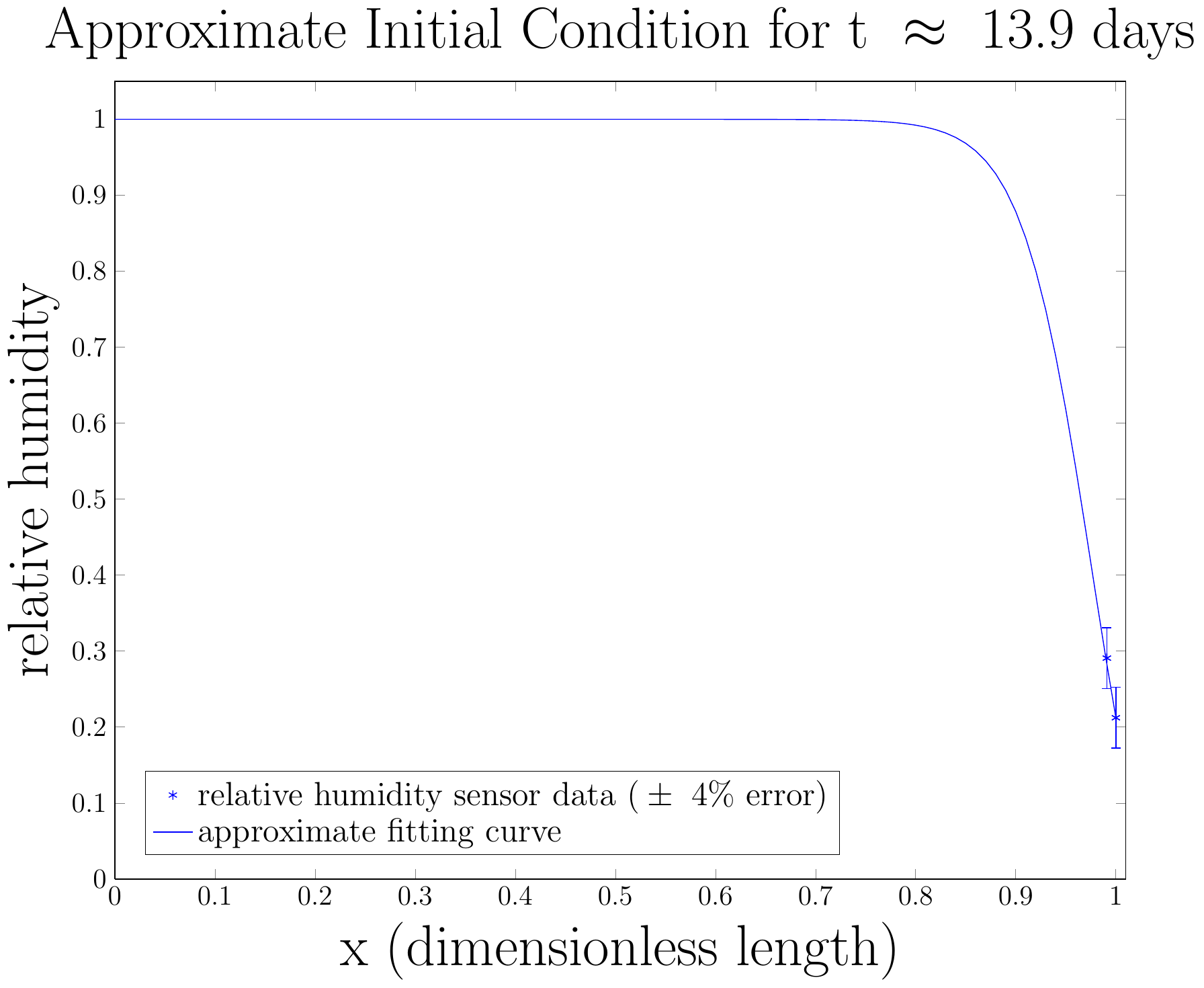}
        \label{fig:InitialCond_Approx2000_RH}
    }
    \subfigure[saturation]{
        \includegraphics[width=0.47\columnwidth]{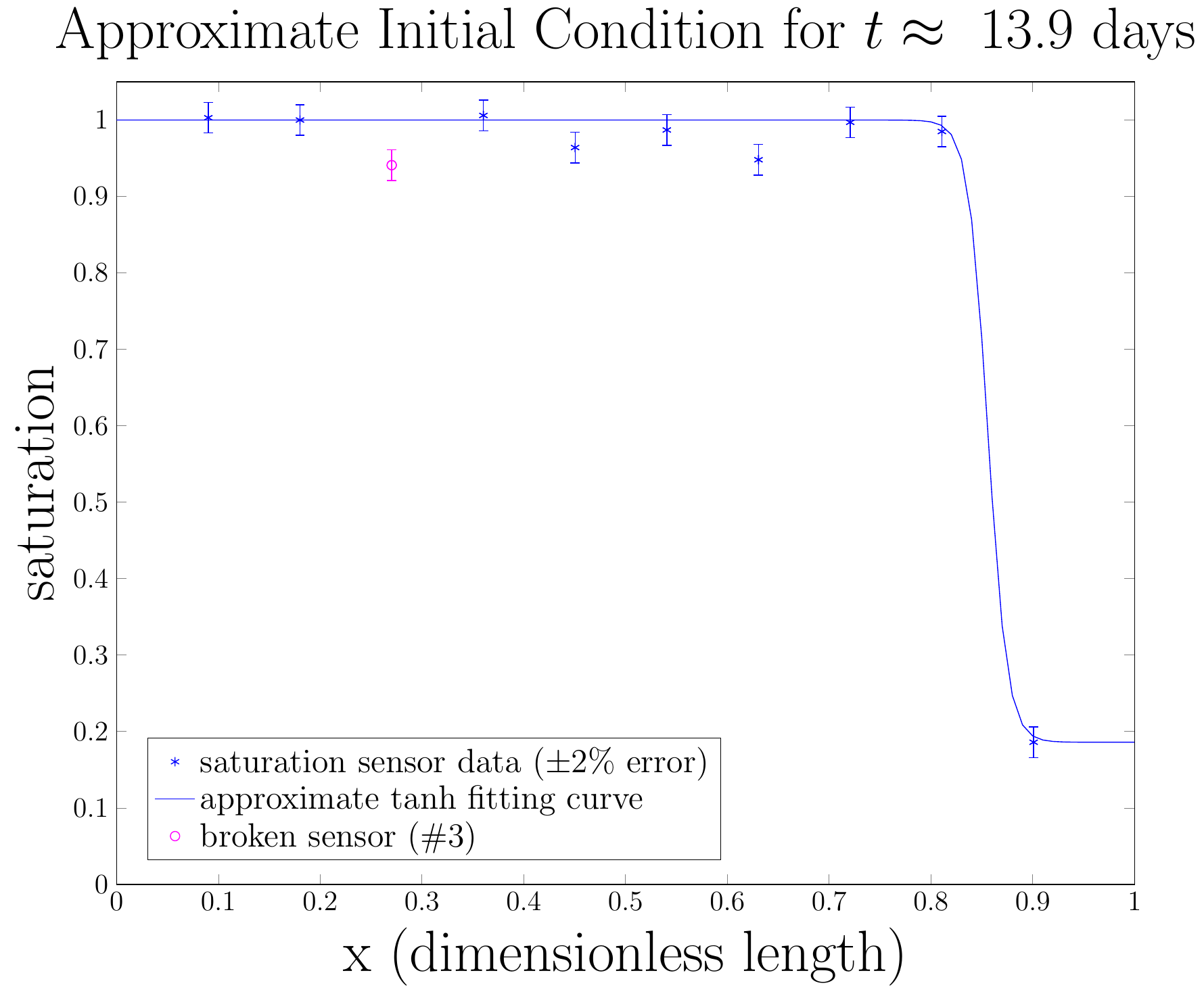}
        \label{fig:InitialCond_Approx2000_Sat}
    }
    \subfigure[temperature]{
        \includegraphics[width=0.47\columnwidth]{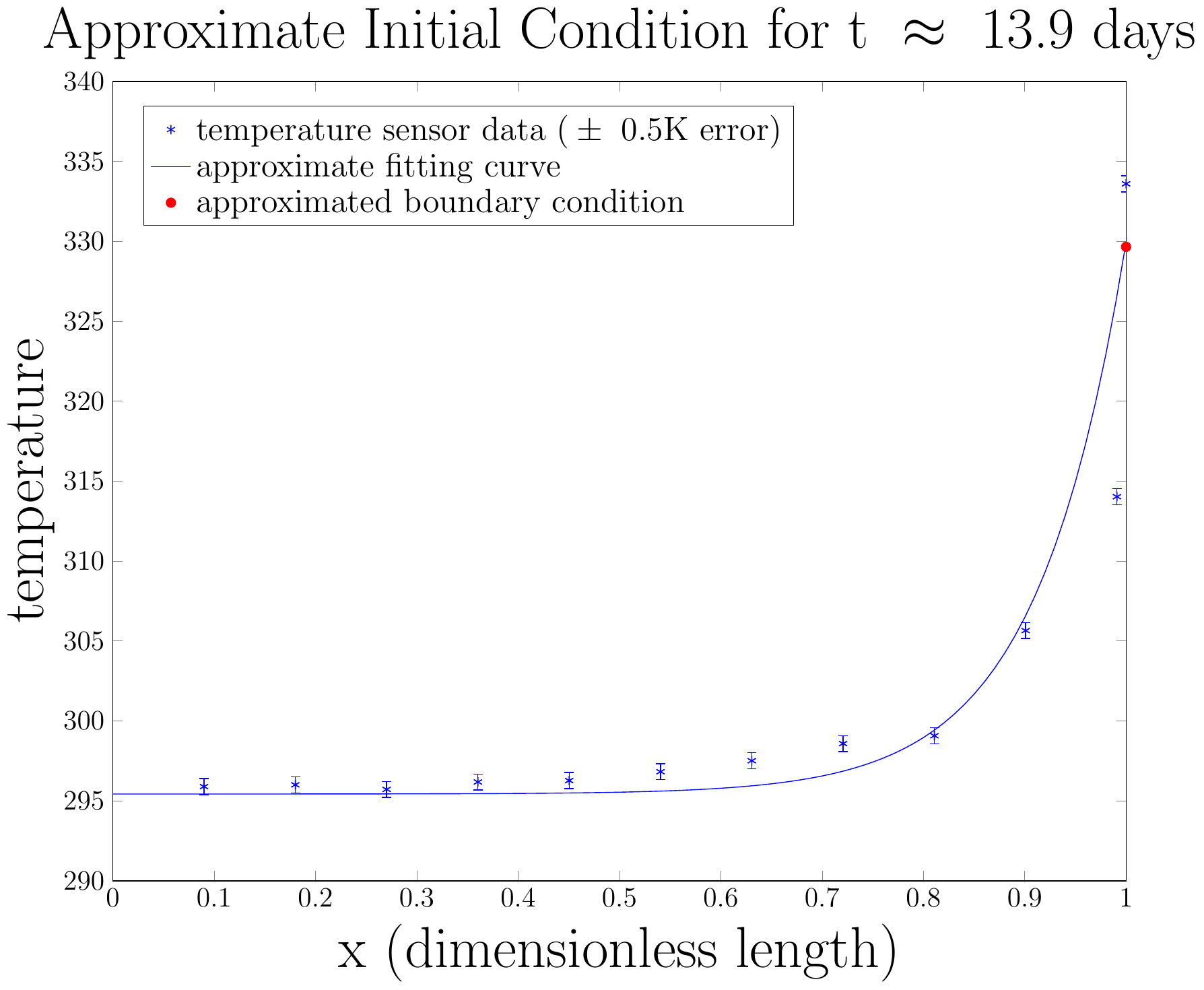}
        \label{fig:InitialCond_Approx2000_Temp}
    }
    \caption{Approximate initial conditions at the 2000$^{th}$ data point ($t \approx
    13.9$ days). Error bars indicate approximate sensor accuracy.}\label{fig:InitialCond_Approx2000}
\end{figure}
\renewcommand{\baselinestretch}{\normalspace}

To summarize, thus far we have boundary conditions for relative humidity and temperature
at $x=1$ and we have initial conditions for all of the variables.  The boundary conditions
at $x=0$ are much simpler.  For saturation and relative humidity we can take $S(x=0,t)=1$
and $\rh(x=0,t)=1$ based on the fact that the saturation is fixed mechanically at 100\% at
the bottom end of the apparatus.  For the temperature we can either take $T(x=0,t) = T_0$
or $\partial T / \partial x (x=0,t) = 0$.  The Dirichlet condition simply states that the
temperature is fixed, and the Neumann condition states that the bottom of the apparatus is
insulated so that no heat is lost. Throughout the course of this experiment, the
thermal effects are not appreciably translated to the bottom of the apparatus so either
boundary condition would be sufficient.    Finally, the only boundary condition remaining is the
saturation condition at the surface of the soil.  A simple condition is to state that the
flux of liquid is zero across this boundary.  Mathematically, this translates to the
Neumann condition $\partial S / \partial x (x=1,t) = 0$.

A fine point needs to be stated regarding the relative humidity equation.  The saturation
initial condition states that much of the experimental apparatus is completely saturated
with liquid water (below sensor \#9 approximately).  The issue is that there is no gas
phase present when $S=1$. Mathematically this translates to a Stefan-type problem where
the lower boundary for the gas phase is actually moving spatially as the liquid water
evaporates.  For the sake of illustration let us simply assume for a moment that the
saturation equation is a hyperbolic linear advection equation where the front simply {\it
advects} in time.  Figure \ref{fig:MovingInterface} illustrates how the gas-phase domain
might evolve in time in this simplified example. Of course, the saturation equation is not
such a simple equation but the essence of the moving domain is the same regardless.
\linespread{1.0}
\begin{figure}[ht*]
    \begin{center}
        \includegraphics[width=0.7\columnwidth]{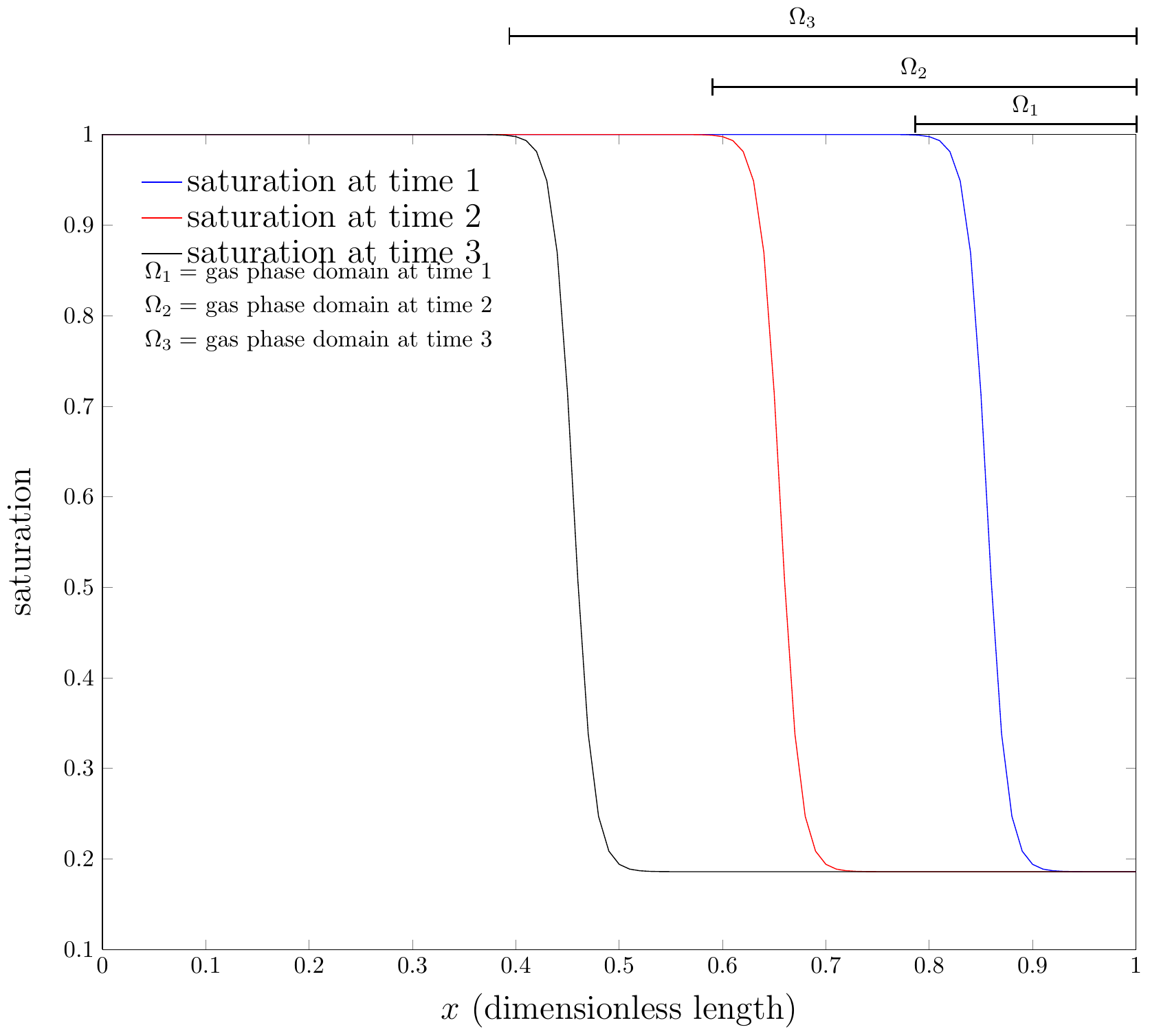}
    \end{center}
    \caption{Illustration of how the gas-phase domain might evolve in time.}
    \label{fig:MovingInterface}
\end{figure}
\renewcommand{\baselinestretch}{\normalspace}

One way to model this Stefan problem is to assume that when $S = 1$ then $\rh = 1$. Under
this assumption we can define the gas-phase equation as a piecewise-defined differential
equation:
\begin{flalign}
    \pd{\rh}{t} = \left\{ \begin{array}{cl} 0, & S = 1 \\ \displaystyle \frac{-\rh \pd{
        }{t}\left( (1-S)\rho_{sat} \right) + \diver \left( \rho_{sat} \mathcal{D} \grad
        \rh \right) + \ehat{l}{g_v}}{(1-S)\rho_{sat}}, & S \in (0,1) \end{array} \right..
    \label{eqn:piecewise_rh}
\end{flalign}
This is a somewhat artificial setup, but it allows us to assume that a relative humidity
exists for the entire domain even when a gas phase doesn't (strictly speaking) exist.
Simply stated, equation \eqref{eqn:piecewise_rh} indicates that if the saturation is 100\%
then there is no change in relative humidity (even though technically there is no gas
phase).  Then if we take the initial condition as $\rh(x,0) = 1$ when $S=1$ and the
left-hand boundary condition as $\rh(0,t) = 1$ we (artificially) create a relationship for
the relative humidity that holds over the entire spatial domain.  There are several
concerns with this approach, not the least of which is that the existence and uniqueness
theory discussed previously does not cover this sort of case.  Moreover, numerically
requiring that the transition point is exactly 1 is not reasonable and some artificial
cutoff, $S_*<1$, should be used to loosen this condition in numerical simulations.

A simpler way to model the relative humidity in this experiment is to prescribe an initial
saturation that allows for {\it some} gas phase to exist throughout the experiment for all
time.  This is achieved by setting the initial saturation less than 1.  From a numerical
standpoint this makes the equations easier to solve as the boundaries are all stationary
in time.  On the other hand, this choice of initial condition does not match the
experimental setup and is therefore less desirable for the purposes of model validation.
The forthcoming numerical experiments are performed using a combination of these two
approaches; if $S < S_* < 1$, then the relative humidity equation is {\it turned on} and
diffusion is allowed to occur.

In the experiment, the relative humidity was only measured at the surface, at 1cm below
the surface, and in the surrounding (ambient) conditions. Hence, we can only truly compare
with this data in regions very close to the surface. The saturation and temperature
sensors, on the other hand, are placed coarsely throughout and the model can be validated
over the entire spatial domain.

\subsection{Numerical Simulations}\label{sec:numerical_all_coupled}
In this subsection we perform numerical simulations of the coupled system based on the
initial and boundary conditions in Section \ref{sec:column_experiment}. These numerical
simulations are a first step toward validation of the newly proposed mathematical model.
The main thrust of this work was not to create robust numerical solvers for coupled
systems of PDEs. As such, we rely on the \texttt{NDSolve} package built into
\texttt{Mathematica} as the primary numerical solver.  The plotting and post processing
are performed on a mix of \texttt{Mathematica} and \texttt{MATLAB}. The purpose of these
numerical experiments is to provide a validation for the proposed equations. As such, we
chose to directly model the Stefan problem with the relative humidity equation as shown in
\eqref{eqn:piecewise_rh}.

The \texttt{NDSolve} package is based on a method of lines approach to numerical time
integration with a finite difference spatial discretization.  In time we use a
fourth-order Gear's scheme and in space we use a central differencing scheme. One
disadvantage to using this type of spatial scheme in this problem is that the saturation
and heat equations have advective components.  It is well known
\cite{LeVeque1992,LeVeque2007} that upwind schemes are typically better at capturing
the physics of advective flow problems and the central differencing schemes will usually
introduce artificial diffusion into the solution. In an evaporation-type experiment such
as this one, the advective flow is expected to be less dominant than in drainage or
imbibition experiments. Hence, the artificial diffusion introduced with a central
difference scheme is expected to have little impact on the solution quality. To date, the
use of non-central schemes and adaptive mesh refinement are not supported by
\texttt{Mathematica}'s differential equation solving package.

The parameters that can be varied in this experiment are the coefficient of the dynamic
saturation term, $\tau$, the weight of the evaporation coefficient, $M$, and the weights
of the $\grad \rh$ and $\grad T$ terms in the saturation equation, $C_\rh^l$ and $C_T^l$.
There are no parameters that can be varied in the relative humidity equation due to the
newly proposed model for diffusion.  This is in contrast to the standard enhancement model
where the empirical fitting parameter for diffusion is used to match the experimental
data. The fact that we have three different fitting parameters in the saturation equation
simply allows us to fine tune the {\it shape} of the saturation solution beyond what is
predicted by the traditional Richards' equation. Recall from Section
\ref{sec:numerical_saturation} that larger values of the dynamic saturation term affects
the sharpness of the moving front in the saturation equation.

In Section \ref{sec:constitutive_equations} it was demonstrated that
for certain choices of $M$ in the evaporation rule, the present evaporation model
approximated that of Bixler \cite{Bixler1985}
\[ \ehat{l}{g_v} = b \left( \eps{l} - \eps{l}_r \right) R^{g_v} T \left( \rho_{sat} -
    \rho^{g_v} \right). \]
In \cite{Smits2011}, the fitting parameter
for Bixler's model was $b = 2.1 \times 10^{-5}$. The corresponding fitted parameter in
the present model is $M \approx 10^{-15}$ where $\ehat{l}{g_v}$ is given as
\[ \ehat{l}{g_v} = M \rh \left( \rho^l - \rho^{g_v} \right)\left( \mu^l - \mu^{g_v} \right).
    \]
This is only an order of magnitude approximation and fine tuning can be made to better fit
the data.

Several experimental estimates of $\tau$ were presented in Table 2 of
\cite{Hassanizadeh2002}.  From this data, $\tau$ could possibly span several orders of
magnitude: $10^4 < \tau < 10^8$. Unfortunately, the soil types were only listed as
``sand'' (or ``dune sand'') and the relevant permeabilities and van Genuchten parameters
were absent from this summary. These values at least give a {\it ballpark} estimate for
experimentation with $\tau$. The coefficients of $\grad \rh$ and $\grad T$ in the 
saturation equation, on the other hand, are new to this study and appropriate values have
yet to be determined. As such, we study different orders of magnitude for these values to
estimate the effect of the terms to the overall numerical solution. The material parameters are all chosen to match those in Table
\ref{tab:SmitsMaterial}.

The initial simulations will be run with sinusoidal boundary conditions as shown in
Figures \ref{fig:BC_Sinusoidal_Square}. This is to give a qualitative estimation of the
behavior of the solutions without the trouble of the jump discontinuities associated with
the square wave approximation or data interpolation. A smoothed square wave approximation
(also shown in Figures \ref{fig:BC_Sinusoidal_Square}) is then used to give a {\it closer}
match to the experimental data. The smoothing is achieved by taking piecewise defined
hyperbolic tangent functions to approximate the steps. To measure the error between the
data and the numerical solution we use a sum of the squares of the residual
values measured at each sensor location:
\begin{flalign}
    e_u(t) = \frac{\sum_{\xi \in data} \left( u_{data}(\xi,t) - u_{num}(\xi,t)
\right)^2}{\sum_{\xi \in data} \left( u_{data}(\xi,t) \right)^2},
    \label{eqn:l2_data_error}
\end{flalign}
where $u$ represents any of the three dependent variables of interest ($S$, $\rh$, or $T$)
and the subscript indicates where the value is taken from. Stating that ``$\xi \in data$''
simply means that $\xi$ spans the sensor locations relevant for the given $u$ (i.e. $\xi
\in \{110/111, \, 1 \}$ for $u=\rh$). Obviously $e_u(t)$ is a function
of time so to get a single measure that describes the error we take the maximum of
$e_u(t)$ over the length of an experimental day
\begin{flalign}
    E_u = \max_{t} \left( e_u(t) \right).
    \label{eqn:l2_data_error_max}
\end{flalign}
A single experimental day was chosen due to numerical difficulties and due to loss of
relative humidity sensor information.
Equation \eqref{eqn:l2_data_error_max} gives a single numerical value measuring the fit of
the numerical solution to the data.  In the relative humidity this is a very simplistic
exercise as there is only 1 data point to compare against; the sensor located 1cm
below the surface of the soil.  For the saturation and temperature, on the other hand,
this gives a better quantitative measure. 

Table \ref{tab:rel_err_classical} gives errors measured with equation
\eqref{eqn:l2_data_error_max} against the {\it classical} system of equations (Richards',
Enhanced Diffusion, and de Vries).  Table 
\ref{tab:rel_err_sq_wave} shows the error as measured with equation
\eqref{eqn:l2_data_error_max} for various values of $\tau$, for different functional forms
of the thermal conductivity (see Section \ref{sec:total_energy_simplifications}), and for
various values of $C_\rh^l$ and $C_T^l$.  In order to make comparisons with $\tau,
C_\rh^l,$ and $C_T^l$ we use the ratio of this coefficient as compared to the diffusive
term in the saturation equation. This was done in Sections \ref{sec:numerical_saturation}
and \ref{sec:numerical_coupled_sat_vap}, and the ratios of interest are repeated here for
clarity:
\begin{subequations}
    \begin{flalign}
        \left( \frac{\tau \porosity_S}{-t_c p_c'(S)} \right) &= \left( \frac{\tau \porosity_S
        \kappa_S}{\mu_l} \right) Pe(S) = H_0 Pe(S) \\
        -\left( \frac{C_\rh^l}{p_c'(S)} \right) &= \left( \frac{C_\rh^l}{\rho^l g} \right)
        Pe(S) = R_0 Pe(S) \\
        -\left( \frac{C_T^l}{p_c'(S)} \right) &= \left( \frac{C_T^l}{\rho^l g} \right) Pe(S) =
        \theta_0 Pe(S).
    \end{flalign}
    \label{eqn:dimensionless_ratios}
\end{subequations}
Since each ratio is relative to the P\'eclet number (which is a function of $S$) we focus
only on the ratios on the right-hand sides of equations \eqref{eqn:dimensionless_ratios}.
Due to the fact that this is a large parameter space, only some of the notable
relative errors are presented. Mesh refinement was used in the comparisons in several
instances to minimize numerical artifacts. Spatially, the meshes ranged between 100 and
1024 points. Only a uniform mesh was considered.
\linespread{1.0}
\begin{table}
    \centering
    \begin{tabular}{|c|c||c|c|c|}
        \hline
        \multicolumn{2}{|c||}{} & \multicolumn{3}{c|}{Relative Errors} \\ \hline
        Conductivity & Boundary Cond. & Saturation & Rel. Humidity & Temperature \\
        \hline \hline
        Weighted Sum   & Smoothed Square  & 0.00356 & 1.54048 & 0.000515 \\
        C\^ot\'e-Konrad & Smoothed Square  & 0.00357 & 1.27818 & 0.000631 \\ \hline
    \end{tabular}
        \caption{Relative errors measured using equation \eqref{eqn:l2_data_error_max} for
        the classical mathematical model consisting of Richards' equation for saturation,
        the enhanced diffusion model for vapor diffusion, and the de Vries model for heat
        transport. These are compared for the two thermal conductivity functions of
        interest (weighted sum \eqref{eqn:thermal_weighted_sum} and C\^ot\'e-Konrad
        \eqref{eqn:CoteKonrad}).}
    \label{tab:rel_err_classical}
\end{table}
\renewcommand{\baselinestretch}{\normalspace}

\linespread{1.0}
    \begin{table}[ht!]
        \centering
        \begin{tabular}{|c|c|c|c||c|c|c|}
            \hline
            \multicolumn{4}{|c||}{{\bf Parameters \& Functions}} & \multicolumn{3}{c|}{{\bf Relative
            Errors}} \\ \hline
            {\bf Conductivity} & {\bf $R_0$} & {\bf $\theta_0$} & {\bf $H_0$} & {\bf
            Saturation} & {\bf Rel. Humidity} & {\bf Temperature} \\ \hline \hline
            Weighted Sum & \multicolumn{3}{c||}{Classical Model} & 0.00356 & 1.54048 & 0.000515  \\ \hline \hline
            Weighted Sum    & 0             & 0             & $10^{-2.5}$   & 0.011966 & 1.206005  & 0.000463 \\
                            &               &               & $10^{-3.0}$   & 0.009020 & 1.201793  & 0.000463 \\
                            &               &               & $10^{-3.5}$   & 0.006508 & 1.198274  & 0.000463 \\
                            &               &               & $10^{-4.0}$   & 0.004904 & 1.199174  & 0.000463 \\
                            &               &               & $10^{-4.5}$   & 0.004076 & 1.195180  & 0.000463 \\
                            &               &               & $10^{-5.0}$   & 0.003712 & 1.196750  & 0.000463 \\ 
                            &               &               & $0$           & 0.003536 & 1.199584  & 0.000463 \\ \hline \hline
            Weighted Sum    & $10^{-5}$     & $10^{-5}$     & $10^{-5}$     & 0.003712 & 1.196751   & 0.000463 \\
                            & $10^{-4}$     & $10^{-4}$     & $10^{-5}$     & 0.003712 & 1.196756   & 0.000463 \\
                            & $10^{-3}$     & $10^{-3}$     & $10^{-5}$     & 0.003710 & 1.196807   & 0.000463 \\
                            & $10^{-2}$     & $10^{-2}$     & $10^{-5}$     & 0.003692 & 1.197041   & 0.000463 \\ 
                            & $10^{-1}$     & $10^{-1}$     & $10^{-5}$     & 0.003509 & 1.202644   & 0.000462 \\ 
                $\star$     & $1$           & $1$           & $10^{-5}$     & $\star$  & $\star$   & $\star$ \\ \hline \hline
            C\^ot\'e-Konrad & \multicolumn{3}{c||}{Classical Model} & 0.00357 & 1.27818 & 0.000631  \\ \hline \hline
            C\^ot\'e-Konrad & 0             & 0             & $10^{-2.5}$   & 0.011964 & 0.946441   & 0.000516 \\
                            &               &               & $10^{-3.0}$   & 0.009022 & 0.951573   & 0.000515 \\
                            &               &               & $10^{-3.5}$   & 0.006513 & 0.950024   & 0.000515 \\
                            &               &               & $10^{-4.0}$   & 0.004910 & 0.948023   & 0.000515 \\
                            &               &               & $10^{-4.5}$   & 0.004084 & 0.944724   & 0.000516 \\
                            &               &               & $10^{-5.0}$   & 0.003719 & 0.946599   & 0.000516 \\
                            &               &               & $0$           & 0.003545 & 0.942403   & 0.000516  \\ \hline \hline
            C\^ot\'e-Konrad & $10^{-5}$     & $10^{-5}$     & $10^{-5}$     & 0.003719 & 0.948032   & 0.000516 \\ 
                            & $10^{-4}$     & $10^{-4}$     & $10^{-5}$     & 0.003720 & 0.941801   & 0.000516 \\ 
                            & $10^{-3}$     & $10^{-3}$     & $10^{-5}$     & 0.003717 & 0.945905   & 0.000515 \\ 
                            & $10^{-2}$     & $10^{-2}$     & $10^{-5}$     & 0.003698 & 0.947884   & 0.000515 \\ 
                            & $10^{-1}$     & $10^{-1}$     & $10^{-5}$     & 0.003510 & 0.966405   & 0.000513 \\ 
                 $\star$    & $1$           & $1$           & $10^{-5}$     & $\star$ &  $\star$  & $\star$ \\ \hline \hline
        \end{tabular}
        \caption{Relative errors measured using equation \eqref{eqn:l2_data_error_max} for
    instances within the parameter space consisting of the thermal conductivity function (weighted sum \eqref{eqn:thermal_weighted_sum} and C\^ot\'e-Konrad
        \eqref{eqn:CoteKonrad}),
$C_\rh^l, C_T^l,$ and $\tau$. These are taken for a (smoothed) square wave approximation to the
boundary conditions. (The starred rows indicate failure of the numerical method, and the
errors from the classical model are repeated for clarity)}
\label{tab:rel_err_sq_wave}
    \end{table}
\renewcommand{\baselinestretch}{\normalspace}

It is apparent in Table \ref{tab:rel_err_sq_wave} that the
best error approximations for saturation, relative humidity, and temperature are found
with smaller values of $H_0$ (or equivalently, $\tau$). This observation is particular to
a drainage-type experiment.  If the experiment were an imbibition-type then it is
conjectured (based on the results in Section \ref{sec:numerical_saturation}) that the
value of $\tau$ would play a larger role.  Also apparent in Table
\ref{tab:rel_err_sq_wave}, we see that the values of $C_\rh^l$ and $C_T^l$ play little
role in the overall dynamics of the problem.

Keep in mind that the relative humidity errors are really just the difference between 1
single sensor and the numerical solution at that physical location.  In the author's
opinion it is unreasonable to judge the effectiveness of the numerical solution based
solely on one point.  One complication that arose within this solution is that the
relative humidity exhibits small periods of non-physical behavior for certain
parameters and boundary conditions.  Mesh refinement removes some of this effect, but even
with further mesh refinement not all of the non-physical regions were removed.  Possible
sources of this problem are: (1) the fact that \texttt{Mathematica} uses cubic
interpolation polynomials to deliver the solutions to numerical differential equations
(cubic interpolation can overshoot sharp transitions in data), and (2) the Stefan nature
of the problem causes numerical {\it stiffness} at the point of transition. Further
studies are needed to determine the exact cause of this non-physical behavior. 

Figures \ref{fig:SatRelHumCompare_Cosine} show individual time steps of several solutions
for various parameters with a sinusoidal approximation to the relative humidity and
temperature boundary conditions.  Figures \ref{fig:SatRelHumCompare_Square} show the same
plots with smoothed square wave boundary conditions. The square wave boundary conditions
obviously give closer approximation to the boundary data, and at the same time the use of
the square wave boundary conditions removes the non-physical behavior in this case. The
plots associated with the sinusoidal approximation to the boundary conditions are
presented here for comparison between very smooth and slightly sharper transitions in
boundary data.  A
closeup of the regions of non-physical behavior is shown in Figures
\ref{fig:SatRelHumCompare_Cosine_Blowup_t100} and
\ref{fig:SatRelHumCompare_Square_Blowup_t100}. Observe in these figures that the diffusion equation
solved with smaller values of $H_0$ and larger diffusion (from the weighted sum equation)
give the most plausible solutions. Figures \ref{fig:SatRelHumCompare_Cosine_Blowup2_t100}
and \ref{fig:SatRelHumCompare_Square_Blowup2_t100} give an indication of the difference
in the relative humidity equations given different thermal conductivity functions.

\linespread{1.0}
\begin{figure}[ht!]
        \centering
        \subfigure[Comparison at $t=1 \times 600$ sec]{
            \includegraphics[width=0.45\textwidth]{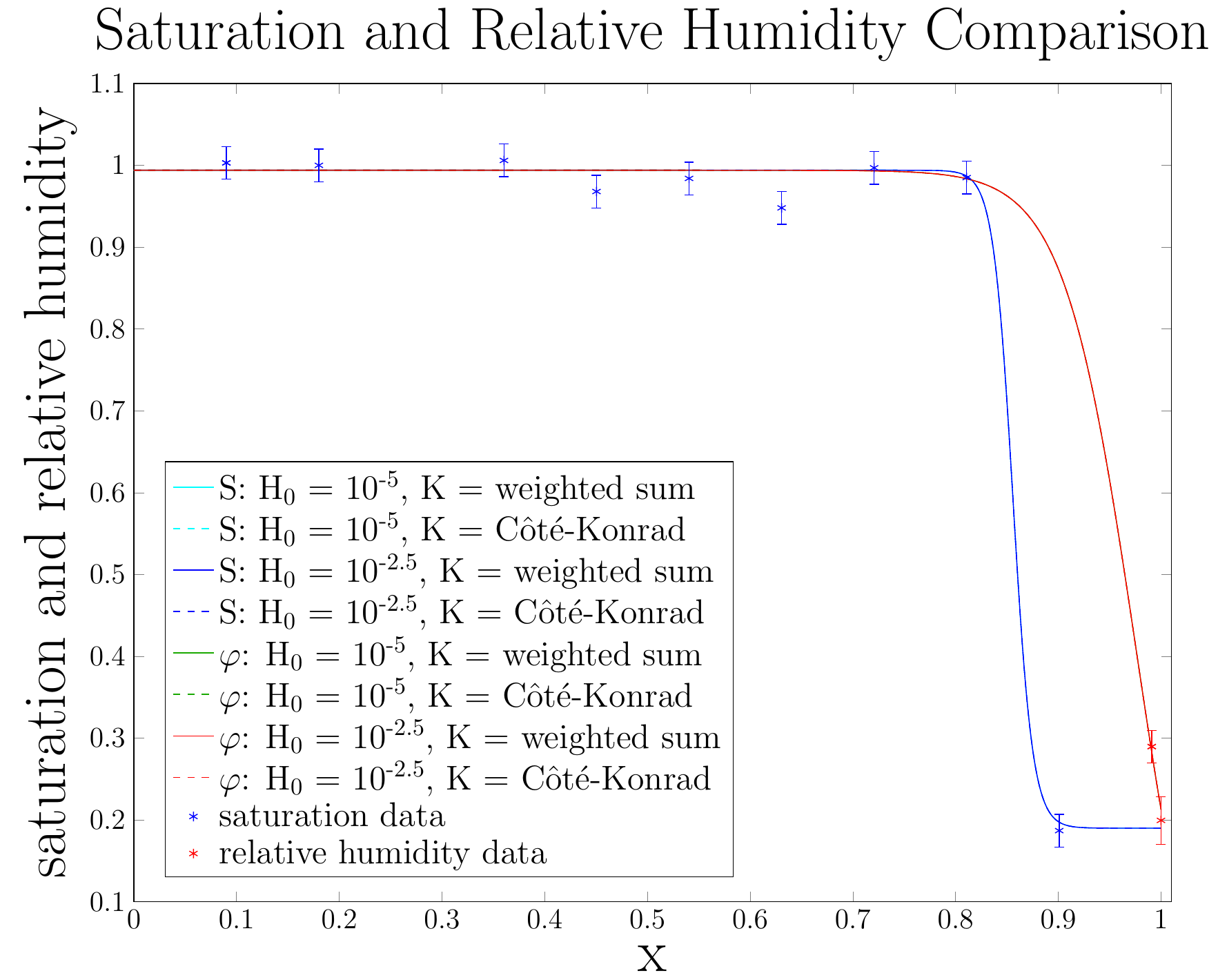}
            \label{fig:SatRelHumCompare_Cosine_t001}
            }
        \subfigure[Comparison at $t=50 \times 600$ sec]{
            \includegraphics[width=0.45\textwidth]{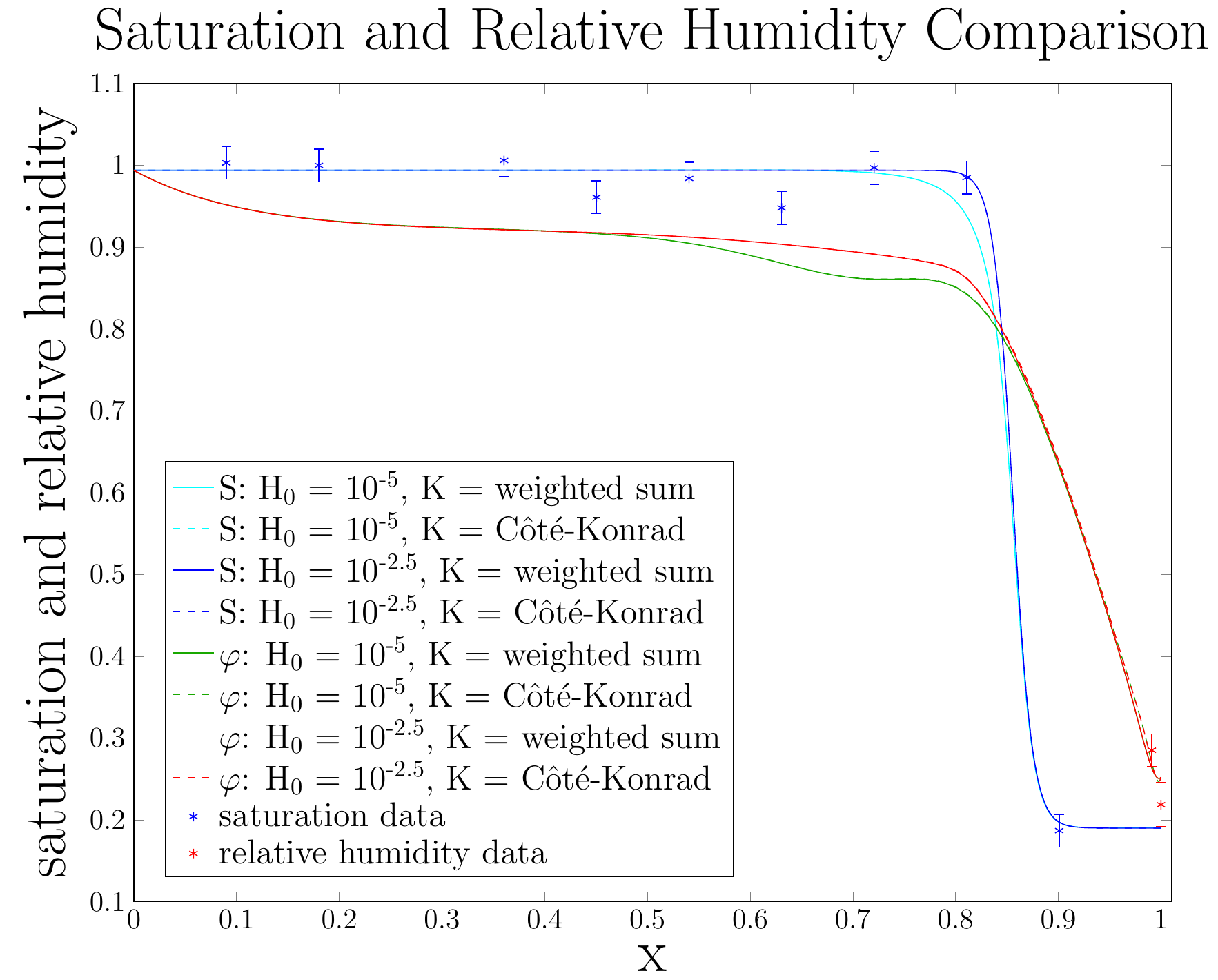}
            \label{fig:SatRelHumCompare_Cosine_t050}
        }
        \subfigure[Comparison at $t=100 \times 600$ sec]{
            \includegraphics[width=0.45\textwidth]{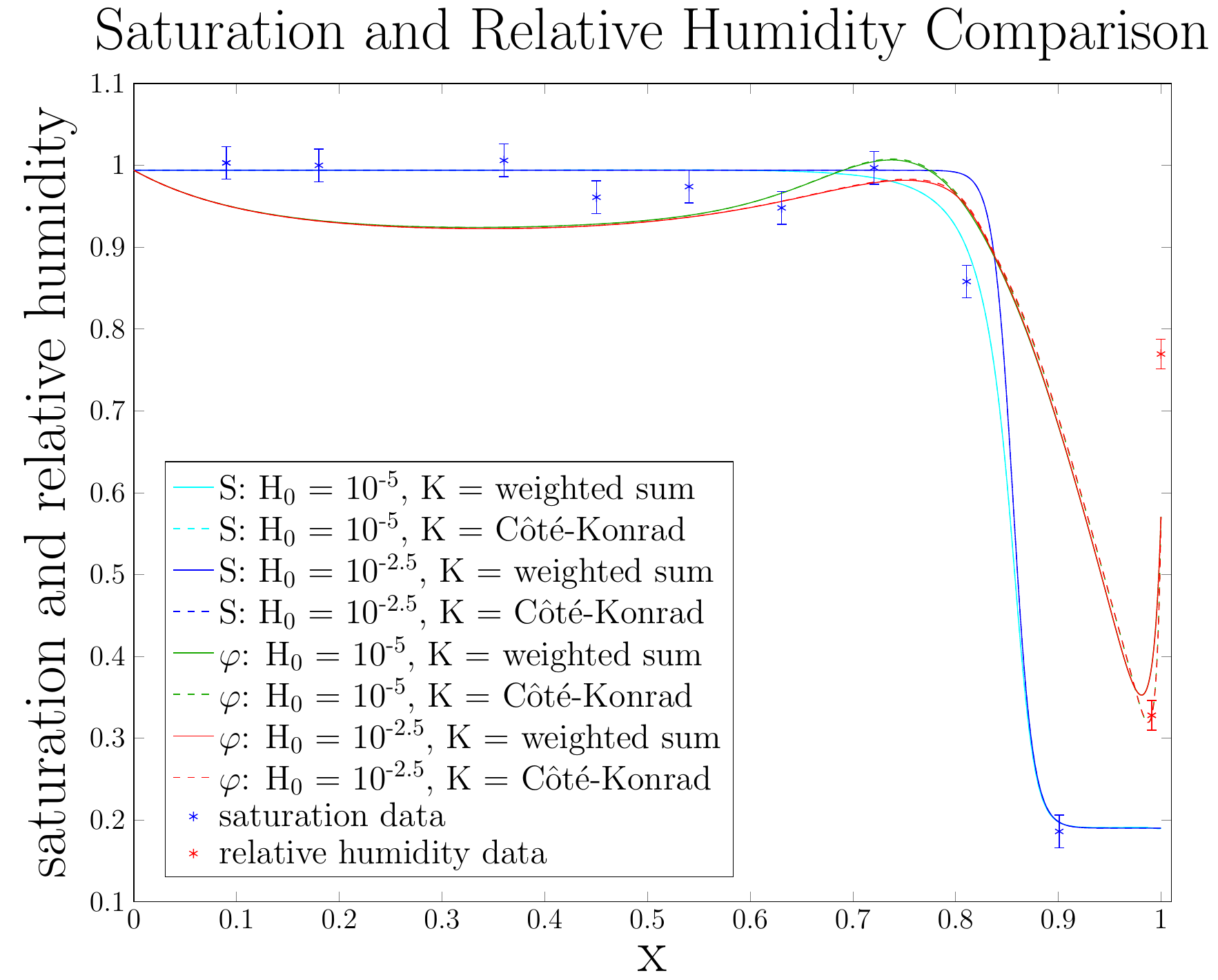}
            \label{fig:SatRelHumCompare_Cosine_t100}
            }
        \subfigure[Comparison at $t=150 \times 600$ sec]{
            \includegraphics[width=0.45\textwidth]{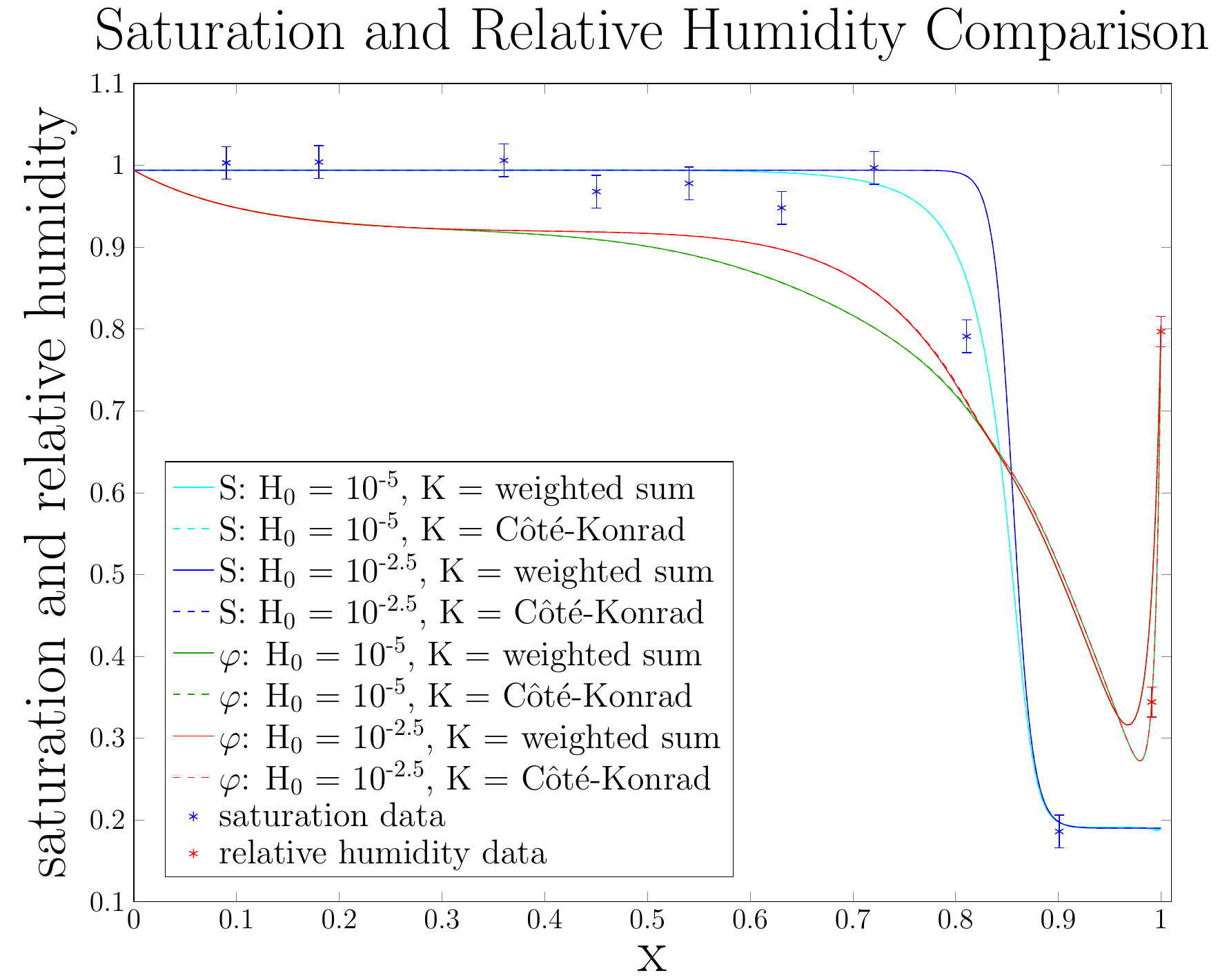}
            \label{fig:SatRelHumCompare_Cosine_t150}
        }
        \caption{Comparison of relative humidity and saturation for the fully coupled
            saturation-diffusion-temperature model as compared to data from
            \cite{Smits2011}. Boundary conditions are taken from a sinusoidal
        approximation of boundary data. Thermal conductivities are taken as either
        weighted sum \eqref{eqn:thermal_weighted_sum} or C\^ot\'e-Konrad
        \eqref{eqn:CoteKonrad}.}
        \label{fig:SatRelHumCompare_Cosine}
\end{figure}
\renewcommand{\baselinestretch}{\normalspace}

\linespread{1.0}
\begin{figure}[ht!]
        \centering
        \subfigure[Comparison at $t=1 \times 600$ sec]{
            \includegraphics[width=0.45\textwidth]{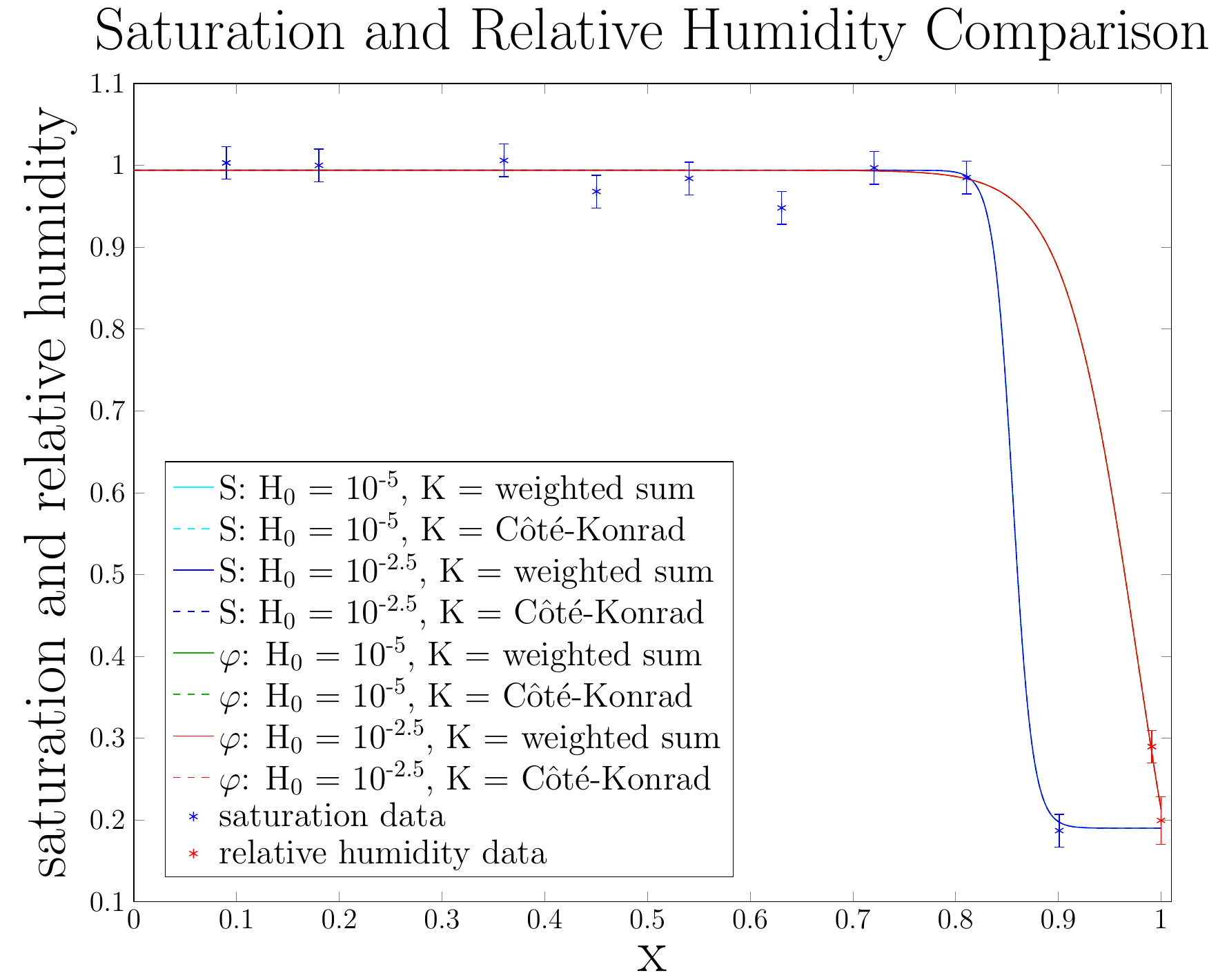}
            \label{fig:SatRelHumCompare_Square_t001}
            }
        \subfigure[Comparison at $t=50 \times 600$ sec]{
            \includegraphics[width=0.45\textwidth]{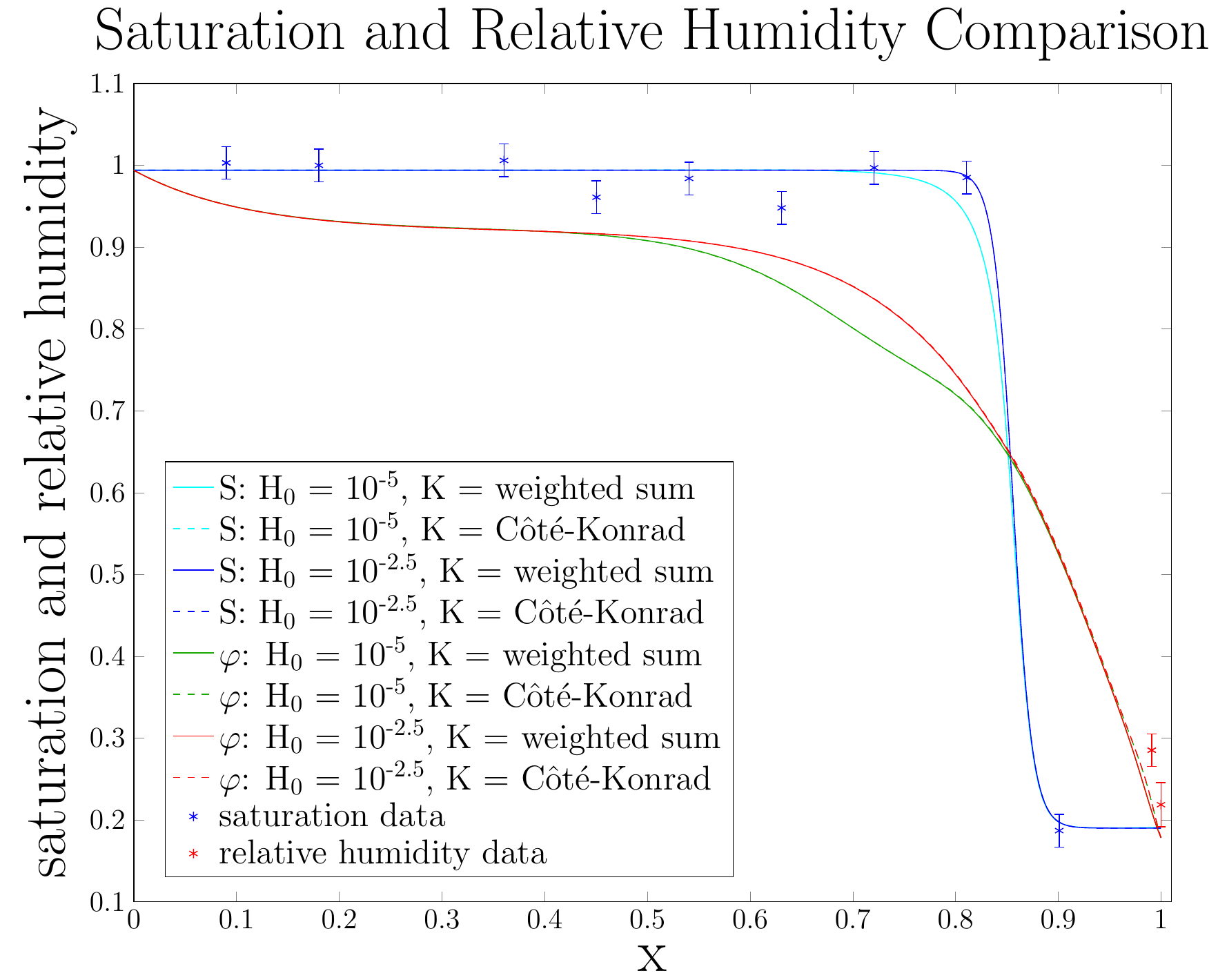}
            \label{fig:SatRelHumCompare_Square_t050}
        }
        \subfigure[Comparison at $t=100 \times 600$ sec]{
            \includegraphics[width=0.45\textwidth]{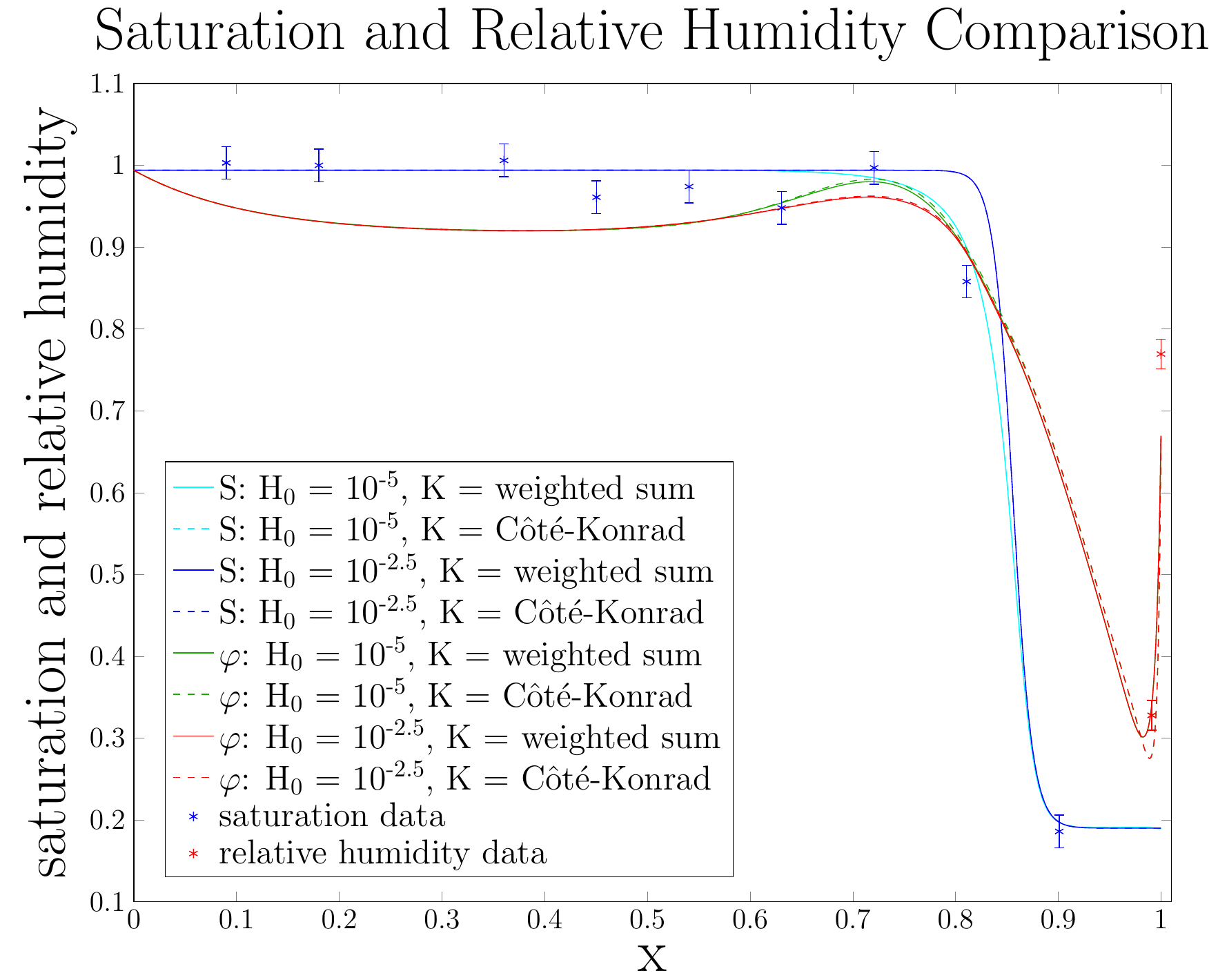}
            \label{fig:SatRelHumCompare_Square_t100}
            }
        \subfigure[Comparison at $t=150 \times 600$ sec]{
            \includegraphics[width=0.45\textwidth]{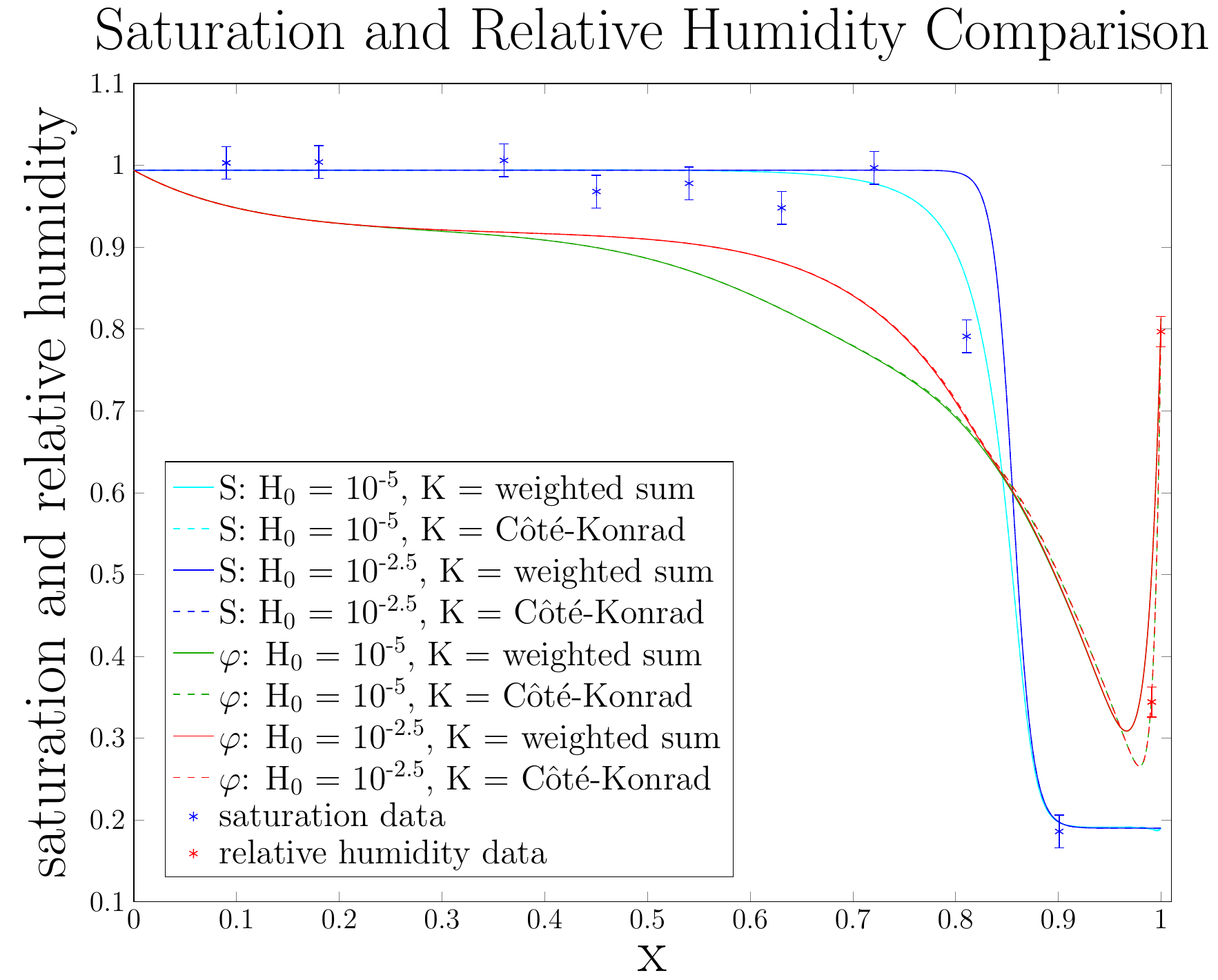}
            \label{fig:SatRelHumCompare_Square_t150}
        }
        \caption{Comparison of relative humidity and saturation for the fully coupled
            saturation-diffusion-temperature model as compared to data from
            \cite{Smits2011}. Boundary conditions are taken from a smoothed square wave
        approximation of boundary data. Thermal conductivities are taken as either
        weighted sum \eqref{eqn:thermal_weighted_sum} or C\^ot\'e-Konrad
        \eqref{eqn:CoteKonrad}.}
        \label{fig:SatRelHumCompare_Square}
\end{figure}
\renewcommand{\baselinestretch}{\normalspace}

\linespread{1.0}
\begin{figure}[ht!]
        \centering
        \subfigure[Blowup comparison at $t=100 \times 600$ sec. (sinusoidal approximated
        boundary)]{
            \includegraphics[width=0.45\textwidth]{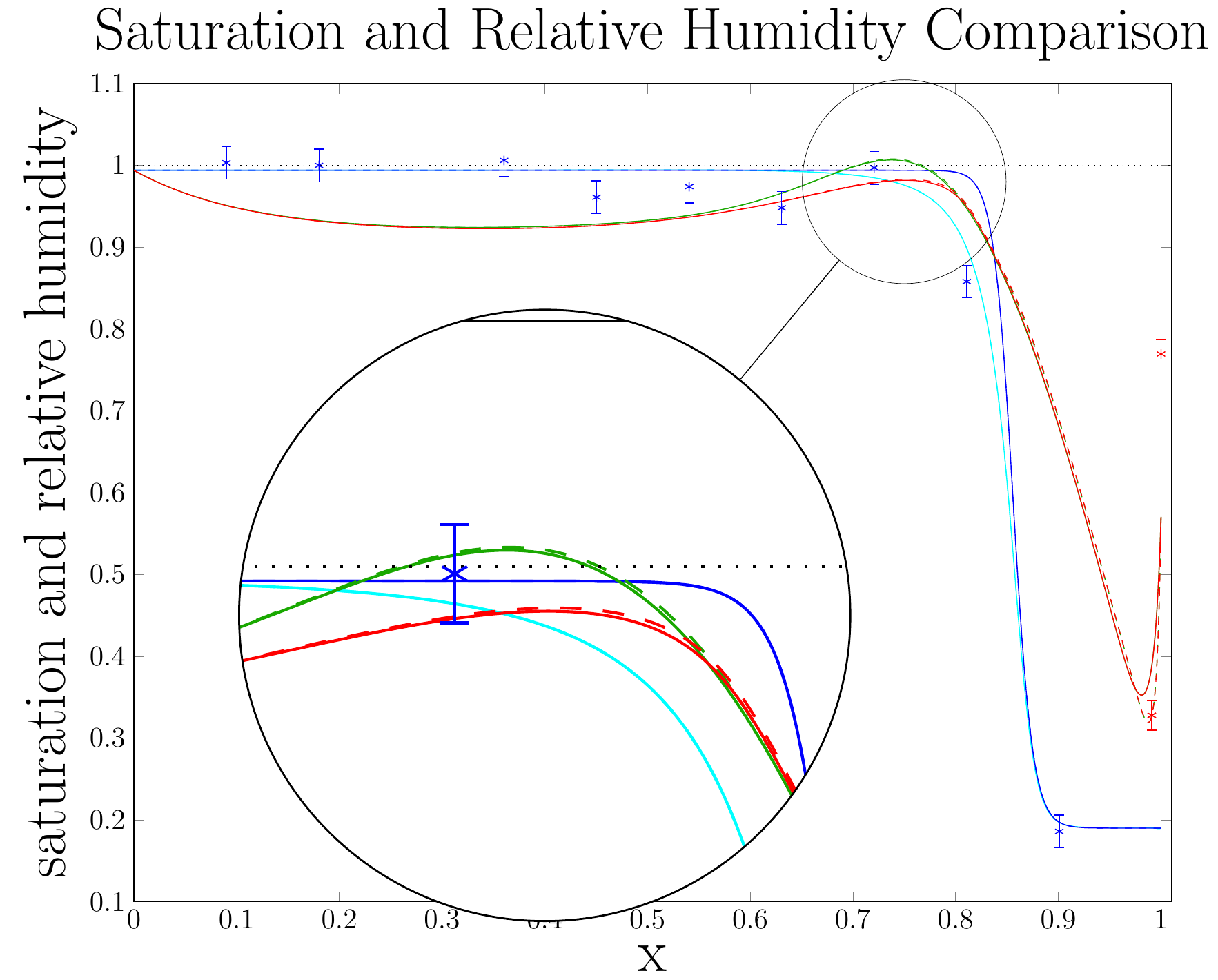}
            \label{fig:SatRelHumCompare_Cosine_Blowup_t100}
        }
        \subfigure[Blowup comparison at $t=100 \times 600$ sec. (square wave approximated
        boundary)]{
            \includegraphics[width=0.45\textwidth]{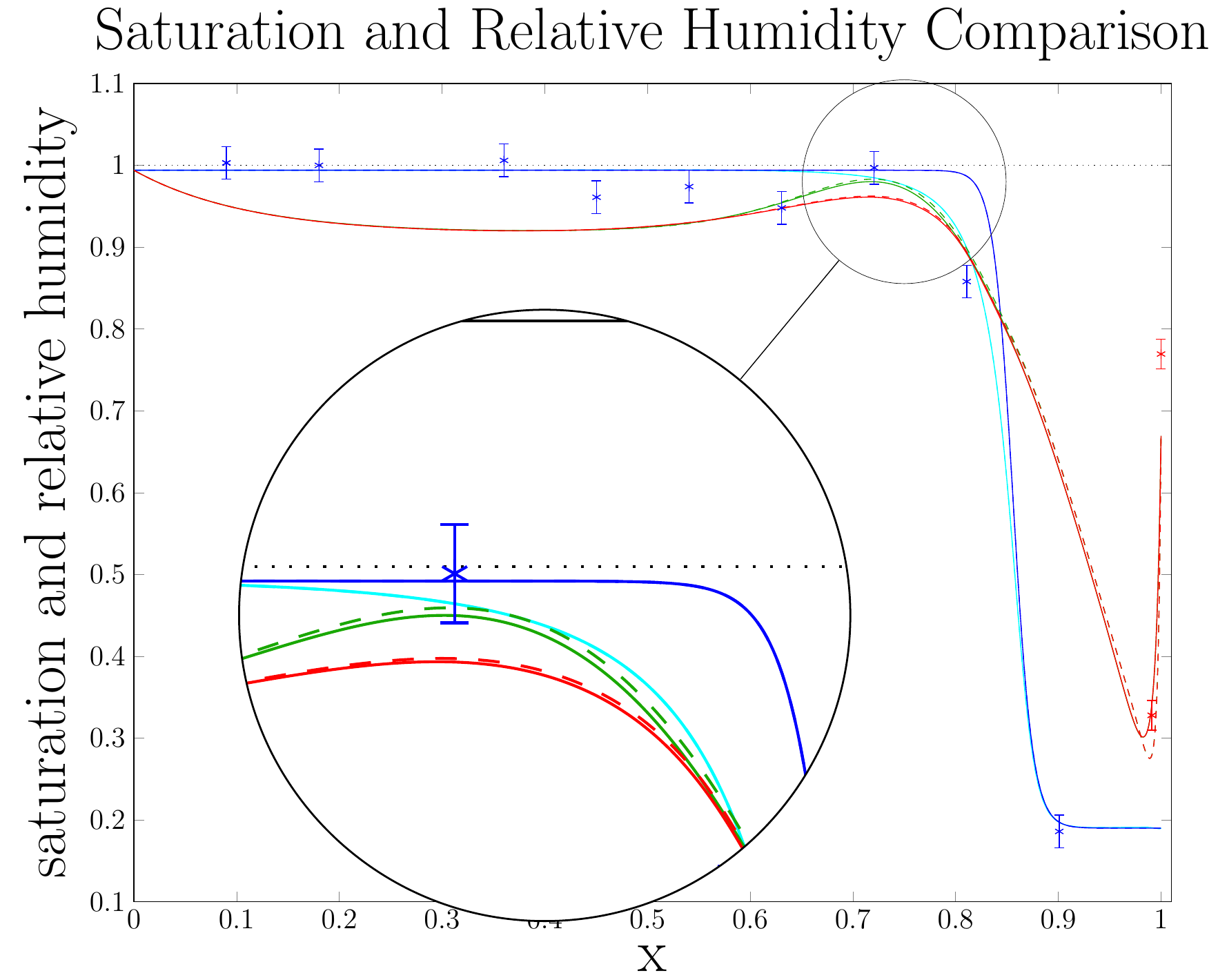}
            \label{fig:SatRelHumCompare_Square_Blowup_t100}
            }
        \subfigure[Blowup comparison at $t=100 \times 600$ sec. (sinusoidal approximated
        boundary)]{
            \includegraphics[width=0.45\textwidth]{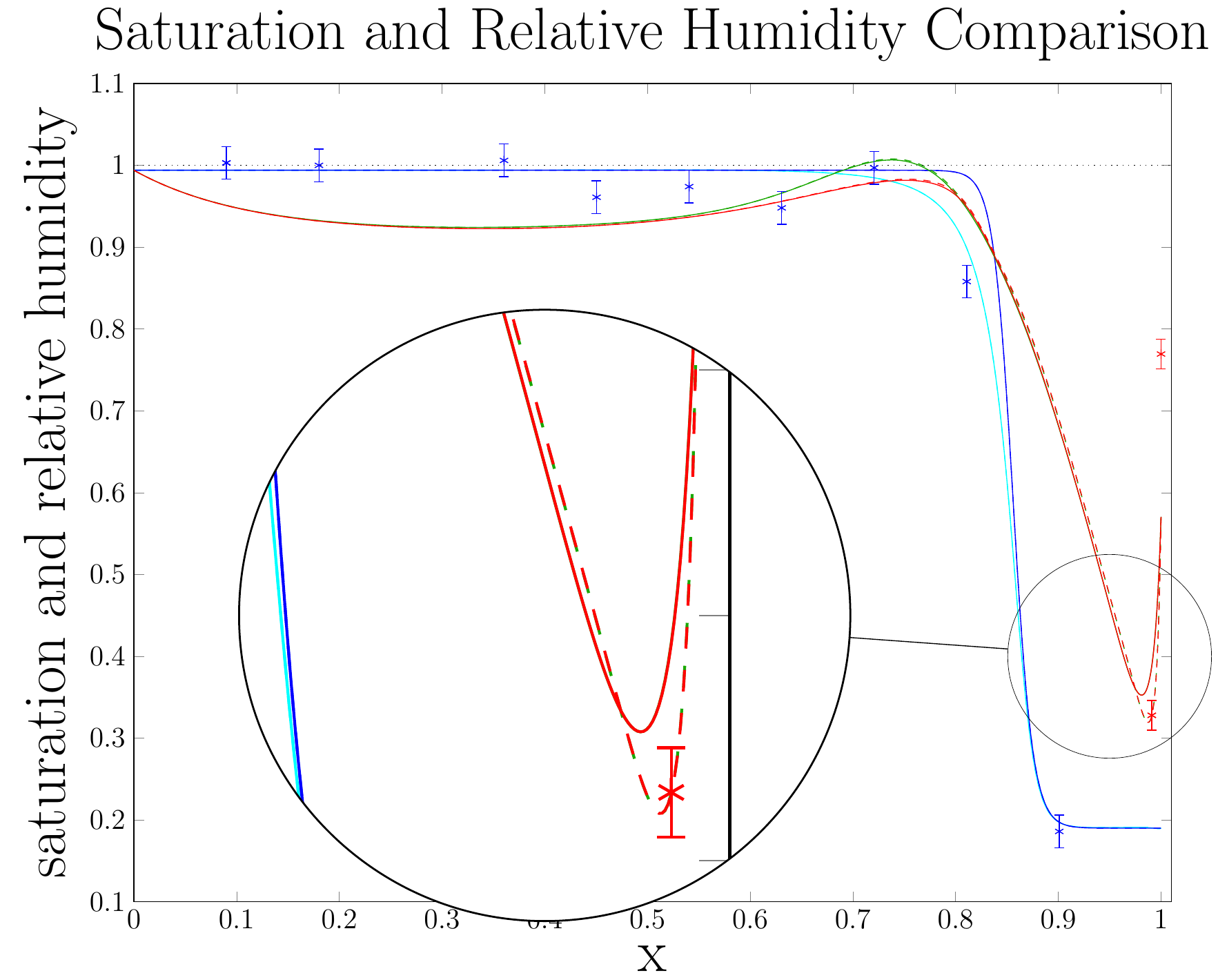}
            \label{fig:SatRelHumCompare_Cosine_Blowup2_t100}
        }
        \subfigure[Blowup comparison at $t=100 \times 600$ sec. (square wave approximated
        boundary)]{
            \includegraphics[width=0.45\textwidth]{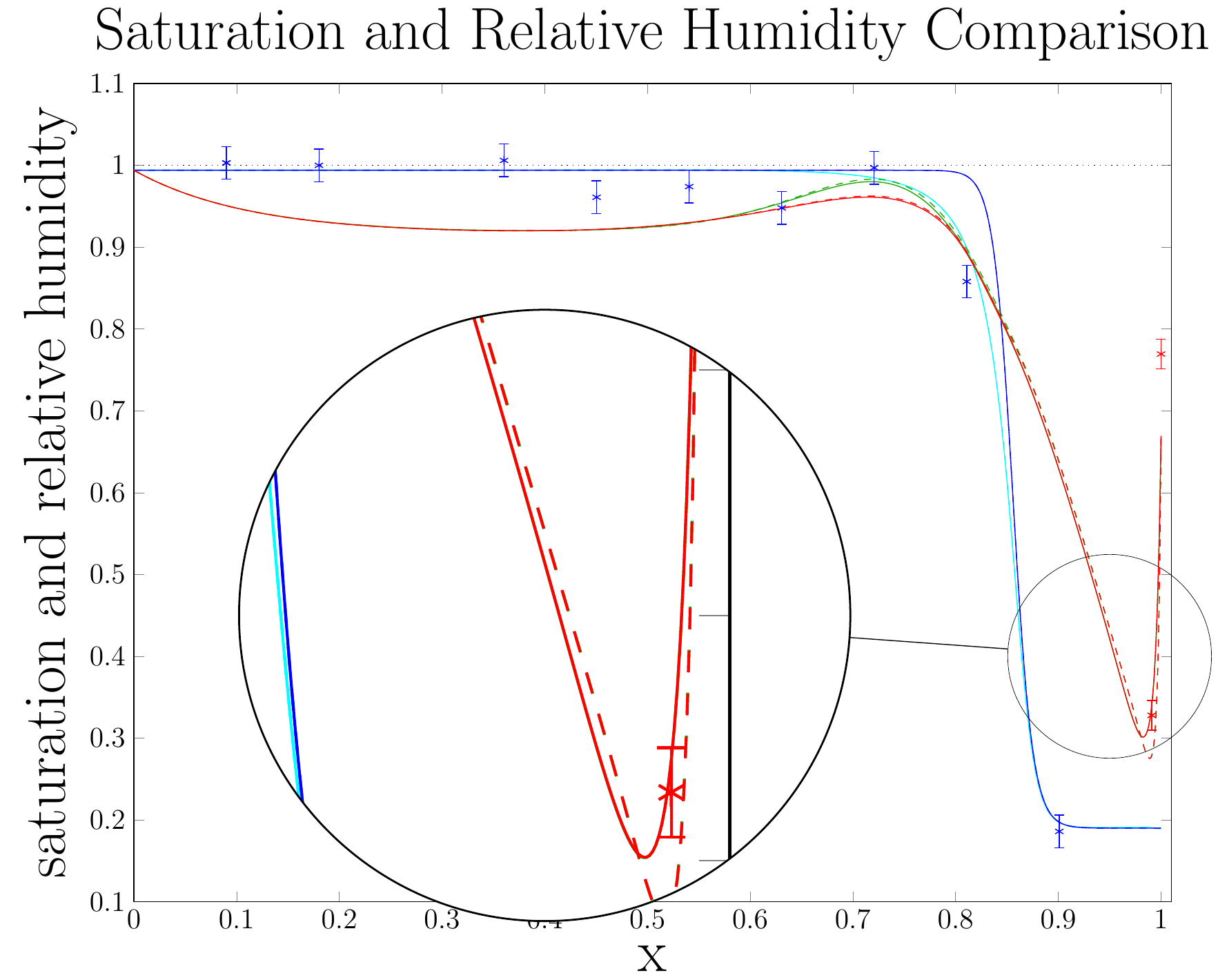}
            \label{fig:SatRelHumCompare_Square_Blowup2_t100}
            }
        \caption{Blowup comparison of relative humidity and saturation for the fully
            coupled saturation-diffusion-temperature model as compared to data from
            \cite{Smits2011}. The inset plots give a closer look at the behavior exhibited
        by these particular solutions. }
        \label{fig:SatRelHumCompare_Blowup}
\end{figure}
\renewcommand{\baselinestretch}{\normalspace}

The comparisons of the temperature solutions are shown in Figures
\ref{fig:TempCompare_Cosine} and \ref{fig:TempCompare_Square}. There is very little
difference between the models for various values of $\tau$ (or equivalently, $H_0$), so
only the curves associated with $H_0 = 10^{-5}$ are shown. Observe that the thermal
equation does a poor job capturing the extent of the diffusion near the top of the
experimental apparatus, but it does well otherwise.  Possible sources of this error come
from: (1) the terms neglected in the simplification of the thermal model, (2) the
initial condition, (3) the thermal conductivity functions (or parameters), and/or (4) the
accuracy of the sensor information. 
\linespread{1.0}
\begin{figure}[ht!]
        \centering
        \subfigure[Comparison at $t=1 \times 600$ sec]{
            \includegraphics[width=0.45\textwidth]{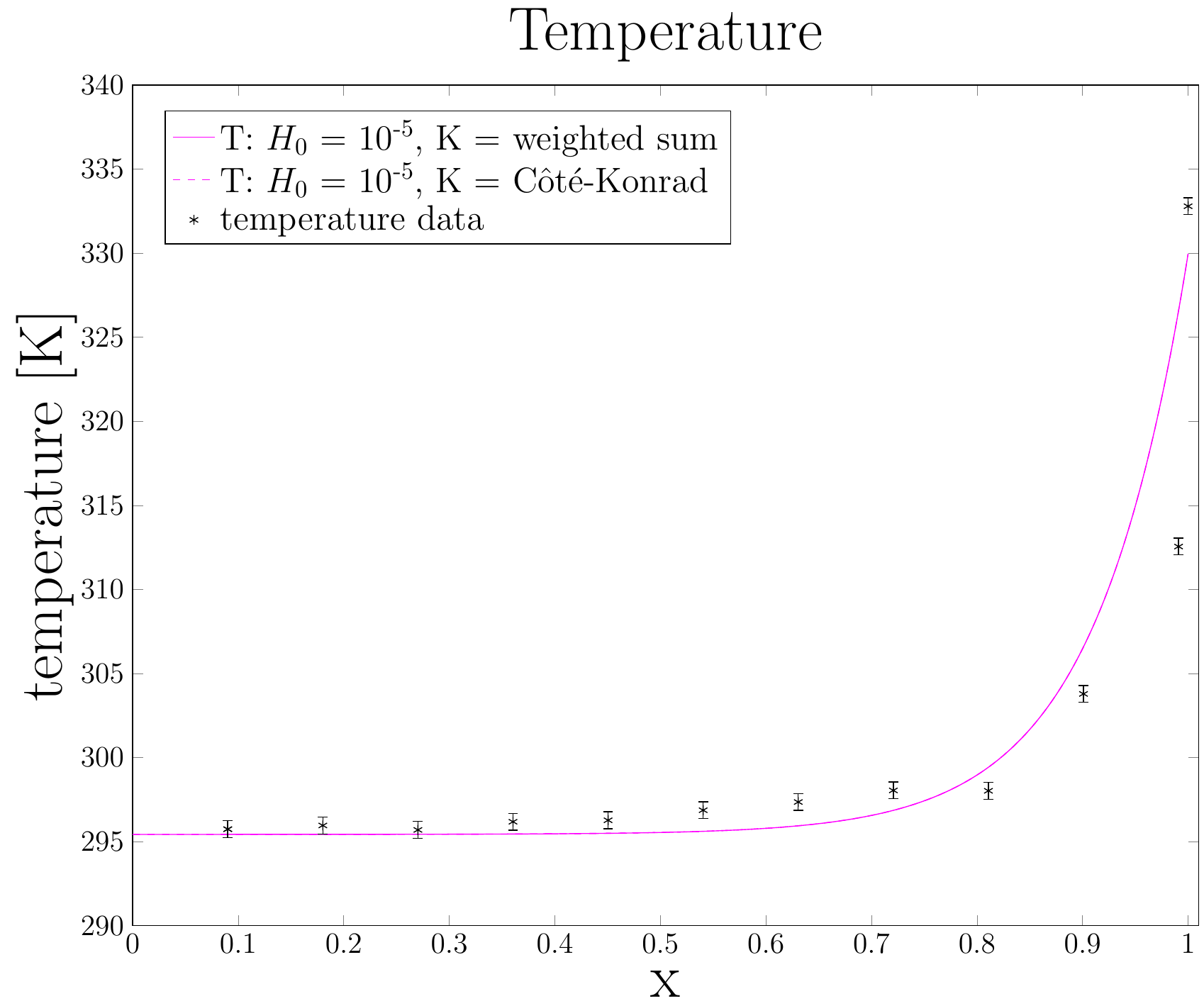}
            \label{fig:TempCompare_Cosine_t001}
            }
        \subfigure[Comparison at $t=50 \times 600$ sec]{
            \includegraphics[width=0.45\textwidth]{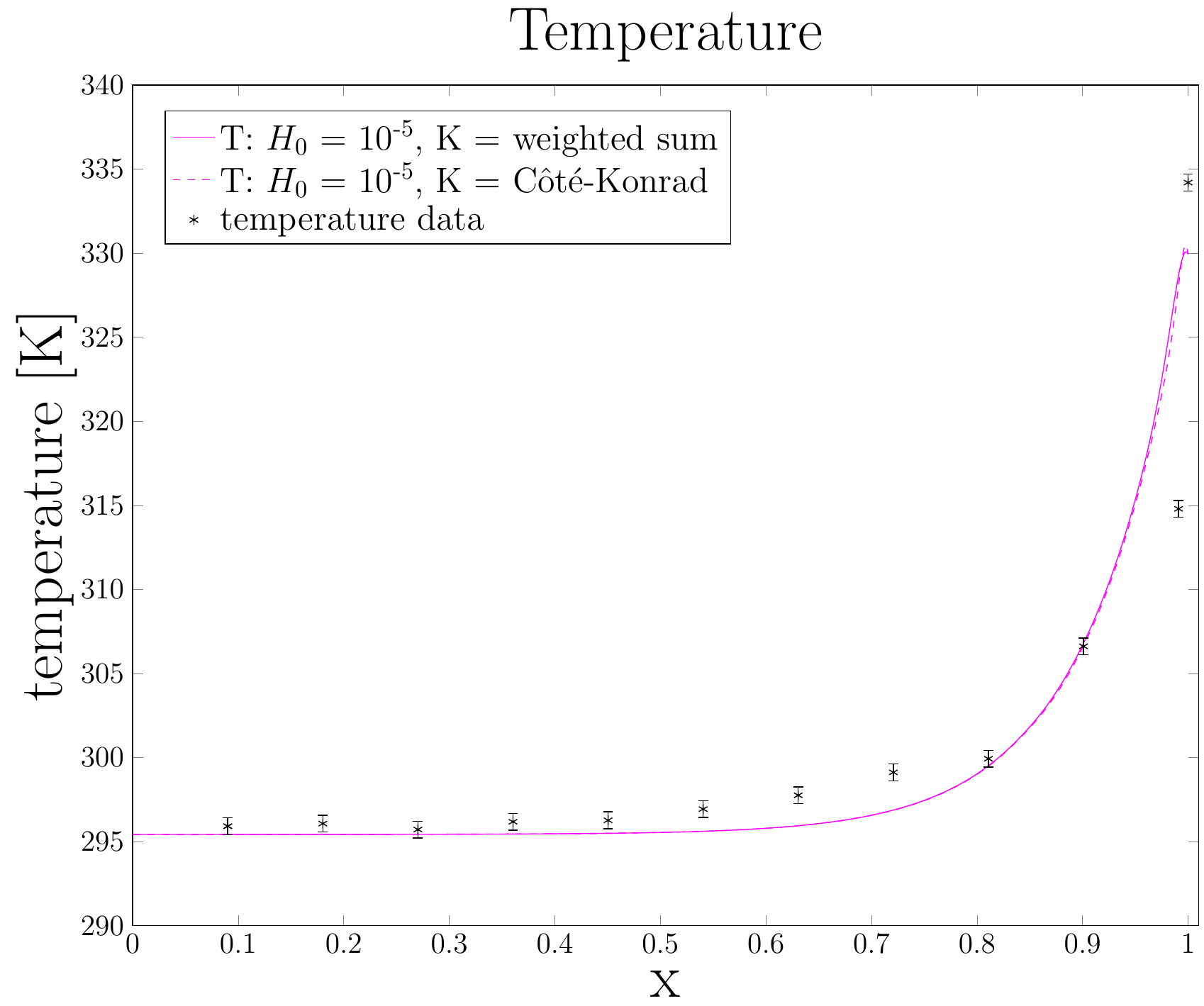}
            \label{fig:TempCompare_Cosine_t050}
        }
        \subfigure[Comparison at $t=100 \times 600$ sec]{
            \includegraphics[width=0.45\textwidth]{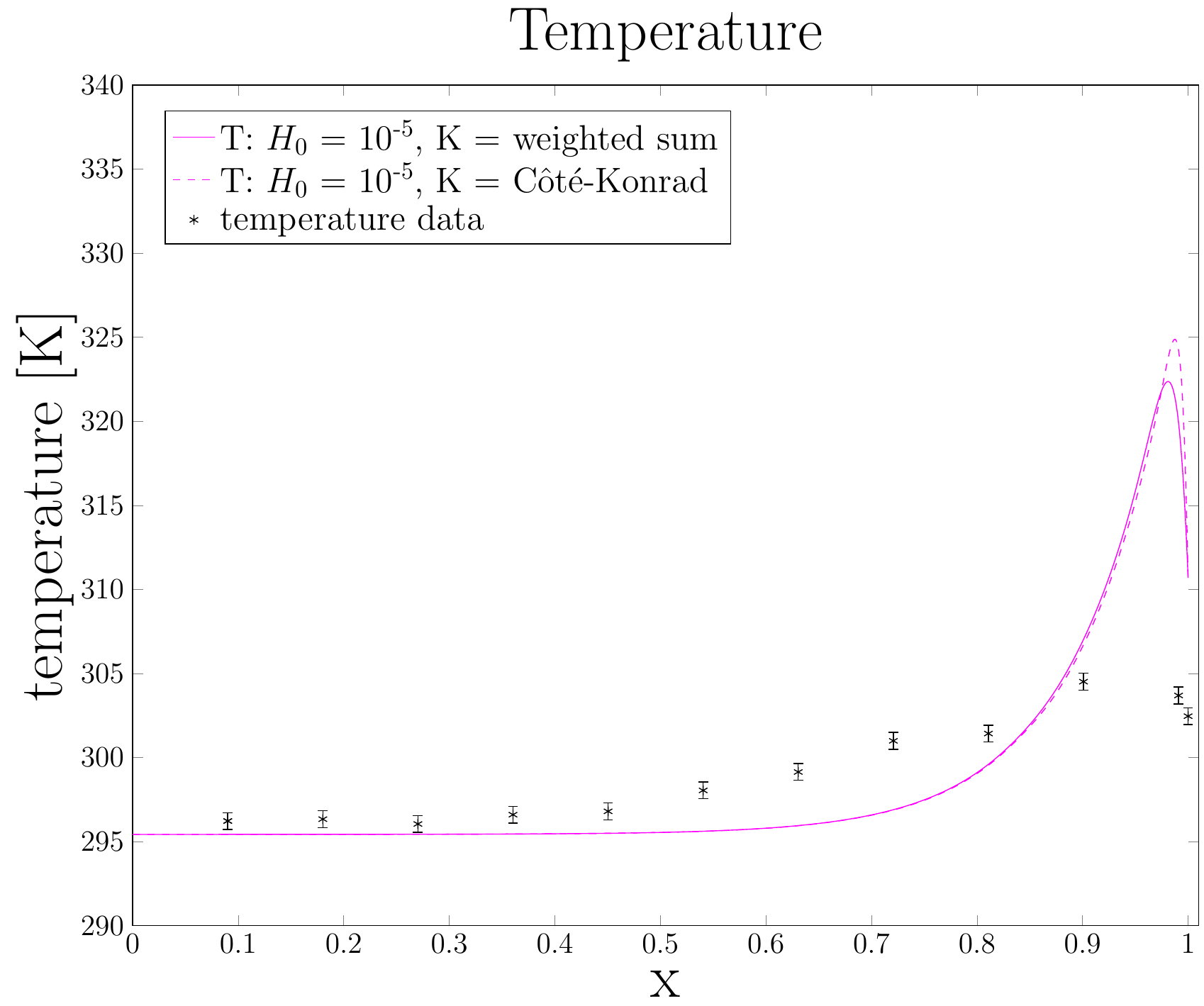}
            \label{fig:TempCompare_Cosine_t100}
            }
        \subfigure[Comparison at $t=150 \times 600$ sec]{
            \includegraphics[width=0.45\textwidth]{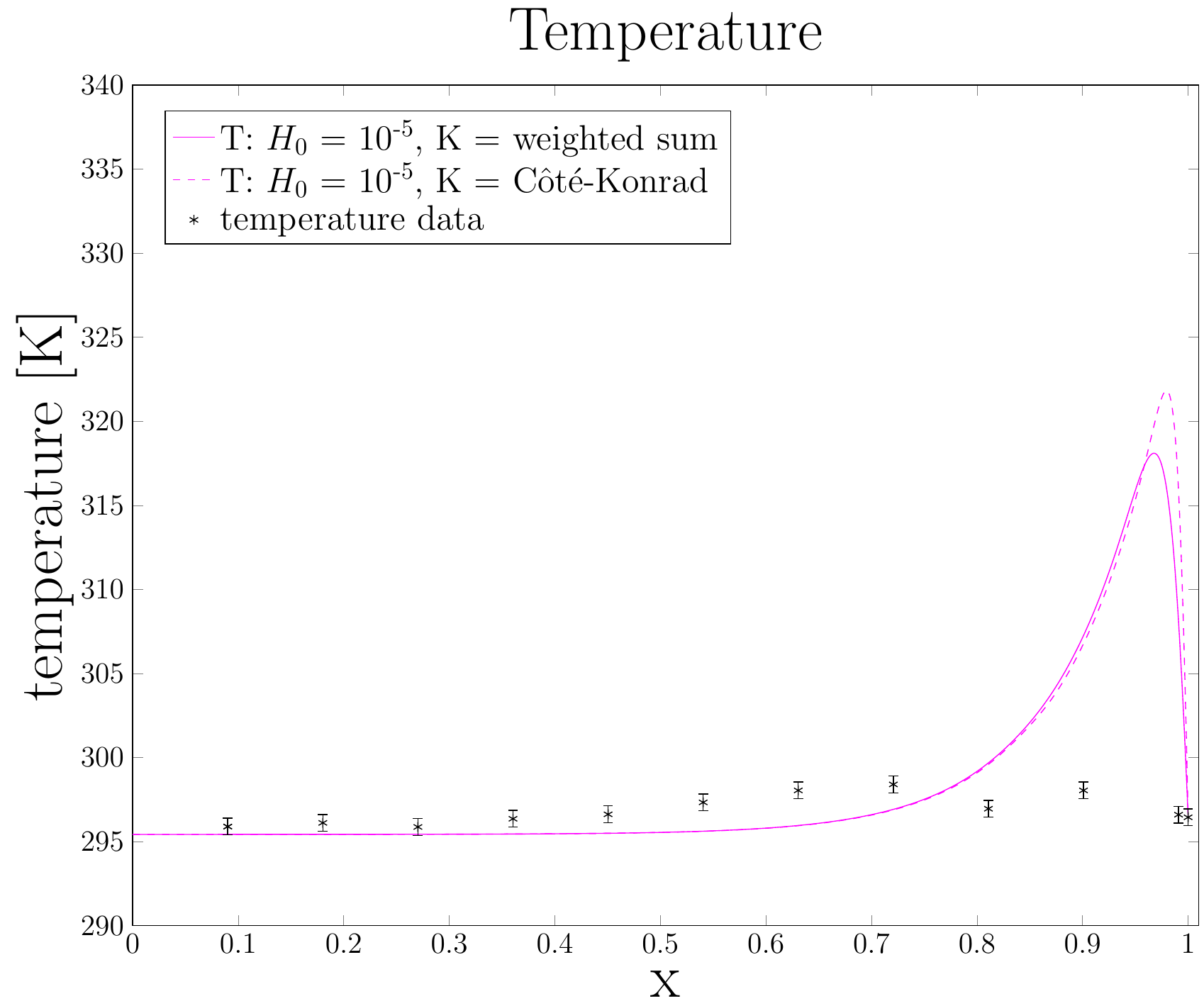}
            \label{fig:TempCompare_Cosine_t150}
        }
        \caption{Comparison of temperature solutions for the fully coupled
            saturation-diffusion-temperature model as compared to data from
            \cite{Smits2011}. Boundary conditions are taken from a sinusoidal
        approximation of boundary data.}
        \label{fig:TempCompare_Cosine}
\end{figure}
\renewcommand{\baselinestretch}{\normalspace}

\linespread{1.0}
\begin{figure}[ht!]
        \centering
        \subfigure[Comparison at $t=1 \times 600$ sec]{
            \includegraphics[width=0.45\textwidth]{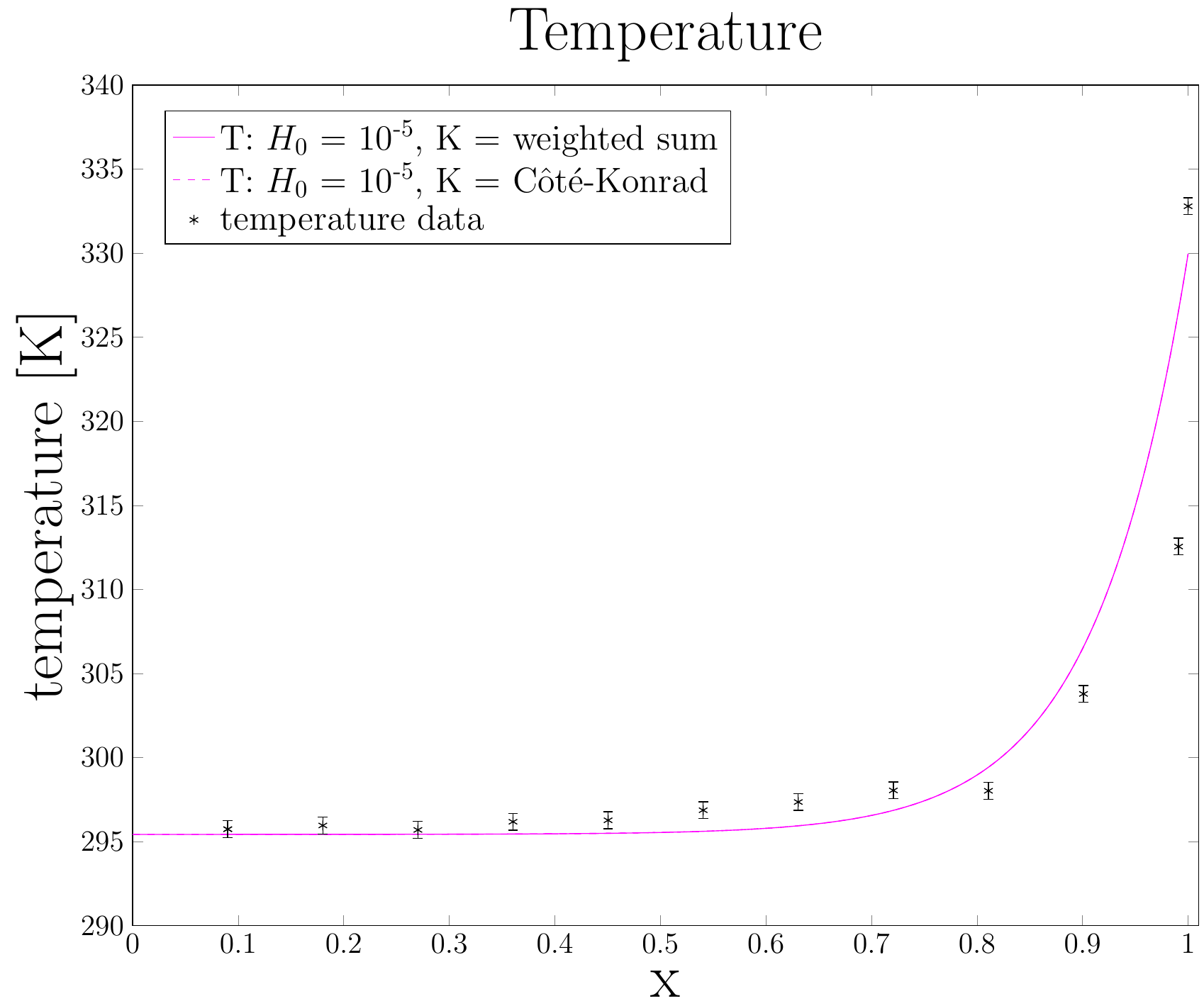}
            \label{fig:TempCompare_Square_t001}
            }
        \subfigure[Comparison at $t=50 \times 600$ sec]{
            \includegraphics[width=0.45\textwidth]{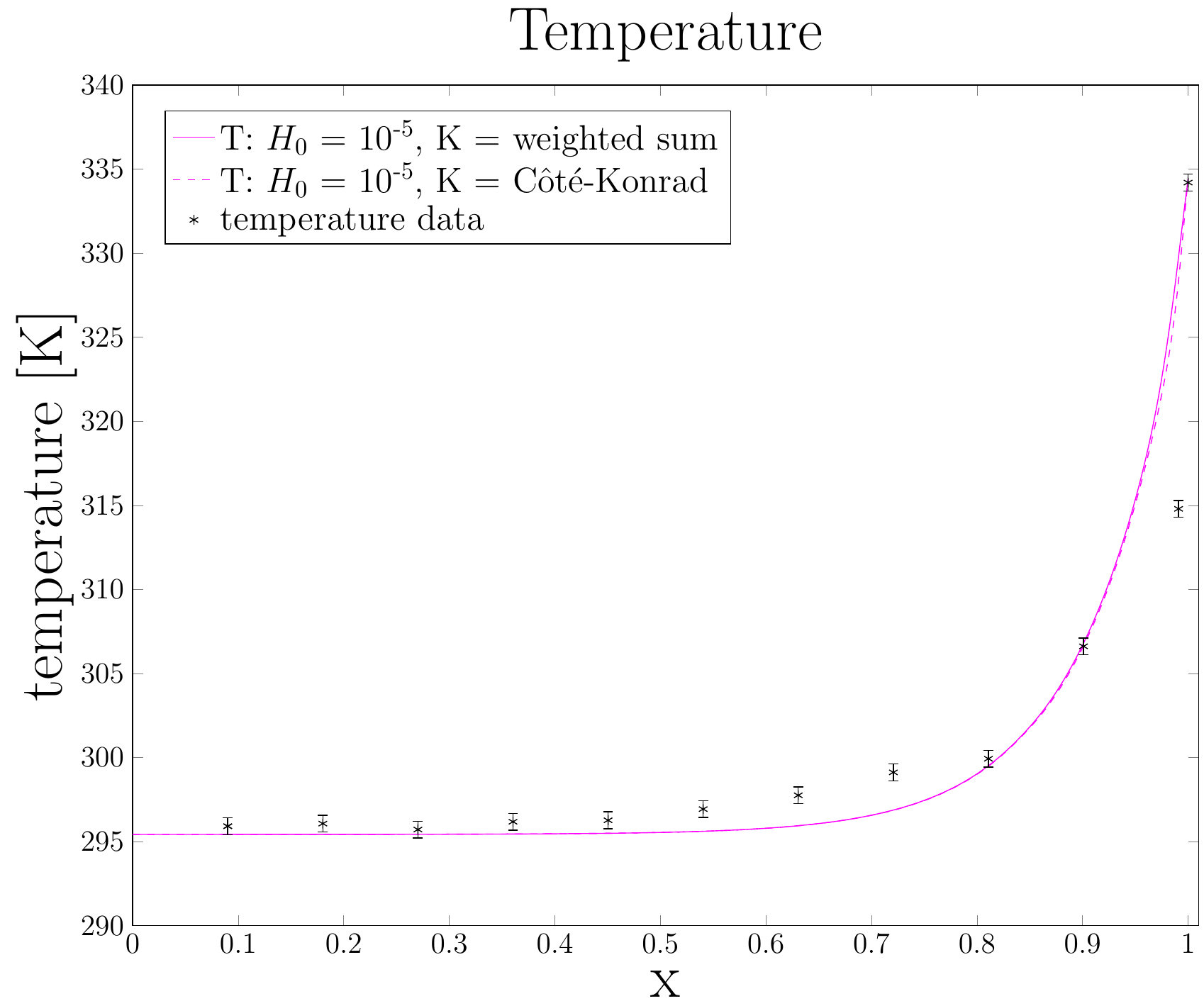}
            \label{fig:TempCompare_Square_t050}
        }
        \subfigure[Comparison at $t=100 \times 600$ sec]{
            \includegraphics[width=0.45\textwidth]{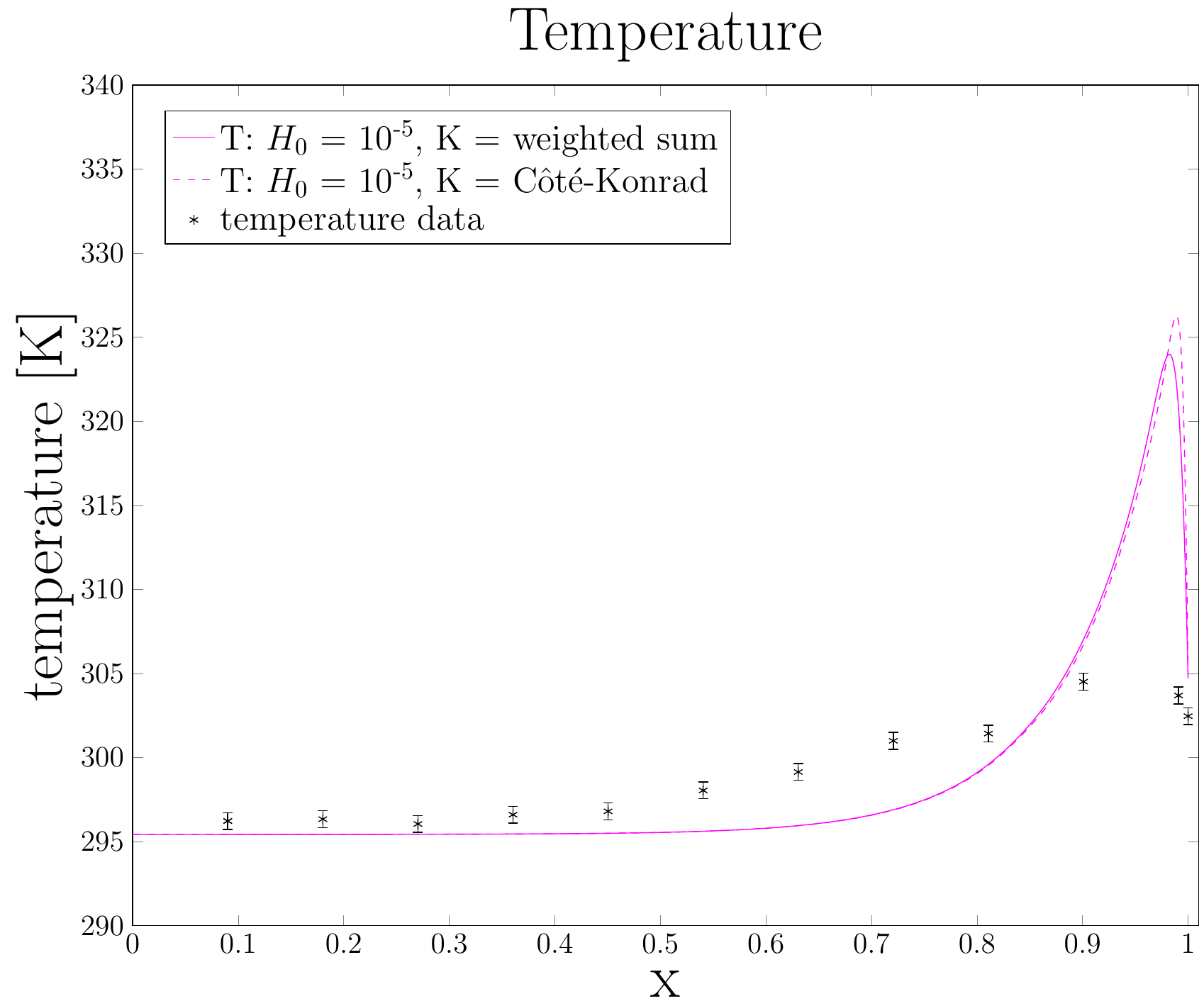}
            \label{fig:TempCompare_Square_t100}
            }
        \subfigure[Comparison at $t=150 \times 600$ sec]{
            \includegraphics[width=0.45\textwidth]{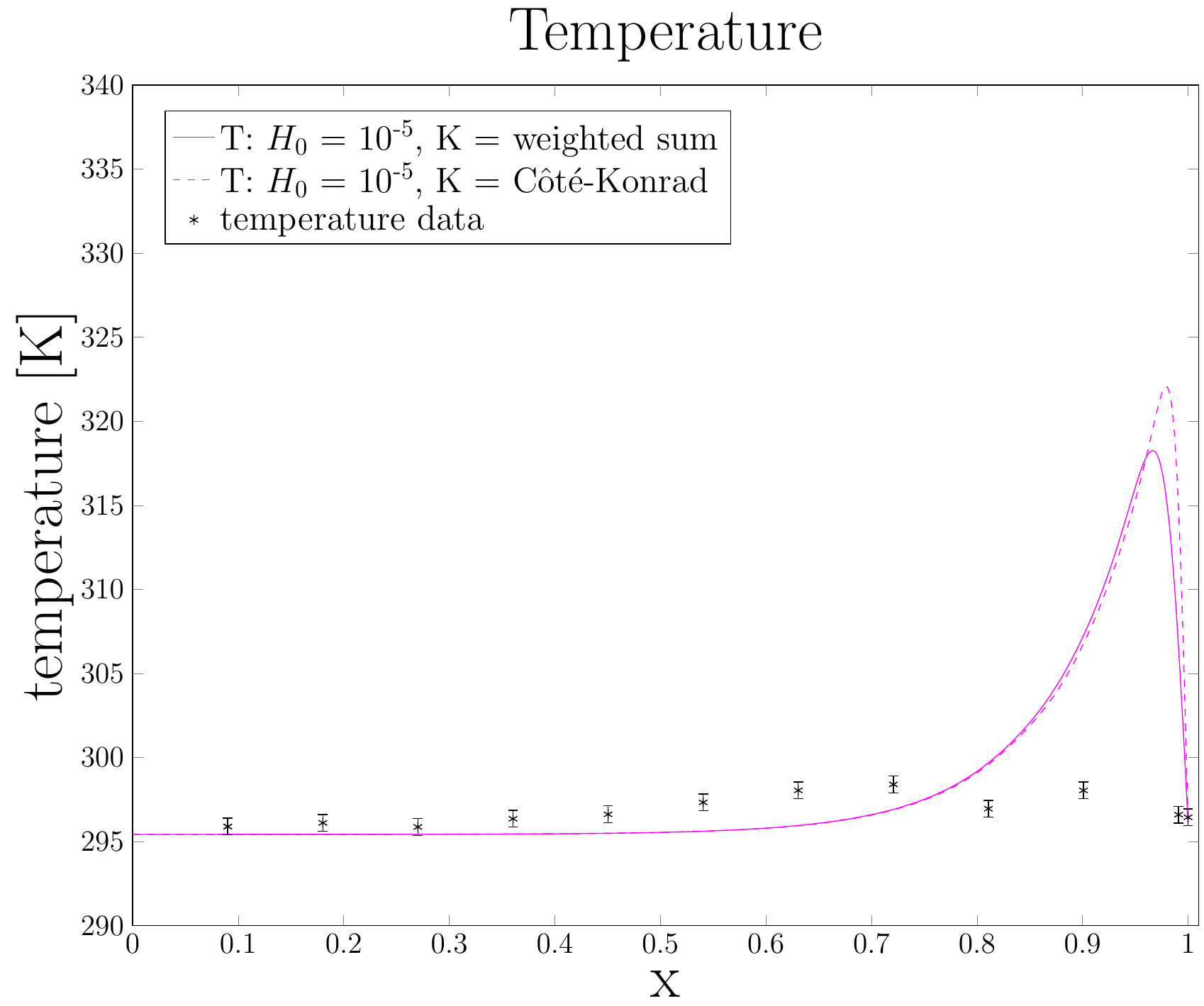}
            \label{fig:TempCompare_Square_t150}
        }
        \caption{Comparison of temperature solutions for the fully coupled
            saturation-diffusion-temperature model as compared to data from
            \cite{Smits2011}. Boundary conditions are taken from a smoothed square wave
        approximation of boundary data.}
        \label{fig:TempCompare_Square}
\end{figure}
\renewcommand{\baselinestretch}{\normalspace}

Comparison of the error estimates between the two models (Table
\ref{tab:rel_err_classical} compared to Table 
\ref{tab:rel_err_sq_wave}), we see that the present model gives slightly better
approximations as measured with this metric. Table \ref{tab:error_percent_improvement}
gives the percent improvement of the present model over the classicl model.  These values
are chosen from the tables presented herein, and as such this is a lower bound on the
percent improvement. The fact that there was an improvement in error is less important, in
the author's opinion, than the fact that the models predict nearly the same error while
(1) removing the necessity for the enhanced diffusion parameter, and (2) putting the
entire system of equations on a firm thermodynamic footing.

\linespread{1.0}
\begin{table}[ht*]
    \centering
    \begin{tabular}{|c||ccc|}
        \hline
        \multicolumn{1}{|c||}{} & \multicolumn{3}{c|}{\% Improvement} \\
        Boundary Cond. & Saturation & Rel. Humidity & Temperature \\ \hline \hline
         Sinusoidal & 1.71\%   & 22.85\%  & 16.33\% \\ \hline
         Sq. Wave   & 1.71\%   & 5.91\%   & 26.78\% \\ \hline
    \end{tabular}
    \caption{Percent improvement of the present model over the classical model using
        equation \eqref{eqn:l2_data_error_max} as the error metric.}
    \label{tab:error_percent_improvement}
\end{table}
\renewcommand{\baselinestretch}{\normalspace}

In this problem there are several parameters of interest, but the present study suggests
that variations in these parameters play little role in the overall dynamics of the
problem.  This narrows us down to only 1 fitting parameter for this problem: the
coefficient, $M$, in the rate of
evaporation term.  This was taken to best match with the fitted evaporation rate in
\cite{Smits2011} so it is expected that this value can be considered roughly constant. The classical model
consisting of Richards' equation, Philip and de Vries diffusion equation (with enhancement
fitting factors), and the de Vries heat transport equation contains at least two fitting
parameters that are calculated using a least squares statistical approach.

\section{Conclusion}
In Sections \ref{sec:numerical_saturation} - \ref{sec:numerical_coupled_sat_vap}, several
numerical results were presented indicating the consistency of the present models with the
classical mathematical models for saturation and relative humidity.  Of particular
importance is the analysis of the enhanced diffusion problem.  The arguments presented in
Section \ref{sec:numerical_vapor_diffusion} indicate that modeling vapor diffusion in
unsaturated media with the chemical potential can eliminate the necessity for a fitted
enhancement factor. Also shown within this section is a sensitivity analysis of the $\tau$
parameter for the dynamic capillary pressure term as well as the coefficient of $\grad
\rh$ that appears in the saturation equation.

In Section \ref{sec:FullyCoupledSolutions} it was shown that the fully coupled system
matches quantitatively and qualitatively to experimental data for heat and moisture
transport. There are several problems with the matching to this experimental data. First
of all, the spatial data is very coarse so getting an exact fit for the initial conditions
is difficult.  Secondly, the data is noisy so getting a reasonable fit for the boundary
conditions (especially in the initial experimental times), is difficult.  Lastly, the
Stefan nature of this problem causes numerical difficulties.  

The numerical simulations
presented herein indicate that the proposed models match both physical and experimental
expectations for a heat and moisture transport model. The model is sensitive to the choice
of thermal conductivity function and further investigation is needed to determine which
function(s) are appropriate. The slight non-physical nature of the results for certain
boundary conditions needs to be investigated.   Further studies (both
numerical and experimental) need to be performed and provide a baseline for future
research endeavors. The coupling of these three processes, all of which were derived from
a thermodynamic foundation, opens to the door to future research endeavors on coupled
processes in unsaturated soils.

\newpage
\chapter{Conclusions and Future Work}\label{ch:ConclusionsFutureWork}
Throughout this thesis it has been demonstrated that HMT and the macroscale chemcial
potential are powerful modeling tools that can be used to derive mathematical models for
rather complex phenomena in porous media. Chapter-by-chapter, the results are as follows. 

In Chapter \ref{ch:diffusion_comp}, a short discussion of different Fickian diffusion
coefficients was presented.  While this work is not {\it new} in the sense of the creation
of new theories or equations, it serves as an aid to understand the assumptions commonly
used in diffusion-related research and experimentation.  We showed that under certain
assumptions that all of the pore-scale Fickian diffusion coefficients can be assumed
constant.  In Chapter \ref{ch:HMT}, the focus turned back to porous media and the
fundamental framework for volume averaging and Hybrid Mixture Theory were presented. This
chapter serves as a reference for these topics and no new results were presented.

In Chapter \ref{ch:Exploit} we derived and generalized the primary
constitutive equations for unsaturated media.  In particular, we generalized the forms of
Darcy's and Fick's laws proposed by Bennethum \cite{Bennethum1996a}, Weinstein
\cite{weinstein}, and others. The form of Fourier's law derived is similar to that of
Bennethum \cite{Bennethum1999}, but contains terms particular to multiphase media. To the
author's knowledge, the chemical potential form of Fourier's law is new to this work. The
exact generalizations of Darcy's and Fick's laws presented here are novel to this work,
but similar terms have been proposed in other works
\cite{Bennethum1996a,Bennethum1996b,weinstein}.  What is novel to this work is the
extension of these terms to multiphase flow and the use of the macroscale chemical
potential as a dependent variable to obtain additional insight.

In Chapter \ref{ch:Transport}, a coupled system of equations for heat and moisture
transport was derived. Of particular importance are the generalizations of Richards
equation and the Philip and de Vries vapor diffusion equation.  It was demonstrated that
the enhanced diffusion model of Philip and de Vries can be re-framed in terms of the
macroscale chemical potential. This re-framing removes the necessity of the enhancement
factor proposed by Philip and de Vries. The coupling of the vapor diffusion and saturation
terms is achieved through the chemical potential. The relationship between relative
humidity and chemical potential is well known in
chemistry and thermodynamics, but to the author's knowledge it has not been previously
used in the porous media literature. Also in Chapter \ref{ch:Transport}, a generalization
of the heat transport equation given by Bennethum \cite{Bennethum1999} to multiphase media
was derived via HMT.  This new model collapses to Bennethum's model in the case of a
saturated porous medium and also suggests corrections to the classical de Vries model
proposed in 1958.

In Chapter \ref{ch:Existence_Uniqueness}, the questions of existence and uniqueness were
studied for each individual equation (while holding the other dependent variables fixed).  These
results were known for the saturation equation, but the results presented for the vapor
diffusion and thermal equation are unique to this work.  This is, of course, because these
equations are novel to this work. More work needs to be done on this front. In particular,
the uniqueness result for vapor diffusion equation is absent.  Also, the results for the
thermal equation depend on strict boundary conditions which are not necessarily met
physically. It is emphasized that the existence and uniqueness results presented herein
are only preliminary.

The numerical results in Chapter \ref{ch:Transport_Solution} are performed for validation
purposes. Whenever presenting new equations it is necessary to compare against existing
models and, when possible, experimentally obtained data. To that end, in Sections
\ref{sec:numerical_saturation}, \ref{sec:numerical_vapor_diffusion} and
\ref{sec:numerical_coupled_sat_vap}, numerical solutions to each equation were presented along with 
sensitivity analyses and comparisons to classical equations. In Section \ref{sec:FullyCoupledSolutions}, numerical
solutions to the coupled system were presented and compared to experimental data.  It was
shown that under certain parameter values the newly proposed model agrees with the
experimental data.

There are many avenues for future research left uncompleted in this work.  Relaxation of
the assumptions outlined at the beginning of Chapter \ref{ch:Transport} (Sections
\ref{sec:Assumptions} and \ref{sec:simplifying_assumptions}) provide the
initial avenues for further research.  
\begin{enumerate}
    \item It is well known that clay soils swell when wetted.  Relaxing the rigidity
        assumption on the solid phase would allow for this phenomenon. Mathematically,
        though, this creates further complications in the liquid- and gas-phase mass
        conservation equations since the divergence of the solid-phase velocity will no
        longer be zero due to the remaining terms in the solid-phase mass balance
        equation:
        \[ \epsdot{s} + \eps{s} \diver \bv{s} = \sum_{\al \ne s} \ehat{\al}{s}. \]
        The right-hand side of this equation may be zero in most cases (no dissolution or
        precipitation of solid particles), but a constitutive equation would be needed for
        $\eps{s}$ or $\epsdot{s}$.  This has further complications in that saturation, $S
        = \eps{l} / (1-\eps{s})$, now varies with solid-phase volume fraction as well as
        liquid phase volume fraction.

        The stress in the solid phase was discussed in Chapter \ref{ch:Exploit} as a
        consequence of the entropy inequality.  Upon further investigation this will give
        a generalization to the Terzhagi Stress Principle which states that the fluid
        phases can support some stress on the solid matrix. This principle will be
        necessary to express the stresses on the deforming solid.

    \item Changing the soil parameters and studying the sensitivity to empirical and
        measured relationships can be another avenue of future research.  In the present
        study the soil is
        assumed to be isotropic. For this reason it is possible to take the permeability
        as a scalar function. This is not necessarily a reasonable assumption in
        real physical problems, and one possible avenue of research is to relax this
        assumption and take a full tensor representation of the permeability. Clearly a 2-
        or 3-dimensional simulation would have to be considered in this research.
        
        Another adjustment to the soil properties is to the
        relative permeability and capillary pressure - saturation functions. The van
        Genuchten relationships were used for this work, but these are not the only
        functional forms available. 
        The Fayer-Simmons model \cite{Fayer1995}, for example, is an extension of the van
        Genuchten capillary pressure - saturation relationship that accounts for very
        small saturation content. A simple avenue for future research is to implement this
        model (and others like it) and the study to the sensitivity of the models to these
        small changes.

    \item The present model can be extended to multiple fluid phases. In this model we
        have only considered one liquid phase and one gas phase.  It is reasonable to
        assume that the gas phase is a binary ideal gas, but it is not always reasonable
        to assume that the only fluid present is a pure liquid. 
        
        One possible avenue for
        future research is to assume that the liquid phase has a dissolved contaminant. In
        that case it is not reasonable to assume that the density of the liquid is only a
        function of temperature. Furthermore, the diffusive term in the liquid phase
        equation may not be absent (depending on the type of contaminant).  
        
        Another possible case is where several liquid phases are present. In this case it
        would be necessary to include a mass conservation equation (and related Darcy-type
        constitutive law) for this other fluid phase.  This is a common case when water,
        oil, and gas are present within the porous matrix.  This type of system has seen
        recent media attention due to the practice of fracking (driving high pressure
        water and chemicals into porous rock to release oil and natural gas). 
\end{enumerate}

As mentioned previously in this chapter, an avenue for future research is to shift the
focus from modeling to analysis.  The existence and uniqueness results presented in
Chapter \ref{ch:Transport_Solution} are incomplete, and more research needs to be
completed to give a full analytic description of the behavior of these equations.  Of
particular interest is the study of the fully coupled system.  There are several papers
discussing {\it strongly coupled system} of reaction diffusion equations (e.g.
\cite{Amann1990,Le2005}).  These papers serve as a starting point to understanding the
analysis for the coupled system.

A third avenue for future research is to focus on the numerical method for solving the
coupled system.  The numerical solutions presented in Chapter \ref{ch:Transport_Solution}
were found using \texttt{Mathematica}'s \texttt{NDSolve} package.  This is a general purpose
finite difference solver designed to solve wide classes of ordinary and partial
differential equations.  Even so, there are problems with the methods used within that
package.  The foremost issue is the use of central differencing schemes. When solving
advective (or hyperbolic-type) equations it is often advantageous to use upwind-type
numerical schemes.  This is not possible with the \texttt{NDSolve} package and causes some
issues with the numerical solutions in advection-dominated simulations.

Future research for the numerical method can be approached in several ways:
\begin{enumerate}
    \item The equations of interest in this work form a system of conservation laws, and
        as such a finite volume method is likely the best choice. Peszy\'nska
        and Yi \cite{Peszynska2008} derived a cell centered finite difference method and a
        locally conservative Euler-Lagrange method based on the finite difference method
        for the saturation equation with the third-order dynamical capillary pressure
        term.  These are both similar to finite volume methods, and the methods derived
        therein can possibly serve as a basis for extension to the coupled system. This
        paper is of particular interest as the bulk of the numerical difficulties in the
        saturation equation arise as a result of the weight of the third-order term.
    \item There are other packages available to solve general classes of partial
        differential equations.  \texttt{COMSOL} (Computational Multi Physics) is one such
        package that is common amongst engineering groups.  This is a particularly
        non-mathematical approach to future research, but \texttt{COMSOL} and other
        packages can be used to test many cases of parameters and terms suggested by the
        entropy inequality.

    \item The numerical simulations used to compare to the experimental data were solved
        in one spatial dimension.  While this assumption is approximately valid for the
        experiment of interest, these equations need to be compared against
        multi-dimensional data.  One avenue of future research is to explore the numerical
        solution to the system and compare it to a two-dimensional experiment. One such
        experiment can be found in \cite{Smits2012}. This experiment is of interest since
        many of the parameters are the same as those used in the column experiment. 
\end{enumerate}

\vspace{0.2in}
\noindent {\bf Final Words} \\
The initial purpose of this work was to explore the use of the chemical potential as a
modeling tool in porous media.  It was demonstrated that the chemical potential is a
powerful modeling tool when the underlying physical processes are diffusive in nature.
The down sides to using the chemical potential are that it is indirectly measured and not
widely understood.  Within this work the main advantage to using the chemical potential
was to rewrite the gas-phase diffusion equation.  This allowed for the removal of the
enhancement factor from the classical diffusion equation.

In this work we derived three new equations that, when coupled, form a set of governing
equations for heat and moisture transport in porous media that explains previously
unexplained phenomena.  The ideas and questions proposed within this chapter set up a
research agenda for future years of scholarly work.

%
\begin{appendix}
    \chapter{Microscale Nomenclature}\label{app:pore-scale_nomenclature}
This appendix contains nomenclature for Part I: Pore-Scale modeling. While there is some
overlap in notation between the two parts, this appendix allows the notation in Chapter
\ref{ch:diffusion_comp} to stand alone. The equation references indicate the approximate first
instance of the symbol (these are hyperlinked in the digital version for ease of use).
The supporting text for these equations usually gives context and more detail.

\underline{Superscripts, Subscripts, and Other Notations}
\begin{itemize}
    \item[] $(\cd)^{\aj}$: $j^{th}$ component of $\al-$phase 
    \item[] $(\cd)^\al$: $\al-$phase
    \item[] $(\cd)^{a,b}$: difference of two quantities, $(\cd)^{a,b} =
        (\cd)^a - (\cd)^b$
    \item[] $\foten{a}$: (bold symbol) vector quantity
    \item[] $(\cd)_*$: a reference quantity or a quantity evaluated at
        a reference state
\end{itemize}

\noindent \underline{Latin Symbols}
\begin{itemize}
    \item[] $c^{g_j}$: molar concentration of the $j^{th}$ constituent in the gas phase
        [mol/length$^3$] \eqref{eqn:Ficks_molar_flux}
    \item[] $C^{\aj}$:   Mass fraction of $j^{th}$ compontnent $C^{\aj} = \rho^{\aj}/\rhoa$
        [-] \eqref{eqn:Ficks_mass_flux}
    \item[] $D$:   diffusion coefficient [length$^2$/time] \eqref{eqn:Ficks_mass_flux} -
        \eqref{eqn:Ficks_molar_flux_chempot}
    \item[] $D_\gamma$: diffusion coefficient associated with the $\gamma$-form of Fick's
        law [length$^2$/time] \eqref{eqn:Ficks_mass_flux} -
        \eqref{eqn:Ficks_molar_flux_chempot}
    \item[] $\foten{J}^{g_j}$:   flux of species $j$ in the gas phase
        [mass/(length$^2$-time)] \eqref{eqn:Ficks_mass_flux} -
        \eqref{eqn:Ficks_molar_flux_chempot}
    \item[] $m^{\aj}$:   molar mass of the $j^{th}$ constituent in phase $\al$ [mass/mol]
        \eqref{eqn:mass_molar_diffusion_coeff_comparison_v1}
    \item[] $p^{\aj}$:   partial pressure of species $j$ in phase $\al$ [force/length$^2$]
        \eqref{eqn:chempot_pore_defn}
    \item[] $p^{\al}$:   pressure of $\al$ phase [force/length$^2$] \eqref{eqn:chempot_pore_defn}
    \item[] $R$: universal gas constant [energy/(mass-temperature)] \eqref{eqn:Ficks_molar_flux_chempot}
    \item[] $R^{g_j}$:   specific gas constant for species $j$ in the
        gas phase [energy/(mass-temperature)] \eqref{eqn:Ficks_mass_flux_chempot}
    \item[] $T$:   absolute temperature \eqref{eqn:Ficks_mass_flux_chempot}
    \item[] $t$:   time \eqref{eqn:mass_balance}
    \item[] $\bv{\aj}$:   velocity of species $j$ in phase $\al$ [length/time]
        \eqref{eqn:Ficks_mass_flux}
    \item[] $\bv{\al}$:   velocity of phase $\al$ [length/time]
        \eqref{eqn:Ficks_mass_flux}
    \item[] $x^{g_j}$: molar concentration of $j^{th}$ constituent in the gas mixture [-]
        \eqref{eqn:Ficks_molar_flux}
\end{itemize}

\noindent \underline{Greek Symbols}
\begin{itemize}
    \item[] $\mu^{g_v}$:   chemical potential of water vapor in gas phase [energy/mass]
        \eqref{eqn:Ficks_mass_flux_chempot} - \eqref{eqn:Ficks_molar_flux_chempot}
    \item[] $\mu^{g_v}_*$:   chemical potential of water vapor at standard temperature
        and pressure [energy/mass] \eqref{eqn:mass_molar_diffusion_coeff_comparison_v1}
    \item[] $\rhoaj$:   mass density of species $j$ in phase $\al$
        [mass/length$^3$] \eqref{eqn:Ficks_mass_flux}
    \item[] $\rhoa$:   mass density of phase $\al$ [mass/length$^3$]
        \eqref{eqn:Ficks_mass_flux}
    \item[] $\rho_{sat}$:   mass density of water vapor under saturated
        conditions [mass/length$^3$] (paragraph before \eqref{eqn:flux_mu_c})
\end{itemize}

%
    \chapter{Macroscale Appendix} 
    \section{Nomenclature} \label{app:nomenclature}
This appendix contains nomenclature for Part II: Macroscale modeling. While there is some
overlap in notation between the two parts, this appendix allows Part II to stand alone.
This appendix also clarifies any notational discrepancies between the pore-scale and
macroscale models. The equation references indicate the approximate first instance of the
symbol (these are hyperlinked in the digital version for ease of use).  The supporting
text for these equations usually gives context and more detail.

\underline{Superscripts, Subscripts, and Other Notations}
\begin{itemize}
    \item[] $(\cd)^{\aj}$: \quad $j^{th}$ component of $\al-$phase on
        macroscale
    \item[] $(\cd)^\al$: \quad $\al-$phase on macroscale
    \item[] $\hat{(\cd)}$: \quad denotes exchange from other interface,
        phase, or component
    \item[] $(\cd)^{a,b}$: \quad difference of two quantities, $(\cd)^{a,b} =
        (\cd)^a - (\cd)^b$
    \item[] $(\cd)^j$: \quad pore scale property of component $j$
    \item[] $\foten{a}$: \quad (bold symbol) vector quantity 
    \item[] $\soten{A}$: \quad second order tensor (matrix)
    \item[] $(\cd)_0$ or $(\cd)_*$: \quad reference state
\end{itemize}

\underline{Latin Symbols}
\begin{itemize}
    \item[] $a$: \quad fitting parameter for enhanced diffusion model [-]
        \eqref{eqn:enhancement_cass}
    \item[] $b^{\aj}, b^\al$: \quad External entropy source [energy /
        (mass-time-temperature)] \eqref{eqn:micro_entropy}
    \item[] $C^{\aj}$: \quad Mass fraction of $j^{th}$ compontnent [-]
        \eqref{eqn:mass_concentration}
    \item[] $C_u^l$: \quad Coefficients of $\grad u$ terms in generalized Darcy's law
        ($u=S,T,\rh$) \eqref{eqn:Darcy_pressure_one_liquid}
    \item[] $\soten{C}^{s}$: \quad Right Cauchy-Green tensor of the solid phase $=
        (\soten{F}^{s})^T \cd \soten{F}^s$ [-] \eqref{eqn:CauchyGreen}
    \item[] $\Cbars$: \quad Modified right Cauchy-Green tensor $= (\Fbars)^T \cd \Fbars$ [-] \eqref{eqn:CauchyGreen}
    \item[] $\soten{D}^\al$: \quad  diffusivity tensor [length$^2$/time]
        \eqref{eqn:Ficks_v1}
    \item[] $\mathcal{D}$:  \quad generalized diffusivity function [length$^2$/time]
        \eqref{eqn:modified_diffusion_coefficient}
    \item[] $\bd{\al}$: \quad Rate of deformation tensor $=(\grad \bv{\al})_{sym}$
        [1/time] \eqref{eqn:entropy}
    \item[] $e^{\aj}, e^\al$: \quad energy density [energy/mass] \eqref{eqn:energy_balance_species}
    \item[] $\ehatbaj$: \quad rate of mass transfer from phase $\beta$ to component $j$ in
        phase $\al$ per unit mass density [1/time] \eqref{eqn:upscaled_cons_mass_v2}
    \item[] $\ehatba$: \quad rate of mass transfer from phase $\beta$ to phase $\al$ per
        unit mass density [1/time] \eqref{eqn:upscaled_cons_mass_summed}
    \item[] $\soten{F}^s$: \quad Deformation gradient of the solid phase [-] \eqref{eqn:DeformationGradient}
    \item[] $\Fbars$: \quad Modified deformation gradient of the solid phase [-] \eqref{eqn:DeformationGradient}
    \item[] $\foten{g}$, $\foten{g}^{\aj}, \foten{g}^\al$: \quad gravity [length/time$^2$]
        \eqref{eqn:pore_scale_mom_balance_eulerian}
        \item[] $h^{\aj}, h^{\al}$: \quad external supply of energy [energy / (mass-time)]
            \eqref{eqn:energy_balance_species}
    \item[] $\hat{\foten{i}}^{\aj}$: \quad rate of momentum gain due to interaction with
        other species within the same phase per unit mass density [force/mass] \eqref{eqn:upscaled_mom_balance_species}
    \item[] $J^s$: \quad Jacobian of the solid phase [-] \eqref{eqn:jacobian}
    \item[] $\soten{K}^\al$:  \quad hydraulic conductivity tensor for $\al$ phase
        [length/time] \eqref{eqn:mom_near_eq}
    \item[] $\soten{K}$:  \quad thermal conductivity tensor
        [energy/(mass-time-temperature)] \eqref{eqn:heat_near_eq}
    \item[] $m$: \quad van Genuchten parameter ($m=1-1/n$) [-]
        \eqref{eqn:vanGenuchten_krw}
    \item[] $n$:  \quad van Genuchten {\it pore-size distribution} parameter ($n=1/(1-m)$) [-] \eqref{eqn:vanGenuchten_krw}
    \item[] $p^\al$:  \quad classical pressure in the $\al$ phase [force/length$^2$]
        \eqref{eqn:generalpressure}
    \item[] $\pp{\al}{\beta}$:  \quad cross coupling classical pressure [force/length$^2$]
        \eqref{eqn:pressure_al_beta}
    \item[] $p_c = p^g - p^\ell$: \quad capillary pressure [force/length$^2$] \eqref{eqn:cap_pressure_wet_nonwet}
    \item[] $\overline{p}^\al$:\quad thermodynamic pressure [force/length$^2$] \eqref{eqn:pbar_total}
    \item[] $\ppbar{\al}{\beta}$:\quad cross coupling thermodynamic pressure
        [force/length$^2$] \eqref{eqn:pbar_al_beta}
    \item[] $\bq^{\al}$:\quad  heat flux for $\al$ phase [energy/(length$^2$-time)] \eqref{eqn:energy_phase}
    \item[] $\bq$: \quad total heat flux [energy/(length$^2$-time)] \eqref{eqn:total_heat_flux}
    \item[] $\bq^{\al}$: \quad Darcy flux for $\al$ phase [length/time]
        \eqref{eqn:classical_Darcy_v2}
    \item[] $\hat{Q}^{\aj}$:\quad rate of energy gain due to interaction with other species
        within the same phase per unit mass density not due to mass or momentum transfer
        [energy/(mass-time)] \eqref{eqn:energy_balance_species}
    \item[] $\hat{Q}_\al^{\beta_j}$:\quad energy transfer from phase $\al$ to constituent $j$ in
        phase $\beta$ per unit mass density not due to mass or momentum transfer
        [energy/(mass-time)] \eqref{eqn:energy_balance_species}
    \item[] $\foten{r}$:\quad microscale spatial variable [length] \eqref{eqn:local_spatial_coord}
    \item[] $\hat{r}^{\aj}$:\quad rate of mass gain due to interaction with other species within
        the same phase per unit mass density [1/time] \eqref{eqn:upscaled_cons_mass_v2}
    \item[] $R$:\quad gas constant [energy/(mol-temperature)] \eqref{eqn:ideal_gas_law}
    \item[] $R^{g_j}$:\quad  specific gas constant for $j^{th}$ constituent in gas phase
        [energy/(mass - temperature)] \eqref{eqn:diffusion_coeff_conversion}
    \item[] $\soten{R}^{\al}$:\quad resistivity tensor [time/length] \eqref{eqn:mom_near_eq}
    \item[] $S=\eps{l}/(\porosity)$:\quad  liquid saturation [-]
        \eqref{eqn:defn_saturation_v1}
    \item[] $t$:\quad  time 
    \item[] $T$:\quad  absolute temperature \eqref{eqn:Legendre}
    \item[] $\stress{\aj}$:\quad partial stress tensor for the $j^{th}$ constituent in the $\al$
        phase [force/length$^2$] \eqref{eqn:upscaled_mom_balance_species}
    \item[] $\stress{\al}$:\quad total stress tensor for $\al$ phase [force/length$^2$]
        \eqref{eqn:upscaled_mom_balance_phase}
    \item[] $\Thatbaj$:\quad rate of momentum transfer through mechanical interactions from phase
        $\beta$ to the $j^{th}$ constituent of phase $\al$ [force/length$^3$]
        \eqref{eqn:upscaled_mom_balance_species}
    \item[] $\Thatba$:\quad rate of momentum transfer through mechanical interactions from phase
        $\beta$ to phase $\al$ [force/length$^3$]
        \eqref{eqn:upscaled_mom_balance_phase}
    \item[] $\bv{\al}$:\quad  velocity of the $\al$ phase relative to a fixed coordinate
        system [length/time] \eqref{eqn:upscaled_cons_mass_summed}
    \item[] $\bv{\aj,\al} = \bv{\aj} - \bv{\al}$:\quad  diffusive velocity [length/time]
        \eqref{eqn:entropy}
    \item[] $\bv{\al,s} = \bv{\al} - \bv{s}$:\quad  velocity relative to solid phase velocity
        [length/time] \eqref{eqn:entropy}
    \item[] $\foten{w}_{\al \beta_j}$:\quad velocity of constituent $j$ at interface between
        phases $\al$ and $\beta$ [length/time] \eqref{eqn:averaging_theorem_equations}
\end{itemize}

\underline{Greek Symbols}
\begin{itemize}
    \item[] $\al$: \quad phase
    \item[] $\al$: \quad van Genuchten parameter [-] \eqref{eqn:van_Genuchten_capillary_pressure}
    \item[] $\beta$: \quad phase
    \item[] $\delta(t)$:\quad Dirac delta function [-] 
    \item[] $\delta A_{\al \beta}$: \quad Portion of the $\al \beta-$
        interface in REV \eqref{eqn:averaging_theorem_equations}
    \item[] $\eps{\al}$: \quad volumetric content of $\al$ phase per volume of REV [-] \eqref{eqn:mass_balance_averaged_terms}
    \item[] $\porosity = \eps{\ell} + \eps{g}$: \quad porosity [-]
        \eqref{eqn:porosity_defn_v2}
    \item[] $\eta^\al$: \quad specific entropy of $\al$ phase [energy / (mass -
        temperature)] \eqref{eqn:entropy_upscaled_v1}
    \item[] $\hat{\eta}^{\aj}$:\quad entropy gain due to interaction with other species within
        the same phase per unit mass density [energy/(mass-time-temperature)] \eqref{eqn:entropy_upscaled_v1}
    \item[] $\eta$: \quad enhancement factor for diffusion [-] \eqref{eqn:multiplicative_enhancement}
    \item[] $\Gamma^\al$: \quad macroscale Gibbs potential [energy/mass] \eqref{eqn:Gibbs_Duhem}
    \item[] $\gamma^\al$:\quad indicator function which is $1$ if in phase $\al$ and zero
        otherwise \eqref{eqn:averaging_theorem_equations}
    \item[] $\kappa$:\quad  permeability [length$^2$] \eqref{eqn:conductivity_permeability}
    \item[] $\hat{\Lambda}^\al$:\quad  rate of entropy production for $\al$ phase [entropy
        / time] \eqref{eqn:entropy_upscaled_v1}
    \item[] $\lambda^{\aj}$:\quad Lagrange multiplier for the continuity equation of phase
        $\al$ \eqref{eqn:entropy_modified}
    \item[] $\foten{\lambda}^{\al_N}$:\quad Lagrange multiplier for the $N^{th}$ term dependence
        relation of the components in phase $\al$ \eqref{eqn:entropy_modified}
    \item[] $\lambda = p^{g_v}/p^g$:\quad  ratio of partial pressure to bulk pressure in gas
        phase [-] \eqref{eqn:water_vapor_chem_pot_macro}
    \item[] $\mu_\al$:\quad dynamic viscosity for phase $\al$ [force-time] \eqref{eqn:conductivity_permeability}
    \item[] $\mu^{\aj}$:\quad  macroscale chemical potential of $j^{th}$ species in $\al$ phase
        [energy/mass]  \eqref{eqn:chem_pot_defn}
    \item[] $\nu_\al$:\quad kinematic viscosity for phase $\al$ \eqref{eqn:hydraulic_cond_nu}
    \item[] $\hat{\Phi}_\beta^{\aj}$: \quad Entropy transfer through mechanical
        interactions from phase $\beta$ to phase $\al$ per unit mass
        [energy/(mass-time-temperature)] \eqref{eqn:entropy_upscaled_v1}
    \item[] $\pi^\al$:\quad  wetting potential of $\al$ phase [force/length$^2$] \eqref{eqn:pi_total}
    \item[] $\pi^{\al_{(\beta)}}$:\quad  cross coupling wetting potential [force /
        length$^2$] \eqref{eqn:pi_al_beta}
    \item[] $\varphi$: \quad relative humidity [-] \eqref{eqn:mu_rho_functional_forms}
    \item[] $\psi^\al$: \quad specific Helmholtz potential of the $\al$ phase
        [energy/mass] \eqref{eqn:Legendre}
    \item[] $\rhoa$: \quad mass density of $\al$ phase (mass $\al$ per volume of $\al$) [mass /
        length$^3$] \eqref{eqn:bulk_density}
    \item[] $\rhoaj$: \quad mass density of $j^{th}$ constituent in $\al$ phase (mass $\aj$ per
        volume $\al$) [mass / length$^3$] \eqref{eqn:bulk_density}
    \item[] $\rho_{sat}$: \quad saturated vapor density [force / length$^2$] \eqref{eqn:density_rel_hum_macro}
    \item[] $\tau$: \quad tortuosity [-] \eqref{eqn:multiplicative_enhancement}
    \item[] $\tau$: \quad scaling coefficient for dynamic saturation term [-] \eqref{eqn:epsdot_near_eq}
\end{itemize}

%
    \section{Upscaled Definitions} \label{app:UpscaledDefinitions}
Definitions of bulk phase, species, and averaged variables resulting from
upscaling. Recall that an overbar indicates a mass averaged quantity and
angular brackets indicate a volume averaged quantity.
\[ \left< (\cd) \right>^\al = \frac{1}{|\delta V_\al|} \int_{\delta V} (\cd)
\gamma_\al dv \quad \text{and} \quad \overline{(\cd)}^\al =
\frac{1}{\left<\rho\right>^\al |\delta V_\al|} \int_{\delta V_\al} (\cd)
\gamma_\al dv \]
\begin{flalign}
    b^{\aj} &= \overline{b^j}^\al \\
    b^\al &= \sumj C^{\aj} b^{\aj} \\
    C^{\aj} &= \frac{\rho^{\aj}}{\rhoa} \\
    e^{\aj} &= \overline{e^j}^\al + \frac{1}{2} \overline{\bv{j} \cd
    \bv{j}}^\al - \frac{1}{2} \bv{\aj} \cd \bv{\aj} , \\
    e^\al &= \sumj C^{\aj} \left( e^{\aj} + \frac{1}{2} \bv{\aj,\al} \cd
    \bv{\aj,\al} \right) \\
    \ehatbaj &= \frac{\epsa}{|\delta V_\al|} \int_{A_{\al \beta}} \rho^j 
    \left( \foten{w}_{\al \beta_j} - \bv{j} \right) \cd \foten{n}^\al da \\
    \ehatba &= \sumj \ehatbaj \\
    \foten{g}^{\aj} &= \overline{\foten{g}^j}^\al \\
    \foten{g}^\al &= \sumj C^{\aj} \foten{g}^{\aj} \\
    h^{\aj} &= \overline{h^j}^{\al} + \overline{\foten{g}^j \cd
    \bv{j}}^{\al} - \foten{g}^{\aj} \cd \bv{\aj}, \\
    h^\al &= \sumj C^{\aj} \left( h^{\aj} + \foten{g}^{\aj} \cd
    \bv{\aj,\al} \right) \\
    \hat{\foten{i}}^{\aj} &= \epsa \rhoaj \left(
    \overline{\hat{\foten{i}}^j}^\al + \overline{\hat{r}^j \bv{j}}^\al -
    \bv{\aj} \overline{\hat{r}^j}^\al \right) \\
    \notag \bq^{\aj} &= \left< \bq^j \right>^\al + \left< \stress{j} \cd
    \bv{j} \right>^\al - \stress{\aj} \cd \bv{\aj} + \rho^{\aj} \bv{\aj}
    \left( e^{\aj} + \frac{1}{2} \bv{\aj} \cd \bv{\aj} \right) \\
    & \quad - \rhoaj \overline{\bv{j} \left( e^j + \frac{1}{2} \bv{j} \cd
    \bv{j} \right)}^\al , \\
    \bq^\al &= \sumj \left[ \bq^{\aj} + \stress{\aj} \cd \bv{\aj,\al} -
    \rhoaj \left( e^{\aj} + \frac{1}{2} \bv{\aj,\al} \cd \bv{\aj,\al}
    \right) \bv{\aj,\al} \right] \\
    \notag \hat{Q}^{\aj} &= \epsa \rhoaj \left[ \overline{\hat{Q}^j}^\al +
    \overline{\foten{i}^j \cd \bv{j}}^\al - \left(
    \overline{\hat{\foten{i}}^j}^\al + \overline{\hat{r}^j \bv{j}}^\al -
    \bv{\aj} \overline{\hat{r}^j}^\al \right)\cd \bv{\aj} \right. \\
    & \quad \left. + \overline{\hat{r}^j \left( e^j + \frac{1}{2} \bv{j}
    \cd \bv{j}\right)}^\al - \overline{\hat{r}^j}^\al \left(
    \overline{e^j}^\al + \frac{1}{2} \overline{\bv{j} \cd \bv{j}
    }^\al\right)\right] , \\
    \notag \hat{Q}_{\beta}^{\aj} &= \frac{\epsa}{|\delta V_\al|} \left\{
    \int_{A_{\al \beta}} \left[ \bq^j + \stress{j}\cd \bv{j} + \rho^j
    \left( e^j + \frac{1}{2} \bv{j} \cd \bv{j} \right)\left(\foten{w}_{\al
    \beta_j} - \bv{j} \right) \right] \cd \foten{n}^{\al} da . \right. \\
    \notag & \qquad \qquad - \left( e^{\aj} - \frac{1}{2} \bv{\aj} \cd
    \bv{\aj} \right) \int_{A_{\al \beta}} \rho^j \left( \foten{w}_{\al
    \beta_j} - \bv{j} \right) \cd \foten{n}^\al da \\
    & \qquad \qquad \left. - \bv{\aj} \int_{A_{\al \beta}} \left[ \stress{j}
    + \rho^j \bv{j} \left( \foten{w}_{\al \beta_j} - \bv{j} \right)\right]
    \cd \foten{n}^\al da \right\} \\
    \hat{Q}_\beta^\al &= \sumj \left[ \hat{Q}_\beta^{\aj} + \Thatbaj \cd
    \bv{\aj,\al} + \ehatbaj \left( e^{\aj} + \frac{1}{2} \bv{\aj,\al} \cd
    \bv{\aj,\al} \right) \right] \\
    \hat{r}^{\aj} &= \epsa \rhoaj \overline{\hat{r}^j}^\al \\
    \stress{\aj} &= \left< \stress{j} \right>^{\al} + \rho^{\aj} \bv{\aj}
    \otimes \bv{\aj} - \rhoaj \overline{\bv{j} \otimes \bv{j}}^\al \\
    \stress{\al} &= \sumj \left( \stress{\aj} - \rho^{\aj} \bv{\aj,\al}
    \otimes \bv{\aj,\al} \right) \\
    \notag \Thatbaj &= \frac{\epsa}{|\delta V_\al|} \left[ \int_{A_{\al
    \beta}} \left[ \stress{j} + \rho^j \bv{j} \left( \foten{w}_{\al
    \beta_j} - \bv{j} \right) \right] \cd \foten{n}^{\al} da \right. \\
    & \qquad \qquad \left. - \bv{\aj} \int_{A_{\al \beta}} \rho^j \left(
    \foten{w}_{\al \beta_j} - \bv{j} \right) \cd \foten{n}^\al da \right]
    \\
    \Thatba &= \sumj \left( \Thatbaj + \ehatbaj \bv{\aj,\al} \right) \\
    \bv{\aj} &= \overline{\bv{j}}^\al \\
    \bv{\al} &= C^{\aj} \bv{\aj} \\
    \eta^{\aj} &= \overline{\eta^j}^\al \\
    \eta^{\al} &= \sumj C^{\aj} \eta^{\aj} \\
    \hat{\eta}^{\aj} &= \epsa \rhoaj \left( \overline{\hat{\eta}^j}^{\al} +
    \overline{\hat{r}^j \eta^j}^\al - \overline{\hat{r}^j}^\al \eta{\aj}
    \right) \\
    \hat{\Lambda}^{\aj} &= \overline{\hat{\Lambda}^j}^\al \\
    \hat{\Lambda}^{\al} &= \sumj \hat{\Lambda}^{\aj} \\
    \psiaj &=  \overline{\psi^j}^{\al} \\
    \psia &= \sumj C^{\aj} \psiaj \\
    \rho^{\aj} &= \overline{\rho^j}^\al \\
    \rhoa &= \sumj \rhoaj 
\end{flalign}

%
    \section{Identities Needed to Obtain Inquality \ref{eqn:entropy}} \label{app:EntropyIdentities}
\begin{flalign}
    \notag \sumj \left( \frac{\epsa \rhoaj}{T} \md{\aj}{\psiaj} \right) &= \frac{\epsa \rhoaj}{T}
    \md{\aj}{\psia} + \frac{\psia}{T} \ehatba \\
    \notag & \quad + \sumj \left\{ \frac{1}{T} \bv{\aj,\al} \cd \grad \left( \epsa \rhoaj
    \psiaj \right) - \frac{\psiaj}{T} \ehatbaj \right. \\
    & \qquad \qquad \quad \left. - \frac{\psiaj}{T} \rhataj - \frac{\epsa \rhoaj}{T}
    \psiaj \diver \bv{\aj,\al} \right\} \\
    \sumj \left( \frac{\epsa \rhoaj}{T} \eta^{\aj} \md{\aj}{T} \right) &=
    \frac{\epsa \rhoa}{T} \md{\al}{T} + \sumj \frac{\epsa \rhoaj}{T} \eta^{\aj}
    \bv{\aj,\al} \cd \grad T \\
    \sumj \left( \frac{\epsa}{T} \stress{\aj} : \grad \bv{\aj} \right) &= \sumj \left\{
        \frac{\epsa}{T} \stress{\aj} : \grad \bv{\aj,\al} + \frac{\epsa}{T} \stress{\aj} :
        \grad \bv{\aj} \right\} \\
    \sumj \sumba \hat{\Psi}^{\aj}_\beta &= -\sumj \sumba \ehatbaj \eta^{\aj} \\
    \sumj \hat{Q}^{\aj} &= -\sumba \left[ \ihataj \cd \bv{\aj,\al} + \rhataj \left( \psiaj
        + T \eta^{\aj} \frac{1}{2} \left( \bv{\aj,\al} \right)^2 \right) \right] \\
    \notag \sumj \sumba \hat{Q}^{\aj}_\beta &= -\suma \sumba \left\{ \Thatba \cd \bv{\al,s} +
    \frac{1}{2} \ehatba \left( \bv{\al,s} \right)^2 \right. \\
    \notag & \quad +\left. \sumj \left[ \Thatbaj \cd \bv{\aj,\al} + \frac{1}{2} \ehatbaj
        \left( \bv{\aj,\al} \right)^2 \right] \right\} \\
    & \quad + \sumba \sumj \left\{ \ehatbaj \left( \psiaj + T \eta^{\aj} \right) \right\}
\end{flalign}

%
    \chapter{Exploitation of the Entropy Inequality -- An Abstract Perspective} \label{app:AbstractEntropy}
This short appendix is meant to give a brief and abstract description of how the
exploitation of the entropy inquality works.  By ``abstract'' we mean that we will not
assign any physical meaning to the variables.  Instead we will simply state how the
variables relate to each other and how they relate to the full set of chosen independent
variables.  The secondary purpose of this appendix is to make clear a few assumptions
related to constitutive equations that are necessary in order for the exploitation of the
entropy inequality to be successfull.  We conclude with an inequality that dictates how
the linearization of the constitutive relations must behave in order not to violate the
second law of thermodynamics. This is similar to the 1968 Nobel Prize winning analysis by
Onsager, who showed the {\it reciprocal relations} that must hold at equilibrium for
irreversible processes.

Let $S$ be the set of all independent variables for the Helmholtz potential. This set
defines the physical system of interest. Define the following sets:
\begin{itemize}
    \item $\{ x_j \} := $ the set of all variables that are neither constitutive nor
        independent.  Examples typically include $\dot{T}, \dot{\grad T}, \dot{\grad \rho^{l_j}},
        \cdots$.
    \item $\{ y_k \} := $ the set of all constitutive variables which are zero at
        equilibrium.  Examples typically include
        $\epsdot{\al}$, $\ehat{\al}{\beta_j}$, and $\hat{r}^{\aj}$.
    \item $\{ \tilde{y}_\kappa \} :=$ the set of all constitutive variables which are not
        zero at equilibrium.  Examples typically include $\Thatbaj$, and $\ihataj$.
    \item $\{ z_l \} := $ the set of all variables that are zero at equilibrium. Examples
        typically include $\grad T$, $\bv{\al,s}$, $\bv{\aj,\al}$, and $\bd{\al}$.
\end{itemize}
Since each $z_l$ is an independent variable it is clear that $\{ z_l \} \subset S$.
Furthermore we observe that $\{ x_j \} \cap S = \emptyset.$ The constitutive variables, on
the other hand, are known to be functions of variables in $S$ and as such the statement that
$\left(  \{ y_k \} \cup \{ \tilde{y}_\kappa\} \right) \cap S = \emptyset$ is easily
misinterpreted. It is a true statement that $\left( \{
y_k \} \cup \{ \tilde{y}_\kappa \}
\right) \cap S = \emptyset$, and it is correct not to choose constitutive variables as
independent variables. The confusion is in the fact that $y_k = y_k(S)$ for all $k$ (and
for all $\kappa$).

The rate of entropy generation, $\hat{\Lambda}$, {\it can} be written as a linear
combination of the variables from $\{ x_j \}$, $\{ y_k \}$, and $\{ z_l \}$, where the
coefficients are functions of variables from $S$. That is,
\begin{flalign}
    0 \le \suma \hat{\Lambda}^\al = \hat{\Lambda} &= \sum_j \left( x_j X_j \right) +
    \sum_k \left( y_k Y_k \right) + \sum_l \left( z_l Z_l \right) \ge 0
    \label{eqn:entropy_full}
\end{flalign}
where
\begin{flalign}
    X_j = X_j(S,\tilde{Y}_\kappa (S)), \quad Y_k = Y_k(S,\tilde{Y}_\kappa (S)), \quad
    \text{and} \quad Z_l = Z_l(S,\tilde{Y}_\kappa (S)).
\end{flalign}
This is not the only way to algebraically rearrange $\hat{\Lambda}$, but this is what is
commonly done during the exploitation process.

We now use inequality \eqref{eqn:entropy_full} to derive equations that hold for all time,
at equilibrium, and near equilibrium.

\section{Results that Hold For All Time}
We have {\it no control} over the variables $x_j$ since they are neither constitutive nor
independent.  This means that they could be positive or negative, large or small.
In order for the second law of thermodynamics to hold for all time, the coefficients $X_j$
must therefore be zero for all time.  This implies that 
\begin{flalign}
    \hat{\Lambda} &= \sum_k \left( y_k Y_k \right) +
    \sum_l \left( z_l Z_l \right) \ge 0.
    \label{eqn:entropy_simp1}
\end{flalign}

\section{Equilibrium Results}
The definition of equilibrium is {\it the state at which all of the variables $\{ y_k \}$
are zero}.  This definition is based on physical intuition and will vary depending on the
system of interest.  From thermodynamics, the rate of entropy generation must be minimized
at equilibrium.  This implies that the gradient of the entropy generation function must be
the zero vector (as understood with $S$ as the independent variables for the gradient).
\begin{flalign}
    0 = \pd{\hat{\Lambda}}{s_i} \Big|_{eq} \, \text{ for all } i.
\end{flalign}
Taking this partial derivative of the right-hand side of \eqref{eqn:entropy_simp1} we see
that
\begin{flalign}
    0 = \pd{\hat{\Lambda}}{s_i} \Big|_{eq} &= \left[ \sum_k \left( y_k \pd{Y_k}{s_i} \right) + \sum_k \left(
    Y_k \pd{y_k}{s_i} \right) + \sum_l \left( z_l \pd{Z_l}{s_i} \right) + \sum_l \left(
    Z_l \pd{z_l}{s_i} \right) \right]_{eq}.
\end{flalign}
Since $z_l |_{eq} = 0 = y_k |_{eq}$ for all $l,k$ and $\pd{z_l}{s_i} = \delta_{il}$ (since
$z_l \in S \, \forall l$) we get
\begin{flalign}
    0 = \pd{\hat{\Lambda}}{s_i} \Big|_{eq} &= \left[ \sum_k \left(
        Y_k \pd{y_k}{s_i} \right) + Z_i \delta_{il} \right]_{eq}.
\end{flalign}

At this point we make an assumption that greatly affects the constitutive variables.
\begin{description}
    \item[Assumption:] At equilibrium we must have
        \begin{flalign}
            \pd{y_k}{z_l} \Big|_{eq} = 0 \text{ for all } l,k
        \end{flalign}
\end{description}
Under this assumption it is clear that $Z_l = 0$ at equilibrium for all $l$.  Notice that
this says nothing about when $s_i \not \in \{ z_l \}$. From this argument, each equation
$Z_l = 0$ gives a constraint on some of the variables in $S$.  The assumption made can be
viewed as a further restriction on the constitutive variables, but it is not clear whether
this assumption is physical.


\section{Near Equilibrium Results}
For the near equilibrium results we consider two types of variables: variables that are zero
at equilibrium and constitutive variables. In each case we linearize about the equilibrium
state.  A typical linearization result for variables which are zero at equilibrium is
\begin{flalign}
    \notag Z_l |_{n.eq} &= \underbrace{(Z_l)|_{eq}}_{=0} + \sum_m \pd{Z_l}{z_m} \Big|_{eq}
    z_m + \sum_n \pd{Z_l}{y_n} \Big|_{eq} y_n + \cdots \\
    &= \sum_m \pd{Z_l}{z_m} \Big|_{eq} z_m + \sum_n \pd{Z_l}{y_n}  \Big|_{eq} y_n + \cdots.
    \label{eqn:linearization1}
\end{flalign}
The value of $(Z_l)|_{eq}$ is zero by the above arguments, and the partial derivatives are
now functions of all of the other variables which are not zero at equilibrium: 
\begin{flalign}
    C_{lm} &:= \pd{Z_l}{z_m} \Big|_{eq} = \pd{Z_l}{z_m} \Big|_{eq}(\xi_1, \xi_2, \dots),
    \\
    D_{ln} &:= \pd{Z_l}{y_n} \Big|_{eq} = \pd{Z_l}{y_n} \Big|_{eq}(\xi_1, \xi_2,
    \dots)
\end{flalign}
for $\xi_n \in S \setminus \{ z_l \}.$ Written more simply
\begin{flalign}
    Z_l |_{n.eq} = C_{lm} z_m + D_{ln} y_n
\end{flalign}
where the summations are implicit over repeated indices.

For the constitutive variables we do a similar linearization, but note that the
equilibrium state is not necessarily zero.  Therefore,
\begin{flalign}
    Y_k |_{n.eq} = (Y_K)|_{eq} + \sum_p \pd{Y_k}{y_p} \Big|_{eq} y_p + \sum_q
    \pd{Y_k}{z_q} \Big|_{eq} z_q + \cdots.
    \label{eqn:linearization2}
\end{flalign}
Making similar definitions as before,
\begin{flalign}
    E_{kp} &:= \pd{Y_k}{y_p} \Big|_{eq} \\
    F_{kq} &:= \pd{Y_k}{z_q} \Big|_{eq},
\end{flalign}
the linearization result for the constitutive equations is
\begin{flalign}
    Y_k |_{n.eq} = Y_k|_{eq} + E_{kp} y_p + F_{kq} z_q
\end{flalign}
where the summations are implicit over repeated indices.

The trouble here is that we must have some information about the equilibrium state of the
constitutive variable. The presumption that this is zero may be non-physical.  An example
of this is the capillary pressure relationship derived in multiphase unsaturated media.  
\begin{flalign}
    p_c = (p_c)|_{eq} + \tau \epsdot{l},
\end{flalign}
where $p_c$ is the capillary pressure, $\tau = \partial p_c / \partial \epsdot{l}$, and
the equilibrium capillary pressure is given as a function of saturation $p_c |_{eq} =
p_c(S)$ via the van Genuchten approximation. This equilibrium constitutive equation is
known not to be zero.  In other systems the issue may be more subtle, but in any case
one needs to have some information (whether from experiments or from other theory) to
define the equilibrium state of the constitutive variable.

\section{Linearization and Entropy}
Consider again equation \eqref{eqn:entropy_simp1}, but now substitute the linearized
results into $Y_k$ and $Z_l$
\begin{flalign}
    \notag 0 \le \hat{\Lambda} &= \left( y_k Y_k \right) + \left( z_l Z_l \right) \\
    \notag &= y_k \left( Y_k|_{eq} + E_{kp} y_p + F_{kq} z_q \right) + z_l \left( C_{lm}
    z_m + D_{ln} y_n \right) \\
    &= y_k Y_k|_{eq} + y_k E_{kp} y_p + y_k F_{kq} z_q + z_l C_{lm} z_m + z_l D_{ln} y_n
\end{flalign}
(summations are again taken over repeated indices). Recognizing the quadratic terms as
matrix products and rewriting in block matrix form gives
\begin{flalign}
    0 &\le  y_k Y_k|_{eq} + \begin{pmatrix} \foten{y} \\ \foten{z} \end{pmatrix}^T
        \begin{pmatrix} \soten{E} & \soten{F} \\ \soten{D} & \soten{C} \end{pmatrix}
            \begin{pmatrix} \foten{y} \\ \foten{z} \end{pmatrix}
     = y_k Y_k|_{eq} + \boldsymbol{\zeta}^T
    \soten{\mathcal{A}} \boldsymbol{\zeta} \label{eqn:entropy_block_form}
\end{flalign}

Notice that in the absence of constitutive variables \eqref{eqn:entropy_block_form}
simplifies to 
\begin{flalign}
    0 \le \foten{z}^T \soten{C} \foten{z}.
\end{flalign}
Simply stated this means that $\soten{C}$ must be positive semidefinite in order for the
second law of thermodynamics to hold.  This is Onsager's Nobel Prize winning result.
Recall that $Z_l|_{n.eq} = C_{lm} z_m$. If we take, for example, $z_m = \grad T$ and
observe that $Z_l|_{n.eq}$ is minus the heat flux near equilibrium (the coefficient in the
entropy inequality associated with $\grad T$ is minus the heat flux), then the $l-m$ entry
in $\soten{C}$ is the heat flux tensor.  Onsager's result dictates the positivity of the
heat flux tensor and the linearized result give Fourier's law: $-\foten{q} = \soten{K}
\grad T$.  In other words, there is no accident that many physical ``laws'' take the same
form as Fourier's law; they are a result of the non-negativity of $\soten{C}$ and the
entropy inequality. This is a simple example, but it should help to elucidate the problem that
arises when constitutive equations are introduced.

Returning to equation \eqref{eqn:entropy_block_form} we see that it is not immediately
obvious that $\soten{\mathcal{A}}$ needs to be positive semidefinite. In fact, the only
way that we can guarantee that $\soten{\mathcal{A}}$ has this property is if $y_k
Y_k|_{eq} \le 0$.  That is, there must be a physical restriction on $y_k Y_k|_{eq}$ that,
when violated, one perceives nonphysical results. 

To give a physical example of this we return to the capillary pressure example.
In this case we have $y_k = \epsdot{l}$ and $Y_k|_{eq} =
p_c(S)$ where we are ignoring all other constitutive equations (or we are taking
$Y_r|_{eq}=0$ for all $r\ne k$). The restriction derived
herein states that $p_c(S) \epsdot{l} \le 0$ for all time. The time derivative can clearly
take either sign, but what this seems to be indicating is that in drainage (when
$\epsdot{l} < 0$) the equilibrium capillary pressure must be positive, and in imbibition
(when $\epsdot{l} > 0$) the equilibrium capillary pressure must be negative. This is a bit
contradictory since ``drainage'' and ``imbibition'' are non-equilibrium phenomena, and as
such it is not possible to measure the {\it equilibrium} capillary pressure at these
states. This leaves us with a conundrum: Is there a fundamental misinterpretation of the
capillary pressure in this example, or is there is a deep-seated flaw in the exploitation
of the entropy inequality.

    \chapter{Summary of Entropy Inequality Results}\label{app:EntropyResults}
The following is a concise collection of the results derived from the entropy inequality.
This appendix is to be used for reference when building the macroscale models.

\section{Results that Hold For All Time}
\begin{itemize}
    \item Helmholtz potential and entropy are conjugate variables (equation
        \eqref{eqn:Helmholtz_entropy_conjugate})
        \begin{flalign}
            \pd{\psia}{T} = -\eta^{\al}.
        \end{flalign}

    \item Lagrange multiplier for fluid phase (equation \eqref{eqn:lagrange_betaj}) 
        \begin{flalign}
            \lambda^{\beta_j} = \suma \frac{\epsa \rhoa}{\eps{\beta}}
            \pd{\psia}{\rho^{\beta_j}}.
        \end{flalign}

    \item Lagrange multiplier for the dependence of the diffusive velocities on the
        $N^{th}$ species (equation \eqref{eqn:lagrange_lamhataN})
        \begin{flalign}
            \lamhataN &= -\frac{1}{\rhoa} \sumj \left[ \stress{\aj} + \left( \rhoaj
            \lamaj \right) \soten{I} \right] + \psia \soten{I}.
        \end{flalign}

    \item Solid phase pressure (equation \eqref{eqn:solid_pressure} )
        \begin{flalign}
            p^s = -\frac{1}{3} tr \left( \stress{s} \right) &= -\frac{J^s}{\eps{s}} \suma
            \left( \eps{\al} \rho^{\al} \pd{\psi^{\al}}{J^s} \right).
        \end{flalign}

    \item Solid phase stress (equation \eqref{eqn:solid_stress})
        \begin{flalign}
            \stress{s} = -p^s \soten{I} + \stress{s}_{e} + \frac{\eps{l}}{\eps{s}}
            \stress{l}_{h} + \frac{\eps{g}}{\eps{s}} \stress{g}_{h} 
        \end{flalign}
        where
        \begin{subequations}
            \begin{flalign}
                \stress{s}_{e} &= 2\left( \rho^{s} \Fbars \cd \pd{\psi^{s}}{\Cbars} \cd \left(
                \Fbars \right)^T -\frac{1}{3} \rho^{s} \pd{\psi^{s}}{\Cbars} : \Cbars \soten{I}
                \right), \\
                \stress{l}_{h} &= 2\left( \rho^{l} \Fbars \cd \pd{\psi^{l}}{\Cbars} \cd \left(
                \Fbars \right)^T -\frac{1}{3} \rho^{l} \pd{\psi^{l}}{\Cbars} : \Cbars \soten{I}
                \right), \\
                \stress{g}_{h} &= 2\left( \rho^{g} \Fbars \cd \pd{\psi^{g}}{\Cbars} \cd \left(
                \Fbars \right)^T -\frac{1}{3} \rho^{g} \pd{\psi^{g}}{\Cbars} : \Cbars \soten{I}
                \right).
            \end{flalign}
        \end{subequations}
\end{itemize}

\section{Equilibrium Results}
\begin{itemize}
    \item Fluid pressures (equations \eqref{eqn:generalpressure},
        \eqref{eqn:pressure_al_beta}, \eqref{eqn:pressure_sum_al_beta}, and
        \eqref{eqn:pbar_al_beta} - \eqref{eqn:three_pressures})

        \begin{flalign}
            p^{\beta} = -\frac{1}{3} tr \left( \stress{\beta} \right) &= \sumj \suma
            \left( \frac{\eps{\al} \rho^{\al} \rho^{\beta_j}}{\eps{\beta}}
            \pd{\psi^{\al}}{\rho^{\beta_j}} \right) \\
            \pp{\al}{\beta} &= \sumj \left( \frac{\eps{\al} \rho^{\al}
            \rho^{\beta_j}}{\eps{\beta}} \left. \pd{\psia}{\rho^{\beta_j}}
            \right|_{\eps{\al}, \rho^{\al_k}, \eps{\beta}, \rho^{\beta_m}} \right) \\
            p^{\beta} &= \suma \pp{\al}{\beta} \\
            \ppbar{\al}{\beta} &:= -\eps{\al} \rho^{\al} \left.
            \pd{\psia}{\eps{\beta}} \right|_{\eps{\al}, \eps{\al} \rho^{\al_k},
            \eps{\beta} \rho^{\beta_k}} \\
            \ppi{\al}{\beta} &:= \eps{\al} \rho^{\al} \left.
            \pd{\psia}{\eps{\beta}} \right|_{\eps{\al}, \rho^{\al_k},
            \rho^{\beta_k}} \\
            \pp{\al}{\beta} &= \ppbar{\al}{\beta} + \ppi{\al}{\beta} \\
            \overline{p}^{\beta} &:= \suma \ppbar{\al}{\beta} \\
            \pi^{\beta} &:= \suma \ppi{\al}{\beta} \\
            p^{\beta} &= \overline{p}^{\beta} + \pi^{\beta}. 
        \end{flalign}

%

    \item Momemtum transfer between phases (equation
        \eqref{eqn:phase_mom_transfer_equilib})
        \begin{flalign}
            \notag -\left( \That{s}{\beta} + \That{\gamma}{\beta} \right) &= \left(
            \eps{\beta} \rho^{\beta} \pd{\psi^{\beta}}{\eps{\beta}} - p^{\beta} \right)
            \grad \eps{\beta} + \eps{\beta} \rho^{\beta} \pd{\psi^{\beta}}{\eps{\gamma}}
            \grad \eps{\gamma}  \\
            \notag & \quad + \eps{\beta} \rho^{\beta} \sumjj \pd{\psi^{\beta}}{C^{s_j}}
            \grad C^{s_j}+ \eps{\beta} \rho^\beta \pd{\psi^\beta}{\epsdot{l}} \grad
            \epsdot{\beta} + \eps{\beta} \rho^\beta \pd{\psi^\beta}{\epsdot{\gamma}} \\
            \notag & \quad - \sumj \left[ \left( \eps{\gamma} \rho^{\gamma}
                \pd{\psi^{\gamma}}{\rho^{\beta_j}} + \eps{s} \rho^s
                \pd{\psi^s}{\rho^{\beta_j}} \right) \grad \rho^{\beta_j} \right]+
                \eps{\beta} \rho^{\beta} \sumj \pd{\psi^{\beta}}{\rho^{\gamma_j}} \grad
                \rho^{\gamma_j} \\ 
           & \quad + \eps{\beta} \rho^{\beta} \pd{\psi^{\beta}}{J^s} \grad J^s +
           \eps{\beta} \rho^{\beta} \pd{\psi^{\beta}}{\Cbars} : \grad \left( \Cbars
           \right), 
       \end{flalign}

   \item Momentum transfer between constiuents (equation \eqref{eqn:species_mom_transfer_equilib})
       \begin{flalign}
           \sumba \Thatbaj + \ihat^{\aj} &= -\grad \left( \epsa \rhoaj \psiaj \right) +
           \lamaj \grad \left( \epsa \rhoaj \right) + \psia \grad \left( \epsa \rhoaj
           \right).
       \end{flalign}

   \item Partial heat flux (equation \eqref{eqn:eq_partial_heat_flux})
       \begin{flalign}
           \suma \epsa \bq^\al = \foten{0} 
       \end{flalign}

   \item Chemical potential definition (equations \eqref{eqn:chem_pot_defn} and
       \eqref{eqn:chem_pot_defn2})
       \begin{flalign}
           \mu^{\beta_j} &= \left. \pd{\psi_T}{\left( \eps{\beta} \rho^{\beta_j} \right)}
           \right|_{\epsa, \eps{\beta}, \rho^{\al_k}, \rho^{\beta_m}} = \left. \suma
           \pd{\left( \epsa \rhoa \psia \right)}{\left( \eps{\beta} \rho^{\beta_j}
       \right)} \right|_{\epsa, \eps{\beta}, \rho^{\al_k}, \rho^{\beta_m}} \\
       &= \psi^\beta + \lambda^{\beta_j}
   \end{flalign}

   \item Mass transfer (equation \eqref{eqn:mass_transfer_linearized})
       \begin{flalign}
           \mu^{l_j} \Big|_{eq} = \mu^{g_j} \Big|_{eq}
       \end{flalign}
\end{itemize}

\section{Near Equilibrium Results}
\begin{itemize}
    \item Momentum transfer between phases (equation \eqref{eqn:mom_near_eq})
        \begin{flalign}
            \left( \sum_{\al \neq \beta} \That{\al}{\beta} \right)_{near} = \left(
            \sum_{\al \neq \beta} \That{\al}{\beta} \right)_{eq} - \left(
            \eps{\beta}\right)^2 \soten{R}^{\beta} \cd \bv{\beta,s}.
        \end{flalign}

    \item Momentum transfer between constituents (equation
        \eqref{eqn:mom_species_near_eq})
        \begin{flalign}
            \left( \sumba \Thatbaj + \ihat^{\aj} \right)_{near} = \left( \sumba \Thatbaj +
            \ihat^{\aj} \right)_{eq} - \epsa \rhoaj \soten{R}^{\aj} \cd \bv{\aj,\al}.
        \end{flalign}

    \item Partial heat flux  (equation \eqref{eqn:heat_near_eq})
        \begin{flalign}
            \left( \suma \epsa \bq^\al \right) &= -\soten{K} \cd \grad T
        \end{flalign}
\end{itemize}

\section{Constitutive Equations}
\begin{itemize}
    \item Darcy's law
        \begin{itemize}
            \item Pressure formulation (equation \eqref{eqn:Darcy})
                \begin{flalign}
                    \notag & \eps{\beta} \soten{R}^{\beta} \cd \left( \eps{\beta}
                    \bv{\beta,s} \right) \\
                    \notag &= -\eps{\beta} \grad p^{\beta} -
                    \ppi{\beta}{\beta} \grad \eps{\beta} - \ppi{\beta}{\gamma}\grad
                    \eps{\gamma} + \eps{\beta} \rho^{\beta} \foten{g} \\
                    \notag & \quad + \sumj \left[ \left( \eps{\gamma} \rho^{\gamma}
                    \pd{\psi^{\gamma}}{\rho^{\beta_j}} + \eps{s} \rho^s
                    \pd{\psi^s}{\rho^{\beta_j}} \right) \grad \rho^{\beta_j} \right] -
                    \eps{\beta} \rho^{\beta} \sumj \pd{\psi^{\beta}}{\rho^{\gamma_j}} \grad
                    \rho^{\gamma_j} \\ 
                    & \quad - \eps{\beta} \rho^{\beta} \pd{\psi^{\beta}}{J^s} \grad J^s -
                    \eps{\beta} \rho^{\beta} \pd{\psi^{\beta}}{\Cbars} : \grad \left(
                    \Cbars \right) + \diver \left( \Foten{\nu}^{\beta} : \bd{\beta}
                    \right).
                \end{flalign}

            \item Chemical potential formulation (equation \eqref{eqn:Darcy_ChemPot})
                \begin{flalign}
                    \notag & \soten{R} \cd \left( \eps{\beta} \bv{\beta,s} \right) \\
                    & \quad = -\sumj \left( \rho^{\beta_j} \grad \mu^{\beta_j} \right) -
                    \rho^{\beta} \eta^{\beta} \grad T + \rho^{\beta} \foten{g} +
                    \frac{1}{\eps{\beta}} \diver \left( \Foten{\nu}^{\beta} : \bd{\beta}
                    \right) 
                \end{flalign}

        \end{itemize}

    \item Fick's law (equation \eqref{eqn:Ficks_Law})
        \begin{flalign}
            \epsa \rhoaj \soten{R}^{\aj} \cd \bv{\aj,\al} = - \rhoaj \grad
            \muaj + \rhoaj \foten{g}.
        \end{flalign}

    \item Total heat flux (equation \eqref{eqn:Fourier_chem_pot})
        \begin{flalign}
            \bq{} &= -\soten{K} \cd \grad T - \sum_{\al=l,g} \left[ \left( \sumj \left(
                \rho^{\aj} \mu^{\aj} \right) + \rhoa T \eta^\al \right) \bv{\al,s}
                \right].
            \end{flalign}
\end{itemize}

%
    \chapter{Dimensional Quantities}\label{app:dimensional_quantities}
This appendix contains typical values for the quantities found in the macroscale heat and
moisture transport model. Note that since many of the tables are large so some are turned
sideways and some are bumped to different pages by default.

\linespread{1.0}
\begin{table}[ht*]
    \centering
    \begin{tabular}{|c|c|c|c|c|}
        \hline
        Symbol      & Quantity                  & Dimensions & \multicolumn{2}{|c|}{reference value} \\
                    &                           &            & liquid
                    (water)         & gas (air) \\ \hline \hline
        $\epsa$     & volume fraction           & $-$ & $-$ & $-$ \\ 
        $\rhoa$    & density                    & $M L^{-3}$ & $1000 \frac{kg}{m^3}$ & $1 \frac{kg}{m^3}$ \\
        $\rho^{g_v}$    & vapor density         & $M L^{-3}$ & $-$  & $2 \times 10^{-2} \frac{kg}{m^3}$ \\
        $\rho^{g_a}$    & dry air density       & $M L^{-3}$ & $-$  & $1.2 \frac{kg}{m^3}$ \\
        $T$         & temperature               & $K$ & $298.15K$ & $298.15K$ \\
        $R^{g_a}$   & gas constant (air)        & $L^2 t^{-2} K^{-1}$ & $-$ & $286.9 \frac{J}{kg \cd K}$\\
        $R^{g_v}$   & gas constant (vapor)      & $L^2 t^{-2} K^{-1}$ & $-$ & $461.5 \frac{J}{kg \cd K}$ \\
        $\soten{D}^\al$ & diffusion coefficient & $L^2 t^{-1}$ & $2 \times 10^{-9} \frac{m^2}{s}$ & $2.5 \times 10^{-5} \frac{m^2}{s}$ \\ 
        $\mu^{g_v}$ & chem. potential (vapor)        & $L^2 t^{-2}$ & $-$ & $-1.27 \times 10^7 \frac{J}{kg}$ \\
        $\mu^{g_a}$ & chem. potential (air)        & $L^2 t^{-2}$ & $-$ & $1.47 \times 10^7 \frac{J}{kg}$ \\
        $\foten{g}$ & gravity                   & $L t^{-2}$ & $9.81 \frac{m}{s^2}$ & $9.81 \frac{m}{s^2}$\\
        $\kappa$    & permeability              & $L^2$ & (see Tab.  \ref{tab:typical_permeability}) & (see Tab.  \ref{tab:typical_permeability})\\
        $\mu_\al$   & dynamic viscosity         & $M L^{-1} t^{-1}$ & $10^{-3} Pa\cd s$ & $10^{-5} Pa\cd s$\\
        $\eta^\al$  & specific entropy          & $L^2 t^{-2} K^{-1}$ & $3886 \frac{J}{kg \cd K}$ & $6519 \frac{J}{kg \cd K}$ \\
        $M$         & evaporation coefficient   & $t L^{-2}$ & & \\
        \hline
    \end{tabular}
    \caption{Dimensional quantities}
    \label{tab:mass_units}
\end{table}

\begin{sidewaystable}
    \begin{center}
        \begin{tabular}[ht*]{|*{14}{c|}}
            \hline
            $K \, [m/s]$ (water) & $10^0$ & $10^{-1}$ & $10^{-2}$ & $10^{-3}$ & $10^{-4}$ &
            $10^{-5}$ & $10^{-6}$ & $10^{-7}$ & $10^{-8}$ & $10^{-9}$ & $10^{-10}$ & $10^{-11}$ &
            $10^{-12}$ \\ \hline
            $K \, [m/s]$ (air) & $10^{-1}$ & $10^{-2}$ & $10^{-3}$ & $10^{-4}$ & $10^{-5}$
            & $10^{-6}$ & $10^{-7}$ & $10^{-8}$ & $10^{-9}$ & $10^{-10}$ & $10^{-11}$ &
            $10^{-12}$ & $10^{-13}$ \\ \hline
            $\kappa \, [m^2]$ & $10^{-7}$ & $10^{-8}$ & $10^{-9}$ & $10^{-10}$ &
            $10^{-11}$ & $10^{-12}$ & $10^{-13}$ & $10^{-14}$ & $10^{-15}$ & $10^{-16}$ &
            $10^{-17}$ & $10^{-18}$ & $10^{-19}$ \\ \hline \hline
            Permeability &  \multicolumn{4}{|c|}{pervious} & \multicolumn{4}{|c|}{semipervious} &
            \multicolumn{5}{|c|}{impervious} \\ \hline
            Aquifer & \multicolumn{5}{|c|}{good} & \multicolumn{4}{|c|}{poor} &
            \multicolumn{4}{|c|}{none} \\ \hline
            Sand and Gravel & \multicolumn{2}{|c|}{clean gravel} &
            \multicolumn{3}{|c|}{clean sand} & \multicolumn{4}{|c|}{fine sand} &
            \multicolumn{4}{|c|}{-} \\ \hline
            Clay and Organic & \multicolumn{4}{|c|}{-} & \multicolumn{2}{|c|}{peat} &
            \multicolumn{3}{|c|}{stratified clay} & \multicolumn{4}{|c|}{unweathered clay} \\
            \hline
            Rocks & \multicolumn{4}{|c|}{-} & \multicolumn{3}{|c|}{oil rocks} &
            \multicolumn{2}{|c|}{sandstone} & \multicolumn{2}{|c|}{limestone} &
            \multicolumn{2}{|c|}{granite} \\ \hline \hline
            $K \, [cm/s]$ (water) & $10^2$ & $10^{1}$ & $10^{0}$ & $10^{-1}$ & $10^{-2}$ &
            $10^{-3}$ & $10^{-4}$ & $10^{-5}$ & $10^{-6}$ & $10^{-7}$ & $10^{-8}$ & $10^{-9}$ &
            $10^{-10}$ \\ \hline
            $K \, [cm/s]$ (air) & $10^{1}$ & $10^{0}$ & $10^{-1}$ & $10^{-2}$ & $10^{-3}$ &
            $10^{-4}$ & $10^{-5}$ & $10^{-6}$ & $10^{-7}$ & $10^{-8}$ & $10^{-9}$ & $10^{-10}$ &
            $10^{-11}$ \\ \hline
            $\kappa \, [cm^2]$ & $10^{-3}$ & $10^{-4}$ & $10^{-5}$ & $10^{-6}$ & $10^{-7}$
            & $10^{-8}$ & $10^{-9}$ & $10^{-10}$ & $10^{-11}$ & $10^{-12}$ & $10^{-13}$ &
            $10^{-14}$ & $10^{-15}$ \\ \hline
        \end{tabular}
    \end{center}
    \caption{Typical values of hydraulic conductivity ($K$) for water and air, and
        associated values for permeability ($\kappa$).  Note that $K = \kappa \rho g /
        \mu$ where $\rho^g = 1kg/m^3, \, \rho^l = 1000kg/m^3 \, \mu_l = 10^{-3} Pa \cdot
        s$, and $\mu_g = 10^{-5} Pa \cdot s$. Modified from Bear pg. 136 \cite{Bear1988}}
    \label{tab:typical_permeability}
\end{sidewaystable}
\renewcommand{\baselinestretch}{\normalspace}

\end{appendix}
%
%
%
%


\end{document}